\newtheorem{theorem}{Theorem}
\newtheorem{corollary}{Corollary}
\newtheorem{lemma}{Lemma}
\newtheorem{assumption}{Assumption}
\theoremstyle{remark}
\newtheorem{remark}{Remark}
	\newcommand{\sumi}{\sum_{i=1}^n}
	\newcommand{\sumj}{\sum_{j=1}^n}
	\newcommand{\sumk}{\sum_{k=1}^n}
	\newcommand{\sumjNoti}{\sum_{j \neq i}}
	\DeclareMathOperator*{\argmin}{arg\,min}
	\DeclareMathOperator*{\trace}{trace}
	\DeclareMathOperator*{\sign}{sgn}
	\DeclareMathOperator*{\supp}{supp}
	\DeclareMathOperator*{\diag}{diag}
	\DeclareMathOperator*{\vech}{vech}
	\newcommand{\E}{\mathbb{E}}
	\newcommand{\V}{\mathbb{V}}
	\renewcommand{\P}{\mathbb{P}}
	\newcommand{\R}{\mathbb{R}}
	\newcommand{\Eqref}[1]{Eqn.\ \eqref{#1}}
	\newcommand{\defsym}{:=}	
	\newcommand{\One}{\mathbbm{1}}
	\newcommand{\that}{\hat{\theta}}
	\newcommand{\shat}{\hat{\sigma}}
	\newcommand{\mhat}{\hat{\mu}}
	\newcommand{\e}{\varepsilon} 
	\newcommand{\vertiii}[1]{{\left\vert\kern-0.25ex\left\vert\kern-0.25ex\left\vert #1  \right\vert\kern-0.25ex\right\vert\kern-0.25ex\right\vert}}
	\newcommand{\oeq}{\stackrel{o}{=}}
	\newcommand{\F}{\mathscr{F}}
	\newcommand{\f}{F}
		\renewcommand{\S}{S}  %Number of derivatives
		\newcommand{\s}{s}
	\newcommand{\I}{\mathscr{I}}
	\renewcommand{\i}{I}
	\newcommand{\tf}{\theta_\f}
    \newcommand{\x}{\mathsf{x}}
	\renewcommand{\v}{\nu}   % other options are "v" and "\nu", but both of these look like the conditional variance function v(x)
	\newcommand{\pp}{p} %generic version, includes undersmoothing as a special case
	\newcommand{\US}{\mathtt{us}}
	\newcommand{\RBC}{\mathtt{rbc}}
	\newcommand{\MSE}{\mathtt{mse}}
	\newcommand{\TO}{\mathtt{to}}   %trading off coverage and length
	\newcommand{\thatp}{\that_\pp}
	\newcommand{\thatrbc}{\that_\RBC}
	\newcommand{\ti}{T}  %generic version
	\newcommand{\tp}{\ti_\pp} %generic version, includes undersmoothing as a special case
	\newcommand{\trbc}{\ti_\RBC}
	\newcommand{\ip}{\i_\pp} %generic version, includes undersmoothing as a special case
	\newcommand{\ius}{\i_\US}
	\newcommand{\irbc}{\i_\RBC}
		\newcommand{\biasletter}{\Psi}
		\newcommand{\bi}{\biasletter_{\i,\f}}  %generic version, interval
		\newcommand{\birbc}{\biasletter_{\irbc,\f}}
		\newcommand{\bti}{\biasletter_{\ti,\f}}  %generic version, t-stat
		\newcommand{\btp}{\biasletter_{\tp,\f}}
		\newcommand{\btrbc}{\biasletter_{\trbc,\f}}
		\newcommand{\biasConstant}{\psi}
		\newcommand{\bci}{\biasConstant_{\i,\f}}  %generic version, interval
		\newcommand{\bcirbc}{\biasConstant_{\irbc,\f}}
		\newcommand{\bct}{\biasConstant_{\ti,\f}}  %generic version, t stat
		\newcommand{\bctp}{\biasConstant_{\tp,\f}}
		\newcommand{\bctrbc}{\biasConstant_{\trbc,\f}}
		\newcommand{\bctrbcHat}{\hat{\biasConstant}_{\RBC,\f}}
	    \newcommand{\preTaylor}{\bm{B}}
		\newcommand{\az}{\zeta}  		%% just made it 'az' to make find/replace easier. 
	\renewcommand{\sp}{\sigma_\pp} %generic version, includes undersmoothing as a special case
	\newcommand{\srbc}{\sigma_\RBC}
	\newcommand{\shatp}{\hat{\sigma}_\pp}
	\newcommand{\shatrbc}{\hat{\sigma}_\RBC}	
	\newcommand{\st}{\tilde{\sigma}}
	\newcommand{\sbr}{\breve{\sigma}}
	\newcommand{\sbp}{\sbr_\pp} %generic version, includes undersmoothing as a special case
		\newcommand{\vhat}{\hat{\vartheta}}
		\newcommand{\vhatrbc}{\vhat_\RBC}
	\newcommand{\h}{h}
	\renewcommand{\H}{\bm{H}}   %diagonal matrix for h^k, k=0, 1, 2, etc. NOT the constant part of h
	\renewcommand{\b}{b}
	\newcommand{\tO}{s_n}  % = \sqrt{n h}
		\newcommand{\bI}{\bm{I}}
		\newcommand{\bA}{\bm{A}}
		\newcommand{\bZ}{\bm{Z}}
		\newcommand{\bB}{\bm{B}}
		\newcommand{\bSn}{\bm{S}_n}
		\newcommand{\bV}{\bm{V}}
		\newcommand{\bb}{\bm{b}}
	\newcommand{\bbeta}{\bm{\beta}}
	\newcommand{\bhat}{\bm{\hat{\beta}}}
	\newcommand{\be}{\bm{e}}
	\newcommand{\br}{\bm{r}}
	\newcommand{\bt}{\bm{t}}
	\newcommand{\bR}{\bm{R}}
	\newcommand{\bRc}{\bm{\check{R}}}
	\newcommand{\bW}{\bm{W}}
	\newcommand{\bY}{\bm{Y}}
	\newcommand{\bM}{\bm{M}}  %vector of conditional expectations
		\newcommand{\G}{\bm{\Gamma}}
		\renewcommand{\L}{\bm{\Lambda}}
		\renewcommand{\O}{\bm{\Omega}}
		\newcommand{\Gp}{\bm{\G}}
		\newcommand{\Lp}{\bm{\L}}
		\newcommand{\Op}{\bm{\O}}
		\newcommand{\Gq}{\bm{\bar{\G}}}
		\newcommand{\Lq}{\bm{\bar{\L}}}
		\newcommand{\Oq}{\bm{\bar{\O}}}
		\newcommand{\Orbc}{\bm{\O}_{\RBC}}
		\newcommand{\Gt}{\bm{\tilde{\G}}}
		\newcommand{\Lt}{\bm{\tilde{\L}}}
		\newcommand{\Gpt}{\bm{\Gt}}
		\newcommand{\Lpt}{\bm{\Lt}}
		\newcommand{\Gqt}{\bm{\tilde{\Gq}}}
		\newcommand{\Lqt}{\bm{\tilde{\Lq}}}
		\newcommand{\bS}{\bm{\Sigma}}
		\newcommand{\Shat}{\bm{\hat{\Sigma}}}
		\newcommand{\Shatp}{\bm{\hat{\Sigma}}_\pp}
		\newcommand{\Shatrbc}{\bm{\hat{\Sigma}}_\RBC}
	\renewcommand{\l}{\ell}
	\newcommand{\w}{\omega}
	\newcommand{\Xhi}{X_{\h,i}}
	\newcommand{\Xhj}{X_{\h,j}}
	\newcommand{\Xhk}{X_{\h,k}}
	\newcommand{\Xbi}{X_{\b,i}}
\newcommand{\blackline}{\raisebox{2pt}{\tikz{\draw[-,black!40!black,solid,line width = 1.5pt](0,0) -- (5mm,0);}}}
\newcommand{\redline}{\raisebox{2pt}{\tikz{\draw[-,red,dash dot,line width = 1.5pt](0,0) -- (5mm,0);}}}
\newcommand{\blueline}{\raisebox{2pt}{\tikz{\draw[-,blue, dotted,line width = 0.9pt](0,0) -- (5mm,0);}}}
\newcommand{\greenline}{\raisebox{2pt}{\tikz{\draw[-,green, dashed,line width = 0.9pt](0,0) -- (5mm,0);}}}
\begin{document}

%%%%%%%%%%%%%%%%%%%%%%%%%%%%%%%%%%%
%
%	Making the TOC correct for the supplement, command 1 of 3
%
% This must come right after begin{document}, and it ensures
% all the sections go into the PDFs TOC

	\addtocontents{toc}{\protect\setcounter{tocdepth}{2}}
%
%
%%%%%%%%%%%%%%%%%%%%%%%%%%%%%%%%%%%

\begin{frontmatter}
%%%%%%%%%%%%%%%%%%%%%%%%%%%%%%%%%%%%%%%%%%%%%%
%%                                          %%
%% Enter the title of your article here     %%
%%                                          %%
%%%%%%%%%%%%%%%%%%%%%%%%%%%%%%%%%%%%%%%%%%%%%%
%\title{}
\title{Coverage Error Optimal Confidence Intervals for Local Polynomial Regression}
\runtitle{Coverage Error Optimality}
%\thankstext{T1}{}

\begin{aug}
\author[A]{\fnms{Sebastian} \snm{Calonico}\ead[label=e1]{sebastian.calonico@columbia.edu}},
\author[B]{\fnms{Matias D.} \snm{Cattaneo}\ead[label=e2]{cattaneo@princeton.edu}}
\and
\author[C]{\fnms{Max H.} \snm{Farrell}\ead[label=e3]{max.farrell@chicagobooth.edu}}
%%%%%%%%%%%%%%%%%%%%%%%%%%%%%%%%%%%%%%%%%%%%%%
%% Addresses                                %%
%%%%%%%%%%%%%%%%%%%%%%%%%%%%%%%%%%%%%%%%%%%%%%
\address[A]{Department of Health Policy and Management, Columbia University, New York, New York, U.S.A., \printead{e1}}

\address[B]{Department of Operations Research and Financial Engineering, Princeton University, Princeton, New Jersey, U.S.A., \printead{e2}}

\address[C]{Booth School of Business, University of Chicago, Chicago, Illinois, U.S.A., \printead{e3}}
\end{aug}

\begin{abstract}
This paper studies higher-order inference properties of nonparametric local polynomial regression methods under random sampling. We prove Edgeworth expansions for $t$ statistics and coverage error expansions for interval estimators that (i) hold uniformly in the data generating process, (ii) allow for the uniform kernel, and (iii) cover estimation of derivatives of the regression function. The terms of the higher-order expansions, and their associated rates as a function of the sample size and bandwidth sequence, depend on the smoothness of the population regression function, the smoothness exploited by the inference procedure, and on whether the evaluation point is in the interior or on the boundary of the support. We prove that robust bias corrected confidence intervals have the fastest coverage error decay rates in all cases, and we use our results to deliver novel, inference-optimal bandwidth selectors. The main methodological results are implemented in companion \textsf{R} and \textsf{Stata} software packages.
\end{abstract}

\begin{keyword}
\kwd{Edgeworth expansion}
\kwd{Cram\'er condition}
\kwd{nonparametric regression}
\kwd{robust bias correction}
\kwd{bandwidth selection}
\kwd{optimal inference}
\kwd{minimax bound}
\end{keyword}

\end{frontmatter}

%%%%%%%%%%%%%%%%%%%%%%%%%%%%%%%%%%%%%%%%%%%%%%
%%%% Main text entry area:

%%%%%%%%%%%%%%%%%%%%%%%%%%%%%%%%%%%
%
%	Making the TOC correct for the supplement, command 2 of 3
%
% This must come before the first section of the main paper,
% and removes the main paper entries from the TOC of the supplement

	\addtocontents{toc}{\protect\setcounter{tocdepth}{0}}
%
%
%%%%%%%%%%%%%%%%%%%%%%%%%%%%%%%%%%%

%%%%%%%%%%%%%%%%%%%%%%%%%%%%%%%%%%%%%%%%%%%%%%%%%%%%%%%%%%%%%%%%%%%%%%%%%%%%%%%%%%%
%%%%%%%%%%%%%%%%%%%%%%%%%%%%%%%%%%%%%%%%%%%%%%%%%%%%%%%%%%%%%%%%%%%%%%%%%%%%%%%%%%%
\section{Introduction}
	\label{sec:intro}

%%%%%%%%%%%%%%%%%%%%%%%%%%%%%%%%%%%%%%%%%%%%%%%%%%%%%%%%%%%%%%%%%%%%%%%%%%%%%%%%%%%
%%%% ADD DISCUSSION OF: \cite{Calonico-Cattaneo-Farrell2020_EJ}
%%%%%%%%%%%%%%%%%%%%%%%%%%%%%%%%%%%%%%%%%%%%%%%%%%%%%%%%%%%%%%%%%%%%%%%%%%%%%%%%%%%

We study local polynomial inference in the general heteroskedastic nonparametric regression model:
\begin{equation}
	\label{eqn:model}
	Y = \mu(X) + \e,   \qquad   \E[\e \vert X ] = 0,   \qquad   \E[\e^2 \vert X] = v(X),
\end{equation}
where $(Y,X)$ is a pair of random variables with distribution $\f$. The parameter of interest is the level or derivative of the regression function at $X = \x$:
\begin{equation}
	\label{eqn:theta lp}
	\mu^{(\v)} = \mu^{(\v)}(\x) := \left.  \frac{\text{d}^\v}{\text{d} x^\v} \E \left[ Y \vert X \!=\! x \right] \right\vert_{x=\x},
	\qquad \v\in\mathbb{Z}_{+},
\end{equation}
where the evaluation point $\x$ may be in the interior or on the boundary of the support of $X$. We drop the evaluation point from the notation when possible, and employ the usual convention $\mu=\mu^{(0)}$. Derivatives at boundary points are defined as one-sided derivatives from the interior. Given a random sample $(Y_1,X_1),\dots,(Y_n,X_n)$ of size $n$ from $\f$, we investigate the quality of statistical inference for $\mu^{(\v)}$ when using kernel-based local polynomial regression methods \cite{Fan-Gijbels1996_book,Fan-Yao2005_book}, focusing in particular on higher-order distributional properties of $t$ statistics as well as on coverage error and length of Wald-type confidence interval estimators. We also employ our results to compare and optimize inference procedures for empirical practice and to shed light on the sometimes underappreciated gap between point estimation and inference.

Our main technical contributions are novel Edgeworth expansions for local polynomial based Wald-type $t$ statistics of the form
\begin{equation}
	\label{eqn:t stat}
    \ti = \frac{\that - \mu^{(\v)}}{\vhat},
\end{equation}
for different choices of point estimator $\that$ and standard error estimator $\vhat$ detailed in Section \ref{sec:set up}. We study the accuracy of the Gaussian approximation to the distribution of such $\ti$ and the error in coverage probability of their dual confidence interval estimators. Our expansions capture the dependence on implementation choices including the polynomial order, the kernel function, and the bandwidth sequence.

Edgeworth expansions are a long-standing tool for more detailed (higher-order) analyses of asymptotic distributional approximations, named after the author of a series of papers on the idea, beginning with \cite{Edgeworth1883} and treated more extensively in \cite{Edgeworth1906_JRSS}. See \cite{Hall1992_book} for a textbook review. Informally, an Edgeworth expansion characterizes the leading terms of the difference between the distribution of $\ti$ and the Gaussian distribution, denoted $\Phi(z)$, for $z \in \R$. That is, an Edgeworth expansion gives the leading terms $E_{\ti,\f}(z)$ and the rate $r_{\ti,\f}$ (which depend on the distribution generating the data, the specific $t$ statistic at issue, along with $n$, $\x$, and other particulars) such that
\begin{equation}
	\label{eqn:EE-pointwise}
	\lim_{n\to\infty} \; r_{\ti,\f}^{-1} \; \sup_{z \in \R} \; \Big| \P_\f[ \ti < z]  - \Phi(z) - E_{\ti,\f}(z) \Big| = 0,
\end{equation}
where $\P_\f$ is the probability law when $\f$ is the true data generating process.

We improve on prior work on valid Edgeworth expansions for nonparametric kernel-based regression in three ways: (i) the expansions hold uniformly over a class of data-generating processes (instead of only for one $\f$), (ii) the uniform kernel is allowed (instead of only for kernel functions with sufficient variation), and (iii) the expansions hold for any derivative $\v \geq 0$ (instead of only for the level $\v = 0$). As discussed below, these improvements offer new theoretical and practical conclusions.

Edgeworth expansions are almost always established pointwise in the underlying distribution, that is, for a single, fixed $\f$, as in \eqref{eqn:EE-pointwise}. Indeed, standard references on the subject \cite{Bhattacharya-Rao1976_book,Hall1992_book} do not even mention uniformity. However, $\f$ is unknown, and a researcher would like some assurances that their inference is equally accurate regardless of the specific underlying data generating process. This motivates expansions that are valid uniformly over a class of plausible distributions for the data, denoted $\F_\S$, encoding the researcher's statistical model, accompanying assumptions, and the empirical regularities of the application of interest. Thus, instead of \eqref{eqn:EE-pointwise}, in Section \ref{sec:ee}, we prove
\begin{equation}
	\label{eqn:EE-uniform}
	\lim_{n\to\infty} \; \sup_{\f \in \F_\S} \; r_{\ti,\f}^{-1} \; \sup_{z \in \R} \; \Big| \P_\f[ \ti < z]  - \Phi(z) - E_{\ti,\f}(z) \Big| = 0.
\end{equation}
We also characterize the worst-case rate $r_{\ti} = \inf_{\f \in \F_\S} r_{\ti,\f}$ of distributional approximation over $\F_\S$. The specific class $\F_\S$ we consider, defined precisely in Section \ref{sec:set up}, matches standard empirical settings employing kernel-based nonparametric inference for $\mu^{(\v)}$, and therefore our theoretical and methodological results speak directly to common practice. Uniformly valid expansions have some precedence in the literature when studying notions of optimality, perhaps originating with \cite{Beran1982_AoS}, but these results are rare and confined to parametric models. Our corresponding uniform results for nonparametric kernel-smoothing do not appear to have a direct antecedent in the literature. 

Second, the uniform kernel is ruled out in all prior work on Edgeworth expansions for kernel-based nonparametrics, both for density estimation \cite{Hall1991_Statistics,Hall1992_AoS_density,Hall1992_book} and regression \cite{Chen-Qin2000_Bmka,Chen-Qin2002_SJS,Calonico-Cattaneo-Farrell2018_JASA}, due to a technical limitation in the proofs that we overcome. Other work on nonparametric regression has assumed away the issue by studying non-random designs \cite{Hall1992_AoS_regression,Neumann1997_Statistics}. In fact, \cite[p. 218]{Hall1991_Statistics} conjectured that valid Edgeworth expansions would require techniques for lattice-valued random variables if the uniform kernel was used. On the contrary, we show that such techniques are not needed. Allowing for the uniform kernel is important for empirical work because it is the optimal kernel shape in terms of minimizing interval length (as discussed in Section \ref{sec:length}) and because unweighted local least squares regression is a popular choice in some applications.

Finally, inference on derivatives of the regression function, again ignored in prior work, is a common task in empirical work and therefore it is valuable to have valid Edgeworth expansions and implementation guidance specifically for this case, including inference-optimal bandwidth selection. Moreover, considering derivatives yields several interesting theoretical conclusions, highlighting the difference between point estimation and inference: we find not only that the rate of the inference-optimal bandwidth does not depend on the specific derivative order $\v$ being considered, analogous to the well-known result for mean squared error (MSE) optimal bandwidth, but also that the rate for inference itself does not depend on $\v$, in sharp contrast to the MSE of the point estimator. 

The main result, a generic Edgeworth expansion encompassing all three of these contributions, is Theorem \ref{thm:ee lp} in Section \ref{sec:ee}. We then use this general result to examine the error in coverage probability of confidence interval estimators dual to each $t$ statistic, and the roles of smoothing bias and Studentization in both the distributional approximation and the coverage error of the confidence intervals.

The role of bias is concretized in Section \ref{sec:bias}. Given a level of smoothness of the unknown function $\mu$ and polynomial order of the local polynomial procedure $\that$, the nonparametric bias must be removed for valid inference. The method of robust bias correction (RBC) addresses this issue by incorporating explicit bias estimation into the centering $\that$ and then also adjusting the scale $\vhat$ to account for the additional variability introduced by the bias estimation \cite{Calonico-Cattaneo-Titiunik_2014_ECMA,Calonico-Cattaneo-Farrell2018_JASA}. An alternative principled inference method relies on removing the bias by shrinking the bandwidth used when conducting inference, often called undersmoothing. Other ad-hoc inference approaches rely on either upper bounding the bias, inflating the scale of the $t$ statistic, or simply ignoring the bias altogether. Using our higher-order expansions, we show that RBC leads to demonstrable higher-order superior inference for $\mu^{(\v)}$ relative to the other approaches in the literature.

Our results also show that the choice of Studentization $\vhat$ is crucial for good higher-order properties. This is in contrast to first order approximations, where only consistency of the standard errors is required. An important finding here is that using asymptotic approximations to the variance of $\that$ will increase the leading remainder terms $E_{\ti,\f}(z)$ and hence also coverage error. Using fixed-$n$ Studentization, where $\vhat$ directly estimates the variability of $\that$, completely removes these errors. This result was first proved in \cite{Calonico-Cattaneo-Farrell2018_JASA}, but only pointwise in $\f$ and excluding the uniform kernel and derivatives of $\mu$. Section \ref{sec:variance} shows this in full generality. Failure to account for the effect of using asymptotic variance approximations has lead to some confusion in the prior literature: for example, \cite{Chen-Qin2002_SJS} found inflated coverage error at boundary points and \cite{Hall1992_AoS_density} found that undersmoothing provides more accurate coverage than bias correction, but both conclusions are due to improper Studentization.

A key practical consequence of the foregoing is that RBC with fixed-$n$ Studentization has leading remainder terms $E_{\ti,\f}(z)$ and rate $r_{\ti,\f}$, of the expansion \eqref{eqn:EE-uniform}, that vanish at least as fast as, and often strictly faster than, undersmoothing-based approaches, both at interior and boundary evaluation points and for any derivative $\v$. Intuitively, this holds because RBC exploits all available smoothness to remove bias, but is not punished (in rates) if no additional smoothness is available to remove bias. Section \ref{sec:rbc} discusses novel implementation of RBC intervals, giving inference-optimal bandwidth and kernel choices that further improve the coverage properties and length of RBC intervals.

More broadly, our results speak to the sometimes neglected gap between point estimation and inference. Implementations focused on optimizing point estimation may not deliver optimal, or even valid, inference. In particular, they need not proceed at the same rate, and perhaps more surprisingly, the inference rate can be \emph{faster}: the rate $r_{\ti,\f}$ at which the distribution of $\that$ collapses to its asymptotic value (namely $\Phi(\cdot)$) can be \emph{faster} than the rate at which $\that$ itself collapses to its asymptotic value ($\mu^{(\v)}$). Indeed, there are cases where a bandwidth choice yields the fastest possible inference rate $r_{\ti,\f}$ but yields invalid point estimation. This is the reverse of the better-known fact that using the estimation-optimal bandwidth (minimizing mean squared error) yields invalid inference. Rate optimality is not as well studied for inference as it is for estimation, but Section \ref{sec:minimax} follows \cite{Hall-Jing1995_AoS} to develop minimax optimal rates in the sense of achieving the fastest (minimal) rate at which the worst-case (maximal) coverage error vanishes and finds that RBC attains this rate. 

The paper closes with simulation evidence supporting our theoretical and methodological work reported in Section \ref{sec:numerical}, and a brief conclusion in Section \ref{sec:conclusion}. An appendix contains formulas omitted to improve the exposition, while an online supplement gives all proofs, detailed simulation results, and other methodological results. Software implementing our main results is provided in \textsf{R} and \textsf{Stata} \cite{Calonico-Cattaneo-Farrell2019_JSS}. Last but not least, some of the ideas in this paper have been applied to causal inference and treatment effect estimation in the context of regression discontinuity designs in \cite{Calonico-Cattaneo-Farrell2020_EJ}.

%%%%%%%%%%%%%%%%%%%%%%%%%%%%%%%%%%%%%%%%%%%%%%%%%%%%%%%%%%%%%%%%%%%%%%%%%%%%%%%%%%%
%%%%%%%%%%%%%%%%%%%%%%%%%%%%%%%%%%%%%%%%%%%%%%%%%%%%%%%%%%%%%%%%%%%%%%%%%%%%%%%%%%%
\section{Model Assumptions and Estimators}
\label{sec:set up}

We define the class $\F_\S$ of distributions for the pair $(Y,X)$ and make precise the local polynomial point estimator $\that$ and scale estimator $\vhat$ of the $t$ statistic \eqref{eqn:t stat}. The class $\F_\S$ is determined through the following assumption. (Recall that derivatives at the boundary of the support of $X$ correspond to one-sided derivatives from the interior of the support.)

\begin{assumption}
	\label{asmpt:dgp lp}
	Let $\F_\S$ be the set of distributions $\f$ for the pair $(Y,X)$ which obey model \eqref{eqn:model} and for which there exist constants $\S \geq \v$, $\s \in(0,1]$, $0<c<C<\infty$, and a neighborhood of $\x$ on the support of $X$, none of which depend on $\f$, such that for all $x, x'$ in the neighborhood the following hold. 
	\begin{enumerate}
		
		\item The Lebesgue density of $(Y,X)$, $f_{yx}(\cdot)$, the Lebesgue density of $X$, $f(\cdot)$, and $v(x) := \V[Y | X=x]$, are each continuous and lie inside $[c,C]$, and $\E[|Y|^{8+c} \vert X = x] \leq C$.
				
		\item $\mu(\cdot)$ is $\S$-times continuously differentiable and $|\mu^{(\S)}(x) - \mu^{(\S)}(x') |\leq C |x - x'|^{\s}$.
		
	\end{enumerate}	
	Throughout,	$\{(Y_1, X_1), \ldots, (Y_n, X_n)\}$ is a random sample from $(Y,X)$.
\end{assumption}

These conditions are not materially stronger than usual in kernel-based nonparametric settings. The restrictions on densities and moments are imposed to achieve uniform validity of Edgeworth expansions. The smoothness condition on $\mu$ plays a key role: the assumed smoothness, captured by $\S$ and $\s$, and its relationship to the smoothness utilized in estimation, will be important for coverage error.

We consider several options for the elements of the $t$ statistic $\ti = (\that - \mu^{(\v)})/\vhat$ given in \eqref{eqn:t stat}. The starting point is the standard local polynomial regression point estimate of $\mu^{(\v)}$. See \cite{Fan-Gijbels1996_book} for an introduction. We index the classical local polynomial estimate by $p\in\mathbb{Z}_+$, the order of the polynomial used, assumed to be at least $\v$. Suppressing the dependence on $\x$ to simplify notation, we therefore set
\begin{equation}
	\label{eqn:lp}
	\mhat_p^{(\v)} = \v! \be_\v' \bhat_p = \frac{1}{n \h^\v} \v! \be_\v'\Gp^{-1} \Op  \bY, 		 \quad\quad 		\bhat_p = \argmin_{\bb \in \mathbb{R}^{p+1}} \sumi ( Y_i - \br_p(\h\Xhi)'\bb)^2  K \left( \Xhi \right),
\end{equation}
where $K$ is a kernel or weighting function, $\h =\h(n) \to 0$ is a bandwidth sequence, $\Xhi = (X_i - \x)/\h$, $\br_p(u) = (1, u, u^2, \ldots, u^p)'$, 
\[\Gp = \frac{1}{n\h}\sumi K(\Xhi)\br_p(\Xhi)\br_p( X_{\h,i} )', \quad
\Op = \frac{1}{\h} \left[ K( X_{\h,1})\br_p(  X_{\h,1} ), \ldots, K(  X_{\h,n} )\br_p( X_{\h,n} )\right],
\]
$\be_\v$ is the $(p+1)$-vector with a one in the $(\v+1)^{\text{th}}$ position and zeros in the rest, and $\bY = (Y_1, \ldots, Y_n)'$. 

The point estimator $\that$ in $\ti$ is then finalized depending on how the smoothing bias is to be accounted for. The traditional approach takes $\that=\mhat_p^{(\v)}$, and then for inference to be valid undersmoothing is required. Explicit bias correction incorporates into $\that$ an estimate of the leading bias term of $\mhat_p^{(\v)}$. Both approaches are motivated by the fact that the conditional bias of $\mhat_p^{(\v)}$ is of order $\h^{p+1 - \v}$ and given by
\begin{equation}
	\label{eqn:bias lp}
	\E\left[\mhat_p^{(\v)} \big| X_1, \ldots, X_n \right] - \mu^{(\v)} =   \h^{p+1 - \v}  \v! \be_\v'\Gp^{-1} \Lp \frac{\mu^{(p+1)}}{(p+1)!}   +   o_\P( \h^{p+1 - \v}),
\end{equation}
with $\Lp = \Op [ X_{h,1}^{p+1}, \cdots, X_{h,n}^{p+1}]'/n$, provided $p-\v$ is odd and $p\leq\S-1$, the standard setting in the literature. Section \ref{sec:bias} details other cases for $p$ and $\S$. Throughout, asymptotic orders and their in-probability versions always hold uniformly in $\F_\S$, as required by our framework: for example, $A_n = o_\P(a_n)$ means $\sup_{\f \in \F_\S} \P_\f [\vert A_n/a_n \vert > \epsilon] \to 0$ for every $\epsilon > 0$. Limits are taken as $n\to\infty$ unless stated otherwise.

Undersmoothing leaves the center of the interval at $\that=\mhat_p^{(\v)}$ unchanged and assumes that the bandwidth $\h$ vanishes rapidly enough to render the leading term of \eqref{eqn:bias lp} negligible relative to the standard error of the point estimator. The term \emph{under}smoothing refers to using less nonparametric smoothing than would be optimal from a mean squared error (MSE) point estimation point of view \citep[][Section 4]{Fan-Gijbels1996_book}. The MSE-optimal bandwidth choice is the most common by far, and indeed, the default in most software. With $p\leq\S-1$, the MSE-optimal bandwidth for $\mhat_p^{(\v)}$ is well-defined whenever $\mu^{(p+1)}(\x) \neq 0$. However, the MSE-optimal bandwidth is too ``large'' for standard Gaussian inference: the bias remains first-order important when scaled by the standard deviation of the point estimator, and so valid inference requires a bandwidth that vanishes faster.

Explicit bias correction, on the other hand, subtracts an estimate of the leading term of \eqref{eqn:bias lp}, of which only $\mu^{(p+1)}$ is unknown. Thus we have:
\begin{equation}
	\label{eqn:thatrbc}
	\thatrbc  :=  \mhat_p^{(\v)} - \h^{p+1 - \v}  \v! \be_\v'\Gp^{-1} \Lp \be_{p+1}'\bhat_{p+1} = \frac{1}{n \h^\v} \v! \be_\v'\Gp^{-1} \Orbc  \bY,
\end{equation}
where $\Orbc = \Op - \rho^{p+1}  \Lp \be_{p+1}' \Gq^{-1} \Oq$ and $\bhat_{p+1}$, $\Gq$, and $\Oq$ are defined akin to $\bhat_p$, $\Gp$, and $\Op$ of \eqref{eqn:lp}, but with $p+1$ in place of $p$ and a bandwidth $\b := \rho^{-1}\h$ instead of $\h$. The parameter $\rho$ will play a key role in the Edgeworth and coverage error expansions and we will derive optimal choices below.

With the point estimator $\that$ defined, we now define the choice of standard errors $\vhat$. We will focus primarily on ``fixed-$n$'' Studentization, also called ``preasymptotic'' by \cite{Fan-Yao2005_book}, which means choosing the Studentization to directly estimate $\V[ \that | X_1, \ldots, X_n ]$, a population quantity but not an asymptotic one. Such choices have superior coverage, as shown below, particularly compared to employing an estimator of an asymptotic representation of $\V[ \that | X_1, \ldots, X_n ]$. Importantly, when $\that = \thatrbc$, a fixed-$n$ approach makes bias correction robust, because the Studentization accounts for the variability of bias estimation.

These fixed-$n$ variances are easy to compute based on standard least squares logic. Referring to \eqref{eqn:lp}, for $\that=\mhat_p^{(\v)}$,
\begin{equation}
	\label{eqn:variance}
	n\h^{1+2\v} \;\V[ \mhat_p^{(\v)} | X_1, \ldots, X_n ] =  \v!^2 \be_\v'\Gp^{-1} (\h \Op \bS \Op' /n) \Gp^{-1} \be_\v,
\end{equation}
where $\bS$ is the $n$-diagonal matrix of conditional variances $v(X_i)$. This formula applies to $\thatrbc$ as well, upon replacing $\Op$ with $\Orbc$, because the two estimators share the same structure, as shown by comparing the second form in \eqref{eqn:thatrbc} to \eqref{eqn:lp}. The fixed-$n$ Studentization is obtained by replacing $\bS$ with an appropriate plug-in estimator, and we then obtain the final $\vhat$ as follows:
\begin{eqnarray}
	\begin{split}
		\label{eqn:se lp}
		\vhat^2 & = \frac{\shatp^2}{n\h^{1+2\v}}, \qquad\qquad && \shatp^2  :=  \v!^2 \be_\v'\Gp^{-1} (\h \Op \Shatp \Op' /n) \Gp^{-1} \be_\v,  \qquad\quad \text{and}   	\\
		\vhat^2 & = \vhatrbc^2 := \frac{\shatrbc^2}{n\h^{1+2\v}}, \qquad\quad  && \shatrbc^2 :=  \v!^2 \be_\v'\Gp^{-1} (\h \Orbc \Shatrbc \Orbc' /n) \Gp^{-1} \be_\v,
	\end{split}
\end{eqnarray}
where $\Shatp$ and $\Shatrbc$ are the $n$-diagonal matrices of the squared residuals $\hat{v}(X_i) = (Y_i - \br_p(X_i)'\bhat_p)^2$ and $\hat{v}(X_i) = (Y_i - \br_{p+1}(X_i)'\bhat_{p+1})^2$, respectively.
The above variance estimators separate explicitly the ``constant'' portions, denoted $\shat_p^2$ and $\shatrbc^2$, which will be used in Section \ref{sec:length} for interval length optimization. More precisely, $\shat_p^2$ and $\shatrbc^2$ will both be bounded and bounded away from zero in probability under our assumptions.

To complete the set of $t$ statistics under consideration, we impose the following standard conditions on the kernel function. This assumption allows for standard choices such as not only the triangular and Epanechnikov kernels, but also the uniform kernel.
\begin{assumption}
	\label{asmpt:ci lp}
	The kernel $K$ is supported on $[-1,1]$, positive, bounded, and even. Further, $K(u)$ is either constant (the uniform kernel) or $(1, K(u) \br_{3(k+1)}(u))'$ is linearly independent on $[-1,0]$ and $[0,1]$, where $k = p$ if $\ti$ is based on $\mhat_p^{(\v)}$ and $\shatp$, and $k = p+1$ if $\ti$ uses $\thatrbc$ or $\shatrbc$. The order $p$ is at least $\v$.
\end{assumption}

%%%%%%%%%%%%%%%%%%%%%%%%%%%%%%%%%%%%%%%%%%%%%%%%%%%%%%%%%%%%%%%%%%%%%%%%%%%%%%%%%%%
%%%%%%%%%%%%%%%%%%%%%%%%%%%%%%%%%%%%%%%%%%%%%%%%%%%%%%%%%%%%%%%%%%%%%%%%%%%%%%%%%%%
\section{Uniformly Valid Edgeworth and Coverage Error Expansions}
\label{sec:ee}

We now give the main technical result of this paper: a uniformly valid, generic Edgeworth expansion as in \eqref{eqn:EE-uniform}, for the $t$-statistic $\ti$ in \eqref{eqn:t stat} when using local polynomial regression methods as described in the previous section. To state the result we need some notation. Here we give only what is needed conceptually, leaving cumbersome formulas to the appendix. The terms of the Edgeworth expansion are defined as
\begin{align}
	\begin{split}
		\label{eqn:ee terms}
		E_{\ti,\f}(z)  &  = \frac{1}{\sqrt{n \h} } \w_{1,\ti,\f}(z)     +    \bti \w_{2,\ti,\f}(z)     +    \lambda_{\ti,\f} \w_{3,\ti,\f}(z)   		\\
		               &    \quad   +    \frac{1}{n \h}  \w_{4,\ti,\f}(z)       +    \bti^2 \w_{5,\ti,\f}(z)   +    \frac{1}{\sqrt{n \h} } \bti  \w_{6,\ti,\f}(z) ,
	\end{split}
\end{align}
where $z$ is the point of evaluation of the distribution, $\bti$ denotes the generic non-random (fixed-$n$) bias of the $\sqrt{n\h^{1+2\v}}$-scaled numerator of $\ti$, $\lambda_{\ti,\f}$ denotes the mismatch between the variance of the numerator of the $t$-statistic and the population standardization used, and the six terms $\w_{k,\ti,\f}(z)$, $k=1,2, \ldots, 6$, are non-random functions bounded uniformly in $\F_\S$, and bounded away from zero for at least one $\f\in\F_\S$. Section \ref{sec:bias} provides further details on $\bti$ and Section \ref{sec:variance} discusses $\lambda_{\ti,\f}$. The quantities $\w_{k,\ti,\f}(z)$, $k=1,2, \ldots, 6$ are relatively less important, beyond their parity, because they cannot be altered by implementation choices. 

We then have the following result (Theorem \ref{thm:ee lp}), establishing \eqref{eqn:EE-uniform}. This result is general, covering interior and boundary points, $p-\v$ even and odd, any derivative $\v \geq 0$, and any combination of $p$ and $\S$. Different cases for each of these primarily affect the expansion, and the final rates, through the bias $\bti$, as explored in the next section. The conditions imposed are strengthened relative to typical pointwise first-order analyses only by $\log(n\h)$ factors on the bandwidth(s) and the other uniformity requirements of Assumption \ref{asmpt:dgp lp}.  (Recall that asymptotic orders and their in-probability versions are always required to hold uniformly in $\F_\S$ throughout.)
\begin{theorem}
	\label{thm:ee lp}
	Let Assumptions \ref{asmpt:dgp lp} and \ref{asmpt:ci lp} hold, and assume that
	\begin{equation*}
		\label{eqn:bandwidth requirements}
		\log(n\h)^{2+\gamma} / n \h =o(1),     \quad    \bti \log(n\h)^{1+\gamma} =o(1),     \quad \lambda_{\ti,\f} = o(1), \quad  \rho = O(1),
	\end{equation*}
	for any $\gamma$ bounded away from zero uniformly in $\F_\S$. Then,
	\[
		\lim_{n\to\infty} \; \sup_{\f \in \F_\S} \; r_{\ti,\f}^{-1} \; \sup_{z \in \R} \; \Big| \P_\f[ \ti < z]  - \Phi(z) - E_{\ti,\f}(z) \Big| = 0
	\]
	holds with $E_{\ti,\f}(z)$ of \eqref{eqn:ee terms} and $r_{\ti,\f} = \max\{(n\h)^{-1}, \bti^2,  (n\h)^{-1/2}\bti, \lambda_{\ti,\f} \}$.
\end{theorem}

A crucial piece in the proof of Theorem \ref{thm:ee lp} is establishing that the appropriate Cram\'er's condition holds under Assumption \ref{asmpt:ci lp}, and in particular the linear independence condition. Such linear independence fails when $K$ is uniform and $u$ runs over the support of $K(u)$, and this failure has prevented the uniform kernel from being covered by past work. Our key insight is that previous approaches ignored the region \emph{outside} the support of $K(\cdot)$ but \emph{inside} the neighborhood of Assumption \ref{asmpt:dgp lp}. Loosely speaking, $(1, K(u), uK(u), \ldots)'$ may be linearly \emph{dependent} on $u \in [-1,1]$ (when $K$ is uniform), but $(1, K(\frac{x - \x}{\h}), (\frac{x - \x}{\h}) K(\frac{x - \x}{\h}), \ldots)'$ is linearly \emph{independent} on $x$ in a fixed neighborhood of $\x$. This allows us to verify Cram\'er's condition. See the supplement for details.

In practice, the error in coverage probability of two-sided interval estimators may be more directly relevant than the distributional approximation of the Edgeworth expansion. We therefore turn to interval estimators dual to each $t$ statistic, given by
\begin{equation}
	\label{eqn:ci lp}
	\i  = \left[ \that - z_u \;\vhat \ , \  \that - z_l \;\vhat \right],
\end{equation}
where $z_l$ and $z_u$ denote chosen quantiles. Our starting point is a generic coverage error expansion for confidence intervals $\i$, dual to a given $\ti$, which follows immediately from Theorem \ref{thm:ee lp} by evaluating the Edgeworth expansion at the interval quantiles (see the supplement).
\begin{corollary}
	\label{cor:ce} 
	Let the conditions of Theorem \ref{thm:ee lp} hold, assume that $\Phi(z_u) - \Phi(z_l) = 1-\alpha$, and define $C_{\i,\f}(z_l,z_u) = E_{\ti,\f}(z_u) - E_{\ti,\f}(z_l)= O(r_\i)$ for some sequence $r_\i$. Then,
	\[
	\lim_{n\to\infty} \; r_\i^{-1} \; \sup_{\f \in \F_\S} \; \Big|  \P_\f \big[ \mu^{(\v)}(\x) \in \i \big] - (1-\alpha) - C_{\i,\f}(z_l,z_u) \Big| = 0.
	\]
\end{corollary}

This result is as general as Theorem \ref{thm:ee lp}. The uniform-in-$\F_\S$ rate $r_\i$ is the slowest vanishing of the rates of each term in the Edgeworth expansion \eqref{eqn:ee terms}, which without specifying any elements further, can only be known to vanish at least as fast as $r_{\ti} = \sup_{\f \in \F_\S} r_{\ti,\f}$ from Theorem \ref{thm:ee lp}. However, even at this level of generality, several conclusions are already evident due to the parity of the functions $\w_k$ making up $E_{\ti,\f}(z)$ and hence $C_{\i,\f}(z_l,z_u)$. First, regarding the choice of quantiles, we recover the classical finding that symmetric intervals, where $z_l = -z_u$, have superior coverage properties, because $\w_1$ and $\w_2$ are even functions of $z$. Asymmetric choices that still have $\Phi(z_u) - \Phi(z_l) = 1-\alpha$ can yield correct coverage, but the error will vanish more slowly, whereas other choices will not yield uniformly correct coverage. Bootstrap-based quantiles will, in general, not improve coverage error rates in nonparametric contexts beyond the symmetric case \cite{Hall1992_AoS_density}, and can in fact be detrimental for coverage error \cite{Hall-Kang2001_AoS}. Second, the remaining $w_k$ functions are odd, and therefore to obtain better coverage properties we should focus on intervals with small (rapidly vanishing) $\bti$ and $\lambda_{\ti,\f}$. The upcoming subsections discuss each of these pieces in turn.

Our expansions highlight the conceptual gap between point estimation and inference. The rate at which the distribution of $\that$ collapses to its asymptotic value ($\Phi(\cdot)$) can be \emph{faster} than the rate at which the point estimator $\that$ itself collapses to its asymptotic value ($\mu^{(\v)}$). Moreover, it is possible that coverage error may vanish even if mean squared error does not, and vice versa. One direction of this phenomenon captures the well-known result that the coverage error of a confidence interval centered at the MSE-optimal point estimator does not vanish. That is, $\hat{\mu}_p^{(\v)}$ in \eqref{eqn:lp} using the MSE-optimal bandwidth $\h_\MSE = H_\MSE n^{-1/(2p+3)}$, for some constant $H_\MSE$, is optimal for point estimation given a fixed $p$, but
\[ \sup_{\f \in \F_\S} \left| \P_\f \left[ \mu^{(\v)} \in \left\{\mhat_p^{(\v)} \pm z_{\alpha/2} \shatp H_\MSE^{-1/2} n^{-1/2  + (1 + 2\v)/(4p+6)} \right\} \right] - (1-\alpha)  \right| \; \asymp \; 1,\]
where $a \asymp b$ denotes that $a \leq C_1 b$ and $b \leq C_1 a$ for some constants $C_1$ and $C_2$.

The other direction may be more surprising and novel: we find that the variance of $\that$ can be too large for mean-square consistency, but nonetheless be captured well enough by $\vhat$ for valid inference. For example, consider inference on $\mu^{(1)}(\x)$ using $\ip$ with local linear regression ($p=1$). Choosing $\h \asymp n^{-1/3}$ yields $r_{\ip} \asymp n^{-2/3}$, which is the fastest attainable rate for $\ip$ in this case, but also gives $\V[\mhat_p^{(\v)} | X_1, \ldots, X_n ] \asymp_\P (n\h^{1 + 2v})^{-1} \asymp 1$, and thus $\mhat_p^{(1)}$ is not consistent in mean square. Therefore, we found a confidence interval that is \emph{optimal for coverage} of $\mu^{(1)}$, but implicitly relies on a point estimator that is \emph{not even consistent} in mean square.

%%%%%%%%%%%%%%%%%%%%%%%%%%%%%%%%%%%%%%%%%%%%%%%%%%%%%%%%%%%%%%%%%%%%%%%%%%%%%%%%%%%
\subsection{Bias Details}
\label{sec:bias}

We now give details for the bias term, $\bti$, highlighting three main points. First, the rate at which $\bti$ vanishes does not depend on the derivative $\v$. Second, we establish that performing bias correction never slows the rate at which $\bti$ vanishes. The third goal is then practical: we spell out several cases of the rates and constants for the bias of $\thatrbc$ so that we may use these for bandwidth and kernel selection later.

To describe $\btp$, the bias term for $\tp$, let $\bbeta_{p}$ be the $p+1$ vector with $(j+1)$ element equal to $\mu^{(j)}(\x)/j!$ for $j = 0, 1, \ldots, p$ as long as $j \leq \S$, and zero otherwise, and $\preTaylor_{p}$ as the $n$-vector with $i^{\text{th}}$ entry $[\mu(X_i) - \br_{p}(X_i - \x)'\bbeta_{p}]$. Then,
\begin{equation}
	\label{eqn:us bias}
	\btp  = \sqrt{n\h} \;  \v! \be_\v'\E[\Gp]^{-1}  \E[\Op \preTaylor_p].
\end{equation}
Turning to bias correction, define $\bbeta_{p+1}$ and $\preTaylor_{p+1}$ as above, but with $p+1$ in place of $p$ in all cases. Then, using the definition of $\Orbc$ in \eqref{eqn:thatrbc},
\begin{equation}
	\label{eqn:rbc bias}
	\btrbc  =  \sqrt{n\h} \;  \v! \be_\v'\E[\Gp]^{-1} \left( \E[\Op \preTaylor_{p+1}] - \rho^{p+1}  \E[\Lp] \be_{p+1}' \E[\Gq]^{-1} \E[\Oq \preTaylor_{p+1}] \right).
\end{equation}
These bias terms are non-random but otherwise non-asymptotic: all expectations are fixed-$n$ and we have not done the typical Taylor expansion. The derivative $\v$ only appears in the constant term $\v! \be_\v$, and therefore the rate at which $\btp$ vanishes does not depend on the derivative being estimated. Intuitively, this can be seen from the second form for $\mhat_p^{(\v)}$ in \eqref{eqn:lp}, $n^{-1} \h^{-\v} \v! \be_\v'\Gp^{-1} \Op  \bY$, coupled with rate $\sqrt{n\h^{1+2\v}}$ of the Studentizations of \eqref{eqn:se lp}: together, these account for the derivative, and distributional properties of $\Gp^{-1} \Op  \bY$ are left independent of $\v$; the first conclusion of this subsection.

The rate of convergence (to zero) of $\btp$ or $\btrbc$ can be deduced by first expanding $\mu(X_i)$ entering $\preTaylor_p$ and $\preTaylor_{p+1}$ around $\x$, and then specializing to a given $p$ and $\S$. For any $p$, we have
\[
	\mu(X_i) - \br_p(X_i - \x)'\bbeta_p = \sum_{k= \S \wedge p + 1}^S \frac{1}{k!}  (X_i - \x)^k \mu^{(k)}(\x)   +   \frac{1}{\S!}  (X_i - \x)^\S \left( \mu^{(\S)}(\bar{x}) - \mu^{(\S)}(\x) \right),
\]
where the summation is taken to be zero if $p\geq\S$. To obtain the final rate, this expansion is substituted into $\preTaylor_p$ and the leading terms are identified by stabilizing the expectation of the terms involving $(X_i - \x)^k$ by writing $\h^k (\Xhi)^k$, thus isolating the rate. The rate will depend on the smoothness, location of $\x$, parity of $p-\v$, and the bandwidth $\h$. For $\preTaylor_{p+1}$, replace $p$ with $p+1$ everywhere and use $\b$ in place of $\h$ in the second term. The supplement gives complete details. 

Our second point is that $\btrbc = O(\btp)$, which follows from the expansion above and taking $\rho$ bounded and bounded away from zero. First, observe from the Taylor expansion applied to \eqref{eqn:rbc bias} that $\btrbc$ depends on higher order derivatives than $\btp$, which follows from applying the Taylor expansion to \eqref{eqn:us bias}, and therefore stabilizing leads to higher powers of $\h$. Intuitively, the bias of $\mhat_p^{(\v)}$ is the product of the rate $\h^{p+1}$ and the constant targeted by bias correction. Therefore, the bias of $\thatrbc$ is at most $\h^{p+1}$ times the bias of the bias correction plus the higher order term of \eqref{eqn:bias lp}. For a fixed sequence $\h$, neither of these can be greater than $\h^{p+1}$. Second, the rate for $\btrbc$ cannot be improved by letting $\rho = \h/\b$ vanish or diverge: $\rho$ vanishing decreases the second term, but the first term is unchanged, while letting $\rho$ diverge can only inflate the second term. Further, diverging $\rho$ renders the effective sample size $n\b$, which is smaller than $n\h$, which would only inflate the Edgeworth expansion terms without reducing bias (hence the restriction in Theorem \ref{thm:ee lp} to bounded $\rho$).

Therefore, in optimizing inference later on, we will focus on $\thatrbc$ and take $\rho$ bounded and bounded away from zero. We need the leading bias constants for this case, which follow from carrying on the Taylor expansion completely in \eqref{eqn:rbc bias}. The bias is always of the form 
\[\btrbc = O(\sqrt{n\h} \h^\az)\]
for an exponent $\az$ that depends on the location of $\x$, the parity of $p-\v$, and the smoothness $\S$. A complete list of $\az$ is shown in Table \ref{table:rbc bias list}. From there, we see that if $p$ is large enough relative to $\S$ (how large depends on the specific case), then $\az = \S + \s$, implying $\btrbc = O(\sqrt{n\h} \h^{\S + \s})$. 

The more empirically relevant case is to treat $p$ as fixed and smaller than $\S$, specifically $p \leq \S - 3$ for interior $\x$ with $p-\v$ odd and $p \leq \S-2$ otherwise (i.e. for boundary points or if $\x$ is an interior point with $p-\v$ even). In these cases, we can use the Taylor expansion above to characterize the leading term, and write
\[\btrbc = \sqrt{n\h} \h^{\az} \bctrbc [1 + o(1)],\]
where $\az = p+3$ for interior $\x$ with $p-\v$ odd and $p+2$ otherwise. The term $\bctrbc$ will be referred to as the constant term for simplicity, though technically it is a non-random sequence with known form, uniformly bounded in $\F_\S$, and nonzero for some $\f \in \F_\S$. Referring to Table \ref{table:rbc bias list} for the different cases, $\bctrbc$ can be
\begin{subnumcases}{}
	\frac{ \mu^{(p+2)} } { (p+2)! } \v! \be_\v'\E[\Gp]^{-1} \Big\{ \E[\Lp_2]     -    \rho^{-1} \E[\Lp_1] \be_{p+1}'\E[\Gq]^{-1} \E[\Lq_1]  \Big\},
	\label{table eqn bnd} \\
	\frac{ \mu^{(p+2)} } { (p+2)! }  \v! \be_\v'\E[\Gp]^{-1} \E[\Lp_2], \quad \qquad \text{or}
	\label{table eqn int even} \\
	\begin{split}
		\v! \be_\v'\E[\Gp]^{-1}  \bigg\{  \frac{ \mu^{(p+2)} } { (p+2)! } \Big[ \h^{-1} \E[\Lp_2]     -    \rho^{-2} \b^{-1} \E[\Lp_1] \be_{p+1}'\E[\Gq]^{-1} \E[\Lq_1]  \Big] 
		\\
		+   \frac{ \mu^{(p+3)} } { (p+3)! }  \Big[ \E[\Lp_3]     -    \rho^{-2} \E[\Lp_1] \be_{p+1}'\E[\Gq]^{-1} \E[\Lq_2]  \Big] \bigg\},
	\end{split}
	\label{table eqn int odd}
\end{subnumcases}
where $\Lp_k = \Op [ X_{h,1}^{p+k}, \cdots, X_{h,n}^{p+k}]'/n$ and $\Lq_k = \Oq [ X_{\b,1}^{p+1+k}, \ldots, X_{\b,n}^{p+1+k}]'/n$, and hence in particular $\Lp_1 \equiv \Lp$ as defined in Section \ref{sec:set up}.

\begin{table}
	\centering
	\renewcommand{\arraystretch}{1.75}
	\begin{tabular}{| l | l | l | l | l |}
		\hline
		Location of $\x$  & Parity of $p\!-\!\v$ & Smoothness &  $\az$ &  $\bctrbc$  \\
		\hline 
		\multirow{2}{*}{Boundary} 
		& \multirow{2}{*}{Odd or Even} & $p + 2 \leq \S$    & $p+2$ & Equation \eqref{table eqn bnd} \\
		&  & $p + 2 > \S$ &  $\S+\s$ &  N/A \\ 
		\hline
		\multirow{4}{*}{Interior} 
		& \multirow{2}{*}{Even} & $p + 2 \leq  \S$    & $p+2$ & Equation \eqref{table eqn int even}  \\
		&  & $p + 2 > \S$   & $\S+\s$ &  N/A \\ 
		\cline{2-5} 
		& \multirow{2}{*}{Odd} & $p + 3 \leq \S$ &   $p+3$ &  Equation \eqref{table eqn int odd} \\
		&  & $p + 2 \geq \S$  & $\S+\s$ &  N/A \\
		\hline
	\end{tabular}
	\caption{Bias Terms For Bias-Corrected Centering $\thatrbc$. With $\rho$ bounded and bounded away from zero, $\btrbc = O(\sqrt{n\h}\h^\az)$ and further, if $p$ is small relative to $\S$, $\btrbc = \sqrt{n\h}\h^\az \bctrbc[1+o(1)]$.}
	\label{table:rbc bias list}
\end{table}

%%%%%%%%%%%%%%%%%%%%%%%%%%%%%%%%%%%%%%%%%%%%%%%%%%%%%%%%%%%%%%%%%%%%%%%%%%%%%%%%%%%
\subsection{Variance Details}
\label{sec:variance}

In contrast to first order distributional analysis, where only consistency is required, the choice of scaling, or Studentization, is crucial for higher order properties. Our detailed expansions show that, in general, there are two types of higher-order terms that arise due to Studentization. One is the unavoidable estimation error incurred when replacing any population quantity with a feasible counterpart. The second error arises from the difference between the population variability of the centering $\that$ and the population standardization chosen as the target. This second type of error is what is captured by $\lambda_{\ti,\f}$, and the most important conclusion is that the fixed-$n$ standard errors in \eqref{eqn:se lp} achieve $\lambda_{\ti,\f} \equiv 0$, and are therefore demonstrably superior choices for inference. That is, there should not be a ``mismatch'' between the population variability of the $t$ statistic numerator and the population standardization.

Using an asymptotic approximation to $\V[ \that | X_1, \ldots, X_n ]$ may yield nonzero $\lambda_{\ti,\f}$, and thus the distributional approximation (and coverage) will suffer. There are too many options to treat comprehensively, but several points warrant discussion. In general, if the chosen standard errors are consistent, $\lambda_{\ti,\f}$ has the form $\lambda_{\ti,\f} = l_n L$, for a rate $l_n \to 0$ and a sequence $L$ that is bounded and bounded away from zero, a ``constant'', capturing the difference between the variance of the numerator of the $t$-statistic and the population standardization chosen.

At boundary points the use of asymptotic approximations can be particularly deleterious for coverage, and this has lead to some confusion in the literature. A headline finding of \cite{Chen-Qin2000_Bmka} is that an empirical likelihood confidence interval estimator has coverage error of the same order at interior and boundary points, which is claimed (in the abstract) to be a ``significant improvement over confidence intervals based directly on the asymptotic normal distribution''. This claim is based on work by the same authors \cite{Chen-Qin2002_SJS} who study, in our notation, the interval with centering $\that = \mhat_1^{(0)}$ and scaling $\vhat = (n\h)^{-1/2} \hat{v}(\x) \hat{f}(\x)^{-1} \mathcal{V}$, for $\hat{v}(\x)$, $\hat{f}(\x)$, and $\mathcal{V}$ given therein, where $v(\x) f(\x)^{-1} \mathcal{V}$ is the probability limit of $\V[ (n\h)^{1/2} \mhat_1^{(0)} \mid X_1, \ldots, X_n ]$. They find that $\lambda_{\ti,\f} = l_n L$ holds with $l_n = \h$ at boundary points, meaning greatly increased coverage error. Concerned that this result is due to estimation error, they confirm that $l_n = \h$ holds with the infeasible standardization $\vartheta = (n\h)^{-1/2} v(\x) f(\x)^{-1} \mathcal{V}$. This neglects the fact that $\lambda_{\ti,\f}$ captures only the estimation error, not the ``mismatch'' error, and their result is entirely due to using an asymptotic standardization as opposed to a fixed-$n$ one, and thus empirical likelihood, in particular, does \emph{not} offer higher-order improvements over normality-based intervals.

Explicit bias correction was claimed by \cite{Hall1992_AoS_density} to be inferior to undersmoothing for inference; a finding also based entirely on using an asymptotic standardization. In this case, nonrobust bias correction was studied, which pairs $\thatrbc$ with $\shatp$. This is valid to first order if $\rho = o(1)$, because then $\V[ \thatrbc | X_1, \ldots, X_n ] = \V[ \mhat_p^{(\v)} \mid X_1, \ldots, X_n ] = o_\P(n^{-1}\h^{-1-2\v})$. However, higher order expansions find $\lambda_{\ti,\f} = \rho ^{p+2}(L_1 + \rho^{p+2} L_2)$, where $L_1$ captures the (scaled) covariance between $\mhat^{(\v)}$ and $\mhat^{(p+1)}$ and $L_2$ the variance of $\mhat^{(p+1)}$. These terms lead \cite{Hall1992_AoS_density} to conclude that bias correction is inferior to undersmoothing, which \cite{Calonico-Cattaneo-Farrell2018_JASA} later showed is not true for robust bias correction. Our results extend this conclusion to hold for derivatives, boundary points, all smoothness cases, and uniformly in $\F_\S$, while also allowing for the uniform kernel.

%%%%%%%%%%%%%%%%%%%%%%%%%%%%%%%%%%%%%%%%%%%%%%%%%%%%%%%%%%%%%%%%%%%%%%%%%%%%%%%%%%%
%%%%%%%%%%%%%%%%%%%%%%%%%%%%%%%%%%%%%%%%%%%%%%%%%%%%%%%%%%%%%%%%%%%%%%%%%%%%%%%%%%%
\section{Optimizing Interval Estimation in Practice}
	\label{sec:rbc}

We turn to optimizing inference in practice, using the conclusions from the previous sections. Collectively, the previous sections imply that the best coverage will be from using symmetric RBC intervals, i.e. those with $z_l = -z_u = z_{\alpha/2}=\Phi^{-1}(\alpha/2)$, $\thatrbc$ as in \eqref{eqn:thatrbc}, $\vhatrbc$ as in \eqref{eqn:se lp}, and $\rho$ bounded and bounded away from zero (implying $\h = \rho \b$). With an eye toward empirical work, we assume in this section that $p$ is fixed and small compared to $\S$. The other cases detailed in Section \ref{sec:bias} are of relatively little practical value. In practice researchers first choose $p$ and then conduct inference based on that choice (witness the ubiquity of local linear regression and cubic splines).

Letting
\[\irbc(\h) = \Big[ \thatrbc + z_{\alpha/2} \; \vhatrbc \ , \  \thatrbc - z_{\alpha/2} \; \vhatrbc \Big] \]
denote the recommended RBC confidence interval, now with its dependence on the bandwidth $h$ explicit to enhance the exposition, we readily deduce from Corollary \ref{cor:ce} that
\begin{equation}
	\label{eqn:something rbc}
	C_{\irbc(\h),\f}(z_{\alpha/2}, -z_{\alpha/2})  = \frac{1}{n \h} 2 \w_{4,\RBC,\f} + 2 n \h^{1+2\az} \bctrbc \w_{5,\RBC,\f} + \h^{\az} 2 \bctrbc  \w_{6,\RBC,\f},
\end{equation}
where the coverage error rate is $r_{\RBC} = \max\{(n\h)^{-1}, n\h^{1+2\az}, \h^\az  \}$, with $\az = p+3$ if $p-\v$ is odd and $\x$ is a boundary point, or $\az =p+2$ otherwise. Furthermore, its interval length is
\begin{equation}
	\label{eqn:il length}
	\left|\irbc(\h) \right| = 2 z_{\alpha/2}  \vhatrbc = 2 z_{\alpha/2}  \frac{\shatrbc}{ \sqrt{n\h^{1 + 2\v} } }.
\end{equation}
Notice that the rate of contraction of length does depend on $\v$, while the coverage error rate does not.

In the next two subsection we use the above two displays, \eqref{eqn:something rbc} and \eqref{eqn:il length}, to choose the bandwidth parameters $\h$ and $\rho = \h/\b$, and the kernel shape. Before any choices can be made, the researcher must decide on the usual size versus power trade off. In our context, this translates to the relative value they place on coverage error, the discrepancy from nominal level, versus interval length. Because we give the first characterizations of coverage error in many cases, and the first uniformly valid ones, this issue can now be studied in detail: our theoretical ideas can inform this trade off, providing new insights to consider, as well as guiding implementation given a preference for coverage error and length.

At one extreme is the approach that requires only that the interval is not anti-conservative, and then minimizes (expected) length. In this case, a shorter interval that uniformly over-covers is preferred to an interval that is longer but has correct coverage asymptotically. Our results lead one to consider the other extreme: minimize the coverage error directly, and only after optimize length. That is, seek for the confidence interval $\i$ such that, in the notation of Corollary \ref{cor:ce}, $r_\i$ vanishes as fast as possible. In applications, an interval with a faster decaying coverage error may approximate its nominal level more closely in finite samples. Such approach focuses on the accuracy of the Gaussian approximation for coverage error, and thus for inference. However, both of these extremes may be unappealing in practice because neither may be optimal from a coverage-length (or, perhaps, size-power for the dual hypothesis test) perspective. Therefore, we will also consider compromises, trading off between coverage error and interval length. One option is to minimize length among consistent interval estimators: seek the shortest interval such that $r_\i = o(1)$. In the context of kernel-based nonparametrics, interval estimators with good control of worst-case coverage are able to use larger bandwidths in general, and are thus shorter in large samples; an analogue to the adage that ``similar tests have higher power''. In general, we will let the user determine a trade off between the two and thus find a bandwidth choice to implement their preference.

%%%%%%%%%%%%%%%%%%%%%%%%%%%%%%%%%%%%%%%%%%%%%%%%%%%%%%%%%%%%%%%%%%%%%%%%%%%%%%%%%%%
\subsection{Optimizing Interval Estimation: Bandwidth Selection}
\label{sec:bandwidth}

We now focus on choosing the bandwidth $\h$ optimally, leaving $\rho$ and $K$ to the next section. With pragmatism in mind, we restrict attention to bandwidth sequences that are polynomial in $n$, that is, of the form $\h = H n^{-\eta}$ for some constants $H>0$ and $\eta >0$. For implementation purposes, we optimize $C_{\irbc(\h),\f}(z_{\alpha/2}, -z_{\alpha/2})$ pointwise in $\f$. The optimal bandwidths will be functions of $\f$ and their implementations are functions of the data, which are draws from $\f$; neither depend explicitly upon $\F_\S$. The resulting coverage error rates still hold uniformly, because the bandwidths are of the form $\h = H n^{-\eta}$, where $\eta$ does not depend on $\f$ and $H$ is well-behaved uniformly in $\F_\S$. We will focus on cases where coverage is consistent, leveraging our new higher-order results in this paper.

An obvious candidate for $\h$ in applications is the classical MSE-optimal choice, denoted $\h_\MSE$, for the point estimator $\hat{\mu}_p^{(v)}(x)$ used as part of the centering of the confidence interval $\irbc(h)$. This bandwidth choice is popular and readily available in most statistical software. Although designed to optimize point estimation, our theoretical results show that it yields valid robust bias corrected inference, that is, $\sup_{\f \in \F_\S} \; |  \P_\f [ \mu^{(\v)}(\x) \in \irbc(\h_\MSE) ] - (1-\alpha) | \to 0$, in contrast to the traditional interval $\i_p(\h_\MSE)$, which undercovers. This gives a principled endorsement for using $\h_\MSE$ coupled with robust bias correction in applications, if a researcher wishes to optimize point estimation instead of inference when choosing the bandwidth $\h$. To be more precise, our results give formal justification (and demonstrate higher-order coverage improvements) for reporting $\mhat_p^{(\v)}(\x)$ along with $\irbc(\h_\MSE)$, both implemented using the same bandwidth $\h_\MSE$, that is, pairing an MSE-optimal point estimator with a valid measure of uncertainty that uses the same samples. In fact, an interesting consequence of our results is that for interior points and local linear regression ($p=1$), $\irbc(\h_\MSE)$ has coverage error that vanishes as fast as possible: for this special case, both the mean squared error and coverage error are optimal in rates upon setting $\h = H n^{-1/(2p+3)}$ for a constant $H > 0$. In other cases, coverage of confidence intervals implemented using $\h_\MSE$ remains consistent but the coverage rate is suboptimal.

To see this, we now turn to inference-optimal bandwidths. We start with the point of view that minimizing coverage error alone is the goal and therefore we choose $\h$ by minimizing the terms of \eqref{eqn:something rbc}. This means setting $h_\RBC = H n^{-\eta_\RBC}$ for $\eta_\RBC = 1/(p+4)$ for interior $\x$ with $p-\v$ odd and $\eta_\RBC = 1/(p+3)$ otherwise: Corollary \ref{cor:ce} holds for $\irbc(\h_\RBC)$ with rates $r_\RBC = n^{-(p+3)/(p+4)}$ and $r_\RBC = n^{-(p+2)/(p+3)}$, respectively. In terms of rates, $\h_\RBC$ balances the variance and bias of the point estimator, instead of the squared bias as in MSE optimality. 

A natural way of choosing the constant $H$ in practice is to minimize the constant portion of the coverage error of \eqref{eqn:something rbc}. Plugging in $h_\RBC = H n^{-\eta_\RBC}$ and factoring out the rate we get
\[ H_\RBC = \argmin_{H > 0} \left\vert H^{-1}  \big\{ 2 \w_{4,\RBC,\f} \big\}       +    H^{1+2\az}  \big\{ 2 \bctrbc^2 \w_{5,\RBC,\f} \big\}   +    H^\az \big\{ 2 \bctrbc  \w_{6,\RBC,\f} \big\} \right\vert.\]
It is straightforward to give a data-driven version of $H_\RBC$, and therefore of $\h_\RBC$, because all quantities involved can be estimated. We defer the details to the supplement to conserve space. In a nutshell, plug-in estimators can be constructed, denoted by $\hat{\w}_{4,\RBC,\f}$, $\hat{\w}_{5,\RBC,\f}$ and $\hat{\w}_{6,\RBC,\f}$, as well as an estimate of the bias constant, $\bctrbcHat$. We then numerically solve
\[ \hat{H}_\RBC = \argmin_{H > 0} \left\vert H^{-1}  \big\{ 2 \hat{\w}_{4,\RBC,\f} \big\}       +    H^{1+2\az}  \big\{ 2 \bctrbcHat^2 \hat{\w}_{5,\RBC,\f} \big\}   +    H^\az \big\{ 2 \bctrbcHat  \hat{\w}_{6,\RBC,\f} \big\} \right\vert.\]
Because this bandwidth depends on the specific data-generating process $\f$, we view it as a rule-of-thumb implementation.

As discussed above, we can also seek for a shorter interval (more power) by sacrificing coverage error (size control). Interval length \eqref{eqn:il length} is reduced for larger bandwidths, meaning smaller exponents $\eta$. Corollary \ref{cor:ce}, or Equation \eqref{eqn:something rbc} specifically, shows that the smallest $\eta$ (i.e., the slowest vanishing bandwidth) such that the coverage of $\irbc(n^{-\eta})$ to be (uniformly) asymptotically correct is $\eta > (1/(1 + 2 \az)$, where recall that $\az = p+3$ for interior points with $p-\v$ odd and $\az = p+2$ otherwise. Therefore, taking $\h = H n^{-\eta}$ for any $\eta > (1/(1 + 2 \az)$ and $H>0$ results in the ideal interval given these preferences over coverage error and length. 

This same idea can be extended to accomplish a \emph{trade-off} between coverage error and length. Researchers may want to have an interval that is closer to nominal level, and therefore may be concerned that in finite samples an interval with coverage error only known to obey $r_\RBC = o(1)$ will not be satisfactory. We can therefore take $\h_\TO = H_\TO n^{-\eta_\TO}$ for some $\eta_\TO \in (1/(1 + 2 \az) , \eta_\RBC]$. Note that if $\eta > \eta_\RBC$ (i.e.\ $\h = o(\h_\RBC)$), both the rate of coverage error decay and interval length contraction can be improved. There is no well-defined optimal choice in this range of asymptotically valid options, as the choice must reflect each researcher's preference for length vs.\ coverage error. This range does not depend on $\v$, even though the resulting length will, see \eqref{eqn:il length}. This may affect how the researcher wishes to trade off the two quantities. The endpoints of the range for $\eta_\TO$ represent preferences for only optimizing coverage error or only length. 

To select the constant for this trade off, $H_\TO$, note first that for $\eta < \eta_\RBC$ the middle term of the coverage error \eqref{eqn:something rbc} is dominant. This term, $n ^{ 1 - \eta_\TO(1+2\az)}  \big\{ 2 \bct^2 \w_{5,\ti,\f} \big\}$, shares the rate of the scaled, squared bias. Therefore, it is natural to balance this against the square of interval length, to match the trade off that $\h_\TO$ represents. The feasible choice of this constant, $\hat{H}_\TO$, will also be a direct plug-in rule that uses the estimators above and a pilot version of $\shatrbc^2$, as well a researcher's choice of weight $\mathcal{H} \in (0,1)$ capturing their trade off between the two. Put altogether, we can then set
\begin{align*}
	\hat{H}_\TO & = \argmin_{H > 0} \Big\{ \mathcal{H} \cdot H^{1+2\az}  \big( 2 \bctrbcHat^2 \hat{\w}_{5,\RBC,\f} \big) + (1 - \mathcal{H}) \cdot 4 z_{\alpha/2}^2  \frac{\shatrbc^2}{ H^{1 + 2\v} } \Big\} 		\\
	& = \left( \frac{(1 - \mathcal{H}) (1 + 2 \v)  4 z_{\alpha/2}^2  \shatrbc^2  }{\mathcal{H} ( 1 + 2 \az) 2 \bctrbcHat^2 \hat{\w}_{5,\RBC,\f}  } \right).
\end{align*}
The resulting data-driven bandwidth choice is $\hat{\h}_\TO = \hat{H}_\TO n^{-\eta_\TO}$, for a choice $\eta_\TO \in(1/(1 + 2 \az), \eta_\RBC]$, and weight $\mathcal{H} \in (0,1)$. The supplement contains details and some additional results.

%%%%%%%%%%%%%%%%%%%%%%%%%%%%%%%%%%%%%%%%%%%%%%%%%%%%%%%%%%%%%%%%%%%%%%%%%%%%%%%%%%%
\subsection{Interval Length Optimality: Choosing $\rho$ and $K(\cdot)$}
\label{sec:length}

To complete the implementation of $\irbc(\h)$ we need to select the bias-correction bandwidth $\b$, which we do in the form of $\rho = \h/\b$, and the kernel function $K(\cdot)$. We choose these to optimize the length \eqref{eqn:il length}. With $\rho$ bounded and bounded away from zero, this choice affects only the constant portions of the coverage error expansion of $\irbc(\h_\RBC)$, in particular changing the shape of the \emph{equivalent kernel} of $\thatrbc$. For more details on equivalent kernels, see \cite[Sect.\ 3.2.2]{Fan-Gijbels1996_book}. To find this equivalent kernel, begin by writing $\thatrbc =  \v! \be_\v'\Gp^{-1} \Orbc  \bY / n \h^\v$ as a weighted average of the $Y_i$. Recall that $\Xhi = (X_i - \x)/\h$ and similarly for $\Xbi$. Then,
\begin{align*}
	\thatrbc & =   \frac{1}{n \h^\v} \v! \be_\v'\Gp^{-1} \left( \Op - \rho^{p+1}  \Lp \be_{p+1}' \Gq^{-1} \Oq \right)  \bY   		\\
	& = \frac{1}{n \h^{1+\v}} \sumi \Big\{  \v! \be_\v'\Gp^{-1} \left(  K(\Xhi)\br_p(\Xhi) - \rho^{p+1} \frac{\h}{\b}  \Lp \be_{p+1}' \Gq^{-1}  K(\Xbi)\br_{p+1}( \Xbi) \right)   \Big\}  \; Y_i.  
\end{align*}
The weights here depend on the sample, as $\Gp$, $\Lp$, and $\Gq$ are sample quantities. The equivalent kernel replaces these with their limiting versions (not, as elsewhere, their fixed-$n$ expectations), which we shall denote $\bm{\mathsf{G}} = f(\x) \int K(u)\br_p(u)\br_p(u)'du$, $\bm{\mathsf{L}} = f(\x) \int K(u)\br_p(u) u^{p+1} du$, and $\bm{\bar{\mathsf{G}}} = f(\x) \int K(u)\br_{p+1}(u)\br_{p+1}(u)'du$, respectively. The integrals are over $[-1,1]$ if $\x$ is an interior point and appropriately truncated when $\x$ is a boundary point. Under our assumptions, convergence to these limits is fast enough that, for the equivalent kernel $\mathcal{K}_\RBC(u; K, \rho, \v)$ defined as
\[ \mathcal{K}_\RBC(u; K, \rho, \v) = \v! \be_\v'\bm{\mathsf{G}}^{-1} \left(  K(u)\br_p( u) - \rho^{p+2}  \bm{\mathsf{L}} \be_{p+1}' \bm{\bar{\mathsf{G}}}^{-1}  K(u \rho)\br_{p+1}( u \rho) \right), \]
and we have the representation 
\begin{align*}
	\thatrbc = \frac{1}{n \h^{1+\v}} \sumi \mathcal{K}_\RBC \big(\Xhi; K, \rho, \v\big)  Y_i \; \{1 + o_\P(1)\}.  
\end{align*}
It follows that the (constant portion of the) asymptotic length of $\irbc(\h)$ depends on $K(\cdot)$ and $\rho$ only through the specific functional $\int\big(\mathcal{K}_\RBC(u; K, \rho, \v)\big)^2du$, which corresponds to the asymptotic variance.

The asymptotic variance of a local polynomial point estimator at a boundary or interior point is minimized by employing the uniform kernel \cite{Cheng-Fan-Marron1997_AoS}. Therefore, to minimize the constant term of interval length we choose $\rho$, depending on $K$, to make $\mathcal{K}_\RBC(u; K, \rho, \v)$ as close as possible to the optimal equivalent kernel, i.e. the $\mathcal{K}^*_{p}(u)$ induced by the uniform kernel for a given $p$. If the uniform kernel is used initially, then $\rho^*=1$ is optimal: that is, $\mathcal{K}_\RBC(\cdot; \One\{|u|<1\}/2, 1, \v) \equiv \mathcal{K}^*_{p+1}(\cdot)$. This highlights the importance of being able to accommodate the uniform kernel in our higher-order expansions. If a kernel other than uniform is used, we look for the optimal choice of $\rho$ by minimizing the $L_2$ distance between the induced equivalent kernel and the optimal variance-minimizing equivalent kernel, solving
\[\rho^* = \argmin_{\rho>0} \int \left| \mathcal{K}_\RBC\big(u; K, \rho, \v \big)  -  \mathcal{K}^*_{p+1}(u) \right| ^2 du.\]
This is not a sample-dependent problem, only computational. For $p-\v$ odd, the standard case in practice, Table \ref{table:rho} shows the optimal $\rho^*$, for boundary and interior points, respectively, the triangular kernel ($K(u)=(1-|u|)\One(|u|\leq 1)$) and the Epanechnikov kernel ($K(u)=0.75(1-u^2)\One(|u|\leq 1)$). These two are popular choices and are MSE-optimal at boundary and interior points, respectively. The shapes of the resulting equivalent kernel, $\mathcal{K}_\RBC(u; K, \rho^*, \v)$, are shown in Figure \ref{fig:equivK} for $\v=\{0,1\}$. Note that although $\rho^*$ itself does not vary with $\v$, the equivalent kernel shape does. Additional choices of $p$ are illustrated in the supplement.

%%%%%%%%%%%%%%%%%%%%%%%%%%%%%%%%%%%%%%%%%%%%%%%%%%%%%%%%%%%%%%%%%%%%%%%%%%%%%%%%%%%
%%%%%%%%%%%%%%%%%%%%%%%%%%%%%%%%%%%%%%%%%%%%%%%%%%%%%%%%%%%%%%%%%%%%%%%%%%%%%%%%%%%
\section{Minimax Coverage Error Decay Rates}
\label{sec:minimax}

In this section we build on \cite{Hall-Jing1995_AoS} and look for a minimax result: characterizing the fastest (minimal) rate at which the worst-case (maximal) coverage error vanishes. The ``optimal'' interval estimator is one for which this maximal error is minimized. At an intuitive level, this corresponds to the desire for similarity in testing: the confidence interval should have ``similar'' coverage over the set of plausible distributions. \cite{Hall-Jing1995_AoS} proposed this inference-specific notion of minimax optimality and studied it in the case of one-sided confidence intervals in the i.i.d.\ parametric location model. This problem is different from the more typical minimaxity considered for point estimation, though the latter is established for robust bias correction by \cite{Tuvaandorj2020_JoE} and is discussed more broadly for local polynomials by \cite{Cheng-Fan-Marron1997_AoS} and \cite{Fan-etal1997_AISM}.

To state the problem more formally, let $\I_p$ denote a class of confidence interval estimators. We then define the \emph{minimax coverage error} as
\begin{equation*}
	\mathrm{MCE}_n :=  \inf_{\i \in \I_p} \sup_{\f \in \F_\S} \; \Big|  \P_\f \big[ \mu^{(\v)}(\x) \in \i \big] - (1-\alpha)  \Big|,
\end{equation*}
where the dependence on the fixed quantities, such as the classes $\I_p$ and $\F_\S$ or the level $\alpha$, are suppressed. Our goal is to characterize the \emph{minimax optimal coverage error decay rate bound}, which is the fastest vanishing sequence $r_\star=r_\star(n)$, $n\in\mathbb{N}$, such that for constants $c_1$ and $c_2$, 
\begin{equation}
	\label{eqn:minimax}
	0 < c_1 \leq \liminf_{n \to \infty}  r_\star^{-1} \mathrm{MCE}_n \leq \limsup_{n \to \infty}  r_\star^{-1} \mathrm{MCE}_n \leq c_2 < \infty.
\end{equation}

We have already characterized the worst-case coverage error in Corollary \ref{cor:ce} for the class of distributions defined in Section \ref{sec:set up}. The key point here is that if we take $\I_p$ to be the class of intervals for which we studied worst-case coverage error in Corollary \ref{cor:ce}, then we can characterize the minimax rate $r_\star$ as well as intervals which attain it. Specifically, we take $\I_p$ to be the Wald-type intervals of the form \eqref{eqn:ci lp}, based on a local polynomial of degree $p$, with any choice of centering, scaling, bandwidth(s), kernel shape, and quantiles, discussed in Section \ref{sec:set up}. This includes all those intervals dual to $t$ statistics covered by Theorem \ref{thm:ee lp}, but also includes other choices which are not asymptotically level $1-\alpha$. Examples include trivial cases such improper choices of quantiles or inconsistent variance estimators, but also choices such as $\ip(\h_\MSE)$, i.e., using the MSE-optimal bandwidth sequence for with centering $\mhat_p^{(\v)}$ and scaling $\shatp^2/(n\h^{1+2\v})$. We could also include other procedures, such as bootstrap based quantiles, empirically chosen bandwidths, or empirical likelihood methods, as these will not improve on the worst-case coverage error \cite{Hall1992_AoS_density,Hall-Kang2001_AoS,Chen-Qin2000_Bmka}.

Crucial to proving that such an interval is minimax optimal is that the bias vanishes at the best possible rate, given the smoothness assumed ($\S$) and utilized ($p$), and this in turn depends on whether $\x$ is an interior or boundary point. Collecting all the smoothness cases studied in Section \ref{sec:bias}, we immediately obtain the following result (see the supplement for omitted details).
\begin{corollary}\label{thm:fixed p lp}
	Let Assumptions \ref{asmpt:dgp lp} and \ref{asmpt:ci lp} hold and let $\I_p$ be the class of Wald-type confidence intervals described in the foregoing paragraph.\\
	(i) Let $\x$ be an interior point in the support of $X$. If $p-\v$ is odd, then \eqref{eqn:minimax} holds with $r_\star = n^{-(p+3)/(p+4)}$ if $p\leq S-3$ and $r_\star = n^{-(\S + \s)/(\S + \s + 1)}$ if $p \geq S-2$. If $p-\v$ is even, then $r_\star = n^{-(p+2)/(p+3)}$ if $p\leq S-2$ and $r_\star = n^{-(\S + \s)/(\S + \s + 1)}$ if $p \geq S-1$.\\
	(ii) Let $\x$ be a boundary point of the support of $X$. Then, \eqref{eqn:minimax} holds with $r_\star = n^{-(p+2)/(p+3)}$ if $p\leq S-2$ and $r_\star = n^{-(\S + \s)/(\S + \s + 1)}$ if $p \geq S-1$.

\end{corollary}

For the classes $\F_\S$ and $\I_p$ considered herein, this result establishes the minimax rate bounds. The interplay between the two classes is crucial: they should be neither too ``large'' nor too ``small'' in order to obtain useful and interesting results. The larger is $\F_\S$, the more plausible a given data set is generated by some $\f \in \F_\S$, but well known results dating back at least to \cite{Bahadur-Savage1956_AoMS} show that if $\F_\S$ is too large it is impossible to construct an ``effective confidence interval'' that controls the worst-case coverage. Our particular $\F_\S$ captures common restrictions in the setting of nonparametric regression, and therefore matches empirical practice. The class $\I_p$ is restricted to contain Wald-type interval estimators commonly employed in practice using nonparametric kernel-based regression methods (but can be trivially extended to cover alternatives mentioned above). Recall that our goal is to identify if RBC confidence intervals improve over other options in a uniform sense, and this result is tailored to that goal. 

The main message of Corollary \ref{thm:fixed p lp} is that $\irbc(\h_\RBC)$ is minimax optimal in all cases. This strengthens the pointwise improvement offered by robust bias correction to optimality within the class $\I_p$ considered here. Intuitively, this is because robust bias correction successfully exploits additional smoothness if it exists, but is not punished (in rates) if there is no such smoothness due to the change in Studentization. This can be compared to $\i_p$, the classical interval that requires undersmoothing. This interval is optimal only in the case when $\S$ is known so that $p$ can be chosen large enough; for a fixed $p$ that is small relative to $\S$ this interval is dominated in the minimax sense.

%%%%%%%%%%%%%%%%%%%%%%%%%%%%%%%%%%%%%%%%%%%%%%%%%%%%%%%%%%%%%%%%%%%%%%%%%%%%%%%%%%%
%%%%%%%%%%%%%%%%%%%%%%%%%%%%%%%%%%%%%%%%%%%%%%%%%%%%%%%%%%%%%%%%%%%%%%%%%%%%%%%%%%%
\section{Simulation Study}
\label{sec:numerical}

This section presents results from a simulation study to examine the finite-sample performance of our methods. Additional results and implementation details can be found in the supplement. We focus on the performance of confidence intervals for $\mu(\x)$ and $\mu^{(1)}(\x)$ based on robust bias correction and traditional undersmoothing. Data is generated from model \eqref{eqn:model}, with $X_i$ uniformly distributed on $[-1,1]$, $\e$ standard normal, and
\[
\mu(x) = \frac{ \sin(3\pi x/2 ) }{ 1+18x^2(\sign(x) +1)   },
\]
where $\sign(x)=-1$, $0$, or $-1$ according to $x>0$, $x=0$ or $x<0$, respectively. 
This function, which was also analyzed in \cite{Calonico-Cattaneo-Farrell2018_JASA}, is displayed in Figure \ref{fig:dgp} together with $\mu^{(1)}(x)$. By looking at different evaluation points, we will be able to capture the performance of the methods under different levels of complexity.

We show results for sample sizes $n\in\{100,250,500,750,1000,2000\}$, always with $5,000$ replications. We study inference at three evaluation points: $\x=-1$ (boundary point), $\x=-0.6$ (low curvature), and $\x=-0.2$ (high curvature). The supplement shows results for $\x\in\{0.2,0.6,1\}$. For implementation, we use $p=1$ (for $\v=0$) and $p=2$ (for $\v=1$) with the Epanechnikov kernel (the supplement gives results for the uniform kernel). Finally, we evaluate the performance of the confidence intervals using several bandwidth choices. First, following the results from Section \ref{sec:rbc}, we use $\hat{h}_\RBC$, a data-driven version of the inference-optimal bandwidth $h_\RBC$. We also consider the analogous version for undersmoothed confidence intervals, denoted $\hat{h}_\US$ (detailed in the supplement), and the standard choice in practice, $\hat{h}_\MSE$.
Robust bias correction is implemented using $\rho=\rho^*$ according to Table \ref{table:rho}. All implementation details are available for \textsf{R} and \textsf{Stata} \cite{Calonico-Cattaneo-Farrell2019_JSS}.

Figures \ref{fig:sim_ec0} and \ref{fig:sim_ec1} present empirical coverage probabilities for $\v=0$ and $\v=1$, respectively, for each evaluation point and choice of bandwidth, as a function of the sample size. 
Overall, we can see that robust bias correction yields close to accurate coverage, improving over undersmoothing in almost every case. Performance is highly superior at points where the functions present high curvature and also at the boundary. Performance is never worse even when the function is quite linear and optimal bandwidths are (close to) ill-defined.

We also compare confidence interval performance in terms of length in Figure \ref{fig:sim_il}. We take coverage into account by looking at RBC and US confidence intervals implemented using their corresponding coverage error optimal bandwidth choices ($\hat{h}_\RBC$ and $\hat{h}_\US$, respectively), which is when they perform best in terms of coverage. We also include other valid, but non optimal choices $\irbc(\hat{h}_{\MSE})$,  $\irbc(\hat{h}_{\US})$. We find that RBC confidence intervals are, on average, not larger than US, and sometimes even shorter. Lastly, Figure \ref{fig:h_epa} shows the average estimated bandwidths at each point for each sample size, which behave as expected following our theory.

%%%%%%%%%%%%%%%%%%%%%%%%%%%%%%%%%%%
%%%%%%%%%%%%%%%%%%%%%%%%%%%%%%%%%%%
%%%%%%%%%%%%%%%%%%%%%%%%%%%%%%%%%%%
\section{Conclusion}
\label{sec:conclusion}

This paper derived higher order expansions for inference in nonparametric local polynomial regression. We provided new Edgeworth expansions and associated error in coverage probability expansions for standard and robust bias corrected methods, showing that the latter have superior coverage properties. Our results hold uniformly in the data generating process, cover derivative estimation, and allow for the uniform kernel. Using our results we developed novel bandwidth selections that target inference directly, achieving lower coverage error and/or shorter length. 

Our main results measured coverage error symmetrically, but it is worth mentioning that the absolute loss function may be replaced by the ``check'' loss function, and thus studying the maximal coverage error $\sup_{\f \in \F_\S} \mathcal{L}( \P_\f [ \tf \in \i ] - (1-\alpha) )$, with $\mathcal{L}(e) =\mathcal{L}_\tau(e)=e(\tau - \One\{e<0\})$, and where $\tau \in (0,1)$ encodes the researcher's weight for over- and under-coverage. Setting $\tau = 1/2$ recovers the above, symmetric measure of coverage error. Guarding more against undercoverage (a preference for conservative intervals) requires choosing a $\tau < 1/2$. For example, setting $\tau = 1/3$ encodes the belief that undercoverage is twice as bad as the same amount of overcoverage. All our results can be established for this loss function. 

Finally, this paper studied the properties of confidence intervals at a fixed evaluation point $\x$, but it would be of theoretical and practical interest to extent our results to the case of confidence band construction. Robust bias correction has recently been used to construct valid confidence bands for local polynomial estimation \cite{Cheng-Chen2019_EJS} and linear sieve estimation \cite{Cattaneo-Farrell-Feng2020_AoS}. Because the underlying distributional approximations for confidence band constructions are substantially more complex, obtaining results similar to those presented herein will require substantial extension of our technical work.

%%%%%%%%%%%%%%%%%%%%%%%%%%%%%%%%%%%%%%%%%%%%%%
%% Single Appendix:                         %%
%%%%%%%%%%%%%%%%%%%%%%%%%%%%%%%%%%%%%%%%%%%%%%
\begin{appendix}
	\section*{Appendix: Terms of the Edgeworth Expansion}%% if no title is needed, leave empty \section*{}.
		
	We give the definition of $\w_k$, $k = 1, 2, \ldots, 6$. First, define the following objects, all calculated in a fixed-$n$ sense, bounded uniformly in $\F_\S$, and nonzero for some $\f\in\F_\S$. As shorthand, let a tilde accent denote a fixed-$n$ expectation, so that $\Gpt = \E[\Gp]$, $\Lpt_1 = \E[\Lp_1]$, and so forth. Let
	\begin{align*}
		\l^0_{\tp}(X_i) & = \v! \be_\v' \Gpt^{-1} (K \br_p)(\Xhi) ; 		\\
		\l^0_{\trbc}(X_i) & = \l^0_{\tp}(X_i)  -   \rho^{p+1} \v! \be_\v' \Gpt^{-1} \Lpt_1 \be_{p+1}' \Gqt^{-1}  (K \br_{p+1})(\Xbi) ; 		\\
		\l^1_{\tp}(X_i, X_j) & = \v! \be_\v' \Gpt^{-1} \left( \E[(K \br_p \br_p')(\Xhj)] - (K \br_p \br_p')(\Xhj) \right) \Gpt^{-1} (K \br_p)(\Xhi) ; 		\\
		\l^1_{\trbc}(X_i, X_j) & = \l^1_{\tp}(X_i, X_j)     -    \rho^{p+1}   \v! \be_\v' \Gpt^{-1} \Bigl\{ \left( \E[(K \br_p \br_p')(\Xhj)] - (K \br_p \br_p')(\Xhj) \right)  \Gpt^{-1}  \Lpt_1 \be_{p+1}'   		\\
		& \qquad \qquad       +  \left( (K \br_p)(\Xhj) \Xhi^{p+1}   -  \E[(K \br_p)(\Xhj) \Xhi^{p+1}]  \right) \be_{p+1}'       		\\
		& \qquad \qquad       +  \Lpt_1 \be_{p+1}' \Gqt^{-1}  \left( \E[(K \br_{p+1} \br_{p+1}')(X_{\b,j})] - (K \br_{p+1} \br_{p+1}')(X_{\b,j}) \right)   \Bigr\} \Gqt^{-1}  (K \br_{p+1})(\Xbi).
	\end{align*}
	Then define $\st_{\ti}^2  = \E[ \h^{-1} \l^0_{\ti}(X)^2 v(X) ]$ and denote the standard Normal density as $\phi(z)$. Then we define
	\begin{align*}
		\w_{1,\ti,\f}(z) &  =   \phi(z) \st_{\ti}^{-3} \E \left[ \h^{-1} \l^0_{\ti}(X_i)^3 \e_i^3 \right] \left\{  (2z^2 - 1)/6 \right\}	  , 		 \\
		\w_{2,\ti,\f}(z) &  =   -  \phi(z)  \st_{\ti}^{-1}	  , 		 \\
		\w_{3,\ti,\f}(z) &  =   -  \phi(z) \left\{ z / 2 \right\}	  , 		 \\
		\w_{5,\ti,\f}(z) &  =   -  \phi(z)  \st_{\ti}^{-2} \left\{ z / 2 \right\}	  , 		 \\
		\w_{6,\ti,\f}(z) &  =   \phi(z)  \st_{\ti}^{-4} \E [ \h^{-1} \l^0_{\ti}(X_i)^3 \e_i^3 ]  \left\{ z^3 / 3 \right\}	 .
	\end{align*}
	For $\w_4$, it is not quite as simple to state a generic version. Let $\bm{\tilde{G}}$ stand in for $\Gpt$ or $\Gqt$, $\tilde{p}$ stand in for $p$ or $p+1$, and $d_n$ stand in for $\h$ or $\b$, all depending on if $\ti = \tp$ or $\trbc$. Note however, that $\h$ is still used in many places, in particular for stabilizing fixed-$n$ expectations, for $\trbc$. Indexes $i$, $j$, and $k$ are always distinct (i.e.\ $X_{\h,i} \neq X_{\h,j} \neq X_{\h,k}$). 
	\begin{align*}
		\w_{4,\ti,\f}(z) & =  \phi(z)  \st_{\ti}^{-6} \E \left[ \h^{-1} \l^0_{\ti}(X_i)^3 \e_i^3 \right]^2 \left\{  z^3/3 + 7 z /4 + \st_{\ti}^2 z (z^2-3)/4 \right\}   		\\
		& \quad +  \phi(z)  \st_{\ti}^{-2} \E \left[ \h^{-1} \l^0_{\ti}(X_i) \l^1_{\ti}(X_i, X_i) \e_i^2 \right] \left\{ - z (z^2 - 3) /2 \right\}   		\\
		& \quad +  \phi(z)  \st_{\ti}^{-4} \E \left[ \h^{-1} \l^0_{\ti}(X_i)^4 (\e_i^4 - v(X_i)^2) \right] \left\{ z(z^2-3)/8 \right\}   		\\
		& \quad -  \phi(z)  \st_{\ti}^{-2} \E \left[ \h^{-1}\l^0_{\ti}(X_i)^2 \br_{\tilde{p}}(X_{d_n,i})'\bm{\tilde{G}}^{-1} (K \br_{\tilde{p}})(X_{d_n,i}) \e_i^2 \right] \left\{ z(z^2 - 1)/2 \right\}   		\\
		& \quad -  \phi(z)  \st_{\ti}^{-4} \E \left[ \h^{-1} \l^0_{\ti} (X_i)^3  \br_{\tilde{p}}(X_{d_n,i})'\bm{\tilde{G}}^{-1} \e_i^2 \right] \E \left[ \h^{-1} (K \br_{\tilde{p}})(X_{d_n,i}) \l^0_{\ti} (X_i) \e_i^2 \right] \left\{ z(z^2 - 1)  \right\}   		\\
		& \quad +  \phi(z)  \st_{\ti}^{-2} \E \left[ \h^{-2} \l^0_{\ti}(X_i)^2 (\br_{\tilde{p}}(X_{d_n,i})'\bm{\tilde{G}}^{-1}(K \br_{\tilde{p}})(X_{d_n,j}) )^2 \e_j^2 \right] \left\{ z(z^2 - 1)/4 \right\}   		\\
		%	& \quad +   \phi(z)   \st_{\ti}^{-4} \E \left[ \h^{-1} \l^0_{\ti} (X_j)^2 \left( \E \left[ \h^{-1} \br_{\tilde{p}}(X_{d_n,j})'\bm{\tilde{G}}^{-1}(K \br_{\tilde{p}})(X_{d_n,i}) \l^0_{\ti}(X_i)  \e_i^2 \vert X_j \right] \right)^2 \right] \left\{  z(z^2 - 1) /2 \right\}   		\\ %%this is the old version of the line right below, with the conditional expectation squared
		& \quad +   \phi(z)   \st_{\ti}^{-4} \E \left[ \h^{-3} \l^0_{\ti} (X_j)^2 \br_{\tilde{p}}(X_{d_n,j})'\bm{\tilde{G}}^{-1}(K \br_{\tilde{p}})(X_{d_n,i}) \l^0_{\ti}(X_i) \br_{\tilde{p}}(X_{d_n,j})'\bm{\tilde{G}}^{-1}(K \br_{\tilde{p}})(X_{d_n,k}) \l^0_{\ti}(X_k)  \e_i^2 \e_k^2 \right]   		\\
		& \quad \qquad \qquad \qquad \qquad \qquad   \times \;   \left\{  z(z^2 - 1) /2 \right\}   		\\
		& \quad +  \phi(z)  \st_{\ti}^{-4} \E \left[ \h^{-1} \l^0_{\ti}(X_i)^4 \e_i^4 \right] \left\{ - z (z^2 - 3)/24 \right\}   		\\
		& \quad +  \phi(z)  \st_{\ti}^{-4} \E \left[ \h^{-1} \left( \l^0_{\ti}(X_i)^2 v(X_i) - \E[\l^0_{\ti}(X_i)^2 v(X_i)] \right) \l^0_{\ti}(X_i)^2 \e_i^2 \right] \left\{ z(z^2 - 1)/4 \right\}   		\\
		& \quad +  \phi(z)  \st_{\ti}^{-4} \E \left[ \h^{-2} \l^1_{\ti}(X_i, X_j) \l^0_{\ti}(X_i)\l^0_{\ti}(X_j)^2 \e_j^2 v(X_i) \right] \left\{  z (z^2 - 3) \right\}   		\\
		& \quad +  \phi(z)  \st_{\ti}^{-4} \E \left[ \h^{-2} \l^1_{\ti}(X_i, X_j)  \l^0_{\ti}(X_i) \left( \l^0_{\ti}(X_j)^2 v(X_j) - \E[\l^0_{\ti}(X_j)^2 v(X_j)] \right) \e_i^2 \right] \left\{ - z \right\}   		\\
		& \quad +  \phi(z)  \st_{\ti}^{-4} \E \left[ \h^{-1}  \left( \l^0_{\ti}(X_i)^2 v(X_i) - \E[\l^0_{\ti}(X_i)^2 v(X_i)] \right)^2 \right] \left\{ - z(z^2 + 1) /8 \right\}   .
	\end{align*}

\end{appendix}
%%%%%%%%%%%%%%%%%%%%%%%%%%%%%%%%%%%%%%%%%%%%%%
%% Multiple Appendixes:                     %%
%%%%%%%%%%%%%%%%%%%%%%%%%%%%%%%%%%%%%%%%%%%%%%
%\begin{appendix}
%\section{???}
%
%\section{???}
%
%\end{appendix}

%%%%%%%%%%%%%%%%%%%%%%%%%%%%%%%%%%%%%%%%%%%%%%
%% Support information (funding), if any,   %%
%% should be provided in the                %%
%% Acknowledgments section.                %%
%%%%%%%%%%%%%%%%%%%%%%%%%%%%%%%%%%%%%%%%%%%%%%
\section*{Acknowledgements}

We especially thank an Associate Editor, and the reviewers, for insightful comments that improve our manuscript. We also thank Chris Hansen, Michael Jansson, Adam McCloskey, Rocio Titiunik, and participants at various seminars and conferences for comments. The second author gratefully acknowledges financial support from the National Science Foundation (SES 1357561, SES 1459931 and SES-1947805). The third author gratefully acknowledges financial support from the Richard N. Rosett and John E. Jeuck Fellowships.

%%%%%%%%%%%%%%%%%%%%%%%%%%%%%%%%%%%%%%%%%%%%%%
%% Supplementary Material, if any, should   %%
%% be provided in {supplement} environment  %%
%% with title and short description.        %%
%%%%%%%%%%%%%%%%%%%%%%%%%%%%%%%%%%%%%%%%%%%%%%
\begin{supplement}
	\stitle{Online Appendix for ``Coverage Error Optimal Confidence Intervals for Local Polynomial Regression''}
	\sdescription{This supplement contains proofs of all results, other technical details, and complete simulation results.}
\end{supplement}

%%%%%%%%%%%%%%%%%%%%%%%%%%%%%%%%%%%%%%%%%%%%%%%%%%%%%%%%%%%%%
%%                  The Bibliography                       %%
%%                                                         %%
%%  imsart-number.bst  will be used to                     %%
%%  create a .BBL file for submission.                     %%
%%                                                         %%
%%  Note that the displayed Bibliography will not          %%
%%  necessarily be rendered by Latex exactly as specified  %%
%%  in the online Instructions for Authors.                %%
%%                                                         %%
%%  MR numbers will be added by VTeX.                      %%
%%                                                         %%
%%  Use \cite{...} to cite references in text.             %%
%%                                                         %%
%%%%%%%%%%%%%%%%%%%%%%%%%%%%%%%%%%%%%%%%%%%%%%%%%%%%%%%%%%%%%

%% if your bibliography is in bibtex format, uncomment commands:
\bibliographystyle{imsart-number} % Style BST file
\bibliography{Edgeworth-Uniform--Bibliography}       % Bibliography file (usually '*.bib')

%% or include bibliography directly:
% \begin{thebibliography}{}
% \bibitem{b1}
% \end{thebibliography}

%%%%%%%%%%%%%%%%%%%%%%%%%%%%%%%%%%%
%%%%%%%%%%%%%%%%%%%%%%%%%%%%%%%%%%%
%%%%%%%%%%%%%%%%%%%%%%%%%%%%%%%%%%%
\clearpage

\begin{table}
	\caption{$L_2$-Optimal Variance-Minimizing $\rho$\label{table:rho}}
	\begin{subtable}{.5\linewidth}
		\centering
		%latex.default(table1, file = paste("simuls/table_rho_bnd_L",     L.m, ".txt", sep = ""), n.cgroup = c(1, 3), cgroup = c("$p$",     "Kernel"), landscape = FALSE, center = "none", col.just = rep("c",     4), title = "", table.env = FALSE)%
		\begin{tabular}{ccccc}
			\multicolumn{1}{c}{\bfseries $p$}&\multicolumn{1}{c}{\bfseries }&\multicolumn{3}{c}{\bfseries Kernel}\tabularnewline
			\multicolumn{1}{c}{}&\multicolumn{1}{c}{}&\multicolumn{1}{c}{Triangular}&\multicolumn{1}{c}{Epanechnikov}&\multicolumn{1}{c}{Uniform}\tabularnewline
			$0$&&$0.778$&$0.846$&$1.000$\tabularnewline
			$1$&&$0.850$&$0.898$&$1.000$\tabularnewline
			$2$&&$0.887$&$0.924$&$1.000$\tabularnewline
			$3$&&$0.909$&$0.940$&$1.000$\tabularnewline
			$4$&&$0.924$&$0.950$&$1.000$\tabularnewline
		\end{tabular}
		
		\smallskip
		\caption{Boundary point}
	\end{subtable}%
	\begin{subtable}{.5\linewidth}
		\centering
		%latex.default(table2, file = paste("simuls/table_rho_int_L",     L.m, ".txt", sep = ""), n.cgroup = c(1, 3), cgroup = c("$p$",     "Kernel"), landscape = FALSE, center = "none", col.just = rep("c",     4), title = "", table.env = FALSE)%
		\begin{tabular}{ccccc}
			\multicolumn{1}{c}{\bfseries $p$}&\multicolumn{1}{c}{\bfseries }&\multicolumn{3}{c}{\bfseries Kernel}\tabularnewline
			\multicolumn{1}{c}{}&\multicolumn{1}{c}{}&\multicolumn{1}{c}{Triangular}&\multicolumn{1}{c}{Epanechnikov}&\multicolumn{1}{c}{Uniform}\tabularnewline
			$1$&&$0.798$&$0.865$&$1.000$\tabularnewline
			$3$&&$0.867$&$0.915$&$1.000$\tabularnewline
			$5$&&$0.900$&$0.938$&$1.000$\tabularnewline
			$7$&&$0.919$&$0.951$&$1.000$\tabularnewline
		\end{tabular}
		\smallskip
		\caption{Interior point}
	\end{subtable} 
	\vspace{.1in}
	\footnotesize\textbf{Note}: Optimal $\rho$ computed by minimizing the $L_2$ distance between the RBC induced equivalent kernel and the variance-minimizing equivalent kernel (Uniform Kernel).
\end{table}

\clearpage
\begin{figure}[!htb]
	\captionsetup[subfigure]{labelformat=empty}
	\centering
	\caption{$\mathcal{K}^*_{p+1}(u)$ vs. $\mathcal{K}_\RBC(u; K, \rho^*, \v)$}\label{fig:equivK}
	\subcaption{(a) $\v=0$ }
	\begin{subfigure}[b]{0.5\textwidth}
		\includegraphics[height=0.2\textheight,width=0.75\textwidth]{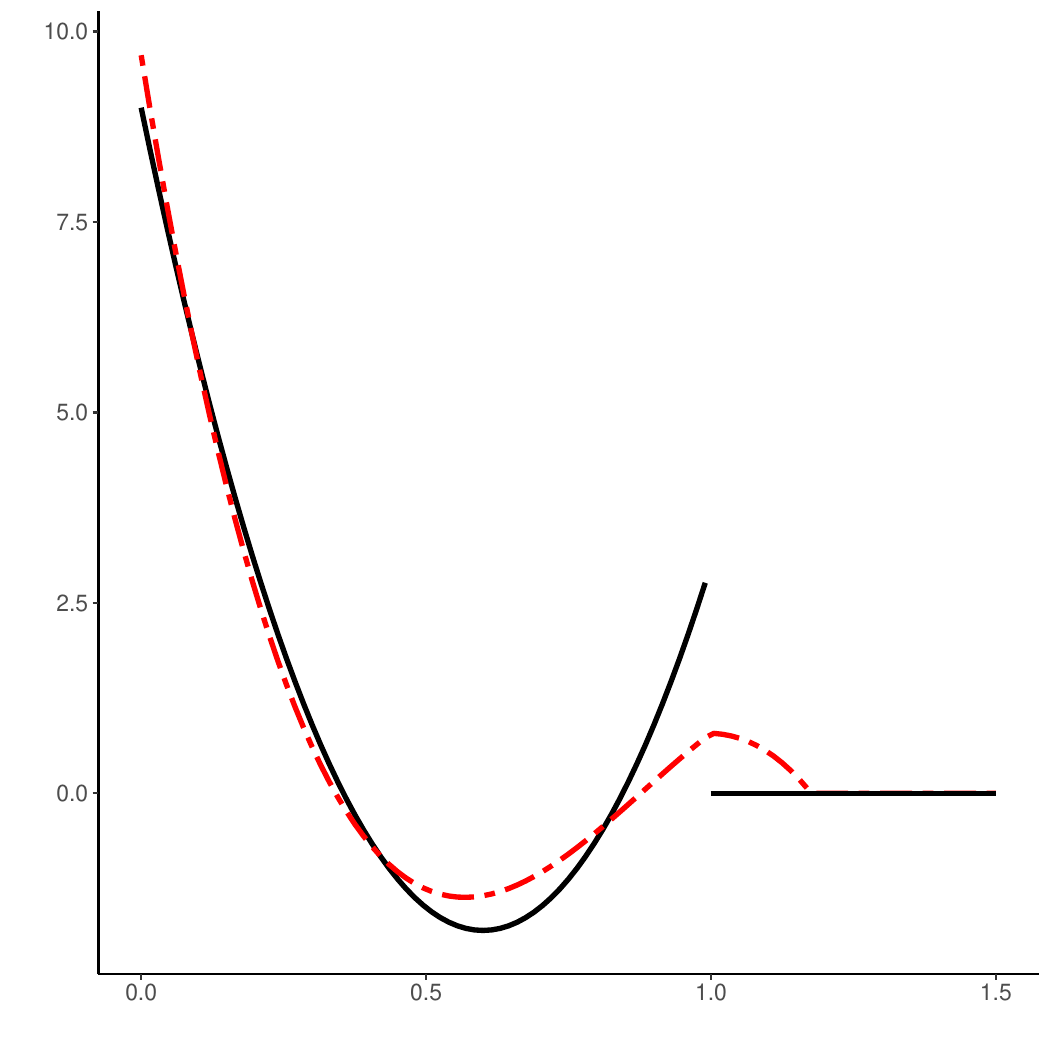}
		\caption{($i$) Triangular Kernel, Boundary Point, $p=1$}
	\end{subfigure}%
	\begin{subfigure}[b]{0.5\textwidth}
		\includegraphics[height=0.2\textheight,width=0.75\textwidth]{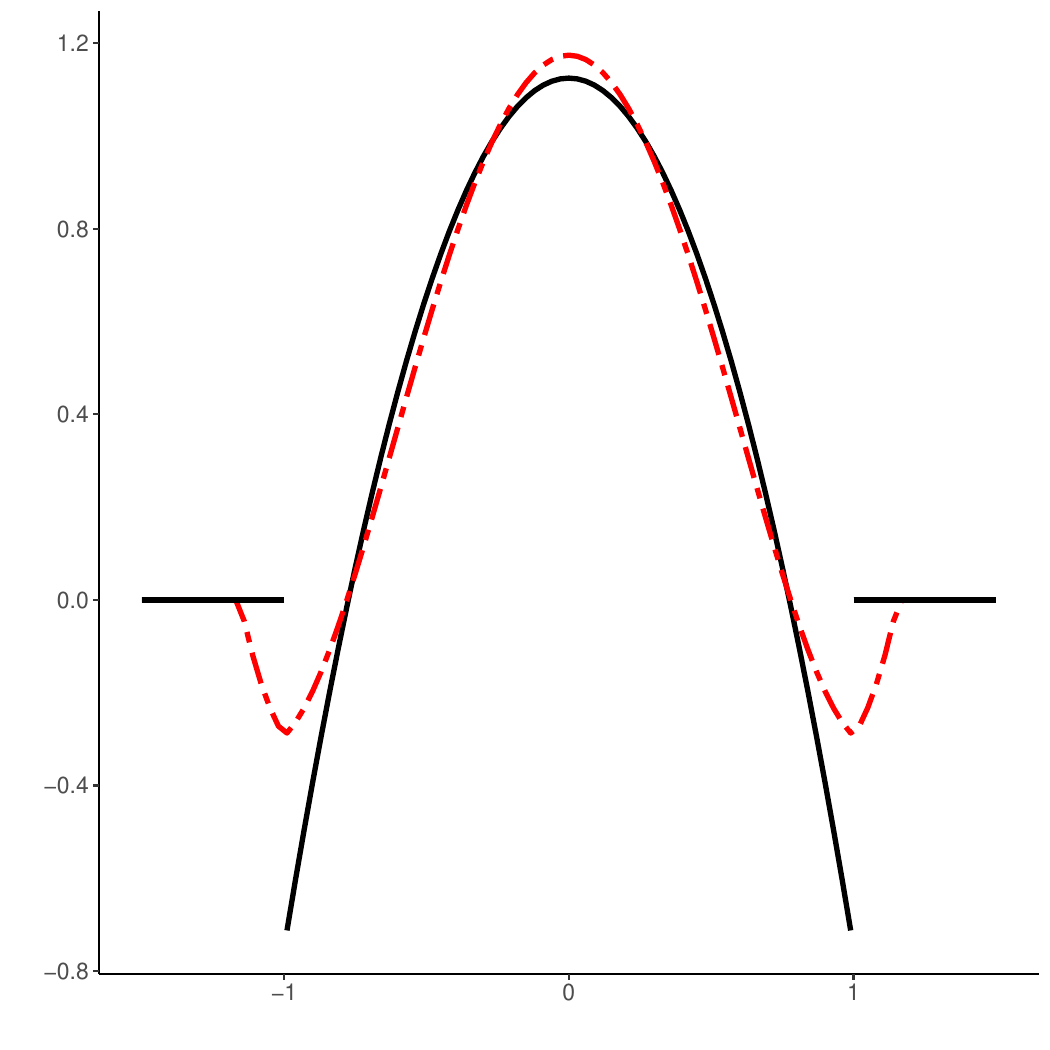}
		\caption{($ii$) Epanechnikov Kernel, Interior Point, $p=1$}
	\end{subfigure}%
	
	\subcaption{(b) $\v=1$}
	\begin{subfigure}[b]{0.5\textwidth}
		\includegraphics[height=0.2\textheight,width=0.75\textwidth]{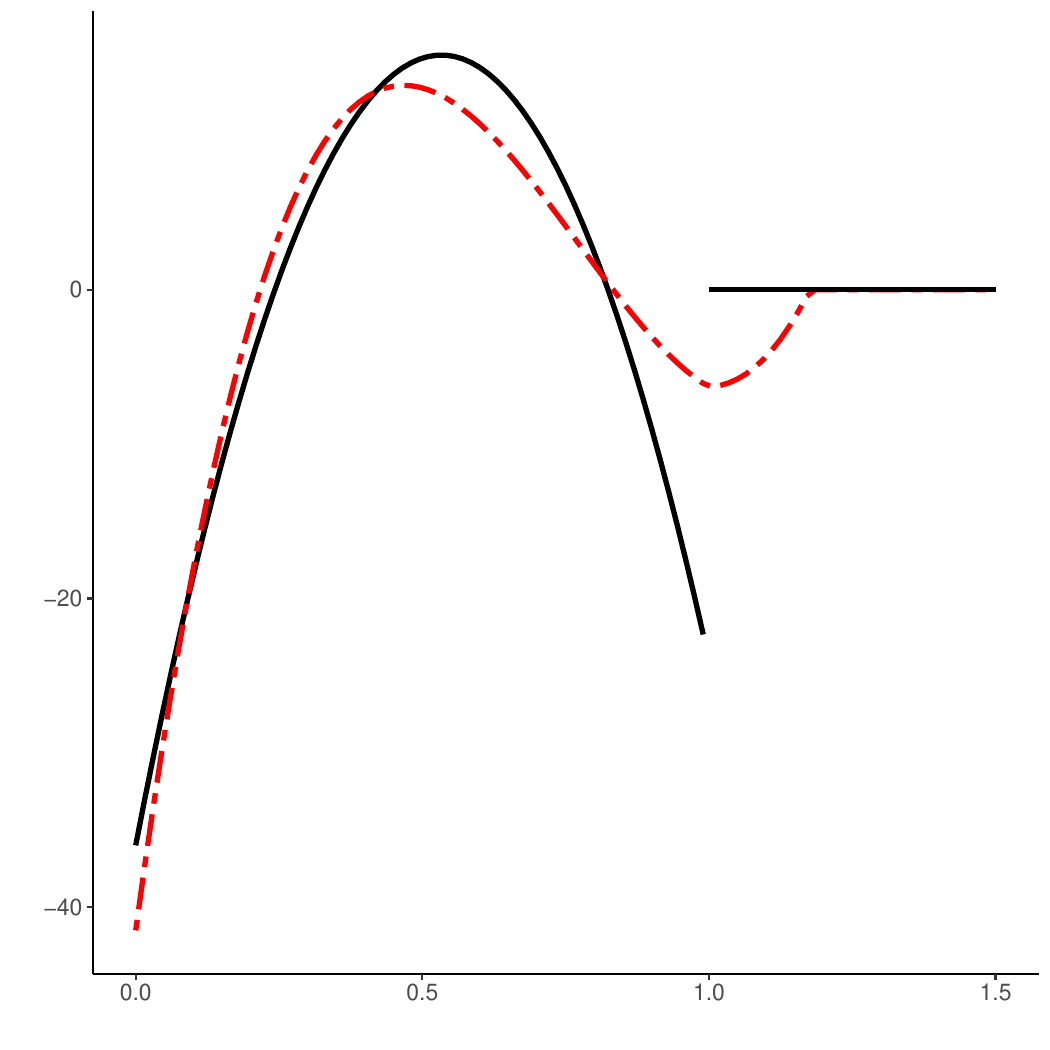}
		\caption{($iii$) Triangular Kernel, Boundary Point, $p=1$}
	\end{subfigure}%
	\begin{subfigure}[b]{0.5\textwidth}
		\includegraphics[height=0.2\textheight,width=0.75\textwidth]{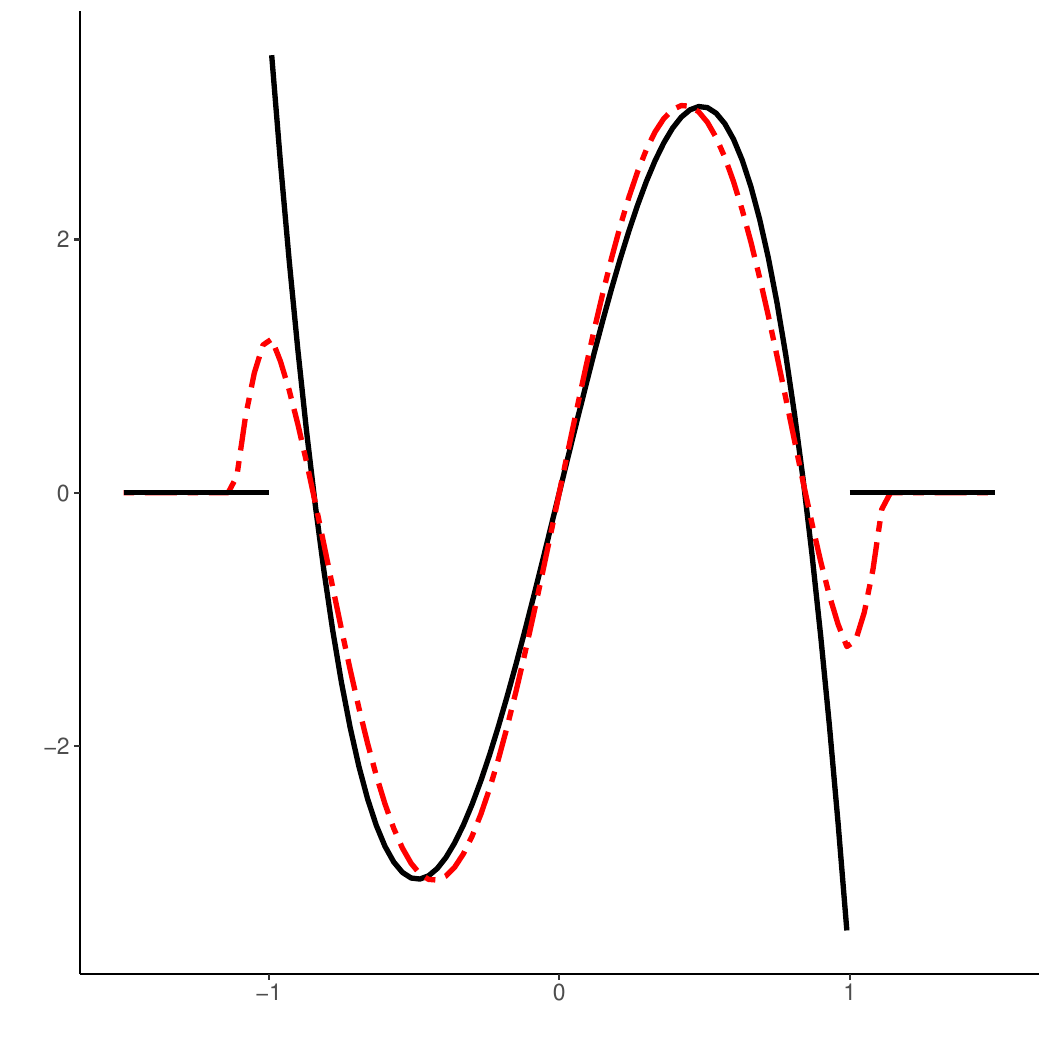}
		\caption{($iv$) Epanechnikov Kernel, Interior Point, $p=2$}
	\end{subfigure}%
	\begin{flushleft}\footnotesize Notes: \blackline $\mathcal{K}^*_{p+1}(u)$, \redline $\mathcal{K}_\RBC(u; K, \rho^*, \v)$\end{flushleft}
\end{figure}

%%%%%%%%%%%%%%%%%%%%%%%%%%%%%%%%%%%%%%%%%%%%%%%%%%%%%%%%%%%%%%%%%
\begin{figure}[!htb]
	\centering
	\caption{Conditional mean function and first derivative, $\mu^{(\v)}(x)$}\label{fig:dgp}	\begin{subfigure}[b]{0.5\textwidth}
		\includegraphics[height=0.25\textheight,width=0.85\textwidth]{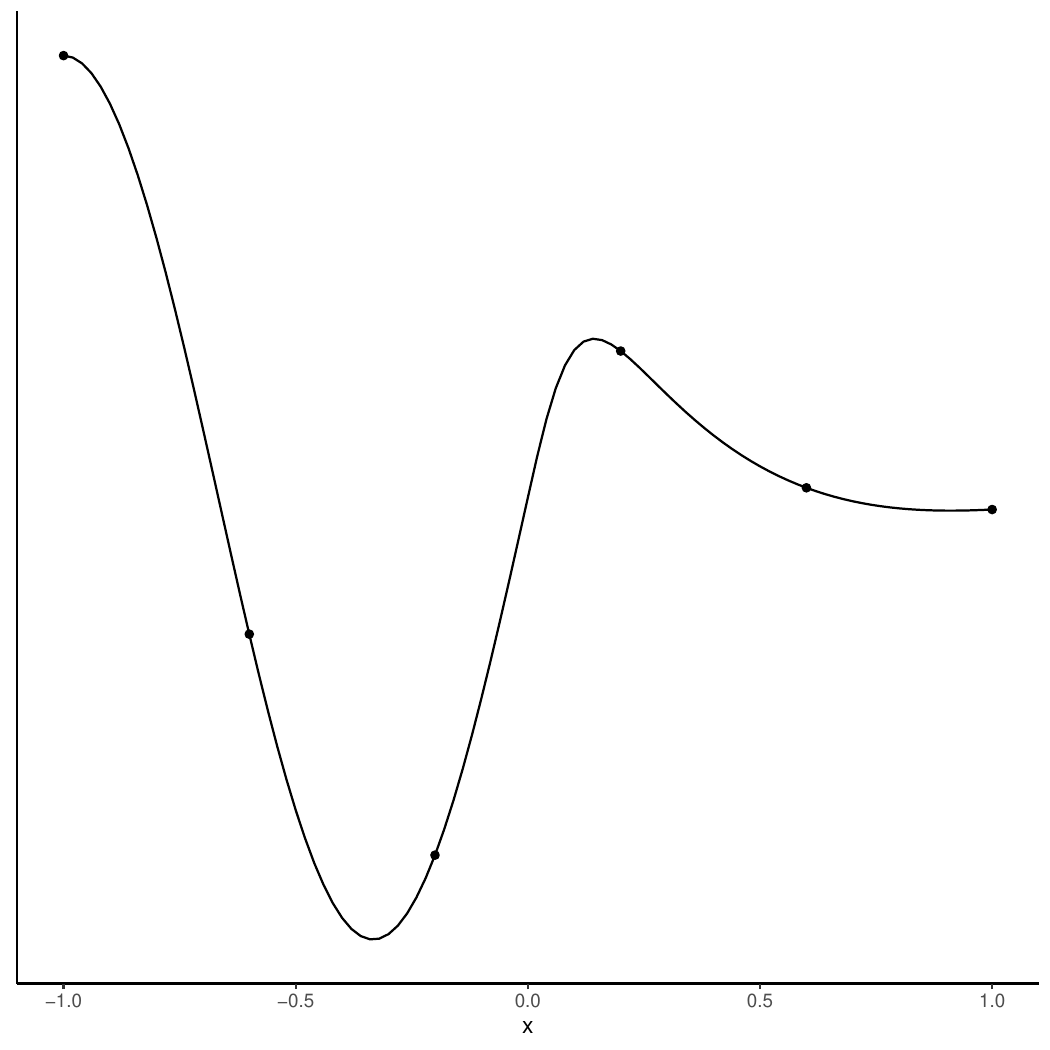}
		\subcaption{$\v=0$}
	\end{subfigure}%
	\begin{subfigure}[b]{0.5\textwidth}
		\includegraphics[height=0.25\textheight,width=0.85\textwidth]{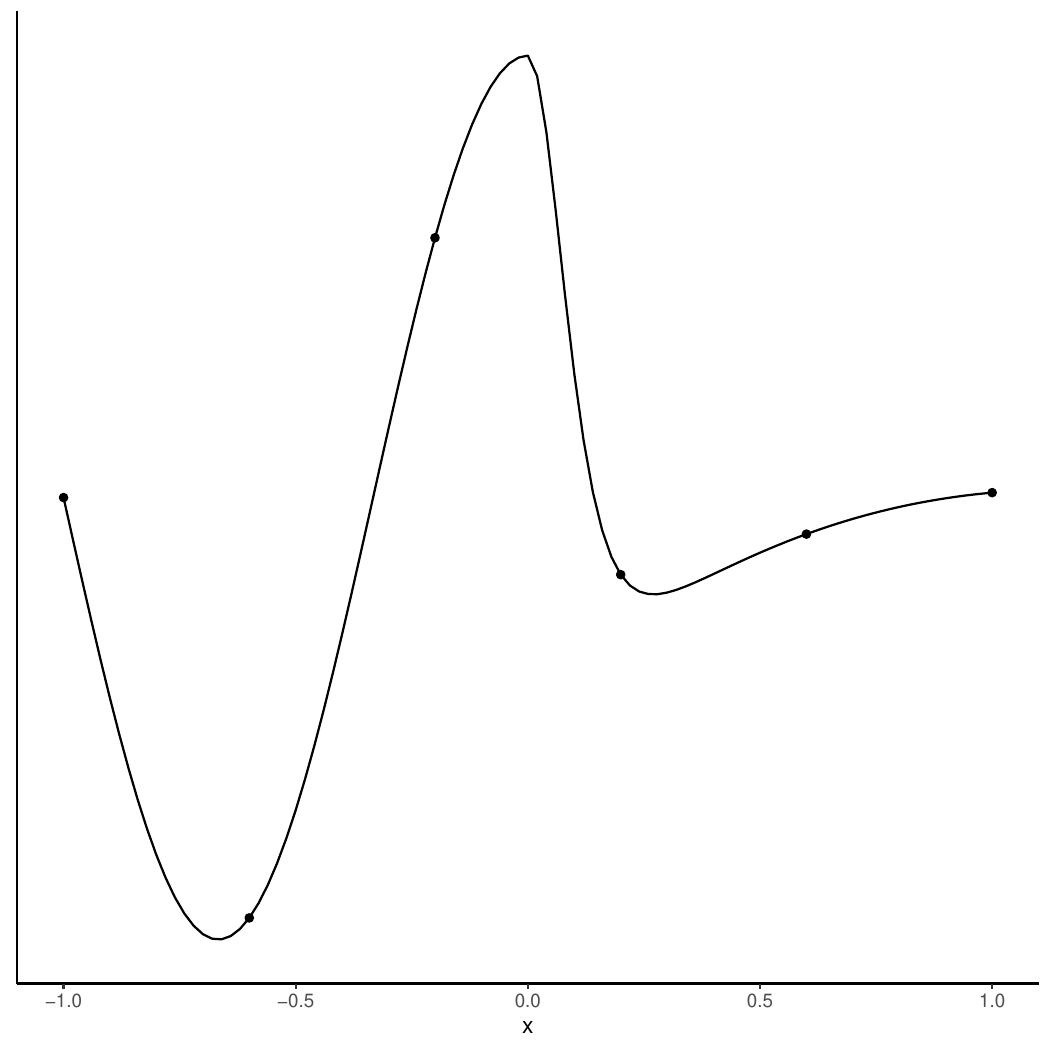}
		\subcaption{$\v=1$}
	\end{subfigure}%
\end{figure}

%%%%%%%%%%%%%%%%%%%%%%%%%%%%%%%%%%%%%%%%%%%%%%%%%%%%%%%%%%%%%%%%%

\clearpage

\begin{figure}[!htb]
	\captionsetup[subfigure]{labelformat=empty}
	\centering
	\caption{Empirical Coverage for 95\% Confidence Intervals, $\v=0$}
	\label{fig:sim_ec0}
	\subcaption{(a) $\hat{h}_{\RBC}$}
	\begin{subfigure}[b]{0.3\textwidth}
		\includegraphics[width=\linewidth]{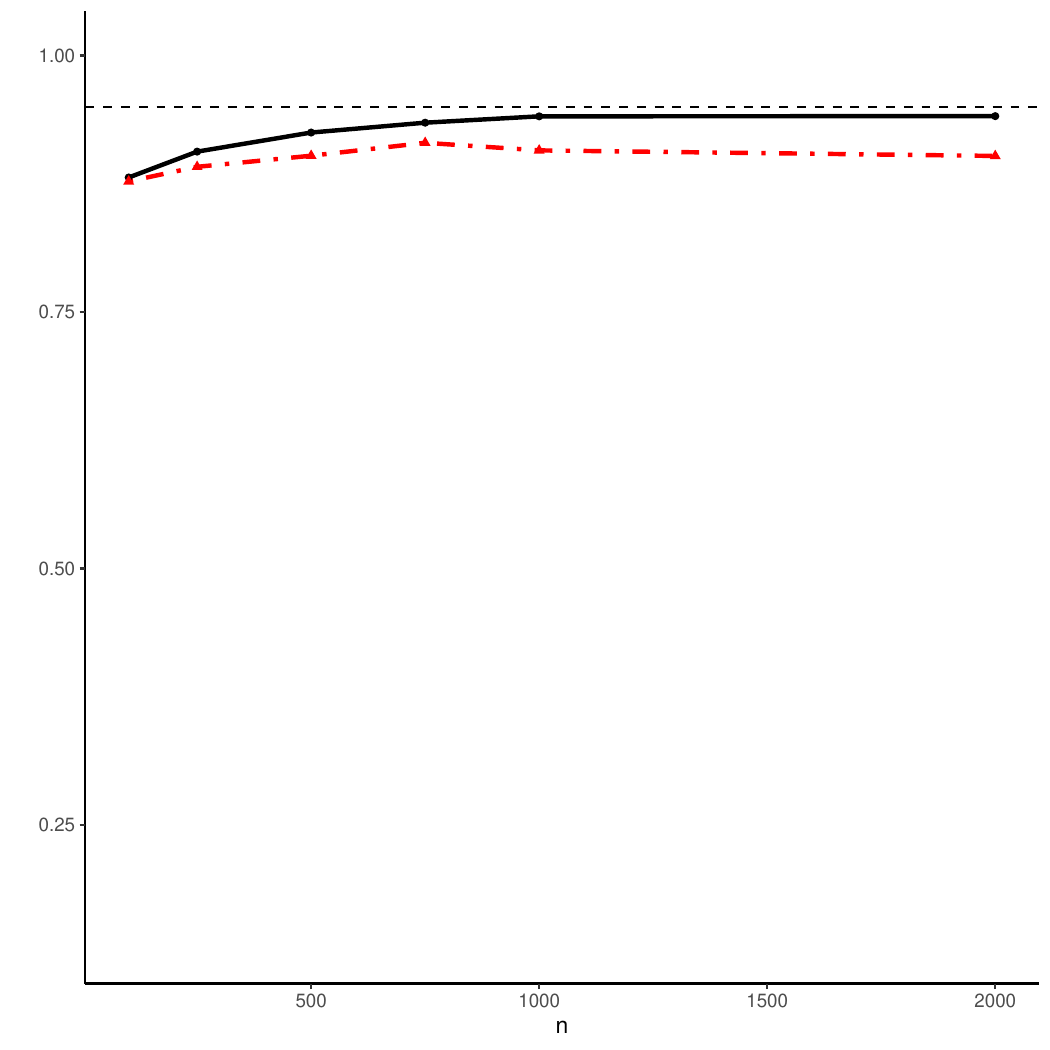}
		\subcaption{$(i)$ $\x=-1$}
	\end{subfigure}%\hfill
	\begin{subfigure}[b]{0.3\textwidth}
		\includegraphics[width=\linewidth]{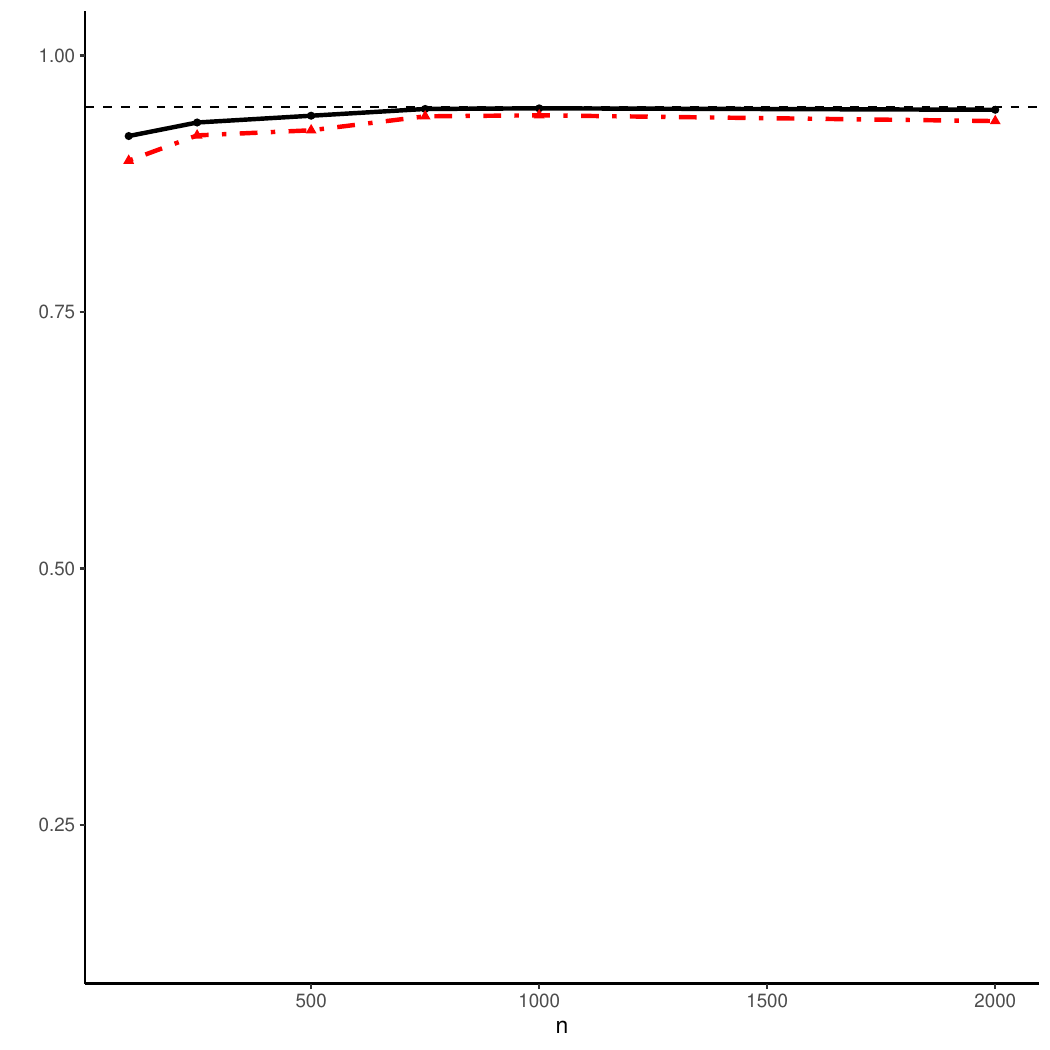}
		\subcaption{$(ii)$ $\x=-0.6$}
	\end{subfigure}%\hfill
	\begin{subfigure}[b]{0.3\textwidth}
		\includegraphics[width=\linewidth]{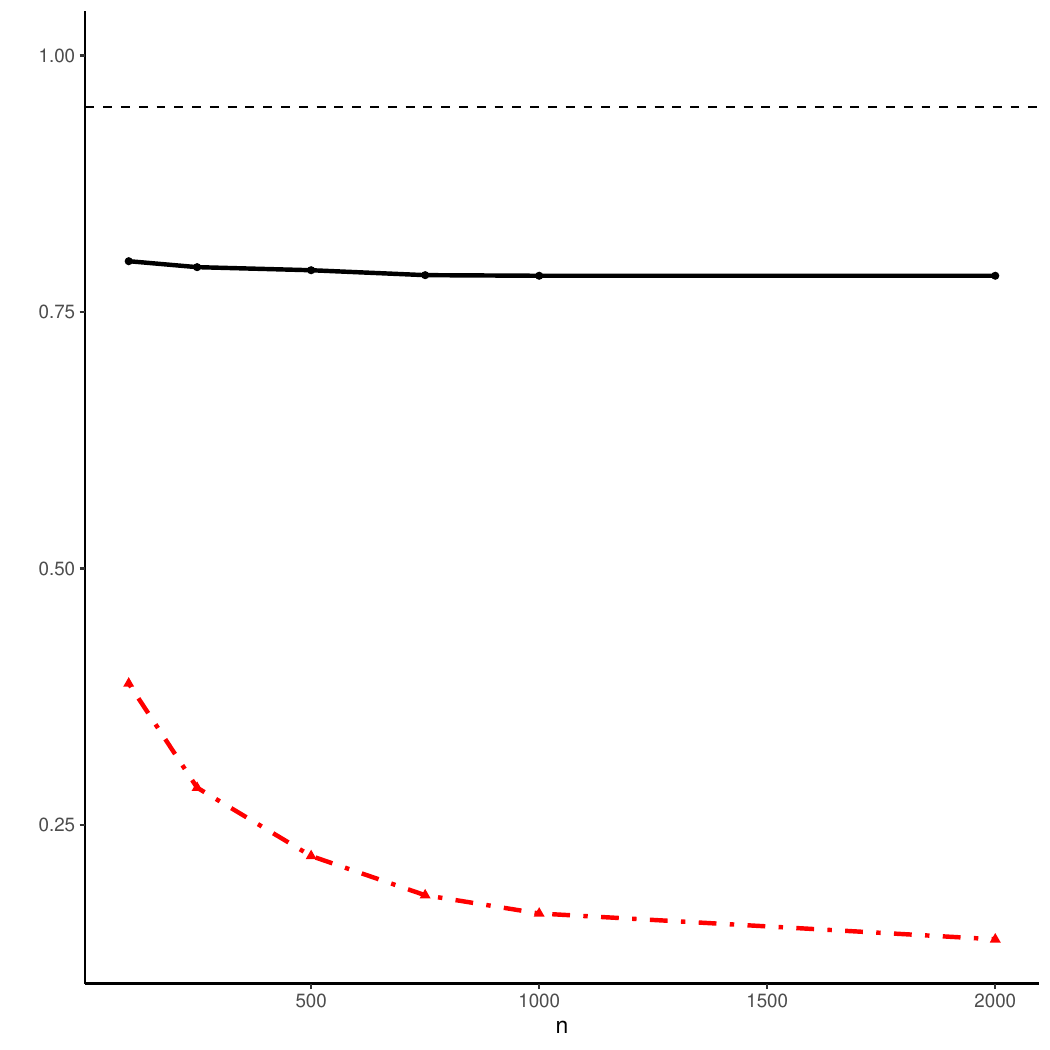}
		\subcaption{$(iii)$ $\x=-0.2$}
	\end{subfigure}
	
	\subcaption{(b) $\hat{h}_{\US}$}
	\begin{subfigure}[b]{0.3\textwidth}
		\includegraphics[width=\linewidth]{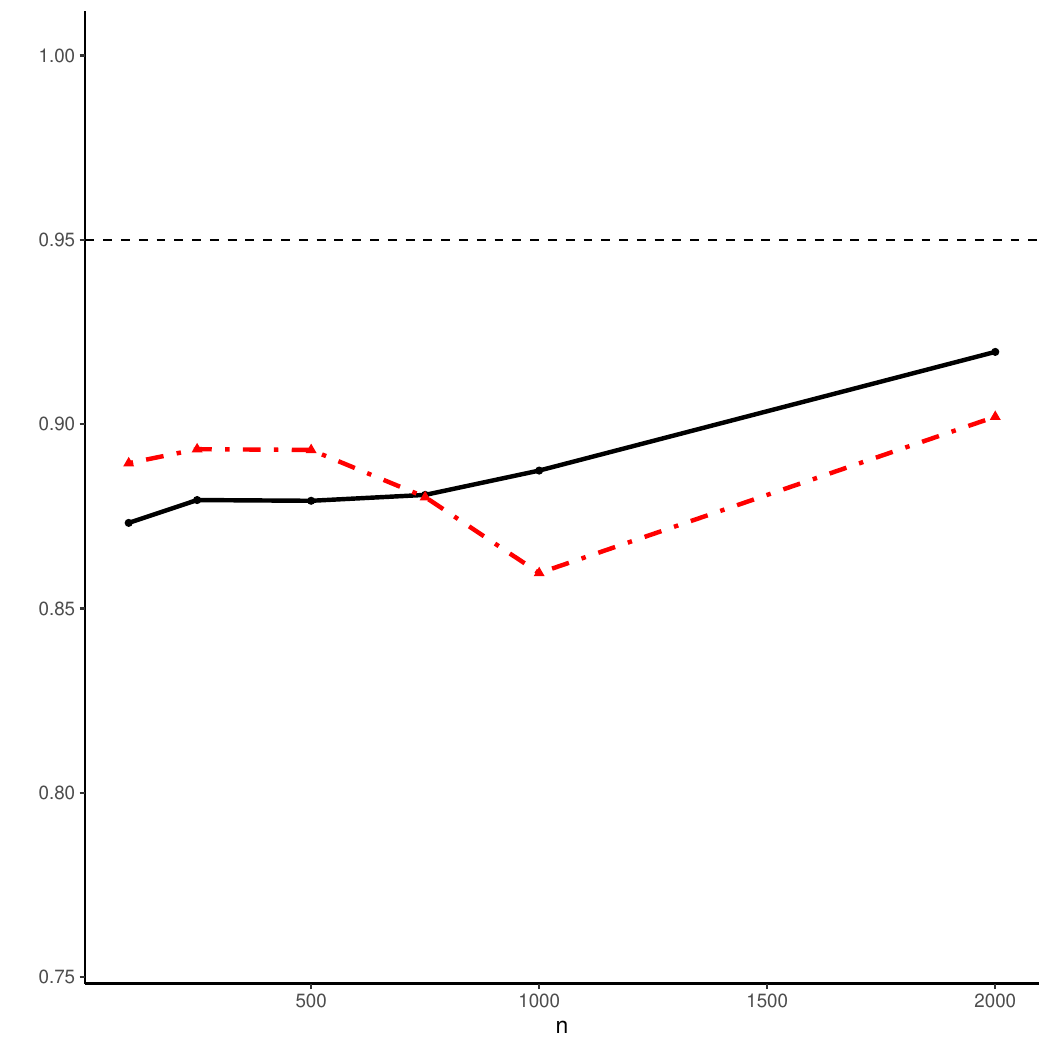}
		\subcaption{$(iv)$ $\x=-1$}
	\end{subfigure}%\hfill
	\begin{subfigure}[b]{0.3\textwidth}
		\includegraphics[width=\linewidth]{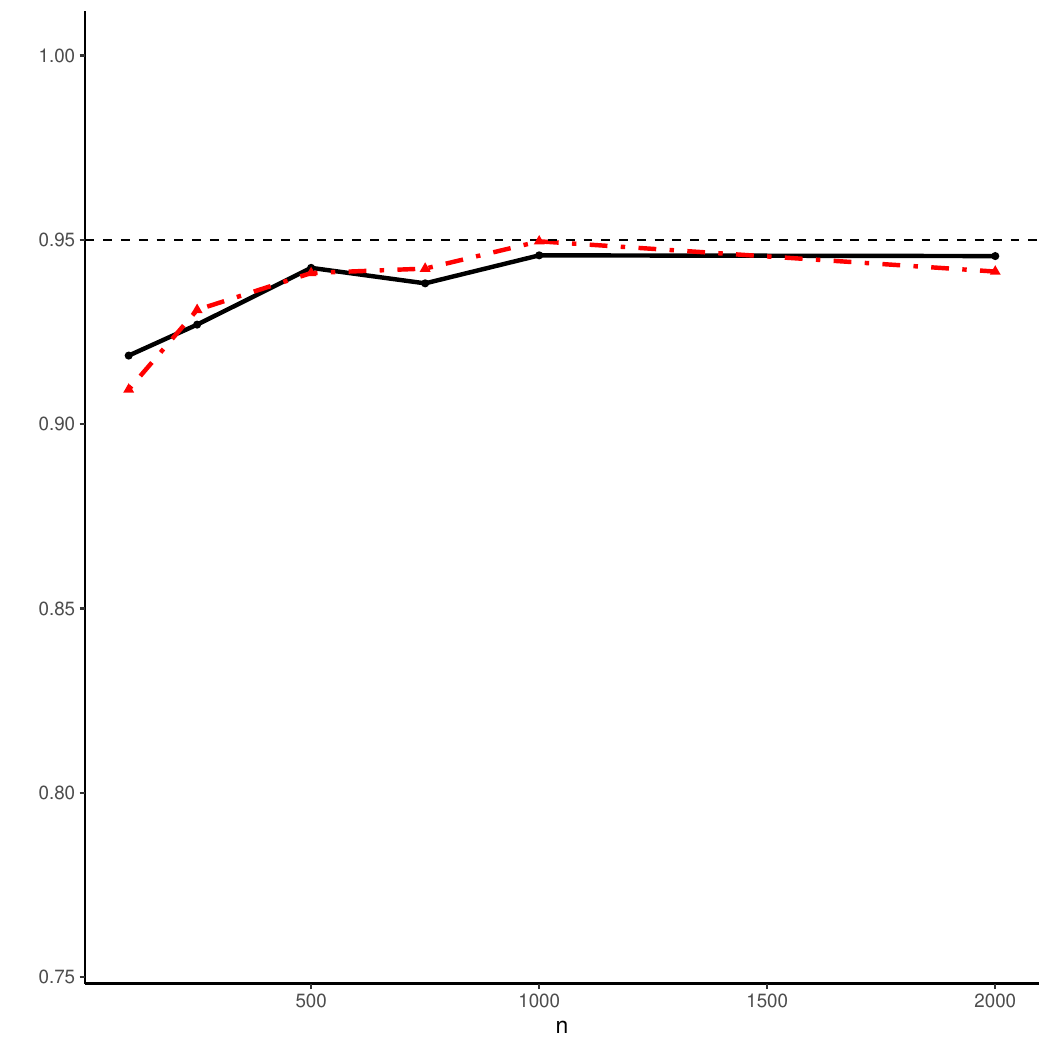}
		\subcaption{$(v)$ $\x=-0.6$}
	\end{subfigure}%\hfill
	\begin{subfigure}[b]{0.3\textwidth}
		\includegraphics[width=\linewidth]{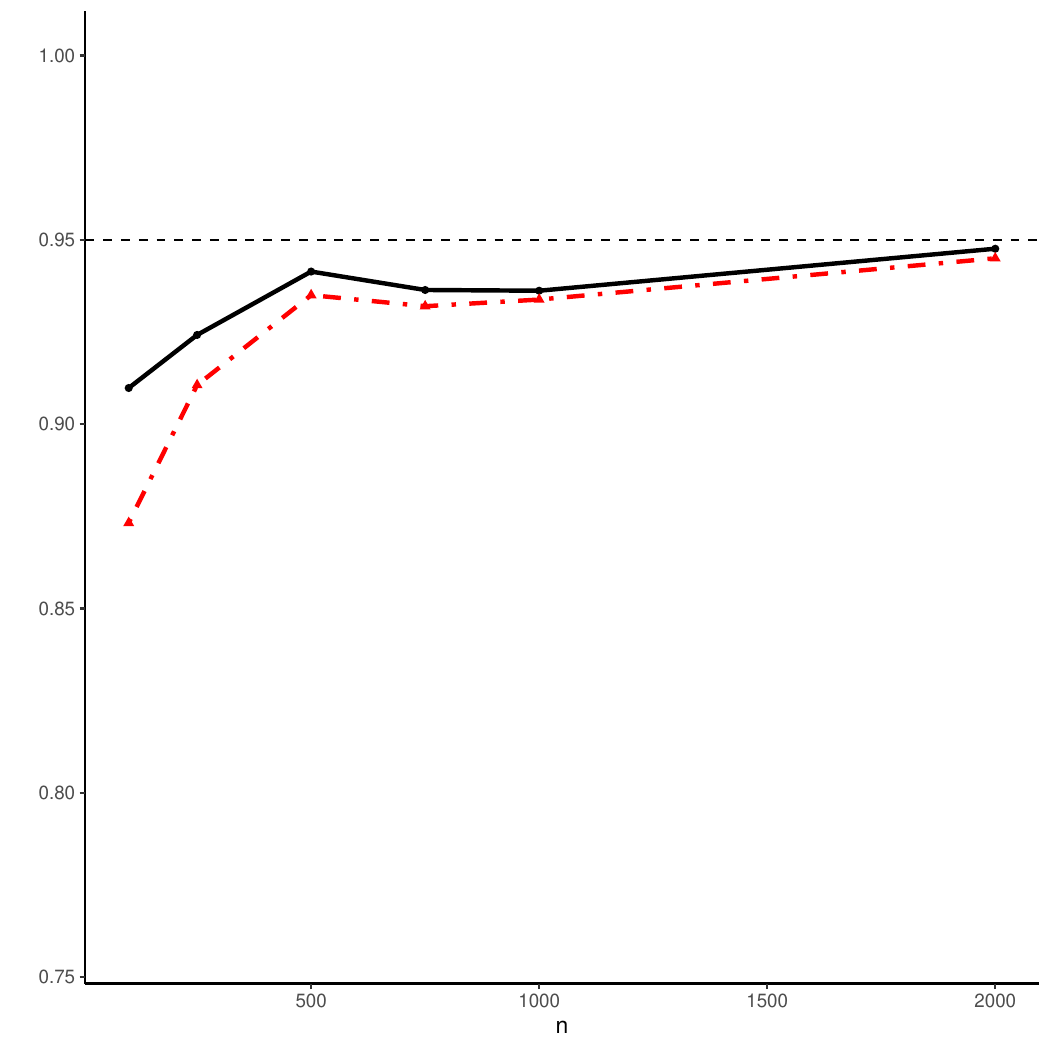}
		\subcaption{$(vi)$ $\x=-0.2$}
	\end{subfigure}
	
	\subcaption{(c) $\hat{h}_{\MSE}$}
	\begin{subfigure}[b]{0.3\textwidth}
		\includegraphics[width=\linewidth]{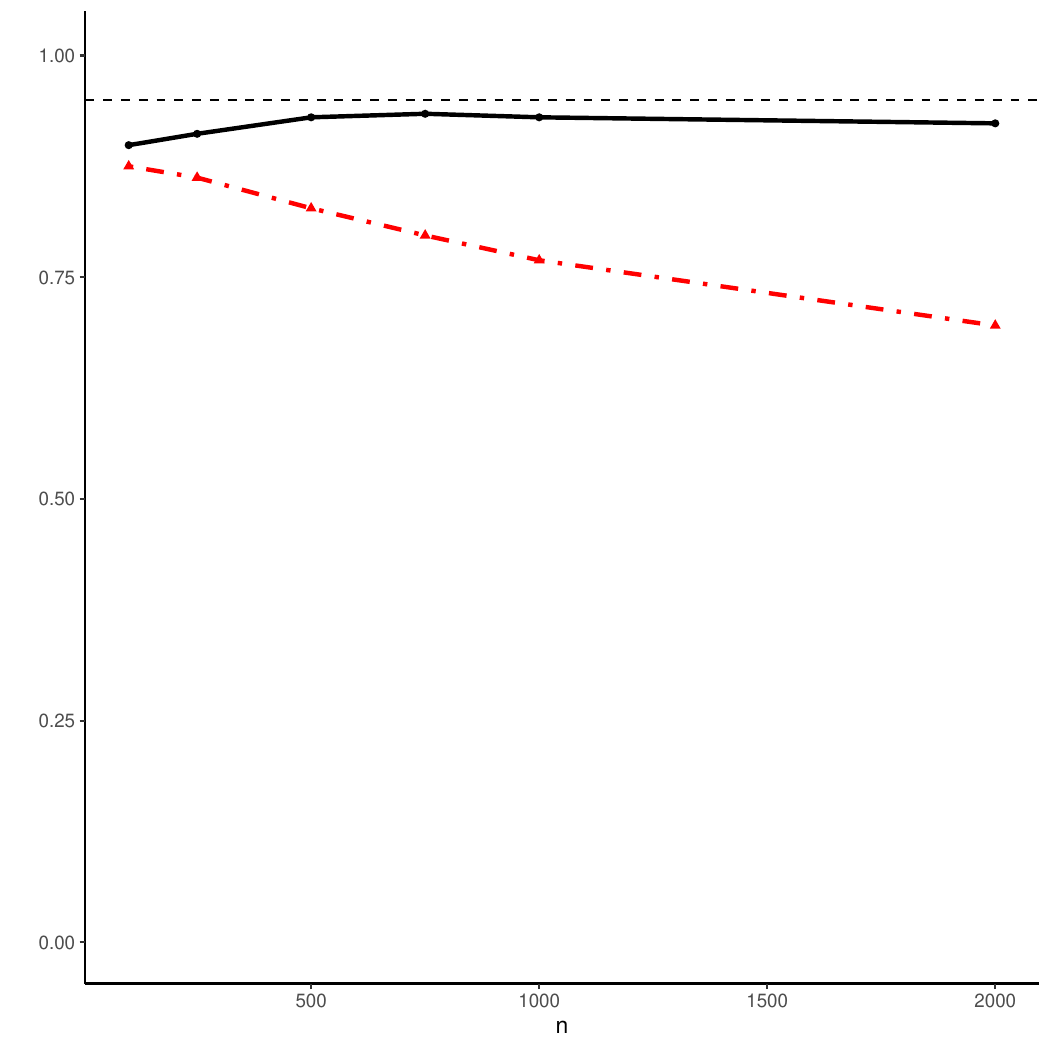}
		\subcaption{$(vii)$ $\x=-1$}
	\end{subfigure}%\hfill
	\begin{subfigure}[b]{0.3\textwidth}
		\includegraphics[width=\linewidth]{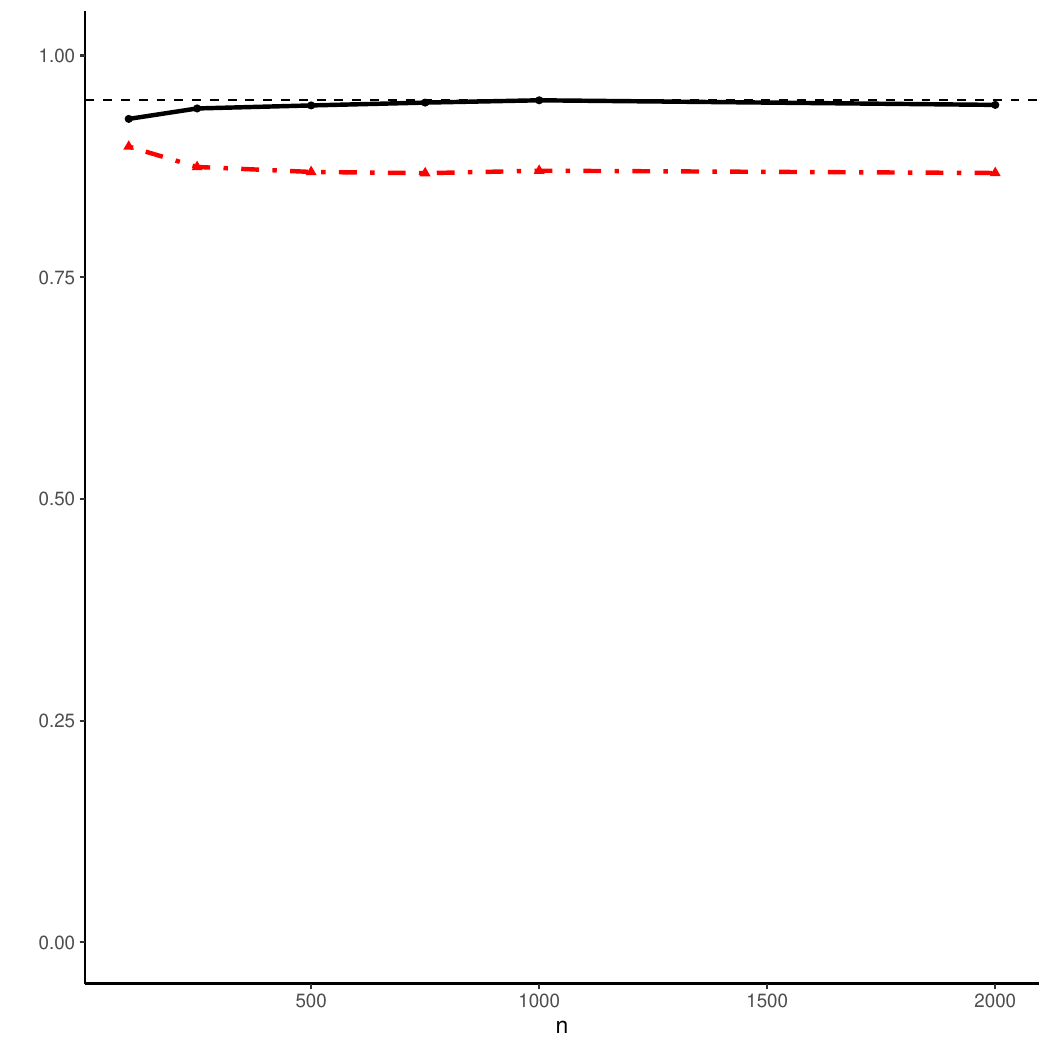}
		\subcaption{$(viii)$ $\x=-0.6$}
	\end{subfigure}%\hfill
	\begin{subfigure}[b]{0.3\textwidth}
		\includegraphics[width=\linewidth]{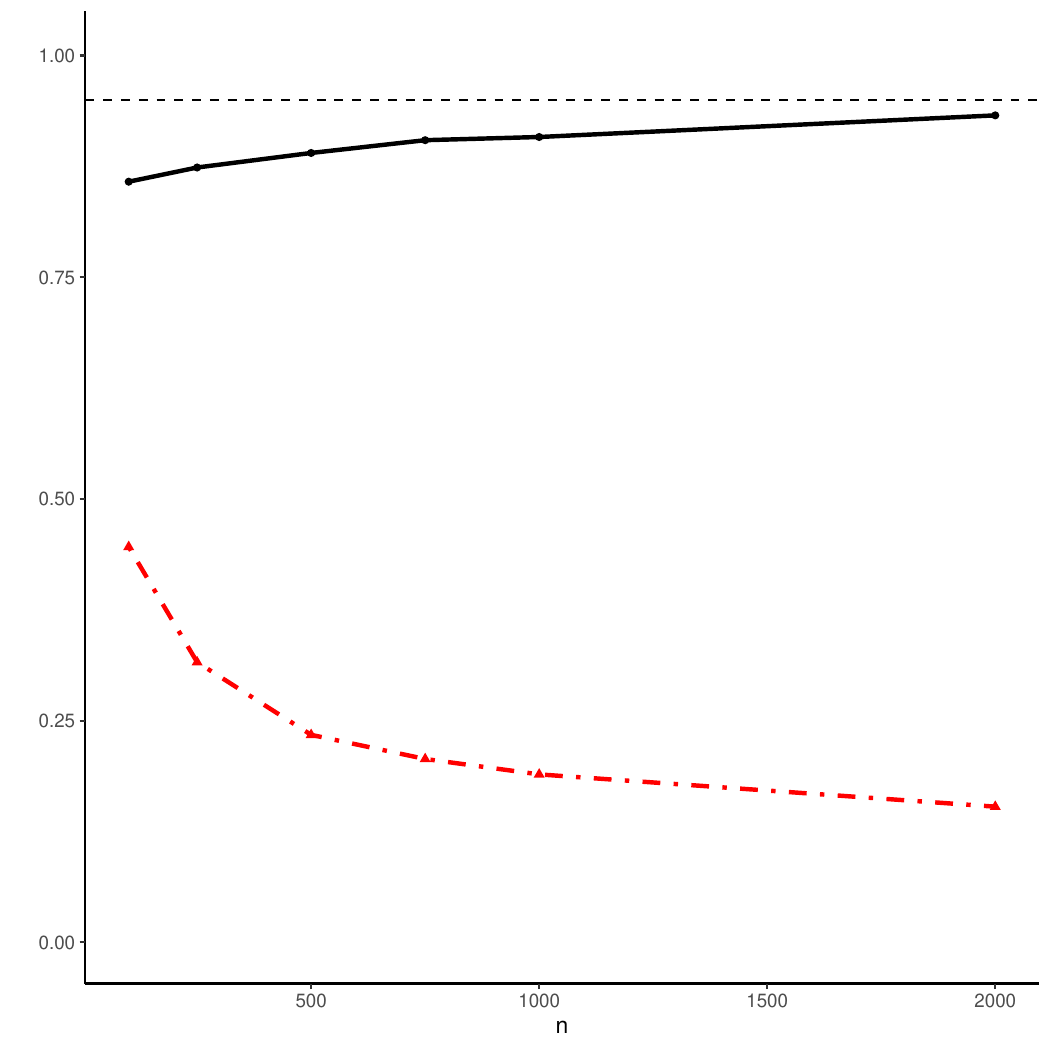}
		\subcaption{$(ix)$ $\x=-0.2$}
	\end{subfigure}%
	\begin{flushleft}\footnotesize Notes: \blackline Robust Bias Correction, \redline Undersmoothing; Epanechnikov Kernel
	\end{flushleft}
\end{figure}

\clearpage

\begin{figure}[!htb]
	\captionsetup[subfigure]{labelformat=empty}
	\centering
	\caption{Empirical Coverage for 95\% Confidence Intervals, $\v=1$}
	\label{fig:sim_ec1}
	\subcaption{(a) $\hat{h}_{\RBC}$}
	\begin{subfigure}[b]{0.3\textwidth}
		\includegraphics[width=\linewidth]{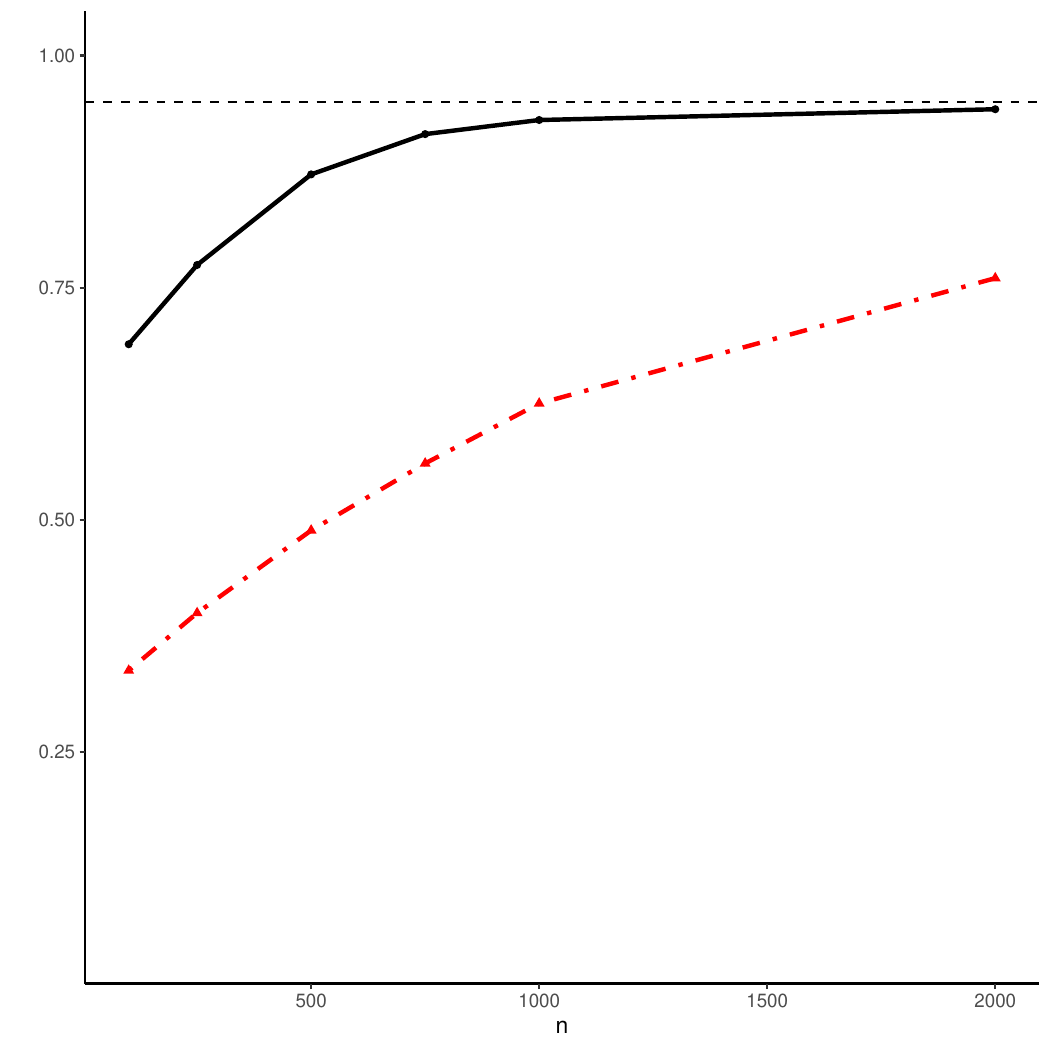}
		\subcaption{$(i)$ $\x=-1$}
	\end{subfigure}%\hfill
	\begin{subfigure}[b]{0.3\textwidth}
		\includegraphics[width=\linewidth]{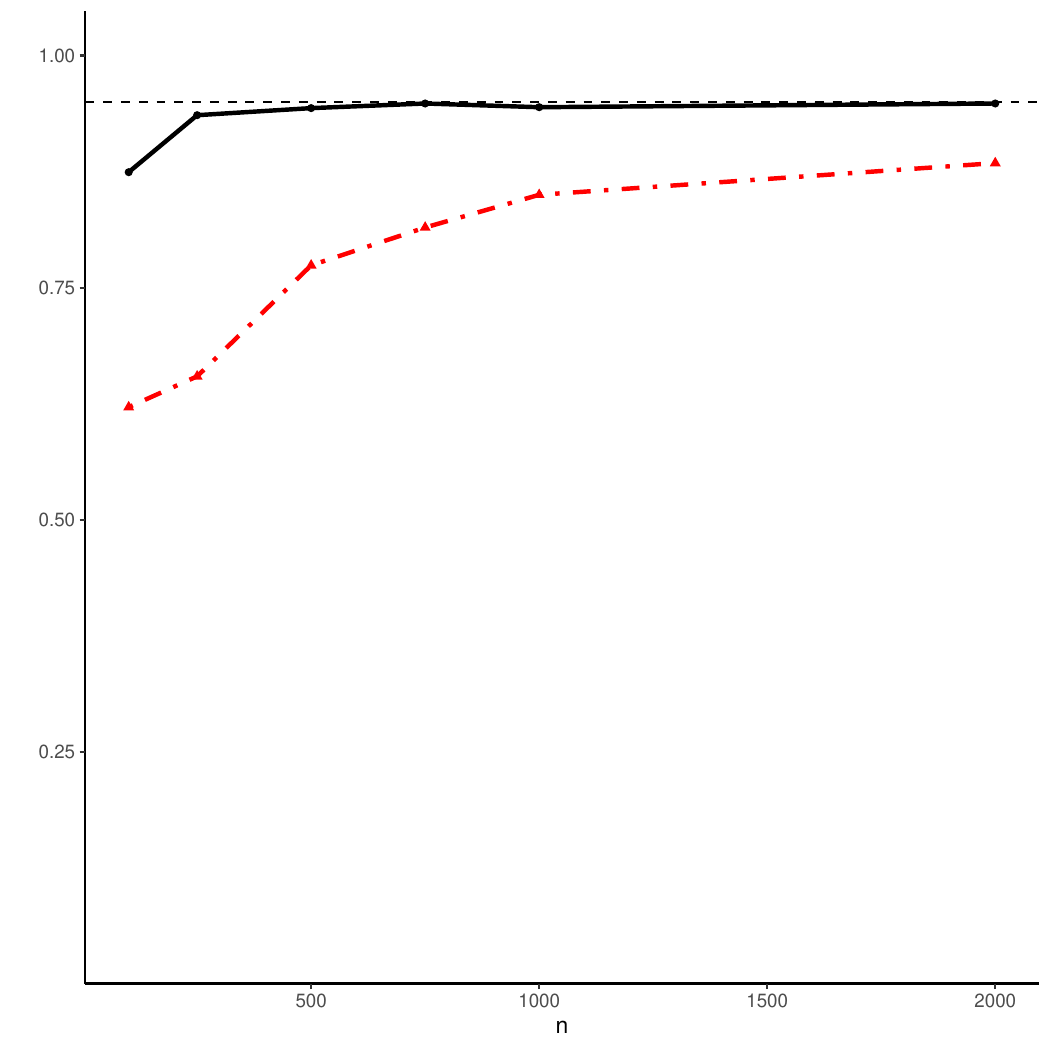}
		\subcaption{$(ii)$ $\x=-0.6$}
	\end{subfigure}%\hfill
	\begin{subfigure}[b]{0.3\textwidth}
		\includegraphics[width=\linewidth]{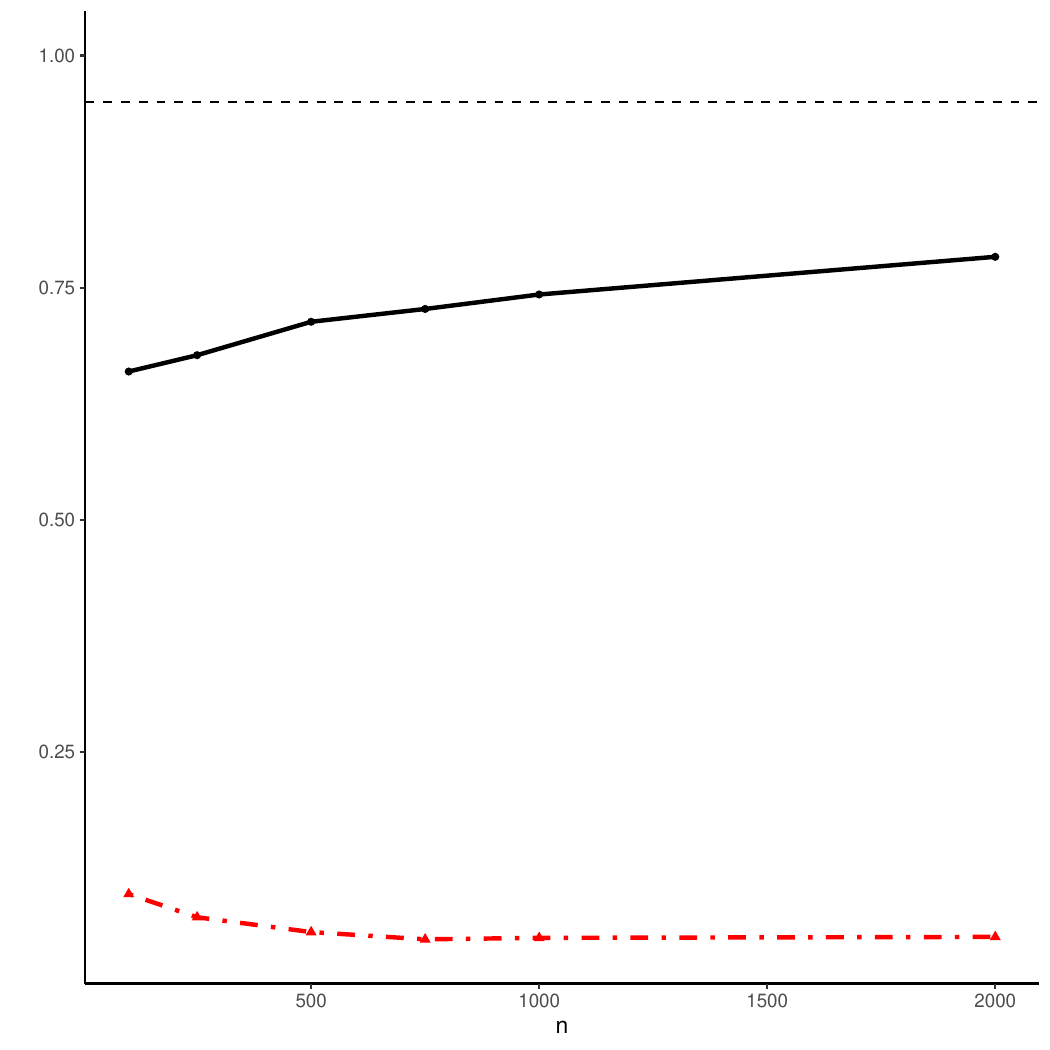}
		\subcaption{$(iii)$ $\x=-0.2$}
	\end{subfigure}
	
	\subcaption{(b) $\hat{h}_{\US}$}
	\begin{subfigure}[b]{0.3\textwidth}
		\includegraphics[width=\linewidth]{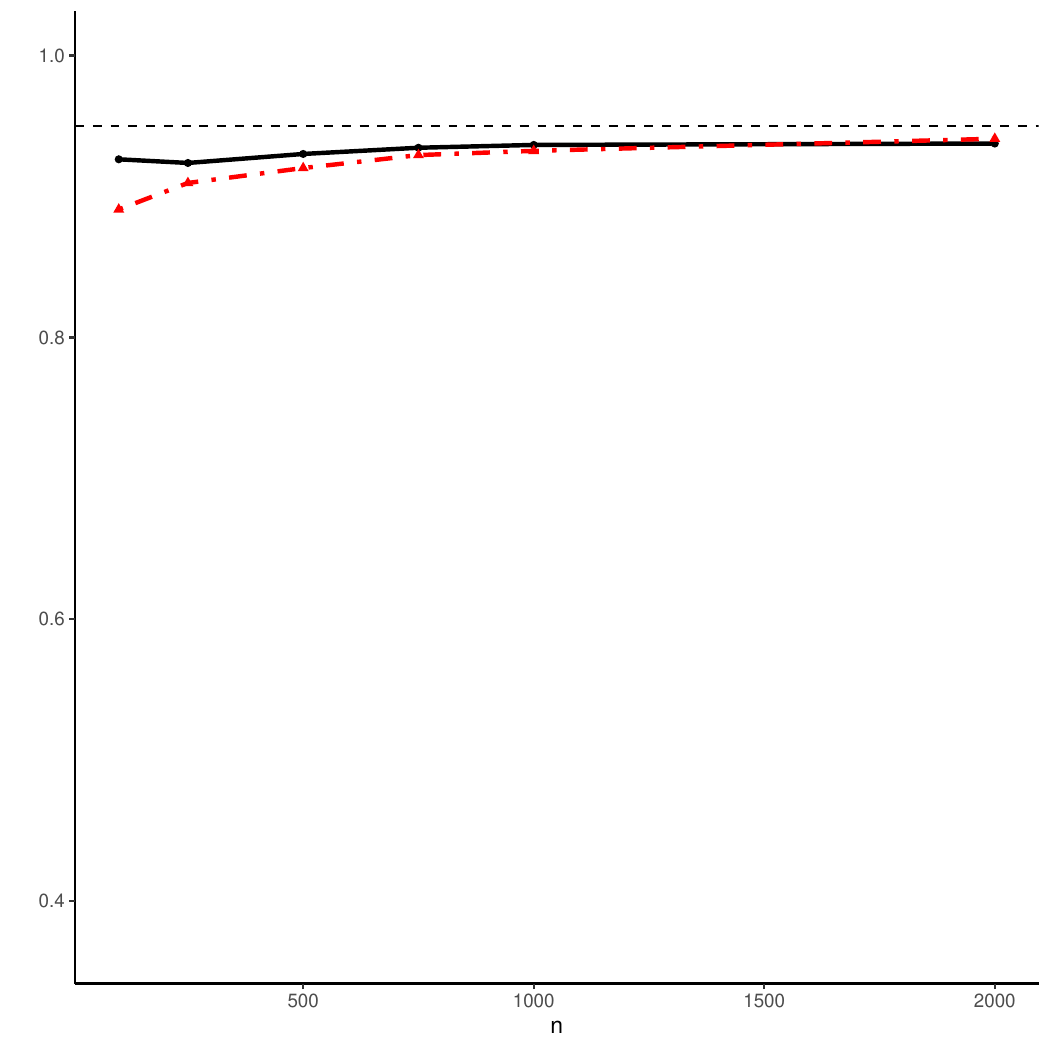}
		\subcaption{$(iv)$ $\x=-1$}
	\end{subfigure}%\hfill
	\begin{subfigure}[b]{0.3\textwidth}
		\includegraphics[width=\linewidth]{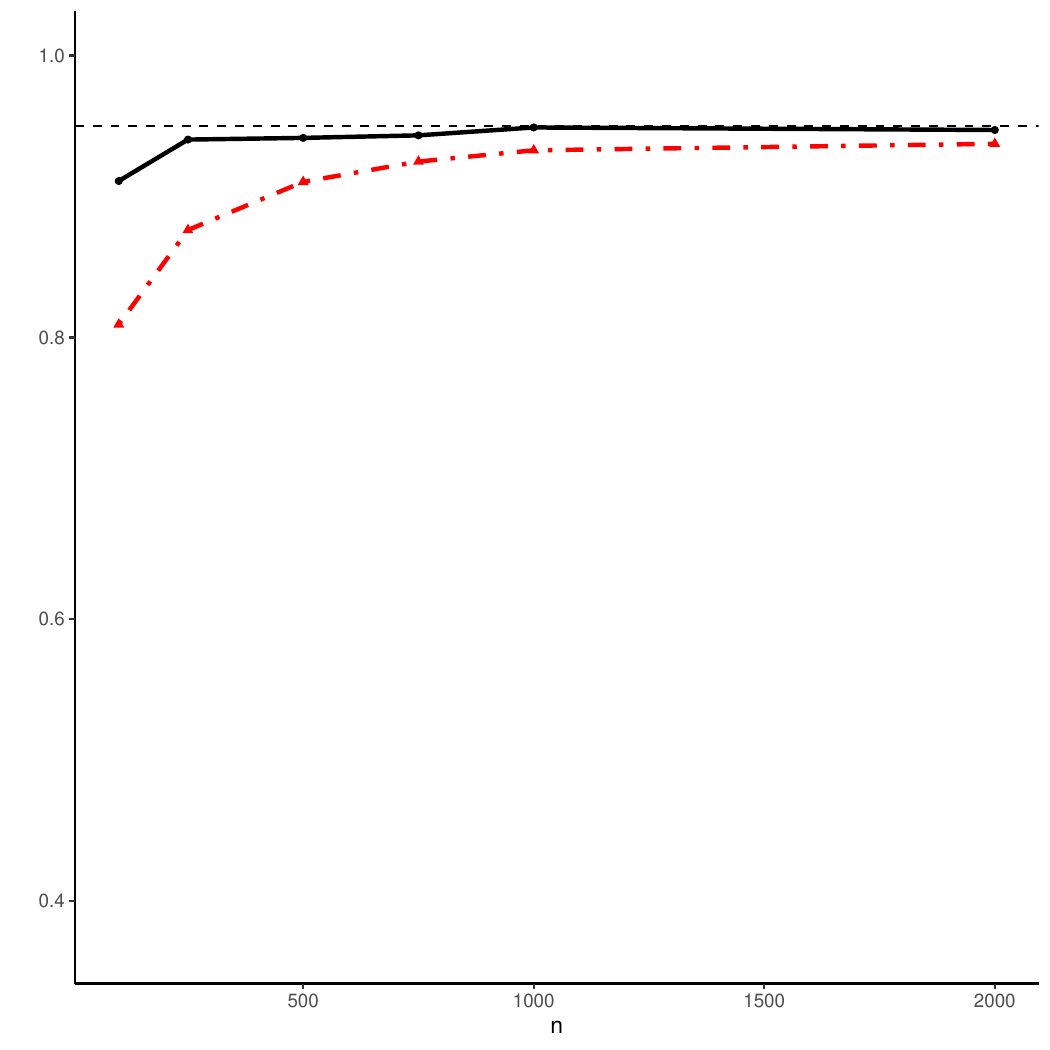}
		\subcaption{$(v)$ $\x=-0.6$}
	\end{subfigure}%\hfill
	\begin{subfigure}[b]{0.3\textwidth}
		\includegraphics[width=\linewidth]{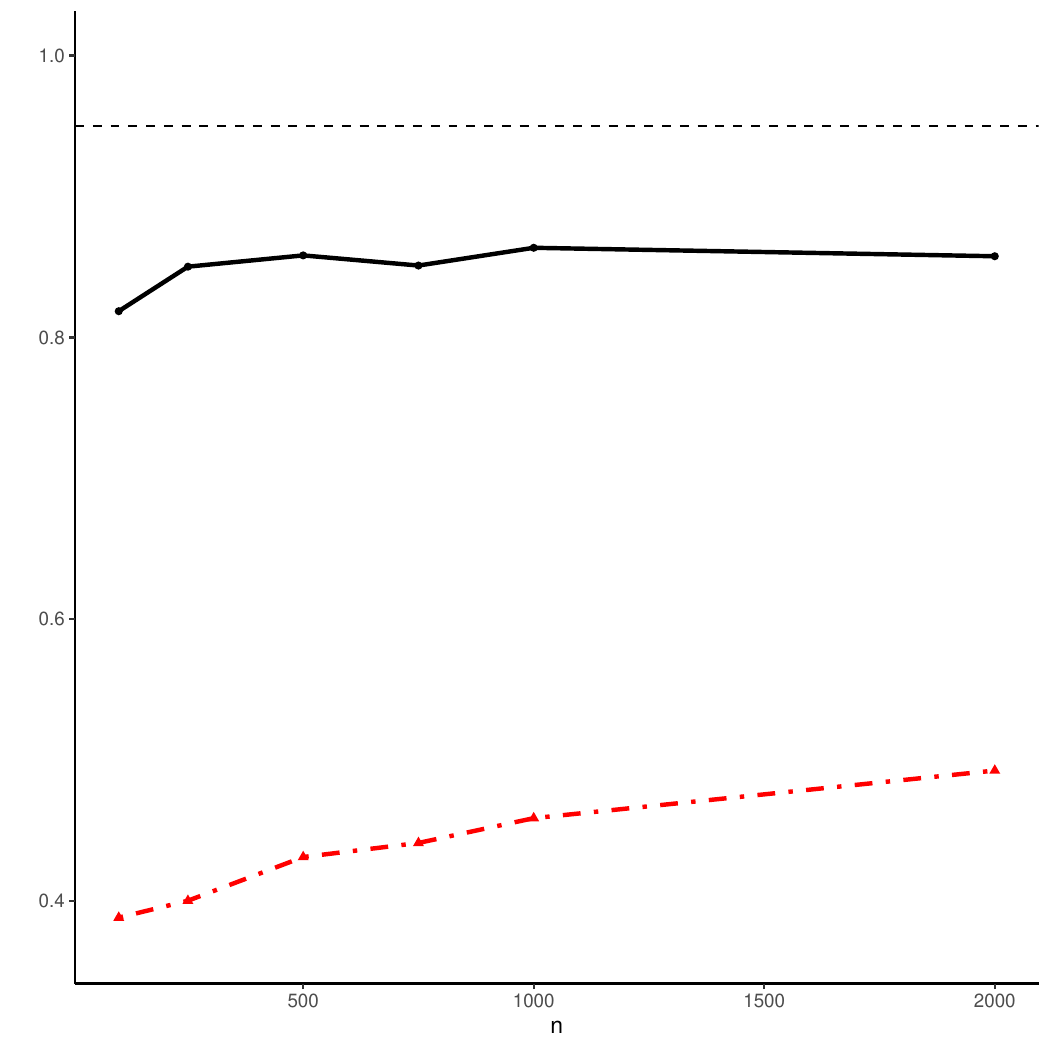}
		\subcaption{$(vi)$ $\x=-0.2$}
	\end{subfigure}
	
	\subcaption{(c) $\hat{h}_{\MSE}$}
	\begin{subfigure}[b]{0.3\textwidth}
		\includegraphics[width=\linewidth]{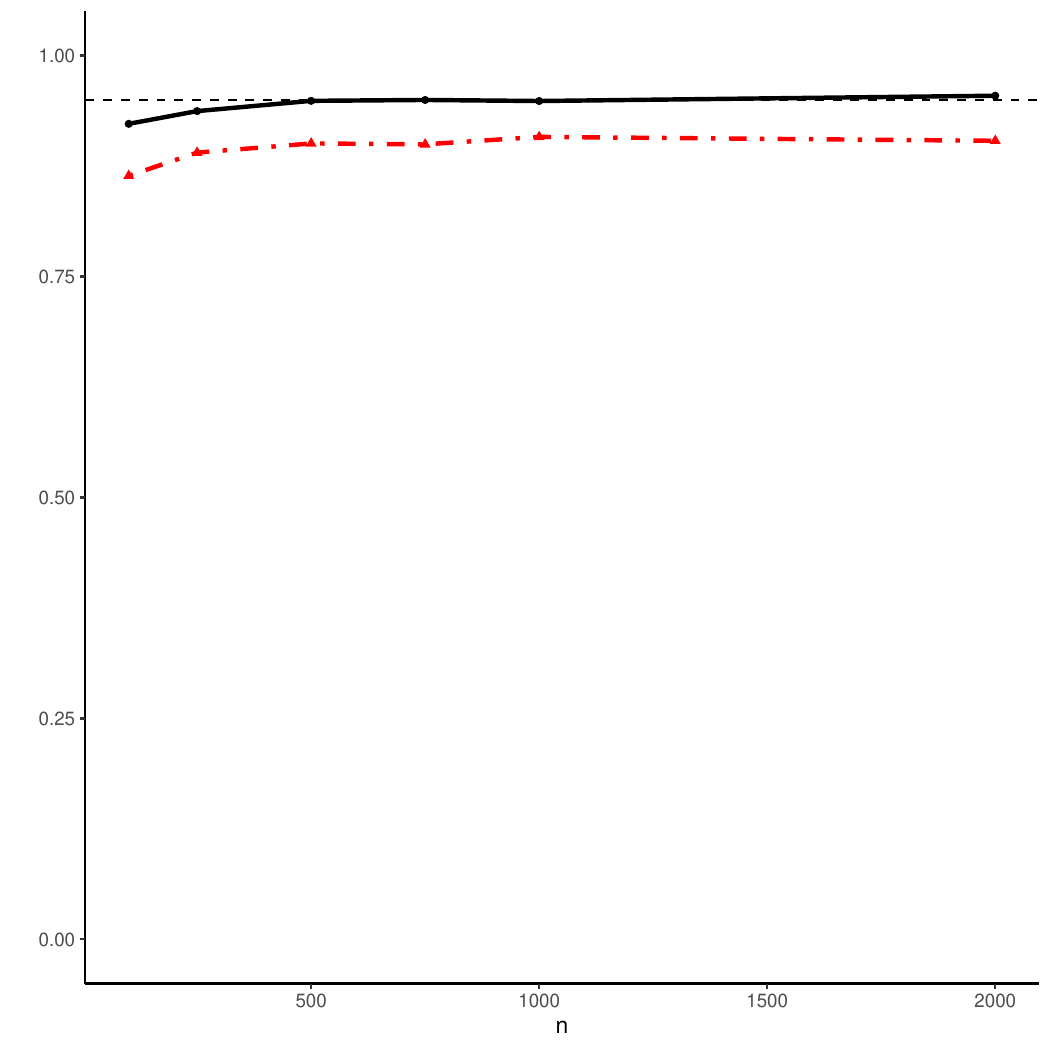}
		\subcaption{$(vii)$ $\x=-1$}
	\end{subfigure}%\hfill
	\begin{subfigure}[b]{0.3\textwidth}
		\includegraphics[width=\linewidth]{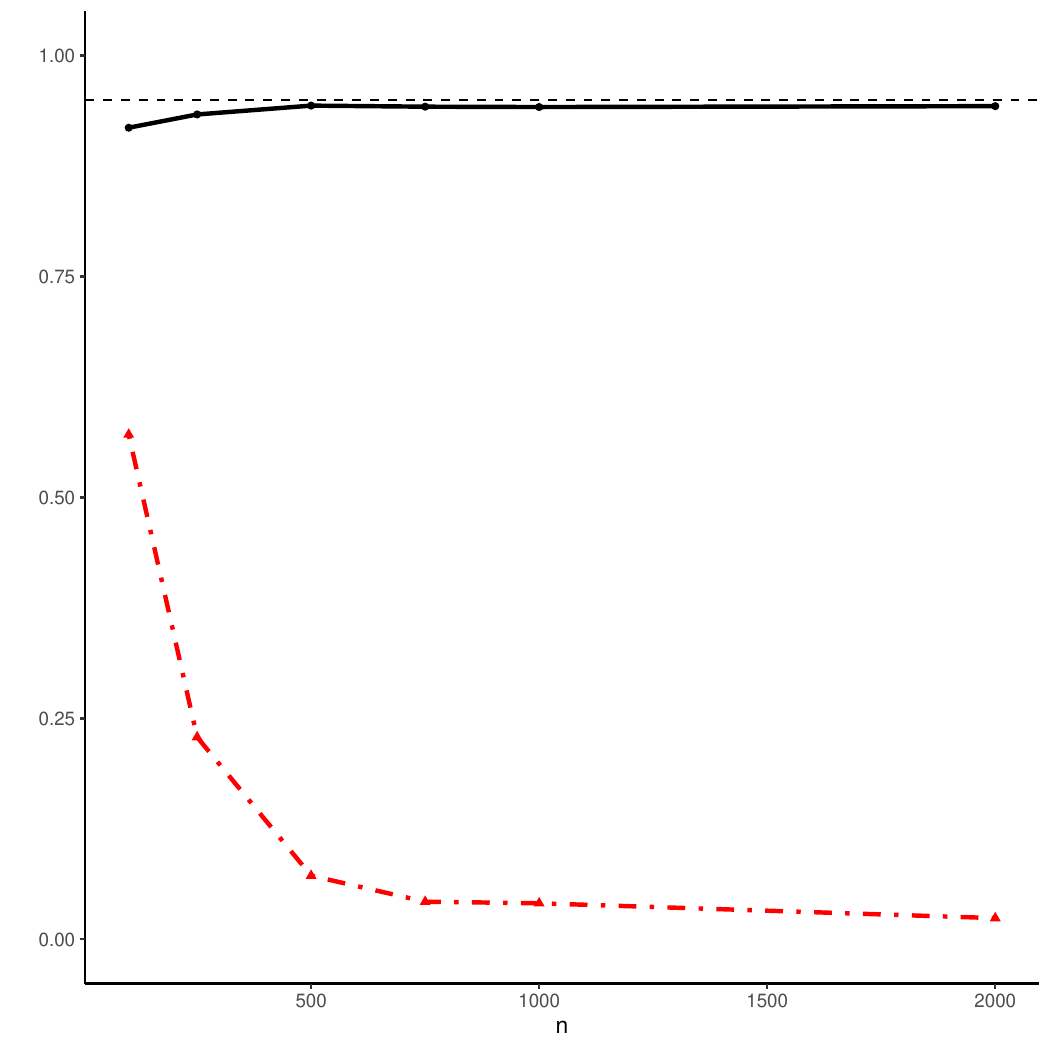}
		\subcaption{$(viii)$ $\x=-0.6$}
	\end{subfigure}%\hfill
	\begin{subfigure}[b]{0.3\textwidth}
		\includegraphics[width=\linewidth]{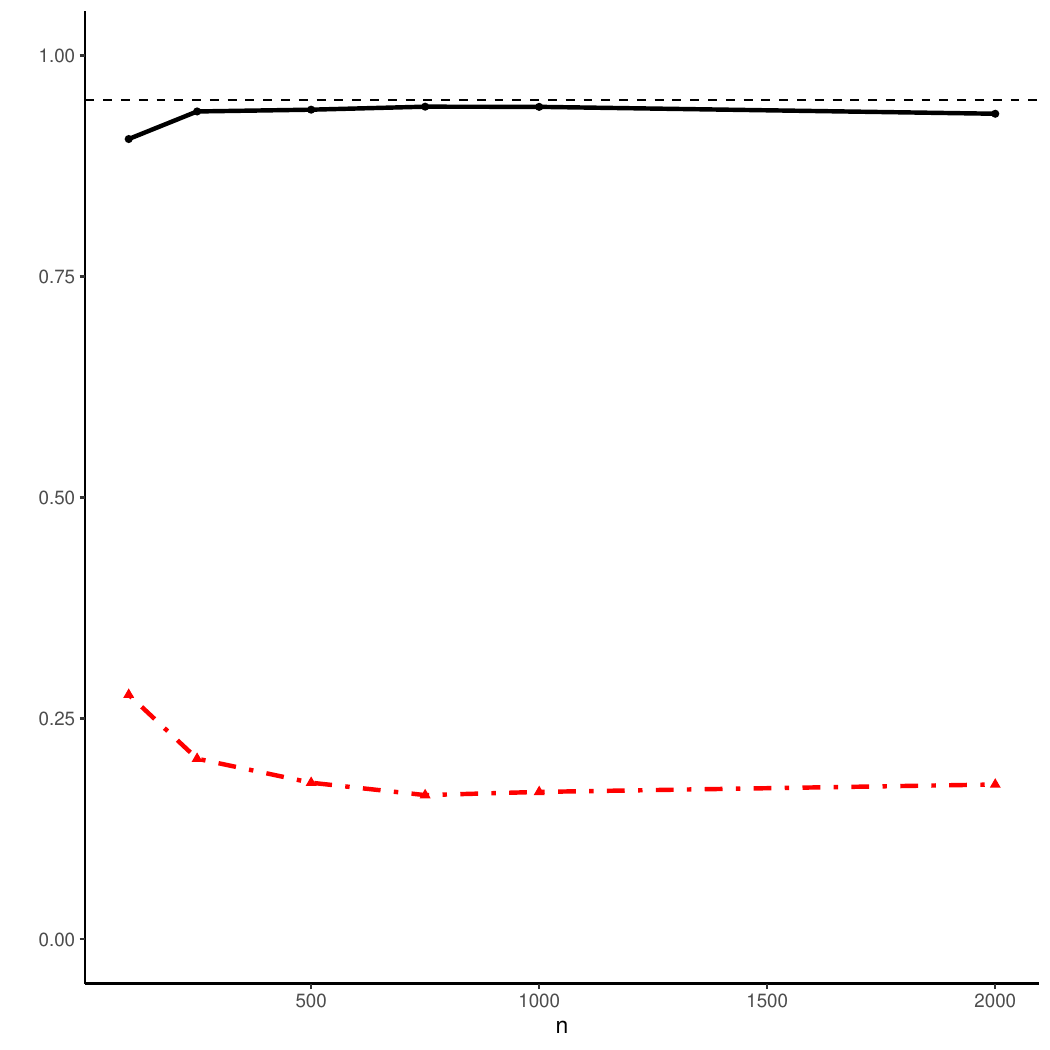}
		\subcaption{$(ix)$ $\x=-0.2$}
	\end{subfigure}%
	\begin{flushleft}\footnotesize Notes: \blackline Robust Bias Correction, \redline Undersmoothing; Epanechnikov Kernel
	\end{flushleft}
\end{figure}

\clearpage

\begin{figure}[!htb]
	\captionsetup[subfigure]{labelformat=empty}
	\centering
	\caption{Average Interval Length for 95\% Confidence Intervals}
	\label{fig:sim_il}
	\subcaption{(a) $\v=0$}
	\begin{subfigure}[b]{0.3\textwidth}
		\includegraphics[width=\linewidth]{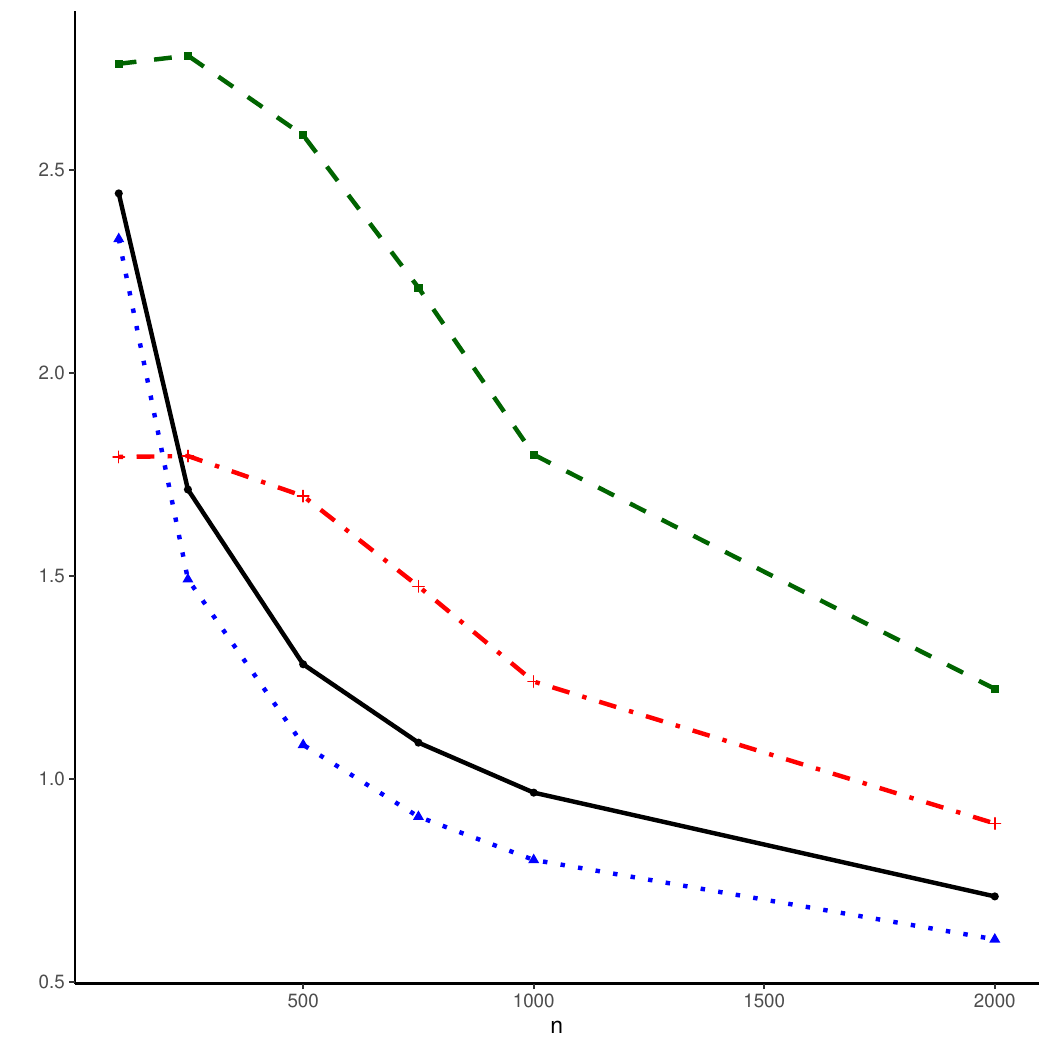}
		\subcaption{$(i)$ $\x=-1$}
	\end{subfigure}%
	\begin{subfigure}[b]{0.3\textwidth}
		\includegraphics[width=\linewidth]{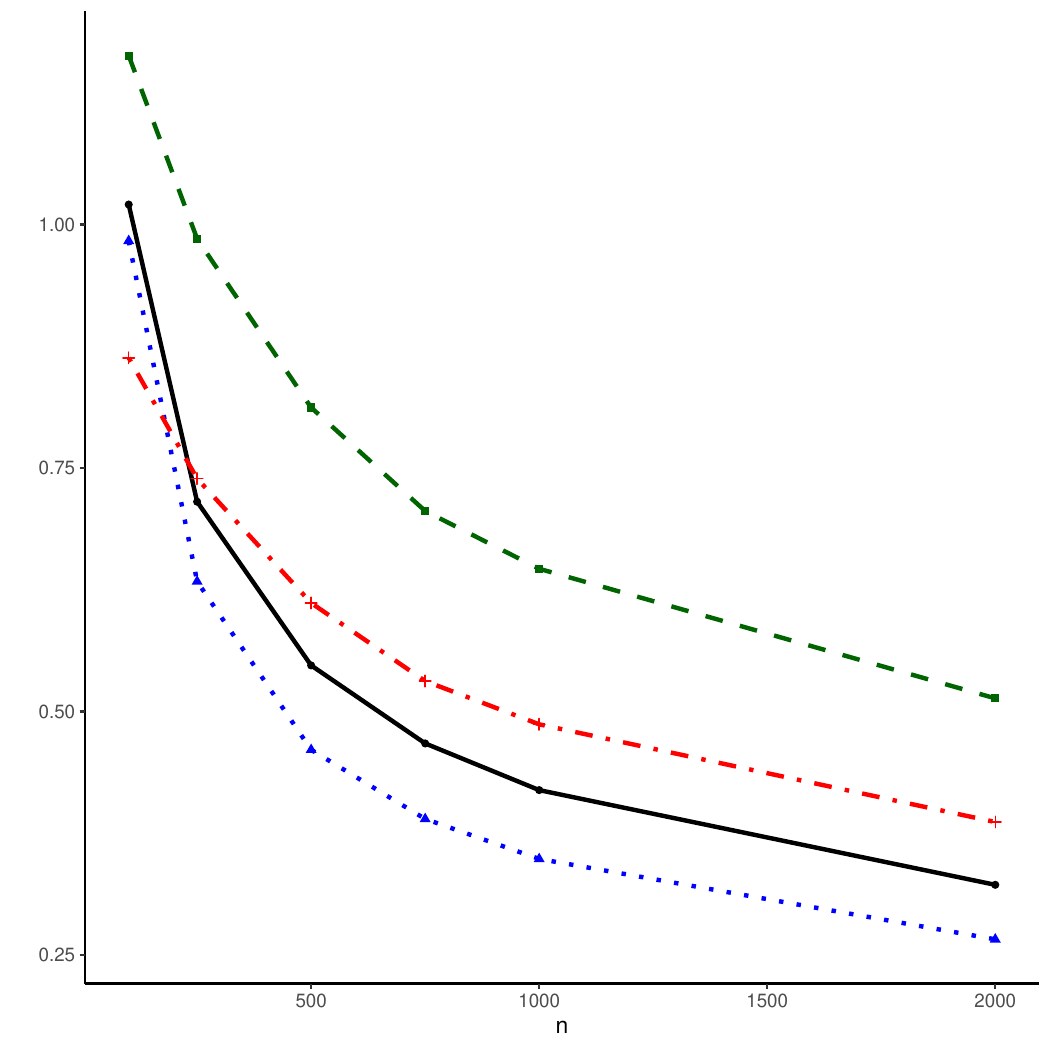}
		\subcaption{$(ii)$ $\x=-0.6$}
	\end{subfigure}%
	\begin{subfigure}[b]{0.3\textwidth}
		\includegraphics[width=\linewidth]{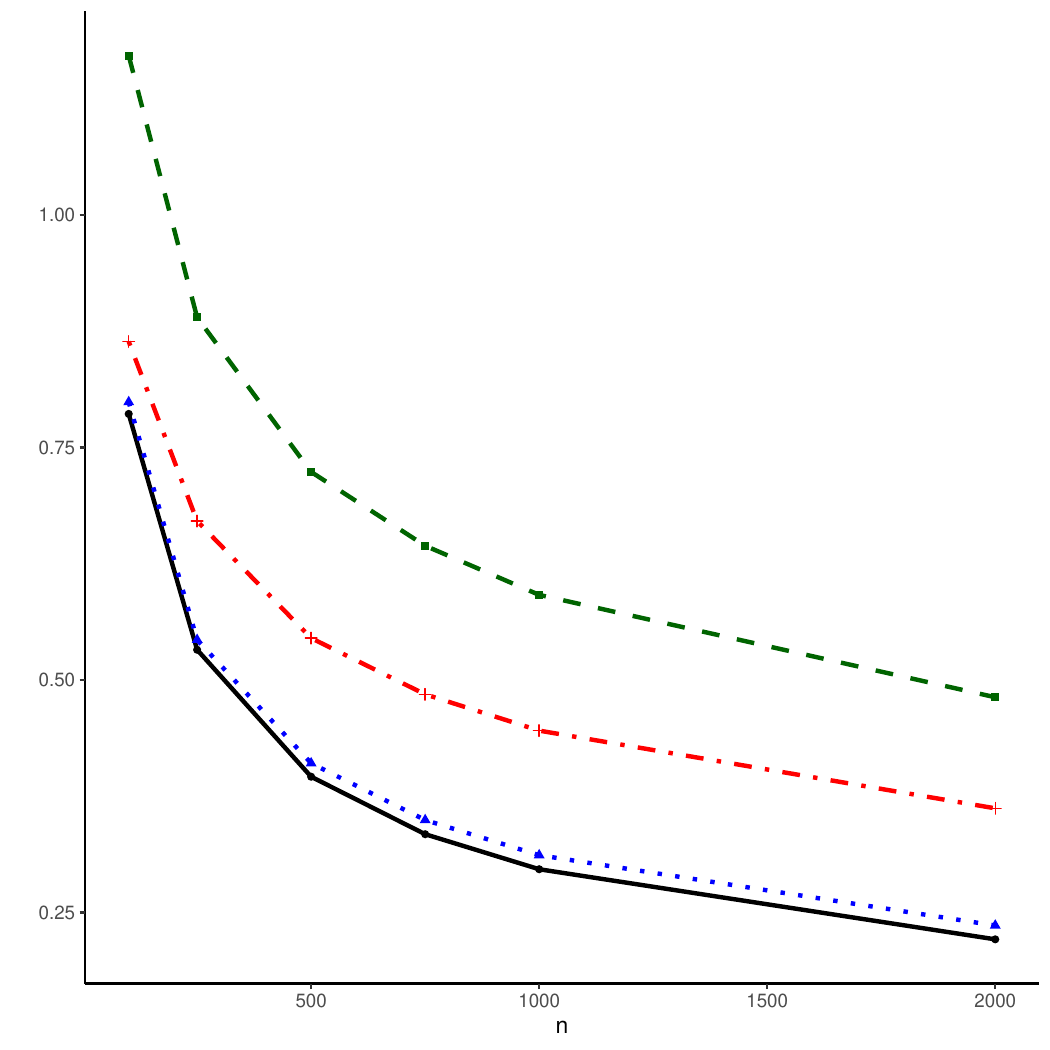}
		\subcaption{$(iii)$ $\x=-0.2$}
	\end{subfigure}
	
	\subcaption{(b) $\v=1$}
	\begin{subfigure}[b]{0.3\textwidth}
		\includegraphics[width=\linewidth]{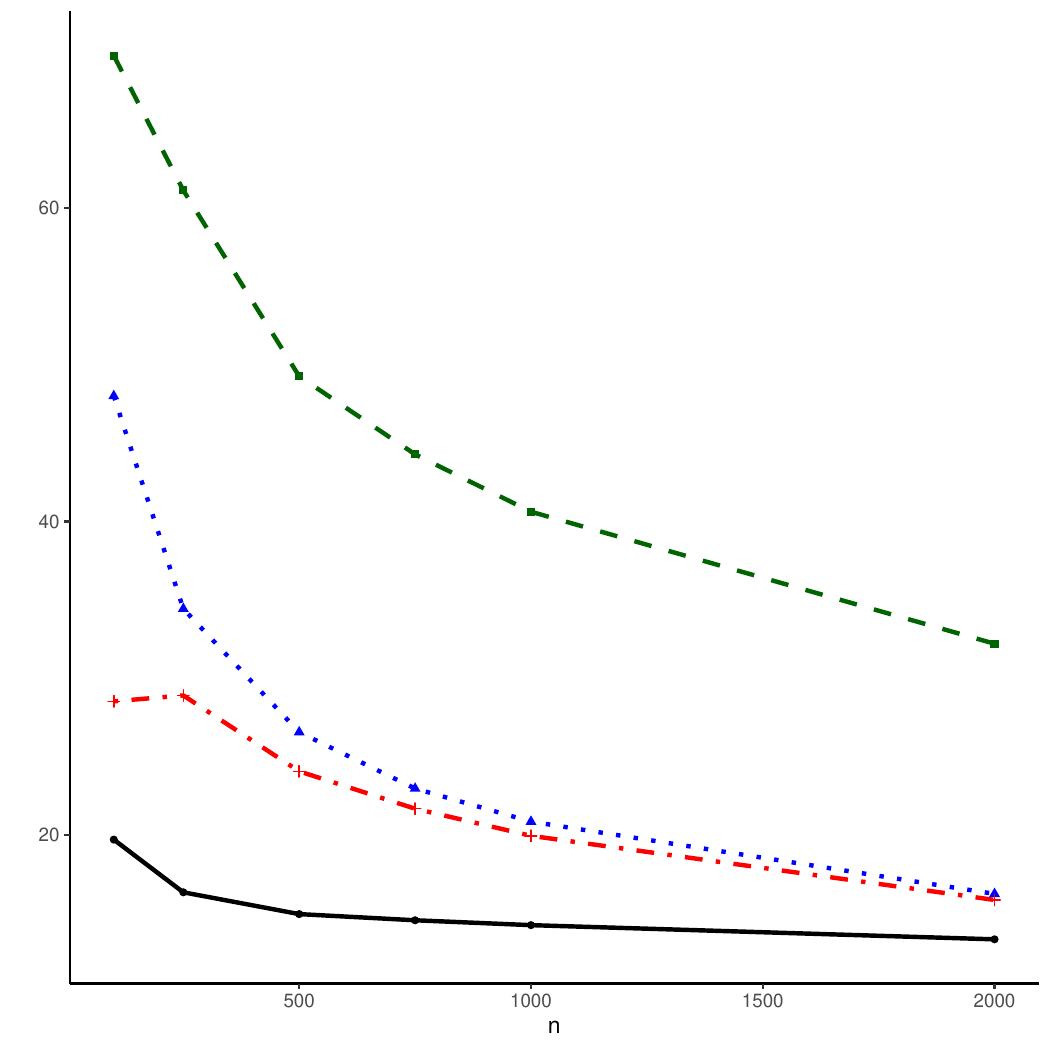}
		\subcaption{$(iv)$ $\x=-1$}
	\end{subfigure}%
	\begin{subfigure}[b]{0.3\textwidth}
		\includegraphics[width=\linewidth]{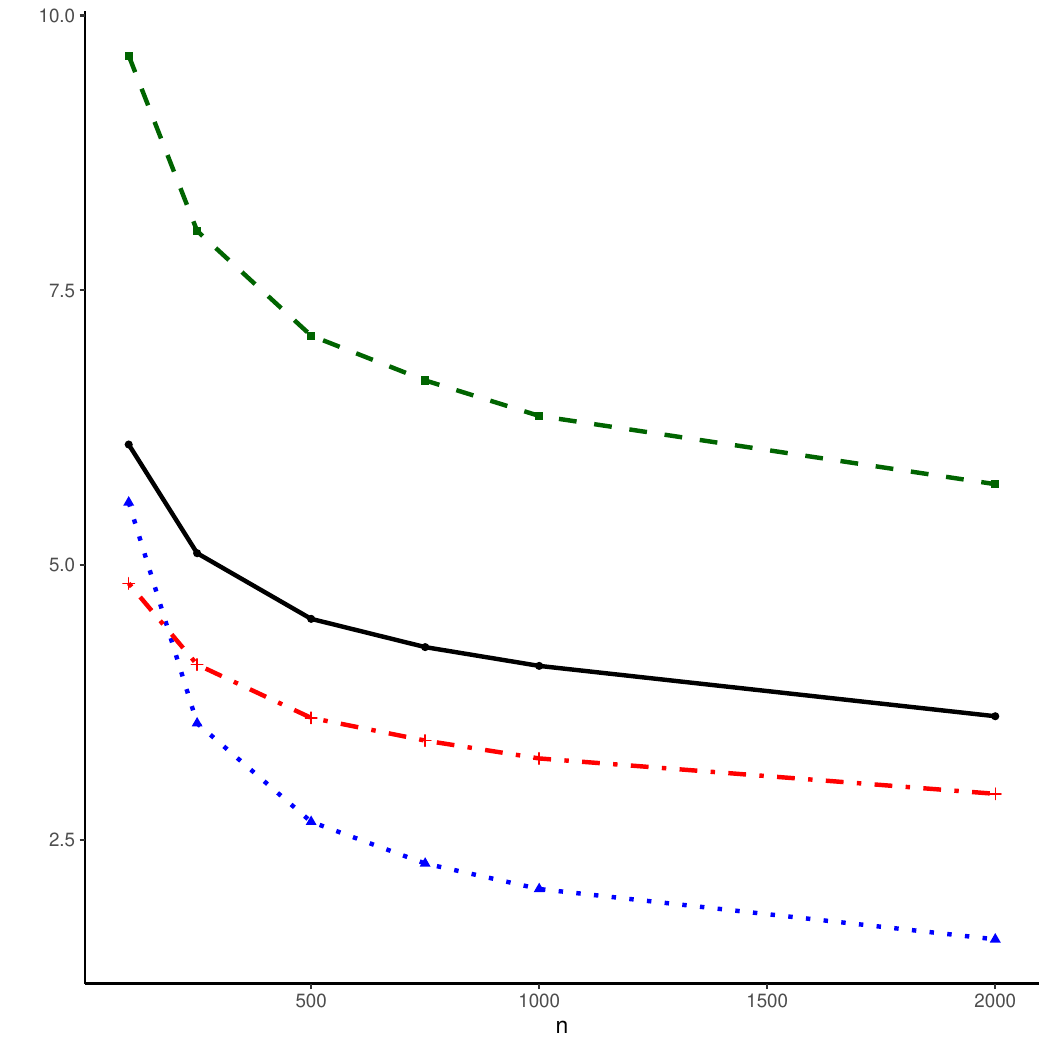}
		\subcaption{$(v)$ $\x=-0.6$}
	\end{subfigure}%
	\begin{subfigure}[b]{0.3\textwidth}
		\includegraphics[width=\linewidth]{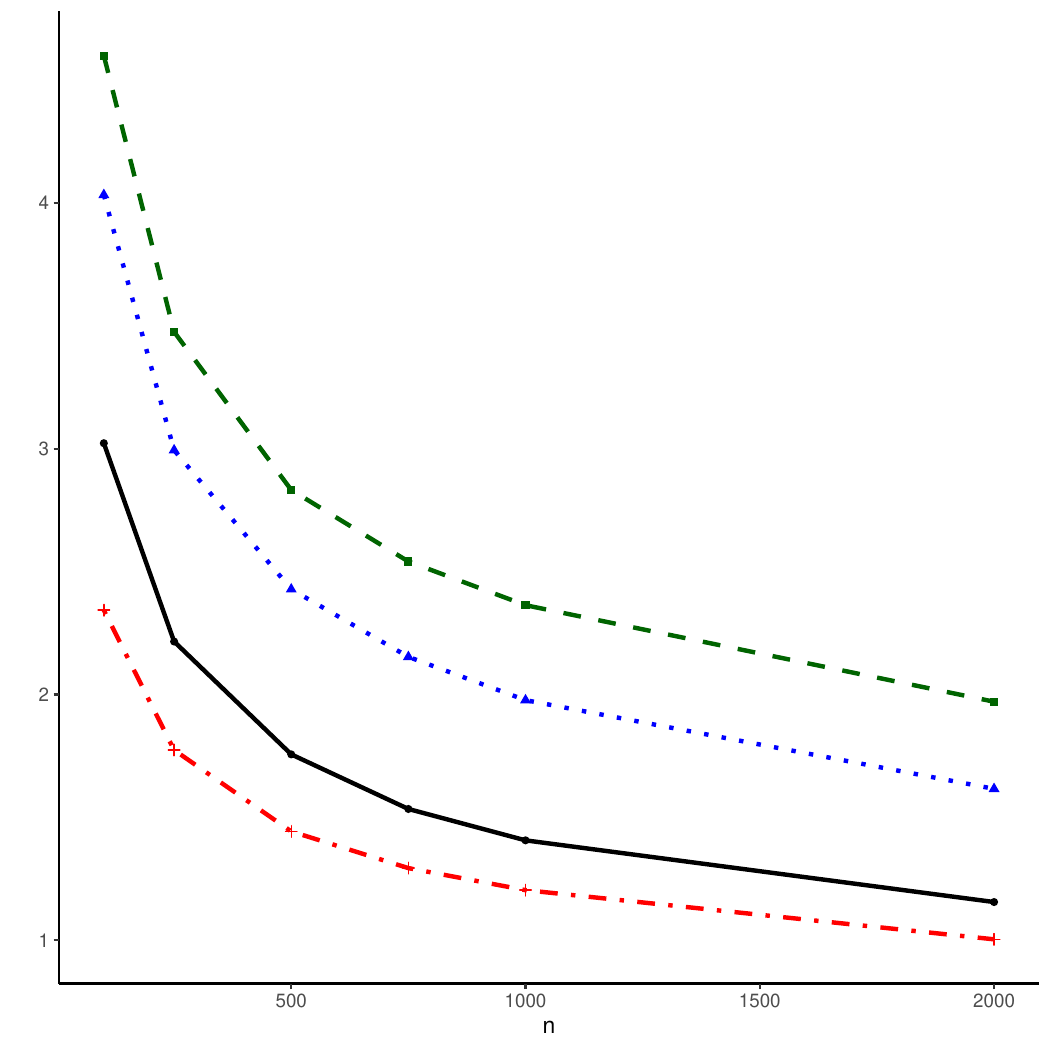}
		\subcaption{$(vi)$ $\x=-0.2$}
	\end{subfigure}
	\begin{flushleft}
		\footnotesize Notes: \blackline $\irbc(\hat{h}_{\RBC})$, \blueline $\irbc(\hat{h}_{\MSE})$, \greenline   $\irbc(\hat{h}_{\US})$,  \redline$\ius(\hat{h}_{\US})$; Epanechnikov Kernel
	\end{flushleft}
\end{figure}

\clearpage

\clearpage
\begin{figure}[!htb]
	\captionsetup[subfigure]{labelformat=empty}
	\centering
	\caption{Average Estimated Bandwidth}
	\label{fig:h_epa}
	
	\subcaption{(a) $\v=0$}
	\begin{subfigure}[b]{0.3\textwidth}
		\includegraphics[width=\linewidth]{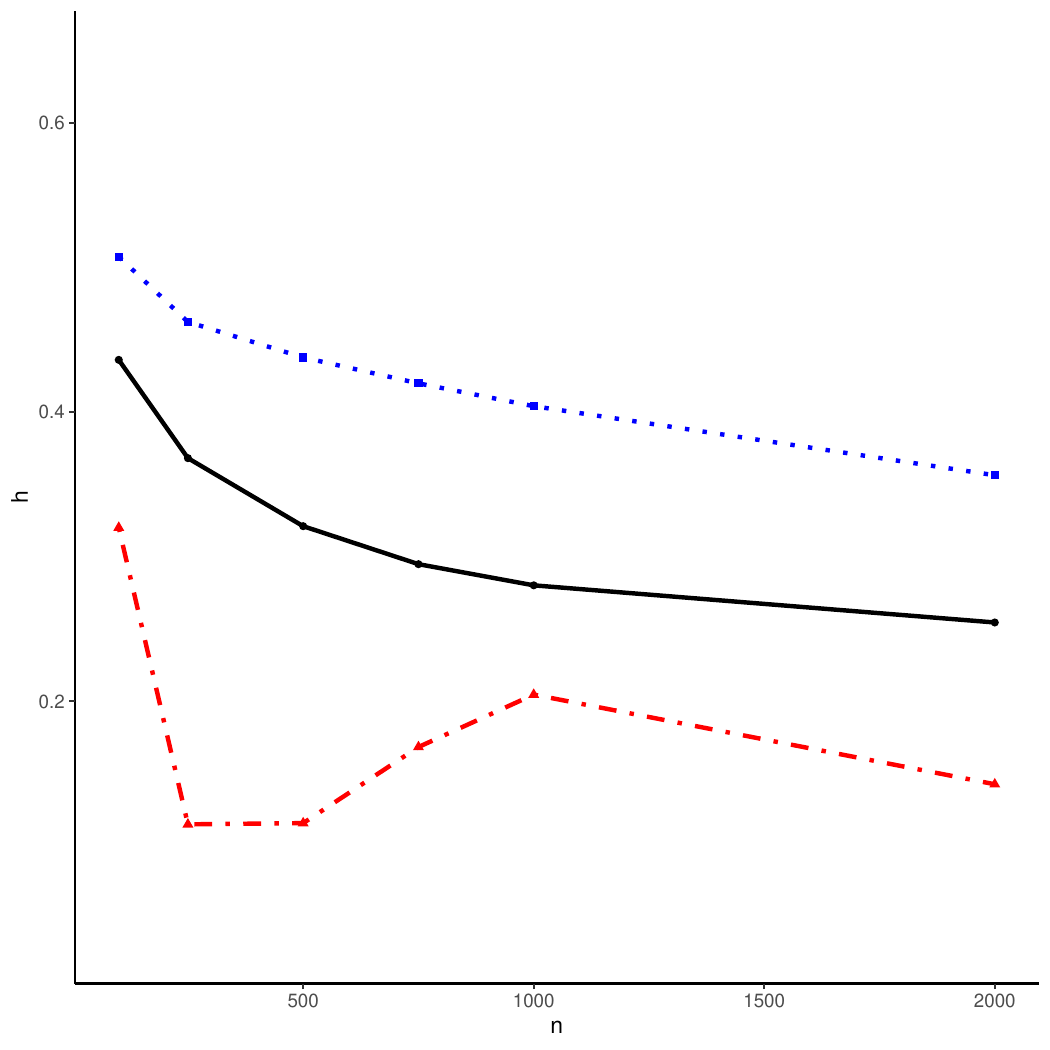}
		\subcaption{$(i)$ $\x=-1$}
	\end{subfigure}%
	\begin{subfigure}[b]{0.3\textwidth}
		\includegraphics[width=\linewidth]{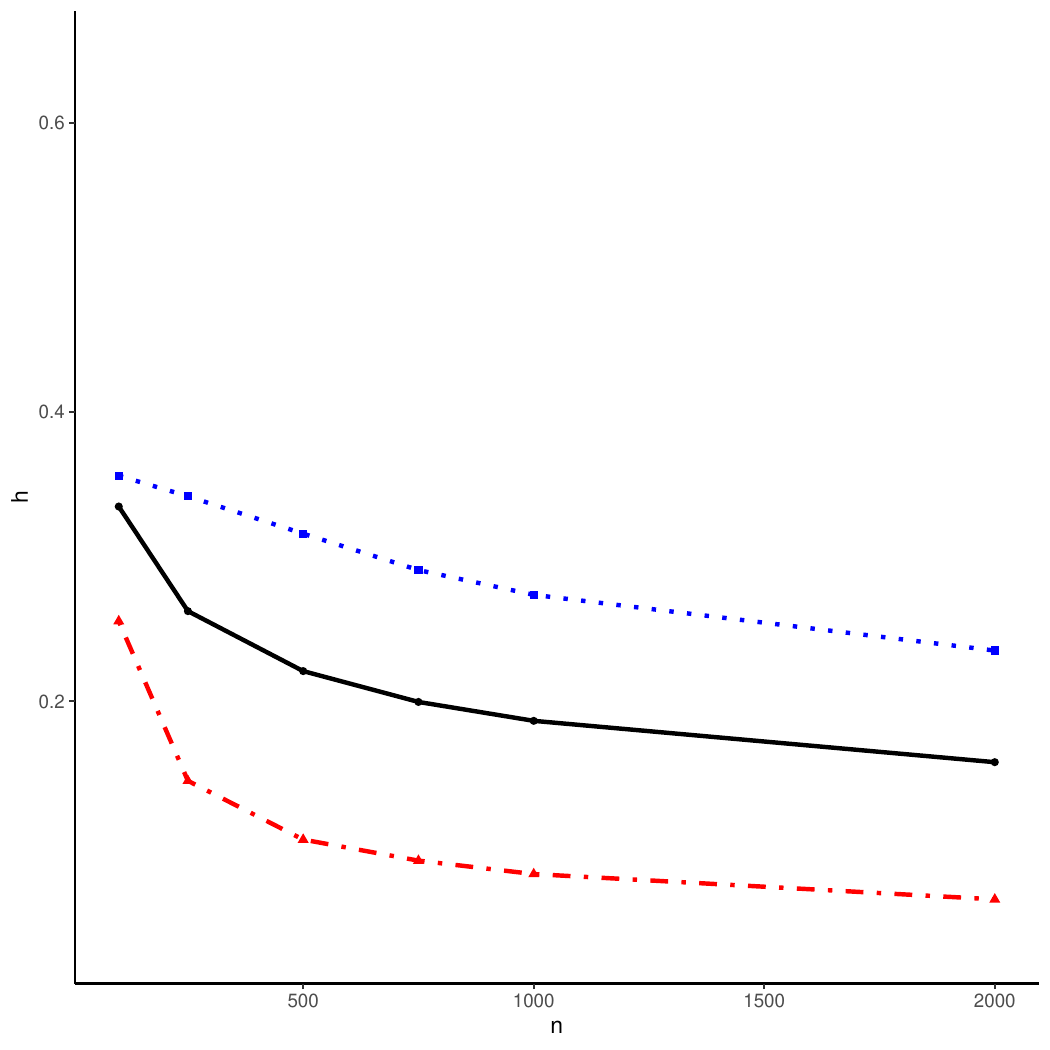}
		\subcaption{$(ii)$ $\x=-0.6$}
	\end{subfigure}%	
	\begin{subfigure}[b]{0.3\textwidth}
		\includegraphics[width=\linewidth]{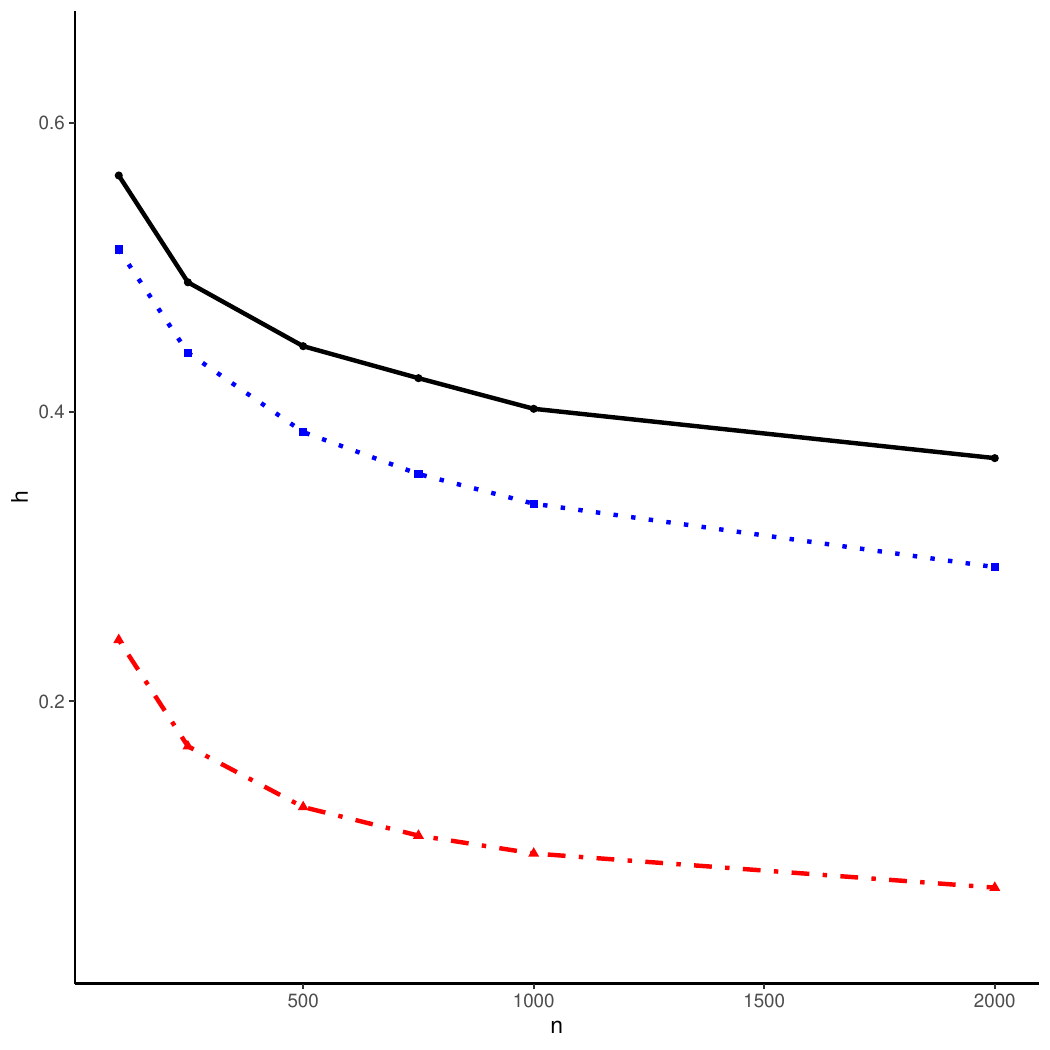}
		\subcaption{$(iii)$ $\x=-0.2$}
	\end{subfigure}
	
	\subcaption{(b) $\v=1$}
	\begin{subfigure}[b]{0.3\textwidth}
		\includegraphics[width=\linewidth]{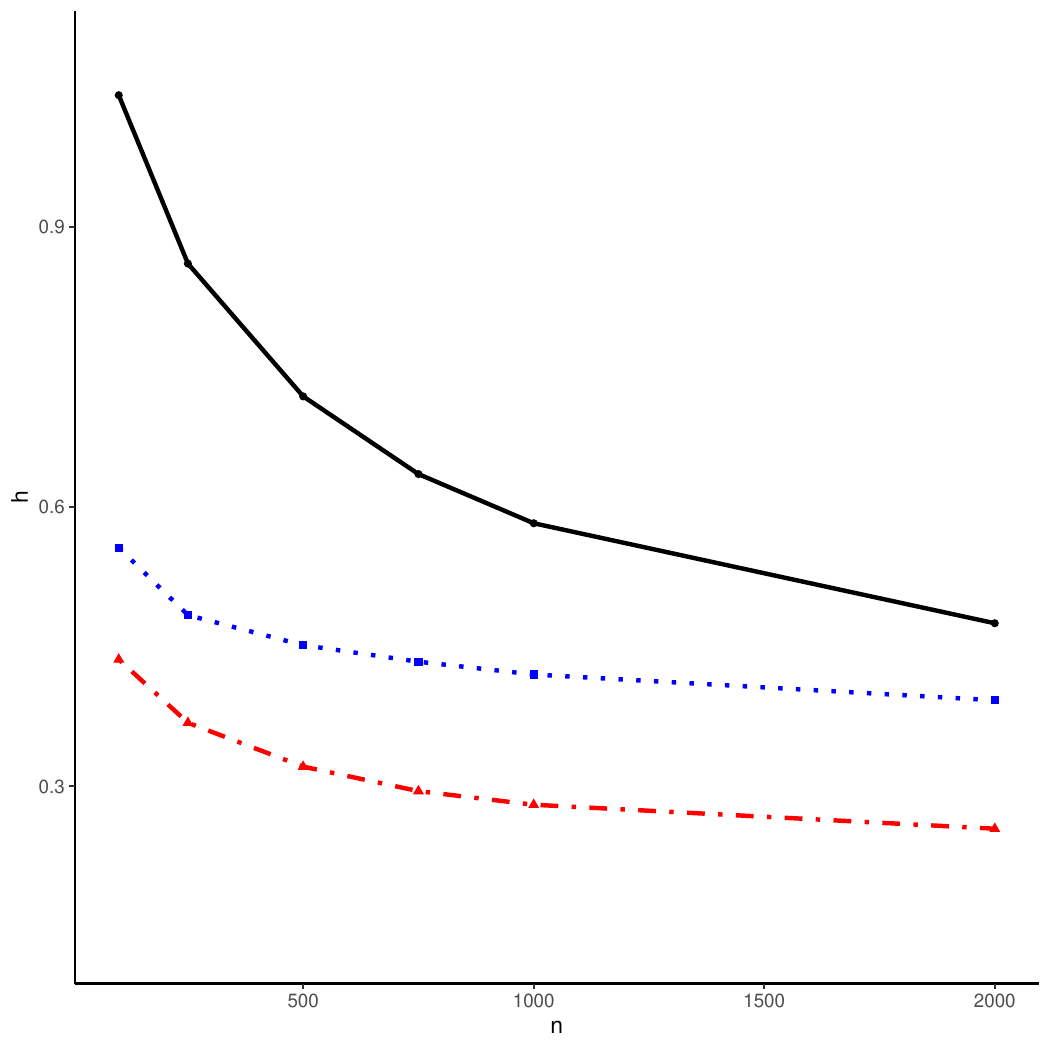}
		\subcaption{$(iv)$ $\x=-1$}
	\end{subfigure}%
	\begin{subfigure}[b]{0.3\textwidth}
		\includegraphics[width=\linewidth]{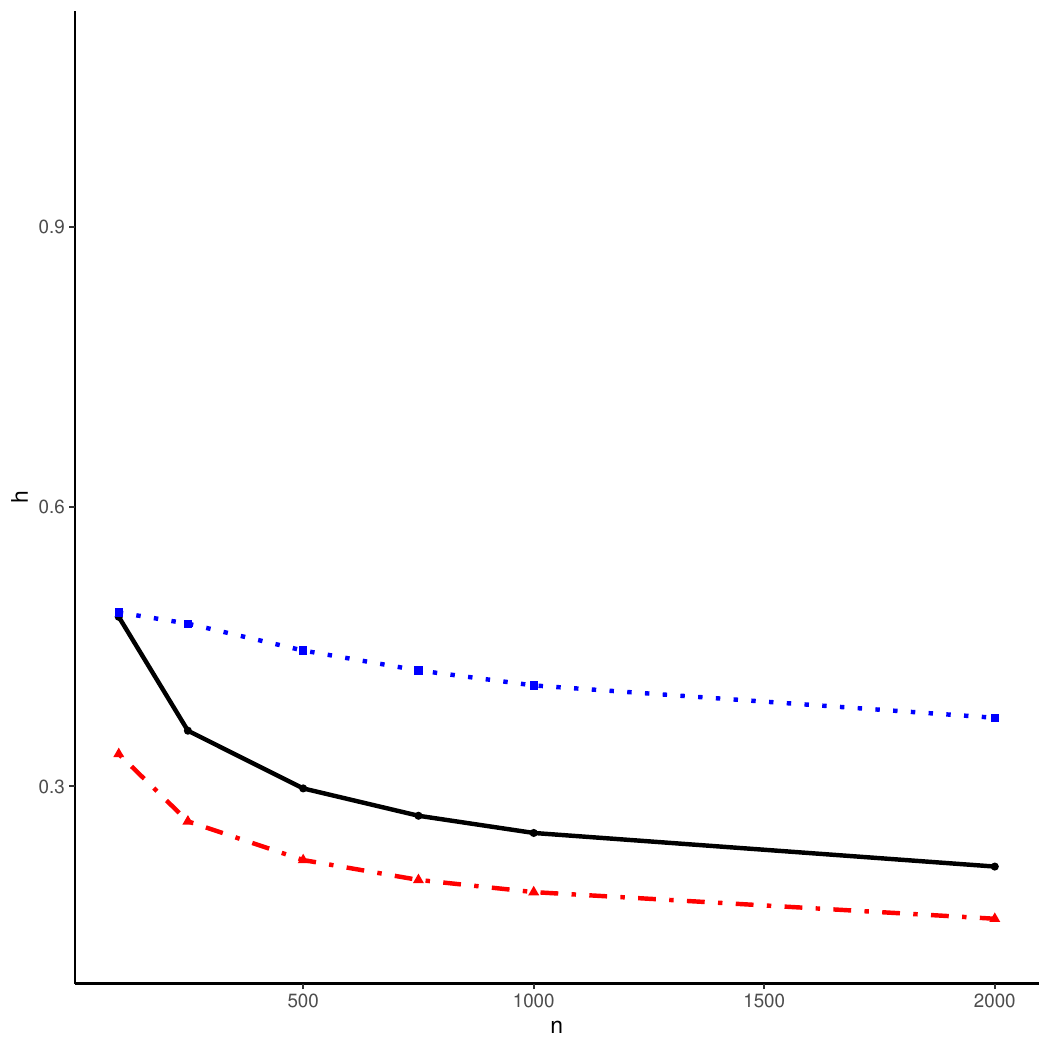}
		\subcaption{$(v)$ $\x=-0.6$}
	\end{subfigure}%	
	\begin{subfigure}[b]{0.3\textwidth}
		\includegraphics[width=\linewidth]{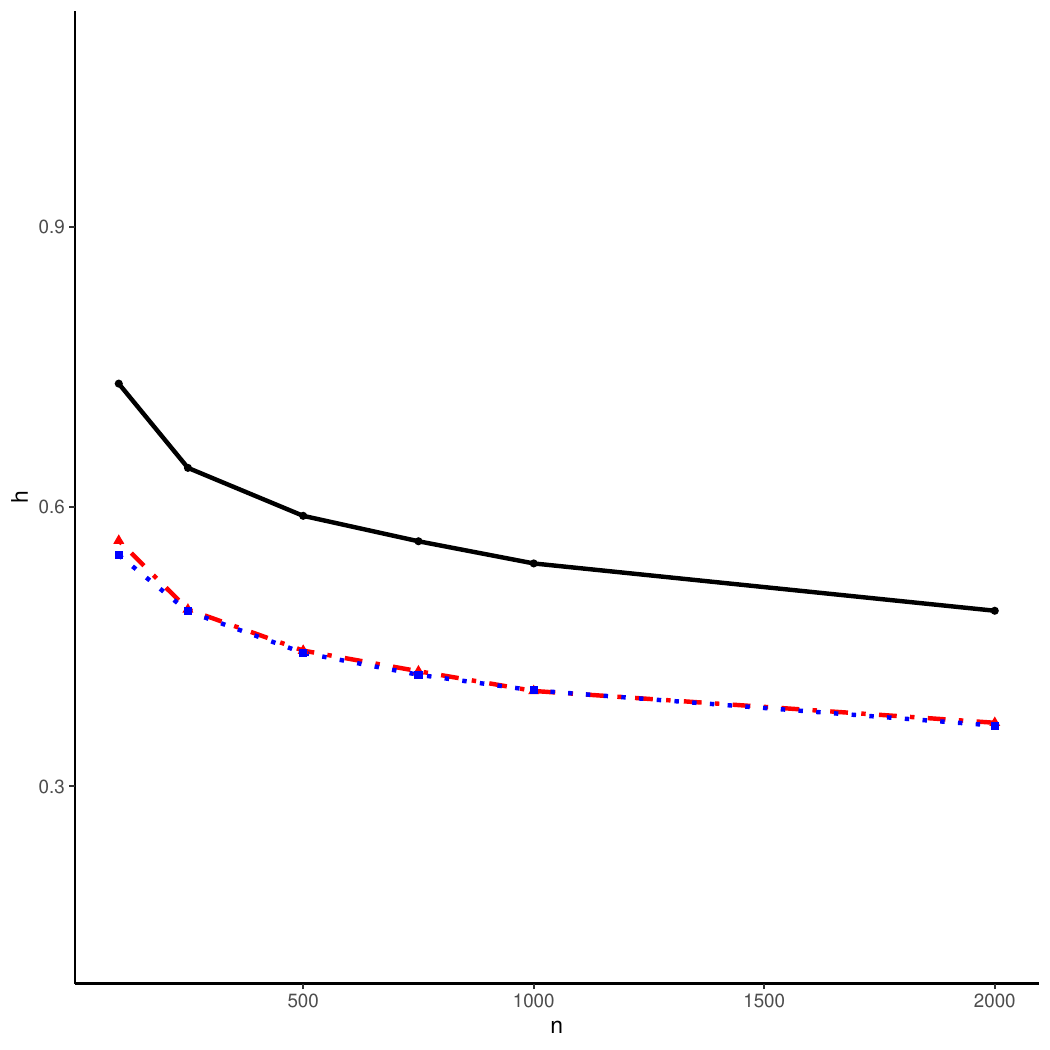}
		\subcaption{$(vi)$ $\x=-0.2$}
	\end{subfigure}%
	\begin{flushleft}\footnotesize Notes: \blackline $\hat{h}_{\RBC}$, \redline $\hat{h}_{\US}$, \blueline $\hat{h}_{\MSE}$; Epanechnikov Kernel
	\end{flushleft}
\end{figure}

\clearpage

%%%%%%%%%%%%%%%%%%%%%%%%%%%%%%%%%%%
%%%%%%%%%%%%%%%%%%%%%%%%%%%%%%%%%%%
%%%%%%%%%%%%%%%%%%%%%%%%%%%%%%%%%%%

%%      Supplement     

%%%%%%%%%%%%%%%%%%%%%%%%%%%%%%%%%%%
%%%%%%%%%%%%%%%%%%%%%%%%%%%%%%%%%%%
%%%%%%%%%%%%%%%%%%%%%%%%%%%%%%%%%%%

	\renewcommand\thepart{S\Roman{part}} 
	\renewcommand\thesection{\thepart.\arabic{section}}	%section numbering has the right format
	\numberwithin{section}{part}	%without this, section numbering won't restart for each part

	\setcounter{equation}{0}
	\renewcommand\theequation{S.\arabic{equation}}

	\setcounter{lemma}{0}
	\renewcommand\thelemma{S.\arabic{lemma}}
	
	\setcounter{remark}{0}
	\renewcommand\theremark{S.\arabic{remark}}	
		
	\setcounter{theorem}{0}
	\renewcommand\thetheorem{S.\arabic{theorem}}

	\setcounter{corollary}{0}
	\renewcommand\thecorollary{S.\arabic{corollary}}
	
	\setcounter{assumption}{0}
	\renewcommand\theassumption{S.\arabic{assumption}}
	
	\setcounter{prop}{0}
	\renewcommand\theprop{S.\arabic{prop}}

	\setcounter{figure}{0}
	\setcounter{table}{0}
	\renewcommand\thetable{S.\arabic{table}}
	\renewcommand\thefigure{S.\arabic{figure}}

%Table of contents
	\setlength{\cftbeforesecskip}{2pt}
	\setlength{\cftbeforepartskip}{2em}
	\setlength{\cftsecnumwidth}{3.5em}
	\setlength{\cftsubsecnumwidth}{3.75em}
	\setlength{\cftsubsubsecnumwidth}{4.75em}
	\setlength{\cftsecindent}{2em}
	\setlength{\cftsubsecindent}{3em}
	\setlength{\cftsubsubsecindent}{4em}

\setenumerate[1]{label=\bf(\alph*)}
\setenumerate[2]{label=\bf(\roman*)}

%%%%%%%%%%%%%%%%%%%%%%%%%%%%%%%%%%%
%
%	Making the TOC correct for the supplement, command 3 of 3
%
% This must come before the supplement starts, to mark the following 
% sections for appearance in the TOC

	\addtocontents{toc}{\protect\setcounter{tocdepth}{2}}
%
%
%%%%%%%%%%%%%%%%%%%%%%%%%%%%%%%%%%%

\clearpage
\numberwithin{section}{part}
\part*{}
\setcounter{section}{0}

\setcounter{page}{0}
\begin{center}
{\LARGE
\vspace{-0.75in} Supplement to ``Coverage Error Optimal Confidence Intervals for Local Polynomial Regression''
}
\end{center}

\bigskip

\small 

This supplement contains proofs of all results, other technical details, and complete simulation results. Notation is kept mostly consistent with the main text, but this document is self-contained as all notation is redefined and all necessary constructions, assumptions, and so forth, are restated. Throughout, clarity is prized over brevity, and repetition is not avoided. The outline is as follows. Section \ref{supp:setup} gives a complete formalization of the set up, inference procedures, and assumptions, exactly as in exactly as given in Section 2 of the main paper. Section \ref{supp:theorems lp} gives the proofs for Theorem 1 and Corollaries 1 and 2 of the main paper. Theorem 1 of the paper is restated identically as Theorem \ref{suppthm:EE lp} here, for referencing. The proof of Theorem \ref{suppthm:EE lp} (Theorem 1 in the paper) is long and occupies several subsections. Section \ref{supp:bias lp} gives all details and derivations relating to bias, including formulas omitted from the main text, for all estimators, points of evaluation, and smoothness cases. Section \ref{supp:other standard errors} discusses standard errors. Section \ref{supp:check function} gives notes on the check function loss for asymmetric measurement of coverage error. Section \ref{supp:numerical} presents complete simulations results and computations. For reference a complete list of notation is given in Section \ref{supp:notation lp}.

\setcounter{tocdepth}{2}
\tableofcontents

\newpage
\normalsize

%%%%%%%%%%%%%%%%%%%%%%%%%%%%%%%%%%%%%%%%%%%%%%%%%%%%%%%%%%%%%%%%%%%%%%
%%%%%%%%%%%%%%%%%%%%%%%%%%%%%%%%%%%%%%%%%%%%%%%%%%%%%%%%%%%%%%%%%%%%%%
\section{Setup}
	\label{supp:setup}

We observe a random sample $\{(Y_1, X_1), \ldots, (Y_n, X_n)\}$ from the pair $(Y, X)$, which are distributed according to $\f$, the data-generating process. $\f$ is assumed to belong to a class $\F_\S$, as defined by Assumption \ref{suppasmpt:dgp lp} below, and in particular the pair $(Y, X)$ obeys the heteroskedastic nonparametric regression model
\begin{equation}
	\label{suppeqn:model}
	Y = \mu_\f(X) + \e,   \qquad   \E[\e \vert X ] = 0,   \qquad   \E[\e^2 \vert X = x] = v(x).
\end{equation}
The parameter of interest is a derivative of the regression function, defined as
\begin{equation}
	\label{suppeqn:theta lp}
	\tf = \mu^{(\v)}_\f(\x) := \left.  \frac{\partial^\v}{\partial x^\v} \E_\f \left[ Y \mid X \!=\! x \right] \right\vert_{x=\x},
\end{equation}
for a point $\x$ in the support of $X$ and an nonnegative integer $\v \leq \S$, the latter defined in Assumption \ref{suppasmpt:dgp lp}, and indexing the class $\F_\S$. As usual, we use the notation $\mu_\f(\x) = \mu^{(0)}_\f(\x) = \E_\f [ Y \mid X \!=\! \x ] $. 

Expectations and probability statements, as well as parameters and functions, are always understood to depend on $\f$, though for simplicity this will often be omitted when doing so causes no confusion. Similarly, unless it is explicitly required, we will omit the point of evaluation $\x$ as an argument. For example,
\[
	\mu^{(\v)}_\f(\x) = \mu^{(\v)}(\x) = \mu^{(\v)}.
\]

Our main technical contributions are novel Edgeworth expansions for local polynomial based Wald-type $t$ statistics of the form
\begin{equation}
	\label{suppeqn:t stat}
    \ti = \frac{\that - \mu^{(\v)}}{\vhat},
\end{equation}
for a centering estimator $\that$ and scale estimator $\vhat$. We establish this expansion uniformly in a class of distributions that generated the data, that is, we characterize the leading terms $E_{\ti,\f}(z)$ and rate $r_{\ti,\f}$, both specific to a $t$ statistic and distribution, and prove that
\begin{equation}
	\label{suppeqn:EE-uniform}
	\lim_{n\to\infty} \; \sup_{\f \in \F_\S} \; r_{\ti,\f}^{-1} \; \sup_{z \in \R} \; \Big| \P_\f[ \ti < z]  - \Phi(z) - E_{\ti,\f}(z) \Big| = 0.
\end{equation}
This Edgeworth expansion is Theorem 1 of the paper and Theorem \ref{suppthm:EE lp} herein. We also study the coverage error of commonly-used Wald-type confidence interval estimators given generically by
\begin{equation}
	\label{suppeqn:ci lp}
	\i  = \left[ \that - z_u \; \vhat \ , \  \that - z_l \; \vhat \right],
\end{equation}
for a pair of quantiles $z_l$ and $z_u$. See Corollary 2 of the main paper. 

Throughout, asymptotic orders and their in-probability versions always hold uniformly in $\F_\S$, as required by our framework: for example, $A_n = o_\P(a_n)$ means $\sup_{\f \in \F_\S} \P_\f [\vert A_n/a_n \vert > \epsilon] = o(1)$ for every $\epsilon > 0$. Limits are taken as $n\to\infty$ unless stated otherwise.

%%%%%%%%%%%%%%%%%%%%%%%%%%%%%%%%%%%%%%%%%%%%%%%%%%%%%%%%%%%%%%%%%%%%%%
\subsection{Centering Estimators}

We now define the centering estimators $\that$. These are based on local polynomial regressions. The standard local polynomial (of degree $p$) point estimator is defined via the local regression
\begin{equation}
	\label{suppeqn:lp}
	\mhat_p^{(\v)} = \v! \be_\v' \bhat_p = \frac{1}{n \h^\v} \v! \be_\v'\Gp^{-1} \Op  \bY, 		 \quad\quad 		\bhat_p = \argmin_{\bbeta \in \mathbb{R}^{p+1}} \sumi ( Y_i - \br_p(X_i - \x)'\bbeta)^2  K \left( \Xhi \right),
\end{equation}
where
\begin{itemize}

	\item $\be_k$ is a conformable zero vector with a one in the $(k+1)$ position, for example $\be_\v$ is the $(p+1)$-vector with a one in the $\v^{\text{th}}$ position and zeros in the rest, 
	
	\item $\h$ is a positive bandwidth sequence that vanishes as $n$ diverges,
	
	\item $p$ is an integer greater at least $\v$, sometimes restricted such that $p - \v$ odd,
	
	\item $\br_p(u) = (1, u, u^2, \ldots, u^p)'$,
	
	\item $\Xhi = (X_i - \x)/\h$, for a bandwidth $\h$ and point of interest $\x$,

	\item to save space, products of functions will often be written together, with only one argument, for example,
\[ (K \br_p \br_p')(\Xhi) \defsym K(\Xhi) r_p(\Xhi) r_p(\Xhi)' = K\left(\frac{X_i - \x}{\h}\right) \br_p \left(\frac{X_i - \x}{\h}\right) \br_p \left(\frac{X_i - \x}{\h}\right)' ,  \]

	\item $\bW = \diag\left(\h^{-1} K(\Xhi): i = 1, \ldots, n\right)$,
	
	\item $\H = \diag\left(1, \h, \h^2, \ldots, \h^p \right)$, where
	
	\item $\diag(a_i:i = 1, \ldots, k)$ denote the $k \times k$ diagonal matrix constructed using the elements $a_1, a_2, \cdots, a_k$,

	\item $\bR = \left[ \br_p( X_1 - \x), \cdots, \br_p( X_n - \x ) \right]'$,

	\item $\bRc = \bR \H^{-1} = \left[ \br_p( X_{\h,1}), \cdots, \br_p( X_{\h,n} ) \right]'$,
	
	\item $\Gp = \frac{1}{n\h} \sumi (K \br_p \br_p')(\Xhi) = (\bRc' \bW \bRc)/n$, 
	
	\item $\Op = \h^{-1}[ (K \br_p)(X_{\h,1}),  (K \br_p)(X_{\h,2}), \ldots,  (K \br_p)(X_{\h,n})] = \bRc' \bW$, and
	
	\item $\bY = (Y_1, \ldots, Y_n)'$.

\end{itemize}
We will also use, for bias correction,
\begin{itemize}

	\item $\bhat_{p+1}$ which is defined exactly as in Equation \eqref{suppeqn:lp} but with $p+1$ in place of $p$ and $\b$ in place of $\h$ in all instances. 

\end{itemize}
For more details on local polynomial methods and related theoretical results, see \cite{Fan-Gijbels1996_book}.

For computing the rate of convergence, and clarifying the appearance of $(n \h^\v)^{-1}$ in Equation \eqref{suppeqn:lp}, it is useful to spell out the form of $\bhat_p$, the solution to the minimization in Equation \eqref{suppeqn:lp}. Standard least squares algebra yields
\begin{align}
	\bhat_p & = \left( \bR' \bW \bR \right)^{-1} \bR' \bW \bY    		\nonumber \\
	& = \left( \left[  \bR\H^{-1}\H \right]' \bW \left[  \bR\H^{-1}\H \right] \right)^{-1} \left[  \bR\H^{-1}\H \right]' \bW \bY     		\nonumber \\
	& = \H^{-1} \left( \bRc' \bW \bRc \right)^{-1} \H^{-1} \H \bRc' \bW \bY    		\nonumber \\
	& = \H^{-1} \left( \bRc' \bW \bRc \right)^{-1}\bRc' \bW \bY,    		\nonumber \\
	& = \H^{-1} \Gp^{-1} \Op \bY / n,   		\label{suppeqn:bhat1}
\end{align}
and therefore, because $\be_\v' \H^{-1}  = \be_\v' \h^{-\v}$,
\begin{equation}
	\label{suppeqn:bhat2}
	\v! \be_\v' \bhat_p = \frac{1}{n \h^\v} \v! \be_\v'\Gp^{-1} \Op  \bY.
\end{equation}
The same applies to $\bhat_{p+1}$ with the necessary changes to the bandwidth and dimensions.

To conduct valid inference on $\tf$ the bias of the nonparametric estimator must be removed. Assuming that the true $\mu^{(\v)}(\cdot)$ is smooth enough at $\x$ (formally, $p + 1 \leq \S$, such as is required for computing the mean square error optimal bandwidth), we find that the (conditional) bias of $\mhat_p^{(\v)}$ is
\begin{equation}
	\label{suppeqn:bias lp}
	\E\left[\mhat_p^{(\v)} \big| X_1, \ldots, X_n \right] - \mu^{(\v)}  =   \h^{p+1 - \v}  \v! \be_\v'\Gp^{-1} \Lp_1 \frac{\mu^{(p+1)}}{(p+1)!}   +   o_\P( \h^{p+1 - \v}),
\end{equation}
where 
\begin{itemize}
	\item $\Lp_k = \Op \left[ X_{\h,1}^{p+k}, \ldots, X_{\h,n}^{p+k} \right]'/n$, where, in particular $\Lp_1$ was denoted $\Lp$ in the main text.
\end{itemize}
Throughout, asymptotic orders and their in-probability versions hold uniformly in $\F_\S$, as required by our framework; e.g., $A_n = o_\P(a_n)$ means $\sup_{\f \in \F_\S} \P_\f [\vert A_n/a_n \vert > \epsilon] = o(1)$ for every $\epsilon > 0$. This expression is valid for $p-\v$ odd or even, though in the latter case the leading term of will be zero due to symmetry for interior points, i.e.\ $\be_\v'\Gp^{-1} \Lp_1 = O(\h)$, and thus the rate will actually be faster \citep[see][]{Fan-Gijbels1996_book}. (Recall that asymptotic orders and their in-probability versions are always required to hold uniformly in $\F_\S$ throughout.) %This cancellation is not needed or used for any of our results below.

Sufficient smoothness for the validity of this calculation need not be available for many of the results herein to apply, and the amount of smoothness assumed to exist is a key factor in determining coverage error rates and optimality. See Section \ref{supp:bias lp} below for details and derivations in all cases, in addition to the discussion in the main paper. For the present, Equation \eqref{suppeqn:bias lp} serves to motivate explicit bias correction by subtracting from $\mhat_p^{(\v)}$ an estimate of the leading bias term. This estimate is formed as 
\[\h^{p+1 - \v}  \v! \be_\v'\Gp^{-1} \Lp_1 \be_{p+1}' \bhat_{p+1},   		  \qquad \text{ with } \qquad  		  \bhat_{p+1}  = \frac{1}{n \b^{p+1}}  \Gq^{-1} \Oq  \bY,\]
where $\bhat_{p+1}$ is exactly as in Equation \eqref{suppeqn:lp}, but with $p+1$ and $\b$ in place of $p$ and $\h$, respectively. \cite{Calonico-Cattaneo-Farrell2018_JASA,Calonico-Cattaneo-Farrell2018_JASAsupp} discuss more general methods of bias correction. It is sometimes convenient to use the form above, but we will also use the more explicit notation for what this approach does: estimating the unknown derivative $\mu^{(p+1)}$ and plugging it in directly
\[\h^{p+1 - \v}  \v! \be_\v'\Gp^{-1} \Lp_1 \frac{\mhat_{p+1}^{(p+1)}}{(p+1)!},   		  \qquad\qquad  		\mhat_{p+1}^{(p+1)} = (p+1)! \be_{p+1}'\bhat_{p+1}   = \frac{1}{n \b^{p+1}} (p+1)! \be_{p+1}'\Gq^{-1} \Oq  \bY,  \]
again matching \eqref{suppeqn:lp}, but with $p+1$ in place of $p$ and $\v$ and $\b$ in place of $\h$. In particular, we have defined the exact analogues for this new local regression:
\begin{itemize}
	
	\item $\Xbi = (X_i - \x)/\b$, for a bandwidth $\b$ and point of interest $\x$, exactly like $\Xhi$ but with $\b$ in place of $\h$,
	
	\item $\Oq = \b^{-1}[ (K \br_{p+1})(X_{\b,1}),  (K \br_{p+1})(X_{\b,2}), \ldots,  (K \br_{p+1})(X_{\b,n})]$, exactly like $\Op$ but with $\b$ in place of $\h$ and $p+1$ in place of $p$, 
	
	\item $\Gq = \frac{1}{n\b} \sumi (K \br_{p+1} \br_{p+1}')(\Xbi)$, exactly like $\Gp$ but with $\b$ in place of $\h$ and $p+1$ in place of $p$, and 

	\item $\Lq_k = \Oq \left[ X_{\b,1}^{p+1+k}, \ldots, X_{\b,n}^{p+1+k} \right]'/n$, exactly like $\Lp_k$  but with $\b$ in place of $\h$ and $p+1$ in place of $p$ (implying $\Oq$ in place of $\Op$).
	
\end{itemize}

We thus consider two types of centering estimators. Conventional nonparametric local polynomial inference sets $\that = \mhat_p^{(\v)}$, which typically requires undersmoothing for valid inference, and robust bias corrected centering, which incorporates the explicit bias correction. In sum, $\that$ of \eqref{suppeqn:ci lp} is one of 
\begin{align}
	\begin{split}
		\label{suppeqn:that lp}
		\mhat_p^{(\v)}  &  = \frac{1}{n \h^\v} \v! \be_\v'\Gp^{-1} \Op  \bY  ;   		\\
		\thatrbc & = \mhat_p^{(\v)}  - \h^{p+1 - \v}  \v! \be_\v'\Gp^{-1} \Lp_1 \frac{\mhat_{p+1}^{(p+1)}}{(p+1)!}  = \frac{1}{n \h^\v} \v! \be_\v'\Gp^{-1} \Orbc  \bY .
	\end{split}
\end{align}
where in the latter form of $\thatrbc$, which is useful for defining the scale estimators below, we define
\begin{itemize}
	\item $ \Orbc = \Op - \rho^{p+1}  \Lp_1 \be_{p+1}' \Gq^{-1} \Oq $ and 
	
	\item $\rho = \h / \b$, the ratio of the two bandwidth sequences.
\end{itemize}
Comparing the two we see that only the matrix $\O$ premultiplying $\bY$ changes.

%%%%%%%%%%%%%%%%%%%%%%%%%%%%%%%%%%%%%%%%%%%%%%%%%%%%%%%%%%%%%%%%%%%%%%
\subsection{Scale Estimators}

The next piece we define are the scaling estimators. As discussed in the paper, it is crucial for coverage error to use fixed-$n$ variance calculations, conditional in this case, to develop the Studentization, and we will focus most of our attention on these. Discussion of other options can be found in Section \ref{supp:other standard errors}, with some mention in Section \ref{supp:theorems lp}. The fixed-$n$ variance of the centering is defined as 
\[
	\vartheta^2 = \V\left[\that \big| X_1, \ldots, X_n \right]  = \frac{1}{n\h^{1 + 2\v}}  \v!^2 \be_\v'\Gp^{-1} (\h \O_\bullet \bS \O_\bullet' /n) \Gp^{-1} \be_\v,\]
where either $\O_\bullet = \Op$ or $\Orbc$ depending on the centering and
\begin{itemize}
	\item $\Sigma = \diag(v(X_i): i = 1,\ldots, n)$, with $v(x) = \V[Y \vert X = x]$. 
\end{itemize}
The rateless portions of the variance is defined by $\sigma^2 \defsym (n \h^{1 + 2\v})  \V\left[\that \big| X_1, \ldots, X_n \right] =  (n \h^{1 + 2\v}) \vartheta^2$, with, in particular
\begin{align}
	\begin{split}
		\label{suppeqn:variance lp}
		\sp^2 & =  \v!^2 \be_\v'\Gp^{-1} (\h \Op \bS \Op' /n) \Gp^{-1} \be_\v, 	\qquad\text{and}  		\\
		\srbc^2 & =  \v!^2 \be_\v'\Gp^{-1} (\h \O_\RBC \bS \O_\RBC' /n) \Gp^{-1} \be_\v ,
	\end{split}
\end{align}

The only unknown piece of these is the conditional variance matrix $\bS$, which we estimate using either
\begin{itemize}
	\item $\Shatp = \diag(\hat{v}(X_i): i = 1,\ldots, n)$, with $\hat{v}(X_i) = ( Y_i - \br_p(X_i - \x)'\bhat_p )^2$ for $\bhat_p$ defined in  Equation \eqref{suppeqn:lp}, or 
	\item $\Shatrbc = \diag(\hat{v}(X_i): i = 1,\ldots, n)$, with $\hat{v}(X_i) = ( Y_i - \br_{p+1}(X_i - \x)'\bhat_{p+1} )^2$ for $\bhat_{p+1}$ defined exactly as in Equation \eqref{suppeqn:lp} but with $p+1$ in place of $p$ and $\b$ in place of $\h$.
\end{itemize}
The estimators $\hat{v}(X_i)$, using either $p$ or $p+1$, are not estimators of the function $v(\cdot)$ of \eqref{suppeqn:model} per se, but rather are a convenient notation for predicted residuals.

The scale estimator $\vhat$ of $\i$ of \eqref{suppeqn:ci lp} is thus one of 
\begin{eqnarray}
	\begin{split}
		\label{suppeqn:se lp}
		\vhat^2 & = \frac{\shatp^2}{n\h^{1+2\v}}, \qquad\qquad && \shatp^2  :=  \v!^2 \be_\v'\Gp^{-1} (\h \Op \Shatp \Op' /n) \Gp^{-1} \be_\v,  \qquad \text{or}   	\\
		\vhat^2 & = \vhatrbc^2 := \frac{\shatrbc^2}{n\h^{1+2\v}}, \qquad\quad  && \shatrbc^2 :=  \v!^2 \be_\v'\Gp^{-1} (\h \Orbc \Shatrbc \Orbc' /n) \Gp^{-1} \be_\v,
	\end{split}
\end{eqnarray}

\begin{remark}
For notational, and more importantly, practical/computational simplicity, the standard errors use the same local polynomial regressions (same kernel, bandwidth, and order) as the point estimates. Changing this results in changes to the constants and potentially (depending on the choices of $\h$, $\b$, and $p$) the rates for the coverage error expansions. Further, the procedure as defined here is simple to implement because the bases $\br_p(X_i - \x)$ and $\br_{p+1}(X_i - \x)$ and vectors $\bhat_p$ and $\bhat_{p+1}$ are already available. Other standard errors are discussed in Section \ref{supp:other standard errors} and, for asymptotic versions, briefly in the main paper.
\end{remark}

%
%
%\begin{remark}[Other Inference Method: Empirical Likelihood]
%	\cite*{Chen-Qin2000_Bmka} propose empirical likelihood based linear polynomial inference that is not of the form \eqref{suppeqn:ci lp}, but still fundamentally uses undersmoothing. Under slightly different assumptions they obtain a coverage expansion that matches ours for $\ip$ with symmetric quantiles (in the main text), and therefore the empirical likelihood procedure suffers the same drawbacks as given here for undersmoothing. In particular, it will not attain the minimax rate that robust bias correction does. Indeed, \cite*{Chen-Qin2000_Bmka} give an optimal undersmoothing bandwidth, with the same rate of decay, $n^{-1/3}$, as those discussed here and as far back as \cite*{Hall1992_AoS_density}, and notes that this yields coverage error of $O(n^{-2/3})$. They do not consider bias correction.
%\end{remark}
%
%

%%%%%%%%%%%%%%%%%%%%%%%%%%%%%%%%%%%%%%%%%%%%%%%%%%%%%%%%%%%%%%%%%%%%%%
%%%%%%%%%%%%%%%%%%%%%%%%%%%%%%%%%%%%%%%%%%%%%%%%%%%%%%%%%%%%%%%%%%%%%%
\subsection{Assumptions}
	\label{supp:assumptions lp}

The two following assumptions are sufficient for our results, both directly copied from the main text. See discuss there. The first defines the class of distributions of the data, denoted $\F_\S$.
\begin{assumption}
	\label{suppasmpt:dgp lp}
	Let $\F_\S$ be the set of distributions $\f$ for the pair $(Y,X)$ which obey model \eqref{suppeqn:model} and for which there exist constants $\S \geq \v$, $\s \in(0,1]$, $0<c<C<\infty$, and a neighborhood of $\x$ on the support of $X$, none of which depend on $\f$, such that for all $x, x'$ in the neighborhood the following hold. 
	\begin{enumerate}
		
		\item The Lebesgue density of $(Y,X)$, $f_{yx}(\cdot)$, the Lebesgue density of $X$, $f(\cdot)$, and $v(x) := \V[Y | X=x]$, are each continuous and lie inside $[c,C]$, and $\E[|Y|^{8+c} \vert X = x] \leq C$.
				
		\item $\mu(\cdot)$ is $\S$-times continuously differentiable and $|\mu^{(\S)}(x) - \mu^{(\S)}(x') |\leq C |x - x'|^{\s}$.
		
	\end{enumerate}	
	Throughout,	$\{(Y_1, X_1), \ldots, (Y_n, X_n)\}$ is a random sample from $(Y,X)$.
\end{assumption}

Second, the class of confidence intervals is governed by the following condition on the kernel function $K(\cdot)$ and polynomial degree $p$. We impose the following throughout.
\begin{assumption}
	\label{suppasmpt:ci lp}
	The kernel $K$ is supported on $[-1,1]$, positive, bounded, and even. Further, $K(u)$ is either constant (the uniform kernel) or $(1, K(u) \br_{3(k+1)}(u))'$ is linearly independent on $[-1,0]$ and $[0,1]$, where $k = p$ if $\ti$ is based on $\mhat_p^{(\v)}$ and $\shatp$, and $k = p+1$ if $\ti$ uses $\thatrbc$ or $\shatrbc$. The order $p$ is at least $\v$.
\end{assumption}

%%%%%%%%%%%%%%%%%%%%%%%%%%%%%%%%%%%%%%%%%%%%%%%%%%%%%%%%%%%%%%%%%%%%%%
%%%%%%%%%%%%%%%%%%%%%%%%%%%%%%%%%%%%%%%%%%%%%%%%%%%%%%%%%%%%%%%%%%%%%%
\section{Main Theoretical Results}
	\label{supp:theorems lp}

%%%%%%%%%%%%%%%%%%%%%%%%%%%%%%%%%%%%%%%%%%%%%%%%%%%%%%%%%%%%%%%%%%%%%%
\subsection{Main Result (Theorem 1 in the Paper)}

We now give the main technical result of the paper, a uniformly (in $\f \in \F_\S$) valid Edgeworth expansion of the distribution function of a generic local polynomial based $t$-statistic, from which coverage error follows for any $\i$. This result is the same as Theorem 1 in the main text. 

The terms of the Edgeworth expansion are defined as
\begin{align}
	\begin{split}
	 	\label{suppeqn:ee terms}
		E_{\ti,\f}(z)  &  = \frac{1}{\sqrt{n \h} } \w_{1,\ti,\f}(z)     +    \bti \w_{2,\ti,\f}(z)     +    \lambda_{\ti,\f} \w_{3,\ti,\f}(z)   		\\
		               &    \quad   +    \frac{1}{n \h}  \w_{4,\ti,\f}(z)       +    \bti^2 \w_{5,\ti,\f}(z)   +    \frac{1}{\sqrt{n \h} } \bti  \w_{6,\ti,\f}(z) ,
	\end{split}
\end{align}
where:
\begin{itemize}
	\item $z$ is the point of evaluation of the distribution, 
	\item $\bti$ denotes the generic non-random (fixed-$n$) bias of the $\sqrt{n\h^{1+2\v}}$-scaled numerator of $\ti$, detailed in all cases in Section \ref{supp:bias lp},
	\item $\lambda_{\ti,\f}$ denotes the mismatch between the variance of the numerator of the $t$-statistic and the population standardization used, discussed in Section \ref{supp:other standard errors}, and
	\item the six terms $\w_{k,\ti,\f}(z)$, $k=1,2, \ldots, 6$, are non-random functions bounded uniformly in $\F_\S$, and bounded away from zero for at least one $\f\in\F_\S$, whose exact forms are computed in Section \ref{supp:terms lp}. 
\end{itemize}

The main result is now the following, which is identical to Theorem 1 in the main paper. Let $\Phi(z)$ is the standard Normal distribution function. (Recall that asymptotic orders and their in-probability versions are always required to hold uniformly in $\F_\S$ throughout.)
\begin{theorem}
	\label{suppthm:EE lp}
	Let Assumptions \ref{suppasmpt:dgp lp} and \ref{suppasmpt:ci lp} hold, and assume that
	\begin{equation*}
		\label{suppeqn:bandwidth requirements}
		\log(n\h)^{2+\gamma} / n \h =o(1),     \quad    \bti \log(n\h)^{1+\gamma} =o(1),     \quad \lambda_{\ti,\f} = o(1), \quad  \rho = O(1),
	\end{equation*}
	for any $\gamma$ bounded away from zero uniformly in $\F_\S$. Then,
	\[
		\lim_{n\to\infty} \; \sup_{\f \in \F_\S} \; r_{\ti,\f}^{-1} \; \sup_{z \in \R} \; \Big| \P_\f[ \ti < z]  - \Phi(z) - E_{\ti,\f}(z) \Big| = 0
	\]
	holds with $E_{\ti,\f}(z)$ of \eqref{suppeqn:ee terms} and $r_{\ti,\f} = \max\{(n\h)^{-1}, \bti^2,  (n\h)^{-1/2}\bti, \lambda_{\ti,\f} \}$.
\end{theorem}

%%%%%%%%%%%%%%%%%%%%%%%%%%%%%%%%%%%%%%%%%%%%%%%%%%%%%%%%%%%%%%%%%%%%%%
\subsection{Proofs for Corollaries 1 and 2 in the Main Paper}

\begin{proof}[Proof of Corollary 1 in the Main Paper]
Define $C_{\i,\f}(z_l,z_u) = E_{\ti,\f}(z_u) - E_{\ti,\f}(z_l)$ and let $r_\i$ be such that $C_{\i,\f}(z_l,z_u) = O(r_\i)$. One can always take $r_{\i} = \sup_{\f \in \F_\S} r_{\ti,\f}$ for $r_{\ti,\f}$ given in Theorem \ref{suppthm:EE lp}. Then, for any $\i$ dual to $\ti$,
\begin{align*}
	r_\i^{-1} & \; \sup_{\f \in \F_\S} \; \Big|  \P_\f \big[ \mu^{(\v)}(\x) \in \i \big] - (1-\alpha) - C_{\i,\f}(z_l,z_u) \Big|  		\\
	& = r_\i^{-1} \; \sup_{\f \in \F_\S} \; \Big|  \P_\f[ \ti < z_u] - \P_\f[ \ti < z_l] - (1-\alpha) - C_{\i,\f}(z_l,z_u) \Big|  		\\
	& \leq r_\i^{-1} \; \sup_{\f \in \F_\S} \; \Big|  \Phi(z_u) + E_{\ti,\f}(z_l) - \Phi(z_l) - E_{\ti,\f}(z_l) - (1-\alpha) - C_{\i,\f}(z_l,z_u) \Big|  		\\
	& \quad + r_\i^{-1} \; \sup_{\f \in \F_\S} \; \Big| \P_\f[ \ti < z]  - \Phi(z) - E_{\ti,\f}(z) \Big| + r_\i^{-1} \; \sup_{\f \in \F_\S} \; \Big| \P_\f[ \ti < z]  - \Phi(z) - E_{\ti,\f}(z) \Big|.
\end{align*}
The first line is zero by definition. Taking the limit as $n \to \infty$ of the second and applying Theorem \ref{suppthm:EE lp} yields the result.
\end{proof}

\begin{proof}[Proof of Corollary 2 in the Main Paper]
Recall that $C_{\i,\f}(z_l,z_u) = E_{\ti,\f}(z_u) - E_{\ti,\f}(z_l)$. The functions $\w_1$ and $\w_2$ are even functions of $z$ while the remainder are odd. Therefore, the coverage error of $i \in \I_p$ with $z_l = -z_u$ and $\lambda_{\ti,\f} \equiv 0$ vanishes faster than those without these properties. Identifying the minimum possible worst-case coverage error requires minimizing the $w_4$, $w_5$, and $w_6$ terms of Equation \eqref{suppeqn:ee terms}. For a fixed bandwidth sequence $\h$, this amounts to comparing the rate at which the bias $\bti = o(1)$. In every smoothness case, this rate can be found in Section \ref{supp:bias lp}: specifically, Tables \ref{supptable:rbc bias list} and \ref{supptable:rbc bias list} show the fastest attainable rate in every case. The result follows by plugging the case-specific rate into
\[
	\frac{1}{n \h} C_1         +    \bti^2 C_2    +    \frac{1}{\sqrt{n \h} } \bti  C_3,
\]
and minimizing with respect to $\h$. The constants $C_1$, $C_2$, and $C_3$, collecting the other portions of the terms, are immaterial, as this calculation only requires rates.
\end{proof}

	%% NOTES for interior point: 
	%%	p = S
	%%	US has already used all smoothness, RBC just adds noise
	%%	p = (S-1)
	%%	The US bias is order (p+1)=S, there's no "next" bias term. RBC attacks the only characterizable bias term and knocks it down.
	%%	This is the case where RBC eats up all the remaining smoothness.
	%%	p=(S-2)
	%%	This is where interior matters. RBC knocks down the p+1 term. 
	%%	The bias of the bias estimator would be (q+1) and the non-targeted bias would be (p+1)
	%%	but these are both zero by symmetry, and so are h^{p+3} = {q+2} = S+1 > S+s. 
	%%	p\leq S-3
	%%	Nonbinding smoothness. Matches the loc poly treatment in JASA. 
	%%	This is the case where we can characterize everything.

	%% NOTES for boundary point: 
	%%	p = S
	%%	US has already used all smoothness, RBC just adds noise
	%%	p = (S-1)
	%%	The US bias is order (p+1)=S, there's no "next" bias term. RBC attacks the only characterizable bias term and knocks it down.
	%%	This is the case where RBC eats up all the remaining smoothness.
	%%	p\leq S-2
	%%	Nonbinding smoothness.
	%%	This is the case where we can characterize everything.

%%%%%%%%%%%%%%%%%%%%%%%%%%%%%%%%%%%%%%%%%%%%%%%%%%%%%%%%%%%%%%%%%%%%%%
%%%%%%%%%%%%%%%%%%%%%%%%%%%%%%%%%%%%%%%%%%%%%%%%%%%%%%%%%%%%%%%%%%%%%%
\subsection{Proof of Theorem \ref{suppthm:EE lp} (Theorem 1 in the paper) without Bias Correction}
	\label{supp:proof lp us}

The goal of this section is to prove that the Edgeworth expansion of Theorem \ref{suppthm:EE lp} is valid for $\tp = T(\mhat_{p+1}^{(\v)},\shatp^2/(n\h^{1+2\v}))$. The proof for $\trbc$ is essentially the same from a conceptual and technical point of view, just with more notation and a repetition of the same steps, and so only a sketch is provided. See Section \ref{supp:proof lp rbc}. We also restrict to the fixed-$n$, HC0 standard errors of \eqref{suppeqn:se lp}, which, in particular, render $\lambda_{\ti,\f} \equiv 0$. Other possibilities are discussed in Section \ref{supp:other standard errors}. The terms of the expansion are computed, in a formal manner, in Section \ref{supp:terms lp}.

For notational ease, we sometimes drop subscripts, along with the point of evaluation and/or dependence on $\f$. Also define
\begin{itemize}
	\item $\tO = \sqrt{n\h}$
\end{itemize}
Recall that asymptotic orders and their in-probability versions are always required to hold uniformly in $\F_\S$ throughout.

The proof consists of three main steps, which are tackled in the subsections below.
\begin{description}
	\itemsep=1em

	\item [Step {\bf (I)}] -- Section \ref{supp:proof lp us step 1}
	
		Show that
		\begin{equation}
			\label{suppeqn:step 1}
			\P_\f \left[ \tp < z \right] = \P_\f \left[ \breve{T} < z \right]  + o\left( (n\h)^{-1} + (n\h)^{-1/2} \btp  +  \btp^2 \right),
		\end{equation}
		for a smooth function $\breve{T} := \breve{T}(\tO^{-1} \sumi \bZ_i)$, where $\bZ_i$ a random vector consisting of functions of $(Y_i, X_i, \e_i)$ that, among other requirements, obeys Cram\'er's condition under our assumptions.
		
	\item [Step {\bf (II)}] -- Section \ref{supp:proof lp us step 2}
	
		Prove that $\sumi \V[\bZ_i]^{-1/2}(\bZ_i - \E[\bZ_i])/\sqrt{n}$ obeys an Edgeworth expansion.
		
	\item [Step {\bf (III)}] -- Section \ref{supp:proof lp us step 3}
	
		Prove that the expansion for $\tp$ holds and that it holds uniformly over $\f \in \F_\S$. 

\end{description}
Numerous intermediate results relied upon in the proof are collected as lemmas that are stated and proved in Section \ref{supp:lemmas lp}.

Unless it is important to emphasize the dependence on $\f$, this will be suppressed to save notation; for example $\P = \P_\f$. Throughout proofs $C$ shall be a generic conformable constant that may take different values in different places. If more than one constant is needed, $C_1$, $C_2$, \ldots, will be used. Also define 
\begin{itemize}

	\item $r_{\tp,\f} = \max\{\tO^{-2}, \btp^2, \tO^{-1} \btp\}$, i.e.\ the slowest vanishing of the rates, and

	\item $r_n$ as a generic sequence that obeys $r_n = o(r_{\tp,\f})$.
\end{itemize}
We will frequently use the elementary probability bounds that for random $A$ and $B$ and positive fixed scalars $a$ and $b$, $\P[|A + B| > a] \leq \P[|A| > a/2] + \P[| B| > a/2]$ and $\P[|AB| > a] \leq \P[|A| > b] + \P[| B| > a/b]$, also relying on the elementary bound $|AB| \leq |A||B|$ for conformable vectors or matrixes $A$ and $B$.

%%%%%%%%%%%%%%%%%%%%%%%%%%%%%%%%%%%%%%%%%%%%%%%%%%%%%%%%%%%%%%%%%%%%%%
\subsubsection{Step (I)}
	\label{supp:proof lp us step 1}

We now prove Equation \eqref{suppeqn:step 1} holds for suitable choices of $\breve{T}$ and $\bZ_i$. Notice that the ``numerator'' portion, $\Gp^{-1} \Op \left( \bY  - \bR \bbeta_p \right)/n$ is already a smooth function of well-behaved random variables, and will thus be incorporated into $\breve{T}$. Our difficulty lies with the Studentization, and in particular, the estimated residuals. We will start by expanding $\shatp^2$ (see Equation \eqref{suppeqn:shat terms}). Substituting this expansion into $\tp$, we will identify the leading terms, collected as appropriate into $\breve{T}$ (Equation \eqref{suppeqn:breve T}) and $\bZ_i$ (Equation \eqref{suppeqn:bZ}), and the remainder terms, collected in $U_n := \tp - \breve{T}$ (Equation \eqref{suppeqn:U}). {\bf Step (I)} is complete upon showing that $U_n$ can be ignored in the expansion; this occupies the latter half of the present subsection.

To begin, recall that $\shatp^2 = \v!^2 \be_\v'\Gp^{-1}  (\h \Op \Shatp \Op' /n) \Gp^{-1} \be_\v$. The matrix $\Gp^{-1}$, present in the numerator as well, enters smoothly and is itself smooth in elements of $\tO^{-1} \sumi \bZ_i$. Thus our focus is on the center matrix, $(\h \Op \Shatp \Op' /n)$, which contains the estimated residuals. Using $\bRc\H = \bR$ (and for each observation, $\br_p(X_i - \x) \H^{-1} = \br_p(\Xhi)$) and $\Gp  = \Op \bRc/n$ we have
\[ \br_p(X_i - \x)'\bhat_p = \br_p(X_i - \x)' \H^{-1} \Gp^{-1} \Op \bY/n =  \br_p(\Xhi)' \Gp^{-1} \Op \bY/n\]
and
\[ \br_p(X_i - \x)'\bbeta_p = \br_p(X_i - \x)' \H^{-1} \Gp^{-1} (\Op \bRc/n) \H \bbeta_p =  \br_p(\Xhi)' \Gp^{-1} \Op \bR \bbeta_p/n .\]
We use these forms to expand as follows:
\begin{align*}
	\frac{\h}{n} \Op \Shatp \Op' & = \frac{1}{n \h} \sumi (K^2 \br_p \br_p')(\Xhi) \hat{v}(X_i) 		\\
	& = \frac{1}{n \h} \sumi (K^2 \br_p \br_p')(\Xhi) \left(  Y_i - \br_p(X_i - \x)'\bhat_p \right)^2   		 \\
	& = \frac{1}{n \h} \sumi (K^2 \br_p \br_p')(\Xhi)  \left( \e_i  + \left[\mu(X_i) -  \br_p(X_i - \x)'\bbeta_p\right]   +  \br_p(X_i - \x)'\left[ \bbeta_p - \bhat_p\right] \right)^2 	 \\
	& = \frac{1}{n \h} \sumi (K^2 \br_p \br_p')(\Xhi)  \left( \e_i  + \left[\mu(X_i) -  \br_p(X_i - \x)'\bbeta_p\right]   -  \br_p(\Xhi)' \Gp^{-1} \Op \left[\bY - \bR \bbeta_p\right] /n \right)^2. 	
\end{align*}
The expansion of $\shatp^2$ is then
\begin{equation}
	\label{suppeqn:shat terms}
	\shatp^2 = \v!^2 \be_\v'\Gp^{-1}  \Big( \bV_1  +  2 \bV_4  - 2 \bV_2  + \bV_3  - 2 \bV_5  + \bV_6 \Big) \Gp^{-1} \be_\v
\end{equation}
where
\begin{align*}
	\bV_1 & = \frac{1}{n \h} \sumi (K^2 \br_p \br_p')(\Xhi) \e_i^2,   		\\  %A_{1,1} in JASA, A^2 on whiteboard
	\bV_2 & = \frac{1}{n \h} \sumi (K^2 \br_p \br_p' \br_p')(\Xhi) \e_i \Gp^{-1} \Op \left[\bY - \bR \bbeta_p\right] /n,  		\\   %A_{1,3} in JASA, 2AC on whiteboard
	\bV_3 & = \frac{1}{n \h} \sumi (K^2 \br_p \br_p')(\Xhi)  \left[\mu(X_i) -  \br_p(X_i - \x)'\bbeta_p\right]^2   ,			\\    %A_{1,6} in JASA, B^2 on whiteboard
	\bV_4 & = \frac{1}{n \h} \sumi (K^2 \br_p \br_p')(\Xhi) \left\{  \e_i \left[\mu(X_i) -  \br_p(X_i - \x)'\bbeta_p\right] \right\},  		\\  %A_{1,2} in JASA, 2AB on whiteboard
	\bV_5 & = \frac{1}{n \h} \sumi (K^2 \br_p \br_p' \br_p')(\Xhi) \left[\mu(X_i) -  \br_p(X_i - \x)'\bbeta_p\right] \Gp^{-1} \Op \left[\bY - \bR \bbeta_p\right]/n ,  		\\  %A_{1,4}+A_{1,7} in JASA, 2BC on whiteboard
	\bV_6 & = \frac{1}{n \h} \sumi (K^2 \br_p \br_p')(\Xhi)  \big\{ \br_p(\Xhi)'\Gp^{-1} \Op \left[\bY - \bR \bbeta_p\right] /n \big\}^2.      		 %A_{1,5}+A_{1,8} in JASA, C^2 on whiteboard
\end{align*}
With these terms in hand, define
\begin{itemize}
	
	\item $\tO = \sqrt{n \h}$
	
	\item $\sbp^2 = \v!^2 \be_\v'\Gp^{-1}  \Big( \bV_1 -  2 \bV_2  +  2 \bV_4  -  2 \breve{\bV}_5  + \breve{\bV}_6 \Big) \Gp^{-1} \be_\v$, where, with $\left[\Gp^{-1}\right]_{l_i, l_j}$ the $\{l_i+1, l_j+1\}$ element of $\Gp^{-1}$, we define
		\begin{align*}
%			\breve{V}_4  &  =   \frac{1}{n \h} \sumi (K \br_p \br_p')(\Xhi) \left\{  \e_i \E\left[K(\Xhi)\left(\mu(X_i) -  \br_p(X_i - \x)'\bbeta_p\right)\right] \right\},   		\\
			\breve{\bV}_5  &  =   \sum_{l_i=0}^p \sum_{l_j=0}^p  \left[\Gp^{-1}\right]_{l_i, l_j} \E\left[ (K^2 \br_p \br_p') (\Xhi) (\Xhi)^{l_i} \left(\mu(X_i) -  \br_p(X_i - \x)'\bbeta_p\right)  \right]   			 \\
			& \qquad \qquad \qquad\qquad    \times   \frac{1}{n \h} \sumj \bigg\{  K(\Xhj)  (\Xhj)^{l_j}  \left(Y_j - \br_p(X_j - \x)'\bbeta_p\right)  \bigg\},    			\\
			\breve{\bV}_6 & = \sum_{l_{i_1}=0}^p \sum_{l_{i_2}=0}^p \sum_{l_{j_1}=0}^p \sum_{l_{j_2}=0}^p   \left[\Gp^{-1}\right]_{l_{i_1}, l_{j_1}}  \left[\Gp^{-1}\right]_{l_{i_2}, l_{j_2}}  \E\left[ \h^{-1} (K^2 \br_p \br_p')(\Xhi)   (\Xhi)^{l_{i_1} + l_{i_2}}   \right]		  			 \\
			& \qquad  \times  
			\frac{1}{(n \h)^2}   \sumj \sumk    K(\Xhj) (\Xhj)^{l_{j_1}}  \left(Y_j - \br_p(X_j - \x)'\bbeta_p\right)  K(\Xhk)   (\Xhk)^{l_{j_2}}  \left(Y_k - \br_p(X_k - \x)'\bbeta_p\right) .
		\end{align*}
		
\end{itemize}
Next, using Equation \eqref{suppeqn:bias 1} to rewrite $\mu^{(\v)}$, canceling $\h^\v$, and adding and subtracting $\sbp^{-1}$, write $\tp$ as
\begin{align*}
	\tp & = \shatp^{-1} \sqrt{n \h^{1 + 2\v}} (\thatp - \tf)   		\\
	& = \shatp^{-1} \sqrt{n \h^{1 + 2\v}} \v! \be_\v' \Gp^{-1} \Op \left( \bY  - \bR \bbeta_p \right)/ (n\h^\v)  		\\
	& = \shatp^{-1} \tO \v! \be_\v' \Gp^{-1} \Op \left( \bY  - \bR \bbeta_p \right)/ n  		\\
	& =  \sbp^{-1} \tO \v! \be_\v'  \Gp^{-1} \Op \left( \bY  - \bR \bbeta_p \right)/ n   +    \left( \shatp^{-1} -  \sbp^{-1}\right) \tO \v! \be_\v' \Gp^{-1} \Op \left( \bY  - \bR \bbeta_p \right)/ n  		\\
	& =: \breve{T}  +  U_n.
\end{align*}
Then, referring back to Equation \eqref{suppeqn:step 1}, we have 
\[
	\P \left[ \tp < z \right] = \P \left[ \breve{T}  +  U_n < z \right],
\]
with 
\begin{equation}
	\label{suppeqn:U}
	U_n = \left( \shatp^{-1} -  \sbp^{-1}\right) \tO \v! \be_\v' \Gp^{-1} \Op \left( \bY  - \bR \bbeta_p \right)/ n
\end{equation}
and
\begin{align}
	\label{suppeqn:breve T}
	\breve{T} & =  \sbp^{-1} \tO \v! \be_\v'  \Gp^{-1} \Op \left( \bY  - \bR \bbeta_p \right)/ n.
\end{align}
As required, $\breve{T}:= \breve{T}(\tO^{-1} \sumi \bZ_i)$ is a smooth function of the sample average of $\bZ_i$, which is given by
\begin{align}
	\begin{split}
		\label{suppeqn:bZ}
		\bZ_i  = \Bigg(
		& \Big\{  (K \br_p)(\Xhi) (Y_i - \br_p(X_i - \x)'\bbeta_p)  \Big\}'  ,    		\\ 		  %from  numerator
		& \vech\Big\{  (K \br_p \br_p')(\Xhi)  \Big\}' ,    		\\ 		  %from  numerator, Gamma matrix
		& \vech\Big\{  (K^2 \br_p \br_p')(\Xhi) \e_i^2   \Big\}'  ,    		\\ 		  %from  \bV_1 
		& \vech\Big\{  (K^2 \br_p \br_p' )(\Xhi) (\Xhi)^0 \e_i   \Big\}', \vech\Big\{  (K^2 \br_p \br_p' )(\Xhi) (\Xhi)^1 \e_i   \Big\}',   		\\
		& \qquad \vech\Big\{  (K^2 \br_p \br_p' )(\Xhi) (\Xhi)^2 \e_i   \Big\}' ,\ldots, \vech\Big\{  (K^2 \br_p \br_p' )(\Xhi) (\Xhi)^p \e_i   \Big\}' ,	\\   %from   \bV_2
		& \vech\Big\{  (K^2 \br_p \br_p')(\Xhi) \big\{  \e_i \big[\mu(X_i) -  \br_p(X_i - \x)'\bbeta_p\big] \big\}   \Big\}'   		  %from  \bV_4
		\Bigg)'.
	\end{split}
\end{align}
In order of their listing above, these pieces come from (i) the ``score'' portion of the numerator, (ii) the ``Gram'' matrix $\Gp$, (iii) $\bV_1$, (iv) $\bV_2$, and (v) $\bV_4$. Notice that $\breve{\bV}_5$ and $\breve{\bV}_6$ do not add any additional elements to $\bZ_i$.

Equation \eqref{suppeqn:step 1} now follows from Lemma \ref{supplem:delta}(a), which completes {\bf Step (I)}, if we can show that 
\begin{equation}
	\label{suppeqn:delta}
	r_{\tp,\f}^{-1} \P [|U_n| > r_n] = o(1),
\end{equation}
where $r_{\tp,\f} = \max\{\tO^{-2}, \btp^2, \tO^{-1} \btp \}$ and $r_n = o(r_{\tp,\f})$. 

We now establish that Equation \eqref{suppeqn:delta} holds. First 
\begin{equation*}
	\frac{1}{\shatp} = \frac{1}{\sbp} \left( \frac{\shatp^2}{\sbp^2} \right)^{-1/2} = \frac{1}{\sbp} \left( 1 + \frac{ \shatp^2 - \sbp^2}{\sbp^2} \right)^{-1/2},
\end{equation*}
and hence a Taylor expansion gives
\footnote{
	It is not necessary to retain higher order terms in the Taylor series, for example via
	\[\frac{1}{\shatp} = \frac{1}{\sbp} \left[ 1 - \frac{1}{2} \frac{ \shatp^2 - \sbp^2}{\sbp^2}  + \frac{1}{2!} \frac{3}{4} \left( \frac{ \shatp^2 - \sbp^2}{\sbp^2} \right)^2   - \frac{1}{3!} \frac{15}{8} \left( \frac{ \shatp^2 - \sbp^2}{\sbp^2} \right)^3 \frac{\sbp^7}{\bar{\sigma}^7}     \right],\]
	because $\sbp^2$ is constructed exactly to retain all the important terms from $\shatp^2$. Put differently, because $(\shatp^2 - \sbp^2) \tO \v! \be_\v' \Gp^{-1} \Op \left( \bY  - \bR \bbeta_p \right)/ n$ will be shown to be ignorable in the process of verifying Equation \eqref{suppeqn:delta}, it is immediate that terms from $(\shatp^2 - \sbp^2)^2$ can also be ignored, as they are higher order. A longer Taylor expansion can be useful when computing the terms of the Edgeworth expansion.
}
	\[\frac{1}{\shatp} = \frac{1}{\sbp} \left[ 1 - \frac{1}{2} \frac{ \shatp^2 - \sbp^2}{\sbp^2}  + \frac{1}{2!} \frac{3}{4} \left( \frac{ \shatp^2 - \sbp^2}{\sbp^2} \right)^2 \frac{\sbp^5}{\bar{\sigma}^5}     \right],\]
for a point $\bar{\sigma}^2 \in [\sbp^2, \shatp^2]$, and so
\begin{equation}
	\label{suppeqn:variance taylor}
	\shatp^{-1} - \sbp^{-1} =  - \frac{1}{2} \frac{ \shatp^2 - \sbp^2}{\sbp^3}  + \frac{3}{8} \frac{  \left( \shatp^2 - \sbp^2\right)^2}{\bar{\sigma}^5} .
\end{equation}
Plugging this into the definition of $U_n$ gives
\begin{equation*}
	U_n = \left(  - \frac{1}{2 \sbp^3}  + \frac{3}{8} \frac{ \shatp^2 - \sbp^2}{\bar{\sigma}^5} \right) \left( \shatp^2 - \sbp^2\right) \tO \v! \be_\v' \Gp^{-1} \Op \left( \bY  - \bR \bbeta_p \right)/ n.
\end{equation*}
Therefore, if $\left| \shatp^2 - \sbp^2 \right| = o_\P(1)$, the result in \eqref{suppeqn:delta} will hold, and {\bf Step (I)} will be complete, once we have shown that
\begin{align}
	r_{\tp,\f}^{-1} & \P \left[\left| \left( \shatp^2 - \sbp^2\right) \tO \v! \be_\v' \Gp^{-1} \Op \left( \bY  - \bR \bbeta_p \right)/ n \right|  > r_n  \right]    		\nonumber\\
	& = r_{\tp,\f}^{-1} \P \left[\left|  \left(  \v!^2 \be_\v'\Gp^{-1}  \left( \bV_3   -  2 [\bV_5 - \breve{\bV}_5]  +   [\bV_6 - \breve{\bV}_6]  \right) \Gp^{-1} \be_\v \right) \tO \v! \be_\v' \Gp^{-1} \Op \left( \bY  - \bR \bbeta_p \right)/ n \right|  > r_n  \right]   		\nonumber \\
	& = o(1).  		\label{suppeqn:delta 1}
\end{align}
Recall that $r_{\tp,\f} = \max\{\tO^{-2}, \btp^2, \tO^{-1} \btp \}$ and $r_n = o(r_{\tp,\f})$. This is what we now verify one term at a time.

First, for the $\bV_3$ term, we claim that
\begin{align}
	r_{\tp,\f}^{-1} \P & \left[\Big| \v!^2 \be_\v'\Gp^{-1}   \bV_3  \Gp^{-1} \be_\v \tO \v! \be_\v' \Gp^{-1} \Op \left( \bY  - \bR \bbeta_p \right)/ n \Big|  > r_n  \right]   		\nonumber \\
	& \leq r_{\tp,\f}^{-1} \P \left[\Big| \v!^2 \be_\v'\Gp^{-1}   \left( \bV_3 - \E[\bV_3] \right)  \Gp^{-1} \be_\v \tO \v! \be_\v' \Gp^{-1} \Op \left( \bY  - \bM \right)/ n \Big|  > r_n  \right]  		\nonumber \\
	& \quad + r_{\tp,\f}^{-1} \P \left[\Big| \v!^2 \be_\v'\Gp^{-1}   \E[\bV_3] \Gp^{-1} \be_\v \tO \v! \be_\v' \Gp^{-1} \Op \left( \bY  - \bM \right)/ n \Big|  > r_n  \right]  		\nonumber \\
	& \quad + r_{\tp,\f}^{-1} \P \left[\Big| \v!^2 \be_\v'\Gp^{-1}   \left( \bV_3 - \E[\bV_3] \right)  \Gp^{-1} \be_\v \tO \v! \be_\v' \Gp^{-1} \Op \left( \bM  - \bR \bbeta_p \right)/ n \Big|  > r_n  \right]  		\nonumber \\
	& \quad + r_{\tp,\f}^{-1} \P \left[\Big| \v!^2 \be_\v'\Gp^{-1}   \E[\bV_3] \Gp^{-1} \be_\v \tO \v! \be_\v' \Gp^{-1} \Op \left( \bM  - \bR \bbeta_p \right)/ n \Big|  > r_n  \right]   		\nonumber \\
	& = o(1). 		\label{suppeqn:V3}
\end{align}
For the first term, using the elementary bounds (note that $|e_q| = 1$), 
\begin{align*}
	& r_{\tp,\f}^{-1} \P \left[\Big| \v!^2 \be_\v'\Gp^{-1}   \left( \bV_3 - \E[\bV_3] \right)  \Gp^{-1} \be_\v \tO \v! \be_\v' \Gp^{-1} \Op \left( \bY  - \bM \right)/ n \Big|  > r_n  \right]   			\\
	&\quad  \leq  r_{\tp,\f}^{-1} 3 \P \left[ \left| \Gp^{-1} \right| > C_\Gamma \right]    		\\
	& \qquad +  r_{\tp,\f}^{-1} \P \left[ \tO \left| \Op \left( \bY  - \bM \right)/ n \right|  > \delta \log(\tO)^{1/2}  \right]   		\\
	& \qquad +  r_{\tp,\f}^{-1} \P \bigg[\bigg| \frac{1}{n \h} \sumi \Big\{ (K^2 \br_p \br_p')(\Xhi) \left[\mu(X_i) -  \br_p(X_i - \x)'\bbeta_p\right]^2      		\\
				& \qquad \qquad \qquad    - \E\left[ (K^2 \br_p \br_p')(\Xhi) \left[\mu(X_i) -  \br_p(X_i - \x)'\bbeta_p\right]^2 \right] \Big\}  \bigg|  > r_n \frac{1}{(|e_q| q!  C_\Gamma)^3\delta \log(\tO)^{1/2}}   \bigg]     			\\
	& = o(1),
\end{align*}
by Lemmas \ref{supplem:gamma}, \ref{supplem:var}, and \ref{supplem:bias2}. In applying the last, take the constant to be $ (|e_q| q!  C_\Gamma)^{-3} \delta^{-1}$ and note that $r_n = o(r_{\tp,\f})$ may be chosen such that $r_n \log(\tO)^{-1/2}$ vanishes slower than (i.e.\ is larger than) $\btp^2 \tO^{-2} \log(\tO)^\gamma$, making the probability in the penultimate line bounded by the one in the Lemma. For example, take $r_n = \btp \tO^{-1} \log(\tO)^{-1/2-\gamma}$ and note that
	\[\frac{r_n}{\log(\tO)^{1/2}}    =   \left(\frac{\btp}{\tO}\right)^2 \log(\tO)^\gamma  \left[ \left( \frac{\tO}{\btp}\right)^2 \frac{r_n}{\log(\tO)^{1/2 + \gamma}} \right] 
	    =   \left(\frac{\btp}{\tO}\right)^2 \log(\tO)^\gamma  \left[  \frac{\tO}{\btp} \right] ,   \]
where factor in square brackets diverges by assumption.

The second term required for result \eqref{suppeqn:V3} obeys
\begin{align*}
	& r_{\tp,\f}^{-1} \P \left[\Big| \v!^2 \be_\v'\Gp^{-1}  \E[\bV_3]  \Gp^{-1} \be_\v \tO \v! \be_\v' \Gp^{-1} \Op \left( \bY  - \bM \right)/ n \Big|  > r_n  \right]   			\\
	&\quad  \leq  r_{\tp,\f}^{-1} 3 \P \left[ \left| \Gp^{-1} \right| > C_\Gamma \right]    		\\
	& \qquad +  r_{\tp,\f}^{-1} \P \left[ \tO \left| \Op \left( \bY  - \bM \right)/ n \right|  > \log(\tO)^{1/2}  \left\{\frac{\tO^2}{\btp^2} r_n \frac{1}{(|e_q| q! C_\Gamma)^3\log(\tO)^{1/2}}\right\}  \right]   		\\
	& = o(1),
\end{align*}
using  Lemmas \ref{supplem:gamma} and \ref{supplem:var}, as the term in braces diverges (e.g.\ for $r_n = \btp^2 \log(\tO)^{-1/2}$) and $\E[\bV_3] = O (\btp^2 \tO^{-2})$ as follows:
\begin{align*}
	\E[\bV_3] & = \frac{1}{n \h} \sumi \E\left[(K^2 \br_p \br_p')(\Xhi)  \left[\mu(X_i) -  \br_p(X_i - \x)'\bbeta_p\right]^2 \right]  		\\
	& = \E\left[ \h^{-1} (K^2 \br_p \br_p')(\Xhi)  \left[\mu(X_i) -  \br_p(X_i - \x)'\bbeta_p\right]^2 \right]  		\\
	& = \frac{\btp^2}{\tO^2} \E\left[ \h^{-1} (K^2 \br_p \br_p')(\Xhi)  \left[\frac{\tO}{\btp}\left( \mu(X_i) -  \br_p(X_i - \x)'\bbeta_p\right)\right]^2 \right]  		\\
	& = O \left( \frac{\btp^2}{\tO^2} \right).
\end{align*}

The third term required for result \eqref{suppeqn:V3} obeys
\begin{align*}
	& r_{\tp,\f}^{-1} \P \left[\Big| \v!^2 \be_\v'\Gp^{-1}   \left( \bV_3 - \E[\bV_3] \right)  \Gp^{-1} \be_\v \tO \v! \be_\v' \Gp^{-1} \Op \left( \bM  - \bR \bbeta_p \right)/ n \Big|  > r_n  \right]   			\\
	&\quad  \leq  r_{\tp,\f}^{-1} 3 \P \left[ \left| \Gp^{-1} \right| > C_\Gamma \right]    		\\
	& \qquad +  r_{\tp,\f}^{-1} \P \left[ \left| \Op \left( \bM  - \bR \bbeta_p \right)/ n \right|  > \log(\tO)^{1/2}  \right]   		\\
	& \qquad +  r_{\tp,\f}^{-1} \P \bigg[\bigg|  \frac{1}{n \h} \sumi \Big\{ (K^2 \br_p \br_p')(\Xhi) \left[\mu(X_i) -  \br_p(X_i - \x)'\bbeta_p\right]^2      		\\
				& \qquad \qquad \qquad    - \E\left[ (K^2 \br_p \br_p')(\Xhi) \left[\mu(X_i) -  \br_p(X_i - \x)'\bbeta_p\right]^2 \right] \Big\}  \bigg|  > r_n \frac{1}{\tO (|e_q| q! C_\Gamma)^3\log(\tO)^{1/2}}   \bigg]     			\\
	& = o(1),
\end{align*}
by Lemmas \ref{supplem:gamma}, \ref{supplem:bias1}, and \ref{supplem:bias2}. In applying the last, take $\delta = (|e_q| q! C_\Gamma)^{-3}$ and note that $r_n = o(r_{\tp,\f})$ may be chosen such that $r_n \log(\tO)^{-1/2}$ vanishes slower than (i.e.\ is larger than) $\btp^2 \tO^{-2} \log(\tO)^\gamma$, making the probability in the penultimate line bounded by the one in the Lemma. For example, take $r_n = \btp \tO^{-1} \log(\tO)^{-\gamma}$ and note that
	\[\frac{r_n}{\tO \log(\tO)^{1/2}}    =   \left(\frac{\btp}{\tO}\right)^2 \log(\tO)^\gamma  \left[ \left( \frac{\tO}{\btp}\right)^2 \frac{r_n}{\tO \log(\tO)^{1/2 + \gamma}} \right] 
	    =   \left(\frac{\btp}{\tO}\right)^2 \log(\tO)^\gamma  \left[  \frac{1}{\btp \log(\tO)^{1/2 + 2 \gamma}} \right] ,   \]
where factor in square brackets diverges by assumption.

The fourth term follows the same pattern as the second, using Lemma \ref{supplem:bias1} in place of Lemma \ref{supplem:var}, the same way the third term followed the pattern of the first. This completes the proof of result \eqref{suppeqn:V3}.

Turning to the $\bV_5$ terms, first observe that, when all its components are considered, $\bV_5$ is a $(p+1) \times (p+1)$ matrix (from $(\br_p \br_p')(\Xhi)$) multiplied by a scalar. We write out
\begin{align*}
	\br_p' (\Xhi) \Gp^{-1} \Op \left[\bY - \bR \bbeta_p\right]/n  &  =  \frac{1}{n\h} \sumj  \left\{ \br_p' (\Xhi) \Gp^{-1} \br_p' (\Xhj) \right\} K(\Xhj) \left(Y_j - \br_p(X_j - \x)'\bbeta_p\right)    		\\
	  &  =  \frac{1}{n\h} \sumj  \left\{ \sum_{l_i=0}^p \sum_{l_j=0}^p \left[\Gp^{-1}\right]_{l_i, l_j} (\Xhi)^{l_i} (\Xhj)^{l_j} \right\} K(\Xhj) \left(Y_j - \br_p(X_j - \x)'\bbeta_p\right) .
\end{align*}
where $\left[\Gp^{-1}\right]_{l_i, l_j}$ is the $\{l_i+1, l_j+1\}$ element of $\Gp^{-1}$, which is well-behaved by Lemma \ref{supplem:gamma}. We make use of this in order to write
\begin{align}
	\v!^2 \be_\v'\Gp^{-1}  \Big[\bV_5 \Big] \Gp^{-1} \be_\v   & = \v!^2 \be_\v'\Gp^{-1} \frac{1}{n \h} \sumi (K^2 \br_p \br_p' \br_p')(\Xhi) \left[\mu(X_i) -  \br_p(X_i - \x)'\bbeta_p\right] \Gp^{-1} \Op \left[\bY - \bR \bbeta_p\right]/n\Gp^{-1} \be_\v    			\nonumber \\
	& =   \sum_{l_i=0}^p \sum_{l_j=0}^p  \v!^2 \be_\v'\Gp^{-1}  \left[\Gp^{-1}\right]_{l_i, l_j}  \frac{1}{(n \h)^2} \sumi \sumj \bigg\{   (K^2 \br_p \br_p') (\Xhi) K(\Xhj) (\Xhi)^{l_i} (\Xhj)^{l_j}   			 \nonumber \\
	& \qquad \qquad \qquad\qquad    \times \left[\mu(X_i) -  \br_p(X_i - \x)'\bbeta_p\right]    \left(Y_j - \br_p(X_j - \x)'\bbeta_p\right)  \bigg\}    \Gp^{-1} \be_\v    			\nonumber \\
	& =:  \sum_{l_i=0}^p \sum_{l_j=0}^p   \v!^2 \be_\v'\Gp^{-1} \Big\{   V_{5,1}(l_i, l_j) + V_{5,2}(l_i, l_j)  \Big\}  \Gp^{-1} \be_\v  , 		\label{suppeqn:V5 1}
\end{align}
where $V_{5,1}(l_i, l_j)$ and $V_{5,2}(l_i, l_j)$ are the ``own'' and ``cross'' summands
\begin{align*}
	V_{5,1}(l_i, l_j) & : =    \left[\Gp^{-1}\right]_{l_i, l_j}  \frac{1}{(n \h)^2} \sumi \bigg\{   (K^3 \br_p \br_p') (\Xhi) (\Xhi)^{l_i + l_j}   			 \\
	& \qquad \qquad \qquad\qquad    \times \left[\mu(X_i) -  \br_p(X_i - \x)'\bbeta_p\right]   \left(Y_i - \br_p(X_i - \x)'\bbeta_p\right)  \bigg\}        		\\
	V_{5,2}(l_i, l_j) & :=  \left[\Gp^{-1}\right]_{l_i, l_j}  \frac{1}{(n \h)^2} \sumi \sumjNoti \bigg\{   (K^2 \br_p \br_p') (\Xhi) K(\Xhj) (\Xhi)^{l_i} (\Xhj)^{l_j}   			 \\
	& \qquad \qquad \qquad\qquad    \times \left[\mu(X_i) -  \br_p(X_i - \x)'\bbeta_p\right]   \left(Y_j - \br_p(X_j - \x)'\bbeta_p\right)  \bigg\} .
\end{align*}

Recall that the goal is result \eqref{suppeqn:delta 1}. We will study one term of the double sum \eqref{suppeqn:V5 1}, i.e.\ $V_{5,1}(l_i, l_j)$ and $V_{5,2}(l_i, l_j)$ for a fixed pair $\{l_i,l_j\}$, as all terms are identically handled. If each term is ignorable in the expansion, then it follows that
\begin{align}
	r_{\tp,\f}^{-1} & \P  \left[\left|  \left(  \v!^2 \be_\v'\Gp^{-1}  \left( - 2 [\bV_5 - \breve{\bV}_5  \right) \Gp^{-1} \be_\v \right) \tO \v! \be_\v' \Gp^{-1} \Op \left( \bY  - \bR \bbeta_p \right)/ n \right|  > r_n  \right]   			\nonumber \\
	& \leq  C \max_{ 0 \leq l_i, l_j \leq p} r_{\tp,\f}^{-1} \P \bigg[\bigg|  \left(  \v!^2 \be_\v'\Gp^{-1}  \left( V_{5,1}(l_i, l_j) + V_{5,2}(l_i, l_j)   -  \breve{V}_{5,2}(l_i, l_j)  \right) \Gp^{-1} \be_\v \right)   		\nonumber \\
			& \qquad\qquad\qquad\qquad\qquad  \times \;  \tO \v! \be_\v' \Gp^{-1} \Op \left( \bY  - \bR \bbeta_p \right)/ n \bigg|  > r_n  \bigg]   		\nonumber \\
	& = o(1), 				\label{suppeqn:V5}
\end{align}
by Boole's inequality and $p$ fixed.

As hinted at in this display, $\breve{\bV}_5$ will be constructed from the pieces of $V_{5,2}(l_i, l_j)$ which contribute to the expansion. We first show that the $V_{5,1}(l_i, l_j)$ terms may be ignored. Begin by splitting $(Y_i - \br_p(X_i - \x)'\bbeta_p) = \e_i +  (\mu(X_i) - \br_p(X_ - \x)'\bbeta_p )$ everywhere, as the ``variance'' and ``bias'' type pieces have different rates, which must be accounted for: 
\begin{align*}
	r_{\tp,\f}^{-1} &  \P \left[\left|  \left(  \v!^2 \be_\v'\Gp^{-1}  \left( V_{5,1}(l_i, l_j)   \right) \Gp^{-1} \be_\v \right)   \;  \tO \v! \be_\v' \Gp^{-1} \Op \left( \bY  - \bR \bbeta_p \right)/ n \right|  > r_n  \right]    			\\
	& \leq r_{\tp,\f}^{-1} \P \left[\left|  \left(  \v!^2 \be_\v'\Gp^{-1}  \left( V_{5,1}(l_i, l_j)   \right) \Gp^{-1} \be_\v \right)   \;  \tO \v! \be_\v' \Gp^{-1} \Op \left( \bY  - \bM \right)/ n \right|  > r_n  \right]    			\\
	& \quad + r_{\tp,\f}^{-1} \P \left[\left|  \left(  \v!^2 \be_\v'\Gp^{-1}  \left( V_{5,1}(l_i, l_j)   \right) \Gp^{-1} \be_\v \right)   \;  \tO \v! \be_\v' \Gp^{-1} \Op \left( \bM  - \bR \bbeta_p \right)/ n \right|  > r_n  \right]    			\\
	& \leq r_{\tp,\f}^{-1} \P \Bigg[\Bigg|  \Bigg(  \v!^2 \be_\v'\Gp^{-1}  \Bigg(  \left[\Gp^{-1}\right]_{l_i, l_j}  \frac{1}{(n \h)^2} \sumi \bigg\{   (K^3 \br_p \br_p') (\Xhi) (\Xhi)^{l_i + l_j}   	 \\ & \qquad \qquad \qquad\qquad		\times\; \left[\mu(X_i) -  \br_p(X_i - \x)'\bbeta_p\right]^2  \bigg\}   \Bigg) \Gp^{-1} \be_\v \Bigg)   \;  \tO \v! \be_\v' \Gp^{-1} \Op \left( \bM  - \bR \bbeta_p \right)/ n \Bigg|  > r_n  \Bigg]    			\\ %bias, bias
	& \quad + r_{\tp,\f}^{-1} \P \Bigg[\Bigg|  \Bigg(  \v!^2 \be_\v'\Gp^{-1}  \Bigg(  \left[\Gp^{-1}\right]_{l_i, l_j}  \frac{1}{(n \h)^2} \sumi \bigg\{   (K^3 \br_p \br_p') (\Xhi) (\Xhi)^{l_i + l_j}   	 \\ & \qquad \qquad \qquad\qquad		\times\; \left[\mu(X_i) -  \br_p(X_i - \x)'\bbeta_p\right]^2  \bigg\}   \Bigg) \Gp^{-1} \be_\v \Bigg)   \;  \tO \v! \be_\v' \Gp^{-1} \Op \left( \bY - \bM \right)/ n \Bigg|  > r_n  \Bigg]    			\\ %bias, variance
	& \quad + r_{\tp,\f}^{-1} \P \Bigg[\Bigg|  \Bigg(  \v!^2 \be_\v'\Gp^{-1}  \Bigg(  \left[\Gp^{-1}\right]_{l_i, l_j}  \frac{1}{(n \h)^2} \sumi \bigg\{   (K^3 \br_p \br_p') (\Xhi) (\Xhi)^{l_i + l_j}   	 \\ & \qquad \qquad \qquad\qquad		\times\; \left[\mu(X_i) -  \br_p(X_i - \x)'\bbeta_p\right] \e_i  \bigg\}   \Bigg) \Gp^{-1} \be_\v \Bigg)   \;  \tO \v! \be_\v' \Gp^{-1} \Op \left( \bM   - \bR \bbeta_p  \right)/ n \Bigg|  > r_n  \Bigg]    			\\ %variance, bias
	& \quad + r_{\tp,\f}^{-1} \P \Bigg[\Bigg|  \Bigg(  \v!^2 \be_\v'\Gp^{-1}  \Bigg(  \left[\Gp^{-1}\right]_{l_i, l_j}  \frac{1}{(n \h)^2} \sumi \bigg\{   (K^3 \br_p \br_p') (\Xhi) (\Xhi)^{l_i + l_j}   	 \\ & \qquad \qquad \qquad\qquad		\times\; \left[\mu(X_i) -  \br_p(X_i - \x)'\bbeta_p\right] \e_i  \bigg\}   \Bigg) \Gp^{-1} \be_\v \Bigg)   \;  \tO \v! \be_\v' \Gp^{-1} \Op \left( \bY - \bM \right)/ n \Bigg|  > r_n  \Bigg]. %variance, variance
\end{align*}
For the first (i.e. the first term on the right hand side of the last inequality)
\begin{align*}
	r_{\tp,\f}^{-1} & \P \Bigg[\Bigg|  \Bigg(  \v!^2 \be_\v'\Gp^{-1}  \Bigg(  \left[\Gp^{-1}\right]_{l_i, l_j}  \frac{1}{(n \h)^2} \sumi \bigg\{   (K^3 \br_p \br_p') (\Xhi) (\Xhi)^{l_i + l_j}   	 \\ & \qquad \qquad \qquad\qquad		\times\; \left[\mu(X_i) -  \br_p(X_i - \x)'\bbeta_p\right]^2  \bigg\}   \Bigg) \Gp^{-1} \be_\v \Bigg)   \;  \tO \v! \be_\v' \Gp^{-1} \Op \left( \bM  - \bR \bbeta_p \right)/ n \Bigg|  > r_n  \Bigg]    			\\
	& \leq  r_{\tp,\f}^{-1} 4 \P \left[ \left| \Gp^{-1} \right| > C_\Gamma \right]    		\\
	& \quad +  r_{\tp,\f}^{-1} \P \left[ \left| \Op \left( \bM  - \bR \bbeta_p \right)/ n \right|  >  \log(\tO)^{1/2}  \right]   		\\
	& \quad +  r_{\tp,\f}^{-1} \P \bigg[\bigg| \frac{1}{n \h} \sumi \bigg\{   (K^3 \br_p \br_p') (\Xhi) (\Xhi)^{l_i + l_j}   	 \\ & \qquad \qquad \qquad\qquad\qquad\qquad		\times\; \left[\mu(X_i) -  \br_p(X_i - \x)'\bbeta_p\right]^2  \bigg\}  \bigg|  > r_n \frac{n \h}{\tO (|e_q| q! )^3C_\Gamma^4  \log(\tO)^{1/2}}   \bigg]     			\\
	& = o(1),
\end{align*}
by Lemmas \ref{supplem:gamma} and \ref{supplem:bias1}, the latter applied twice, and the fact that, for $r_n = \btp \tO \log(\tO)^{-\gamma}$, with any $\gamma > 0$
\[r_n \frac{n \h}{\tO (|e_q| q! )^3C_\Gamma^4  \log(\tO)^\gamma} \asymp \frac{\btp}{\tO} \log(\tO)^{1/2} \left[ \frac{\tO}{ \log(\tO)^{1/2 + 2 \gamma} } \right], \]
and the factor in square brackets diverges. The rest of the $V_{5,1}(l_i, l_j)$ terms are handled by exactly the same steps, but using Lemmas \ref{supplem:var}, \ref{supplem:bias1}, and \ref{supplem:bias3} as needed for the final convergence. This establishes the $V_{5,1}(l_i, l_j)$ part of Equation \eqref{suppeqn:V5}.

Turning to the $V_{5,2}(l_i, l_j)$ part of Equation \eqref{suppeqn:V5}, we again begin by splitting $(Y_i - \br_p(X_i - \x)'\bbeta_p) = \e_i +  (\mu(X_i) - \br_p(X_ - \x)'\bbeta_p )$ everywhere, just like above, 
\begin{align*}
	& r_{\tp,\f}^{-1}  \P \left[\left|  \left(  \v!^2 \be_\v'\Gp^{-1}  \left( V_{5,2}(l_i, l_j)   \right) \Gp^{-1} \be_\v \right)   \;  \tO \v! \be_\v' \Gp^{-1} \Op \left( \bY  - \bR \bbeta_p \right)/ n \right|  > r_n  \right]    			\\
	& \leq r_{\tp,\f}^{-1} \P \Bigg[\Bigg|  \Bigg(  \v!^2 \be_\v'\Gp^{-1}  \Bigg(  \left[\Gp^{-1}\right]_{l_i, l_j}  \frac{1}{(n \h)^2} \sumi \sumjNoti \bigg\{   (K^2 \br_p \br_p') (\Xhi) K(\Xhj) (\Xhi)^{l_i} (\Xhj)^{l_j}   			 \\
	& \qquad \qquad     \times \left[\mu(X_i) -  \br_p(X_i\! - \! \x)'\bbeta_p\right]   \left(\e_j \right)  \bigg\}    \Bigg) \Gp^{-1} \be_\v \Bigg)   \;  \tO \v! \be_\v' \Gp^{-1} \Op \left( \bY - \bM  \right)/ n \Bigg|  > r_n  \Bigg]    			\\  %variance, variance
	& \quad + r_{\tp,\f}^{-1} \P \Bigg[\Bigg|  \Bigg(  \v!^2 \be_\v'\Gp^{-1}  \Bigg(  \left[\Gp^{-1}\right]_{l_i, l_j}  \frac{1}{(n \h)^2} \sumi \sumjNoti \bigg\{   (K^2 \br_p \br_p') (\Xhi) K(\Xhj) (\Xhi)^{l_i} (\Xhj)^{l_j}   			 \\
	& \qquad \qquad     \times \left[\mu(X_i) -  \br_p(X_i\! - \! \x)'\bbeta_p\right]   \left(\mu(X_j) - \br_p(X_j\! - \! \x)'\bbeta_p\right)  \bigg\}    \Bigg) \Gp^{-1} \be_\v \Bigg)   \;  \tO \v! \be_\v' \Gp^{-1} \Op \left( \bY - \bM  \right)/ n \Bigg|  > r_n  \Bigg]    			\\  %bias, variance
	& \quad + r_{\tp,\f}^{-1} \P \Bigg[\Bigg|  \Bigg(  \v!^2 \be_\v'\Gp^{-1}  \Bigg(  \left[\Gp^{-1}\right]_{l_i, l_j}  \frac{1}{(n \h)^2} \sumi \sumjNoti \bigg\{   (K^2 \br_p \br_p') (\Xhi) K(\Xhj) (\Xhi)^{l_i} (\Xhj)^{l_j}   			 \\
	& \qquad \qquad     \times \left[\mu(X_i) -  \br_p(X_i\! - \! \x)'\bbeta_p\right]   \left(\e_j\right)  \bigg\}    \Bigg) \Gp^{-1} \be_\v \Bigg)   \;  \tO \v! \be_\v' \Gp^{-1} \Op \left(  \bM    - \bR \bbeta_p \right)/ n \Bigg|  > r_n  \Bigg]    			\\  %variance, bias
	& \quad + r_{\tp,\f}^{-1} \P \Bigg[\Bigg|  \Bigg(  \v!^2 \be_\v'\Gp^{-1}  \Bigg(  \left[\Gp^{-1}\right]_{l_i, l_j}  \frac{1}{(n \h)^2} \sumi \sumjNoti \bigg\{   (K^2 \br_p \br_p') (\Xhi) K(\Xhj) (\Xhi)^{l_i} (\Xhj)^{l_j}   			 \\
	& \qquad \qquad     \times \left[\mu(X_i) -  \br_p(X_i\! - \! \x)'\bbeta_p\right]   \left(\mu(X_j) - \br_p(X_j\! - \! \x)'\bbeta_p\right)  \bigg\}    \Bigg) \Gp^{-1} \be_\v \Bigg)   \;  \tO \v! \be_\v' \Gp^{-1} \Op \left(  \bM    - \bR \bbeta_p \right)/ n \Bigg|  > r_n  \Bigg]    		  %variance, bias
\end{align*}

For the first term, which has two ``variance'' terms and one bias-type term:
\begin{align*}
	& r_{\tp,\f}^{-1} \P \Bigg[\Bigg|  \Bigg(  \v!^2 \be_\v'\Gp^{-1}  \Bigg(  \left[\Gp^{-1}\right]_{l_i, l_j}  \frac{1}{(n \h)^2} \sumi \sumjNoti \bigg\{   (K^2 \br_p \br_p') (\Xhi) K(\Xhj) (\Xhi)^{l_i} (\Xhj)^{l_j}   			 \\
	& \qquad \qquad     \times \left[\mu(X_i) -  \br_p(X_i\! - \! \x)'\bbeta_p\right]   \left(\e_j \right)  \bigg\}    \Bigg) \Gp^{-1} \be_\v \Bigg)   \;  \tO \v! \be_\v' \Gp^{-1} \Op \left( \bY - \bM  \right)/ n \Bigg|  > r_n  \Bigg]    			\\  %variance, variance
	& \leq  r_{\tp,\f}^{-1} 4 \P \left[ \left| \Gp^{-1} \right| > C_\Gamma \right]    		\\
	& \quad +  r_{\tp,\f}^{-1} \P \left[ \left| \Op \left( \bY - \bM \right)/ n \right|  >  C_1 \tO^{-1} \log(\tO)^{1/2}  \right]   		\\
	& \quad +  r_{\tp,\f}^{-1} \P \bigg[\bigg| \frac{1}{n \h} \sumj \bigg\{   K(\Xhj) (\Xhi)^{l_j} \e_j  \bigg\}  \bigg|  > C_2 \tO^{-1} \log(\tO)^{1/2} \bigg]     			\\
	& \quad +  r_{\tp,\f}^{-1} \P \bigg[\bigg| \frac{1}{n \h} \sumi \left\{   (K^2 \br_p \br_p') (\Xhi) (\Xhi)^{l_i}   	\left[\mu(X_i) -  \br_p(X_i - \x)'\bbeta_p\right]  \right\}  \bigg|  > r_n \frac{\tO^2}{\tO (|e_q| q! )^3C_\Gamma^4 C_1 C_2  \log(\tO)}   \bigg]     			\\
	& = o(1),
\end{align*}
by Lemmas \ref{supplem:gamma}, \ref{supplem:var} applied twice, and \ref{supplem:bias1}. For the last, note that for $r_n = \btp \tO \log(\tO)^{-\gamma}$, with $\gamma >0$,
\[r_n \frac{\tO^2}{\tO (|e_q| q! )^3C_\Gamma^4 C_1 C_2  \log(\tO)} \asymp \frac{\btp}{\tO} \log(\tO)^\gamma \left[  \frac{\tO}{ \log(\tO)^{1+2\gamma} }  \right], \]
and the term in square brackets diverges by assumption.

Turning to the second $V_{5,2}$ term (the third and fourth will be similar), which has one ``variance'' terms and two bias-type terms:, observe that
\begin{multline*}
	r_{\tp,\f}^{-1} \P \Bigg[\Bigg|  \Bigg(  \v!^2 \be_\v'\Gp^{-1}  \Bigg(  \left[\Gp^{-1}\right]_{l_i, l_j}  \frac{1}{(n \h)^2} \sumi \sumjNoti \bigg\{   (K^2 \br_p \br_p') (\Xhi) K(\Xhj) (\Xhi)^{l_i} (\Xhj)^{l_j}   			 \\
	     \times \left[\mu(X_i) -  \br_p(X_i\! - \! \x)'\bbeta_p\right]   \left(\mu(X_j) - \br_p(X_j\! - \! \x)'\bbeta_p\right)  \bigg\}    \Bigg) \Gp^{-1} \be_\v \Bigg)   \;  \tO \v! \be_\v' \Gp^{-1} \Op \left( \bY - \bM  \right)/ n \Bigg|  > r_n  \Bigg]    \neq o(1),
\end{multline*}
because, compared to the above, Lemma \ref{supplem:var} is applied only once, while Lemma \ref{supplem:bias1} is needed twice, instead of vice versa. The slower rate in the latter implies that this term can not be ignored. Thus pieces of this will contribute to $\breve{\bV}_5$. To see which, we will first center some bias terms. Just for notational ease, define the shorthand
\[ V_{5,2,i}  =  (K^2 \br_p \br_p') (\Xhi)  (\Xhi)^{l_i} \left[\mu(X_i) -  \br_p(X_i\! - \! \x)'\bbeta_p\right] \]
and
\[ V_{5,2,j} = K(\Xhj)  (\Xhj)^{l_j}\left[\mu(X_j) - \br_p(X_j\! - \! \x)'\bbeta_p\right]. \] 
The term in question is then
\begin{align*}
	& \Bigg(  \v!^2 \be_\v'\Gp^{-1}  \Bigg(  \left[\Gp^{-1}\right]_{l_i, l_j}  \frac{1}{(n \h)^2} \sumi \sumjNoti  V_{5,2,i} V_{5,2,j}   \Bigg) \Gp^{-1} \be_\v \Bigg)   \;  \tO \v! \be_\v' \Gp^{-1} \Op \left( \bY - \bM  \right)/ n    			\\
	& = \Bigg(  \v!^2 \be_\v'\Gp^{-1}  \Bigg(  \left[\Gp^{-1}\right]_{l_i, l_j}  \E[\h^{-1} V_{5,2,i}] \frac{1}{n \h} \sumj V_{5,2,j}   \Bigg) \Gp^{-1} \be_\v \Bigg)   \;  \tO \v! \be_\v' \Gp^{-1} \Op \left( \bY - \bM  \right)/ n    			\\
	& \quad + \Bigg(  \v!^2 \be_\v'\Gp^{-1}  \Bigg(  \left[\Gp^{-1}\right]_{l_i, l_j}  \frac{1}{n \h} \sumi  \left( V_{5,2,i} - \E[V_{5,2,i}]\right) \E[\h^{-1} V_{5,2,j}]   \Bigg) \Gp^{-1} \be_\v \Bigg)   \;  \tO \v! \be_\v' \Gp^{-1} \Op \left( \bY - \bM  \right)/ n     			\\
	& \quad + \Bigg(  \v!^2 \be_\v'\Gp^{-1}  \Bigg(  \left[\Gp^{-1}\right]_{l_i, l_j}  \frac{1}{(n \h)^2} \sumi \sumjNoti  \left( V_{5,2,i} - \E[V_{5,2,i}]\right) \left( V_{5,2,j} - \E[V_{5,2,j}] \right)   \Bigg) \Gp^{-1} \be_\v \Bigg)   \;  \tO \v! \be_\v' \Gp^{-1} \Op \left( \bY - \bM  \right)/ n 
\end{align*}
The first term here will be incorporated into $\breve{\bV}_5$, and thus into $\breve{T}$. Note that it is a smooth function of the $\bZ_i$ from Equation \eqref{suppeqn:bZ}, which is why we choose the centering the way we do, that is, keeping the term with $\E[\h^{-1} V_{5,2,i}]$ instead of $\E[\h^{-1} V_{5,2,j}]$. Doing the reverse would force further variables into the vector $\bZ_i$, and require a stronger Cram\'er's condition, which we seek to avoid.\footnote{\cite{Calonico-Cattaneo-Farrell2018_JASA,Calonico-Cattaneo-Farrell2018_JASAsupp} use such an approach, requiring not only a strengthening of Cram\'er's condition, but also in the process, ruling out the uniform kernel.}

The next term obeys
\begin{align*}
	& r_{\tp,\f}^{-1} \P \Bigg[\Bigg|  \Bigg(  \v!^2 \be_\v'\Gp^{-1}  \Bigg(  \left[\Gp^{-1}\right]_{l_i, l_j}  \frac{1}{n \h} \sumi  \left( V_{5,2,i} - \E[V_{5,2,i}]\right) \E[\h^{-1} V_{5,2,j}]   \Bigg) \Gp^{-1} \be_\v \Bigg)   \;  \tO \v! \be_\v' \Gp^{-1} \Op \left( \bY - \bM  \right)/ n \Bigg|  > r_n  \Bigg]    			\\
	& \leq  r_{\tp,\f}^{-1} 4 \P \left[ \left| \Gp^{-1} \right| > C_\Gamma \right]    		\\
	& \quad +  r_{\tp,\f}^{-1} \P \left[ \left| \Op \left( \bY - \bM \right)/ n \right|  >  C_1 \tO^{-1} \log(\tO)^{1/2}  \right]   		\\
	& \quad +  r_{\tp,\f}^{-1} \P \bigg[\bigg| \frac{1}{n \h} \sumi  \left( V_{5,2,i} - \E[V_{5,2,i}]\right)  \bigg|  > r_n \frac{\tO}{C \btp \tO \log(\tO)^{1/2} }   \bigg]     			\\
	& = o(1),
\end{align*}
by Lemmas \ref{supplem:gamma}, \ref{supplem:var}, and \ref{supplem:bias2}, the fact that $\E[\h^{-1} V_{5,2,j}] \asymp \tO^{-1} \btp$ (see Section \ref{supp:bias lp} or the computation for $\E[\bV_3]$ above), and that for $r_n =  \btp \tO^{-1} \log(\tO)^{-\gamma}$, with any $\gamma >0$,
\[ r_n \frac{\tO}{C \btp \tO \log(\tO)^{1/2} } \asymp \frac{\btp}{\tO} \log(\tO)^\gamma \left[ \frac{1}{\btp \log(\tO)^{1/2 + 2\gamma} } \right] \]
the factor in square brackets diverges by assumption.

The final piece of the second $V_{5,2}$ term similarly obeys
\begin{align*}
	& r_{\tp,\f}^{-1} \P \Bigg[\Bigg|   \Bigg(  \v!^2 \be_\v'\Gp^{-1}  \Bigg(  \left[\Gp^{-1}\right]_{l_i, l_j}  \frac{1}{(n \h)^2} \sumi \sumjNoti  \left( V_{5,2,i} - \E[V_{5,2,i}]\right) \left( V_{5,2,j} - \E[V_{5,2,j}] \right)   \Bigg) \Gp^{-1} \be_\v \Bigg)     		\\  		& \qquad\qquad\qquad \qquad\qquad\qquad \qquad\qquad\qquad \times \;  \tO \v! \be_\v' \Gp^{-1} \Op \left( \bY - \bM  \right)/ n  \Bigg|  > r_n  \Bigg]    			\\
	& \leq  r_{\tp,\f}^{-1} 4 \P \left[ \left| \Gp^{-1} \right| > C_\Gamma \right]    		\\
	& \quad +  r_{\tp,\f}^{-1} \P \left[ \left| \Op \left( \bY - \bM \right)/ n \right|  >  C_1 \tO^{-1} \log(\tO)^{1/2}  \right]   		\\
	& \quad +  r_{\tp,\f}^{-1} \P \bigg[\bigg| \frac{1}{n \h} \sumj  \left( V_{5,2,j} - \E[V_{5,2,j}]\right)  \bigg|  > \frac{\btp}{\tO} \log(\tO)^\gamma   \bigg]    + o(1)  			\\
	& \quad +  r_{\tp,\f}^{-1} \P \bigg[\bigg| \frac{1}{n \h} \sumi  \left( V_{5,2,i} - \E[V_{5,2,i}]\right)  \bigg|  > r_n \frac{\tO}{C \btp \tO \log(\tO)^{1/2 + \gamma} }   \bigg]     			\\
	& = o(1),
\end{align*}
by Lemmas \ref{supplem:gamma}, \ref{supplem:var}, and \ref{supplem:bias2} applied twice, and that for $r_n =  \btp \tO^{-1} \log(\tO)^{-\gamma}$, with any $\gamma >0$,
\[ r_n \frac{\tO}{C \btp \tO \log(\tO)^{1/2} } \asymp \frac{\btp}{\tO} \log(\tO)^\gamma \left[ \frac{1}{\btp \log(\tO)^{1/2 + 3\gamma} } \right] \]
the factor in square brackets diverges by assumption. The $o(1)$ factor in the third to last line accounts for the missing term in the sum over the ``$j$'' index.

Comparing the first and second $V_{5,2}$ terms, we see the the first was ignorable because it had two ``variance'' type terms, while the second had only one. This generalizes to the third and fourth $V_{5,2}$ terms, the third being just like the second and the fourth having three bias-type terms. For these, the same centering must be done as was done here. The bounding is then nearly identical. Putting these pieces together, recall the definition of $V_{5,2}(l_i, l_j)$:
\begin{align*}
	V_{5,2}(l_i, l_j) & :=  \left[\Gp^{-1}\right]_{l_i, l_j}  \frac{1}{(n \h)^2} \sumi \sumjNoti \bigg\{   (K^2 \br_p \br_p') (\Xhi) K(\Xhj) (\Xhi)^{l_i} (\Xhj)^{l_j}   			 \\
	& \qquad \qquad \qquad\qquad    \times \left[\mu(X_i) -  \br_p(X_i - \x)'\bbeta_p\right]   \left(Y_j - \br_p(X_j - \x)'\bbeta_p\right)  \bigg\}.
\end{align*}
Following the logic above, always centering the ``$i$'' term first, we define 
\begin{align*}
	\breve{V}_{5,2}(l_i, l_j) & :=  \left[\Gp^{-1}\right]_{l_i, l_j} \E\left[ (K^2 \br_p \br_p') (\Xhi) (\Xhi)^{l_i} \left(\mu(X_i) -  \br_p(X_i - \x)'\bbeta_p\right)  \right]   			 \\
	& \qquad \qquad \qquad\qquad    \times   \frac{1}{n \h} \sumj \bigg\{  K(\Xhj)  (\Xhj)^{l_j}  \left(Y_j - \br_p(X_j - \x)'\bbeta_p\right)  \bigg\} 
\end{align*}
Returning to Equations \eqref{suppeqn:V5 1}, $\breve{\bV}_5$ is defined via
\begin{align*}
	\v!^2 \be_\v'\Gp^{-1}  \left[ \breve{\bV}_5 \right] \Gp^{-1} \be_\v   &  :=   \sum_{l_i=0}^p \sum_{l_j=0}^p  \v!^2 \be_\v'\Gp^{-1}   \left[\Gp^{-1}\right]_{l_i, l_j} \E\left[ (K^2 \br_p \br_p') (\Xhi) (\Xhi)^{l_i} \left(\mu(X_i) -  \br_p(X_i - \x)'\bbeta_p\right)  \right]   			 \\
	& \qquad \qquad \qquad\qquad    \times   \frac{1}{n \h} \sumj \bigg\{  K(\Xhj)  (\Xhj)^{l_j}  \left(Y_j - \br_p(X_j - \x)'\bbeta_p\right)  \bigg\}     \Gp^{-1} \be_\v.
\end{align*}
This completes the proof of Equation \eqref{suppeqn:V5}.

Lastly, we consider the $\bV_6 - \breve{\bV}_6$ term of \eqref{suppeqn:delta 1}. Proving this is ignorable will complete {\bf Step (I)}. Begin by expanding the inner product, just as was done for $\bV_5$:
\begin{align*}
	\bV_6 & = \frac{1}{n \h} \sumi (K^2 \br_p \br_p')(\Xhi)  \big\{ \br_p(\Xhi)'\Gp^{-1} \Op \left[\bY - \bR \bbeta_p\right] /n \big\}^2   			\\
	& = \frac{1}{n \h} \sumi (K^2 \br_p \br_p')(\Xhi)  \left\{ \frac{1}{n\h} \sumj \br_p(\Xhi)'\Gp^{-1}\br_p(\Xhj) K(\Xhj)  \left(Y_j - \br_p(X_j - \x)'\bbeta_p\right) \right\}^2   			\\
	& = \frac{1}{n \h} \sumi (K^2 \br_p \br_p')(\Xhi)  \left\{ \frac{1}{n\h} \sumj    \sum_{l_i=0}^p \sum_{l_j=0}^p (\Xhi)^{l_i}   \left[\Gp^{-1}\right]_{l_i, l_j} (\Xhj)^{l_j}
	 K(\Xhj)  \left(Y_j - \br_p(X_j - \x)'\bbeta_p\right) \right\}^2   			\\
	& = \sum_{l_{i_1}=0}^p \sum_{l_{i_2}=0}^p \sum_{l_{j_1}=0}^p \sum_{l_{j_2}=0}^p   \left[\Gp^{-1}\right]_{l_{i_1}, l_{j_1}}  \left[\Gp^{-1}\right]_{l_{i_2}, l_{j_2}}  \frac{1}{n \h} \sumi (K^2 \br_p \br_p')(\Xhi)   (\Xhi)^{l_{i_1} + l_{i_2}}   
		  			 \\
	& \qquad  \times  
	\frac{1}{(n \h)^2}   \sumj \sumk    K(\Xhj) (\Xhj)^{l_{j_1}}  \left(Y_j - \br_p(X_j - \x)'\bbeta_p\right)  K(\Xhk)   (\Xhk)^{l_{j_2}}  \left(Y_k - \br_p(X_k - \x)'\bbeta_p\right) .
\end{align*}
Define
\begin{align*}
	\breve{\bV}_6 & = \sum_{l_{i_1}=0}^p \sum_{l_{i_2}=0}^p \sum_{l_{j_1}=0}^p \sum_{l_{j_2}=0}^p   \left[\Gp^{-1}\right]_{l_{i_1}, l_{j_1}}  \left[\Gp^{-1}\right]_{l_{i_2}, l_{j_2}}  \E\left[ \h^{-1} (K^2 \br_p \br_p')(\Xhi)   (\Xhi)^{l_{i_1} + l_{i_2}}   \right]		  			 \\
	& \qquad  \times  
	\frac{1}{(n \h)^2}   \sumj \sumk    K(\Xhj) (\Xhj)^{l_{j_1}}  \left(Y_j - \br_p(X_j - \x)'\bbeta_p\right)  K(\Xhk)   (\Xhk)^{l_{j_2}}  \left(Y_k - \br_p(X_k - \x)'\bbeta_p\right) .
\end{align*}

Completely analogous steps to those above will show that
\begin{equation}
	\label{suppeqn:V6}
	r_{\tp,\f}^{-1} \P \left[\left|  \left(  \v!^2 \be_\v'\Gp^{-1}  \left( \bV_6 - \breve{\bV}_6  \right) \Gp^{-1} \be_\v \right) \tO \v! \be_\v' \Gp^{-1} \Op \left( \bY  - \bR \bbeta_p \right)/ n \right|  > r_n  \right]  = o(1).
\end{equation}
The starting point will again be splitting $(Y_i - \br_p(X_i - \x)'\bbeta_p) = \e_i +  (\mu(X_i) - \br_p(X_ - \x)'\bbeta_p )$ everywhere, which now occurs in three places, giving eight total terms. The most difficult of these will be when all three are bias terms. The rest of the terms will have at least one ``variance'' type term, and the faster rates of Lemma \ref{supplem:var} can be brought to bear. Thus, we shall only demonstrate the former. For a fixed set of the indexes $l_{i_1}, l_{i_2}, l_{j_1}, l_{j_2}$, let 
\begin{align*}
	V_{6,i} & = (K^2 \br_p \br_p')(\Xhi)   (\Xhi)^{l_{i_1} + l_{i_2}}  ,		  \\
	V_{6,j} & = K(\Xhj) (\Xhj)^{l_{j_1}}  \left(\mu(X_J) - \br_p(X_j - \x)'\bbeta_p\right)  , \quad\text{ and }		  \\
	V_{6,k} & = K(\Xhk)   (\Xhk)^{l_{j_2}}  \left(\mu(X_k) - \br_p(X_k - \x)'\bbeta_p\right).
\end{align*}
The term in question, with three ``bias'' type terms'', is:
\begin{align*}
	& \v!^2 \be_\v'\Gp^{-1}  \left( \bV_6 - \breve{\bV}_6  \right) \Gp^{-1} \be_\v \tO \v! \be_\v' \Gp^{-1} \Op \left( \bM  - \bR \bbeta_p \right)/ n   		\\
	& = \v!^2 \be_\v'\Gp^{-1}  \Bigg(   \left[\Gp^{-1}\right]_{l_{i_1}, l_{j_1}}  \left[\Gp^{-1}\right]_{l_{i_2}, l_{j_2}}  \frac{1}{n \h} \sumi \left( V_{6,i} - \E[V_{6,i}]\right)  \frac{1}{(n \h)^2}   \sumj \sumk     \E[V_{6,j}]   \E[V_{6,k}]     \Bigg)   			\\
  	& \qquad \qquad \qquad \times\;  \Gp^{-1} \be_\v \tO \v! \be_\v' \Gp^{-1} \Op \left( \bM  - \bR \bbeta_p \right)/ n    			\\
	& \quad + \v!^2 \be_\v'\Gp^{-1}  \Bigg(   \left[\Gp^{-1}\right]_{l_{i_1}, l_{j_1}}  \left[\Gp^{-1}\right]_{l_{i_2}, l_{j_2}}  \frac{1}{n \h} \sumi \left( V_{6,i} - \E[V_{6,i}]\right)  \frac{1}{(n \h)^2}   \sumj \sumk     \E[V_{6,j}]  \left( V_{6,k} -  \E[V_{6,k}] \right)    \Bigg)   			\\
  	& \qquad \qquad \qquad \times\;  \Gp^{-1} \be_\v \tO \v! \be_\v' \Gp^{-1} \Op \left( \bM  - \bR \bbeta_p \right)/ n    			\\
	& \quad + \v!^2 \be_\v'\Gp^{-1}  \Bigg(   \left[\Gp^{-1}\right]_{l_{i_1}, l_{j_1}}  \left[\Gp^{-1}\right]_{l_{i_2}, l_{j_2}}  \frac{1}{n \h} \sumi \left( V_{6,i} - \E[V_{6,i}]\right)  \frac{1}{(n \h)^2}   \sumj \sumk      \left( V_{6,j} - \E[V_{6,j}]\right)  \E[V_{6,k}]    \Bigg)   			\\
  	& \qquad \qquad \qquad \times\;  \Gp^{-1} \be_\v \tO \v! \be_\v' \Gp^{-1} \Op \left( \bM  - \bR \bbeta_p \right)/ n   			\\
	& \quad + \v!^2 \be_\v'\Gp^{-1}  \Bigg(   \left[\Gp^{-1}\right]_{l_{i_1}, l_{j_1}}  \left[\Gp^{-1}\right]_{l_{i_2}, l_{j_2}}  \frac{1}{n \h} \sumi \left( V_{6,i} - \E[V_{6,i}]\right)  \frac{1}{(n \h)^2}   \sumj \sumk      \left( V_{6,j} - \E[V_{6,j}]\right)  \left( V_{6,k} -  \E[V_{6,k}] \right)   \Bigg)   			\\
  	& \qquad \qquad \qquad \times\;  \Gp^{-1} \be_\v \tO \v! \be_\v' \Gp^{-1} \Op \left( \bM  - \bR \bbeta_p \right)/ n
\end{align*}
The first term is bounded as
\begin{align*}
	& \leq  r_{\tp,\f}^{-1} 5 \P \left[ \left| \Gp^{-1} \right| > C_\Gamma \right]    		\\
	& \quad +  r_{\tp,\f}^{-1} \P \left[ \left| \Op \left( \bM - \bR \bbeta_p  \right)/ n \right|  >  C_1 \log(\tO)^\gamma  \right]   		\\
	& \quad +  r_{\tp,\f}^{-1} \P \left[ \left| \frac{1}{n\h}\sumi  \left( V_{6,i} - \E[V_{6,i}]\right) \right|  >  r_n \frac{1}{C_1 C_\Gamma^5 \v!^3 |\be_\v|^3  \E[\h^{-1}V_{6,j}] \E[\h^{-1}V_{6,k}] \tO \log(\tO)^\gamma }  \right]     		\\
	& = o(1),
\end{align*}
by Lemmas \ref{supplem:gamma}, \ref{supplem:bias1}, and \ref{supplem:lambda}. In applying the last, we have used that $\E[\h^{-1}V_{6,j}] \asymp \E[\h^{-1}V_{6,k}] \asymp \tO^{-1} \btp$ (see Section \ref{supp:bias lp} or the computation for $\E[\bV_3]$ above) and $r_n = \tO^{-1} \btp \log(\tO)^{-\gamma}$ for $\gamma > 0$, leaving
\[  r_n \frac{1}{\E[\h^{-1}V_{6,j}]  \E[\h^{-1}V_{6,k}] \tO \log(\tO)^\gamma }  \asymp  \tO^{-1} \log(\tO)^{1/2} \left[  \frac{1}{\tO^{-1} \btp \log(\tO)^{1/2+2\gamma} }  \right].   \]
The factor in square brackets diverges by assumption. The second term is
\begin{align*}
	& \v!^2 \be_\v'\Gp^{-1}  \Bigg(   \left[\Gp^{-1}\right]_{l_{i_1}, l_{j_1}}  \left[\Gp^{-1}\right]_{l_{i_2}, l_{j_2}}  \frac{1}{n \h} \sumi \left( V_{6,i} - \E[V_{6,i}]\right)  \frac{1}{(n \h)^2}   \sumj \sumk     \E[V_{6,j}]  \left( V_{6,k} -  \E[V_{6,k}] \right)    \Bigg)   			\\
  	& \qquad \qquad \qquad \times\;  \Gp^{-1} \be_\v \tO \v! \be_\v' \Gp^{-1} \Op \left( \bM  - \bR \bbeta_p \right)/ n    			\\
	& \leq  r_{\tp,\f}^{-1} 5 \P \left[ \left| \Gp^{-1} \right| > C_\Gamma \right]    		\\
	& \quad +  r_{\tp,\f}^{-1} \P \left[ \left| \Op \left( \bM - \bR \bbeta_p  \right)/ n \right|  >  C_1 \log(\tO)^\gamma  \right]   		\\
	& \quad +  r_{\tp,\f}^{-1} \P \left[ \left| \frac{1}{n\h}\sumk \left( V_{6,k} - \E[V_{6,k}]\right) \right|  >  C_2  \frac{\btp}{\tO} \log(\tO)^\gamma \right]     		\\
	& \quad +  r_{\tp,\f}^{-1} \P \left[ \left| \frac{1}{n\h}\sumi  \left( V_{6,i} - \E[V_{6,i}]\right) \right|  >  r_n \frac{}{C_1 C_2 C_\Gamma^5 \v!^3 |\be_\v|^3  \E[\h^{-1}V_{6,j}] \btp \tO \tO^{-1} \log(\tO)^{2\gamma} }  \right]     		\\
	& = o(1),
\end{align*}
by nearly identical reasoning, additionally using Lemma \ref{supplem:bias2}. The third term is the identical to this one, and the fourth term is similar, requiring Lemma \ref{supplem:bias2} twice.

Referring back to the discussion following Equation \eqref{suppeqn:V6}, this completes the proof of that result for the case where the bias portion of $(Y_i - \br_p(X_i - \x)'\bbeta_p) = \e_i +  (\mu(X_i) - \br_p(X_ - \x)'\bbeta_p )$ is retained everywhere, which is the most difficult. All other pieces will follow by similar logic, applying Lemma \ref{supplem:var} when needed. Because this Lemma delivers a faster rate, these other terms will not require strong assumptions. Altogether, this establishes the convergence required by Equation \eqref{suppeqn:V6}.

Combining Equations \eqref{suppeqn:V3}, \eqref{suppeqn:V5}, and \eqref{suppeqn:V6} establishes that  $\left| \shatp^2 - \sbp^2 \right| = o_\P(1)$ and \eqref{suppeqn:delta 1} holds, proving \eqref{suppeqn:delta} and thus completing {\bf Step (I)}.

%%%%%%%%%%%%%%%%%%%%%%%%%%%%%%%%%%%%%%%%%%%%%%%%%%%%%%%%%%%%%%%%%%%%%%
\subsubsection{Step (II)}
	\label{supp:proof lp us step 2}

We now prove that 
\[ \bSn := \sumi \V[\bZ_i]^{-1/2}(\bZ_i - \E[\bZ_i])/\sqrt{n} \]
obeys an Edgeworth expansion by verifying the conditions of Theorem 3.4 of \cite{Skovgaard1981_SJS}. Repeating the definition of $\bZ_i$ from Equation \eqref{suppeqn:bZ}:
\begin{align*}
	\begin{split}
		\bZ_i  = \Bigg(
		& \Big\{  (K \br_p)(\Xhi) (Y_i - \br_p(X_i - \x)'\bbeta_p)  \Big\}'  ,    		\\ 		  %from  numerator
		& \vech\Big\{  (K \br_p \br_p')(\Xhi)  \Big\}' ,    		\\ 		  %from  numerator, Gamma matrix
		& \vech\Big\{  (K^2 \br_p \br_p')(\Xhi) \e_i^2   \Big\}'  ,    		\\ 		  %from  \bV_1 
		& \vech\Big\{  (K^2 \br_p \br_p' )(\Xhi) (\Xhi)^0 \e_i   \Big\}', \vech\Big\{  (K^2 \br_p \br_p' )(\Xhi) (\Xhi)^1 \e_i   \Big\}',   		\\
		& \qquad \vech\Big\{  (K^2 \br_p \br_p' )(\Xhi) (\Xhi)^2 \e_i   \Big\}' ,\ldots, \vech\Big\{  (K^2 \br_p \br_p' )(\Xhi) (\Xhi)^p \e_i   \Big\}' ,	\\   %from   \bV_2
		& \vech\Big\{  (K^2 \br_p \br_p')(\Xhi) \big\{  \e_i \big[\mu(X_i) -  \br_p(X_i - \x)'\bbeta_p\big] \big\}   \Big\}'   		  %from  \bV_4
		\Bigg)'.
	\end{split}
\end{align*}

First, define 
\[\bB := \h \V[\bZ_i], \]
which may be readily computed, but the constants are not needed here. All that matters at present is that, under our assumptions, $\bB$ is bounded and bounded away from zero. Write
\[ \bSn = \sumi \bB^{-1/2}(\bZ_i - \E[\bZ_i]) / \tO. \]
By construction, the mean of $\bSn$ is zero and the variance is the identity matrix. That is, for any $\bt \in \R^{\dim(\bZ_i)}$, $\E[\bt'\bSn] = 0$ and $\V[\bt'\bSn]=|\bt|^2$. 

To verify conditions (I) and (II) of \cite[Theorem 3.4]{Skovgaard1981_SJS} we first compute the third and fourth moments of $\bZ_i$, and use these to compute the required directional cumulants of $\bSn$. For a nonnegative integer $l$ and $k \in \{3,4\}$, by a change of variables we find that
\[ 
\E\left[ \left(K(\Xhi)(\Xhi)^l \right)^k \right] = \h \int K(u)^k u^{lk} f(\x - u\h)du = O( \h ),  		 \]
under the conditions on the kernel function and the marginal density of $X_i$, $f(\cdot)$. In exactly the same way, for the remaining pieces of $\bZ_i$, we find that:
\begin{gather*}
	\E\left[ \left(K(\Xhi)(\Xhi)^l (Y_i - \br_p(X_i - \x)'\bbeta_p) \right)^k \right]  = O( \h ),  			\\
	\E\left[ \left(K(\Xhi)(\Xhi)^l \e_i^2 \right)^k \right]  = O( \h ),  			\qquad \text{ and } \qquad
	\E\left[ \left(K(\Xhi)(\Xhi)^l \e_i \right)^k \right]  = O( \h ),  			\\
	\E\left[ \left(K(\Xhi)(\Xhi)^l \e_i (\mu(X_i) - \br_p(X_i - \x)'\bbeta_p) \right)^k \right]  = O( \h ),
\end{gather*}
using the assumed moment conditions on $\e_i$. Therefore, for a $\bt \in \R^{\dim(\bZ_i)}$ with $|\bt|=1$
\[\E \left[ \left(\bt'\bB^{-1/2}(\bZ_i - \E[\bZ_i]) \right)^3 \right] = O(\h).\]
and
\[\E \left[ \left(\bt'\bB^{-1/2}(\bZ_i - \E[\bZ_i]) \right)^4 \right] = O(\h).\]
Using these, and the fact that the $\bZ_i$ are i.i.d.\ and the summands of $\bSn$ are mean zero, we have, again for a $\bt \in \R^{\dim(\bZ_i)}$ with $|\bt|=1$,
\[  \E \left[ (\bt'\bSn)^3 \right]  =  \tO^{-3} \sumi \E \left[ \left(\bt'\bB^{-1/2}(\bZ_i - \E[\bZ_i]) \right)^3 \right]  =  O(\tO^{-3} n \h) =  O(\tO^{-1}). \]
The third moment agrees with the third cumulant of $\bSn$. The fourth cumulant is
\[  \E \left[ (\bt'\bSn)^4 \right]  - 3 \E \left[ (\bt'\bSn)^2 \right]^2 .  \]
The first term of these two is
\begin{align*}
	\E \left[ (\bt'\bSn)^4 \right] & =  \tO^{-4} {4 \choose 2}  \sumi\sumjNoti \E \left[ \left(\bt'\bB^{-1/2}(\bZ_i - \E[\bZ_i]) \right)^2 \right] \E \left[ \left(\bt'\bB^{-1/2}(\bZ_j - \E[\bZ_j]) \right)^2 \right]     		\\ 
	& \qquad   +   \tO^{-4} \sumi \E \left[ \left(\bt'\bB^{-1/2}(\bZ_i - \E[\bZ_i]) \right)^4 \right]   		\\ 
	&  =   3 \h^{-2} [1+o(1/n)] \E \left[ \left(\bt'\bB^{-1/2}(\bZ_i - \E[\bZ_i]) \right)^2 \right]^2   +  O( \tO^{-2}).
\end{align*}
By direct computation, the second piece of the fourth cumulant is
\begin{align*}
	\E \left[ (\bt'\bSn)^2 \right]^2 & =  \left( \tO^{-2} n \E \left[ \left(\bt'\bB^{-1/2}(\bZ_i - \E[\bZ_i]) \right)^2 \right] \right)^2.
\end{align*}
This cancels with the corresponding term of $\E \left[ (\bt'\bSn)^4 \right]$, and thus the fourth cumulant is $O(\tO^{-2})$. Thus, we find that, in the notation of \cite{Skovgaard1981_SJS}, $\rho_{s,n}(\bt) \asymp \tO^{-1}$, and so condition (II) of \cite{Skovgaard1981_SJS} is satisfied by setting $a_n(\bt) = C \tO$ for an appropriate constant $C$. Recall that $r_n = o(r_{\tp,\f})$, with $r_{\tp,\f} = \max\{\tO^{-2}, \btp^2, \tO^{-1} \btp \}$, i.e.\ the slowest vanishing of the rates. Thus our $r_n$ is $\varepsilon_n$ in the notation of \cite{Skovgaard1981_SJS}, and condition (I) therein is satisfied because $ a_n(\bt)^{-(s-1)}  = \tO^{-3} = o(\tO^{-2}) = O(r_n)$.

Next, we verify condition ($\text{III}''_\alpha$) of \cite[Theorem 3.4 and Remark 3.5]{Skovgaard1981_SJS}. Let $\xi_S(\bt)$ be the characteristic function of $\bSn$ and $\xi_Z(\bt)$ that of $\bZ_i$, where $\bt \in \R^{\dim(\bZ_i)}$. By the i.i.d.\ assumption, 
\begin{align*}
	\xi_S(\bt) = \E[\exp\{\text{i}\bt'\bSn\}] & = \prod_{i=1}^{n} \E[\exp\{\text{i}\bt'\bB^{-1/2}(\bZ_i - \E[\bZ_i]) / \tO\}]  		\\
	 & = \prod_{i=1}^{n} \E\left[\exp\left\{ \text{i} \left( \bt'\bB^{-1/2} / \tO \right) \bZ_i \right\} \right] \exp\{-\text{i}\bt'\bB^{-1/2}\E[\bZ_i]) / \tO\}.
\end{align*}
The second factor is bounded by one, leaving 
\begin{align*}
	\xi_S(\bt) \leq \left[\xi_Z\left( \bt'\bB^{-1/2} / \tO \right) \right]^n.
\end{align*}
Recall that, in the notation of \cite{Skovgaard1981_SJS}, $a_n(\bt) = C \tO^{-1}$, and so condition ($\text{III}''_\alpha$) of Theorem 3.4 (and Remark 3.5) is satisfied because
\begin{align*}
	\sup_{|\bt| > \delta C \tO^{-1}} |\xi_S(\bt)| & \leq \sup_{|\bt| > \delta C \tO^{-1}} \left|\xi_Z\left( \bt'\bB^{-1/2} / \tO \right) \right|^n  		\\
	& \leq \sup_{|\bt_1| > C_1 } \left| \xi_Z (\bt_1) \right|^n  		\\
	& = (1 - C_2 \h)^n = o(r_n^{-C_3}),
\end{align*}
for any $C_3 > 0$ by the assumption that $\log(n\h) / (n\h) = o(1)$. Thus condition ($\text{III}''_\alpha$) holds. The penultimate equality holds by Lemma \ref{supplem:Cramer lp}, which verifies that $\bZ_i$ obeys the $n$-varying version of Cram\'er's condition: for $\h$ sufficiently small, for all $C_1 > 0$ there is a $C_2 >0$ such that
\[ \sup_{|\bt| > C_1} |\xi_Z(\bt)| < (1 - C_2 \h). \]

Finally, we check condition (IV) of \cite[Theorem 3.4]{Skovgaard1981_SJS}. We aim to prove that
\begin{equation}
	\label{suppeqn:skov 4 1}
	\sup_{0<s<1} \frac{\left| \dfrac{d^5}{ds^5} \log \xi_S\left( s \dfrac{\delta a_n(\bt) \bt}{|\bt|}\right) \right|}{5! \left| \dfrac{\delta a_n(\bt) \bt}{|\bt|} \right|^5 } = O(a_n(\bt)^{-3}),
\end{equation}
for some $\delta > 0$, with $a_n(\bt) = C \tO$ defined by conditions (I) and (II). For the supremum, as $s$ ranges in $(0,1)$, the quantity $w = s \delta a_n(\bt)$ ranges in $(0,\delta a_n(\bt))$. Further, by the chain rule
\[\dfrac{d^5}{ds^5} \log \xi_S\left( s \dfrac{\delta a_n(\bt) \bt}{|\bt|}\right) =  \dfrac{d^5}{dw^5} \log \xi_S\left( \dfrac{w \bt}{|\bt|}\right) \left( \delta a_n(\bt) \right)^5.  \]
To see why, write $\log \xi_S\left( s \delta a_n(\bt) \bt / |\bt| \right)$ as $g(w(s))$, where $w(s) = s \delta a_n(\bt)$ and $g(w) = \log \xi_S\left( w \bt / |\bt| \right)$ and then the chain rule gives
\[ \frac{d^5}{ds^5} \log \xi_S\left( s \dfrac{\delta a_n(\bt) \bt}{|\bt|}\right) = \frac{d^5 g}{d w^5} \left(\frac{d w}{d s}\right)^5   \]
because all the other terms in the chain rule expansion involve higher derivatives of the linear function $w(s) = s \delta a_n(\bt)$ and hence are zero. Therefore
\[\sup_{0<s<1} \frac{\left| \dfrac{d^5}{ds^5} \log \xi_S\left( s \dfrac{\delta a_n(\bt) \bt}{|\bt|}\right) \right|}{5! \left| \dfrac{\delta a_n(\bt) \bt}{|\bt|} \right|^5 }  =  \sup_{0<w<\delta a_n(\bt)} \frac{\left| \dfrac{d^5}{dw^5} \log \xi_S\left(  \dfrac{w \bt}{|\bt|}\right)\left( \delta a_n(\bt) \right)^5 \right|}{5! \left| \dfrac{\delta a_n(\bt) \bt}{|\bt|} \right|^5 } =  \sup_{0<w<\delta a_n(\bt)} \frac{\left| \dfrac{d^5}{dw^5} \log \xi_S\left(  \dfrac{w \bt}{|\bt|}\right) \right|}{5!  }, \]
where we have canceled terms and used the fact that $| (\bt / |\bt| )| =1$. 

With $a_n(\bt) = C \tO$, proving Equation \eqref{suppeqn:skov 4 1} is equivalent to showing that 
\[\sup_{0<w<\delta a_n(\bt)} \left| \dfrac{d^5}{dw^5} \log \xi_S\left(  \dfrac{w \bt}{|\bt|}\right) \right| = O\left( \tO^{-3} \right).\]
Let $\xi_{\bar{Z}}(\bt)$ be the characteristic function of $(\bZ_i - \E[\bZ_i])$. (This is distinct from $\xi_Z(\bt)$, which is the characteristic function of $\bZ_i$ itself. The two are related via $\xi_{\bar{Z}}(\bt) = \xi_Z(\bt) \exp\{-\text{i}\bt'\E[\bZ_i]\}$.) By the i.i.d.\ assumption
\[\log \xi_S\left(  \dfrac{w \bt}{|\bt|}\right) =  n \log \xi_{\bar{Z}}\left(  \dfrac{w \bB^{-1/2} \bt}{|\bt| \tO} \right).\]

As $w$ varies in $(0 , \delta a_n(\bt))$, the quantity $u = w \bB^{-1/2} \tO^{-1}$ varies in $(0,C \delta \bB^{-1/2} )$, by the definition of $a_n(\bt)$.  Using the same chain rule logic as above,
\[ \dfrac{d^5}{dw^5} \log \xi_{\bar{Z}}\left(  \dfrac{w \bB^{-1/2} \bt}{|\bt| \tO}\right)  =  \left(  \dfrac{d^5}{du^5} \log \xi_{\bar{Z}}\left(  \dfrac{ u  \bt}{|\bt|}\right)  \right) \left(\frac{\bB^{-1/2}}{\tO}\right)^{5}.  \]
Therefore 
\begin{align*}
	\sup_{0<w<\delta a_n(\bt)} \left| \dfrac{d^5}{dw^5} \log \xi_S\left(  \dfrac{w \bt}{|\bt|}\right) \right|  & =  \sup_{0<w<\delta a_n(\bt)} \left|  \dfrac{d^5}{dw^5} n \log \xi_{\bar{Z}}\left(  \dfrac{w \bB^{-1/2} \bt}{|\bt| \tO} \right) \right|     		\\
	& = n  \left(\frac{\bB^{-1/2}}{\tO}\right)^{5} \sup_{0<u<C \delta \bB^{-1/2}} \left|  \dfrac{d^5}{du^5} \log \xi_{\bar{Z}}\left(  \dfrac{ u  \bt}{|\bt|}\right) \right|.
\end{align*}
We aim to show that the final quantity is $O\left( \tO^{-3} \right)$. As $\tO = \sqrt{n\h}$ and $\bB$ is bounded above and below, this will hold if
\begin{equation}
	\label{suppeqn:skov 4 2}
	\sup_{0<u<C \delta \bB^{-1/2}} \left|  \dfrac{d^5}{du^5} \log \xi_{\bar{Z}}\left(  \dfrac{ u \bt}{|\bt|}\right) \right| = O(\h).
\end{equation}
for some $\delta > 0$.

By Corollary 8.2 of \cite{Bhattacharya-Rao1976_book} for the first inequality and direct calculation for the second, 
\begin{equation}
	\label{suppeqn:skov 4 3}
	\left| \log \xi_{\bar{Z}}\left(  \frac{ u \bt}{|\bt|}\right) - 1 \right| \leq \frac{1}{2} \left|  \frac{ u \bt}{|\bt|} \right| \E\left[ \left|\bZ_i - \E[\bZ_i]\right|^2  \right] \leq C |u| \h.
\end{equation}
Therefore, for $\h$ small enough there is a $\delta > 0$ such that $C |u| \h < 1/2$ for all $u$ such that $0<u<C \delta \bB^{-1/2}$. This allows us to apply Lemma 9.4 of \cite{Bhattacharya-Rao1976_book}, yielding the bound
	\[	\sup_{0<u<C \delta \bB^{-1/2}} \left|  \dfrac{d^5}{du^5} \log \xi_{\bar{Z}}\left(  \dfrac{ u \bt}{|\bt|}\right) \right| \leq C \E\left[ \left|\bZ_i - \E[\bZ_i]\right|^5  \right].   \]
As the fifth moment of $\bZ_i$ is $O(\h)$, this establishes Equation \eqref{suppeqn:skov 4 2} and therefore Equation \eqref{suppeqn:skov 4 1}, verifying condition (IV) of \cite[Theorem 3.4]{Skovgaard1981_SJS}. All of the conditions of this Theorem are now verified, thus completing {\bf Step (II)}.

\begin{remark}
	\label{supprem:Cramer}
	For building intuition it is useful to compare the bound bound in Equation \eqref{suppeqn:skov 4 3} and the $n$-varying version of Cram\'er's condition established in Lemma \ref{supplem:Cramer lp}. Both reflect the fact that as $\h = o(1)$, $K(\Xhi) = o(1)$, and therefore in the limit $\bZ_i \equiv 0$ is a degenerate random variable. In this case of \eqref{suppeqn:skov 4 3}, the bound shows that as $\h = o(1)$, the characteristic function $\log \xi_{\bar{Z}}\left(  u \bt / |\bt| \right) \to 1$, uniformly. Lemma \ref{supplem:Cramer lp} shows the same thing, as it is proven therein that
	\[ \sup_{|\bt| > C_1} |\xi_Z(\bt)| < (1 - C_2 \h). \]
	Notice that in the limit as $\h = o(1)$, the conventional Cram\'er's condition fails. Equation \eqref{suppeqn:skov 4 3} and Lemma \ref{supplem:Cramer lp} are in qualitative agreement in this sense.
\end{remark}

%%%%%%%%%%%%%%%%%%%%%%%%%%%%%%%%%%%%%%%%%%%%%%%%%%%%%%%%%%%%%%%%%%%%%%
\subsubsection{Step (III)}
	\label{supp:proof lp us step 3}

We now prove that the expansion for $\tp$ holds and that it holds uniformly over $\f \in \F_\S$. First, by Equation \eqref{suppeqn:step 1} and Lemma \ref{supplem:delta1}, $\tp$ will obey the desired expansion (computed formally as in Section \ref{supp:terms lp}) if $\breve{T}$ obeys an Edgeworth expansion. Now, $\breve{T}$ is given by
\[\breve{T}\left(\tO^{-1} \sumi \bZ_i\right) = \breve{T}\left( \V[\bZ_i]^{1/2} \bSn + n \E[\bZ_i]/ \tO \right), \]
which is a smooth function of $\bSn := \sumi \V[\bZ_i]^{-1/2}(\bZ_i - \E[\bZ_i])/\tO$. {\bf Step (II)} proved that $\bSn$ obeys an Edgeworth expansion, and therefore by \cite{Skovgaard1986_ISR} we have that $\breve{T}$ does as well. Equation \eqref{suppeqn:step 1} and Lemma \ref{supplem:delta1} deliver the result pointwise for $\tp$. 

To prove that the expansion holds uniformly, first notice that all our results hold pointwise along a sequence $\f_n \in \F_\S$. That is, the results of \cite{Skovgaard1981_SJS} and \cite{Skovgaard1986_ISR} hold along this sequence. We thus proceed by arguing as in \cite{Romano2004_SJS}. Recall that $r_{\tp,\f} = \max\{\tO^{-2}, \btp^2, \tO^{-1} \btp \}$, i.e.\ the slowest vanishing of the rates. Suppose the result failed. Then we can extract a subsequence $\{\f_m \in \F_\S\}$ such that 
\[ r_{\tp,\f} \left| \P_{\f_m} \left[ \tp < z \right]  - \Phi(z) - E_{\tp,\f_m}(z) \right|  \not= o(1).	  \]
But this contradicts the result above, because $\tp$ obeys the expansion given on $\{\f_m \in \F_\S\}$.

%%%%%%%%%%%%%%%%%%%%%%%%%%%%%%%%%%%%%%%%%%%%%%%%%%%%%%%%%%%%%%%%%%%%%%
%%%%%%%%%%%%%%%%%%%%%%%%%%%%%%%%%%%%%%%%%%%%%%%%%%%%%%%%%%%%%%%%%%%%%%
\subsection{Proof of Theorem \ref{suppthm:EE lp} (Theorem 1 in the paper) with Bias Correction}
	\label{supp:proof lp rbc}

Proving Theorem \ref{suppthm:EE lp} for $\trbc$ follows the exact same steps as for $\tp$. The reason being that both are based such similar estimation procedures. To illustrate this point, recall that when $\rho = 1$, $\trbc$ is the same as $\tp$ but based on a higher degree polynomial. In this special case, there is nothing left to prove: simply apply Theorem \ref{suppthm:EE lp} with $p$ replaced with $p+1$. Or, alternatively, re-walk the entire proof replacing $p$ with $p+1$ everywhere. 

The more general case, that is, with generic $\rho$, is not conceptually more difficult, just more cumbersome. There are two chief changes. First, the bias rate changes due to the bias correction, but this is automatically accounted for by the terms of the expansion and the conditions of the theorem. For example, note that the rate $r_{\irbc}$ automatically includes the new bias rate, as it is defined in general in terms of $\bti$ Second, there are additional kernel-weighted averages that enter into $\trbc$ and these will enter into the construction of $\bZ_i$ and the bounding of remainder terms.

Recall the definitions of the point estimators, standard errors, and $t$-statistics from Section \ref{supp:setup}, specifically Equations \eqref{suppeqn:that lp}, \eqref{suppeqn:se lp}, and \eqref{suppeqn:t stat}: 
\[\mhat_p^{(\v)} = \frac{1}{n \h^\v} \v! \be_\v'\Gp^{-1} \Op  \bY,  		\qquad 		  \shatp^2  =  \v!^2 \be_\v'\Gp^{-1} (\h \Op \Shatp \Op' /n) \Gp^{-1} \be_\v,  		\qquad 		  \tp = \frac{\sqrt{n\h^{1 + 2\v}}( \mhat_p^{(\v)} - \tf)}{\shatp}\]
\[\thatrbc = \frac{1}{n \h^\v} \v! \be_\v'\Gp^{-1} \Orbc  \bY,  		\quad 		  \shatrbc^2  =  \v!^2 \be_\v'\Gp^{-1} (\h \O_\RBC \Shatrbc \O_\RBC' /n) \Gp^{-1} \be_\v,  		\quad 		  \trbc = \frac{\sqrt{n\h^{1 + 2\v}}( \thatrbc - \tf)}{\shatrbc}.\]
Comparing these, we see that the only differences in the change from $\Shatp$ and $\Op$ to $\Shatrbc$ and $\Orbc$, where (to repeat):
\begin{itemize}
	\item $\Shatrbc = \diag(\hat{v}(X_i): i = 1,\ldots, n)$, with $\hat{v}(X_i) = ( Y_i - \br_{p+1}(X_i - \x)'\bhat_{p+1} )^2$,
	\item $\Orbc = \Op - \rho^{p+1}  \Lp_1 \be_{p+1}' \Gq^{-1} \Oq $,
	\item $\rho = \h / \b$,
	\item $\Lp_k = \Op \left[ X_{\h,1}^{p+k}, \ldots, X_{\h,n}^{p+k} \right]'/n$,
	\item $\Xbi = (X_i - \x)/\b$,
	\item $\Gq = \frac{1}{n\b} \sumi (K \br_{p+1} \br_{p+1}')(\Xbi)$, and
	\item $\Oq = [ (K \br_{p+1})(X_{\b,1}),  (K \br_{p+1})(X_{\b,2}), \ldots,  (K \br_{p+1})(X_{\b,n})]$.
\end{itemize}
Notice that these are the same as their counterparts for $\tp$, but with $\b = \h \rho^{-1}$ in place of $\h$ and $p+1$ in place of $p$. With these comparisons in mind, we briefly discuss the three steps of Section \ref{supp:proof lp us}, highlighting key pieces.

For {\bf Step (I)}, first observe that the ``numerator'', or $\thatrbc$, portion of the $t$-statistic is once again already a smooth function of well-behaved random variables, albeit different ones that for $\tp$. Terms will be added to $\bZ_i$ to reflect this. In particular, $\Lp_1$, $\Gq$, and $\Oq$ are present. Importantly, Lemma \ref{supplem:gamma} applies to $\Gq$ with $\b = \h \rho^{-1}$ in place of $\h$ and $p+1$ in place of $p$.

Turning to the Studentization, Equation \eqref{suppeqn:shat terms} expands the quantity $(\h \Op \Shatp \Op' /n)$ and this needs to be adapted to account instead for $(\h \O_\RBC \Shatrbc \O_\RBC' /n)$, which requires two changes. The fundamental issue remains the estimated residuals and thus the terms represented by $\bV_1$ -- $\bV_6$ will remain conceptually the same. The first change, which is automatically accounted for by the rate assumptions of the Theorem and the terms of the expansion, are that the bias is now lower because the residuals are estimated with a $p+1$ degree fit. This matches the numerator bias, and thus the calculations are as above. Second, whereas the summands of each term of $\bV_1$ -- $\bV_6$ include $(K^2 \br_p \br_p')(\Xhi)$ stemming from the pre- and post-multiplying by $\Op$, now we multiply by $\Orbc$, which means the new versions of $\bV_1$ -- $\bV_6$ have
\[ \Big((K \br_p)(\Xhi) -  \rho^{p+1}  \Lp_1 \be_{p+1}' \Gq^{-1} (K \br_{p+1})(\Xbi)\Big)\Big((K \br_p)(\Xhi) -  \rho^{p+1}  \Lp_1 \be_{p+1}' \Gq^{-1} (K \br_{p+1})(\Xbi)\Big)'. \]
This is mostly a change in notation and increased complexity of all terms, which now will include many more factors that much be accounted for. This does not affect the rates or the identity of the important terms: in other words the expansion is not fundamentally changed. Notice that in estimating the residuals $\hat{v}(X_i) = ( Y_i - \br_{p+1}(X_i - \x)'\bhat_{p+1} )^2$ is used, and not, as might also be plausible, any further bias correction (such as  $\hat{v}(X_i) = ( Y_i - \br_{p+1}(X_i - \x)'\Gp^{-1} \Orbc  \bY/(n\h)  )^2$. This means no other terms appear.   

We illustrate with one example. Consider the first term bounded in Equation \eqref{suppeqn:V3}. For $\bV_3$ defined following Equation \eqref{suppeqn:shat terms} it was shown following Equation \eqref{suppeqn:V3} that 
	\[r_{\trbc,\f}^{-1} \P \left[\Big| \v!^2 \be_\v'\Gp^{-1}   \left( \bV_3 - \E[\bV_3] \right)  \Gp^{-1} \be_\v \tO \v! \be_\v' \Gp^{-1} \Op \left( \bY  - \bM \right)/ n \Big|  > r_n  \right] =o(1).\]
The corresponding bound required here is
\begin{equation}
	\label{suppeqn:V3 rbc}
	r_{\irbc,\f}^{-1} \P \left[\Big| \v!^2 \be_\v'\Gp^{-1}   \left( \bV_{3,\RBC} - \E[\bV_{3,\RBC}] \right)  \Gp^{-1} \be_\v \tO \v! \be_\v' \Gp^{-1} \O_\RBC \left( \bY  - \bM \right)/ n \Big|  > r_n  \right] =o(1).
\end{equation}
The analogue of $\bV_3$ is given by applying the two changes above: the bias term and replacing $(K^2 \br_p \br_p')(\Xhi)$ with the expression above, yielding what we will call $\bV_{3,\RBC}$:
\begin{align*}
	\bV_{3,\RBC} & = \frac{1}{n \h} \sumi (K^2 \br_p \br_p')(\Xhi)  \left[\mu(X_i) -  \br_{p+1}(X_i - \x)'\bbeta_{p+1}\right]^2  		\\
	& \quad + \rho^{2p+2} \Lp_1 \be_{p+1}' \Gq^{-1} \left\{ \frac{1}{n \h} \sumi (K^2 \br_{p+1} \br_{p+1}')(\Xbi)  \left[\mu(X_i) -  \br_{p+1}(X_i - \x)'\bbeta_{p+1}\right]^2 \right\}  \Gq^{-1} \be_{p+1} \Lp_1'		\\
	& \quad + \rho^{p+1} \Lp_1 \be_{p+1}' \Gq^{-1} \left\{ \frac{1}{n \h} \sumi (K \br_{p+1})(\Xbi) (K \br_p)(\Xhi)  \left[\mu(X_i) -  \br_{p+1}(X_i - \x)'\bbeta_{p+1}\right]^2 \right\} 		\\
	& \quad + \rho^{p+1} \left\{ \frac{1}{n \h} \sumi (K \br_p )(\Xhi) (K \br_{p+1}')(\Xbi)  \left[\mu(X_i) -  \br_{p+1}(X_i - \x)'\bbeta_{p+1}\right]^2 \right\}  \Gq^{-1} \be_{p+1} \Lp_1'.
\end{align*}
Verifying Equation \eqref{suppeqn:V3 rbc} now amounts to repeating the original logic (for the first term of Equation \eqref{suppeqn:V3}) four times, once for each line here. 

First, observe that all the conclusions of Lemma \ref{supplem:gamma} hold in exactly the same way for $\Gq$ (substituting $\b$ and $p+1$ for $\h$ and $p$ respectively, as needed), and thus the same type of bounds can be applied whenever necessary. Second, Lemma \ref{supplem:lambda} implies that we can bound and remove the $\Lp_1$ everywhere as well, just as was originally done with $\Gp^{-1}$. These two together imply that Lemma \ref{supplem:var} holds for $\Orbc$ in place of $\Op$ (again with $\b$ and $p+1$ where necessary).

For the first term listed of $\bV_{3,\RBC}$ the original logic now goes through almost as written, simply with additional bounds for $\Lp_1$ and $\Gq$. Lemma \ref{supplem:bias2} applies just the same, only $p$ is replaced by $p+1$ but this is accounted for automatically by the generic rates. 

For the remaining three terms listed of $\bV_{3,\RBC}$, the argument is much the same. The only additional complexity is the bandwidth $\b$ (or $\rho$). However, because $\b$ does not vanish faster than $\h$, this will not cause a problem. Firstly, pre-multiplication by $\rho$ to a positive power can only reduce the asymptotic order because $\rho \not\to \infty$. Secondly, for the factors enclosed in braces in each of the three terms, Lemma \ref{supplem:bias2} will still hold. Checking the proof of Lemma \ref{supplem:4}, which gives Lemma \ref{supplem:bias2}, we can see that we simply must substitute the appropriate bias calculations of Section \ref{supp:bias lp}. 

For the second term listed of $\bV_{3,\RBC}$ this is immediate, since the form is identical and we only need to substitute $\b$ and $p+1$ for $\h$ and $p$ respectively, after re-writing so the averaging is done according to $n\b$ instead of $n\h$. 
\[\rho^{2p+1} \Lp_1 \be_{p+1}' \Gq^{-1} \left\{ \frac{1}{n \b} \sumi (K^2 \br_{p+1} \br_{p+1}')(\Xbi)  \left[\mu(X_i) -  \br_{p+1}(X_i - \x)'\bbeta_{p+1}\right]^2 \right\}  \Gq^{-1} \be_{p+1} \Lp_1'.\]

For the third and fourth terms listed of $\bV_{3,\RBC}$, the only potential further complication is that the summand includes both $\Xhi$ and $\Xbi$. However, because $\Xbi = \rho \Xhi$, all applications of changing variables can proceed as usual, as typified by, for smooth functions $m_1$ and $m_2$ (c.f.\ Lemma \ref{supplem:generic})
\[ \h^{-1} \E[ (K m_1 )(\Xhi) (K m_2)(\Xbi)  ]  = \int_{-1}^{1} (K m_1 )(u) (K m_2)(\rho u) f(\x + u\h) du,  \]
which is just as well behaved as usual. 

Collecting all of these results establishes the convergence of Equation \eqref{suppeqn:V3 rbc}. This illustrates that although the notational complexity is increased and there are more terms to keep track of, there is nothing fundamentally different in {\bf Step (I)} for $\trbc$. We omit the rest of the details.

Moving to {\bf Step (II)}, the proof proceeds in almost exactly the same way as in Section \ref{supp:proof lp us step 2}, but now the quantity $\bZ_i$ is different. Collecting all the changes described above (the inclusion of $\Gq$, $Lp_1$, and $\Oq$, the change in estimated residuals to $\Shatrbc$, and the premultiplication by $\Orbc$), the new $\bZ_i$ is now the collection (deleting duplicate entries)
\begin{align}
	\label{suppeqn:bZ rbc}
	\begin{split}
		\bZ_{i,\RBC}  = \Bigg( 
		\bZ_{i,\RBC}^{\text{numer}}, \ & \bZ_{i,\RBC}^{\text{denom}}\Big[(K^2 \br_p \br_p')(\Xhi)\Big], \ \bZ_{i,\RBC}^{\text{denom}}\Big[(K^2 \br_{p+1} \br_{p+1}')(\Xbi)\Big],   			\\
		&  \bZ_{i,\RBC}^{\text{denom}}\Big[(K \br_p )(\Xhi) (K \br_{p+1}')(\Xbi)\Big]  
		\Bigg)', 
	\end{split}
\end{align}
where
\begin{align*}
	\bZ_{i,\RBC}^{\text{numer}}  = \Bigg(
	& \Big\{  (K \br_p)(\Xhi) (Y_i - \br_{p+1}(X_i - \x)'\bbeta_{p+1})  \Big\}'  ,    		\\ 		  %from  numerator, original estimate
	& \Big\{  (K \br_{p+1})(\Xbi) (Y_i - \br_{p+1}(X_i - \x)'\bbeta_{p+1})  \Big\}'  ,    		\\ 		  %from  numerator, bias correction
	& \vech\Big\{  (K \br_p \br_p')(\Xhi)  \Big\}' ,    		\\ 		  %from  numerator, Gamma matrix of original estimate
	& \vech\Big\{  (K \br_{p+1} \br_{p+1}')(\Xbi)  \Big\}' ,    		\\ 		  %from  numerator, Gamma matrix of bias correction
	& \vech\Big\{  (K \br_p)(\Xhi)(\Xhi)^{p+1}  \Big\}' ,    	 		  %from  Lambda_1 matrix
	\Bigg)
\end{align*}
and for a matrix depending on $(\Xhi, \Xbi)$, the function $\bZ_{i,\RBC}^{\text{denom}}\Big[\bm{\kappa}(\Xhi, \Xbi)\Big]$ is
\begin{align*}
	\bZ_{i,\RBC}^{\text{denom}}\Big[\bm{\kappa}(\Xhi, \Xbi)\Big]  = \Bigg(
	& \vech\Big\{  \bm{\kappa}(\Xhi, \Xbi) \e_i^2   \Big\}'  ,    		\\ 		  %from  \bV_1 
	& \vech\Big\{ \bm{\kappa}(\Xhi, \Xbi) (\Xbi)^0 \e_i   \Big\}', \vech\Big\{  (K^2 \br_p \br_p' )(\Xhi) (\Xbi)^1 \e_i   \Big\}',   		\\
	& \qquad \vech\Big\{  \bm{\kappa}(\Xhi, \Xbi) (\Xbi)^2 \e_i   \Big\}' ,\ldots, \vech\Big\{  (K^2 \br_p \br_p' )(\Xhi) (\Xbi)^{p+1} \e_i   \Big\}' ,	\\   %from   \bV_2
	& \vech\Big\{  \bm{\kappa}(\Xhi, \Xbi) \big\{  \e_i \big[\mu(X_i) -  \br_{p+1}(X_i - \x)'\bbeta_{p+1}\big] \big\}   \Big\}'   		  %from  \bV_4
	\Bigg).
\end{align*}
$\bZ_{i,\RBC}$ is notationally intimidating, but comparing this to the original $\bZ_i$ of Equation \eqref{suppeqn:bZ}, we see that nothing fundamentally different has been added: the additions are mostly just repetition to account for the higher degree local polynomial. Notice that if $\rho=1$, i.e.\ $\h=\b$, then many of the elements are duplicated (or contained in others) and can be removed: examples include the first, third, and fifth lines of $\bZ_{i,\RBC}^{\text{numer}}$ and all of $\bZ_{i,\RBC}^{\text{denom}}\Big[(K^2 \br_p \br_p')(\Xhi)\Big]$. (Note also that in estimating the residuals $\hat{v}(X_i) = ( Y_i - \br_{p+1}(X_i - \x)'\bhat_{p+1} )^2$ is used, and not, as might also be plausible, any further bias correction (such as  $\hat{v}(X_i) = ( Y_i - \br_{p+1}(X_i - \x)'\Gp^{-1} \Orbc  \bY/(n\h)  )^2$. This means no other terms appear.)   

Because, by assumption, $\rho \not\to \infty$, the asymptotic orders do not change. Therefore, verifying conditions (I), (II), and (IV) of Theorem 3.4 of \cite{Skovgaard1981_SJS} are nearly identical for this new $\bZ_{i,\RBC}$. For condition ($\text{III}''_\alpha$) of \cite[Theorem 3.4 and Remark 3.5]{Skovgaard1981_SJS} the crucial ingredient is Lemma \ref{supplem:Cramer lp}, which continues to hold in exactly the same way.

Finally, {\bf Step (III)} carries over essentially without change, completing the proof of Theorem \ref{suppthm:EE lp} with bias correction.

%%%%%%%%%%%%%%%%%%%%%%%%%%%%%%%%%%%%%%%%%%%%%%%%%%%%%%%%%%%%%%%%%%%%%%
\subsection{Lemmas}
	\label{supp:lemmas lp}

Our proof of Theorem \ref{suppthm:EE lp} relies on the following lemmas. Consistent with the above, we give mainly details for the $\tp$ case, i.e. the proof in Section \ref{supp:proof lp us}. The details for $\trbc$, Section \ref{supp:proof lp rbc}, are entirely analogous. Indeed, though all the results below are stated for a bandwidth sequence $\h$ and polynomial degree $p$, they generalize in the obvious way under the appropriate substitutions and appropriate assumptions.

%%%%%%%%%%%
%  Delta Method  %
%%%%%%%%%%%

The first lemma collects high level results regarding the Delta method for Edgeworth expansions, pertaining to {\bf Step (I)}, verifying Equation \eqref{suppeqn:step 1}.
\begin{lemma} \
	\label{supplem:delta}
	\begin{enumerate}[ref=\ref{supplem:delta}(\alph{*})]
	
		\item   \label{supplem:delta1}  
			Let $U_n := \tp - \breve{T}$. If $r_{\tp,\f}^{-1} \P [|U_n| > r_n] = o(1)$ for a sequence $r_n$ such that $r_n = o(r_{\tp,\f})$, then 			
			\[\P \left[ \tp < z \right] = \P \left[ \breve{T}  + U_n < z \right] = \P \left[ \breve{T} < z \right] + o(r_{\tp,\f}). \]
		
		\item    \label{supplem:delta2}
		  If $r_1 = O(r_1')$ and $r_2 = O(r_2')$, for sequences of positive numbers $r_1$, $r_1'$, $r_2$, and $r_2'$ and if a sequence of nonnegative random variables obeys $(r_1)^{-1} \P [ U_n > r_2 ] = o(1)0$ it also holds that $ (r_1')^{-1} \P [ U_n > r_2' ] = o(1)$. In particular, $r_1^{-1} \P [|U_n| > r_n] = o(1)$ implies $r_{\tp,\f}^{-1} \P [|U_n| > r_n] = o(1)$, for $r_1$ equal in order to any of $\tO^{-2}$, $\btp^2$, or $\tO^{-1} \btp$, because $r_{\tp,\f}$ is the largest of these, and any $r_n = o(r_{\tp,\f})$. Thus, for different pieces of $U_n$ defined above, we may make different choices for these two sequences, as convenient.
	\end{enumerate}
\end{lemma}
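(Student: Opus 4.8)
The plan is a routine two-sided sandwiching argument together with the (mild) smoothness of the law of $\breve{T}$. Since $\tp=\breve{T}+U_n$, for every $z\in\R$ and the given sequence $r_n$ one has the set inclusions
\[
\{\breve{T}+U_n<z\}\subseteq\{\breve{T}<z+r_n\}\cup\{|U_n|>r_n\},\qquad
\{\breve{T}<z-r_n\}\subseteq\{\breve{T}+U_n<z\}\cup\{|U_n|>r_n\},
\]
hence
\[
\P[\breve{T}<z-r_n]-\P[|U_n|>r_n]\ \le\ \P[\tp<z]\ \le\ \P[\breve{T}<z+r_n]+\P[|U_n|>r_n].
\]
By hypothesis $\P[|U_n|>r_n]=o(r_{\tp})$, so it remains to bound $\sup_{z\in\R}\bigl|\P[\breve{T}<z\pm r_n]-\P[\breve{T}<z]\bigr|$ by $o(r_{\tp})$. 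I would obtain this from the fact that $\breve{T}=\breve{T}(\tO^{-1}\sumi\bZ_i)$ is a smooth, nondegenerate function of a sample average whose summands $\bZ_i$ satisfy the $n$-varying Cram\'er condition of Lemma \ref{supplem:Cramer lp}; consequently the law of $\breve{T}$ admits a Lebesgue density bounded uniformly in $n$, in $z$, and in $\f\in\F_\S$ (equivalently, one may invoke the Edgeworth expansion of $\breve{T}$ established in Steps II--III, whose leading part is $\Phi$ plus polynomials times $\phi$ with uniformly bounded $z$-derivatives and remainder $o(r_{\tp})$ --- no circularity arises, since that expansion is deduced from Skovgaard's theorem applied to $\bSn$ and does not use this lemma). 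Either way, $\bigl|\P[\breve{T}<z+r_n]-\P[\breve{T}<z]\bigr|\le C\,r_n+o(r_{\tp})=o(r_{\tp})$ because $r_n=o(r_{\tp})$, and similarly with $z-r_n$. Inserting this into the two-sided bound gives $\P[\tp<z]=\P[\breve{T}<z]+o(r_{\tp})$ uniformly in $z$; uniformity over $\f\in\F_\S$ is inherited since every bound used is uniform in $\F_\S$, or alternatively follows from the subsequence argument employed in Step (III).

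\textbf{Part (b).} This is elementary monotonicity and constant bookkeeping. As $U_n\ge 0$, the map $t\mapsto\P[U_n>t]$ is nonincreasing, and $r_1=O(r_1')$ gives $(r_1')^{-1}=O\bigl((r_1)^{-1}\bigr)$; hence whenever $r_2'$ is eventually no smaller than $r_2$ (in particular when $r_2'=r_2$), $(r_1')^{-1}\P[U_n>r_2']=O\bigl((r_1)^{-1}\P[U_n>r_2]\bigr)\to 0$. The ``in particular'' assertion is the special case $r_2'=r_2=r_n$, $r_1\in\{\tO^{-2},\ \btp^2,\ \tO^{-1}\btp\}$ and $r_1'=r_{\tp}$: by construction $r_{\tp}$ is the pointwise maximum (up to constants) of those three rates, so $r_{\tp}^{-1}\le C\,r_1^{-1}$ and $r_{\tp}^{-1}\P[|U_n|>r_n]\le C\,r_1^{-1}\P[|U_n|>r_n]\to 0$. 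This is what licenses, in the term-by-term analysis of $U_n$ in Step (I), choosing for each piece whichever of the three component rates --- and whichever convenient $r_n=o(r_{\tp})$ --- makes the estimate most transparent.

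\textbf{Main obstacle.} The set inclusions and the monotonicity are entirely routine; the only genuine content is the uniform-in-$(z,n,\f)$ smoothness of the distribution of $\breve{T}$ needed to absorb the $z\pm r_n$ shift into $o(r_{\tp})$. That is precisely where the nondegeneracy of $\breve{T}$ as a function of $\bZ_i$ together with the Cram\'er-type condition of Lemma \ref{supplem:Cramer lp} (equivalently, the independently established Edgeworth expandability of $\breve{T}$) enter.
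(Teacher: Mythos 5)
Your argument is correct and is essentially the proof the paper intends: the paper disposes of part (a) by citing the Delta method for Edgeworth expansions (Hall 1992, Ch.\ 2.7; Maesono 1997), and your sandwiching inclusions plus absorption of the $z\pm r_n$ shift via the uniform smoothness/Edgeworth expandability of $\breve{T}$ is exactly the standard argument behind that citation, with the non-circularity correctly noted. Part (b) matches the paper's ``elementary inequalities'' remark.
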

\begin{proof}
Part (a) is the Delta method for Edgeworth expansions, which essentially follows from the fact that the Edgeworth expansion itself is a smooth function. See \cite[Chapter 2.7]{Hall1992_book} or \cite[Lemma 2 and Remark following]{Maesono1997_JSPI}. Part (b) follows from elementary inequalities.
\end{proof}

%%%%%%%%%%%
%  Main Lemmas for Remainder Terms %
%%%%%%%%%%%

The next set of results, Lemmas \ref{supplem:gamma}--\ref{supplem:bias4}, give rate bounds on the probability of deviations for various kernel-weighted sample averages. These are used in establishing Equation \eqref{suppeqn:delta 1} in {\bf Step (I)}. The proofs for all these Lemmas are given in the subsubsection below.

\begin{lemma}
	\label{supplem:gamma}
	Let the conditions of Theorem \ref{suppthm:EE lp} hold. For some $\delta > 0$, a positive integer $k$, and $C_\Gamma < \infty$, we have 
	\begin{enumerate}
		\item $r_{\tp,\f}^{-1} \P[ |\Gp - \Gpt| > \delta \tO^{-1} \log(\tO)^{1/2} ] = o(1)$, 
		\item $r_{\tp,\f}^{-1} \P\Big[ \Big|\Gp^{-1} - \sum_{j=0}^k \big(\Gp^{-1}(\Gpt - \Gp)\big)^{j} \Gpt^{-1} \Big| > \delta \tO^{-(k+1)} \log(\tO)^{(k+1)/2} \Big] = o(1)$, and in particular (i.e.\ $k=0$) $r_{\tp,\f}^{-1} \P[ |\Gp^{-1} - \Gpt^{-1}| > \delta \tO^{-1} \log(\tO)^{1/2} ] = o(1)$, and 
		\item $r_{\tp,\f}^{-1}\P[ \Gp^{-1} > C_\Gamma ] = o(1)$.
	\end{enumerate}	 
	  
\end{lemma}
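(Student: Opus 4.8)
The three parts of Lemma \ref{supplem:gamma} all concern concentration of the kernel-weighted Gram matrix $\Gp = \frac{1}{n\h}\sumi (K\br_p\br_p')(\Xhi)$ around its fixed-$n$ mean $\Gpt = \E[\Gp]$, and of functions thereof. The plan is to prove part (a) first by a Bernstein-type exponential inequality for sums of bounded independent random vectors, then deduce part (c) as an easy corollary of (a) plus the fact that $\Gpt$ is bounded and bounded away from singularity uniformly in $\F_\S$, and finally derive part (b) from (a) via the matrix-identity telescoping of $\Gp^{-1}-\Gpt^{-1}$.

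\textbf{Part (a).} Write $\Gp - \Gpt = \frac{1}{n}\sumi (\bA_i - \E[\bA_i])$ where $\bA_i := \h^{-1}(K\br_p\br_p')(\Xhi)$. Each entry of $\bA_i$ has the form $\h^{-1}K(\Xhi)^{?}(\Xhi)^{l}$ supported on $|\Xhi|\le 1$; since $K$ is bounded and $|\Xhi|^l\le 1$ on the support, $|\bA_i|\le C\h^{-1}$, and by the change-of-variables computation already used repeatedly in the excerpt (e.g. in Step (II)), $\E[\bA_i^2]=O(\h^{-1})$, hence $\V[(\bA_i)_{kl}] = O(\h^{-1})$. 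Applying Bernstein's inequality entrywise (the dimension $p+1$ is fixed, so a union bound costs only a constant), for any $t>0$,
\[
\P\!\left[\,|\Gp-\Gpt| > t\,\right] \le C\exp\!\left\{-\frac{c\,n\,t^2}{\h^{-1} + \h^{-1} t}\right\}.
\]
Taking $t = \delta\tO^{-1}\log(\tO)^{1/2} = \delta (n\h)^{-1/2}\log(n\h)^{1/2}$, the exponent is of order $-c\,n t^2 \h = -c\,\delta^2\log(n\h)$, so the probability is $O((n\h)^{-c\delta^2})$, which can be made $o(r_{\tp})$ for $\delta$ large enough since $r_{\tp}\ge \tO^{-2}=(n\h)^{-1}$ decays only polynomially in $n\h$ and $n\h\to\infty$. (One uses here the bandwidth condition $n\h/\log(n\h)^{2+\gamma}\to\infty$ to control that $t\to 0$, so that the linear term $\h^{-1}t$ in the Bernstein denominator does not dominate.) This is exactly the estimate proved in the lemmas' subsubsection referenced in the excerpt; I would simply cite a standard Bernstein inequality (e.g.\ Bhattacharya--Rao or van der Vaart--Wellner) and do this computation.

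\textbf{Part (c).} Since $\Gpt$ is symmetric positive definite with eigenvalues bounded below by a constant $c_\Gamma>0$ uniformly in $\F_\S$ (this follows from Assumption \ref{suppasmpt:ci lp}: either the uniform kernel gives an explicit nondegenerate moment matrix, or the linear-independence condition on $(1,K\br_{3(k+1)})'$ guarantees $\Gpt$ is nonsingular, with the bound uniform because $f$ is bounded below by $c$), on the event $\{|\Gp-\Gpt|\le c_\Gamma/2\}$ we have $|\Gp^{-1}|\le 2/c_\Gamma =: C_\Gamma$ by the standard perturbation bound for matrix inverses. Hence $\P[|\Gp^{-1}|>C_\Gamma]\le \P[|\Gp-\Gpt|>c_\Gamma/2]$, and this is $o(r_{\tp})$ by part (a) (indeed by an even cruder bound, since $c_\Gamma/2$ is a fixed constant, not shrinking).

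\textbf{Part (b).} Use the algebraic identity $\Gp^{-1} = \sum_{j=0}^{k}\big(\Gp^{-1}(\Gpt-\Gp)\big)^{j}\Gpt^{-1} + \big(\Gp^{-1}(\Gpt-\Gp)\big)^{k+1}\Gp^{-1}$, which is verified by telescoping (multiply through by $\Gpt$ on the right and expand). Thus the remainder equals $\big(\Gp^{-1}(\Gpt-\Gp)\big)^{k+1}\Gp^{-1}$; on the intersection of the events $\{|\Gp^{-1}|\le C_\Gamma\}$ and $\{|\Gp-\Gpt|\le \delta'\tO^{-1}\log(\tO)^{1/2}\}$ its norm is at most $C_\Gamma^{k+2}(\delta')^{k+1}\tO^{-(k+1)}\log(\tO)^{(k+1)/2}$, which is $\le \delta\tO^{-(k+1)}\log(\tO)^{(k+1)/2}$ for appropriate constants. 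Taking the complement and applying parts (a) and (c) gives the bound, with the $k=0$ statement being the special case. The main obstacle, such as it is, is purely bookkeeping: getting the power of $\log(\tO)$ and the $n\h\to\infty$ comparison right so that the Bernstein tail in part (a) beats $r_{\tp}$; everything else is the standard "concentration of the design matrix plus perturbation of the inverse" argument, and the uniformity over $\F_\S$ is automatic because every constant ($C$, $c_\Gamma$, the moment bounds) is uniform under Assumptions \ref{suppasmpt:dgp lp}--\ref{suppasmpt:ci lp}.
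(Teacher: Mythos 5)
Your proposal is correct and follows essentially the same route as the paper's proof: part (a) via Bernstein's inequality applied entrywise to the bounded, $O(\h)$-variance summands of $\Gp-\Gpt$ (the paper packages this as its generic Lemma \ref{supplem:1}), and part (b) via the Neumann/telescoping expansion of $\Gp^{-1}$ around $\Gpt^{-1}$ on the event that $|\Gp^{-1}(\Gpt-\Gp)|<1$, with the uniform nonsingularity of $\Gpt$ supplying the constants. The only (immaterial) difference is that you obtain part (c) directly from part (a) by a perturbation bound on the inverse, whereas the paper routes it through the $k=0$ case of part (b) plus the triangle inequality.
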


\begin{lemma}
	\label{supplem:lambda}
	Let the conditions of Theorem \ref{suppthm:EE lp} hold. Let $\bA$ be a fixed-dimension vector or matrix of continuous functions of $\Xhi$ that does not depend on $n$. For some $\delta > 0$, 
	\[r_{\tp,\f}^{-1} \P \left[\left|  \frac{1}{n \h} \sumi \left\{   (K \bA) (\Xhi)  - \E[(K \bA) (\Xhi)]  \right\} \right|  > \delta \tO^{-1} \log(\tO)^{1/2}  \right] = o(1).\]
	Further, there is some constant $C_{\bA} > 0$ such that $r_{\tp,\f}^{-1}\P[ \sumi (K \bA) (\Xhi)/(n\h) > C_{\bA} ] = o(1)$.
	In particular, $r_{\tp,\f}^{-1} \P[ |\Lp_1  - \Lpt_1 | > \delta \tO^{-1} \log(\tO)^{1/2} ] = o(1)$. Lemma \ref{supplem:gamma}(a) is also a special case.
\end{lemma}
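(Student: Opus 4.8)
The plan is to reduce the statement to a single Bernstein-type bound for a sum of bounded, mean-zero i.i.d.\ random variables whose variance is of order $\h$, applied coordinatewise. Since $\bA$ has a fixed, $n$-free dimension, it suffices (by a union bound over the at most finitely many coordinates) to treat one scalar entry $a(\cdot)$ of $\bA(\cdot)$ at a time; fix such an entry and set $W_i := (Ka)(\Xhi) - \E[(Ka)(\Xhi)]$, which are i.i.d.\ and mean zero. Because $K$ is supported on $[-1,1]$ and bounded, and $a$ is continuous (hence bounded on $[-1,1]$) and does not depend on $n$, we have $|W_i| \le M$ for a constant $M<\infty$ not depending on $n$ or on $\f \in \F_\S$. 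Moreover, by the change of variables $u = (X_i-\x)/\h$ together with $f \le C$ from Assumption \ref{suppasmpt:dgp lp}, $\E[(Ka)(\Xhi)^2] = \h \int K(u)^2 a(u)^2 f(\x + u\h)\,du \le C'\h$, so $\V[W_i] \le C'\h$, uniformly in $\F_\S$.

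Next, since $\tO = \sqrt{n\h}$ gives $n\h/\tO = \tO$, the event in question is
\[
  \left\{\left|\tfrac{1}{n\h}\sumi W_i\right| > \delta\tO^{-1}\log(\tO)^{1/2}\right\} = \left\{\left|\sumi W_i\right| > \delta\,\tO\log(\tO)^{1/2}\right\}.
\]
Bernstein's inequality bounds the probability of the right-hand event by $2\exp\{-\tfrac12 t^2/(n\V[W_i] + Mt/3)\}$ with $t = \delta\tO\log(\tO)^{1/2}$. Here $n\V[W_i] \le C'n\h = C'\tO^2$, while $Mt/3 = O(\tO\log(\tO)^{1/2}) = o(\tO^2)$ because $\log(\tO)^{1/2}/\tO \to 0$; hence for $n$ large the denominator is at most $2C'\tO^2$, the exponent is at least $\delta^2\log(\tO)/(8C')$, and the probability is at most $2\tO^{-\delta^2/(8C')} = 2(n\h)^{-\delta^2/(16C')}$. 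Since $r_{\tp} \ge \tO^{-2} = (n\h)^{-1}$ (this lower bound being $\f$-free), multiplying by $r_{\tp}^{-1} \le n\h$ yields a bound of order $(n\h)^{1 - \delta^2/(16C')}$, which tends to zero — uniformly in $\f$, as $C'$ and $M$ are uniform in $\F_\S$ — once $\delta$ is taken larger than $4\sqrt{C'}$. Summing the finitely many coordinates gives the first claim.

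For the second claim, write (coordinatewise) $\sumi (K\bA)(\Xhi)/(n\h) = \h^{-1}\E[(K\bA)(\Xhi)] + \tfrac{1}{n\h}\sumi W_i$. The first term equals $\int K(u)\bA(u) f(\x + u\h)\,du$, bounded in absolute value by a constant $C_0$ uniformly in $n$ and $\f$; put $C_{\bA} = C_0 + 1$. Since $\delta\tO^{-1}\log(\tO)^{1/2} \to 0$, for $n$ large the event $\{\sumi (K\bA)(\Xhi)/(n\h) > C_{\bA}\}$ is contained in $\{|\tfrac{1}{n\h}\sumi W_i| > 1\} \subseteq \{|\tfrac{1}{n\h}\sumi W_i| > \delta\tO^{-1}\log(\tO)^{1/2}\}$, whose probability was just shown to be $o(r_{\tp})$. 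The special cases are then immediate specializations: $\bA(u) = \br_p(u)u^{p+1}$ gives $\tfrac{1}{n\h}\sumi(K\bA)(\Xhi) = \Lp_1$ and $\h^{-1}\E[(K\bA)(\Xhi)] = \Lpt_1$, and $\bA(u) = \br_p(u)\br_p(u)'$ recovers Lemma \ref{supplem:gamma}(a) for $\Gp - \Gpt$; in both cases $\bA$ is continuous and free of $n$.

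There is no genuine obstacle here; the one point requiring care is purely bookkeeping: the deviation threshold, the normalization $1/(n\h)$, and the summand variance all carry compensating powers of $n\h$, so one must check that the Bernstein exponent is governed by the variance term $n\V[W_i] \asymp n\h$ rather than by the boundedness term $Mt$ — this is what produces the logarithmic factor $\log(\tO)$ and hence a polynomial-in-$(n\h)$ tail bound strong enough to absorb $r_{\tp}^{-1} \le n\h$ — and that every constant generated along the way ($M$, $C'$, $C_0$) is uniform over $\F_\S$, which is exactly what Assumption \ref{suppasmpt:dgp lp} supplies.
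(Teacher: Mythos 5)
Your proof is correct and follows essentially the same route as the paper: the paper reduces to a typical scalar entry via Boole's inequality and invokes its generic Bernstein bound (Lemma \ref{supplem:1}) for bounded, mean-zero summands with variance $O(\h)$, absorbing $r_{\tp}^{-1}\le \tO^{2}$ exactly as you do, and derives the second claim from the first by the same containment argument you use. The only cosmetic difference is that you carry out the Bernstein computation inline rather than citing the generic lemma.
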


\begin{lemma}
	\label{supplem:var}
	Let the conditions of Theorem \ref{suppthm:EE lp} hold. Let $\bA$ be a fixed-dimension vector or matrix of continuous functions of $\Xhi$ that does not depend on $n$. For some $\delta > 0$, 
	\[r_{\tp,\f}^{-1} \P \left[\left|  \frac{1}{n \h} \sumi \left\{   (K \bA) (\Xhi) \e_i  \right\} \right|  > \delta \tO^{-1} \log(\tO)^{1/2}  \right] = o(1).\]
	In particular, with $\bA = \br_p(\Xhi)$, $r_{\tp,\f}^{-1} \P \left[\left| \Op \left( \bY  - \bM \right)/ n \right|  > \delta \tO^{-1} \log(\tO)^{1/2}  \right]$.
%	The same holds with $\e_i^2 - v(X_i)$ in place of $\e_i$, since it is conditionally mean zero and has more than four moments.
\end{lemma}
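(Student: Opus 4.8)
The plan is to establish this by a truncation argument combined with Bernstein's inequality. Write $W_i := (K\bA)(\Xhi)\,\e_i$, so the quantity inside the probability is $(n\h)^{-1}\sumi W_i$. Since $\{(Y_i,X_i)\}$ is a random sample and $\E[\e_i\mid X_i]=0$, for each fixed $n$ the $W_i$ are i.i.d.\ with $\E[W_i]=0$. Because $K$ is supported on $[-1,1]$, each entry of $\bA$ is a continuous (hence bounded on $[-1,1]$) function of $\Xhi$, the density $f$ of $X$ is bounded, and Assumption \ref{suppasmpt:dgp lp} gives $\E[|Y|^{8+c}\mid X=x]\le C$ (whence $\E[|\e|^{8+c}\mid X=x]$ is bounded near $\x$, as $\mu$ is bounded there), the change of variables $u=(X_i-\x)/\h$ yields $\E[|W_i|^{k}]=O(\h)$ for every $k\le 8+c$; in particular $\E[W_i^2]=O(\h)$. (At a boundary point the range of integration shrinks, but the orders are unchanged.) It suffices to treat $\bA$ scalar, the vector/matrix case following by a union bound over the fixed number of entries. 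Finally, since $r_{\tp}=\max\{\tO^{-2},\btp^2,\tO^{-1}\btp\}\ge\tO^{-2}=(n\h)^{-1}$, so that $r_{\tp}^{-1}\le n\h$, and since the threshold $\delta\tO^{-1}\log(\tO)^{1/2}$ on $(n\h)^{-1}\sumi W_i$ corresponds to a threshold of order $\delta\sqrt{n\h\log(n\h)}$ on $\sumi W_i$, it is enough to prove that, for a suitably large $\delta>0$,
\[
\P\Big[\,\big|\textstyle\sumi W_i\big|\;>\;\delta\,\sqrt{n\h\,\log(n\h)}\,\Big]\;=\;o\big((n\h)^{-1}\big).
\]

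Fix the truncation level $M_n:=\big(n\h/\log(n\h)\big)^{1/2}$ and split $W_i=W_i\One\{|W_i|\le M_n\}+W_i\One\{|W_i|>M_n\}$. For the large part, on $\{\max_i|W_i|\le M_n\}$ it contributes nothing, and since $\{|W_1|>M_n\}\subseteq\{X_1\in[\x\pm\h]\}\cap\{|\e_1|>M_n/\|K\bA\|_\infty\}$, Markov's inequality with the $(8+c)$-th moment (keeping the kernel localization) gives $n\,\P[|W_1|>M_n]\le C\,n\h\,M_n^{-(8+c)}=O\big((n\h)^{-(6+c)/2}(\log n\h)^{(8+c)/2}\big)$, which tends to $0$ even after multiplication by $n\h$ because the bandwidth condition of Theorem \ref{suppthm:EE lp} forces $n\h\gg\log(n\h)^{2+\gamma}$ with $(8+c)/(4+c)<2\le 2+\gamma$. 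Re-centering the truncated part as $W_i\One\{|W_i|\le M_n\}=\big(W_i\One\{|W_i|\le M_n\}-\E[W_i\One\{|W_i|\le M_n\}]\big)+\E[W_i\One\{|W_i|\le M_n\}]$, the deterministic shift obeys $n\,|\E[W_1\One\{|W_1|\le M_n\}]|=n\,|\E[W_1\One\{|W_1|>M_n\}]|\le n\,\E[|W_1|^{8+c}]\,M_n^{-(7+c)}=O\big((n\h)^{-(5+c)/2}(\log n\h)^{(7+c)/2}\big)$, which is eventually far below the threshold $\delta\sqrt{n\h\log(n\h)}\to\infty$, so the event it exceeds $\tfrac13\delta\sqrt{n\h\log(n\h)}$ is empty for large $n$.

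It remains to bound the centered truncated sum. Its summands are bounded in absolute value by $2M_n$ and have variance at most $\E[W_1^2]=O(\h)$, so Bernstein's inequality gives, for $s=\tfrac13\delta\sqrt{n\h\log(n\h)}$,
\[
\P\Big[\,\big|\textstyle\sumi\big(W_i\One\{|W_i|\le M_n\}-\E[W_1\One\{|W_1|\le M_n\}]\big)\big|>s\,\Big]\le 2\exp\!\left(-\frac{s^2/2}{C\,n\h+\tfrac{2}{3}M_n s}\right).
\]
By the choice of $M_n$, $M_n\sqrt{n\h\log(n\h)}=n\h$, so both terms in the denominator are $O(n\h)$ and the exponent equals $-\,\Theta\!\big(\delta^2/(1+\delta)\big)\log(n\h)$; taking $\delta$ large enough makes this bound $O\big((n\h)^{-2}\big)=o\big((n\h)^{-1}\big)$. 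Combining the three pieces via Boole's inequality proves the displayed estimate, hence the lemma. The ``in particular'' statement follows at once from $\Op(\bY-\bM)/n=(n\h)^{-1}\sumi(K\br_p)(\Xhi)\e_i$, applied coordinatewise with a union bound over the $p+1$ entries.

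The only substantive choice is the truncation level: $M_n$ must be small enough that the $M_n s$ term does not dominate $n\h$ in the Bernstein denominator (so the exponent is of order $\log(n\h)$ and the bound is a genuine negative power of $n\h$), yet large enough that the discarded tail — which we can control only through the $8+c$ moments afforded by Assumption \ref{suppasmpt:dgp lp} — vanishes faster than $(n\h)^{-1}$ even after the inflation by $r_{\tp}^{-1}\le n\h$. The value $M_n=(n\h/\log(n\h))^{1/2}$ threads this needle; everything else is routine bookkeeping.
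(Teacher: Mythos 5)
Your proof is correct and follows essentially the same route as the paper's, which establishes this lemma by reducing it entrywise (via Boole's inequality) to the generic Lemma \ref{supplem:2}: a truncation argument combined with Bernstein's inequality for the bounded part and a Markov/moment bound for the tail. The only differences are bookkeeping --- the paper truncates $Y_i$ at level $\tO^{A}$ with $2/\pi \leq A<1$, recenters conditionally on $X_i$, and controls the tail with a second-moment bound using $\E[|Y|^{2+\pi}\mid X]\le C$ for some $\pi>2$, whereas you truncate the whole summand $W_i$ at roughly $\tO/\log(\tO)^{1/2}$ and spend the full $(8+c)$-th moment on the tail via a union bound over $\max_i|W_i|$ --- and both choices thread the same needle.
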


\begin{lemma}
	\label{supplem:bias1} 
	Let the conditions of Theorem \ref{suppthm:EE lp} hold. Let $\bA$ be a fixed-dimension vector or matrix of continuous functions of $\Xhi$ that does not depend on $n$. For any $\delta > 0$, $\gamma > 0$, and positive integer $k$,
	\[r_{\tp,\f}^{-1} \P \bigg[\bigg| \frac{1}{n \h} \sumi \bigg\{   (K \bA) (\Xhi)   \left[\mu(X_i) -  \br_p(X_i - \x)'\bbeta_p\right]^k  \bigg\}  \bigg| >    \delta \frac{\btp^{k-1}}{\tO^{k-1}} \log(\tO)^\gamma \bigg] = o(1).\]
	In particular, with $k=1$ and $\bA = \br_p(\Xhi)$, $r_{\tp,\f}^{-1} \P \left[\left| \Op \left( \bM  - \bR \bbeta_p \right)/ n \right|  > \delta \log(\tO)^\gamma  \right] = o(1)$.
\end{lemma}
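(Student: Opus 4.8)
The plan is to view $\frac{1}{n\h}\sumi (K\bA)(\Xhi)\,B_i^{\,k}$, with $B_i := \mu(X_i) - \br_p(X_i - \x)'\bbeta_p$, as a normalized sum of bounded i.i.d.\ summands and to apply a Bernstein bound, reusing the concentration machinery that drives Lemmas \ref{supplem:lambda} and \ref{supplem:var}. Two deterministic facts are needed first. (i) On the support of the kernel, $|(K\bA)(\Xhi)| \le C$: by Assumption \ref{suppasmpt:ci lp} the kernel $K$ is bounded on $[-1,1]$, and $\bA$ is a fixed continuous function, hence bounded on the compact $[-1,1]$, uniformly over $\F_\S$. (ii) On the same set, $|B_i| \le C\h^{\az}$ uniformly over $\F_\S$: this is exactly the content of the Taylor expansion \eqref{suppeqn:Taylor lp} restricted to $X_i \in [\x \pm \h]$, which gives $B_i = \sum_{j = S\wedge p + 1}^{S} \frac{\h^{j}}{j!}(\Xhi)^{j}\mu^{(j)}(\x) + O(\h^{S+s})$, hence $|B_i| \le C\h^{p+1}$ when $p<S$ and $|B_i| \le C\h^{S+s}$ when $p \ge S$, i.e.\ $|B_i| \le C\h^{\az}$ with $\az$ as in \eqref{suppeqn:bias constant lp}; the constant $C$ does not depend on $\f$ because the bounds on $\mu^{(j)}$ and on the H\"older constant in Assumption \ref{suppasmpt:dgp lp} are uniform.

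Write $W_i := (K\bA)(\Xhi)\,B_i^{\,k}$. The two facts give $|W_i| \le C\h^{\az k}\One(|\Xhi| \le 1)$ and $\E[W_i^2] \le C\h^{2\az k}\,\P[|X-\x|\le\h] = O(\h^{2\az k + 1})$, while the change of variables already carried out for $\E[\bV_3]$ in Step (I) gives $\E\big[\frac{1}{n\h}\sumi W_i\big] = \h^{-1}\E[W_i] = O(\h^{\az k})$. Decompose the average into its mean plus $\frac{1}{n\h}\sumi(W_i - \E[W_i])$. Since $\btp/\tO = \h^{\az}\bct$ with $\bct$ bounded (by \eqref{suppeqn:bias constant lp}), the mean is $O\big(\h^{\az}(\btp/\tO)^{k-1}\bct^{-(k-1)}\big)$, which for $n$ large is below half the threshold $\delta(\btp/\tO)^{k-1}\log(\tO)^{\gamma}$ (for $k=1$ this is trivial: the threshold $\delta\log(\tO)^{\gamma}$ diverges while the mean vanishes), so the event in the lemma is contained in $\big\{\big|\frac{1}{n\h}\sumi(W_i - \E[W_i])\big| > \tfrac{\delta}{2}(\btp/\tO)^{k-1}\log(\tO)^{\gamma}\big\}$. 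Bernstein's inequality applied to $\sumi(W_i - \E[W_i])$ — summands bounded by $2C\h^{\az k}$, total variance $O(n\h^{2\az k+1})$ — bounds this probability by $2\exp\{-c\,\tO^{2}\min(\tau^2,\tau)\}$ with $\tau = \tfrac{\delta}{2}\,\h^{-\az k}(\btp/\tO)^{k-1}\log(\tO)^{\gamma} = \tfrac{\delta}{2}\,\h^{-\az}\bct^{k-1}\log(\tO)^{\gamma} \to \infty$; hence it vanishes faster than any power of $n$, and multiplying by the polynomially bounded $r_{\tp}^{-1}$ still yields $o(1)$. The ``in particular'' statement is the case $k=1$, $\bA = \br_p(\Xhi)$, using $\Op(\bM - \bR\bbeta_p)/n = \frac{1}{n\h}\sumi(K\br_p)(\Xhi)\,B_i$.

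The main obstacle is the accounting at the threshold: one must check, uniformly over $\f \in \F_\S$, that $\h^{\az k}$ is dominated by $(\btp/\tO)^{k-1}\log(\tO)^{\gamma}$ and that the Bernstein exponent diverges — immediate whenever $\bct = \btp/(\tO\h^{\az})$ is bounded away from zero, and in the borderline configurations where it is not (e.g.\ the symmetry cancellation at interior $\x$ with $p-\v$ even, discussed in Section \ref{supp:bias lp}) one instead tracks the sharper decay rate of $\btp$ recorded there and verifies the exponent still blows up, moving from the linear to the quadratic regime of $\min(\tau^2,\tau)$. This only matters for $k \ge 2$; the sole instance invoked downstream is $k=1$, whose threshold does not involve $\btp$ at all, making that case entirely elementary. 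The remaining ingredients — the change of variables, the moment bounds, and the form of Bernstein's inequality — are routine and mirror computations already displayed in Step (I).
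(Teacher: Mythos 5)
Your argument is correct in substance, but it takes a genuinely different and heavier route than the paper's. The paper proves this lemma in one line from Lemma \ref{supplem:3}: since $r_{\tp}^{-1}\leq \tO/\btp$ (Lemma \ref{supplem:delta2}), Markov's inequality applied directly to the uncentered average bounds the quantity by $\frac{\tO}{\btp}\left(\frac{\tO}{\btp}\right)^{k-1}\frac{1}{\delta\log(\tO)^{\gamma}}\,\E\big[\h^{-1}|(K \bA)(\Xhi)|\,|\mu(X_i)-\br_p(X_i-\x)'\bbeta_p|^{k}\big]=O(\log(\tO)^{-\gamma})$, using only the first-moment scaling $\E[\h^{-1}|K\bA|\,|B_i|^k]=O((\btp/\tO)^k)$ from Section \ref{supp:bias lp} --- no centering, no pointwise bound on $B_i$, no exponential inequality. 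You instead split off the mean and run Bernstein on the fluctuation; this delivers an exponentially small tail (far stronger than the $o(r_{\tp})$ decay actually required) and has the minor structural virtue of not leaning on the reduction $r_{\tp}^{-1}\leq\tO/\btp$, but it costs you the deterministic bound $|B_i|\leq C\h^{\az}$ and the threshold bookkeeping you describe. The caveat you flag --- that when $\btp/\tO$ decays one power of $\h$ faster than the pointwise envelope of $B_i$ (interior $\x$, $p-\v$ even), the mean term can crowd the threshold for large $k$ --- is a genuine borderline issue, but note that the paper's own Markov argument rests on the analogous moment scaling $\E[\h^{-1}|K\bA|\,|B_i|^k]=O((\btp/\tO)^k)$ and faces exactly the same accounting there; for the instances $k\in\{1,2\}$ actually invoked in Step (I) both arguments go through without it. So: correct, more elementary in its probabilistic input than necessary, and different in mechanism from the proof in the paper.
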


\begin{lemma}
	\label{supplem:bias2} 
	Let the conditions of Theorem \ref{suppthm:EE lp} hold. Let $\bA$ be a fixed-dimension vector or matrix of continuous functions of $\Xhi$ that does not depend on $n$. For any $\delta > 0$, $\gamma > 0$, and positive integer $k$,
	\begin{multline*}
		r_{\tp,\f}^{-1} \P \bigg[\bigg| \frac{1}{n \h} \sumi \Big\{ (K \bA)(\Xhi) \left[\mu(X_i) -  \br_p(X_i - \x)'\bbeta_p\right]^k      		\\
		    - \E\left[ (K \bA)(\Xhi) \left[\mu(X_i) -  \br_p(X_i - \x)'\bbeta_p\right]^k \right] \Big\}  \bigg|  > \delta_2 \frac{\btp^k}{\tO^k} \log(\tO)^\gamma   \bigg]   = o(1).
	\end{multline*}
\end{lemma}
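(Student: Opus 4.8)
The plan is to view the quantity inside the probability as a centered i.i.d.\ average of \emph{bounded} summands and to apply a Bernstein-type exponential inequality. Write
\[
	W_i := (K \bA)(\Xhi) \left[\mu(X_i) -  \br_p(X_i - \x)'\bbeta_p\right]^k ,
\]
so that the event in question is $\{\, |(n\h)^{-1}\sumi (W_i - \E[W_i])| > \delta\, \btp^k \tO^{-k} \log(\tO)^\gamma \,\}$ and the $W_i$ are i.i.d. Because $K$ is bounded with support $[-1,1]$ (Assumption \ref{suppasmpt:ci lp}) and $\bA$ is a fixed, continuous, finite-dimensional matrix-valued function of $\Xhi$, the factor $(K\bA)(\Xhi)$ is uniformly bounded and vanishes off $\{K(\Xhi)>0\}$; on that set the Taylor expansion \eqref{suppeqn:Taylor lp} gives $|\mu(X_i) - \br_p(X_i-\x)'\bbeta_p| \le C \h^{\az}$ with $\az$ the bias index of \eqref{suppeqn:bias constant lp}. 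Hence the a.s.\ bound $|W_i| \le M := C \h^{k\az}$, and (one extra power of $\h$ coming from the change of variables $\Xhi\mapsto u$) also $\E[W_i^2] \le C\,\h^{\,1+2k\az}$, so that $v := \sumi \V[W_i] \le C\, n\,\h^{\,1+2k\az}$.

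With the threshold $t_n := n\h\,\delta\, \btp^k\tO^{-k}\log(\tO)^\gamma$, the Bernstein inequality for sums of bounded independent variables gives
\[
	\P\!\left[ \left| \sumi (W_i - \E[W_i]) \right| > t_n \right] \le 2\exp\!\left( - \frac{t_n^2/2}{\,v + M t_n/3\,} \right).
\]
By \eqref{suppeqn:bias constant lp}, $\btp = \tO\,\h^{\az}\bct$, so $\btp^k\tO^{-k} = \h^{k\az}\bct^k$ and $t_n \asymp n\,\h^{\,1+k\az}\log(\tO)^\gamma$ (the leading constant $\bct$ being read off from Section \ref{supp:bias lp} in whichever regime applies --- $\tp$, interior or boundary, $\S$ versus $p$, parity of $p-\v$ --- bounded and, where the bound is informative, bounded away from zero). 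Then $M t_n \asymp n\,\h^{\,1+2k\az}\log(\tO)^\gamma$ dominates $v$ since $\log(\tO)^\gamma\to\infty$, so the exponent is of exact order $-\,t_n/M \asymp -\,n\h\log(\tO)^\gamma = -\,\tO^2\log(\tO)^\gamma$. Since $r_{\tp}\ge\tO^{-2}$ forces $r_{\tp}^{-1}\le\tO^{2}$, and since $\tO=\sqrt{n\h}\to\infty$ under the bandwidth conditions of Theorem \ref{suppthm:EE lp}, we get $r_{\tp}^{-1}\,\P[\cdot] \le \tO^{2}\cdot 2\exp(-c\,\tO^{2}\log(\tO)^\gamma)\to 0$, which is the assertion. (The $\log(\tO)^\gamma$ slack is in fact more than enough: plain Chebyshev, using $\V\!\big[(n\h)^{-1}\sumi W_i\big]\le C\,\tO^{-2}\h^{2k\az}$, already gives $r_{\tp}^{-1}\,\P[\cdot]\le C/(\bct^{2k}\log(\tO)^{2\gamma})\to 0$.)

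The only non-routine ingredient is the calibration just used: the a.s.\ bound $|W_i|\le C\h^{k\az}$ \emph{and} the matching order $\btp^k\tO^{-k}\asymp\h^{k\az}$ of the threshold. Both come from pinning down the exact order of the local-polynomial bias residual $\mu(X_i)-\br_p(X_i-\x)'\bbeta_p$ over the effective kernel support $[\x-\h,\x+\h]$; this is precisely where the smoothness parameters $\S,\s$ and, for bias-corrected statistics, the interior/boundary and parity distinctions of Section \ref{supp:bias lp} enter, and once that order is fixed the concentration bound is standard. Uniformity over $\f\in\F_\S$ is then automatic, since every constant above ($C$, the bounds on $f$ and $\bA$, the remainder constant in \eqref{suppeqn:Taylor lp}, and the implied constants in $M$ and $v$) depends only on the constants of Assumption \ref{suppasmpt:dgp lp} and Assumption \ref{suppasmpt:ci lp}; if one prefers, the pointwise-along-sequences conclusion is promoted to the stated uniform one exactly by the subsequence argument of Step (III) in Section \ref{supp:proof lp us step 3}.
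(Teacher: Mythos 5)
Your proposal is correct, and its closing parenthetical is in fact the paper's own proof: the paper handles Lemma \ref{supplem:bias2} by reducing to scalar entries of $\bA$ via Boole's inequality (exactly as in the proof of Lemma \ref{supplem:var}), invoking Lemma \ref{supplem:delta2} to replace $r_{\tp}^{-1}$ by $\tO^2$, and then applying the generic Lemma \ref{supplem:4}, which is nothing but Markov's inequality on the centered second moment together with the bound $\E\bigl[ \h^{-1} (Km)(\Xhi) g(X_i) \bigl[ (\tO/\btp) (\mu(X_i) - \br_p(X_i - \x)'\bbeta_p) \bigr]^{2k} \bigr] = O(1)$ from Section \ref{supp:bias lp}. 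That is precisely your Chebyshev remark, $\V\bigl[(n\h)^{-1}\sumi W_i\bigr] \leq C \tO^{-2}\h^{2k\az}$ against a threshold of order $\h^{k\az}\log(\tO)^\gamma$. Your headline Bernstein argument is a legitimate strengthening rather than a different idea: it is available here (and not in Lemmas \ref{supplem:var} or \ref{supplem:bias3}) because the summands carry no $\e_i$ factor and hence are genuinely bounded by $C\h^{k\az}$ on the kernel support, and it buys an exponentially small tail where the paper settles for $O(\log(\tO)^{-2\gamma})$ decay — more than is needed, since $r_{\tp}^{-1}\leq\tO^2$ grows only polynomially. Both routes rest on the same non-routine calibration, namely that the pointwise order $\h^{\az}$ of $\mu(X_i)-\br_p(X_i-\x)'\bbeta_p$ over $[\x\pm\h]$ from \eqref{suppeqn:Taylor lp} matches the order of $\btp\tO^{-1}$, i.e.\ that $\bct$ is bounded away from zero where the statement is informative; you flag this explicitly, and the paper carries the same implicit caveat, so it is not a gap relative to the paper's argument. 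Your uniformity remark is also consistent with the paper, since all constants trace back to Assumptions \ref{suppasmpt:dgp lp} and \ref{suppasmpt:ci lp}.
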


\begin{lemma}
	\label{supplem:bias3} 
	Let the conditions of Theorem \ref{suppthm:EE lp} hold. Let $\bA$ be a fixed-dimension vector or matrix of continuous functions of $\Xhi$ that does not depend on $n$. For any $\delta > 0$ and $\gamma > 0$,
	\[r_{\tp,\f}^{-1} \P \bigg[\bigg| \frac{1}{n \h} \sumi \bigg\{   (K \bA) (\Xhi)   \left[\mu(X_i) -  \br_p(X_i - \x)'\bbeta_p\right] \e_i  \bigg\}  \bigg| >    \delta \frac{\btp}{\tO} \log(\tO)^\gamma \bigg] = o(1).\]
\end{lemma}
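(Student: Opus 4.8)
The plan is to reduce the statement to the variance-type bound of Lemma \ref{supplem:var}. Write $W_i := (K \bA)(\Xhi)\,[\mu(X_i) - \br_p(X_i - \x)'\bbeta_p]\,\e_i$, so the quantity of interest is $\frac{1}{n\h}\sumi W_i$. The summands are i.i.d.; since $\E[\e_i \mid X_i] = 0$ and the bias factor is a function of $X_i$ alone, $\E[W_i \mid X_i] = 0$; and $W_i$ vanishes outside $\{|\Xhi| \le 1\}$, on which \eqref{suppeqn:Taylor lp} gives $|\mu(X_i) - \br_p(X_i - \x)'\bbeta_p| \le C\h^{\az}$ uniformly in $\F_\S$, with $\az$ the bias exponent of \eqref{suppeqn:bias constant lp}. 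So $|W_i| \le C\h^{\az}|(K\bA)(\Xhi)|\,|\e_i|$, the summands are conditionally mean zero, and the conventional strategy (truncate the errors, Bernstein on the truncated part, Markov on the tail) applies.

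Concretely, I would first decompose the bias factor via \eqref{suppeqn:Taylor lp},
\[
\mu(X_i) - \br_p(X_i - \x)'\bbeta_p = \sum_{k = S \wedge p + 1}^{S} \frac{\mu^{(k)}(\x)}{k!}\,\h^{k}(\Xhi)^{k} + R_i, \qquad |R_i| \le C\h^{S+s},
\]
valid on $\{|\Xhi|\le 1\}$ uniformly in $\F_\S$ (the sum is empty when $p \ge S$). Substituting, $\frac{1}{n\h}\sumi W_i$ becomes $\sum_{k} \frac{\mu^{(k)}(\x)}{k!}\h^{k}\,\frac{1}{n\h}\sumi (K \bA_k)(\Xhi)\e_i + \frac{1}{n\h}\sumi (K\bA)(\Xhi) R_i \e_i$, where $\bA_k(u) := \bA(u)\,u^{k}$ is again a fixed, continuous, fixed-dimension function. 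For each of the finitely many $k$-terms, $|\mu^{(k)}(\x)| \le C$, $\h^{k} \le \h^{\az}$ (as $k \ge S\wedge p + 1 \ge \az$), and Lemma \ref{supplem:var} applied with $\bA_k$ bounds $\frac{1}{n\h}\sumi (K\bA_k)(\Xhi)\e_i$ by $\delta'\tO^{-1}\log(\tO)^{1/2}$ off an event of probability $o(r_{\tp})$; so this term is $O\!\big(\h^{\az}\tO^{-1}\log(\tO)^{1/2}\big) = o\!\big(\h^{\az}\log(\tO)^{\gamma}\big)$ with the required probability, since $\tO \to \infty$. For the Hölder remainder, $\big|\frac{1}{n\h}\sumi (K\bA)(\Xhi) R_i \e_i\big| \le C\h^{S+s}\cdot \frac{1}{n\h}\sumi |(K\bA)(\Xhi)|\,|\e_i|$, and the last average has bounded mean and concentrates by the same truncation argument as in Lemma \ref{supplem:lambda} (now applied to $|\e_i|$), hence is $O(1)$ off an event of probability $o(r_{\tp})$; thus the remainder is $O(\h^{S+s}) = O(\h^{\az})$, which is $o(\h^{\az}\log(\tO)^{\gamma})$ because $\log(\tO)^{\gamma}\to\infty$ absorbs the constant. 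Summing, $\frac{1}{n\h}\sumi W_i = o(\h^{\az}\log(\tO)^{\gamma})$ off an event of probability $o(r_{\tp})$; since $\btp \asymp \sqrt{n\h}\,\h^{\az}$, i.e. $\btp/\tO \asymp \h^{\az}$, this is $o\!\big(\tfrac{\btp}{\tO}\log(\tO)^{\gamma}\big)$, which is the claim. (If the leading derivative at $\x$ vanishes for some $\f$, so that $\btp$ is of smaller order, the bound still holds term by term in the decomposition, each surviving $k$-term carrying the matching power of $\h$.) Equivalently — essentially the same computation repackaged — one can bound $\frac{1}{n\h}\sumi W_i$ directly: $W_i$ is conditionally mean zero with $\E[W_i^2] = O(\h^{2\az+1})$ and $\E[|W_i|^m] = O(\h^{m\az+1})$ for $m \le 8+c$, so truncating $\e_i$ at a level calibrated to the $(8+c)$-th moment bound and applying Bernstein plus a Markov/union bound yields $\P\big[\big|\frac{1}{n\h}\sumi W_i\big| > C\h^{\az}\tO^{-1}\log(\tO)^{1/2+\gamma}\big] = o(r_{\tp})$, and the target threshold $\delta\tfrac{\btp}{\tO}\log(\tO)^{\gamma} \asymp \delta\h^{\az}\log(\tO)^{\gamma}$ dominates the achieved rate because $\tO\to\infty$.

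I expect the main obstacle to be the tail control needed to beat $r_{\tp}^{-1}$, which is as large as $\tO^{2}$ (since $r_{\tp} \ge \tO^{-2}$): the deviation probability at a $\log(\tO)^{1/2}$-inflated threshold must be super-polynomially small in $\tO$, and this is exactly what the bandwidth condition $n\h/\log(n\h)^{2+\gamma} \to \infty$ delivers after truncating the only-$(8+c)$-times-integrable errors at an appropriate level. Because this is the same mechanism already deployed for Lemmas \ref{supplem:var}, \ref{supplem:bias1}, and \ref{supplem:bias2}, the work is largely bookkeeping; the one genuinely new point is matching the achieved rate to the $\btp/\tO$ scale, which the Taylor decomposition in the second paragraph makes transparent.
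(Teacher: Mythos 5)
Your proof is correct, but it takes a genuinely different route from the paper's. The paper disposes of this lemma in one line by reducing it to Lemma \ref{supplem:5}, which is a pure second-moment (Chebyshev/Markov) bound: because $\e_i$ is conditionally mean zero the cross terms vanish, the variance of the normalized average is $\tO^{-2}\E[\h^{-1}((K\bA)(\Xhi)\e_i)^2(\mu(X_i)-\br_p(X_i-\x)'\bbeta_p)^2] = O(\tO^{-2}(\btp/\tO)^2)$, and comparing this to the squared threshold $\delta^2(\btp/\tO)^2\log(\tO)^{2\gamma}$ gives $\P[\cdot] = O(\tO^{-2}\log(\tO)^{-2\gamma})$, which beats $r_{\tp}^{-1}\le\tO^{2}$. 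No Taylor decomposition, no truncation, no Bernstein: the $\log(\tO)^{\gamma}$ inflation in the threshold is there precisely so that Chebyshev suffices, and the only input about the bias factor is its uniform size $O(\h^{\az})$ on the kernel's support (equivalently, that $(\tO/\btp)(\mu(X_i)-\br_p(X_i-\x)'\bbeta_p)$ is bounded there). Your route -- Taylor-expanding the bias factor, feeding each polynomial piece $\bA_k(u)=\bA(u)u^k$ into the exponential-concentration Lemma \ref{supplem:var}, and bounding the H\"older remainder crudely -- also works and even yields a sharper deviation level ($\h^{\az}\tO^{-1}\log(\tO)^{1/2}$ rather than $\h^{\az}\log(\tO)^{\gamma}$) for the polynomial part, but at the cost of extra machinery: in particular you need a concentration bound for $\frac{1}{n\h}\sumi|(K\bA)(\Xhi)||\e_i|$, which is not literally any stated lemma (Lemma \ref{supplem:lambda} covers deterministic functions of $X_i$, Lemma \ref{supplem:var} covers conditionally mean-zero summands) and must be assembled by centering $|\e_i|$ at $\E[|\e_i|\mid X_i]$; and the H\"older remainder still ends up controlled only by the $\log(\tO)^{\gamma}$ slack, so the final rate is unchanged. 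One shared caveat: your step ``$\btp/\tO\asymp\h^{\az}$'' presumes the bias constant is bounded away from zero, and your parenthetical patch for degenerate $\f$ is no more (and no less) rigorous than the paper's own implicit treatment of the same point in Lemma \ref{supplem:5}, so I would not count it against you here.
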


\begin{lemma}
	\label{supplem:bias4}
	Let the conditions of Theorem \ref{suppthm:EE lp} hold. For any $\delta > 0$ and $\gamma > 0$,
	\begin{multline*}
		r_{\tp,\f}^{-1} \P \bigg[\bigg| \frac{1}{n \h} \sumi  \Big\{ (K \br_p \br_p')(\Xhi) \Big( K(\Xhi) \left(\mu(X_i) -  \br_p(X_i - \x)'\bbeta_p\right)        		\\
		    -     \E\left[ K(\Xhi) \left(\mu(X_i) -  \br_p(X_i - \x)'\bbeta_p\right) \right]  \Big) \e_i  \Big\}   \bigg|  > \delta a_n \log(\tO)^\gamma   \bigg] = o(1).
	\end{multline*}
	where set $a_n = \tO^{-1} \btp$ if $r_{\tp,\f} = \tO^{-2}$; $a_n = \tO^{-2}$ if $r_{\tp,\f} = \btp^2$; or $a_n = \tO^{-3/2} \btp^{1/2}$ if $r_{\tp,\f} = \tO^{-1} \btp$.
\end{lemma}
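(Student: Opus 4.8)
The plan is to view the displayed average as $\frac1{n\h}\sumi g_i$ for an i.i.d.\ sequence $g_i$ of exactly mean-zero matrices, each supported on $\{|X_i-\x|\le\h\}$ and entrywise tiny, and to control it through its fourth moment and Markov's inequality. Write $\psi(X_i) := \mu(X_i) - \br_p(X_i-\x)'\bbeta_p$ and $c_\h := \E[K(\Xhi)\psi(X_i)]$, and set
\[ g_i := (K\br_p\br_p')(\Xhi)\bigl(K(\Xhi)\psi(X_i) - c_\h\bigr)\e_i . \]
Because $\e_i$ enters linearly and everything else multiplying it is a function of $X_i$ alone, $\E[g_i\mid X_i] = (K\br_p\br_p')(\Xhi)(K(\Xhi)\psi(X_i)-c_\h)\E[\e_i\mid X_i] = 0$, so each entry of $g_i$ has mean zero exactly and no re-centering of the sum is needed; also $g_i = 0$ whenever $K(\Xhi)=0$, i.e.\ off $\{|X_i-\x|\le\h\}$.

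The only step that is not pure bookkeeping is the amplitude bound $|\psi(X_i)|\,\One\{|X_i-\x|\le\h\} \le C\,\btp/\tO$, which is precisely the estimate already used in {\bf Step (I)} to conclude $\E[\bV_3] = O(\btp^2\tO^{-2})$: on $[\x\pm\h]$ the Taylor expansion \eqref{suppeqn:Taylor lp} gives $|\psi(X_i)| \le C\h^{(S\wedge p)+1}$, of the order $\h^\az\asymp\btp/\tO$ recorded in \eqref{suppeqn:bias constant lp}. Granting it, $|K(\Xhi)\psi(X_i)-c_\h|\le C\,\btp/\tO$ (the constant $c_\h$ is of the even smaller order $\h\,\btp/\tO$), so each entry of $g_i$ is bounded by $C(\btp/\tO)|\e_i|\,\One\{|X_i-\x|\le\h\}$. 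Combining this with $\E[|\e_i|^4\mid X_i]\le C$ — which follows from $\E[|Y|^{8+c}\mid X]\le C$ in Assumption \ref{suppasmpt:dgp lp} and boundedness of $\mu$ near $\x$ — and with $\P[|X_i-\x|\le\h]=O(\h)$, one gets, entrywise, $\E[g_i^2]=O(\h\,\btp^2/\tO^2)$ and $\E[g_i^4]=O(\h\,\btp^4/\tO^4)$.

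Next I would compute the fourth moment of the sum. Since the $g_i$ are i.i.d.\ and mean zero the mixed terms drop, and with $n\h=\tO^2$,
\[ \E\Bigl[\bigl(\textstyle\sumi g_i\bigr)^4\Bigr] = n\,\E[g_1^4] + 3\,n(n-1)\bigl(\E[g_1^2]\bigr)^2 = O\!\bigl(n\h\,\btp^4\tO^{-4}\bigr) + O\!\bigl((n\h)^2\btp^4\tO^{-4}\bigr) = O(\btp^4), \]
so $\E[(\tfrac1{n\h}\sumi g_i)^4] = O(\btp^4\tO^{-8})$ entrywise. Markov's inequality and a union bound over the $(p+1)^2$ entries (the dimension is fixed) give
\[ \P\Bigl[\bigl|\textstyle\frac1{n\h}\sumi g_i\bigr| > \delta\, a_n\log(\tO)^\gamma\Bigr] = O\!\left(\frac{\btp^4\tO^{-8}}{a_n^4\,\log(\tO)^{4\gamma}}\right). \]
It then remains to divide by $r_{\tp}$ in each regime. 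For $r_{\tp}=\tO^{-2}$, $a_n=\tO^{-1}\btp$, the bound is $O(\tO^{-4}\log(\tO)^{-4\gamma})=o(\tO^{-2})$. For $r_{\tp}=\btp^2$, $a_n=\tO^{-2}$, it is $O(\btp^4\log(\tO)^{-4\gamma})=o(\btp^2)$ since $\btp\to0$. For $r_{\tp}=\tO^{-1}\btp$ (which forces $\btp\asymp\tO^{-1}$, $a_n=\tO^{-3/2}\btp^{1/2}$, hence $a_n^4=\tO^{-6}\btp^2$), it is $O(\btp^2\tO^{-2}\log(\tO)^{-4\gamma})=o(\tO^{-1}\btp)$. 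In all three cases $r_{\tp}^{-1}$ times the probability vanishes, which is the claim.

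There is no real obstacle beyond the amplitude bound of the second paragraph, imported from the $\E[\bV_3]$ argument; the remainder is the same Rosenthal/Markov computation as in Lemmas \ref{supplem:bias2}--\ref{supplem:bias3}. A truncation-plus-Bernstein argument (cutting $\e_i$ at $\tO^\theta\log(\tO)$) would deliver a stretched-exponential tail and thus extra room, but the fourth-moment bound already suffices because the target $a_n\log(\tO)^\gamma$ exceeds the standard deviation $\asymp\btp\tO^{-2}$ of the average by a polynomial-in-$\tO$ factor in every regime.
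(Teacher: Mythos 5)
Your proof is correct and follows essentially the same route as the paper's: the paper reduces Lemma \ref{supplem:bias4} to the generic Lemma \ref{supplem:6}, whose proof is a Markov-type moment bound on the exactly mean-zero sum resting on precisely the two facts you isolate — that $\e_i$ is conditionally mean zero so no re-centering is needed, and that the bias residual has amplitude $O(\btp/\tO)$ on the kernel support (imported from the Section \ref{supp:bias lp} calculations). The only difference is that the paper applies Markov to the second moment, which already suffices and yields a $\log(\tO)^{-2\gamma}$ bound after dividing by $r_{\tp}$, whereas you use the fourth moment and obtain the same conclusion with extra polynomial slack.
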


Next, we show that the random variable $\bZ_i$, given in Equation \eqref{suppeqn:bZ}, obeys the appropriate $n$-varying version of Cram\'er's condition. This is used in {\bf Step (II)} to prove that the distribution of the (properly centered and scaled) sample average of $\bZ_i$ has an Edgeworth expansion. This type of Cram\'er's condition was first (to our knowledge) used by \cite{Hall1991_Statistics}.

\begin{lemma}
	\label{supplem:Cramer lp} 
	Let the conditions of Theorem \ref{suppthm:EE lp} hold. Let $\xi_Z(\bt)$ be the characteristic function of the random variable $\bZ_i$, given in Equation \eqref{suppeqn:bZ}. For $\h$ sufficiently small, for all $C_1 > 0$ there is a $C_2 >0$ such that
	\[ \sup_{|t| > C_1} |\xi_Z(\bt)| < (1 - C_2 \h). \]
\end{lemma}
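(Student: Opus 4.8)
The plan is to exploit the fact that $\bZ_i$ is supported on a shrinking neighbourhood of $\x$. Every coordinate of $\bZ_i$ carries a factor $K(\Xhi)$ or $K(\Xhi)^2$, so $\bZ_i\equiv\bm 0$ off the event $A_i:=\{|X_i-\x|\le\h\}$ (with the obvious one-sided modification at a boundary point), whose probability $q:=\P_\f[A_i]$ obeys $c'\h\le q\le C'\h$ for $\h$ small, with $c',C'$ depending only on the density bounds of Assumption \ref{suppasmpt:dgp lp} and on $K$. Writing $\psi_\h(\bt):=\E_\f[e^{\mathrm i\bt'\bZ_i}\mid A_i]$ we get $\xi_Z(\bt)=(1-q)+q\,\psi_\h(\bt)$, hence $|\xi_Z(\bt)|\le 1-q\bigl(1-|\psi_\h(\bt)|\bigr)\le 1-c'\h\bigl(1-|\psi_\h(\bt)|\bigr)$. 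Thus it suffices to produce, for each $C_1>0$, a constant $\eta=\eta(C_1)>0$ — independent of $\f\in\F_\S$ and of $\h$ small — with $\sup_{|\bt|>C_1}|\psi_\h(\bt)|\le 1-\eta$; then $C_2:=c'\eta$ works. Next I would pin down the conditional law: conditionally on $A_i$, change variables to $(u,\e):=(\Xhi,\e_i)$. Using $c\le f_{yx}\le C$, the (uniform over $\F_\S$) boundedness of $\mu_\f$ near $\x$ coming from $\E[|Y|^{8+c}\mid X]\le C$, and the uniform modulus of continuity of $\mu_\f$ at $\x$, one obtains $\h_0>0$, $r\in(0,1]$, $\delta>0$ and $0<c_0\le C_0<\infty$, all governed by the $\F_\S$-constants and $K$ only, such that for every $\f$ and every $\h<\h_0$ the conditional density of $(u,\e)$ given $A_i$ lies in $[c_0,C_0]$ on the fixed rectangle $R:=\{|u|\le r\}\times\{|\e|\le\delta\}$, with $r$ chosen so that $K>0$ on $[-r,r]$. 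On $R$ we may write $\bZ_i=G_{\f,\h}(u,\e)$, where each coordinate of $G_{\f,\h}$ is of the form $K(u)^m u^j\e^k$ ($m\in\{1,2\}$, $k\in\{0,1,2\}$, $j$ bounded) or $K(u)u^j$ times the Taylor remainder $\mu_\f(\x+\h u)-\br_p(\h u)'\bbeta_p$; by \eqref{suppeqn:Taylor lp} these coefficients converge as $\h\downarrow0$ to those of a fixed limit map $G_{\f,0}$, uniformly on $R$ and uniformly in $\f$.

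The core of the argument is a uniform non-degeneracy statement: no unit direction $\bm\theta$ annihilates $\bZ_i$ on $A_i$, and this holds with uniform constants. Concretely, I would show that $\bm\theta'G_{\f,\h}$ is not constant on $R$ whenever $\bm\theta\ne0$: reading off the coefficients of $\e^2,\e^1,\e^0$ in $\bm\theta'G_{\f,\h}(u,\e)$ converts such an identity into relations of the type $K(u)^2q_2(u)\equiv0$, $K(u)q_1(u)+K(u)^2q_1'(u)+(\text{small})\equiv0$, and so on, with polynomials whose coefficients are linear in the corresponding blocks of $\bm\theta$; since $K>0$ on $[-r,r]$ and, by Assumption \ref{suppasmpt:ci lp}, $(1,K(u)\br_{3(p+1)}(u))'$ is linearly independent on $[-1,1]$ (the uniform kernel, where $K$ is locally constant, handled by a direct Vandermonde argument on the monomial factors), each such relation forces the relevant block of $\bm\theta$ to vanish, so $\bm\theta=0$. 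Since $\bm\theta'G_{\f,\h}$ is real-analytic on each subinterval of $R$ on which $K$ is (polynomial, when $K$ is uniform), non-constancy forces its gradient to vanish only on a null subset of $R$; a compactness argument over $\{|\bm\theta|=1\}\times[0,\h_0]$, using the uniform convergence $G_{\f,\h}\to G_{\f,0}$ and the equicontinuity of these maps in $\f$, then yields $\rho,\kappa>0$ with $\bigl|\{(u,\e)\in R:|\nabla_{(u,\e)}\bm\theta'G_{\f,\h}|\ge\rho\}\bigr|\ge\kappa$ for all unit $\bm\theta$, all $\h\in[0,\h_0]$, all $\f\in\F_\S$.

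Given this, I would finish as follows. Conditioning further on $\{(u,\e)\in R\}$, whose conditional probability is $\ge c_0|R|>0$, gives $|\psi_\h(\bt)|\le 1-c_0|R|\bigl(1-|\E_\f[e^{\mathrm i\bt'\bZ_i}\mid A_i,(u,\e)\in R]|\bigr)$, so it remains to bound the inner modulus away from $1$ uniformly in $|\bt|>C_1$, $\h$, $\f$. Write $\bt=s\bm\theta$ with $|\bm\theta|=1$ and split the range of $s$. For $s>S_0$, a non-stationary-phase (integration-by-parts) estimate on the set where $|\nabla\bm\theta'G_{\f,\h}|\ge\rho$ gives a bound $\lesssim 1/(\rho s)$; equivalently, one extracts from the $(\rho,\kappa)$-bound a uniform absolutely continuous component of the one-dimensional law of $\bm\theta'\bZ_i\mid A_i$ and applies a quantitative Riemann–Lebesgue bound. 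For $s\in[C_1,S_0]$, the function $(s,\bm\theta,\h)\mapsto|\E_\f[e^{\mathrm i s\,\bm\theta'G_{\f,\h}}\mid A_i,(u,\e)\in R]|$ is continuous and, by the non-degeneracy above, strictly below $1$ at every point of the compact set $[C_1,S_0]\times\{|\bm\theta|=1\}\times[0,\h_0]$, hence has maximum $1-\eta_1<1$ there; equicontinuity in $\f$ (or, equivalently, the subsequence argument of \citet{Romano2004_SJS} used in Section \ref{supp:proof lp us step 3}) makes this uniform over $\F_\S$. Combining the two ranges gives $\eta$, and $C_2=c'\eta$ completes the proof; uniformity in $\f$ holds throughout because every constant introduced is controlled by the $\F_\S$-constants.

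The step I expect to be the main obstacle is the uniform non-degeneracy of the preceding paragraph: ensuring that $\rho,\kappa$ (and $c_0,r,\delta$) do not collapse as $\h\downarrow0$ or as $\f$ ranges over $\F_\S$ — i.e.\ genuinely controlling the limiting map $G_{\f,0}$ and the conditional law of $(u,\e)$ given $A_i$ — and, inside it, the book-keeping that turns "$\bm\theta'\bZ_i$ a.s.\ constant given $A_i$" into a contradiction with the linear-independence clause of Assumption \ref{suppasmpt:ci lp}, treating the uniform and non-uniform kernels on a common footing. Everything else (the change of variables, the phase/Riemann–Lebesgue estimate, the compactness in $(s,\bm\theta,\h)$) is routine once those constants are in hand.
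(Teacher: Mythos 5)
Your opening move is exactly the paper's: split $\xi_Z(\bt)$ over the events $\{|\Xhi|>1\}$ and $\{|\Xhi|\le 1\}$, use that $\bZ_i\equiv\bm{0}$ off the local event so that piece contributes $1-O(\h)$, and reduce the problem to bounding the conditional characteristic function on the local event strictly below one. Where you diverge is in how that last bound is obtained. The paper changes variables to $(u,y)$ with $u=\Xhi$, observes that $\bZ_i(u,y)$ is a vector of linearly independent, continuously differentiable functions of a pair with strictly positive density, and then simply cites \citet[Lemma 1.4]{Bhattacharya1977_AoP} to conclude that $\bZ_i(U,Y)$ satisfies Cram\'er's condition, handling the uniform kernel by factoring the constant coordinate $K(U)\equiv 1/2$ out of the exponential as a unit-modulus scalar. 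Your plan instead reproves that lemma from scratch: directional non-degeneracy of $\bm\theta'\bZ_i$, a uniform lower bound on the measure of the set where its gradient is bounded away from zero, a non-stationary-phase estimate for large $|\bt|$, and compactness in $(s,\bm\theta,\h)$ for moderate $|\bt|$. That route can be made to work, and it has the virtue of being explicit about uniformity in $\h$ and $\f\in\F_\S$ (a point the paper passes over rather quickly, since $\bZ_i(u,y)$ depends on $\h$ through the Taylor-remainder coordinates), but the step you yourself flag as the obstacle --- the uniform $(\rho,\kappa)$ non-degeneracy --- is precisely the content of the cited lemma, so you are volunteering to do by hand the one piece of the argument the paper outsources. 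Note also that the book-keeping you worry about (turning ``$\bm\theta'\bZ_i$ a.s.\ constant given $A_i$'' into a contradiction with Assumption \ref{suppasmpt:ci lp}, given that the variance coordinates involve $K(u)^2 u^j$ rather than $K(u)u^j$) is needed under either approach, since the paper's appeal to \citet{Bhattacharya1977_AoP} equally presupposes linear independence of the full coordinate set of \eqref{suppeqn:bZ} together with the constant; neither you nor the paper spells that out, so it is not a defect of your route relative to theirs. In short: same decomposition, same non-degeneracy input, but you rebuild the oscillatory-integral machinery that the paper imports by citation; citing the lemma would shorten your proof to essentially the paper's.
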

\begin{proof}[Proof of Lemma \ref{supplem:Cramer lp}]
	\label{supplempf:Cramer lp}
	Recall the definition of $\bZ_i$ in Equation \eqref{suppeqn:bZ}. It is useful to consider $\bZ_i$ as a function of $(\Xhi, Y_i)$ rather than $(X_i, Y_i)$. We compute the characteristic function separately depending on whether $X_i$ is local to $\x$. Note that $\h$ is fixed. The characteristic function of $\bZ_i$ is 	
	\begin{equation}
		\label{suppeqn:Cramer1}
		\xi_Z(\bt) = \E[\exp\{\text{i}\bt'\bZ_i\}] = \E\left[\exp\{\text{i}\bt'\bZ_i\} \One\left\{ |\Xhi| >1 \right\} \right] +  \E\left[\exp\{\text{i}\bt'\bZ_i\} \One\left\{ |\Xhi| \leq 1 \right\} \right].
	\end{equation}
	We examine each piece in turn. For the first, begin by noticing that $|\Xhi| >1$ (i.e.\ $X_i \not\in \{\x \pm \h\}$), then $K(\Xhi) = 0$, in turn implying that $\bZ_i$ is the zero vector and $\exp\{\text{i}\bt'\bZ_i\} = 1$. Therefore
	\[\E\left[\exp\{\text{i}\bt'\bZ_i\} \One\left\{ |\Xhi| >1 \right\} \right] = \P[X_i \not\in \{\x \pm \h\}] . \]
	By assumption, the density of $X$ is bounded and bounded away from zero in a fixed neighborhood of $\x$. For now consider interior $\x$, we will return to the boundary case at the end. Assume that $\h$ is small enough that this neighborhood contains $\{\x \pm \h\}$. Then this probability is bounded as 
	\begin{equation}
		\label{suppeqn:Cramer2}
		\P[X_i \not\in \{\x \pm \h\}] = 1 - \int_{\x - \h}^{\x + \h} f(x)dx \leq 1 - \h 2 \left( \min_{x\in \{\x \pm \h\}} f(x) \right) := 1 - C_3 \h.
	\end{equation}

	Next, consider the event that $|\Xhi| \leq 1$. Let $f_{xy}(x,y)$ denote the joint density of $(X,Y)$ and explicitly write $\bZ_i = \bZ_i(\Xhi,Y_i)$. Using the change of variables $U = (X - \x)/\h$,
	\begin{align*}
		\E\left[\exp\{\text{i}\bt'\bZ_i(\Xhi,Y_i)\} \One\left\{ |\Xhi| \leq 1 \right\} \right] & = \int\int_{\x - \h}^{\x + \h} \exp\{\text{i}\bt'\bZ_i(x,y)\} f_{xy}(x,y) dxdy   		\\
		& = \h \int\int_{-1}^{1} \exp\{\text{i}\bt'\bZ_i(u,y)\}f_{xy}(\x + u\h,y)dudy.
	\end{align*}
	Suppose that $K$ is not the uniform kernel. The assumption that $(1, K \br_{3p})(u)'$ is linearly independent implies that $\bZ_i$ is a set of linearly independent and continuously differentiable functions of $(u,y)$ on $\{[-1,1]\} \cup \R$. Furthermore, by assumption, the density of $(U, Y)$, as random variables on $\{[-1,1]\} \cup \mathcal{Y}$, for some $\mathcal{Y} \subset \R$, is strictly positive. Therefore, by \cite[Lemma 1.4]{Bhattacharya1977_AoP}, $\bZ_i = \bZ_i(U,Y)$ obeys Cram\'er's condition (as a function of random variables on $\{[-1,1]\} \cup \R$), and so \citep[p.\ 207]{Bhattacharya-Rao1976_book} there is some $C > 0$ such that 
	\begin{equation}
		\label{suppeqn:Cramer3}
		\sup_{|t| > C } \left| \int\int_{-1}^{1} \exp\{\text{i}\bt'\bZ_i(u,y)\}f_{xy}(\x + u\h,y)dudy \right| < 1.
	\end{equation}
	Collecting Equations \eqref{suppeqn:Cramer1}, \eqref{suppeqn:Cramer2}, and \eqref{suppeqn:Cramer3} yields the result when the kernel is not uniform.

	If $K$ is the uniform kernel, Equation \eqref{suppeqn:Cramer3} will still hold, as follows. Note that one element of $\bZ_i(U,Y)$ is $K(U)$. For notational ease, let this be the first element, and further write $\bZ_i(U,Y)$ as $\bZ_i(U,Y) := 2(K(U), \tilde{\bZ}_i')'$ and $\bt \in \R^{\dim(\bZ)}$ as $\bt = (t_{(1)}, \tilde{\bt}')'$. Then, because $K(U) \equiv 1/2$ for $U \in [-1,1]$, 
	\begin{align*}
		\sup_{|t| > C } & \left| \int\int_{-1}^{1} \exp \left\{\text{i}\bt'\bZ_i(u,y) \right\} f_{xy}(\x + u\h,y)dudy \right|   		\\
		& = \sup_{|t| > C } \left| \int\int_{-1}^{1} \exp \left\{\text{i}\bt'\left[2(K(U), \tilde{\bZ}_i')'\right] \right\} f_{xy}(\x + u\h,y)dudy \right|  		\\
		& = \sup_{|t| > C } \left| \int\int_{-1}^{1} \exp \left\{\text{i}\bt'\left[(1, \tilde{\bZ}_i')'\right] \right\} f_{xy}(\x + u\h,y)dudy \right|  		\\
		& = \sup_{|t| > C } \left| e^{\text{i}t_{1}}\int\int_{-1}^{1} \exp \left\{\text{i}\tilde{\bt}'\tilde{\bZ}_i \right\} f_{xy}(\x + u\h,y)dudy \right| .
	\end{align*}
	Exactly as above, \cite[Lemma 1.4]{Bhattacharya1977_AoP} applies, but now to $\tilde{\bZ}_i$, and $|e^{\text{i}t_{1}}|$ is bounded by one, thus yielding Equation \eqref{suppeqn:Cramer3}.

	Finally, if $\x$ is a boundary point, then all that changes in the above proof are ranges of integration: replace $\x-\h$ with zero and remove the factor of 2 in the definition of $C_3$ in \eqref{suppeqn:Cramer2}, and then in the subsequent steps, integrate over $[0,1]$ instead of $[-1,1]$.	
\end{proof}

%%%%%%%%%%%%%%%%%%%%%%%%%%%%%%%%%%%
\subsubsection{Proofs of Lemmas \ref{supplem:gamma}--\ref{supplem:bias4}}

Before proving Lemmas \ref{supplem:gamma}--\ref{supplem:bias3} we first state some generic results that serve as building blocks for the main Lemmas above. Indeed, those results are often are almost immediate consequences of these generic results. The versions of these results for $\irbc$ are usually omitted, as they are entirely analogous (replacing $p$ and $\h$ by $p+1$ and $\b$, as well as other obvious modifications).

%%%%%%%%%%%
%  Generic Lemma  %
%%%%%%%%%%%

\begin{lemma}
	\label{supplem:generic}
	Let the conditions of Theorem \ref{suppthm:EE lp} hold. Let $g(\cdot)$ and $m(\cdot)$ be generic continuous scalar functions. For some $\delta_1 > 0$, any $\delta_2 > 0$, $\gamma >0$, and positive integer $k$, the following hold.
	\begin{enumerate}[ref=\ref{supplem:generic}(\alph{*})]

		\item \label{supplem:1}  $\displaystyle  \tO^2 \P\left[ \left\vert \tO^{-2} \sumi \left\{(Km)(\Xhi) g(X_i) - \E[(Km)(\Xhi) g(X_i)] \right\}\right\vert > \delta_1 \tO^{-1} \log(\tO)^{1/2} \right] = o(1).$

		\item \label{supplem:2}  $\displaystyle  \tO^2 \P\left[ \left\vert \tO^{-2} \sumi \left\{(Km)(\Xhi) g(X_i) \e_i \right\} \right\vert > \delta_1 \tO^{-1} \log(\tO)^{1/2} \right] = o(1).$

		\item \label{supplem:3}  $\displaystyle  \frac{\tO}{\btp} \P  \left[ \left\vert \tO^{-2} \sumi (Km)(\Xhi) g(X_i)  \left[\mu(X_i) - \br_p(X_i - \x)'\bbeta_p\right]^k  \right\vert > \delta_2 \frac{\btp^{k-1}}{\tO^{k-1}} \log(\tO)^\gamma \right] = o(1).$
		
		\item \label{supplem:4} $\displaystyle  \tO^2 \P\biggl[ \biggl\vert \tO^{-2} \sumi  \Bigl\{ (Km)(\Xhi) g(X_i) (\mu(X_i) - \br_p(X_i - x)'\bbeta_p)^k     		\\
	 \textcolor{white}{ spacing }  - \E\left[(Km)(\Xhi) g(X_i) (\mu(X_i) - \br_p(X_i - x)'\bbeta_p)^k\right] \Bigr\}\biggr\vert > \delta_2 \left(\frac{\btp}{\tO}\right)^k \log(\tO)^\gamma \biggr] = o(1).$

		\item \label{supplem:5}   $\displaystyle  \tO^2 \P\left[ \left\vert \tO^{-2} \sumi (Km)(\Xhi) g(X_i)  \e_i \left[\mu(X_i) - \br_p(X_i - x)'\bbeta_p\right] \right\vert > \delta_2 \frac{\btp}{\tO} \log(\tO)^\gamma \right] = o(1).$

		\item \label{supplem:6}   $\displaystyle r_{\tp,\f}^{-1} \P \bigg[\bigg| \frac{1}{n \h} \sumi  \Big\{ (K m) (\Xhi) \Big( K(\Xhi) \left(\mu(X_i) -  \br_p(X_i - \x)'\bbeta_p\right)        		\\
		 	\textcolor{white}{ for spacing }    -     \E\left[ K(\Xhi) \left(\mu(X_i) -  \br_p(X_i - \x)'\bbeta_p\right) \right]  \Big) \e_i  \Big\}   \bigg|  > \delta a_n \log(\tO)^\gamma   \bigg] = o(1)$, \\
		 	where set $a_n = \tO^{-1} \btp$ if $r_{\tp,\f} = \tO^{-2}$; $a_n = \tO^{-2}$ if $r_{\tp,\f} = \btp^2$; or $a_n = \tO^{-3/2} \btp^{1/2}$ if $r_{\tp,\f} = \tO^{-1} \btp$.
		 			%%% This one does not get used for anything. I thought it would help make \bV_4 go away, but it doesn't give fast enough rates

	\end{enumerate}
\end{lemma}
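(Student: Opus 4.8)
The plan is to prove all six parts by one mechanism. After recognizing $\tO^{-2}=(n\h)^{-1}$, each left‑hand side is the tail probability of a normalized i.i.d.\ sum $(n\h)^{-1}\sumi\zeta_i$ whose summands $\zeta_i$ are kernel‑weighted functions of $(X_i,Y_i)$, and I would bound it by Bernstein's inequality after two reductions. First, \emph{localization}: since $K$ is supported on $[-1,1]$, $\zeta_i$ vanishes unless $|X_i-\x|\le\h$, an event of probability $O(\h)$ by Assumption \ref{suppasmpt:dgp lp}, so a change of variables gives $\E[((Km)(\Xhi))^2g(X_i)^2]=O(\h)$ and, more generally, every mixed moment of a bounded kernel‑weighted summand carries a factor $\h$; hence the per‑term variance is $O(\h\cdot(\text{pointwise size of }\zeta_i)^2)$, and after the $(n\h)^{-2}$ scaling and summing over $i$, $\V[(n\h)^{-1}\sumi\zeta_i]=O(\tO^{-2}(\text{pointwise size})^2)$. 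Second, for the factor $\mu(X_i)-\br_p(X_i-\x)'\bbeta_p$ entering (c)--(f), the Taylor bound \eqref{suppeqn:Taylor lp} gives $|\mu(X_i)-\br_p(X_i-\x)'\bbeta_p|\le C\h^{\az}$ for all $X_i\in[\x\pm\h]$ and all $\f\in\F_\S$, with $\az$ as in \eqref{suppeqn:bias constant lp}, while the same bound integrated against $K$ shows the corresponding non‑random mean is $O((\btp/\tO)^k)$ up to the bounded constant $\bctp$; these are smaller than the thresholds in (c)--(f) by a factor diverging like a positive power of $\tO$ (for $k=1$) or like $(\tO/\btp)\log(\tO)^\gamma$ (for $k\ge2$).

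For part (a), with $\zeta_i=(Km)(\Xhi)g(X_i)$ bounded and variance of order $\tO^{-2}$, Bernstein bounds the probability by $2\exp\{-c\,t^2/(n\sigma^2+Mt)\}$ with $n\sigma^2=O(\tO^2)$, $M=O(1)$, $t=\delta_1\tO\log(\tO)^{1/2}$; since $Mt\asymp\tO\log(\tO)^{1/2}=o(\tO^2)=o(n\sigma^2)$, the exponent is $\asymp\delta_1^2\log(\tO)$ and the bound times the prefactor $\tO^2$ vanishes once $\delta_1$ is large. Parts (c) and (d) repeat this with $g$ replaced by $g\cdot(\mu(X_i)-\br_p(X_i-\x)'\bbeta_p)^k$: the summand now has pointwise size $O(\h^{\az k})$, the normalized sum has variance $O(\tO^{-2}\h^{2\az k})$, and the threshold exceeds the standard deviation — and, for the un‑centered sum in (c), the mean — by the divergent factor noted above, so the Bernstein exponent grows faster than any power of $\log\tO$ and beats the prefactor $\tO/\btp$ (resp.\ $\tO^2$) for every $\delta_2,\gamma>0$. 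Parts (e) and (f) carry an extra factor $\e_i$; since $\E[\e_i\mid X_i]=0$ these sums are already mean zero, $\E[\e_i^2\mid X_i]=v(X_i)\le C$ leaves the variance orders unchanged, and the thresholds again dominate the standard deviations by a divergent factor — for (f) this is checked in each of the three cases defining $a_n$, using $\btp=\tO\h^\az\bctp$ and $\btp\to0$ (the latter from the second bandwidth condition of Theorem \ref{suppthm:EE lp}).

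The one genuine obstacle is that $\e_i$ has only $8+c$ finite moments (Assumption \ref{suppasmpt:dgp lp}), so the summands of (b), (e), (f) are unbounded and Bernstein does not apply to them directly. The remedy is truncation at a level $\tau_n\asymp\tO^{1/2}$: split $\e_i=\e_i\One\{|\e_i|\le\tau_n\}+\e_i\One\{|\e_i|>\tau_n\}$, apply Bernstein to the truncated sum (its summands are bounded by $O(\tau_n)$ times a deterministic factor, still compatible with $Mt\lesssim n\sigma^2$, or with Bernstein's alternative $t/M$ regime, because $\tau_n=o(\tO/\log(\tO)^{1/2})$), control the re‑centering via $\E[|\e_i|\One\{|\e_i|>\tau_n\}]\le\tau_n^{-(7+c)}\E[|\e_i|^{8+c}]=O(\tau_n^{-(7+c)})$, and bound the tail part by $\P[\exists i:\ |X_i-\x|\le\h,\ |\e_i|>\tau_n]\le n\,\P[|X_i-\x|\le\h]\,\sup_x\P[|\e_i|>\tau_n\mid X_i=x]=O(\tO^2)\tau_n^{-(8+c)}$, which times the prefactor (at most $\tO^2$) is $O(\tO^{4-(8+c)/2})=o(1)$ since $c>0$. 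The remaining burden is bookkeeping: verifying that all constants (from $K$, the density and conditional‑moment bounds, and the smoothness bound \eqref{suppeqn:Taylor lp}) are uniform over $\f\in\F_\S$, which they are by Assumption \ref{suppasmpt:dgp lp}, and tracking the order of $\btp$ and $\az$ uniformly in $\f$ — precisely the uniformity the downstream lemmas and Theorem \ref{suppthm:EE lp} require.
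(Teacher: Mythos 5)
Your proposal is correct, and for parts (a) and (b) it follows essentially the paper's own route: Bernstein's inequality for the bounded summand in (a), and truncation at a polynomial level combined with Bernstein on the truncated part plus a moment bound on the tail for (b) (the paper truncates $Y_i$ at $\tO^{A}$ with $A<1$ and controls the tail via Markov on its second moment rather than your union bound with the $8+c$ conditional moments, but the two devices are interchangeable here). Where you genuinely diverge is in parts (c)--(f): you push Bernstein through all four, using the pointwise Taylor bound $|\mu(X_i)-\br_p(X_i-\x)'\bbeta_p|\le C\h^{\az}$ on the kernel support and a further truncation of $\e_i$ for (e) and (f), whereas the paper dispatches each with a one-line Markov or Chebyshev bound. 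The reason the lighter tool suffices is visible in the statement itself: (c)--(f) only need to hold for \emph{any} $\delta_2>0$ (not some sufficiently large $\delta_1$), and the thresholds exceed the relevant means or standard deviations by divergent factors such as $\tO\log(\tO)^{\gamma}$ or $(\tO/\btp)\log(\tO)^{\gamma}$, so a first-moment bound gives (c) and a variance bound gives (d)--(f), each yielding $O(\log(\tO)^{-\gamma})$ or $O(\log(\tO)^{-2\gamma})$ directly, with no boundedness, no truncation, and no case analysis on $a_n$ beyond checking three ratios. Your route buys exponentially small (rather than poly-logarithmically small) probabilities, which is more than the lemma requires, at the cost of the extra truncation and re-centering bookkeeping for the $\e_i$ factors; both arguments rest identically on the localization $\E[(Km)(\Xhi)^{j}\cdots]=O(\h)$ and on the Section \ref{supp:bias lp} normalization under which $(\tO/\btp)\bigl(\mu(X_i)-\br_p(X_i-\x)'\bbeta_p\bigr)$ is bounded in the relevant averages --- an implicit step your write-up shares with the paper's own proof.
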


%%%%%%%%%%%%%%%%%%%%%%%%%%%%
%  proofs of generic results  %
%%%%%%%%%%%%%%%%%%%%%%%%%%%%

%%%%%%%
\begin{proof}[Proof of Lemma \ref{supplem:1}]
Because the kernel function has compact support and $g(\cdot)$ and $m(\cdot)$ are continuous, we have
	\[\left\vert (Km)(\Xhi) g(X_i) - \E[(Km)(\Xhi) g(X_i)] \right\vert < C_1.\]
Further, by a change of variables and using the assumptions on $f$, $g$ and $m$:
	\begin{align*}
		\V[(Km)(\Xhi) g(X_i)] \leq \E \left[ (Km)(\Xhi)^2 g(X_i)^2 \right]  &  = \int f(X_i) (Km)(\Xhi)^2 g(X_i)^2 dX_i 		\\
			&= \h \int f(\x + u \h) g(\x + u \h) (Km)(u)^2 du \leq C_2 \h.
	\end{align*}
Therefore, by Bernstein's inequality
\begin{align*}
	\tO^2 \P & \left[ \left\vert \frac{1}{\tO^2} \sumi \left\{(Km)(\Xhi) g(X_i) - \E[(Km)(\Xhi) g(X_i)] \right\} \right\vert > \delta_1 \tO^{-1} \log(\tO)^{1/2} \right]  		\\
	& \leq 2 \tO^2 \exp \left\{ - \frac{ (\tO^4) (\delta_1 \tO^{-1} \log(\tO)^{1/2})^2 / 2}{ C_2 \tO^2 + C_1 \tO^2 \delta_1 \tO^{-1} \log(\tO)^{1/2} / 3} \right\} 		\\
	& = 2 \exp\{2\log(\tO)\} \exp \left\{ - \frac{ \delta_1^2 \log(\tO) / 2}{ C_2 + C_1 \delta_1 \tO^{-1} \log(\tO)^{1/2} / 3} \right\} 		\\
	& = 2 \exp \left\{ \log(\tO) \left[ 2 - \frac{ \delta_1^2 / 2}{ C_2 + C_1 \delta_1 \tO^{-1} \log(\tO)^{1/2} / 3} \right] \right\},
\end{align*}
which vanishes for any $\delta_1$ large enough, as $\tO^{-1} \log(\tO)^{1/2} = o(1)$.
\end{proof}

%%%%%%%
\begin{proof}[Proof of Lemma \ref{supplem:2}]
For a sequence $a_n \to \infty$ to be given later, define
\[H_i = \tO^{-1} (Km)(\Xhi) g(X_i) \left(Y_i \mathbbm{1}\{Y_i \leq a_n\} - \E[Y_i \mathbbm{1}\{Y_i \leq a_n\} \mid X_i]\right)  \]
and
\[T_i = \tO^{-1}(Km)(\Xhi) g(X_i) \left(Y_i \mathbbm{1}\{Y_i > a_n\} - \E[Y_i \mathbbm{1}\{Y_i > a_n\} \mid X_i]\right).  \]
By the conditions on $g(\cdot)$ and $t(\cdot)$ and the kernel function, 
	\[\left\vert H_i \right\vert <  C_1 \tO^{-1} a_n\]
and
\begin{align*}
	\V[H_i] = \tO^{-2} \V[(Km)(\Xhi) g(X_i) Y_i \mathbbm{1}\{Y_i \leq a_n\}] & \leq \tO^{-2} \E\left[(Km)(\Xhi)^2 g(X_i)^2 Y_i^2 \mathbbm{1}\{Y_i \leq a_n\} \right]		\\
	& \leq \tO^{-2} \E\left[(Km)(\Xhi)^2 g(X_i)^2 Y_i^2 \right]  		\\
	& = \tO^{-2} \int (Km)(\Xhi)^2 g(X_i)^2 v(X_i) f(X_i) d X_i    		\\
	& = \tO^{-2} \h \int (Km)(u)^2 (gvf)(\x - u\h) d u   		\\
	& \leq C_2 /n.
\end{align*}
Therefore, by Bernstein's inequality
\begin{align*}
	\tO^2 \P  \left[ \left\vert  \sumi H_i \right\vert > \delta_1 \log(\tO)^{1/2} \right]  	& \leq 2 \tO^2 \exp \left\{ - \frac{ \delta_1^2 \log(\tO) / 2}{ C_2 + C_1 \tO^{-1} a_n \delta_1 \log(\tO)^{1/2} / 3} \right\} 		\\
	& \leq 2  \exp\{2\log(\tO)\}  \exp \left\{ - \frac{ \delta_1^2 \log(\tO) / 2}{ C_2 + C_1 \tO^{-1} a_n \delta_1 \log(\tO)^{1/2} / 3} \right\} 		\\
	& \leq 2   \exp \left\{\log(\tO) \left[ 2  - \frac{ \delta_1^2  / 2}{ C_2 +  C_1 \tO^{-1} a_n \delta_1 \log(\tO)^{1/2} / 3} \right] \right\},
\end{align*}
which vanishes for $\delta_1$ large enough as long as $\tO^{-1} a_n \log(\tO)^{1/2}$ does not diverge.

Next, let $\pi > 2$ be such that $\E[|Y|^{2 + \pi} \vert X = x]$ is finite in the neighborhood of $\x$, which is possible under Assumption \ref{suppasmpt:dgp lp}, and then, by Markov's inequality:
\begin{align*}
	\tO^2 \P \left[ \left\vert  \sumi T_i \right\vert > \delta \log(\tO)^{1/2} \right]  & \leq  \tO^2 \frac{1}{\delta^2 \log(\tO) } \E\left[ \left\vert  \sumi T_i \right\vert^2 \right]		\\
	& \leq  \tO^2 \frac{1}{\delta_1^2 \log(\tO)} n \E \left[ T_i^2 \right]		\\
	& \leq  \tO^2 \frac{1}{\delta_1^2 \log(\tO)} n \V \left[ \tO^{-1} (Km)(\Xhi) g(X_i) Y_i \mathbbm{1}\{Y_i > a_n\} \right]		\\
	& \leq  \tO^2 \frac{1}{\delta_1^2 \log(\tO)} n \tO^{-2} \E \left[  (Km)(\Xhi)^2 g(X_i)^2 Y_i^2 \mathbbm{1}\{Y_i > a_n\} \right]		\\
	& \leq  \tO^2 \frac{1}{\delta_1^2 \log(\tO)} n \tO^{-2} \E \left[   (Km)(\Xhi)^2 g(X_i)^2 \vert Y_i \vert^{2+\pi} a_n^{-\pi} \right]		\\
	& \leq  \tO^2 \frac{1}{\delta_1^2 \log(\tO)} n \tO^{-2} (C \h a_n^{-\pi}) 		\\
	& \leq  \frac{C}{\delta_1^2} \frac{\tO^2}{ \log(\tO) a_n^{\pi}},
\end{align*}
which vanishes if $\tO^2 \log(\tO)^{-1} a_n^{-\pi}= o(1)$. 

It thus remains to choose $a_n$ such that $\tO^{-1} a_n \log(\tO)^{1/2}$ does not diverge and $\tO^2 \log(\tO)^{-1} a_n^{-\pi}= o(1)$. This can be accomplished by setting $a_n = \tO^{A}$ for any $2/\pi \leq A <1$, which is possible as $\pi > 2$.
\end{proof}

%%%%%%%
\begin{proof}[Proof of Lemma \ref{supplem:3}]
By Markov's inequality
\begin{align*}
	\frac{\tO}{\btp} \P & \left[ \left\vert \tO^{-2} \sumi (Km)(\Xhi) g(X_i)  \left[\mu(X_i) - \br_p(X_i - \x)'\bbeta_p\right]^k  \right\vert > \delta_2 (\tO^{-1}\btp)^{k-1} \log(\tO)^\gamma \right]   		 \\
	& \leq \frac{\tO}{\btp} \left(\frac{\tO}{\btp}\right)^{k-1}  \frac{1}{\delta_2 \log(\tO)^\gamma} \E\left[ \h^{-1} (Km)(\Xhi) g(X_i)   \left[\mu(X_i) - \br_p(X_i - \x)'\bbeta_p\right]^k\right]   		 \\
	& \leq \frac{1}{\delta_2  \log(\tO)^\gamma} \E\left[ \h^{-1} (Km)(\Xhi) g(X_i)  \left[ \frac{\tO}{\btp} \left(\mu(X_i) - \br_p(X_i - \x)'\bbeta_p\right)\right]^k\right]   		 \\
	& = O(\log(\tO)^{-\gamma}) = o(1).
\end{align*}
This relies on the calculations in Section \ref{supp:bias lp}, and the compact support of the kernel and continuity of $m(\cdot)$ and $g(\cdot)$ to ensure that the expectation is otherwise bounded.
\end{proof}

%%%%%%%
\begin{proof}[Proof of Lemma \ref{supplem:4}]
Note that the summand is mean zero and apply Markov's inequality to find
\begin{align*}
	& \tO^2 \P\biggl[ \biggl\vert \tO^{-2} \sumi  \Bigl\{ (Km)(\Xhi) g(X_i) (\mu(X_i) - \br_p(X_i - x)'\bbeta_p)^k     		\\
	& \quad\qquad \qquad - \E\left[(Km)(\Xhi) g(X_i) (\mu(X_i) - \br_p(X_i - x)'\bbeta_p)^k\right] \Bigr\}\biggr\vert > \delta_2 \left(\frac{\btp}{\tO}\right)^k \log(\tO)^\gamma \biggr]  		\\
	& \leq  \tO^2  \left(\frac{\tO}{\btp}\right)^{2k} \frac{1}{ \delta_2^2 \log(\tO)^{2\gamma} } \tO^{-2} \E\left[ \h^{-1} (Km)(\Xhi) g(X_i) (\mu(X_i) - \br_p(X_i - x)'\bbeta_p)^{2k} \right]   		\\
	& =    \frac{1}{ \delta_2^2 \log(\tO)^{2\gamma} } \E\left[ \h^{-1} (Km)(\Xhi) g(X_i) \left[ \left(\frac{\tO}{\btp}\right) (\mu(X_i) - \br_p(X_i - x)'\bbeta_p) \right]^{2k} \right]  		\\
	& = o(1).
\end{align*}
The final line relies on the calculations in Section \ref{supp:bias lp}.
\end{proof}

%%%%%%%
\begin{proof}[Proof of Lemma \ref{supplem:5}]
By Markov's inequality, since $\e_i$ is conditionally mean zero, we have
\begin{align*}
	\tO^2 \P & \left[  \left\vert \tO^{-2} \sumi (Km)(\Xhi) g(X_i)  \e_i \left[\mu(X_i) - \br_p(X_i - \x)'\bbeta_p\right] \right\vert > \delta_2 (\tO^{-1}\btp) \log(\tO)^\gamma \right]   		\\
	& \leq \tO^2 \frac{1}{\delta_2^2 \tO^{-2}\btp^2 \log(\tO)^{2\gamma} } \frac{1}{\tO^2} \E \left[ \h^{-1} \left( (Km)(\Xhi) g(X_i)  \e_i \right)^2 \left[\mu(X_i) - \br_p(X_i - \x)'\bbeta_p\right]^2 \right]   		\\
	& \leq \frac{1}{\delta_2^2  \log(\tO)^{2\gamma} } \E \left[ \h^{-1} \left( (Km)(\Xhi) g(X_i)  \e_i \right)^2 \left[\frac{\tO}{\btp}\left(\mu(X_i) - \br_p(X_i - \x)'\bbeta_p\right)\right]^2 \right]  		\\
	& = O(\log(\tO)^{-2\gamma}) = o(1).
\end{align*}
This relies on the calculations in Section \ref{supp:bias lp}, and the compact support of the kernel and continuity of $m(\cdot)$ and $g(\cdot)$ to ensure that the expectation is otherwise bounded.
\end{proof}

%%%%%%%
\begin{proof}[Proof of Lemma \ref{supplem:6}]
By Markov's inequality, since $\e_i$ is conditionally mean zero, we have
\begin{align*}
	r_{\tp,\f}^{-1} & \P \bigg[\bigg| \frac{1}{n \h} \sumi  \Big\{ (K m) (\Xhi) \Big( K(\Xhi) \left(\mu(X_i) -  \br_p(X_i - \x)'\bbeta_p\right)        		\\
	& \qquad\qquad\qquad   -     \E\left[ K(\Xhi) \left(\mu(X_i) -  \br_p(X_i - \x)'\bbeta_p\right) \right]  \Big) \e_i  \Big\}   \bigg|  > \delta a_n \log(\tO)^\gamma   \bigg]  		\\
	& \leq \frac{r_{\tp,\f}^{-1} }{a_n^2 \log(\tO)^{2\gamma} } \frac{1}{n\h} \E\bigg[ \h^{-1} (K m)^2 (\Xhi) \Big( K(\Xhi) \left(\mu(X_i) -  \br_p(X_i - \x)'\bbeta_p\right)        		\\
	& \qquad\qquad\qquad\qquad\qquad   -     \E\left[ K(\Xhi) \left(\mu(X_i) -  \br_p(X_i - \x)'\bbeta_p\right) \right]  \Big)^2 v(X_i) \bigg]   		\\
	& = \frac{r_{\tp,\f}^{-1} }{a_n^2 \log(\tO)^{2\gamma} } \frac{1}{n\h} \Bigg\{ \E\bigg[ \h^{-1} (K m)^2 (\Xhi) K(\Xhi)^2 \left(\mu(X_i) -  \br_p(X_i - \x)'\bbeta_p\right)^2  v(X_i) \bigg]   		\\
	&  \qquad   -2 \E\bigg[ \h^{-1} (K m)^2 (\Xhi) K(\Xhi) \left(\mu(X_i) -  \br_p(X_i - \x)'\bbeta_p\right)  v(X_i) \bigg] \E\left[ K(\Xhi) \left(\mu(X_i) -  \br_p(X_i - \x)'\bbeta_p\right) \right]   			\\
	&  \qquad  + \E\bigg[ \h^{-1} (K m)^2 (\Xhi) v(X_i) \bigg] \E\left[ K(\Xhi) \left(\mu(X_i) -  \br_p(X_i - \x)'\bbeta_p\right) \right]^2   \Bigg\}
	\\
	& \asymp \frac{r_{\tp,\f}^{-1} }{a_n^2 \log(\tO)^{2\gamma} } \frac{1}{n\h} \left(  \frac{\btp}{\tO} \right)^2 \left\{1 + \h + \h^2\right\}  		\\
	& \asymp \frac{r_{\tp,\f}^{-1} }{a_n^2 \log(\tO)^{2\gamma} } \frac{1}{n\h} \left(  \frac{\btp}{\tO} \right)^2.
\end{align*}
If $r_{\tp,\f} = \tO^{-2}$, this vanishes for $a_n = \tO^{-1} \btp$. If $r_{\tp,\f} = \btp^2$, this vanishes for $a_n = \tO^{-2}$. If $r_{\tp,\f} = \tO^{-1} \btp$, this vanishes for $a_n = \tO^{-3/2} \btp^{1/2}$. 
This relies on the calculations in Section \ref{supp:bias lp}, and the compact support of the kernel and continuity of $m(\cdot)$ to ensure that the expectation is otherwise bounded.
\end{proof}

%%%%%%%%%%%%%%%%%%%%%%%%%%%%
%  proofs of main lemmas  %
%%%%%%%%%%%%%%%%%%%%%%%%%%%%

%%%%%%%
\begin{proof}[Proof of Lemma \ref{supplem:gamma}]
A typical element of $\Gp - \Gpt$ is, for some integer $k \in [0, 2p]$, 
	\[\frac{1}{n\h}\sumi \left\{  K(\Xhi) \Xhi^k - \E\left[ K(\Xhi) \Xhi^k \right] \right\},\]
which has the form treated in Lemma \ref{supplem:1}. Therefore, by Boole's inequality and $p$ fixed,
\begin{multline*}
	r_{\tp,\f}^{-1} \P[ |\Gp - \Gpt| > \delta \tO^{-1} \log(\tO)^{1/2} ]   		\\
	    \leq C r_{\tp,\f}^{-1} \max_{k \in [0, 2p]} \P\left[ \left|\frac{1}{n\h}\sumi \left\{  K(\Xhi) \Xhi^k - \E\left[ K(\Xhi) \Xhi^k \right] \right\} \right| > \delta \tO^{-1} \log(\tO)^{1/2} \right]  = o(1),
\end{multline*}
by Lemma \ref{supplem:delta2}. This establishes part (a).

To prove part (b), first note that for any fixed $\delta_1$, part (a) and the sub-multiplicativity of the Frobenius norm imply
\begin{equation}
	\label{suppeqn:gamma1}
	r_{\tp,\f}^{-1}  \P\left[ |\Gp^{-1}(\Gp - \Gpt)| \geq \delta_1 \right]    \ \leq \  r_{\tp,\f}^{-1}  \P\left[ |(\Gp - \Gpt)| \geq \delta_1 |\Gp^{-1}|^{-1}\right] = o(1),
\end{equation}
because under the maintained assumptions
\begin{align*}
	\Gpt = \E\left[ \h^{-1} (K \br_p \br_p')(\Xhi) \right] = \h^{-1} \int (K \br_p \br_p')(\Xhi) f(X_i) d X_i = \int (K \br_p \br_p')(u) f(\x + u \h) d u
\end{align*}
is bounded away from zero and infinity for $n$ large enough. 

Now, on the event $\mathcal{G}_n = \{ |\Gp^{-1}(\Gp - \Gpt)| < 1\}$, we use the identity $\Gp = \Gpt\left( \bI - \Gp^{-1}(\Gpt - \Gp)\right)$ to write $\Gp^{-1}$ as
\[\Gp^{-1} = \left( \bI - \Gp^{-1}(\Gpt - \Gp)\right)^{-1} \Gpt^{-1}   = \sum_{j=0}^\infty \left(\Gp^{-1}(\Gpt - \Gp)\right)^{j} \Gpt^{-1}.\]
Write $a_n = \tO^{-(k+1)} \log(\tO)^{(k+1)/2}$ Using results \eqref{suppeqn:gamma1} with $\delta_1 = 1$, we find that $r_{\tp,\f}^{-1} (1-\P[\mathcal{G}_n] ) = r_{\tp,\f}^{-1}  \P[ |\Gp^{-1}(\Gp - \Gpt)| \geq 1] = o(1)$. Therefore
\begin{align*}
	r_{\tp,\f}^{-1} \P &  \left [ \left|\Gp^{-1} - \sum_{j=0}^k \big(\Gp^{-1}(\Gpt - \Gp)\big)^{j} \Gpt^{-1} \right| > \delta a_n \right]    	  \\	
	& \leq r_{\tp,\f}^{-1} \P \left[ \left\{ \left|\Gp^{-1} - \sum_{j=0}^k \big(\Gp^{-1}(\Gpt - \Gp)\big)^{j} \Gpt^{-1}\right| > \delta  a_n  \right\} \cup \mathcal{G}_n \right]   +   r_{\tp,\f}^{-1} (1-\P[\mathcal{G}_n] )    		\\
	& \leq r_{\tp,\f}^{-1} \P \left[ \left| \sum_{j=0}^\infty \left(\Gp^{-1}(\Gpt - \Gp)\right)^{j} \Gpt^{-1} - \sum_{j=0}^k \big(\Gp^{-1}(\Gpt - \Gp)\big)^{j} \Gpt^{-1}\right| > \delta  a_n  \right]   +   o(1)    		\\
	& = r_{\tp,\f}^{-1} \P \left[  \left| \sum_{j=k+1}^\infty \left(\Gp^{-1}(\Gpt - \Gp)\right)^{j} \Gpt^{-1}\right| > \delta  a_n  \right]   +   o(1).
\end{align*}
Again using sub-multiplicativity and part (a), $|(\Gp^{-1}(\Gpt - \Gp))^{j}| \leq |\Gp^{-1}|^j |\Gpt - \Gp|^{j} = o(1)$, and so by dominated convergence and the partial sum formula, the above display is bounded as
\begin{align*}
	& \leq r_{\tp,\f}^{-1} \P \left[  \sum_{j=k+1}^\infty \left| \left(\Gp^{-1}(\Gpt - \Gp)\right)^{j}\right| \left| \Gpt^{-1}\right|  > \delta  a_n  \right]   +   o(1)    		\\
	& \leq r_{\tp,\f}^{-1} \P \left[  \frac{\left| \Gp^{-1}(\Gpt - \Gp) \right|^{k+1}}{1 - \left| \Gp^{-1}(\Gpt - \Gp) \right|} \left| \Gpt^{-1}\right|  > \delta  a_n  \right]   +   o(1)  .
\end{align*}
Finally, using result \eqref{suppeqn:gamma1} with some fixed $\delta_1 < 1$, this last display is bounded by
\[
	r_{\tp,\f}^{-1} \P \left[  \left|\Gpt - \Gp \right|^{k+1}   > \left| \Gpt^{-1}\right|^{-k-2} ( 1 - \delta_1) \delta  a_n  \right]    +     r_{\tp,\f}^{-1} \P\left[ \left| \Gp^{-1}(\Gpt - \Gp) \right| \geq \delta_1 \right] +   o(1)   = o(1),
\]
where the final convergence follows by part (a).

For part (c), let $C_\Gamma < \infty$ be such that $| \Gpt^{-1}| < C_\Gamma/2$. Then
\begin{align*}
	r_{\tp,\f}^{-1}\P[ \Gp^{-1} > C_\Gamma ]  & = r_{\tp,\f}^{-1}\P[ \left( \Gp^{-1} - \Gpt^{-1} \right)  +  \Gpt^{-1}  > C_\Gamma ]  		\\
	& \leq r_{\tp,\f}^{-1}\P\left[ \left| \Gp^{-1} - \Gpt^{-1} \right|  > \delta \tO^{-1} \log(\tO)^{1/2} \right]  +  r_{\tp,\f}^{-1}\P \left[ \left|\Gpt^{-1} \right|  > C_\Gamma - \delta \tO^{-1} \log(\tO)^{1/2} \right],
\end{align*}
which vanishes because the second term is zero for $n$ large enough such that $\delta \tO^{-1} \log(\tO)^{1/2} < C_\Gamma/2$ and the first is $o(1)$ by part (a).
\end{proof}

%%%%%%%
\begin{proof}[Proof of Lemma \ref{supplem:lambda}]
The result follows from identical steps to proving Lemma \ref{supplem:gamma}(a), because Lemma \ref{supplem:1} also applies. The second conclusion follows from the first exactly the same way Lemma \ref{supplem:gamma}(c) follows from Lemma \ref{supplem:gamma}(a).
\end{proof}

%%%%%%%
\begin{proof}[Proof of Lemma \ref{supplem:var}]
Let $\left[\bA\right]_{j,k}$ be the $\{j,k\}$ entry of $\bA$. By Boole's inequality, since the dimension of $\bA$ is fixed, and Lemma \ref{supplem:delta2},
\begin{align*}
	r_{\tp,\f}^{-1} & \P \left[\left|  \frac{1}{n \h} \sumi \left\{   (K \bA) (\Xhi) \e_i  \right\} \right|  > \delta \tO^{-1} \log(\tO)^{1/2}  \right]  		\\
	&  \leq C r_{\tp,\f}^{-1} \max_{j,k}  \P\left[ \left\vert \tO^{-2} \sumi \left\{\left(K\left[\bA\right]_{j,k}\right)(\Xhi) \e_i \right\} \right\vert > \delta \tO^{-1} \log(\tO)^\gamma \right]  		\\
	&  \leq C \tO^2 \max_{j,k}  \P\left[ \left\vert \tO^{-2} \sumi \left\{\left(K\left[\bA\right]_{j,k}\right)(\Xhi) \e_i \right\} \right\vert > \delta \tO^{-1} \log(\tO)^\gamma \right] ,
\end{align*}
which vanishes by Lemma \ref{supplem:2}.
\end{proof}

%%%%%%%
\begin{proof}[Proof of Lemma \ref{supplem:bias1}]
Exactly as above, but using Lemma \ref{supplem:3}.
\end{proof}

%%%%%%%
\begin{proof}[Proof of Lemma \ref{supplem:bias2}]
Exactly as above, but using Lemma \ref{supplem:4}.
\end{proof}

%%%%%%%
\begin{proof}[Proof of Lemma \ref{supplem:bias3}]
Exactly as above, but using Lemma \ref{supplem:5}.
\end{proof}

%%%%%%%
\begin{proof}[Proof of Lemma \ref{supplem:bias4}]
Exactly as above, but using Lemma \ref{supplem:6}.
\end{proof}

%%%%%%%%%%%%%%%%%%%%%%%%%%%%%%%%%%%%%%%%%%%%%%%%%%%%%%%%%%%%%%%%%%%%%%
%%%%%%%%%%%%%%%%%%%%%%%%%%%%%%%%%%%%%%%%%%%%%%%%%%%%%%%%%%%%%%%%%%%%%%
\subsection{Terms of the Expansion}
	\label{supp:terms lp}

We now give the precise forms of the terms in the Edgeworth expansion, $E_{\ti,\f}(z)$. We first define them and then show their computation in a subsection below. To list them amounts to defining the terms $\w_k$, $k = 1, 2, \ldots, 6$, $\bti$, and $\lambda_{\ti,\f}$. For all $\ti$ (or $\i$), $\bti$ is given in Section \ref{supp:bias lp} and explicitly given in Equation \eqref{suppeqn:eta lp}. For the expansion, the special cases are not needed. For the variance errors $\lambda_{\ti,\f}$, we mention a few examples. First, as already discussed, the fixed-$n$ standard errors of Equation \eqref{suppeqn:se lp} yield $\lambda_{\ti,\f} \equiv 0$. When it is nonzero, typically $\lambda_{\ti,\f}$ has the form $\lambda_{\ti,\f} = l_n L$, for a rate $l_n = o(1)$ and a constant (or at least, a sequence bounded and bounded away from zero) $L$. The term $L$ is exactly the difference between the variance of the numerator of the $t$-statistic and the population standardization chosen. This has nothing to do with estimation error. Loosely speaking,
	\[ L = \frac{\V\left[ \sqrt{n\h^{1 + 2\v}}( \that - \tf) \right]}{ \sigma^2} - 1, \]
where $\sigma^2$ is the limit of Studentization whatever $\shat^2$ has been chosen (c.f.\ Equation \eqref{suppeqn:t stat}). As an example, consider traditional explicit bias correction, where the point estimate (or numerator of $\ti$) is bias-corrected but it is assumed that $\sp$ provides valid standardization (this requires $\rho = o(1)$), we find that $\lambda_{\ti,\f} = \rho ^{p+2}(L_1 + \rho^{p+2} L_2)$, where $L_1$ captures the (scaled) covariance between $\mhat^{(\v)}$ and $\mhat^{(p+1)}$ and $L_2$ the variance of $\mhat^{(p+1)}$; see \cite{Calonico-Cattaneo-Farrell2018_JASA,Calonico-Cattaneo-Farrell2018_JASAsupp} for the exact expressions. For another example, for inference at the boundary when using the asymptotic variance for standardization (i.e.\ the probability limit of the conditional variance of the numerator), one finds $l_n = \h$ and $L$ capturing the difference between the conditional variance and its limit, based on the localization of the kernel; see \cite{Chen-Qin2002_SJS} for the exact expression.

It remains to define $\w_k$, $k = 1, 2, \ldots, 6$. More notation is required. As with the bias, all terms must be nonrandom. We will maintain, as far as possible, fixed-$n$ calculations. First, define the following functions, which depend on $\f$, $n$, $\h$, $\b$, $\v$, $p$, and $K$, though this is mostly suppressed notationally. These functions are all calculated in a fixed-$n$ sense and are all bounded and rateless. 
\begin{align*}
	\l^0_{\tp}(X_i) & = \v! \be_\v' \Gpt^{-1} (K \br_p)(\Xhi) ; 		\\
	\l^0_{\trbc}(X_i) & = \l^0_{\tp}(X_i)  -   \rho^{p+1} \v! \be_\v' \Gpt^{-1} \Lpt_1 \be_{p+1}' \Gqt^{-1}  (K \br_{p+1})(\Xbi) ; 		\\
	\l^1_{\tp}(X_i, X_j) & = \v! \be_\v' \Gpt^{-1} \left( \E[(K \br_p \br_p')(\Xhj)] - (K \br_p \br_p')(\Xhj) \right) \Gpt^{-1} (K \br_p)(\Xhi) ; 		\\
	\l^1_{\trbc}(X_i, X_j) & = \l^1_{\tp}(X_i, X_j)     -    \rho^{p+1}   \v! \be_\v' \Gpt^{-1} \Bigl\{ \left( \E[(K \br_p \br_p')(\Xhj)] - (K \br_p \br_p')(\Xhj) \right)  \Gpt^{-1}  \Lpt_1 \be_{p+1}'   		\\
		& \qquad \qquad       +  \left( (K \br_p)(\Xhj) \Xhi^{p+1}   -  \E[(K \br_p)(\Xhj) \Xhi^{p+1}]  \right) \be_{p+1}'       		\\
		& \qquad \qquad       +  \Lpt_1 \be_{p+1}' \Gqt^{-1}  \left( \E[(K \br_{p+1} \br_{p+1}')(X_{\b,j})] - (K \br_{p+1} \br_{p+1}')(X_{\b,j}) \right)   \Bigr\} \Gqt^{-1}  (K \br_{p+1})(\Xbi).
\end{align*}

With this notation, define
\begin{align*}
	\st_{\ti}^2  = \E[ \h^{-1} \l^0_{\ti}(X)^2 v(X) ].
\end{align*}
We can also rewrite the bias terms using this notation as 
\[ \btp = \sqrt{n\h} \E \left[ \h^{-1} \l^0_{\tp}(X_i) [\mu(X_i) - \br_p(X_i - x)' \bbeta_p] \right]  \]
and
\[ \btrbc  = \sqrt{n\h} \E \Bigl[ \h^{-1} \l^0_{\trbc}(X_i)   [\mu(X_i) - \br_{p+1}(X_i - x)' \bbeta_{p+1}]   \Bigr]. 	\]

Now we can define the Edgeworth expansion polynomials $\w_k$, $k = 1, 2, \ldots, 6$. The standard Normal density is $\phi(z)$. The term $\w_4$ is the most cumbersome. Beginning with the others:
\begin{align*}
	\w_{1,\ti,\f}(z) &  =   \phi(z) \st_{\ti}^{-3} \E \left[ \h^{-1} \l^0_{\ti}(X_i)^3 \e_i^3 \right] \left\{  (2z^2 - 1)/6 \right\}	  , 		 \\
	\w_{2,\ti,\f}(z) &  =   -  \phi(z)  \st_{\ti}^{-1}	  , 		 \\
	\w_{3,\ti,\f}(z) &  =   -  \phi(z) \left\{ z / 2 \right\}	  , 		 \\
	\w_{5,\ti,\f}(z) &  =   -  \phi(z)  \st_{\ti}^{-2} \left\{ z / 2 \right\}	  , 		 \\
	\w_{6,\ti,\f}(z) &  =   \phi(z)  \st_{\ti}^{-4} \E [ \h^{-1} \l^0_{\ti}(X_i)^3 \e_i^3 ]  \left\{ z^3 / 3 \right\}	 .
\end{align*}
For $\w_3$, it is not quite as simple to state a generic version. Let $\bm{\tilde{G}}$ stand in for $\Gpt$ or $\Gqt$, $\tilde{p}$ stand in for $p$ or $p+1$, and $d_n$ stand in for $\h$ or $\b$, all depending on if $\ti = \tp$ or $\trbc$. Note however, that $\h$ is still used in many places, in particular for stabilizing fixed-$n$ expectations, for $\trbc$. Indexes $i$, $j$, and $k$ are always distinct (i.e.\ $X_{\h,i} \neq X_{\h,j} \neq X_{\h,k}$). 
\begin{align*}
	\w_{4,\ti,\f}(z) & =  \phi(z)  \st_{\ti}^{-6} \E \left[ \h^{-1} \l^0_{\ti}(X_i)^3 \e_i^3 \right]^2 \left\{  z^3/3 + 7 z /4 + \st_{\ti}^2 z (z^2-3)/4 \right\}   		\\
	& \quad +  \phi(z)  \st_{\ti}^{-2} \E \left[ \h^{-1} \l^0_{\ti}(X_i) \l^1_{\ti}(X_i, X_i) \e_i^2 \right] \left\{ - z (z^2 - 3) /2 \right\}   		\\
	& \quad +  \phi(z)  \st_{\ti}^{-4} \E \left[ \h^{-1} \l^0_{\ti}(X_i)^4 (\e_i^4 - v(X_i)^2) \right] \left\{ z(z^2-3)/8 \right\}   		\\
	& \quad -  \phi(z)  \st_{\ti}^{-2} \E \left[ \h^{-1}\l^0_{\ti}(X_i)^2 \br_{\tilde{p}}(X_{d_n,i})'\bm{\tilde{G}}^{-1} (K \br_{\tilde{p}})(X_{d_n,i}) \e_i^2 \right] \left\{ z(z^2 - 1)/2 \right\}   		\\
	& \quad -  \phi(z)  \st_{\ti}^{-4} \E \left[ \h^{-1} \l^0_{\ti} (X_i)^3  \br_{\tilde{p}}(X_{d_n,i})'\bm{\tilde{G}}^{-1} \e_i^2 \right] \E \left[ \h^{-1} (K \br_{\tilde{p}})(X_{d_n,i}) \l^0_{\ti} (X_i) \e_i^2 \right] \left\{ z(z^2 - 1)  \right\}   		\\
	& \quad +  \phi(z)  \st_{\ti}^{-2} \E \left[ \h^{-2} \l^0_{\ti}(X_i)^2 (\br_{\tilde{p}}(X_{d_n,i})'\bm{\tilde{G}}^{-1}(K \br_{\tilde{p}})(X_{d_n,j}) )^2 \e_j^2 \right] \left\{ z(z^2 - 1)/4 \right\}   		\\
	%	& \quad +   \phi(z)   \st_{\ti}^{-4} \E \left[ \h^{-1} \l^0_{\ti} (X_j)^2 \left( \E \left[ \h^{-1} \br_{\tilde{p}}(X_{d_n,j})'\bm{\tilde{G}}^{-1}(K \br_{\tilde{p}})(X_{d_n,i}) \l^0_{\ti}(X_i)  \e_i^2 \vert X_j \right] \right)^2 \right] \left\{  z(z^2 - 1) /2 \right\}   		\\ %%this is the old version of the line right below, with the conditional expectation squared
	& \quad +   \phi(z)   \st_{\ti}^{-4} \E \left[ \h^{-3} \l^0_{\ti} (X_j)^2 \br_{\tilde{p}}(X_{d_n,j})'\bm{\tilde{G}}^{-1}(K \br_{\tilde{p}})(X_{d_n,i}) \l^0_{\ti}(X_i) \br_{\tilde{p}}(X_{d_n,j})'\bm{\tilde{G}}^{-1}(K \br_{\tilde{p}})(X_{d_n,k}) \l^0_{\ti}(X_k)  \e_i^2 \e_k^2 \right]   		\\
	& \quad \qquad \qquad \qquad \qquad \qquad   \times \;   \left\{  z(z^2 - 1) /2 \right\}   		\\
	& \quad +  \phi(z)  \st_{\ti}^{-4} \E \left[ \h^{-1} \l^0_{\ti}(X_i)^4 \e_i^4 \right] \left\{ - z (z^2 - 3)/24 \right\}   		\\
	& \quad +  \phi(z)  \st_{\ti}^{-4} \E \left[ \h^{-1} \left( \l^0_{\ti}(X_i)^2 v(X_i) - \E[\l^0_{\ti}(X_i)^2 v(X_i)] \right) \l^0_{\ti}(X_i)^2 \e_i^2 \right] \left\{ z(z^2 - 1)/4 \right\}   		\\
	& \quad +  \phi(z)  \st_{\ti}^{-4} \E \left[ \h^{-2} \l^1_{\ti}(X_i, X_j) \l^0_{\ti}(X_i)\l^0_{\ti}(X_j)^2 \e_j^2 v(X_i) \right] \left\{  z (z^2 - 3) \right\}   		\\
	& \quad +  \phi(z)  \st_{\ti}^{-4} \E \left[ \h^{-2} \l^1_{\ti}(X_i, X_j)  \l^0_{\ti}(X_i) \left( \l^0_{\ti}(X_j)^2 v(X_j) - \E[\l^0_{\ti}(X_j)^2 v(X_j)] \right) \e_i^2 \right] \left\{ - z \right\}   		\\
	& \quad +  \phi(z)  \st_{\ti}^{-4} \E \left[ \h^{-1}  \left( \l^0_{\ti}(X_i)^2 v(X_i) - \E[\l^0_{\ti}(X_i)^2 v(X_i)] \right)^2 \right] \left\{ - z(z^2 + 1) /8 \right\}   .
\end{align*}

For computation, note that the seventh term can be rewritten by factoring the expectation, after rearranging the terms using the fact that $\br_{\tilde{p}}(X_{d_n,j})'\bm{\tilde{G}}^{-1}\br_{\tilde{p}}(X_{d_n,i})$ is a scalar, as follows
\begin{align*}
	& \E \left[ \h^{-3} \l^0_{\ti} (X_j)^2 \br_{\tilde{p}}(X_{d_n,j})'\bm{\tilde{G}}^{-1}(K \br_{\tilde{p}})(X_{d_n,i}) \l^0_{\ti}(X_i) \br_{\tilde{p}}(X_{d_n,j})'\bm{\tilde{G}}^{-1}(K \br_{\tilde{p}})(X_{d_n,k}) \l^0_{\ti}(X_k)  \e_i^2 \e_k^2 \right]   		\\
	& \qquad = \E\left[\h^{-1} \l^0_{\ti}(X_i) \e_i^2 (K \br_{\tilde{p}}')(X_{d_n,i}) \bm{\tilde{G}}^{-1} \right] \; \E\left[\h^{-1} \br_{\tilde{p}}(X_{d_n,j}) \l^0_{\ti} (X_j)^2 \br_{\tilde{p}}(X_{d_n,j})'\bm{\tilde{G}}^{-1}  \right]    		\\
	& \qquad\qquad\qquad\qquad      \times \; \E\left[\h^{-1} (K \br_{\tilde{p}})(X_{d_n,k})   \l^0_{\ti}(X_k)  \e_k^2  \right].
\end{align*}
This will greatly ease implementation.

\subsubsection{Computing the Terms}

Computing the terms of the Edgeworth expansion of Theorem \ref{suppthm:EE lp}, listed above, is straightforward but tedious. We give a short summary here, following the essential steps of \cite[Chapter 2]{Hall1992_book}. In what follows, will always discard higher order terms (those that will not appear in the Theorem) and write $A \oeq B$ to denote $A = B + o((n\h)^{-1} + (n\h)^{-1/2}\bti + \bti^2)$. Let $\bm{\tilde{G}}$ stand in for $\Gpt$ or $\Gqt$, $\tilde{p}$ stand in for $p$ or $p+1$, and $d_n$ stand in for $\h$ or $\b$, all depending on if $\ti = \tp$ or $\trbc$. Note however, that $\h$ is still used in many places, in particular for stabilizing fixed-$n$ expectations, for $\trbc$. 

The steps to compute the expansion are as follows. First, we compute a Taylor expansion of $\ti$ around nonrandom denominators. Then we compute the first four moments of this expansion. These are then combined into cumulants, which determine the terms of the expansion. 

The Taylor expansion is
\begin{multline*}
	\ti \oeq \left\{ 1 - \frac{1}{2 \st_{\ti}^2}\left( W_{\ti,1}  +  W_{\ti,2}  + W_{\ti,3}\right) + \frac{3}{8 \st_{\ti}^4} \left( W_{\ti,1}  +  W_{\ti,2}  + W_{\ti,3}\right)^2  \right\}  		\\ 		\times \;  \st_{\ti}^{-1} \left\{   N_{\ti,1} + N_{\ti,2} + N_{\ti,3} + B_{\ti,1} \right\},
\end{multline*}
where
\begin{align*}
	W_{\ti,1} & =  \frac{1}{n\h} \sumi \left\{\l^0_{\ti}(X_i)^2  \left( \e_i^2 - v(X_i) \right) \right\}  -  2 \frac{1}{n^2 \h^2} \sumi \sumj \left\{ \l^0_{\ti}(X_i)^2 \br_{\tilde{p}}(X_{d_n,i})'  \bm{\tilde{G}}^{-1} (K \br_{\tilde{p}})(X_{d_n,i}) \e_i \e_j \right\}    		\\
	& \qquad \qquad \qquad \qquad +  \frac{1}{n^3 \h^3} \sumi \sumj \sumk \left\{ \l^0_{\ti}(X_i)^2 \br_{\tilde{p}}(X_{d_n,i})'  \bm{\tilde{G}}^{-1} (K \br_{\tilde{p}})(X_{d_n,i}) \e_j \e_k  \right\} ,   		\\
	W_{\ti,2} & =  \frac{1}{n\h}\sumi \left\{  \l^0_{\ti}(X_i)^2  v(X_i)^2  - \E[ \l^0_{\ti}(X_i)^2  v(X_i)^2] \right\}  +  2 \frac{1}{n^2 \h^2} \sumi\sumj \l_{\ti}^2(X_i, X_j) \l_{\ti}^0(X_i) v(X_i),   		\\
	W_{\ti,3} & = \frac{1}{n^3 \h^3} \sumi \sumj \sumk \l_{\ti}^1(X_i, X_j) \l_{\ti}^1(X_i,X_k) v(X_i)   +   2 \frac{1}{n^3 \h^3} \sumi \sumj \sumk \l_{\ti}^2(X_i, X_j, X_k) \l_{\ti}^0(X_i) v(X_i),   			\\
	B_{\ti,1} & = \tO \frac{1}{n\h} \sumi \l^0_{\ti}(X_i) [\mu(X_i) - \br_{\tilde{p}}(X_i - x)' \bbeta_{\tilde{p}}],  		\\
	N_{\ti,1} & = \tO  \frac{1}{n\h} \sumi \l^0_{\ti}(X_i) \e_i,  		\\
	N_{\ti,2} & = \tO  \frac{1}{(n\h)^2} \sumi \sumj \l^1_{\ti}(X_i, X_j) \e_i,  		\\
	N_{\ti,3} & = \tO  \frac{1}{(n\h)^3} \sumi \sumj \sumk \l^2_{\ti}(X_i, X_j, X_k) \e_i,
\end{align*}
with the final line defining $\l^2_{\ti}(X_i, X_j, X_k)$ in the obvious way following $\l^1_{\ti}$, i.e.\ taking account of the next set of remainders. Terms involving $\l^2_{\ti}(X_i, X_j, X_k)$ are higher-order, which is why it is not needed in the final terms of the expansion. To concretize the notation, note that $\bti = \E[B_{\ti,1}]$, and, for example for $\tp$ we are defining,
\begin{align*}
	N_{\tp,1} & = \tO \v! \be_\v' \Gpt^{-1}  \Op (\bY \bM)/n,   		\\
	N_{\tp,2} & = \tO \v! \be_\v' \Gpt^{-1}(\Gpt - \Gp) \Gpt^{-1} \Op (\bY \bM)/n, 		\\
	N_{\tp,3} & = \tO \v! \be_\v' \Gpt^{-1}(\Gpt - \Gp) \Gpt^{-1} (\Gpt - \Gp) \Gpt^{-1} \Op (\bY \bM)/n.
\end{align*}

Straightforward moment calculations yield, where ``$\E[\ti] \oeq$'' denotes moments of the Taylor expansion above,
\begin{align*}
\E[\ti] & \oeq \st_{\ti}^{-1} \E\left[B_{\ti,1}\right]  - \frac{1}{2 \st_{\ti}^2} \E\left[ W_{\ti,1} N_{\ti,1}\right],
\end{align*}
\begin{align*}
\E[\ti^2] & \oeq  \frac{1}{\st_{\ti}^2} \E\left[ N_{\ti,1}^2  +  N_{\ti,2}^2 + 2 N_{\ti,1} N_{\ti,2}  + 2 N_{\ti,1} N_{\ti,3}  \right]   		\\
& \quad - \frac{1}{\st_{\ti}^4} \E \left[ W_{\ti,1} N_{\ti,1}^2  +  W_{\ti,2} N_{\ti,1}^2  +   W_{\ti,3} N_{\ti,1}^2   + 2 W_{\ti,2} N_{\ti,1} N_{\ti,2}   \right]  		\\
& \quad + \frac{1}{\st_{\ti}^6} \E\left[ W_{\ti,1}^2 N_{\ti,1}^2  +  W_{\ti,2}^2 N_{\ti,1}^2 \right]  +  \frac{1}{\st_{\ti}^2} \E\left[ B_{\ti,1}^2 \right] - \frac{1}{\st_{\ti}^4} \E\left[ W_{\ti,1} N_{\ti,1} B_{\ti,1} \right],
\end{align*}
\begin{align*}
\E[\ti^3] & \oeq  \frac{1}{ \st_{\ti}^3} \E\left[ N_{\ti,1}^3 \right]    -   \frac{3}{2\st_{\ti}^5} \E \left[ W_{\ti,1} N_{\ti,1}^3  \right]  	+   \frac{3}{\st_{\ti}^3} \E\left[  N_{\ti,1}^2 B_{\ti,1} \right],
\end{align*}
and
\begin{align*}
\E[\ti^4] & \oeq  \frac{1}{\st_{\ti}^4} \E\left[ N_{\ti,1}^4  +  4 N_{\ti,1}^3 N_{\ti,2}  +  4 N_{\ti,1}^3 N_{\ti,3}  + 6 N_{\ti,1}^2 N_{\ti,3}^2  \right]   		\\
& \quad - \frac{2}{\st_{\ti}^6} \E \left[ W_{\ti,1} N_{\ti,1}^4  +  W_{\ti,2} N_{\ti,1}^4  +   4 W_{\ti,2} N_{\ti,1}^3 N_{\ti,2}  + W_{\ti,3} N_{\ti,1}  \right]  		\\
& \quad +  \frac{3}{\st_{\ti}^8} \E \left[ W_{\ti,1}^2 N_{\ti,1}^4  +  W_{\ti,2}^2 N_{\ti,1}^4   \right]  		\\
& \quad + \frac{4}{\st_{\ti}^4} \E\left[ N_{\ti,1}^3  B_{\ti,1} \right]   -   \frac{8}{\st_{\ti}^6} \E\left[ W_{\ti,1} N_{\ti,1}^3 B_{\ti,1} \right]  +  \frac{6}{\st_{\ti}^4} \E\left[ N_{\ti,1}^2 B_{\ti,1}^2 \right].
\end{align*}
Computing each factor, we get the following results. For these terms below, indexes $i$, $j$, and $k$ are always distinct (i.e.\ $X_{\h,i} \neq X_{\h,j} \neq X_{\h,k}$). 
\begin{align*}
\E\left[B_{\ti,1}\right] & = \bti,  		\\
\E\left[ W_{\ti,1} N_{\ti,1}\right] & \oeq \tO^{-1} \E\left[ \h^{-1} \l_{\ti}^0(X_i)^3  \e_i^3 \right],   		\\
\E\left[  N_{\ti,1}^2 \right] & \oeq \st_{\ti}^2,   		\\
\E\left[  N_{\ti,1}N_{\ti,2} \right] & \oeq \tO^{-2}  \E\left[ \h^{-1} \l^1_{\ti}(X_i,X_i) \l^0_{\ti}(X_i) \e_i^2 \right],   		\\ 
\E\left[ N_{\ti,2}^2 \right] & \oeq \tO^{-1} \E\left[ \h^{-2} \l^1_{\ti}(X_i, X_j)^2 \e_i^2 \right],   		\\
\E\left[  N_{\ti,2}N_{\ti,3} \right] & \oeq \tO^{-2}  \E\left[ \h^{-2} \l_v^2(X_i, X_j, X_j) \l^0_{\ti}(X_i) \e_i^2 \right],   		\\ 
\E\left[  W_{\ti,1}N_{\ti,1}^2 \right] & \oeq \tO^{-2} \Biggl\{  \E\left[ \h^{-1} \l^0_{\ti}(X_i)^4 \left( \e_i^4 - v(X_i)^2\right) \right]   		\\
& \quad  -  2 \st_{\ti}^2 \E\left[ \h^{-1} \l^0_{\ti}(X_i)^2 \br_{\tilde{p}}(X_{d_n,i})'  \bm{\tilde{G}}^{-1} (K \br_{\tilde{p}})(X_{d_n,i}) \e_i^2 \right]   		\\ 
& \quad  -  4  \E\left[ \h^{-1} \l^0_{\ti}(X_i)^4 \br_{\tilde{p}}(X_{d_n,i})'  \bm{\tilde{G}}^{-1} \e_i^2  \right] \E\left[ \h^{-1}  (K \br_{\tilde{p}})(X_{d_n,i}) \l^0_{\ti}(X_i) \e_i^2 \right]   		\\ 
& \quad  + \st_{\ti}^2 \E\left[ \h^{-2} \l^0_{\ti}(X_i)^2 \left(  \br_{\tilde{p}}(X_{d_n,i})'  \bm{\tilde{G}}^{-1} (K \br_{\tilde{p}})(X_{d_n,j}) \right)^2 \e_j^2 \right]   		\\ 
& \quad +  2 \E \left[ \h^{-1} \l^0_{\ti} (X_j)^2 \left( \E \left[ \h^{-1} \br_{\tilde{p}}(X_{d_n,j})'  \bm{\tilde{G}}^{-1} (K \br_{\tilde{p}})(X_{d_n,i}) \l^0_{\ti}(X_i)  \e_i^2 \vert X_j \right] \right)^2 \right]    \Biggr\},   		\\
\E\left[  W_{\ti,2}N_{\ti,1}^2 \right] & \oeq \tO^{-2} \Bigl\{ \E \left[ \h^{-1} \left( \l^0_{\ti}(X_i)^2 v(X_i) - \E[\l^0_{\ti}(X_i)^2 v(X_i)] \right) \l^0_{\ti}(X_i)^2 \e_i^2 \right]     		\\
& \quad + 2 \st_{\ti}^2 \E \left[ \h^{-1} \l^1_{\ti}(X_i, X_i) \l^0_{\ti}(X_i) v(X_i) \right]  \Bigr\},   		\\
\E\left[  W_{\ti,2}N_{\ti,1}N_{\ti,2} \right] & \oeq \tO^{-2} \Bigl\{ \E \left[ \h^{-2} \left( \l^0_{\ti}(X_j)^2 v(X_j) - \E[\l^0_{\ti}(X_j)^2 v(X_j)] \right)  \l^1_{\ti}(X_i, X_j) \l^0_{\ti}(X_i) \e_i^2 \right]     		\\
& \quad + 2  \E \left[ \h^{-3} \l^1_{\ti}(X_i, X_j)  \l^1_{\ti}(X_k, X_j) \l^0_{\ti}(X_i) \l^0_{\ti}(X_k) v(X_i) \e_k^2 \right]  \Bigr\},   		\\
\E\left[  W_{\ti,3}N_{\ti,1}^2 \right] & \oeq \tO^{-2} \Bigl\{ \st_{\ti}^2 \E \left[ \h^{-2} \left( \l^1_{\ti}(X_i, X_j)^2  + 2\l^2_{\ti}(X_i, X_j, X_j) \right) v(X_i)  \right]  \Bigr\},   		\\
\E\left[  W_{\ti,1}^2 N_{\ti,1}^2 \right] & \oeq \tO^{-2} \Bigl\{ \st_{\ti}^2 \E \left[ \h^{-1}  \l^0_{\ti}(X_i)^4 \left( \e_i^4 - v(X_i)^2 \right) \right]  + 2 \E\left[ \h^{-1} \l^0_{\ti}(X_i)^3   \e_i^3  \right]^2  \Bigr\},   		\\
\E\left[  W_{\ti,2}^2 N_{\ti,1}^2 \right] & \oeq \tO^{-2} \st_{\ti}^2 \Bigl\{ \E \left[ \h^{-1}  \left( \l^0_{\ti}(X_i)^2 v(X_i)  -  \E[\l^0_{\ti}(X_i)^2 v(X_i)] \right)^2 \right]  		\\
& \quad  + 4 \E\left[ \h^{-2}  \left( \l^0_{\ti}(X_i)^2 v(X_i)  -  \E[\l^0_{\ti}(X_i)^2 v(X_i)] \right) \l^1_{\ti}(X_j, X_i) \l^0_{\ti}(X_j) v(X_j) \right]   		\\
& \quad  + 4 \E\left[ \h^{-3} \l^1_{\ti}(X_i, X_j) \l^0_{\ti}(X_i) v(X_i) \l^1_{\ti}(X_k, X_j) \l^0_{\ti}(X_k) v(X_k) \right] \Bigr\},   		\\
\E\left[  W_{\ti,1}N_{\ti,1}B_{\ti,1}\right]  & \oeq  \E\left[  W_{\ti,1}N_{\ti,1} \right] \E\left[B_{\ti,1}\right],   		\\
\E\left[  N_{\ti,1}^3 \right] & \oeq \tO^{-1} \E\left[ \h^{-1} \l^0_{\ti}(X_i)^3 \e_i^3 \right],   		\\
\E\left[  W_{\ti,1} N_{\ti,1}^3 \right] & \oeq \E\left[  N_{\ti,1}^2 \right]  \E\left[  W_{\ti,1} N_{\ti,1} \right],   		\\
\E\left[  N_{\ti,1}^4 \right] & \oeq 3 \st_{\ti}^4  +  \tO^{-2} \E\left[ \h^{-1} \l^0_{\ti}(X_i)^4 \e_i^3 \right],   		\\
\E\left[  N_{\ti,1}^3 N_{\ti,2} \right] & \oeq  \tO^{-2} 6 \st_{\ti}^2 \E\left[ \h^{-1} \l^1_{\ti}(X_i, X_i) \l^0_{\ti}(X_i) \e_i^2 \right],   		\\
\E\left[  N_{\ti,1}^3 N_{\ti,3} \right] & \oeq  \tO^{-2} 3 \st_{\ti}^2 \E\left[ \h^{-2} \l^2_{\ti}(X_i, X_j, X_j) \l^0_{\ti}(X_i) \e_i^2 \right],   		\\
\E\left[  N_{\ti,1}^2 N_{\ti,2}^2 \right] & \oeq  \tO^{-2} \Bigl\{ \st_{\ti}^2 \E\left[ \h^{-2} \l^1_{\ti}(X_i, X_j)^2 \e_i^2 \right]   +   2  \E \left[ \h^{-3} \l^1_{\ti}(X_i, X_j)  \l^1_{\ti}(X_k, X_j) \l^0_{\ti}(X_i) \l^0_{\ti}(X_k) \e_i^2 \e_k^2 \right]  \Bigr\},   		\\
\E\left[  W_{\ti,1} N_{\ti,1}^4 \right] & \oeq \tO^{-2} \Bigl\{ \E\left[ \h^{-1} \l^0_{\ti}(X_i)^3 \e_i^3 \right] \E\left[ \h^{-1} \l^0_{\ti}(X_i)^3 \e_i^3 \right]  +  6 \E\left[  N_{\ti,1}^2 \right]  \E\left[  W_{\ti,1} N_{\ti,1}^2 \right]  \Bigr\},   		\\
\E\left[  W_{\ti,2} N_{\ti,1}^4 \right] & \oeq \tO^{-2}\st_{\ti}^2  6 \Bigl\{ \E \left[ \h^{-1} \left( \l^0_{\ti}(X_i)^2 v(X_i) - \E[\l^0_{\ti}(X_i)^2 v(X_i)] \right) \l^0_{\ti}(X_i)^2 \e_i^2 \right]     		\\
& \quad + 2  \E \left[ \h^{-2} \l^1_{\ti}(X_i, X_j) \l^0_{\ti}(X_i) \l^0_{\ti}(X_j)^2 \e_j^2 v(X_i) \right]   +  \E\left[ \h^{-1} \l^1_{\ti}(X_i, X_i) \l^0_{\ti}(X_i) v(X_i) \right] \Bigr\},   		\\
\E\left[  W_{\ti,2} N_{\ti,1}^3 N_{\ti,2} \right] & \oeq 3 \E\left[   N_{\ti,1}^2 \right] \E\left[  W_{\ti,2} N_{\ti,1} N_{\ti,2} \right] ,  		\\
\E\left[  W_{\ti,3} N_{\ti,1}^4 \right] & \oeq 3 \E\left[   N_{\ti,1}^2 \right] \E\left[  W_{\ti,3} N_{\ti,1}^2 \right] ,  		\\
\E\left[  W_{\ti,1}^2 N_{\ti,1}^4 \right] & \oeq 3 \E\left[   N_{\ti,1}^2 \right] \E\left[  W_{\ti,1}^2 N_{\ti,1}^2 \right] ,  		\\
\E\left[  W_{\ti,2}^2 N_{\ti,1}^4 \right] & \oeq 3 \E\left[   N_{\ti,1}^2 \right] \E\left[  W_{\ti,2}^2 N_{\ti,1}^2 \right] .
	\end{align*}

The so-called approximate cumulants of $\ti$, denoted here by $\kappa_{\ti,k}$ for the $k^{\text{th}}$ cumulant, can now be directly calculated from these approximate moments using standard formulas \citep[Equation (2.6)]{Hall1992_book}. It is useful to list these and collect their asymptotic orders. For the first two, we split them into two subterms each, by their different asymptotic order.
\begin{align*}
	\kappa_{\ti,1} & = \E[\ti] := \kappa_{\ti,1,1} + \kappa_{\ti,1,2} \oeq \tO^{-1} + \bti  , 			\\
	\kappa_{\ti,2} & = \E[\ti^2] - \E[\ti]^2 := 1 + \kappa_{\ti,2,1} + \kappa_{\ti,2,2} \oeq 1 + \tO^{-2} + \tO^{-1}\bti ,  			\\
	\kappa_{\ti,3} & = \E[\ti^3] - 3 \E[\ti^2] \E[\ti] + 2 \E[\ti]^3 \oeq \tO^{-1} ,  			\\
	\kappa_{\ti,4} & = \E[\ti^4] - 4 \E[\ti^3] \E[\ti] - 3 \E[\ti^2]^2 + 12 \E[\ti^2]\E[\ti]^2 - 6 \E[\ti]^4 \oeq \tO^{-2}  .
\end{align*}
	
Next, our equivalent of \cite[Equation (2.22)]{Hall1992_book} would be the exponential of
\begin{align*}
	& \kappa_{\ti,1}(it) + \frac{1}{2}(it)^2 (\kappa_{\ti,2} - 1) + \frac{1}{3!}(it)^3 \kappa_{\ti,3} + \frac{1}{4!}(it)^4 \kappa_{\ti,4}   			\\
	& \quad  + \frac{1}{2}(it)^2 \left(\kappa_{\ti,1,1}^2 + 2\kappa_{\ti,1,1}\kappa_{\ti,1,2} \kappa_{\ti,1,2}^2 \right)  +  \frac{1}{2} \frac{1}{3!^2}(it)^6 \kappa_{\ti,3}^2    			\\
	& \quad + \frac{1}{2}2\frac{1}{3!} (it)(it)^3 \left(\kappa_{\ti,1,1} \kappa_{\ti,3}  +  \kappa_{\ti,1,2}\kappa_{\ti,3} \right).
\end{align*}

Then, the final computation is done by following \cite[p.\ 44f, Equations (2.17)]{Hall1992_book}. We find that the Edgeworth expansion, with asymptotic order listed in parentheses at right, is given by
\begin{align}
	\Phi(z) - \phi(z) \Bigg\{ &  \bigg[\kappa_{\ti,1,1} + \frac{1}{3!} (z^2 - 1) \kappa_{\ti,3} \bigg]   			\tag{$\tO^{-1}$}   \\
	 & \bigg[ \kappa_{\ti,1,2} \bigg]   			\tag{$\bti$}   \\
	 \begin{split}
		 & \bigg[ \frac{1}{2} z \kappa_{\ti,1,1}^2  +  \frac{1}{2}\frac{1}{3!^2} z (z^4 - 10z^2 + 15) \kappa_{\ti,3}^2    			 \\
		 & \qquad  + \frac{1}{2}2\frac{1}{3!} z(z^2-3) \kappa_{\ti,1,1}\kappa_{\ti,3}  +  \frac{1}{2} z \kappa_{\ti,2,1}  +  \frac{1}{4!} z(z^2-3)\kappa_{\ti,4} \bigg]   			
	 \end{split}	   \tag{$\tO^{-2}$}  	    \\
	 & \bigg[ \frac{1}{2} z \kappa_{\ti,1,2}^2 \bigg]   			\tag{$\bti^2$}   \\
	 & \bigg[\frac{1}{2} z 2 \kappa_{\ti,1,1} \kappa_{\ti,1,2}  + \frac{1}{2}2\frac{1}{3!} z(z^2-3) \kappa_{\ti,1,2}\kappa_{\ti,3}  + \frac{1}{2}z \kappa_{\ti,2,2} \bigg]  			\tag{$\tO^{-1}\bti$}   
	 \Bigg\}.
\end{align}
This is exactly the result of Theorem \ref{suppthm:EE lp} and these terms, in the order displayed, are exactly the $\w_k(\ti,z), k=1,2,3,4,5$ above.

%%%%%%%%%%%%%%%%%%%%%%%%%%%%%%%%%%%%%%%%%%%%%%%%%%%%%%%%%%%%%%%%%%%%%%
%%%%%%%%%%%%%%%%%%%%%%%%%%%%%%%%%%%%%%%%%%%%%%%%%%%%%%%%%%%%%%%%%%%%%%
\section{Bias and the Role of Smoothness}
	\label{supp:bias lp}

In this section we derive (and list) all the necessary bias terms, both in generic form and for special cases. We will cover different centerings, different smoothness cases, as well as interior and boundary points. We first give a generic derivation, followed by discussion of the bias of $\that = \mhat_{p+1}^{(\v)}$ and then $\thatrbc$, and in the final subsection, a complete list of all results and formulae.

The conditional bias defined above in Equation \eqref{suppeqn:bias lp}, and the similarly computed $\E[\thatrbc \big| X_1, \ldots, X_n ]$, are useful for describing bias correction, first order asymptotics, and computing and implementing optimal bandwidths. However, these can not be present in the Edgeworth and coverage error expansions because they are random quantities. Further, the leading term isolated in Equation \eqref{suppeqn:bias lp} presumes sufficient smoothness, which we avoid for general results. (The analogous calculation for $\thatrbc$ is shown below.)

The bias terms in the expansions are generic and nonrandom. In Theorem \ref{suppthm:EE lp} we denote the bias contribution by $\bti$. This term, and its particular cases $\btp$ and $\btrbc = \biasletter_{\trbc,\f}$ in particular, capture the entire bias, that is both the rate and the constant. These terms are defined both (i) before a Taylor approximation is performed, and (ii) with $\Gp$, $\Gq$, and $\Lp_{1}$ replaced with their fixed-$n$ expectations, denoted $\Gpt$, $\Gqt$, and $\Lpt_1$. In both sense, these bias terms reflect the ``fixed-$n$'' approach. (A tilde always denotes a fixed-$n$ expectation, and all expectations are fixed-$n$ calculations unless explicitly denoted otherwise.)  

For notation, we maintain the dependence on $\f$ if it is useful to emphasize that for certain $\f \in \F_\S$ the bias may be lower or higher. For example, if it happens that $\mu_\f^{(p+1)}(\x) = 0$, the leading term of Equation \eqref{suppeqn:bias lp} will be zero even if $p-\v$ is odd. Further, at present we explicitly write these as functions of the $t$-statistic, as the expansions in Section \ref{supp:theorems lp} are for the $t$-statistics, but it would be equivalent to write them as functions of the corresponding interval: that is $\bi \equiv \bti$, in terms of $\i$ and $\f$. For example, $\btrbc = \biasletter_{\trbc,\f} = \birbc$.

%%%%%%%%%%%%%%%%%%%%%%%%%%%%%%%%%%%%%%%%%%%%%%%%%%%%%%%%%%%%%%%%%%%%%%
\subsection{Generic Bias Formulas}

Define
\begin{itemize}

	\item $\bbeta_k$ (usually $k=p$ or $k=p+1$) as the $k+1$ vector with $(j+1)$ element equal to $\mu^{(j)}(\x)/j!$ for $j = 0, 1, \ldots, k$ as long as $j \leq \S$, and zero otherwise,
	
	\item $\bM = [\mu(X_1), \ldots, \mu(X_n)]'$,
		
	\item $\preTaylor_k$ as the $n$-vector with $i^{\text{th}}$ entry $[\mu(X_i) - \br_{k}(X_i - \x)'\bbeta_{k}]$,
	
	\item $\rho = \h / \b$, the ratio of the two bandwidth sequences, and
	
	\item $\Gpt = \E[\Gp]$, $\Gqt = \E[\Gq]$, $\Lpt_1 = \E[\Lp_{1}]$, and so forth. A tilde always denotes a fixed-$n$ expectation, and all expectations are fixed-$n$ calculations unless explicitly denoted otherwise. The dependence on $\f$ and $\F_\S$ is suppressed. As a concrete example:
		\[\Lp_k = \Op \left[ X_{\h,1}^{p+k}, \ldots, X_{\h,n}^{p+k} \right]'/n = \frac{1}{n\h} \sumi  (K \br_p)(\Xhi) \Xhi^{p+k},\]
	and so
		\begin{align*}
			\Lpt_k = \E[\Lp_k] & = \h^{-1} \E\left[ (K \br_p)(\Xhi) \Xhi^{p+k} \right]  		\\
			& = \h^{-1} \int_{\supp\{X\}} K\left(\frac{X_i - \x}{\h}\right) \br_p \left(\frac{X_i - \x}{\h}\right) \left(\frac{X_i - \x}{\h}\right)^{p+k} f(X_i) dX_i  		\\
			& =  \int_{-1}^1 K(u) \br_p(u) u^{p+k} f(\x + u\h) du.
		\end{align*}
		
	The range of integration for integrals will generally be left implicit. The range will change when the point of interest is on a boundary, but the notation will remain the same and it is to be understood that moments and moments of the kernel be replaced by the appropriate truncated version. For example, if $\supp\{X\} = [0,\infty)$ and the point of interest is $\x = 0$, then by a change of variables
\[\Lpt_k = \h^{-1} \int_{\supp\{X\}}(K r_p)(\Xhi) \Xhi^{p+k} f(X_i) dX_i = \int_0^\infty (K r_p)(u) u^{p+k} f(u\h)du,\]
whereas if $\supp\{X\} = (-\infty,0]$ and $\x = 0$, then 
\[\Lpt_k = \int_{-\infty}^0 (K r_p)(u) u^{p+k} f(-u\h)du.\]
For the remainder of this section, the notation is left generic.

\end{itemize}

To compute the terms $\btp$ and $\btrbc$, begin with the conditional mean of $\mhat_p^{(\v)}$:
\begin{align*}
	\E\left[\mhat_p^{(\v)} \big| X_1, \ldots, X_n \right] & = \v! \be_\v'\E\left[ \bhat_p  \big| X_1, \ldots, X_n \right] = \frac{1}{n \h^\v} \v! \be_\v'\Gp^{-1} \Op \bM  		\\
	& = \frac{1}{n \h^\v} \v! \be_\v'\Gp^{-1} \Op ( \bM - \bR\bbeta_p)   +   \frac{1}{n \h^\v} \v! \be_\v'\Gp^{-1} \Op \bR\bbeta_p  			\\
	& = \frac{1}{n \h^\v} \v! \be_\v'\Gp^{-1} \Op \preTaylor_p   +   \frac{1}{n \h^\v} \v! \be_\v'\Gp^{-1} \Op \bR\bbeta_p.
\end{align*}
Because $\h^{-\v} \be_\v' = \be_\v'\H^{-1}$,  $\bRc = \bR \H^{-1}$,  $\Op = \bRc' \bW$, and $\Gp = \bRc' \bW \bRc/n = \Op \bRc/n$, (the same calculations used for \eqref{suppeqn:bhat1} and \eqref{suppeqn:bhat2}) the second term above is
\begin{equation}
	\label{suppeqn:bias 1}
	\v! \left( \be_\v'\H^{-1} \right) \Gp^{-1} \left(\Op \bRc/n \right)\H \bbeta_p = \v! \be_\v' \bbeta_p = \mu^{(\v)}(\x),
\end{equation}
using the definition of $\bbeta_p$ (the $\v+1$ element of the vector $\bbeta_p$ will not be zero, as $\v \leq \S$ holds by Assumption \ref{suppasmpt:dgp lp}). Therefore 
\begin{align}
	\E\left[\mhat_p^{(\v)} \big| X_1, \ldots, X_n \right]  -  \mu^{(\v)}  & = \frac{1}{n \h^\v} \v! \be_\v'\Gp^{-1} \Op \preTaylor_p   			\nonumber \\
	& = \h^{-\v} \v! \be_\v'\Gp^{-1} \frac{1}{n\h} \sumi (K \br_p)(\Xhi) \left( \mu(X_i) - \br_p(X_i - \x)'\bbeta_p\right).  \label{suppeqn:bias 2}
\end{align}

From here, a Taylor expansion of $\mu(X_i)$ around $X = \x$ immediately gives Equation \eqref{suppeqn:bias lp}, provided that $S \geq p+1$. Instead, the bias terms of the Edgeworth expansions use this form directly, replacing the sample averages with population averages. The biases, $\bti$ in general and $\btp$ and $\btrbc$ in particular, must explicitly account for the rate scaling of $\sqrt{n\h^{1+2\v}}$, because the Edgeworth expansions are proven directly for the $t$-statistics.

For $\that = \mhat^{(\v)}$, for $\tp$ or $\ip$, we apply the rate scaling to the above display and then define
\[\btp  = \sqrt{n\h^{1+2\v}} \h^{-\v} \v! \be_\v'\Gpt^{-1} \E \left[ \h^{-1} (K \br_p)(\Xhi) \left( \mu(X_i) - \br_p(X_i - \x)'\bbeta_p\right) \right].\]
Note that the $\h^{-\v}$ cancels, and thus the rate of decay of the scaled bias does not depend on the level of derivative of interest. Because of the fixed-$n$ nature of this calculation, the parity of $p-\v$ does not matter. If a Taylor series were performed \emph{and} the matrixes were allowed to converge to their limit, the well-known symmetry cancellation would occur for $p-\v$ even at interior $\x$ \citep{Fan-Gijbels1996_book}. The generic expansions are stated without being explicit on this, but for certain derivations and specific cases the symmetry will be exploited. It holds that $\btp = O(\sqrt{n\h} \h^{\az})$ uniformly in $\F_\S$ where $\az$ varies depending on smoothness, parity of $p-\v$, and location of $\x$. If $p$ is small relative to $\S$, depending again on parity and location, we can isolate the leading term $\bctp$ such that $\btp = \sqrt{n\h} \h^\az \bctp [1+o(1)]$ where $\bctp = O(1)$ uniformly in $\F_\S$ and is nonzero for some $\f \in \F_\S$. Results for every case are given in Section \ref{supp:bias cases US} and summarized in Table \ref{supptable:us bias list}.

For $\thatrbc$ (i.e. for $\trbc$ and $\irbc$),
\begin{multline*}
	\E\left[\thatrbc  \big| X_1, \ldots, X_n \right]  -  \tf = \left\{ \E\left[\mhat^{(\v)} \big| X_1, \ldots, X_n \right]  -   \mu^{(\v)}   \right\}  		 \\  	
					 -   \left\{ \h^{p+1 - \v}  \v! \be_\v'\Gp^{-1} \Lp_1 \frac{1}{(p+1)!} \E\left[ \mhat^{(p+1)}\big| X_1, \ldots, X_n \right]   \right\}.
\end{multline*}
The first term is given exactly in \eqref{suppeqn:bias 2}. For the second term, following exactly the same steps that we used to arrive at \eqref{suppeqn:bias 2}, but with $(p+1)$ in place of $v$ and $p$ and $\b$ in place of $\h$, we find that 
\begin{multline*}
	\E\left[ \mhat^{(p+1)}\big| X_1, \ldots, X_n \right] =  (p+1)! \be_{p+1}' \bbeta_{p+1}    		 \\  	
					 +     \b^{-p-1} (p+1)! \be_{p+1}'\Gq^{-1} \frac{1}{n\b} \sumi (K \br_{p+1})(\Xbi) \left( \mu(X_i) - \br_{p+1}(X_i - \x)'\bbeta_{p+1}\right) 
\end{multline*}
Inserting this result and \eqref{suppeqn:bias 2} into $\E\left[\thatrbc  \big| X_1, \ldots, X_n \right]  -  \tf$, we find that
\begin{align}
	\E & \left[\thatrbc  \big| X_1, \ldots, X_n \right]  -  \tf   		\nonumber \\
	& = \h^{-\v} \v! \be_\v'\Gp^{-1} \frac{1}{n\h} \sumi (K \br_p)(\Xhi) \left( \mu(X_i) - \br_p(X_i - \x)'\bbeta_p\right)  		
	   		-   \h^{p+1 - \v}  \v! \be_\v'\Gp^{-1} \Lp_1 \frac{1}{(p+1)!} (p+1)! \be_{p+1}' \bbeta_{p+1}   		\nonumber \\
	& \quad   -   \h^{p+1 - \v}  \v! \be_\v'\Gp^{-1} \Lp_1 \frac{1}{(p+1)!} \b^{-p-1} (p+1)! \be_{p+1}'\Gq^{-1}   		
		 \times  \frac{1}{n\b} \sumi (K \br_{p+1})(\Xbi) \left( \mu(X_i) - \br_{p+1}(X_i - \x)'\bbeta_{p+1}\right)       		\nonumber \\
	& = \h^{-\v} \v! \be_\v'\Gp^{-1} \frac{1}{n\h} \sumi (K \br_p)(\Xhi) \left( \mu(X_i) - \br_p(X_i - \x)'\bbeta_p\right)  		
		   -   \h^{p+1 - \v}  \v! \be_\v'\Gp^{-1} \Lp_1 \be_{p+1}' \bbeta_{p+1}   		\nonumber \\
	& \quad   -   \h^{-\v}  \rho^{p+1} \v! \be_\v'\Gp^{-1} \Lp_1 \be_{p+1}'\Gq^{-1}   		  
		\times  \frac{1}{n\b} \sumi (K \br_{p+1})(\Xbi) \left( \mu(X_i) - \br_{p+1}(X_i - \x)'\bbeta_{p+1}\right)       		\nonumber \\
	& = \h^{-\v} \v! \be_\v'\Gp^{-1} \frac{1}{n\h} \sumi (K \br_p)(\Xhi) \left( \mu(X_i) - \br_{p+1}(X_i - \x)'\bbeta_{p+1}\right)  		\nonumber \\
	& \quad   -   \h^{-\v}  \rho^{p+1} \v! \be_\v'\Gp^{-1} \Lp_1 \be_{p+1}'\Gq^{-1}   		  \times  \frac{1}{n\b} \sumi (K \br_{p+1})(\Xbi) \left( \mu(X_i) - \br_{p+1}(X_i - \x)'\bbeta_{p+1}\right)      . \label{suppeqn:rbc bias pre-taylor}
\end{align}
where the last equality combines the first two terms (in the penultimate line), by noticing that
\begin{align*}
	\h^{p+1 - \v}  \v! \be_\v'\Gp^{-1} \Lp_1 \be_{p+1}' \bbeta_{p+1}  &   =  \h^{p+1 - \v}  \v! \be_\v'\Gp^{-1} \frac{1}{n\h} \sumi (K \br_p)(\Xhi) (\Xhi)^{p+1}  \be_{p+1}' \bbeta_{p+1}  		\\
	& = \h^{p+1 - \v}  \v! \be_\v'\Gp^{-1} \frac{1}{n\h} \sumi (K \br_p)(\Xhi) \h^{-p-1} (X_i - \x)^{p+1}  \be_{p+1}' \bbeta_{p+1},
\end{align*}
and that $(X_i - \x)^{p+1}  \be_{p+1}' \bbeta_{p+1}$ is exactly the difference between $\br_p(X_i - \x)'\bbeta_p$ and $\br_{p+1}(X_i - \x)'\bbeta_{p+1}$.

As before, $\btrbc$ is now defined replacing sample averages with population averages and applying the scaling of $\sqrt{n\h^{1 + 2\v}}$ from the $t$-statistic. Again the $\h^{-\v}$ cancels, and thus the rate of decay of the scaled bias does not depend on the level of derivative of interest.

In sum, the generic formulas are
\begin{align}
	\begin{split}
		\label{suppeqn:eta lp}
		\btp & = \sqrt{n\h} \; \v! \be_\v'\Gpt^{-1} \E \left[ \h^{-1} (K \br_p)(\Xhi) \left( \mu(X_i) - \br_p(X_i - \x)'\bbeta_p\right) \right],  		\\
		\btrbc & =  \sqrt{n\h} \;  \v! \be_\v'\Gpt^{-1} \E\biggl[ \Bigl\{ \h^{-1} (K \br_p)(\Xhi)    -    \rho^{p+1} \Lpt_1 \be_{p+1}'\Gqt^{-1} \b^{-1} (K \br_{p+1})(\Xbi) \Bigr\}   		\\
					& \qquad \qquad \qquad\qquad\quad \times   \left( \mu(X_i) - \br_{p+1}(X_i - \x)'\bbeta_{p+1}\right)   \biggr]  	
	\end{split}
\end{align}
or using $\Op$ and $\Orbc$ as in \Eqref{suppeqn:that lp}, and $\preTaylor_k$,
\[
	\btp  = \sqrt{n\h} \;  \v! \be_\v'\Gpt^{-1}  \E[\Op \preTaylor_p]
\]
and
\[
	\btrbc  =  \sqrt{n\h} \;  \v! \be_\v'\Gpt^{-1} \left( \E[\Op \preTaylor_{p+1}] - \rho^{p+1}  \Lpt \be_{p+1}' \Gqt^{-1} \E[\Oq \preTaylor_{p+1}] \right).
\]

For the generic results of coverage error or the generic Edgeworth expansions of Theorem \ref{suppthm:EE lp} below, these definitions are suitable and the $\btp$ and $\btrbc$ may appear directly. For $\tp$, parity of $p-\v$ is not used, but can matter: the rate at which $\btp$ vanishes is faster by one factor of $\h$ at interior points \citep{Fan-Gijbels1996_book}. The validity of the Edgeworth expansions is not affected by this; the statements are seamless. 

However, it is also useful to separate the rate and leading constant term of these biases when possible. When it is possible we will isolate both the rate and the constant term of the bias. It holds that $\btp = O(\sqrt{n\h} \h^{\az})$ uniformly in $\F_\S$ and if $p$ is small relative to $\S$, depending again on parity and location, we can isolate the leading term $\bctp$ such that $\btp = \sqrt{n\h} \h^\az \bctp [1+o(1)]$ where $\bctp = O(1)$ uniformly in $\F_\S$ and is nonzero for some $\f \in \F_\S$. Similarly, it is always possible to show that $\btrbc = O(\sqrt{n\h}\; t(\h,\b))$ for a function $t(\cdot,\cdot)$ and further, if $\rho = \h/\b$ is bounded and bounded away from zero then $t(\cdot,\cdot)$ can be simplified to $\h^{\az}$. If $p$ is small relative to $\S$ we can isolate the leading terms via a Taylor expansion. If $p$ is small and $\rho$ is bounded and bounded away from zero, we can write $\btrbc = \sqrt{n\h} \h^\az \bctrbc [1+o(1)]$. 

For both $\btp$ and $\btrbc$, $\az$, $t(\h,\b)$, $\bctp$ and $\bctrbc$ depend on smoothness, parity of $p-\v$, and location of $\x$. Complete derivations for $\btp$ and $\btrbc$ are given in Sections \ref{supp:bias cases US} and \ref{supp:bias cases RBC} below and both are summarized in Tables \ref{supptable:us bias list} and \ref{supptable:rbc bias list} for lists of all cases.

The starting point of the derivations is a Taylor approximation. Recall the definitions of $\br_p(u)$ and $\bbeta_p$, where in particular elements of the latter beyond $S+1$ are zero. A Taylor approximation, for some $\bar{x}$, gives
\begin{align}
	\mu(X_i) - \br_p(X_i - \x)'\bbeta_p & = \sum_{k=0}^\S \frac{1}{k!}  (X_i - \x)^k \mu^{(k)}(\x) + \frac{1}{S\!}  (X_i - \x)^\S \left( \mu^{(\S)}(\bar{x}) - \mu^{(\S)}(\x) \right)  		\notag \\
		& \quad - \sum_{k=0}^{\S \wedge p} \frac{1}{k!}  (X_i - \x)^k \mu^{(k)}(\x)  		\notag \\
	& = \sum_{k= \S \wedge p + 1}^S \frac{1}{k!}  (X_i - \x)^k \mu^{(k)}(\x)   +   \frac{1}{\S!}  (X_i - \x)^\S \left( \mu^{(\S)}(\bar{x}) - \mu^{(\S)}(\x) \right)  		\notag \\
	& = \sum_{k= \S \wedge p + 1}^\S \frac{\h^k}{k!}  (\Xhi)^k \mu^{(k)}(\x)   +   O(\h^{\S+\s}),  		\label{suppeqn:Taylor lp}
\end{align}
where the first summation in the last two lines is taken to be zero if $p\geq \S$, and we have applied Assumption \ref{suppasmpt:dgp lp} and restricted to $X_i \in [\x \pm \h]$ (i.e.\ $K(\Xhi) > 0$). Note that by assumption the order of the remainder, $O(\h^{S+s})$, holds uniformly in $\F_\S$. We will use this expansion repeatedly below, or analogous results for other bandwidths and polynomial degrees.

%%%%%%%%%%%%%%%%%%%%%%%%%%%%%%%%%%%%%%%%%%%%%%%%%%%%%%%%%%%%%%%%%%%%%%
\subsection{No Bias Correction: Specific Cases and Leading Terms}
	\label{supp:bias cases US}

We now turn to specific cases for $\btp$. We will characterize the rate and leading constant terms in all cases, depending on depending on the relationship of $p$ and $\S$, the parity of $p - \v$, and whether $\x$ is an interior point or on the boundary. Note that here, unlike Equation \eqref{suppeqn:bias lp}, we are working with nonrandom quantities. The general case, from Equation \eqref{suppeqn:eta lp}, which appears in the Edgeworth expansion is
\[
	\btp  = \sqrt{n\h} \;  \v! \be_\v'\Gpt^{-1}  \E[\Op \preTaylor_p] = \sqrt{n\h} \; \v! \be_\v'\Gpt^{-1} \E \left[ \h^{-1} (K \br_p)(\Xhi) \left( \mu(X_i) - \br_p(X_i - \x)'\bbeta_p\right) \right].
\]
It is always true that the rate is captured by the exponent $\az$ in the form
\[
	\btp = O(\sqrt{n\h} \h^\az).
\]
If $p$ is small enough relative to $\S$, then we write
\[
	\btp = \sqrt{n\h} \h^\az \bctp [1+o(1)]
\]
and call $\bctp$ the leading constant. Recall that $\bctp$ is not truly constant, but rather a nonrandom sequence that is $O(1)$ uniformly in $\F_\S$ and is nonzero for some $\f \in \F_\S$. Table \ref{supptable:us bias list} is complete list of the results, including $\az$ and $\bctp$. These cases are derived in the rest of this section.

As an aside, it is technically possible to obtain the representation $\btp = \sqrt{n\h} \h^\az \bctp [1+o(1)]$ in general, that is for any $p$, by letting $\bctp$ to capture the final term in the Taylor expansion, $(X_i - \x)^\S [ \mu^{(\S)}(\bar{x}) - \mu^{(\S)}(\x)] /\S!$, see the penultimate step of Equation \eqref{suppeqn:Taylor lp}, and taking the $o(1)$ term to be exactly zero. However, we do not use $\bctp$ in this case because the representation is not useful for practice nor is it more concrete than simply using $\btp$, since in this case $\btp = \sqrt{n\h} \h^\az \bctp $ amounts to little more than a redefinition of notation.

\begin{table}
	\renewcommand{\arraystretch}{1.75}
	\centering
	\begin{tabular}{| l | l | l | l | l |}
		\hline
		Location of $\x$  & Parity of $p\!-\!\v$ & Smoothness & Rate Exponent $\az$ & $\bctp$  \\
		\hline 
		\multirow{2}{*}{Boundary} 
			& \multirow{2}{*}{odd or even} & $p < \S$ & $p+1$ & $\v! \be_\v' \Gpt^{-1} \Lpt_1 \frac{ \mu^{(p+1)} } { (p+1)! }$ \\
				&  & $p \geq \S$ & $\S+\s$ &  N/A \\ 
		\hline
		\multirow{4}{*}{Interior} 
			& \multirow{2}{*}{odd} & $p < \S$ & $p+1$ & $\v! \be_\v' \Gpt^{-1} \Lpt_1 \frac{ \mu^{(p+1)} } { (p+1)!}$ \\
				&  & $p \geq \S$ & $\S+\s$ &  N/A \\ 
			\cline{2-5}
			& \multirow{2}{*}{even} & $p + 2 \leq  \S$ & $p+2$ & $\v! \be_\v' \Gpt^{-1} \left( \h^{-1} \Lpt_1 \frac{ \mu^{(p+1)} } { (p+1)! }  + \Lpt_2 \frac{ \mu^{(p+2)} } { (p+2)! }  \right)$ \\
				&  & $p + 2 > \S$ & $\S+\s$ &  N/A \\ 
		\hline
	\end{tabular}
	\caption{Summary of Bias Terms in All Cases For Uncorrected Centering $\mhat_p^{(\v)}$. Rate exponent $\az$ is such that $\btp = O(\sqrt{n\h} \h^\az)$. When possible, $\bctp$ is such that $\btp = \sqrt{n\h} \h^\az \bctp [1+o(1)]$.}
	\label{supptable:us bias list}
\end{table}

%%%%%%%%%%%%%%%%%%%%%%%%%%%%%%%%%%%%%%%%%%%%%%%%%%%%%%%%%%%%%%%%%%%%%%
\subsubsection{Boundary Point}

Here parity plays no role. 

{\bf Case 1: $\bm{p < \S}$.} The leading bias term can be characterized, and we find (cf.\ Equation \eqref{suppeqn:bias lp})
\[\btp  = \sqrt{n\h^{1+2\v}} \h^{-\v} \h^{p+1} \frac{ \mu^{(p+1)} } { (p+1)! } \v! \be_\v' \Gpt^{-1} \Lpt_1 \left[1 + o(1) \right].\]
Note that this holds regardless of whether $\x$ is an interior or boundary point, with suitable changes to the ranges of integration in $\Gpt$ and $\Lpt_1$.

{\bf Case 2: $\bm{p \geq \S}$.} All that is left in Equation \eqref{suppeqn:Taylor lp} is this remainder term, and we therefore have
	\[\btp  = \sqrt{n\h^{1+2\v}} \h^{-\v} O(\h^{\S+\s})  =   O(\sqrt{n\h}\h^{\S+\s}),\]
and cannot say anything further regarding constants. This result applies any time $p\geq \S$, regardless of $\v$, parity of $p - \v$, and at interior and boundary points.

%%%%%%%%%%%%%%%%%%%%%%%%%%%%%%%%%%%%%%%%%%%%%%%%%%%%%%%%%%%%%%%%%%%%%%
\subsubsection{Interior Point: $p-\v$ odd}

The results for $p-\v$ odd are identical to the boundary point case. This automatic boundary carpentry is discussed briefly in the main text. It is one of the celebrated features of local polynomial regression, known for point estimation since their inception, see \cite{Fan-Gijbels1996_book} for review, and proven for inference for the first time in \cite{Calonico-Cattaneo-Farrell2018_JASA}.

\bigskip\noindent{\bf Case 1: $\bm{p < \S}$.} The leading bias term can be characterized, and we find (cf.\ Equation \eqref{suppeqn:bias lp})
\[\btp  = \sqrt{n\h^{1+2\v}} \h^{-\v} \h^{p+1} \frac{ \mu^{(p+1)} } { (p+1)! } \v! \be_\v' \Gpt^{-1} \Lpt_1 \left[1 + o(1) \right].\]
Note that this holds regardless of whether $\x$ is an interior or boundary point, with suitable changes to the ranges of integration in $\Gpt$ and $\Lpt_1$.

\bigskip\noindent{\bf Case 2: $\bm{p \geq \S}$.} All that is left in Equation \eqref{suppeqn:Taylor lp} is this remainder term, and we therefore have
	\[\btp  = \sqrt{n\h^{1+2\v}} \h^{-\v} O(\h^{\S+\s})  =   O(\sqrt{n\h}\h^{\S+\s}),\]
and cannot say anything further regarding constants. This result applies any time $p\geq \S$, regardless of $\v$, parity of $p - \v$, and at interior and boundary points.

%%%%%%%%%%%%%%%%%%%%%%%%%%%%%%%%%%%%%%%%%%%%%%%%%%%%%%%%%%%%%%%%%%%%%%
\subsubsection{Interior Point: $p-\v$ even}

Here the parity of $p$ will matter. It is worth spelling out three smoothness cases, though we will find the same result for the latter two.

\bigskip\noindent{\bf Case 1: $\bm{p + 2 \leq \S}$.} We begin by retaining \emph{two} terms of Equation \eqref{suppeqn:Taylor lp}:
\[\btp  = \sqrt{n\h^{1+2\v}} \h^{-\v} \h^{p+1} \v! \be_\v' \Gpt^{-1} \left( \Lpt_1 \frac{ \mu^{(p+1)} } { (p+1)! }  + \h \Lpt_2 \frac{ \mu^{(p+2)} } { (p+2)! }  \right) \left[1 + o(1) \right].\]
To find the leading term, we must appeal to the limits of (the fixed-$n$) expectations $\Gpt^{-1}$ and $\Lpt_k$ where it holds that
\begin{equation}
	\label{suppeqn:even}
	\be_\v' \Gpt^{-1}  \Lpt_k = A + \h B + o(\h),   \text{ with  $A = 0$  if $(p+k-\v)$ is odd and $\x$ is in the interior.}
\end{equation}
Note that at present we use this fact with $k=1$, and hence $(p+k-\v)$ is odd if $p-\v$ is even, the more common way of referring to this cancellation. Rather than derive the precise form of $A$ and $B$ in \eqref{suppeqn:even}, we maintain the fixed-$n$ approach by stabilizing $\be_\v' \Gpt^{-1}  \Lpt_k$ for interior points when needed. This has the dual the advantages of easy implementability (using the sample, non-tilde versions) and capturing all terms. We will thus write
\[\btp  = \sqrt{n\h^{1+2\v}} \h^{-\v} \h^{p+2} \v! \be_\v' \Gpt^{-1} \left( \h^{-1} \Lpt_1 \frac{ \mu^{(p+1)} } { (p+1)! }  + \Lpt_2 \frac{ \mu^{(p+2)} } { (p+2)! }  \right) \left[1 + o(1) \right].\]

\bigskip\noindent{\bf Case 2: $\bm{p + 1 = \S}$.} We can no longer retain the second term above, because $\mu^{(p+2)}$ does not exist. Instead we find that 
\[\btp  = \sqrt{n\h^{1+2\v}} \h^{-\v} \h^{p+1} \v! \be_\v' \Gpt^{-1} \left( \Lpt_1 \frac{ \mu^{(p+1)} } { (p+1)! }  + O(\h^\s)  \right) \left[1 + o(1) \right].\]
The same symmetry still applies to the first term however, and thus we have
\[\btp  = \sqrt{n\h^{1+2\v}} \h^{-\v} \h^{p+1+\s} \v! \be_\v' \Gpt^{-1} \left( \h^{1-\s} \h^{-1} \Lpt_1 \frac{ \mu^{(p+1)} } { (p+1)! }  +  O(1)  \right) \left[1 + o(1) \right],\]
but since $\s \leq 1$, the second term is (part of) the leading form, and we therefore write
\[\btp  = \sqrt{n\h^{1+2\v}} \h^{-\v} O(\h^{p+1+\s})   = \sqrt{n\h^{1+2\v}} \h^{-\v} O(\h^{\S+\s}),\]
with the final equality holding because, by assumption, $p + 1 = \S$ in this case.

\bigskip\noindent{\bf Case 3: $\bm{p \geq \S}$.} All that is left in Equation \eqref{suppeqn:Taylor lp} is this remainder term, and we therefore have
	\[\btp  = \sqrt{n\h^{1+2\v}} \h^{-\v} O(\h^{\S+\s})  =   O(\sqrt{n\h}\h^{\S+\s}),\]
and cannot say anything further regarding constants. This result applies any time $p\geq \S$, regardless of $\v$, parity of $p - \v$, and at interior and boundary points.

%%%%%%%%%%%%%%%%%%%%%%%%%%%%%%%%%%%%%%%%%%%%%%%%%%%%%%%%%%%%%%%%%%%%%%
\subsection{Post Bias Correction: Specific Cases and Leading Terms}
	\label{supp:bias cases RBC}

The general case, from Equation \eqref{suppeqn:eta lp}, which appears in the Edgeworth expansion is
\begin{align*}
	\btrbc  &  =  \sqrt{n\h} \;  \v! \be_\v'\Gpt^{-1} \bigg( \E\left[\Op \preTaylor_{p+1}\right] - \rho^{p+1}  \Lpt \be_{p+1}' \Gqt^{-1} \E\left[\Oq \preTaylor_{p+1}\right] \bigg)   			\\
	& =  \sqrt{n\h} \;  \v! \be_\v'\Gpt^{-1} \E\biggl[ \Bigl\{ \h^{-1} (K \br_p)(\Xhi)    -    \rho^{p+1} \Lpt_1 \be_{p+1}'\Gqt^{-1} \b^{-1} (K \br_{p+1})(\Xbi) \Bigr\}   		\\
		& \qquad \qquad \qquad\qquad\quad \times   \left( \mu(X_i) - \br_{p+1}(X_i - \x)'\bbeta_{p+1}\right)   \biggr] . 	
\end{align*}
It is always true that the rate is captured by a function $t(\cdot,\cdot)$ such that
\[
	\btrbc = O(\sqrt{n\h}\; t(\h,\b)),
\]
or if $\rho$ is bounded and bounded away from zero, the rate is captured by the exponent $\az$ such that
\[
	\btrbc = O(\sqrt{n\h} \h^\az).
\]
Additionally, if $p$ is small enough relative to $\S$, then we write
\[
	\btrbc = \sqrt{n\h} \h^\az \bctrbc [1+o(1)],
\]
and call $\bctrbc$ the leading constant. Recall that $\bctrbc$ is not truly constant, but rather a nonrandom sequence that is $O(1)$ uniformly in $\F_\S$ and is nonzero for some $\f \in \F_\S$. Table \ref{supptable:rbc bias list} is complete list of the results, including  $t(\h,\b))$, and where possible, $\az$ and $\bctp$. These cases are derived in the rest of this section.

\begin{table}
    \renewcommand{\arraystretch}{1.75}
    \centering    
    \begin{tabular}{| l | l | l | l | l | l |}
    	\hline
    	& & & & \multicolumn{2}{|c|}{$\rho$ bounded above 0, below $\infty$}   \\
    	\cline{5-6}
    	Location of $\x$  & Parity of $p\!-\!\v$ & Smoothness & Rate $t(\h,\b)$   &  $\az$ &  $\bctrbc$  \\
    	\hline 
    	\multirow{2}{*}{Boundary} 
    		& \multirow{2}{*}{odd or even} & $p + 2 \leq \S$ & $\h^{p+2}(1+\rho^{-1})$   & $p+2$ & \eqref{supptable:eqn bnd} \\
    			&  & $p + 2 > \S$ & $\h^{\S + \s}[1 + \rho^{p+1 - \S - \s}]$ &  $\S+\s$ &  N/A \\ 
    	\hline
    	\multirow{4}{*}{Interior} 
    		& \multirow{2}{*}{even} & $p + 2 \leq  \S$ & $\h^{p+2}$   & $p+2$ & \eqref{supptable:eqn int even} \\
    			&  & $p + 2 > \S$ & $ \h^{\S + \s}   \left[  1 +   \rho^{p+1 - \S - \s}  \right] $   & $\S+\s$ &  N/A \\ 
    			\cline{2-6} 
    		& \multirow{3}{*}{odd} & $p + 3 \leq \S$ & $\h^{p+3}(1+\rho^{-2})$ &  $p+3$ &  \eqref{supptable:eqn int odd} \\
    			&  & $p + 2 = \S$ & $ \h^{p + 2 + \s}[ 1 + \rho^{-1 - \s}]$ &  $p \!+\! 2 \!+\! \s = \S \!+\! \s$ &  N/A \\ 
    			&  & $p + 2 > \S$ & $\h^{\S + \s}   \left[  1 +   \rho^{p+1 - \S - \s}  \right]$  & $\S+\s$ &  N/A \\
    	\hline
    \end{tabular}
   	\caption{Summary of Bias Terms in All Cases For Bias-Corrected Centering $\thatrbc$. Rate function $t(\h,\b)$ is such that $\btrbc = O(\sqrt{n\h}\; t(\h,\b))$. If $\rho$ is bounded and bounded away from zero then we can take $t(\h,\b) = \h^\az$. When possible, $\bctp$ is such that $\btrbc = \sqrt{n\h} \h^\az \bctrbc [1+o(1)]$.}
	\label{supptable:rbc bias list}
\end{table}

\begin{subnumcases}{\hspace{-0.6in}\bctrbc \text{ in Table \ref{supptable:rbc bias list} can be}}
    \frac{ \mu^{(p+2)} } { (p+2)! } \v! \be_\v'\Gpt^{-1} \Big\{ \Lpt_2     -    \rho^{-1} \Lpt_1 \be_{p+1}'\Gqt^{-1} \Lqt_1  \Big\},
           \label{supptable:eqn bnd} \\
    \frac{ \mu^{(p+2)} } { (p+2)! }  \v! \be_\v'\Gpt^{-1} \Lpt_2, \quad \text{or}
           \label{supptable:eqn int even} \\
    \begin{split}
        \v! \be_\v'\Gpt^{-1}  \bigg\{  \frac{ \mu^{(p+2)} } { (p+2)! } \Big[ \h^{-1} \Lpt_2     -    \rho^{-2} \b^{-1} \Lpt_1 \be_{p+1}'\Gqt^{-1} \Lqt_1  \Big] 
        \\
         +    \frac{ \mu^{(p+3)} } { (p+3)! }  \Big[ \Lpt_3     -    \rho^{-2} \Lpt_1 \be_{p+1}'\Gqt^{-1} \Lqt_2  \Big] \bigg\}  
    \end{split}
        \label{supptable:eqn int odd},
\end{subnumcases}

The starting point of all the derivations is again a Taylor approximation. We use Equation \eqref{suppeqn:Taylor lp} with different choices for the bandwidth and polynomial degree. It will be useful at times to consider the two terms of $\bctrbc$ in Equation \eqref{suppeqn:eta lp} separately, as the bandwidths $\h$ and $\b$ may be different and even vanish at different rates. The two terms represent (i) the second bias term of $\mhat_p^{(\v)}$, not targeted by bias correction, and (ii) the bias of the bias estimator. For discussion in the context of kernel-based density estimation, see \cite{Hall1992_AoS_density} and \cite{Calonico-Cattaneo-Farrell2018_JASA,Calonico-Cattaneo-Farrell2018_JASAsupp}. See the latter also for bias correction using a generic polynomial of degree $q \geq p+1$; here we maintain degree $p+1$ for bias correction throughout.

The two terms of $\bctrbc$ in Equation \eqref{suppeqn:eta lp} are separated appropriately in Equation \eqref{suppeqn:rbc bias pre-taylor}. We will resume there and apply the Taylor expansion Equation \eqref{suppeqn:Taylor lp} with $p+1$ in place of $p$ and, for the second term of  \eqref{suppeqn:rbc bias pre-taylor}, also with $\b$ in place of $\h$. Doing this, assuming for the present sufficient smoothness, and applying the definitions of $\Lp_k$ and $\Lq_k$ and their respective fixed-$n$ expectations, we have,
\begin{align}
	\E & \left[\thatrbc  \big| X_1, \ldots, X_n \right]  -  \tf   		\nonumber \\
	& = \h^{-\v} \v! \be_\v'\Gp^{-1} \frac{1}{n\h} \sumi (K \br_p)(\Xhi) \left( \mu(X_i) - \br_{p+1}(X_i - \x)'\bbeta_{p+1}\right)  		\nonumber \\
	& \quad   -   \h^{-\v}  \rho^{p+1} \v! \be_\v'\Gp^{-1} \Lp_1 \be_{p+1}'\Gq^{-1}   		  \times  \frac{1}{n\b} \sumi (K \br_{p+1})(\Xbi) \left( \mu(X_i) - \br_{p+1}(X_i - \x)'\bbeta_{p+1}\right)   			\nonumber \\
	& = \h^{-\v} \v! \be_\v'\Gp^{-1} \left( \h^{p+2} \Lp_2 \frac{ \mu^{(p+2)} } { (p+2)! } + \h^{p+3} \Lp_3 \frac{ \mu^{(p+3)} } { (p+3)! } \right) [1 + o_\P(1)]
	  		\nonumber \\
	& \quad   -   \h^{-\v}  \rho^{p+1} \v! \be_\v'\Gp^{-1} \Lp_1 \be_{p+1}'\Gq^{-1}   	\left( \b^{p+2} \Lq_1 \frac{ \mu^{(p+2)} } { (p+2)! } + \b^{p+3} \Lq_2 \frac{ \mu^{(p+3)} } { (p+3)! }  \right) [1 + o_\P(1)].	    			\nonumber \\
\intertext{Collecting terms and replacing sample averages with expectations, we arrive at}
	\begin{split}
		\label{suppeqn:rbc bias post-taylor}
		& = \h^{p+2-\v} \v! \be_\v'\Gpt^{-1} 
		\Biggl\{ 
			\frac{ \mu^{(p+2)} } { (p+2)! } \left( \Lpt_2 -   \rho^{-1}  \Lpt_1 \be_{p+1}'\Gqt^{-1}  \Lqt_1  \right)
			\\
			& \qquad\qquad\qquad\qquad\qquad + \frac{ \mu^{(p+3)} } { (p+3)! } \left(\h \Lpt_3   
			-   \rho^{-1}  \b \Lpt_1 \be_{p+1}'\Gqt^{-1} \Lqt_2  \right) 	    			
		\Biggr\}	[1 + o_\P(1)]
	\end{split}
\end{align}
This final form will serve as the starting point for the special cases that follow.

%%%%%%%%%%%%%%%%%%%%%%%%%%%%%%%%%%%%%%%%%%%%%%%%%%%%%%%%%%%%%%%%%%%%%%
\subsubsection{Boundary Point}

Here parity does not matter. Therefore we need only the first term of \eqref{suppeqn:rbc bias post-taylor}, containing $\mu^{(p+2)}$. It matters only if there is sufficient smoothness.

\bigskip\noindent{\bf Case 1: $\bm{p + 2 \leq \S}$.} The first term of \eqref{suppeqn:rbc bias post-taylor} exists and dominates others if they exist, and so
\[
	\btrbc  =   \sqrt{n\h^{1+2\v}} \h^{-\v} \h^{p+2} \frac{ \mu^{(p+2)} } { (p+2)! } \v! \be_\v'\Gpt^{-1} \Big\{ \Lpt_2     -    \rho^{-1} \Lpt_1 \be_{p+1}'\Gqt^{-1} \Lqt_1  \Big\}  \left[1 + o(1) \right].
\]

\bigskip\noindent{\bf Case 2: $\bm{p + 2 > \S}$.} In this case $\mu^{(p+2)}$ does not exist, and therefore 
\begin{align*}
	\btrbc & =\sqrt{n\h^{1+2\v}} \h^{-\v} \left( O(\h^{\S+\s}) + \rho^{p+1} O(\b^{\S+\s}) \right)   		\\
	& = O\left( \sqrt{nh} \h^{\S + \s}[1 + \rho^{p+1 - \S - \s}]\right).
\end{align*}
The final rate depends on $p$ and $\rho$ in three cases: (i) if $\rho$ is bounded and bounded away from zero, then $\rho^{p+1 - \S - \s} \asymp 1$ and $\btrbc = O\left( \sqrt{nh} \h^{\S + \s}\right)$; (ii) the same rate is obtained if $\rho = o(1)$ and $p + 1 > \S$, because, since $p\geq\S$ and $1\geq\s$, the exponent on $\rho$ is positive and, with $\rho$ bounded, $\btrbc = O\left( \sqrt{nh} \h^{\S + \s}\right)$; (iii) if $\rho = o(1)$ and $p + 1 = \S$, then the second term is $\rho^{-\s} \to \infty$, thus $\btrbc = O\left( \sqrt{nh} \h^{\S + \s} \rho^{-\s}\right)$.

%%%%%%%%%%%%%%%%%%%%%%%%%%%%%%%%%%%%%%%%%%%%%%%%%%%%%%%%%%%%%%%%%%%%%%
\subsubsection{Interior Point: $p-\v$ odd}

Cancellations due to symmetry will occur here as well, even though the initial centering uses $p-\v$ odd, because bias correction involves $p+1-\v$, which is even. Again we will have three smoothness cases, though we will find the same result for the latter two.

The analogue of Equation \eqref{suppeqn:even} for the bias correction is
\begin{equation}
	\label{suppeqn:even rbc}
	\be_\v' \Gqt^{-1} \Lqt_k = \bar{A} + \b \bar{B} + o(\b),   \text{ with  $\bar{A} = 0$  if $(p+1+k-\v)$ is odd and $\x$ is in the interior.}
\end{equation}
We will use this along with \eqref{suppeqn:even}; both matter here because $\thatrbc$ involves both $\mhat_p^{(\v)}$ and $\mhat_{p+1}^{(p+1)}$.

\bigskip\noindent{\bf Case 1: $\bm{p +3 \leq \S}$.} Starting with the formula for $\btrbc$ at the boundary given above, Equations \eqref{suppeqn:even} and \eqref{suppeqn:even rbc} yield $\be_\v'\Gpt^{-1} \Lpt_2 = O(\h)$ and $\be_{p+1}'\Gqt^{-1} \Lqt_1 = O(\b)$. Therefore, these are the same order as the appropriate ``next'' term in the expansion \eqref{suppeqn:rbc bias post-taylor}, i.e.\ one further derivative must be retained. This is possible with $p + 3 \leq \S$.

Applying this to $\btrbc$, we find that 
\begin{align*}
	\btrbc & =  \sqrt{n\h}  \h^{p+3}   \;    \v! \be_\v'\Gpt^{-1}  \bigg\{ \frac{ \mu^{(p+2)} } { (p+2)! } \Big[ \h^{-1} \Lpt_2     -    \rho^{-2} \b^{-1} \Lpt_1 \be_{p+1}'\Gqt^{-1} \Lqt_1  \Big]      		\\
		& \qquad \qquad \qquad\qquad\qquad   +   \frac{ \mu^{(p+3)} } { (p+3)! }  \Big[ \Lpt_3     -    \rho^{-2} \Lpt_1 \be_{p+1}'\Gqt^{-1} \Lqt_2  \Big] \bigg\}  \;  \left[1 + o(1) \right].
\end{align*}
Notice that rather than spell out the limiting form of $\be_\v'\Gpt^{-1} \Lpt_2$ and $\be_{p+1}'\Gqt^{-1} \Lqt_1$, that is, the $\bm{C}_2$ and $\bm{\bar{C}}_2$ above, we keep with the fixed-$n$ spirit and write $\h^{-1} \be_\v'\Gpt^{-1} \Lpt_2$ and $\b^{-1} \be_{p+1}'\Gqt^{-1} \Lqt_1$, which dual the advantages of easy implementability (using the sample, non-tilde versions) and capturing all terms.

\bigskip\noindent{\bf Case 2: $\bm{p + 2 = \S}$.} The terms above involving $\mu^{(p+3)}$ must be replaced by the $O(\h^{\S + \s})$ (or $\b^{\S + \s}$) term of \eqref{suppeqn:Taylor lp}, which if $p + 2 = \S$, leaves the exponent as $p+2 + \s$. This gives
\begin{align*}
	\btrbc & =  \sqrt{n\h}  \h^{p+3}   \;    \v! \be_\v'\Gpt^{-1}  \bigg\{ \frac{ \mu^{(p+2)} } { (p+2)! } \Big[ \h^{-1} \Lpt_2     -    \rho^{-2} \b^{-1} \Lpt_1 \be_{p+1}'\Gqt^{-1} \Lqt_1  \Big] \bigg\}     		\\
		& \qquad   +   O\left( \sqrt{n\h}  \h^{p + 2 + \s}\right)   +   O\left( \sqrt{n\h}  \rho^{p+1} \b^{p + 2 + \s} \right)  		\\
		& =  \sqrt{n\h}  \h^{p+3}   \;    \v! \be_\v'\Gpt^{-1}  \bigg\{ \frac{ \mu^{(p+2)} } { (p+2)! } \Big[ \h^{-1} \Lpt_2     -    \rho^{-2} \b^{-1} \Lpt_1 \be_{p+1}'\Gqt^{-1} \Lqt_1  \Big] \bigg\}     		\\
		& \qquad   +   O\left( \sqrt{n\h}  \h^{p + 2 + \s}[ 1 + \rho^{-1 - \s}] \right).
\end{align*}
(Note that the order of second term is equivalently $\sqrt{n\h}  \h^{\S + \s}[ 1 + \rho^{-1 - \s}]$.) Recall that $\s \in (0,1]$. Therefore the first term above is higher order unless $\s = 1$ (which is not known) and $\rho \to \bar{\rho} \in (0,\infty)$, in which case the two are of the same order. Otherwise, the second term dominates, and further, if the $\rho^{-1 - \s}$ portion is the dominant rate if $\rho = \h/\b = o(1)$ regardless of $\s$. Therefore in this case it is more clear to suppress the constants of the higher order term and write
\[
	\btrbc =  O\left( \sqrt{n\h}  \h^{p + 2 + \s}[ 1 + \rho^{-1 - \s}] \right).
\]

\bigskip\noindent{\bf Case 3: $\bm{p + 2 > \S}$.} Now the symmetry does not apply (because only when the derivatives exist do the Taylor series terms collapse to $\Lp_k$ and $\Lq_k$) and so we find that $\btrbc = O\left( \sqrt{n\h}  \left[ \h^{\S + \s}   +   \rho^{p+1} \b^{\S + \s} \right] \right) = O\left( \sqrt{n\h}  \h^{\S + \s}   \left[  1 +   \rho^{p+1 - \S - \s}  \right] \right)$.

%%%%%%%%%%%%%%%%%%%%%%%%%%%%%%%%%%%%%%%%%%%%%%%%%%%%%%%%%%%%%%%%%%%%%%
\subsubsection{Interior Point: $p-\v$ even}

\bigskip\noindent{\bf Case 1: $\bm{p +3 \leq \S}$.} The conditions for $A=0$ and $\bar{A}=0$ in Equations \eqref{suppeqn:even} and \eqref{suppeqn:even rbc} reduce to whether or not $k$ is odd, because $p -\v$ is even for the former and the latter is always applied with $\v=p+1$. Using this to add the stabilization needed to Equation \eqref{suppeqn:rbc bias post-taylor} yields
\begin{align*}
	\E & \left[\thatrbc  \big| X_1, \ldots, X_n \right]  -  \tf   		\nonumber \\
	\begin{split}
		& = \h^{p+2-\v} \v! \be_\v'\Gpt^{-1} 
		\Biggl\{ 
			\frac{ \mu^{(p+2)} } { (p+2)! } \left( \Lpt_2 -   \rho^{-1}  \h \h^{-1} \Lpt_1 \b \b^{-1}\be_{p+1}'\Gqt^{-1}  \Lqt_1  \right)
			\\
			& \qquad\qquad\qquad\qquad\qquad + \frac{ \mu^{(p+3)} } { (p+3)! } \left(\h^2 \h^{-1} \Lpt_3   
			-   \rho^{-1}  \h \h^{-1} \b \Lpt_1 \be_{p+1}'\Gqt^{-1} \Lqt_2  \right) 	    			
		\Biggr\}	[1 + o_\P(1)]
	\end{split} \\
	\begin{split}
		& = \h^{p+2-\v} \v! \be_\v'\Gpt^{-1} 
		\Biggl\{ 
			\frac{ \mu^{(p+2)} } { (p+2)! } \left( \Lpt_2 -   \b^{2} \left[\h^{-1} \Lpt_1  \b^{-1}\be_{p+1}'\Gqt^{-1}  \Lqt_1 \right] \right)
			\\
			& \qquad\qquad\qquad\qquad\qquad + \frac{ \mu^{(p+3)} } { (p+3)! } \left(\h^2 \left[\h^{-1} \Lpt_3  \right] 
			-   \b^{2} \left[\h^{-1} \Lpt_1 \be_{p+1}'\Gqt^{-1} \Lqt_2 \right]  \right) 	    			
		\Biggr\}	[1 + o_\P(1)].
	\end{split}
\end{align*}
Therefore
\[
 \btrbc  =  \sqrt{n\h} \h^{p+2}  \frac{ \mu^{(p+2)} } { (p+2)! }  \v! \be_\v'\Gpt^{-1} \Lpt_2 	[1 + o(1)].
\]
To build intuition for why this result is correct, recall that if $\rho=1$ then $\thatrbc =  \mhat_p^{(\v)}  - \h^{p+1 - \v}  \v! \be_\v'\Gp^{-1} \Lp_1 \frac{\mhat_{p+1}^{(p+1)}}{(p+1)!} = \mhat_{p+1}^{(\v)}$, that is, $\thatrbc$ is equivalent to fitting a $p+1$ degree local polynomial rather than $p$. Here we are working under $p-\v$ even and so $p+1-\v$ is odd, and so naturally we recover the standard result for odd degree local polynomials.

\bigskip\noindent{\bf Case 2: $\bm{p + 2 = \S}$.} The terms above involving $\mu^{(p+3)}$ must be replaced by the $O(\h^{\S + \s})$ (or $\b^{\S + \s}$) term of \eqref{suppeqn:Taylor lp}, which if $p + 2 = \S$, leaves the exponent as $p+2 + \s$. Thus the leading term on the right of Equation \eqref{suppeqn:rbc bias post-taylor} becomes
\[\h^{p+2-\v} \v! \be_\v'\Gpt^{-1} 
		\Biggl\{ 
			\frac{ \mu^{(p+2)} } { (p+2)! } \left( \Lpt_2 -   \rho^{-1}  \Lpt_1 \be_{p+1}'\Gqt^{-1}  \Lqt_1  \right)
			+ O(\h^{\s} + \rho^{-1} \b^{\s})   			
		\Biggr\}.
\]
The same symmetry applies as in the previous case, and therefore we still have
\[
 \btrbc  =  \sqrt{n\h} \h^{p+2}  \frac{ \mu^{(p+2)} } { (p+2)! }  \v! \be_\v'\Gpt^{-1} \Lpt_2 	[1 + o(1)].
\]

\bigskip\noindent{\bf Case 3: $\bm{p + 2 > \S}$.} Now the symmetry does not apply (because only when the derivatives exist do the Taylor series terms collapse to $\Lp_k$ and $\Lq_k$) and so we find that $\btrbc = O\left( \sqrt{n\h}  \left[ \h^{\S + \s}   +   \rho^{p+1} \b^{\S + \s} \right] \right) = O\left( \sqrt{n\h}  \h^{\S + \s}   \left[  1 +   \rho^{p+1 - \S - \s}  \right] \right)$.

%%%%%%%%%%%%%%%%%%%%%%%%%%%%%%%%%%%%%%%%%%%%%%%%%%%%%%%%%%%%%%%%%%%%%%
%%%%%%%%%%%%%%%%%%%%%%%%%%%%%%%%%%%%%%%%%%%%%%%%%%%%%%%%%%%%%%%%%%%%%%
\section{Notes on Alternative Standard Errors}
	\label{supp:other standard errors}

The proofs above are based on specific standard errors. In particular, we use the fixed-$n$ form of the variance from Equation \eqref{suppeqn:variance lp}, namely
\[ \sp^2 =  \v!^2 \be_\v'\Gp^{-1} (\h \Op \bS \Op' /n) \Gp^{-1} \be_\v,\]
and estimate $\bS$ using regression residuals, $\Shatp = \diag(\hat{v}(X_i): i = 1,\ldots, n)$, with $\hat{v}(X_i) = ( Y_i - \br_p(X_i - \x)'\bhat_p )^2$ for $\bhat_p$ defined in Equation \eqref{suppeqn:lp}. This is the HC0 variance estimator. We discuss two types of alternatives here: (i) different estimators of essentially the same fixed-$n$ object and (ii) different population standardizations altogether. If other standard errors are used, the results may change. The type and severity of the change will depend on the choice of standard error. In particular, the coverage error rate can be slower, but not faster. This is because the Studentization and standardization do not affect the rate of any term besides the $\lambda_{\i,\f} \w_{3,\i,\f}$ term, and thus $\lambda_{\i,\f} \equiv 0$ is the most that can be accomplished through variance estimation.

Within the fixed-$n$ form, we consider two alternative estimators of (essentially) the conditional variances of Equation \eqref{suppeqn:variance lp}: the HC$k$ class estimators and nearest-neighbor based estimators.

First, motivated by the fact that the least-squares residuals are on average too small, we could implement one of the HC$k$ class of heteroskedasticity-consistent standard errors \citep{MacKinnon2013_BookChap} beyond HC0. In particular, HC0, HC1, HC2, and HC3 are allowed in the {\tt nprobust} package \citep{Calonico-Cattaneo-Farrell2019_JSS}. These are defined as follows. First, $\shatp^2$ (and $\shatrbc^2$) defined above and treated in the proofs is the HC0 estimator, employing the estimated residuals unweighted: $\hat{\e}_i^2  =  \hat{v}(X_i)  =  ( Y_i - \br_p(X_i - \x)'\bhat_p )^2$. Then, for $k=1, 2, 3$, the $\shatp^2$-HC$k$ estimator is obtained by dividing $\hat{\e}_i^2$ by, respectively, $(n-2\trace(\bm{Q}_p)+\trace(\bm{Q}_p'\bm{Q}_p))/n$, $(1-\bm{Q}_{p,ii})$, and $(1-\bm{Q}_{p,ii})^2$, where $\bm{Q}_{p,ii}$ is the $i$-th diagonal element of the projection matrix $\bm{Q}_p:= \bRc'(\bRc' \bW \bRc)^{-1}\bRc' \bW = \bRc'\Gp^{-1} \Op/n$. The corresponding estimators $\shatrbc^2$-HC$k$ are the same way, substituting the appropriate pieces. 

These estimators may perform better in small samples, a conjecture backed by simulation studies elsewhere. Adapting the proofs to allow for HC1, HC2, and HC3 would be notationally extremely cumbersome, but is conceptually straightforward. The building block of each is the matrix $\bm{Q}_p$, which is almost already a function of $\bZ_i$ from \eqref{suppeqn:bZ}; it is not difficult to see that Cram\'er's condition is plausible for this object. It is important to note that the rates in the expansion would not change, only the constants (through the terms of \eqref{suppeqn:shat terms}). 

A second option, still using the fixed-$n$ form and also designed to improve upon the least squares residuals, is to use a nearest-neighbor-based estimator with a fixed number of neighbors \citep{Muller-Stadtmuller1987_AoS}. This is also allowed in our software \citep{Calonico-Cattaneo-Farrell2019_JSS}. For a fixed, positive integer $J$, let $X_{j(i)}$ denote the $j$-th closest observation to $X_i$, $j=1, \ldots, J$. Set $\hat{v}(X_i) = \frac{J}{J+1} ( Y_i  -  \sum_{j=1}^J Y_{j(i)} / J )^2$. This estimate is unbiased for $v(X_i)$, and although $\hat{v}(\cdot)$ is inconsistent, the resulting $\shatp^2 =  \v!^2 \be_\v'\Gp^{-1} (\h \Op \Shat_{NN} \Op' /n) \Gp^{-1} \be_\v$ provides valid Studentization (as would the analogous $\shatrbc^2$). This approach, however, falls outside our proofs. Lemma \ref{supplem:Cramer lp} would not verify Cram\'er's condition for this estimator. A modified approach to verifying condition ($\text{III}''_\alpha$) of \cite{Skovgaard1981_SJS} would be required and Assumption \ref{suppasmpt:ci lp} would not be sufficient.

Finally, as discussed above, on may use a different form of standardization altogether. As argued in the main text and above, using variance forms other than \eqref{suppeqn:variance lp} can be detrimental to coverage by injecting terms with $\lambda_{\i,\f} \neq 0$. Examples were given in Section \ref{supp:terms lp} and discussed further in the main paper. The most common option would be to employ the asymptotic approximation to the conditional variance:
\[ \sigma^2 \to_\P \frac{v(\x)}{f(\x)} \mathcal{V},\]
where $f(\cdot)$ is the marginal density of $X$ and $\mathcal{V}$ is a known constant depending only on the equivalent kernel (and thus $\mathcal{V}_\pp$ and $\mathcal{V}_\RBC$ would be different); see \cite[Theorem 3.1]{Fan-Gijbels1996_book}. Estimating this quantity requires estimating the conditional variance function and the (inverse of the) density at a single point, the point of interest $\x$. If both of these are based on kernel methods using the same kernel and bandwidth $\h$, then Theorem \ref{suppthm:EE lp} allows for this choice. It is clear that the expansion of the Studentization, Equation \eqref{suppeqn:shat terms}, will change dramatically, as will the elements of $\bZ_i$. However, the latter change will be relatively innocuous as far as the proof is concerned, because Lemma \ref{supplem:Cramer lp} covers the objects already. But the change to Equation \eqref{suppeqn:shat terms} will result in additional terms, with potentially slower rates, appearing the Edgeworth expansion. See the discussion in Section \ref{supp:terms lp}.

There are certainly many other options for (first-order) valid Studentization. Other population choices include (i) using $\hat{v}(X_i) = ( Y_i - \hat{m}(\x) )^2$; (ii) using local or assuming global heteroskedasticity; (iii) using other nonparametric estimators for $v(X_i)$, relying on new tuning parameters. None of these can be recommended based on our results. As above, some can be accommodated into our proof more or less directly, depending on the implementation details.

%%%%%%%%%%%%%%%%%%%%%%%%%%%%%%%%%%%%%%%%%%%%%%%%%%%%%%%%%%%%%%%%%%%%%%
%%%%%%%%%%%%%%%%%%%%%%%%%%%%%%%%%%%%%%%%%%%%%%%%%%%%%%%%%%%%%%%%%%%%%%
\section{Check Function Loss}
	\label{supp:check function}

In the main text, it was pointed out that coverage error can be measured by the check function loss:
\[
	\sup_{\f \in \F_\S} \mathcal{L}\Big( \P_\f [ \tf \in \i ] - (1-\alpha)  \Big) , \qquad \quad \mathcal{L}(e) =\mathcal{L}_\tau(e)=e\left(\tau - \One\{e<0\}\right)
\]	
Using the check function loss allows the researcher, through their choice of $\tau$, to evaluate inference procedures according to their preferences against over- and under-coverage. Setting $\tau = 1/2$ recovers the above, symmetric measure of coverage error. Guarding more against undercoverage (a preference for conservative intervals) requires choosing a $\tau < 1/2$. For example, setting $\tau = 1/3$ encodes the belief that undercoverage is twice as bad as the same amount of overcoverage. 

Using this loss will affect the constants of the optimal bandwidths and kernels (dependent on how these are optimized, such as for length, coverage error, or trading these off) but the rates will not be impacted. This is due to standard properties of the check function, which, for completeness, we spell out in the following result.

\begin{lemma}
	\label{supplem:check function}
	$\mathcal{L}(e) = e\left(\tau - \One\{e<0\}\right)$ obeys:
	\begin{enumerate}
		\item $\mathcal{L}(ae) = a\mathcal{L}(e)$ for $a > 0$,
		\item $\mathcal{L}(e) \leq (\tau + 1) |e|$, and 
		\item $\mathcal{L}(e_1 + e_2) \leq \mathcal{L}(e_1) + \mathcal{L}(e_2)$ for $a > 0$.
	\end{enumerate}
\end{lemma}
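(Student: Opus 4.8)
The plan is to verify each of the three properties directly from the definition $\mathcal{L}(e) = e(\tau - \One\{e < 0\})$, treating the cases $e \geq 0$ and $e < 0$ separately where needed. All three are elementary facts about the check (pinball) loss function, so the proof is a short case analysis; I expect no genuine obstacle, though part (c), subadditivity, requires the most care.

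For part (a): fix $a > 0$. Since $a > 0$, the sign of $ae$ agrees with the sign of $e$, so $\One\{ae < 0\} = \One\{e < 0\}$, and therefore $\mathcal{L}(ae) = ae(\tau - \One\{ae < 0\}) = ae(\tau - \One\{e < 0\}) = a\mathcal{L}(e)$.

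For part (b): if $e \geq 0$ then $\mathcal{L}(e) = \tau e = \tau |e| \leq (\tau + 1)|e|$; if $e < 0$ then $\mathcal{L}(e) = e(\tau - 1) = |e|(1 - \tau) \leq (\tau + 1)|e|$, using $1 - \tau \leq 1 + \tau$ since $\tau \geq 0$ (indeed $\tau \in (0,1)$ is the relevant range). So in both cases $\mathcal{L}(e) \leq (\tau + 1)|e|$.

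For part (c): the cleanest route is to observe that $\mathcal{L}$ is convex (it is a maximum of the two linear functions $e \mapsto \tau e$ and $e \mapsto (\tau - 1)e$, since $\tau e \geq (\tau-1)e$ exactly when $e \geq 0$) and positively homogeneous of degree one by part (a). A convex, positively homogeneous function is subadditive: $\mathcal{L}(e_1 + e_2) = 2\mathcal{L}\big(\tfrac{1}{2}e_1 + \tfrac{1}{2}e_2\big) \leq 2\big(\tfrac{1}{2}\mathcal{L}(e_1) + \tfrac{1}{2}\mathcal{L}(e_2)\big) = \mathcal{L}(e_1) + \mathcal{L}(e_2)$, where the inequality is convexity and the outer equalities use part (a). Alternatively, one can do the brute-force case split on the signs of $e_1$, $e_2$, and $e_1 + e_2$ and check the inequality in each of the (essentially four) cases; this is routine but longer. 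The stray ``for $a > 0$'' in the statement of part (c) is a typo and plays no role. The only mild subtlety is making sure the convexity claim is stated correctly, i.e.\ that $\mathcal{L}(e) = \max\{\tau e, (\tau-1)e\}$, which is immediate from the definition; once that is in hand, the subadditivity is a one-line consequence and is the ``hardest'' part only in the sense that it is the least mechanical.
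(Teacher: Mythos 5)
Your proof is correct; parts (a) and (b) match the paper's argument, but your part (c) takes a genuinely different route. The paper proves subadditivity by brute force: it writes $\mathcal{L}(e_1+e_2) = \mathcal{L}(e_1) + \mathcal{L}(e_2) + \bigl[e_1\One\{e_1<0\} + e_2\One\{e_2<0\} - (e_1+e_2)\One\{e_1+e_2<0\}\bigr]$ and checks in four sign cases that the bracketed remainder is nonpositive. You instead observe that $\mathcal{L}(e) = \max\{\tau e, (\tau-1)e\}$ is convex and invoke positive homogeneity from part (a) to get $\mathcal{L}(e_1+e_2) = 2\mathcal{L}\bigl(\tfrac12 e_1 + \tfrac12 e_2\bigr) \leq \mathcal{L}(e_1) + \mathcal{L}(e_2)$. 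Your argument is shorter and more conceptual, and it makes clear that the result is an instance of the general fact that sublinear functions are exactly the convex positively homogeneous ones; the paper's case split is more elementary and self-contained but longer. Both are complete and correct (your identification $\mathcal{L}(e)=\max\{\tau e,(\tau-1)e\}$ does use $\tau\in[0,1]$, which holds in the paper's setting), and you are right that the ``for $a>0$'' in part (c) of the statement is a stray typo.
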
 
\begin{proof}
The first property follows because $\mathcal{L}(ae) = (ae)\left(\tau - \One\{(ae)<0\}\right)$ and, as $a > 0$, $\One\{(ae)<0\} = \One\{e<0\}$. The second uses the obvious bounds. The third, the triangle inequality, holds as follows. 
\begin{align*}
	\mathcal{L}(e_1 + e_2) & = (e_1 + e_2) \left(\tau - \One\{(e_1 + e_2)<0\}\right)   			\\
	& = e_1 \left(\tau - \One\{e_1<0\}\right) + e_2 \left(\tau - \One\{e_2<0\}\right)   			\\
	& \quad + e_1 \One\{e_1<0\}  + e_2 \One\{e_2<0\} - (e_1 + e_2)\One\{(e_1 + e_2)<0\}.
\end{align*}
In the second equality, the first line is exactly $\mathcal{L}(e_1) + \mathcal{L}(e_2)$. The second line is nonpositive. To this, consider four cases. (1) If $e_1 \geq 0$ and $e_2 \geq 0$, then all the indicators are zero and the second line is zero. (2) If $e_1 < 0$ and $e_2 < 0$, then all the indicators are one and the second line is $e_1 + e_2 - (e_1 + e_2)$ and is again zero. (3) If $e_1 \geq 0$, $e_2 < 0$, and $e_1 \geq |e_2|$, then $\One\{e_1<0\} = \One\{(e_1 + e_2)<0\} = 0$, and the second line is $e_2 < 0$. (4)  If $e_1 \geq 0$, $e_2 < 0$, and $e_1 < |e_2|$, then $\One\{e_2<0\} = \One\{(e_1 + e_2)<0\} = 1$, and the second line is $e_2 - (e_1 + e_2) = - e_1 < 0$. 
\end{proof}

%%%%%%%%%%%%%%%%%%%%%%%%%%%%%%%%%%%
%%%%%%%%%%%%%%%%%%%%%%%%%%%%%%%%%%%
\section{Simulation Results and Numerical Details}
	\label{supp:numerical}

%%%%%%%%%%%%%%%%%%%%%%%%%%%%%%%%%%%
\subsection{Simulation Study}

In this section we present the complete results from our simulation study addressing the finite-sample performance of the methods described in the main paper. All results are qualitatively consistent with the main theoretical results of our paper.

We study model \eqref{suppeqn:model} with $X_i$ uniformly distributed on $[-1,1]$, $\e$ distributed independently standard normal, and  
\[
	\mu(x) = \frac{ \sin(3\pi x/2 ) }{ 1+18x^2(\sign(x) +1)   },
\]
where $\sign(x)-1,0$ or $-1$ according to $x>0$, $x=0$ or $x<0$, respectively.

We consider $5,000$ simulation replications, where for each replication we generate data as i.i.d. draws of size $n=\{100,250,500,750,1000,2000\}$. The point of evaluation is one of six equally spaced evaluation points $\x\in\{-1,-0.6,-0.2,0.2,0.6,1\}$ using the Epanechnikov and Uniform kernel, setting $p=1$ (for $\v=0$) and $p=2$ (for $\v=1$). 
Finally, we evaluate the performance of the confidence intervals using several bandwidth choices. First, we use $\hat{h}_\RBC$, a data-driven version of the inference-optimal bandwidth $h_\RBC$. We also consider the analogous version for undersmoothing confidence intervals, $\hat{h}_\US$, and the standard choice in practice, $\hat{h}_\MSE$.
In all cases, robust bias correction is implemented using $\rho=\rho^*$.

We report empirical coverage probabilities and average interval length of nominal 95\% confidence interval for $\mu(\x)$ and $\mu^{(1)}(\x)$ based on robust bias correction and undersmoothing.

First, in Figures \ref{suppfig:ec_nu0_epa_hrbc}, \ref{suppfig:ec_nu0_epa_hus}, and \ref{suppfig:ec_nu0_epa_hmse} we present empirical coverage probabilities for $\v=0$ using the Epanechnikov kernel for each evaluation point and choice of bandwidth selector, as a function on the different sample sizes considered.  
Overall, we can see that robust bias correction yields close to accurate coverage, improving over undersmoothing in almost every case. Performance is highly superior at points where the functions present high curvature and also at the boundary. Performance is never worse even when the function is quite linear.
We obtain similar findings when looking at the results for $\v=1$ in Figure \ref{suppfig:ec_nu1_epa_hrbc}, \ref{suppfig:ec_nu1_epa_hus}, and \ref{suppfig:ec_nu1_epa_hmse}, where robust bias correction outperforms undersmoothing even more.

We compare confidence interval performance in terms of length, taking coverage into account by looking at RBC and US confidence intervals implemented with their corresponding coverage error optimal bandwidth choices ($\hat{h}_\RBC$ and $\hat{h}_\US$, respectively), which is when they perform best in terms of coverage. We also include other valid, but non optimal choices $\irbc(\hat{h}_{\MSE})$,  $\irbc(\hat{h}_{\US})$. Figures \ref{suppfig:il_nu0_epa} and \ref{suppfig:il_nu1_epa} present the results for $\v=0$ and $\v=1$, respectively, using the Epanechnikov kernel. We find that, in most cases, RBC confidence intervals are, on average, not larger than US, and sometimes even shorter.
Finally, we report the average (over simulations) of the estimated bandwidths in Figures \ref{suppfig:h_nu0_epa} and \ref{suppfig:h_nu1_epa}. 

All the information used to generate the plots can be found in Tables \ref{supptable:ec_nu0_epa} and \ref{supptable:ec_nu1_epa} (for coverage probabilities), and \ref{supptable:il_nu0_epa} and \ref{supptable:il_nu1_epa} (for average length). 
We find similar results for the performance of RBC and US confidence intervals when using the Uniform kernel, as shown in the remaining figures and tables, corresponding exactly to those for the Epanechnikov kernel.

%%%%%%%%%%%%%%%%%%%%%%%% EC EPA

%%% h_RBC
\clearpage

\begin{figure}[!htb]
	\centering
	\caption{Empirical Coverage for 95\% Confidence Intervals\\ Epanechnikov Kernel, $\hat{h}_{\RBC}$, $\v=0$}
	\label{suppfig:ec_nu0_epa_hrbc}
	\begin{subfigure}[b]{0.5\textwidth}
		\includegraphics[height=0.3\textheight,width=0.95\textwidth]{simuls/output/ce_hrbc_kepa_p1_d0_x1.pdf}
		\subcaption{$\x=-1$}
	\end{subfigure}%
	\begin{subfigure}[b]{0.5\textwidth}
		\includegraphics[height=0.3\textheight,width=0.95\textwidth]{simuls/output/ce_hrbc_kepa_p1_d0_x2.pdf}
		\subcaption{$\x=-0.6$}
	\end{subfigure}	
	\begin{subfigure}[b]{0.5\textwidth}
	\includegraphics[height=0.3\textheight,width=0.95\textwidth]{simuls/output/ce_hrbc_kepa_p1_d0_x3.pdf}
	\subcaption{$\x=-0.2$}
\end{subfigure}%	
	\begin{subfigure}[b]{0.5\textwidth}
		\includegraphics[height=0.3\textheight,width=0.95\textwidth]{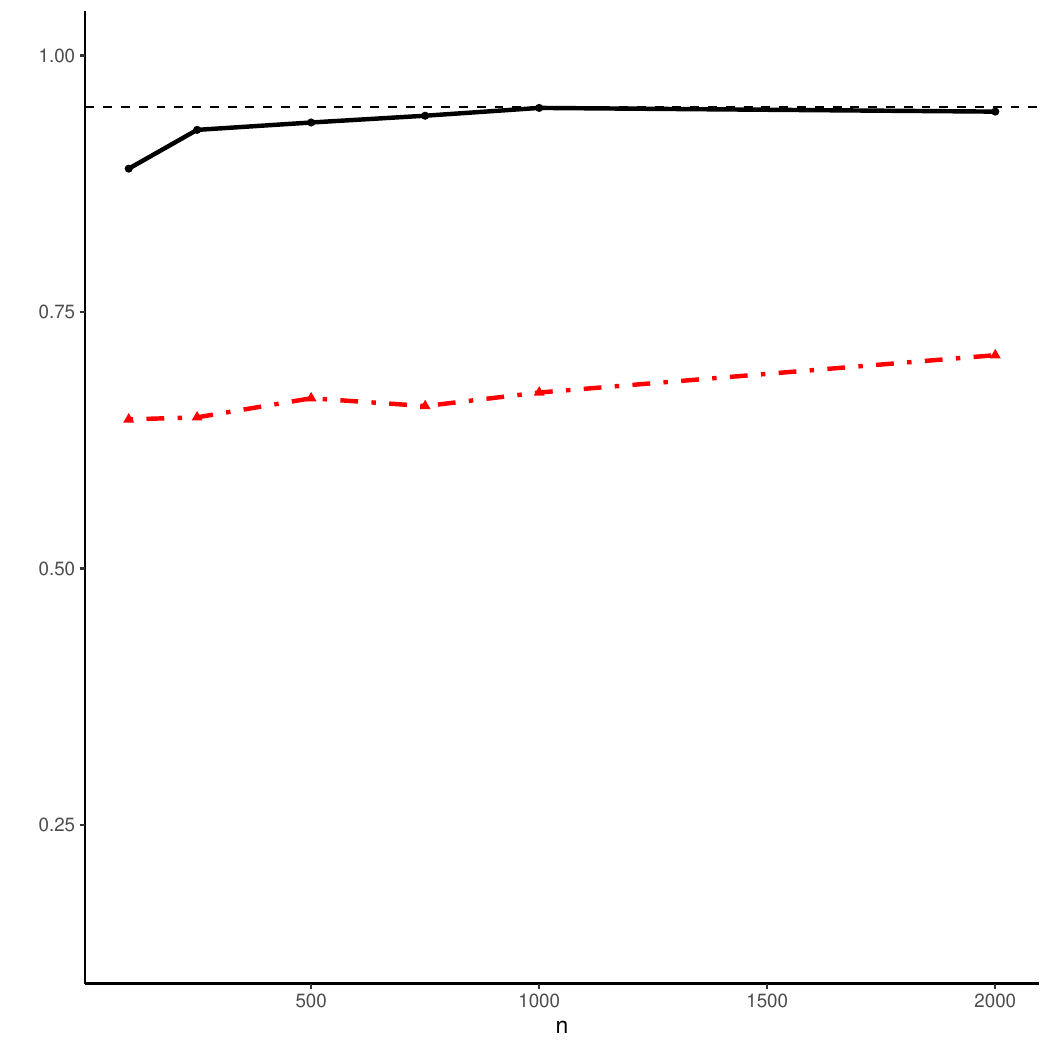}
		\subcaption{$\x=0.2$}
	\end{subfigure}	
	\begin{subfigure}[b]{0.5\textwidth}
		\includegraphics[height=0.3\textheight,width=0.95\textwidth]{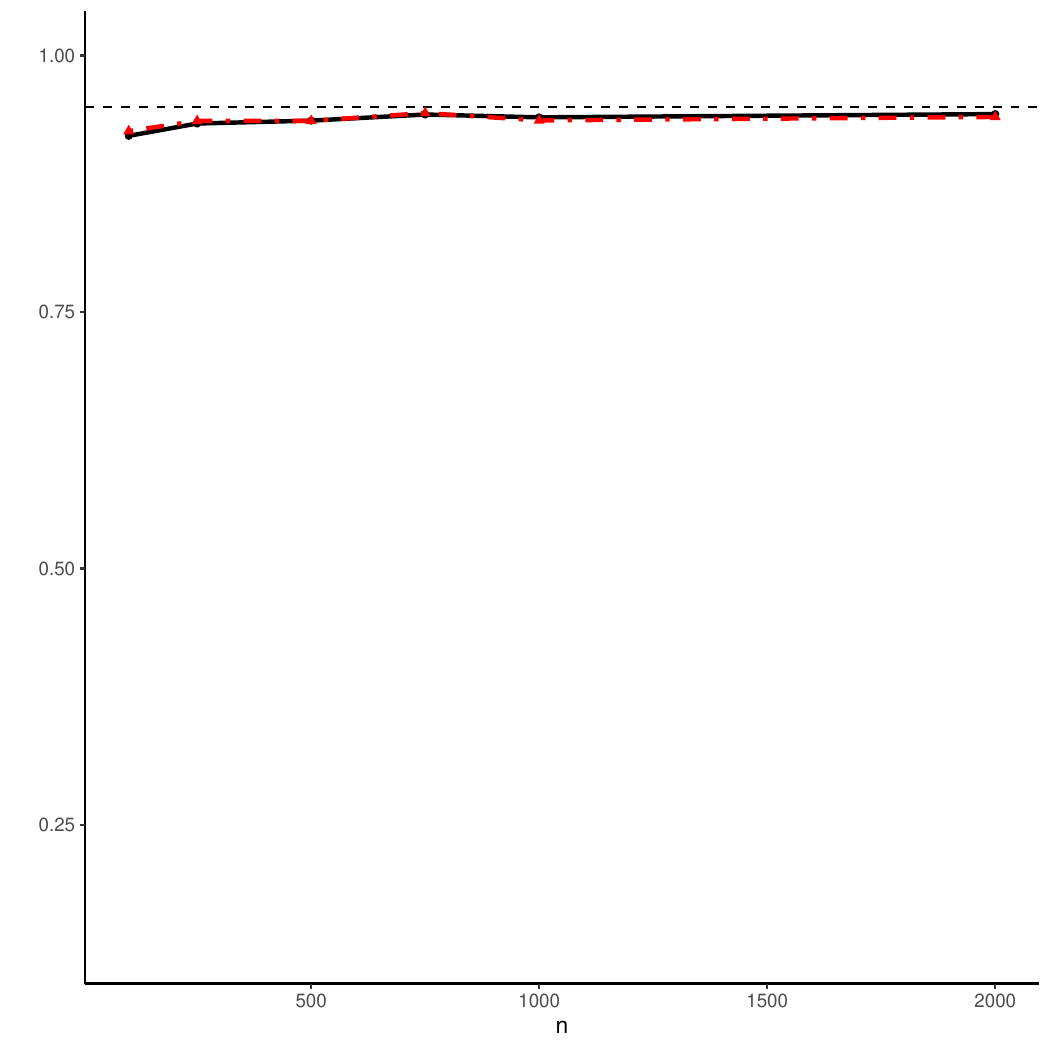}
		\subcaption{$\x=0.6$}
	\end{subfigure}%
	\begin{subfigure}[b]{0.5\textwidth}
	\includegraphics[height=0.3\textheight,width=0.95\textwidth]{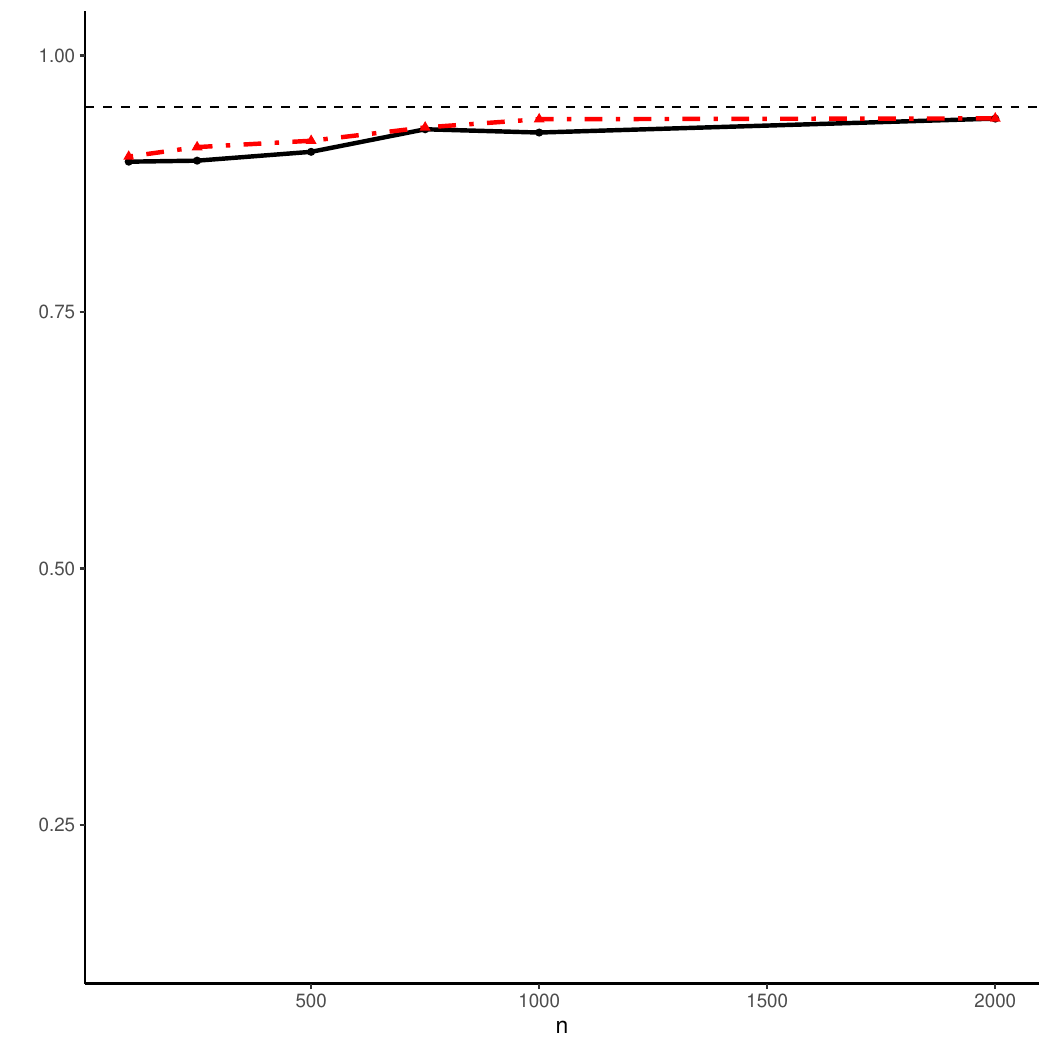}
	\subcaption{$\x=1$}
\end{subfigure}%
	\begin{flushleft}\footnotesize Notes: \blackline Robust Bias Correction, \redline Undersmoothing
	\end{flushleft}
\end{figure}

\clearpage
\begin{figure}[!htb]
	\centering
	\caption{Empirical Coverage for 95\% Confidence Intervals\\
Epanechnikov Kernel, $\hat{h}_{\RBC}$, $\v=1$}
	 \label{suppfig:ec_nu1_epa_hrbc}	
	\begin{subfigure}[b]{0.5\textwidth}
		\includegraphics[height=0.3\textheight,width=0.95\textwidth]{simuls/output/ce_hrbc_kepa_p2_d1_x1.pdf}
		\subcaption{$\x=-1$}
	\end{subfigure}%
	\begin{subfigure}[b]{0.5\textwidth}
		\includegraphics[height=0.3\textheight,width=0.95\textwidth]{simuls/output/ce_hrbc_kepa_p2_d1_x2.pdf}
		\subcaption{$\x=-0.6$}
	\end{subfigure}	
	\begin{subfigure}[b]{0.5\textwidth}
		\includegraphics[height=0.3\textheight,width=0.95\textwidth]{simuls/output/ce_hrbc_kepa_p2_d1_x3.pdf}
		\subcaption{$\x=-0.2$}
	\end{subfigure}%	
	\begin{subfigure}[b]{0.5\textwidth}
		\includegraphics[height=0.3\textheight,width=0.95\textwidth]{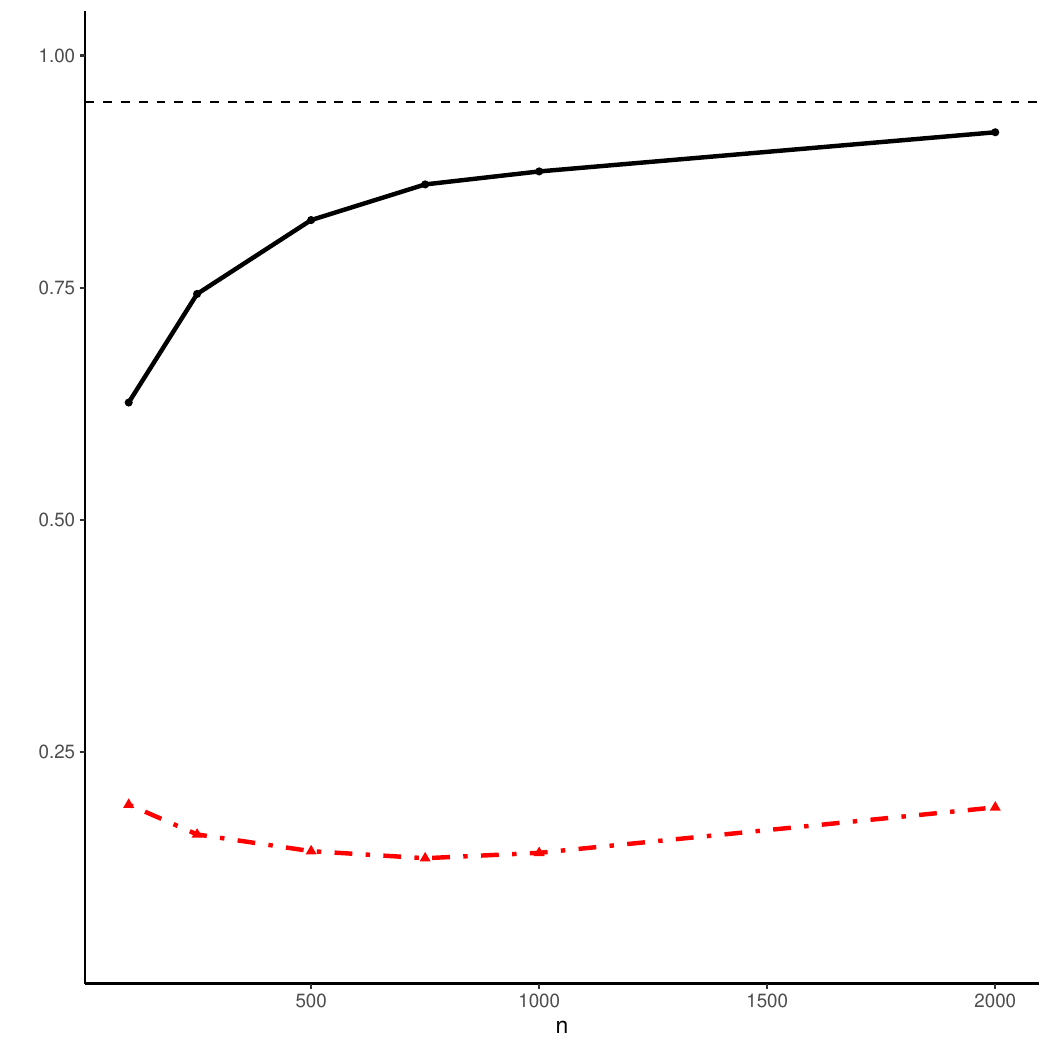}
		\subcaption{$\x=0.2$}
	\end{subfigure}	
	\begin{subfigure}[b]{0.5\textwidth}
		\includegraphics[height=0.3\textheight,width=0.95\textwidth]{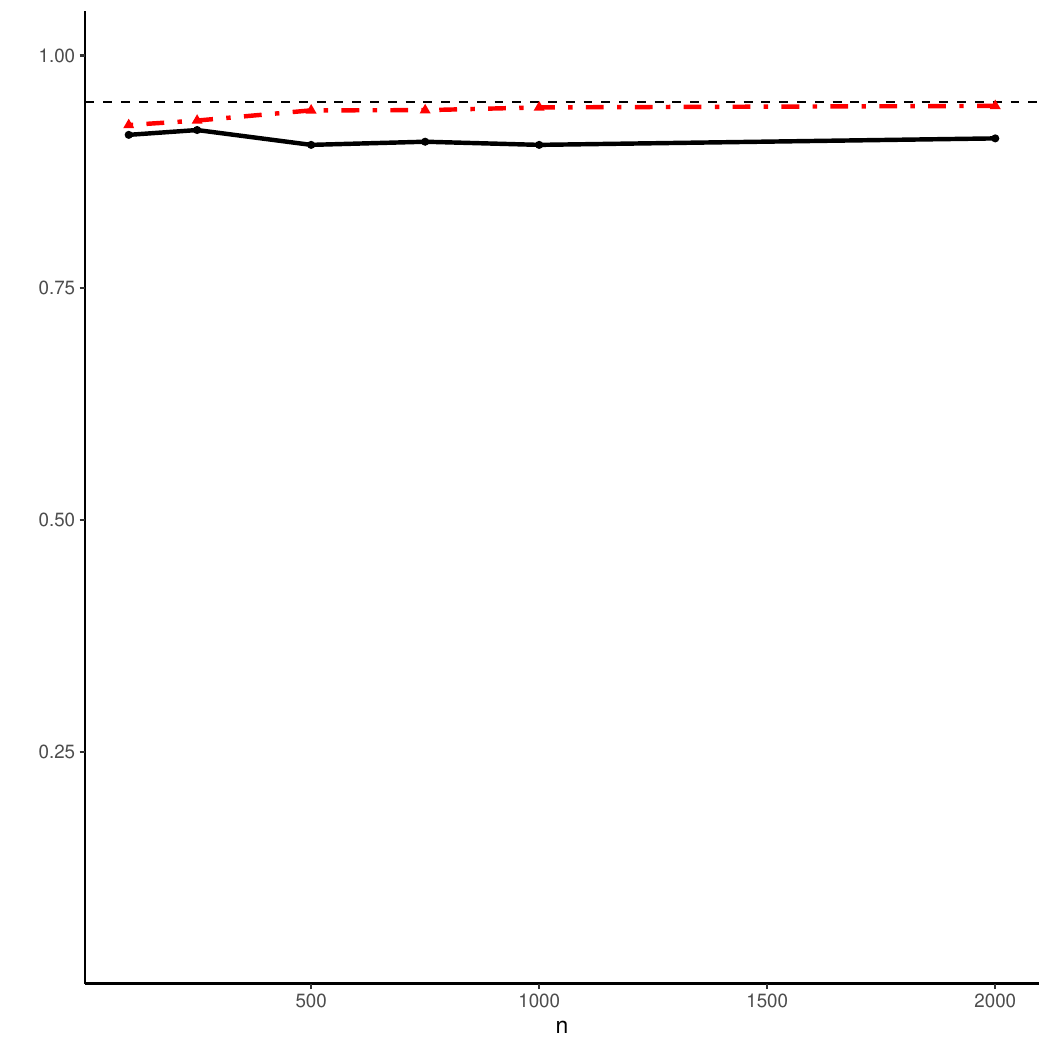}
		\subcaption{$\x=0.6$}
	\end{subfigure}%
	\begin{subfigure}[b]{0.5\textwidth}
		\includegraphics[height=0.3\textheight,width=0.95\textwidth]{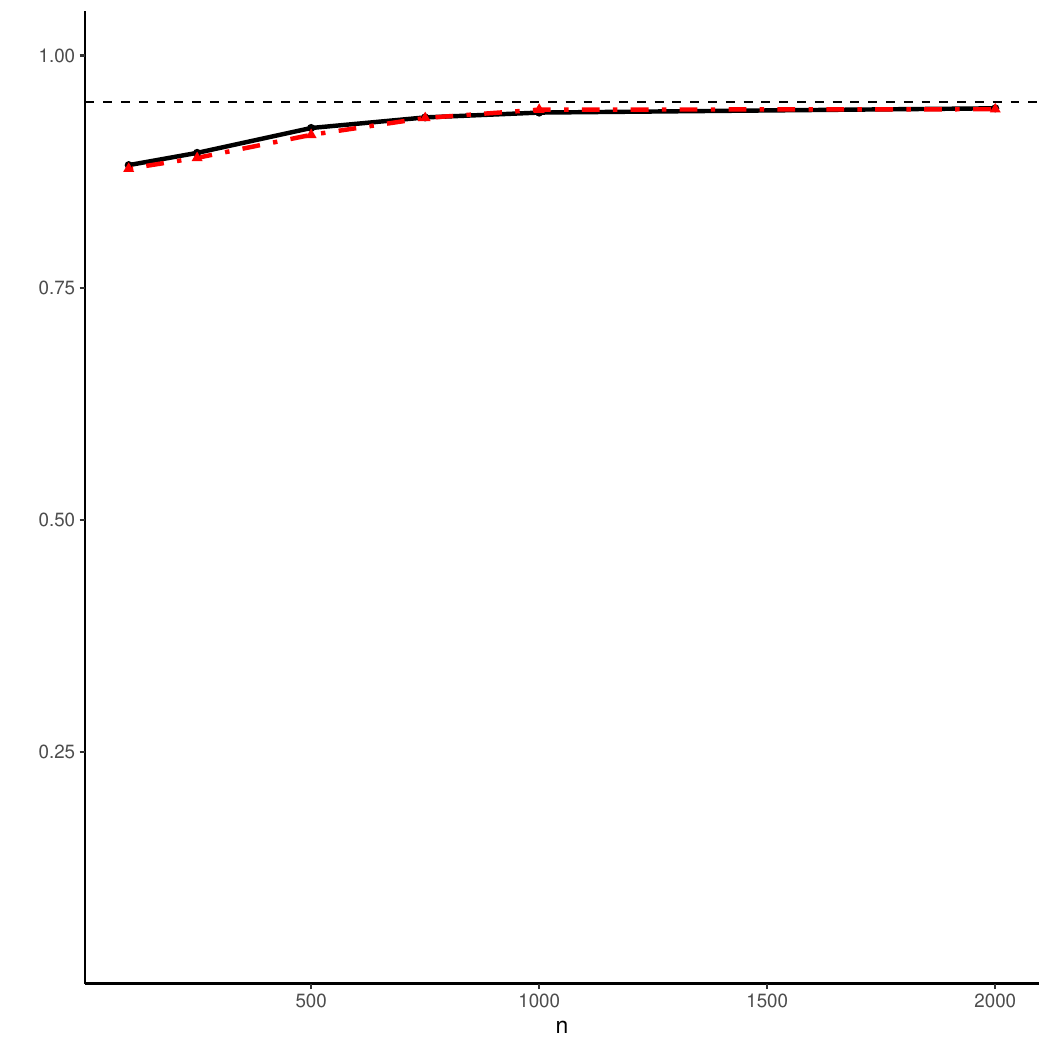}
		\subcaption{$\x=1$}
	\end{subfigure}%
	\begin{flushleft}\footnotesize Notes: \blackline Robust Bias Correction, \redline Undersmoothing
	\end{flushleft}
\end{figure}

%%% h_US
\clearpage
\begin{figure}[!htb]	
	\centering
	\caption{Empirical Coverage for 95\% Confidence Intervals\\
		Epanechnikov Kernel, $\hat{h}_{\US}$, $\v=0$}
\label{suppfig:ec_nu0_epa_hus}	
	\begin{subfigure}[b]{0.5\textwidth}
		\includegraphics[height=0.3\textheight,width=0.95\textwidth]{simuls/output/ce_hus_kepa_p1_d0_x1.pdf}
		\subcaption{$\x=-1$}
	\end{subfigure}%
	\begin{subfigure}[b]{0.5\textwidth}
		\includegraphics[height=0.3\textheight,width=0.95\textwidth]{simuls/output/ce_hus_kepa_p1_d0_x2.pdf}
		\subcaption{$\x=-0.6$}
	\end{subfigure}	
	\begin{subfigure}[b]{0.5\textwidth}
		\includegraphics[height=0.3\textheight,width=0.95\textwidth]{simuls/output/ce_hus_kepa_p1_d0_x3.pdf}
		\subcaption{$\x=-0.2$}
	\end{subfigure}%	
	\begin{subfigure}[b]{0.5\textwidth}
		\includegraphics[height=0.3\textheight,width=0.95\textwidth]{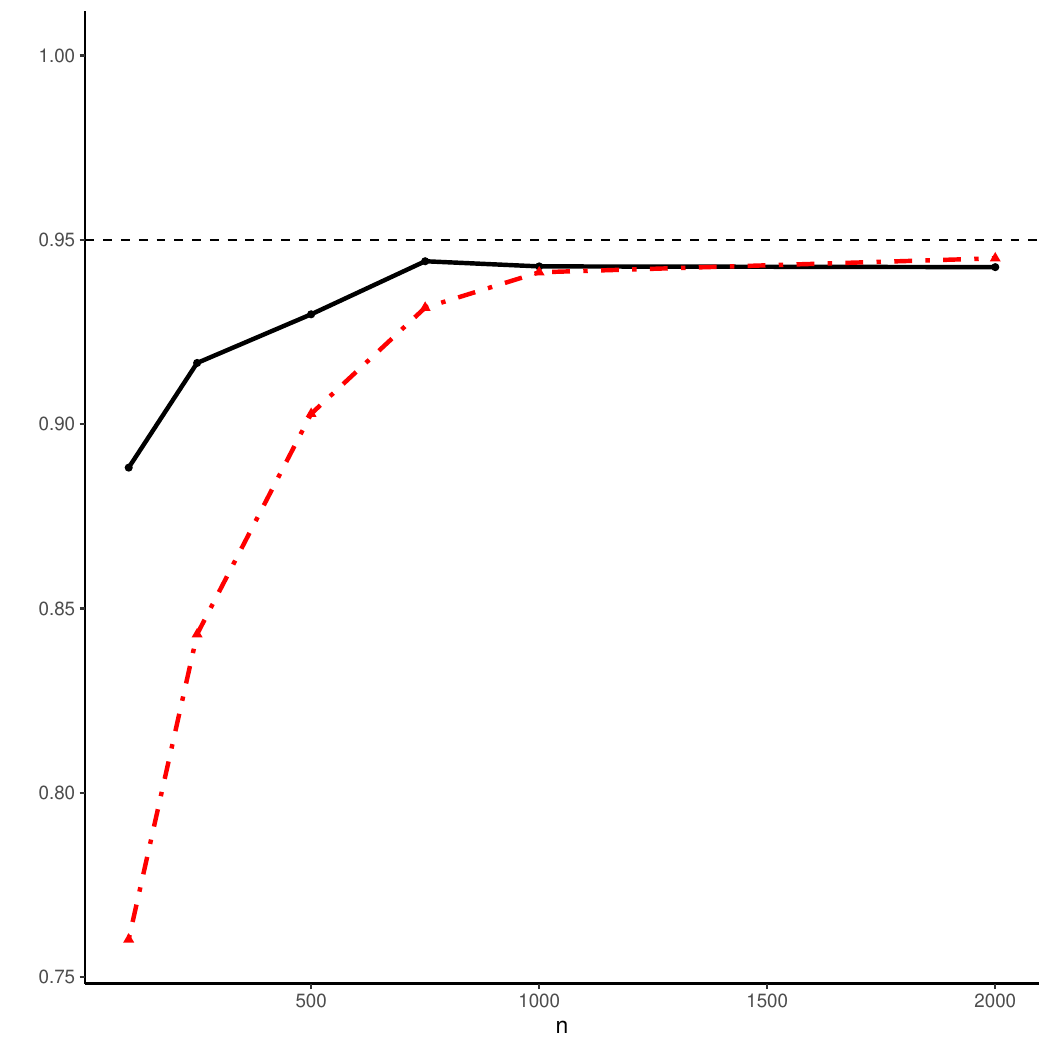}
		\subcaption{$\x=0.2$}
	\end{subfigure}	
	\begin{subfigure}[b]{0.5\textwidth}
		\includegraphics[height=0.3\textheight,width=0.95\textwidth]{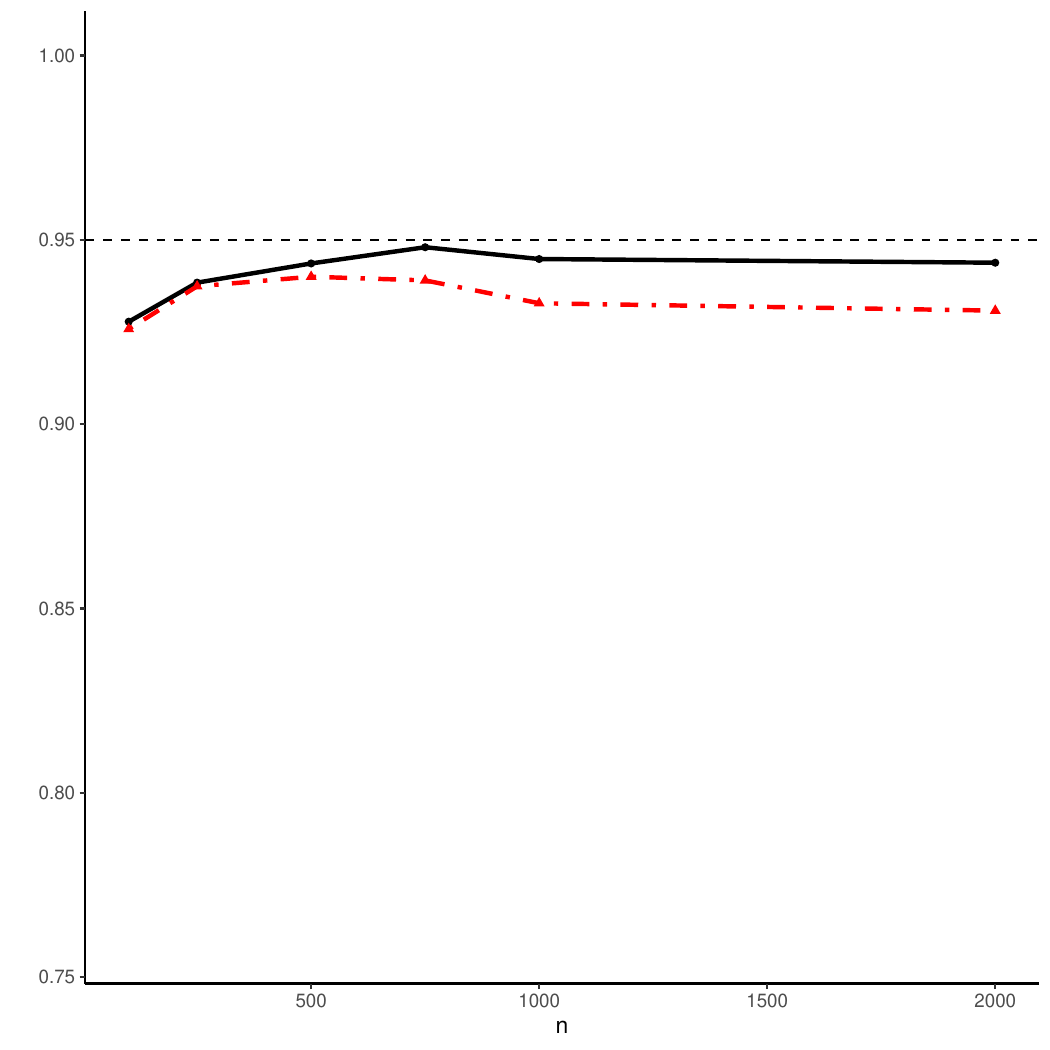}
		\subcaption{$\x=0.6$}
	\end{subfigure}%
	\begin{subfigure}[b]{0.5\textwidth}
		\includegraphics[height=0.3\textheight,width=0.95\textwidth]{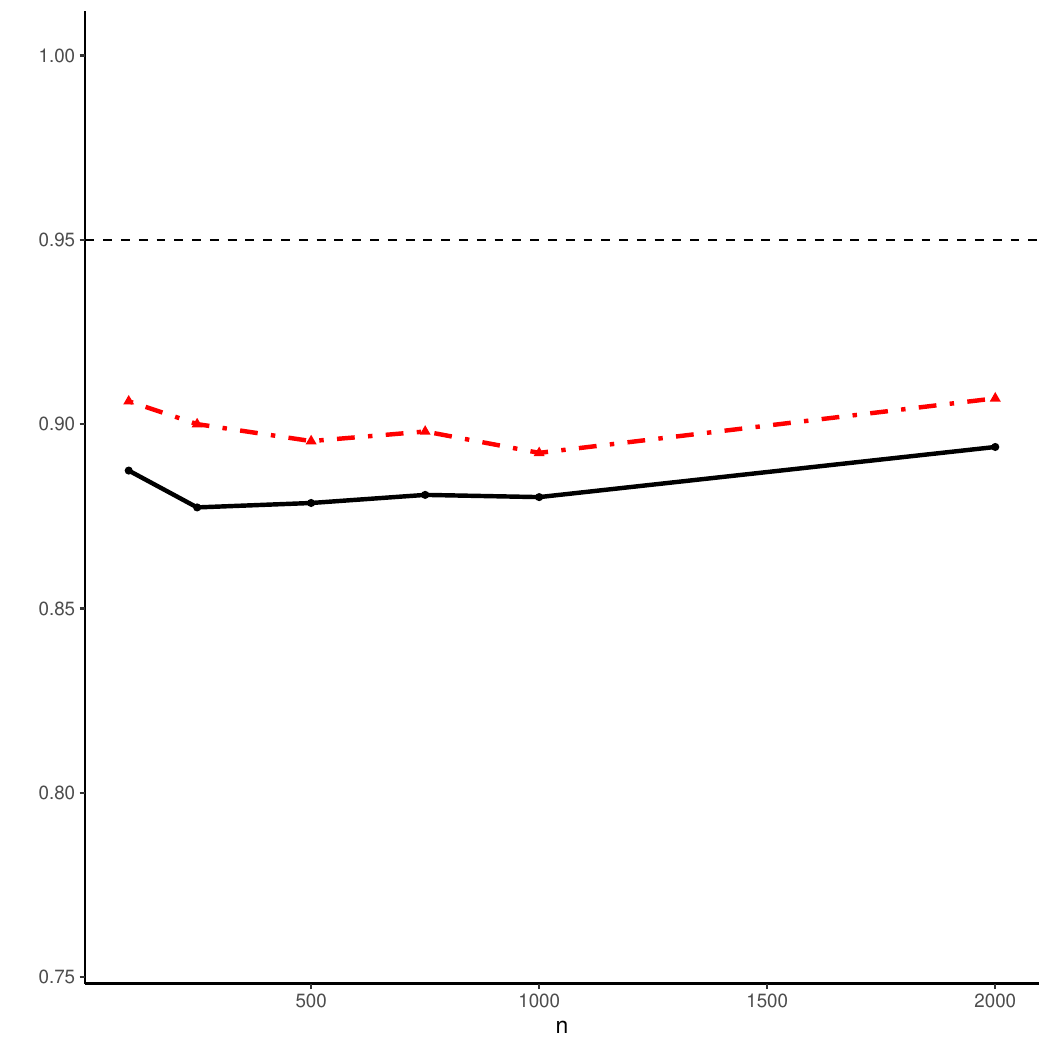}
		\subcaption{$\x=1$}
	\end{subfigure}%
	\begin{flushleft}\footnotesize Notes: \blackline Robust Bias Correction, \redline Undersmoothing
	\end{flushleft}
\end{figure}

\clearpage
\begin{figure}[!htb]
	\centering
	\caption{Empirical Coverage for 95\% Confidence Intervals\\
	Epanechnikov Kernel, $\hat{h}_{\US}$, $\v=1$}
		\label{suppfig:ec_nu1_epa_hus}	
	\begin{subfigure}[b]{0.5\textwidth}
		\includegraphics[height=0.3\textheight,width=0.95\textwidth]{simuls/output/ce_hus_kepa_p2_d1_x1.pdf}
		\subcaption{$\x=-1$}
	\end{subfigure}%
	\begin{subfigure}[b]{0.5\textwidth}
		\includegraphics[height=0.3\textheight,width=0.95\textwidth]{simuls/output/ce_hus_kepa_p2_d1_x2.pdf}
		\subcaption{$\x=-0.6$}
	\end{subfigure}	
	\begin{subfigure}[b]{0.5\textwidth}
		\includegraphics[height=0.3\textheight,width=0.95\textwidth]{simuls/output/ce_hus_kepa_p2_d1_x3.pdf}
		\subcaption{$\x=-0.2$}
	\end{subfigure}%	
	\begin{subfigure}[b]{0.5\textwidth}
		\includegraphics[height=0.3\textheight,width=0.95\textwidth]{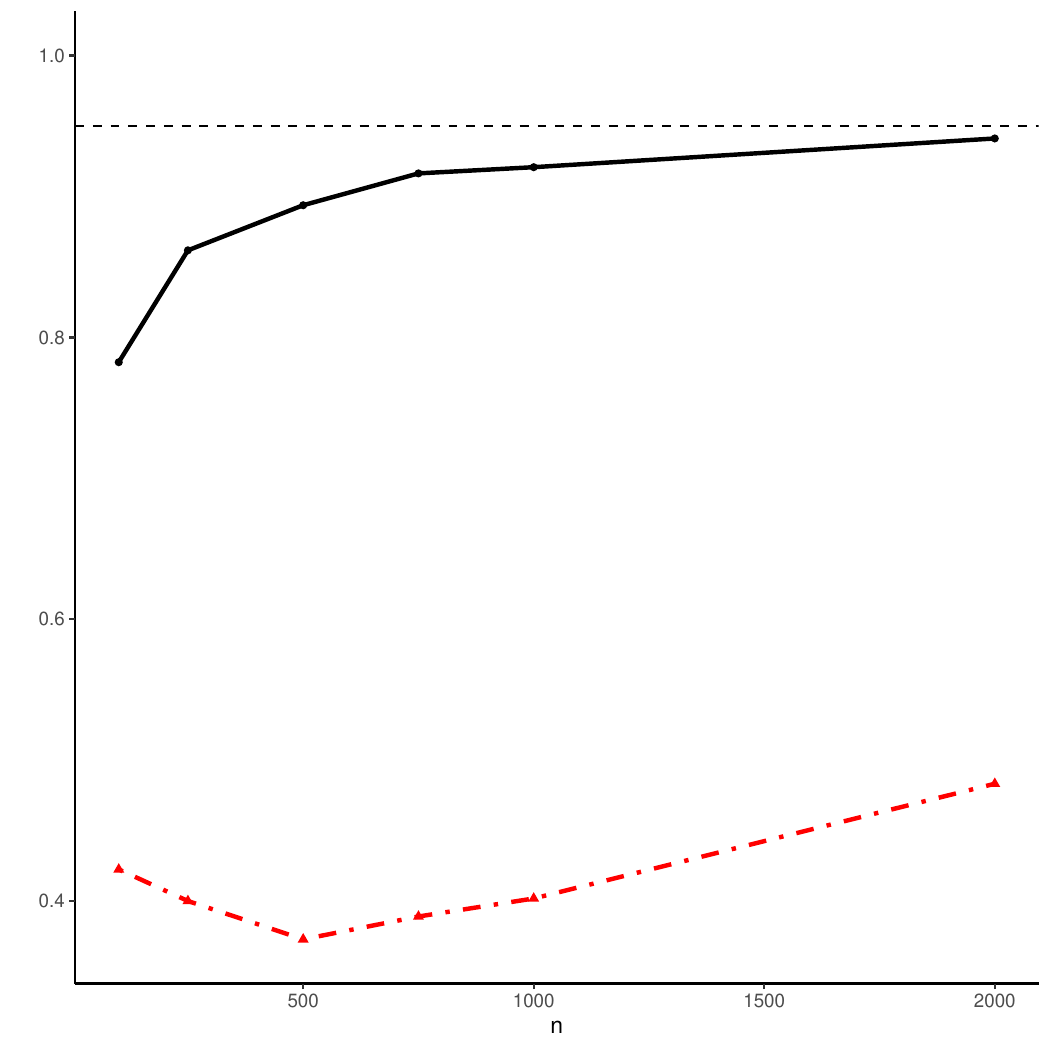}
		\subcaption{$\x=0.2$}
	\end{subfigure}	
	\begin{subfigure}[b]{0.5\textwidth}
		\includegraphics[height=0.3\textheight,width=0.95\textwidth]{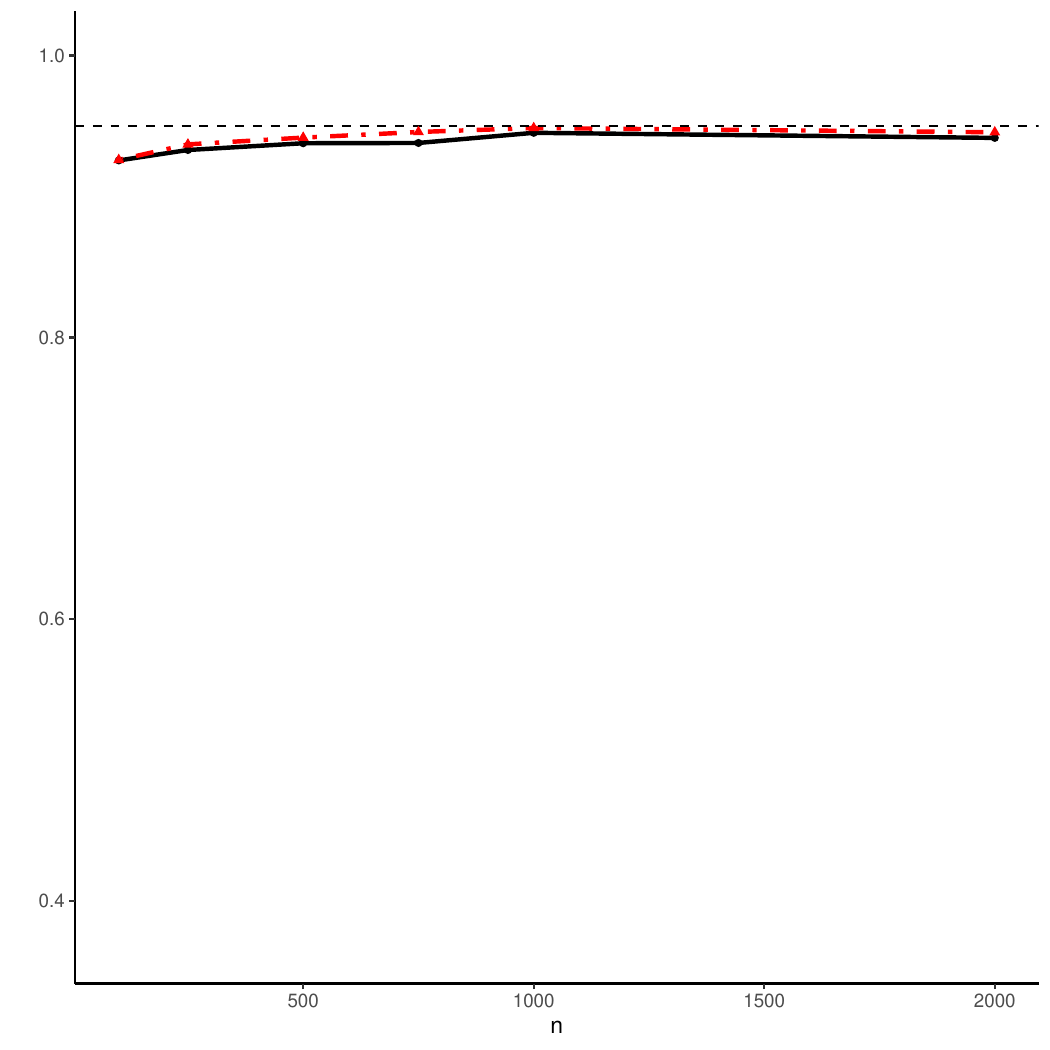}
		\subcaption{$\x=0.6$}
	\end{subfigure}%
	\begin{subfigure}[b]{0.5\textwidth}
		\includegraphics[height=0.3\textheight,width=0.95\textwidth]{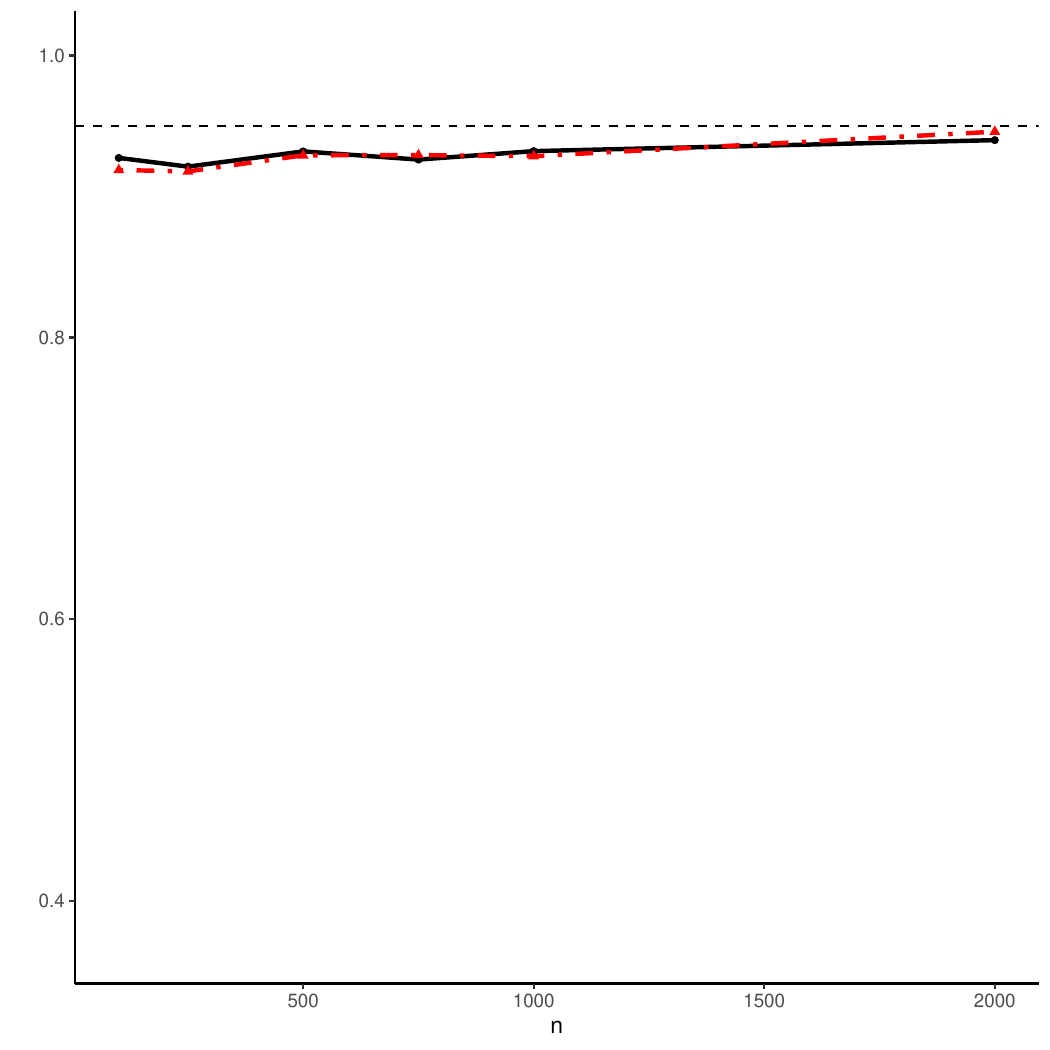}
		\subcaption{$\x=1$}
	\end{subfigure}%
	\begin{flushleft}\footnotesize Notes: \blackline Robust Bias Correction, \redline Undersmoothing
	\end{flushleft}
\end{figure}

%%% h_MSE
\clearpage
\begin{figure}[!htb]	
	\centering
	\caption{Empirical Coverage for 95\% Confidence Intervals\\
	Epanechnikov Kernel, $\hat{h}_{\MSE}$, $\v=0$}
	\label{suppfig:ec_nu0_epa_hmse}	
	\begin{subfigure}[b]{0.5\textwidth}
		\includegraphics[height=0.3\textheight,width=0.95\textwidth]{simuls/output/ce_hmse_kepa_p1_d0_x1.pdf}
		\subcaption{$\x=-1$}
	\end{subfigure}%
	\begin{subfigure}[b]{0.5\textwidth}
		\includegraphics[height=0.3\textheight,width=0.95\textwidth]{simuls/output/ce_hmse_kepa_p1_d0_x2.pdf}
		\subcaption{$\x=-0.6$}
	\end{subfigure}	
	\begin{subfigure}[b]{0.5\textwidth}
		\includegraphics[height=0.3\textheight,width=0.95\textwidth]{simuls/output/ce_hmse_kepa_p1_d0_x3.pdf}
		\subcaption{$\x=-0.2$}
	\end{subfigure}%	
	\begin{subfigure}[b]{0.5\textwidth}
		\includegraphics[height=0.3\textheight,width=0.95\textwidth]{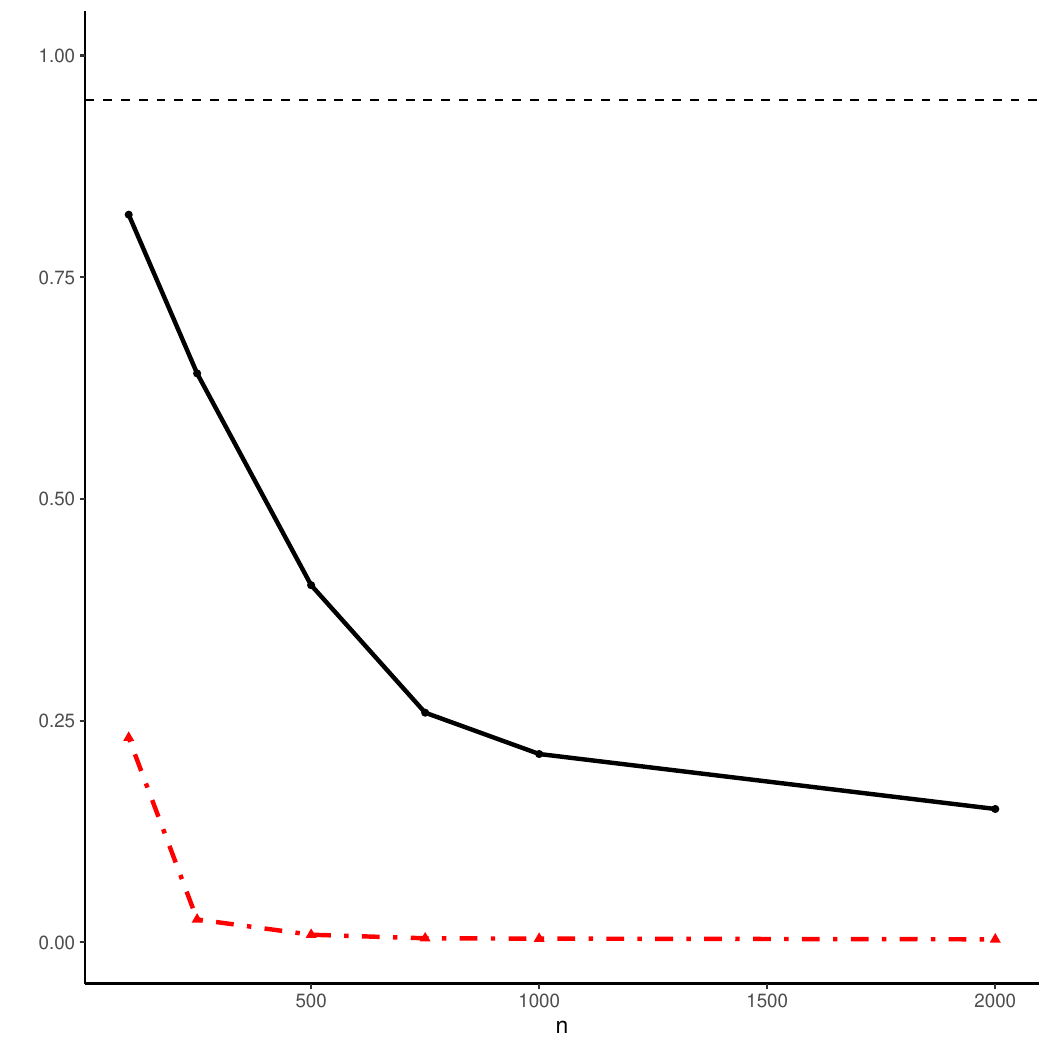}
		\subcaption{$\x=0.2$}
	\end{subfigure}	
	\begin{subfigure}[b]{0.5\textwidth}
		\includegraphics[height=0.3\textheight,width=0.95\textwidth]{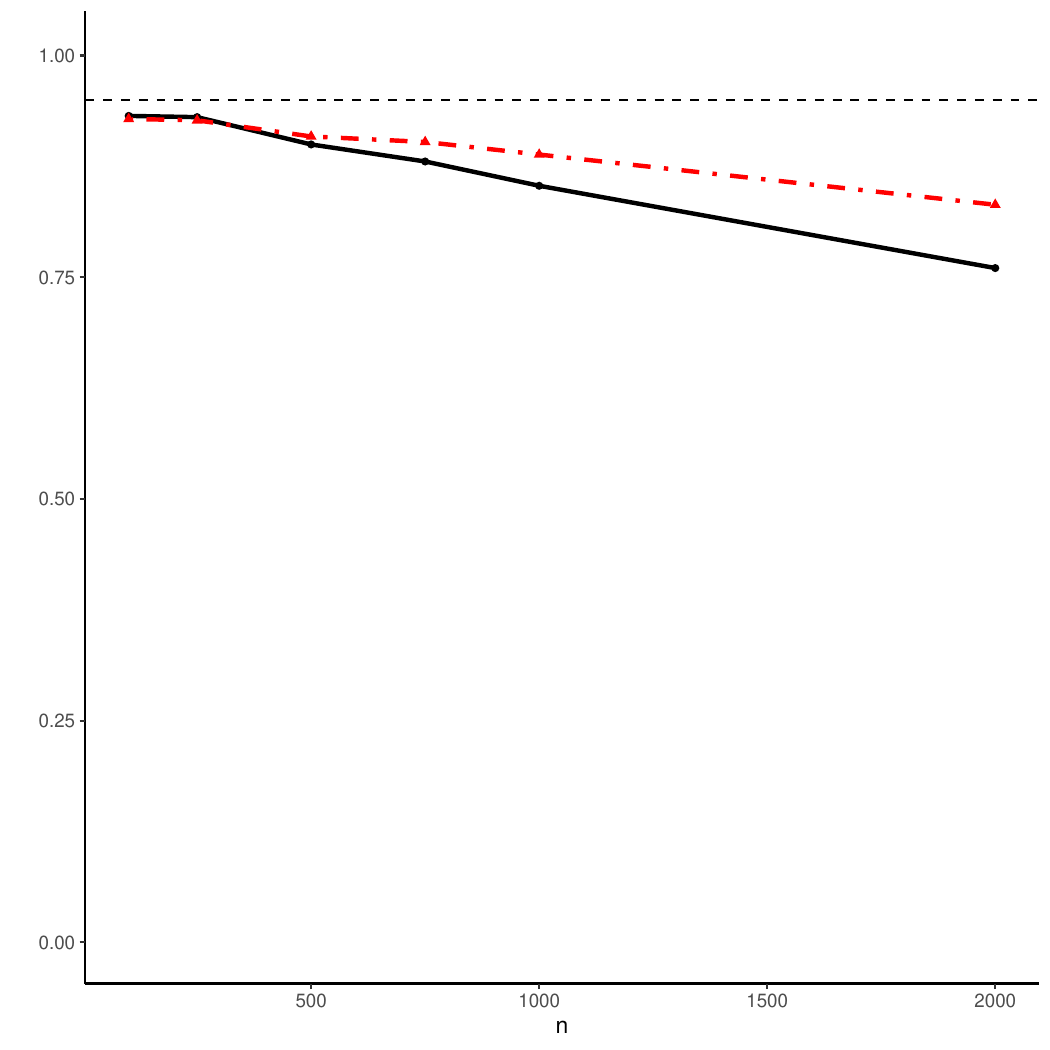}
		\subcaption{$\x=0.6$}
	\end{subfigure}%
	\begin{subfigure}[b]{0.5\textwidth}
		\includegraphics[height=0.3\textheight,width=0.95\textwidth]{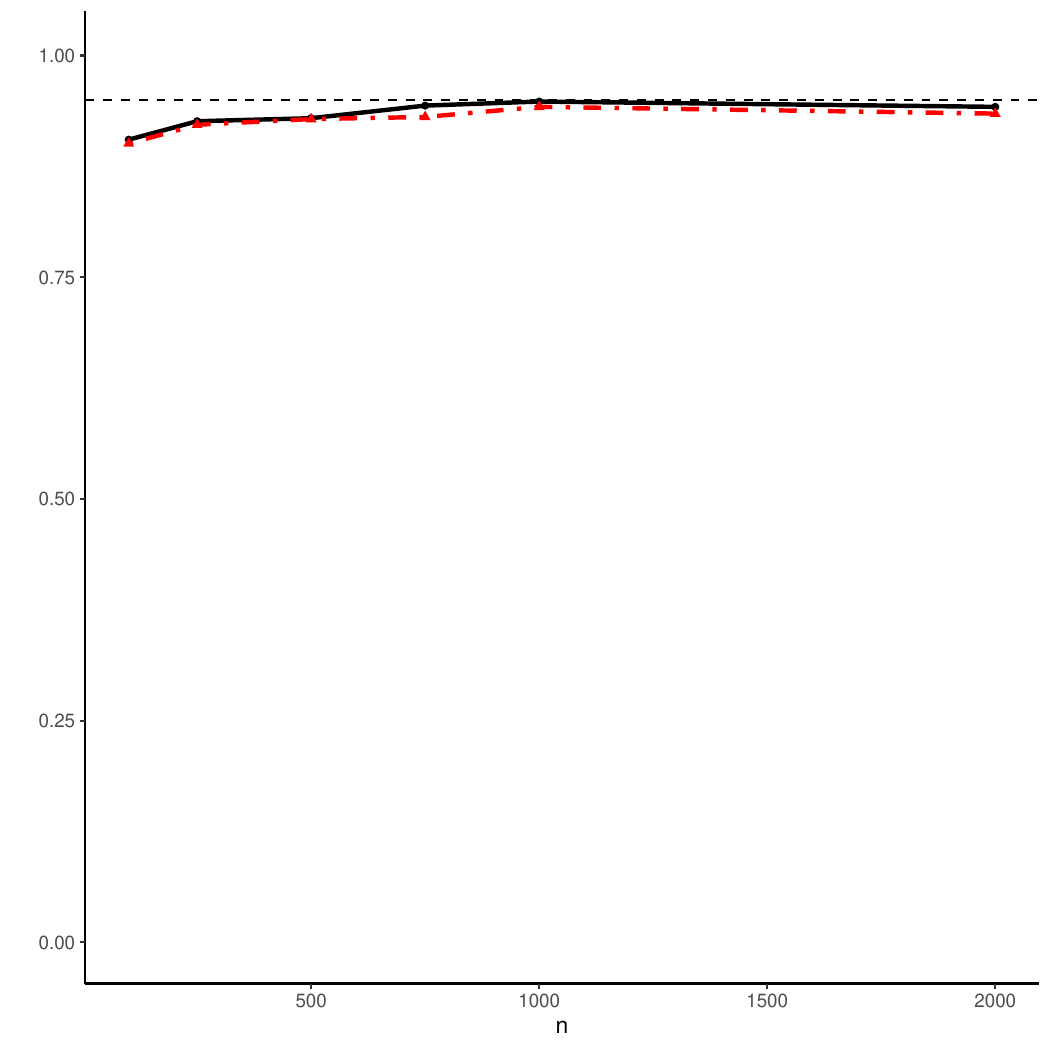}
		\subcaption{$\x=1$}
	\end{subfigure}%
	\begin{flushleft}\footnotesize Notes: \blackline Robust Bias Correction, \redline Undersmoothing
	\end{flushleft}
\end{figure}

\clearpage
\begin{figure}[!htb]
	\centering
	\caption{Empirical Coverage for 95\% Confidence Intervals\\
	Epanechnikov Kernel, $\hat{h}_{\MSE}$, $\v=1$}
	\label{suppfig:ec_nu1_epa_hmse}	
	\begin{subfigure}[b]{0.5\textwidth}
		\includegraphics[height=0.3\textheight,width=0.95\textwidth]{simuls/output/ce_hmse_kepa_p2_d1_x1.pdf}
		\subcaption{$\x=-1$}
	\end{subfigure}%
	\begin{subfigure}[b]{0.5\textwidth}
		\includegraphics[height=0.3\textheight,width=0.95\textwidth]{simuls/output/ce_hmse_kepa_p2_d1_x2.pdf}
		\subcaption{$\x=-0.6$}
	\end{subfigure}	
	\begin{subfigure}[b]{0.5\textwidth}
		\includegraphics[height=0.3\textheight,width=0.95\textwidth]{simuls/output/ce_hmse_kepa_p2_d1_x3.pdf}
		\subcaption{$\x=-0.2$}
	\end{subfigure}%	
	\begin{subfigure}[b]{0.5\textwidth}
		\includegraphics[height=0.3\textheight,width=0.95\textwidth]{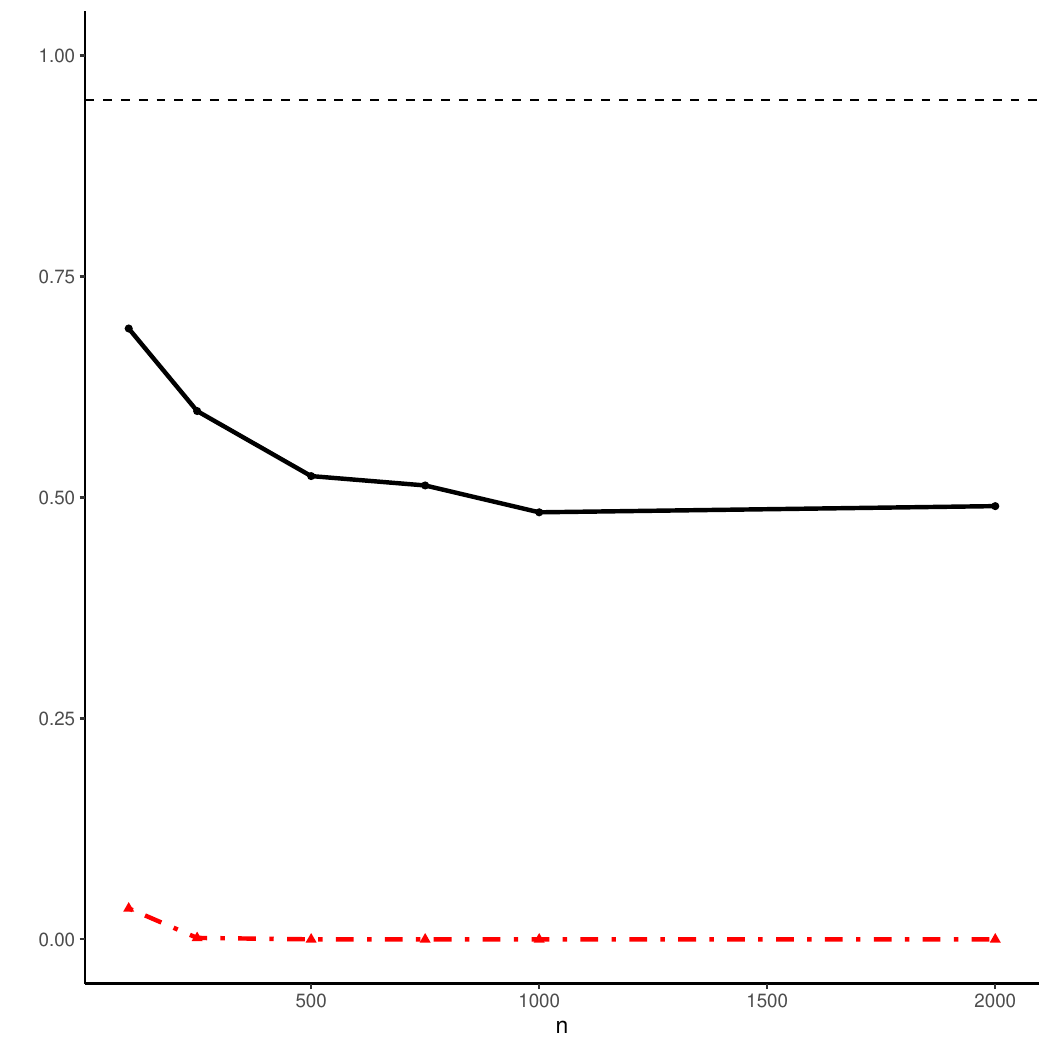}
		\subcaption{$\x=0.2$}
	\end{subfigure}	
	\begin{subfigure}[b]{0.5\textwidth}
		\includegraphics[height=0.3\textheight,width=0.95\textwidth]{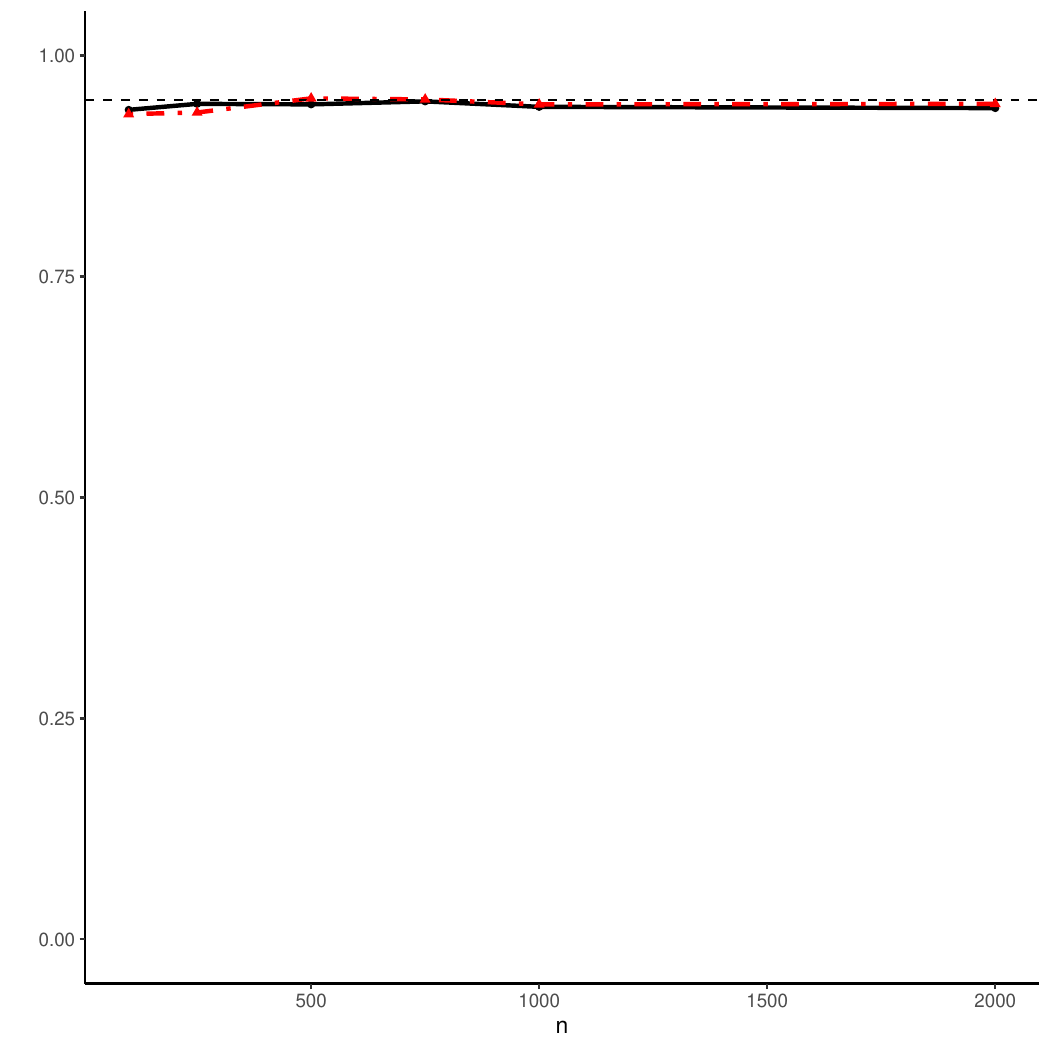}
		\subcaption{$\x=0.6$}
	\end{subfigure}%
	\begin{subfigure}[b]{0.5\textwidth}
		\includegraphics[height=0.3\textheight,width=0.95\textwidth]{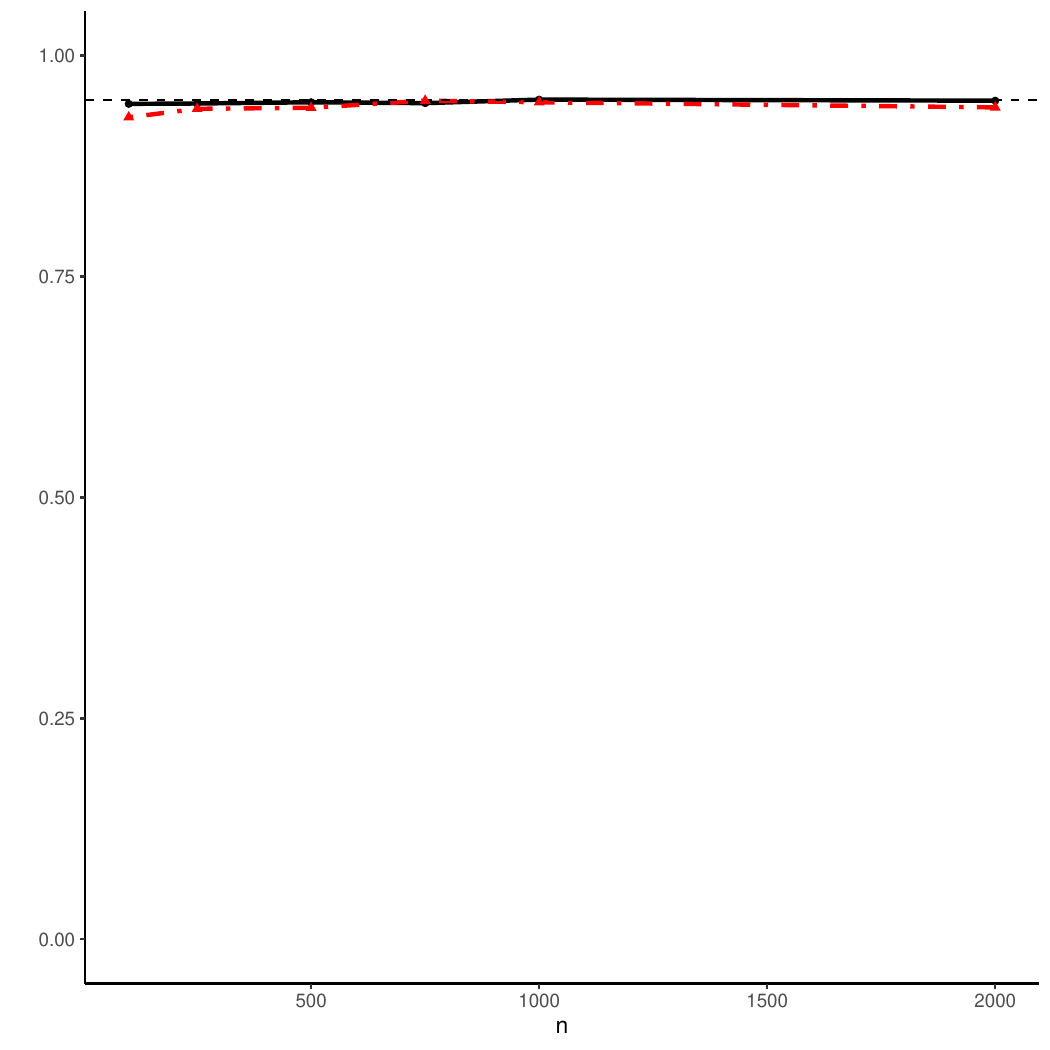}
		\subcaption{$\x=1$}
	\end{subfigure}%
	\begin{flushleft}\footnotesize Notes: \blackline Robust Bias Correction, \redline Undersmoothing
	\end{flushleft}
\end{figure}

%%%%%%%%%%%%%%%%%%%%%%%% EC UNI

%%% h_RBC
\clearpage

\begin{figure}[!htb]
	\centering
	\caption{Empirical Coverage for 95\% Confidence Intervals\\ Uniform Kernel, $\hat{h}_{\RBC}$, $\v=0$}
	\label{suppfig:ec_nu0_uni_hrbc}
	\begin{subfigure}[b]{0.5\textwidth}
		\includegraphics[height=0.3\textheight,width=0.95\textwidth]{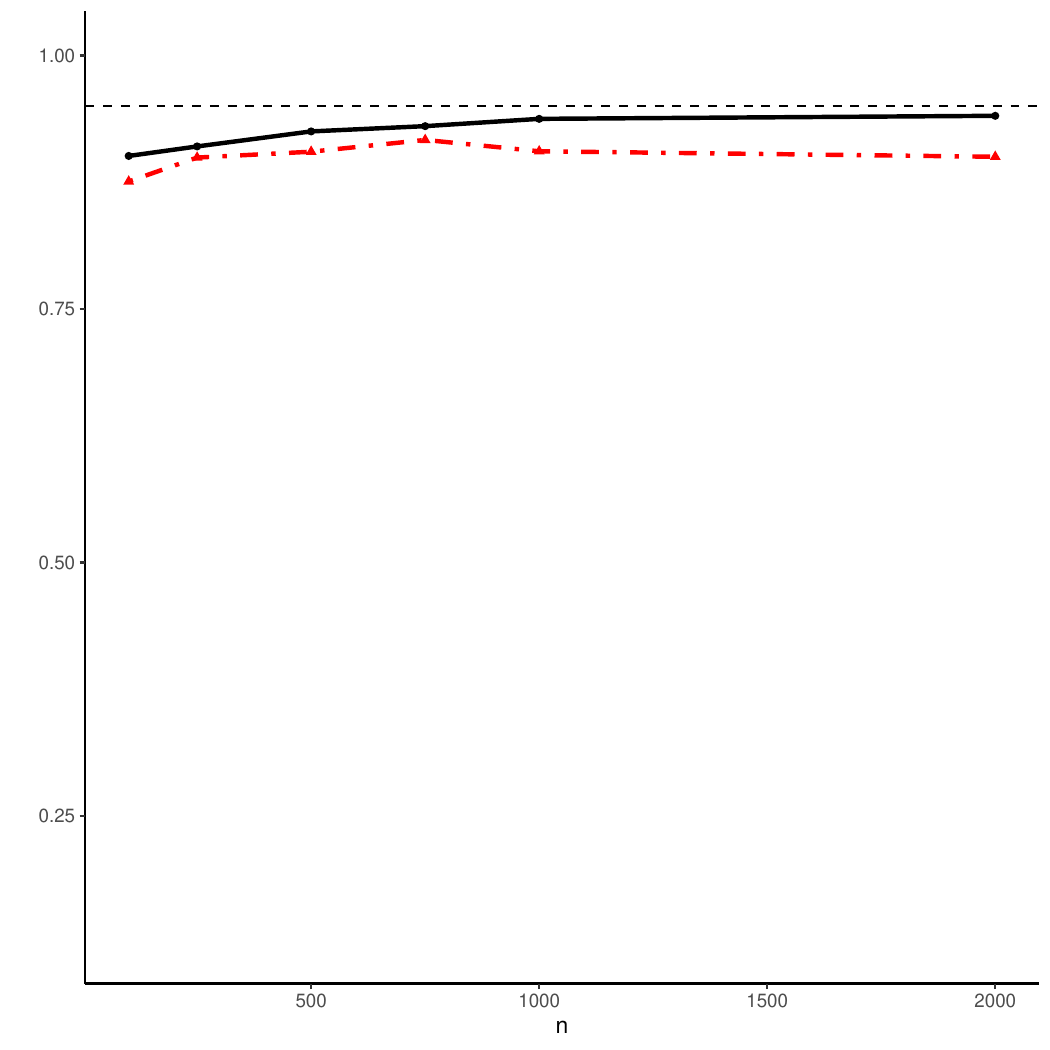}
		\subcaption{$\x=-1$}
	\end{subfigure}%
	\begin{subfigure}[b]{0.5\textwidth}
		\includegraphics[height=0.3\textheight,width=0.95\textwidth]{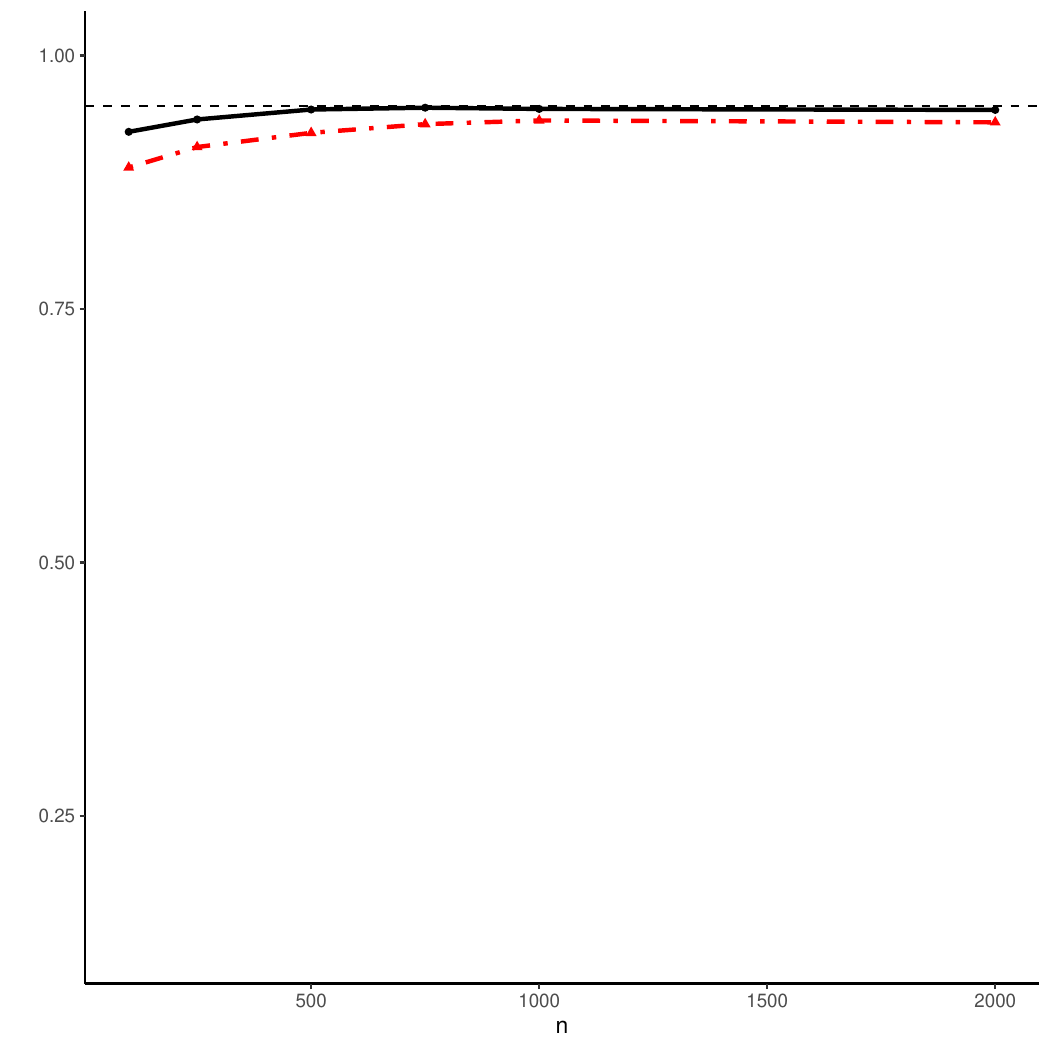}
		\subcaption{$\x=-0.6$}
	\end{subfigure}	
	\begin{subfigure}[b]{0.5\textwidth}
		\includegraphics[height=0.3\textheight,width=0.95\textwidth]{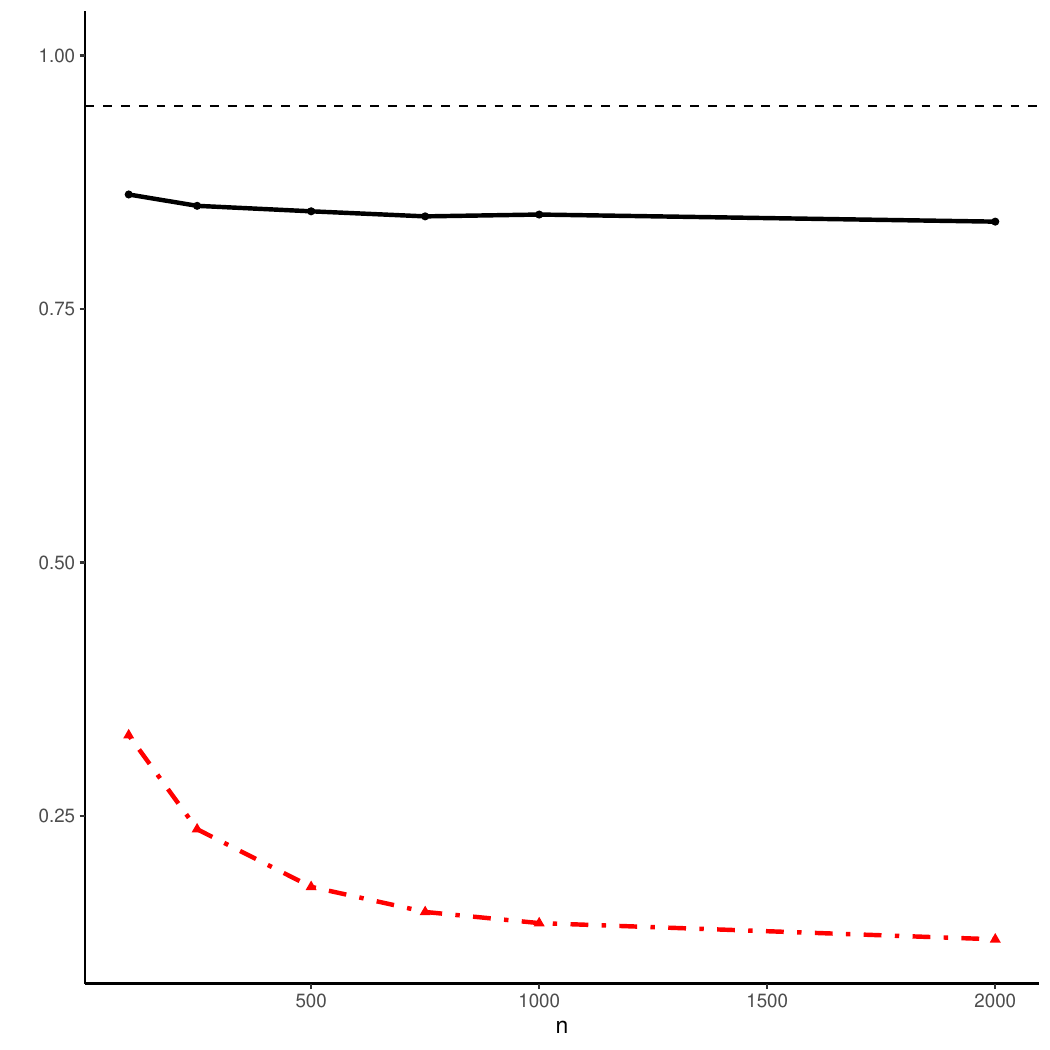}
		\subcaption{$\x=-0.2$}
	\end{subfigure}%	
	\begin{subfigure}[b]{0.5\textwidth}
		\includegraphics[height=0.3\textheight,width=0.95\textwidth]{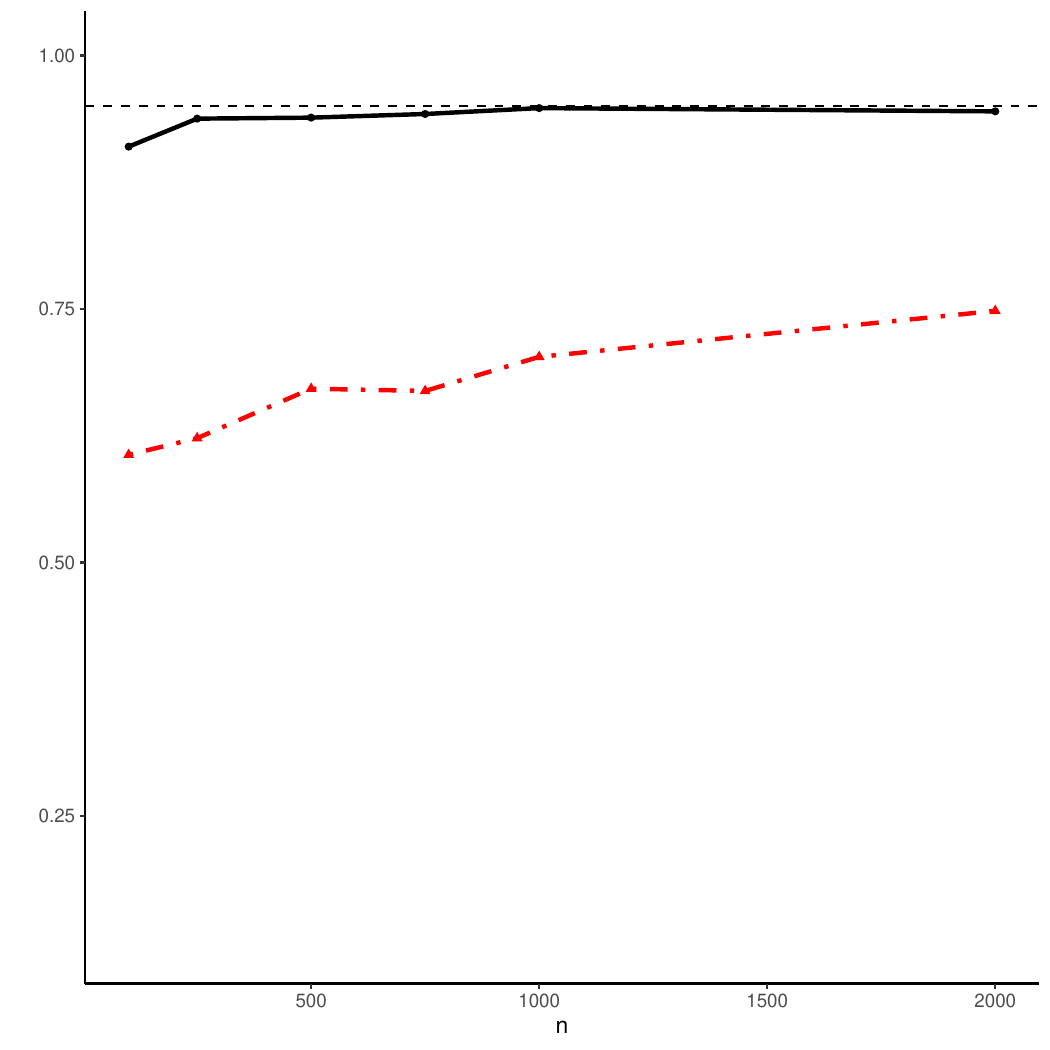}
		\subcaption{$\x=0.2$}
	\end{subfigure}	
	\begin{subfigure}[b]{0.5\textwidth}
		\includegraphics[height=0.3\textheight,width=0.95\textwidth]{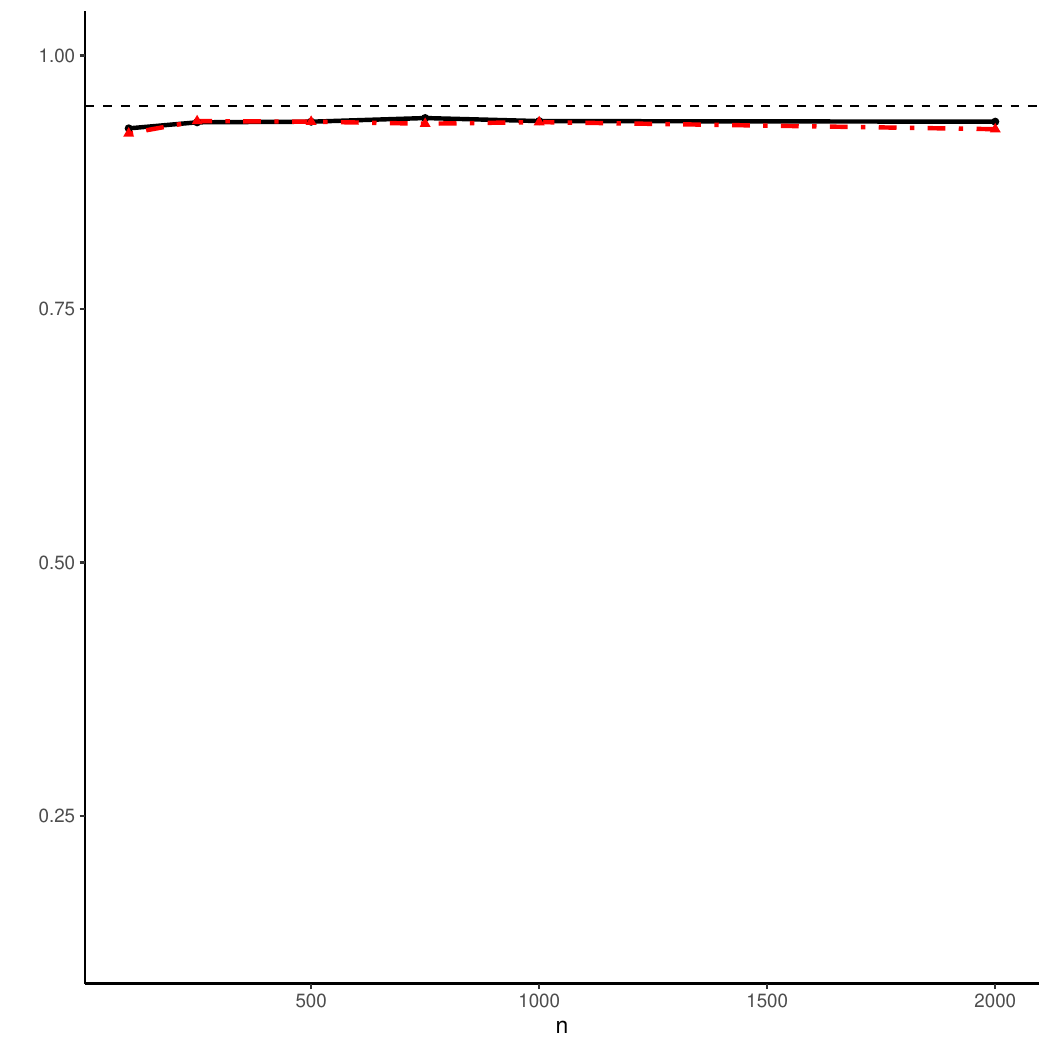}
		\subcaption{$\x=0.6$}
	\end{subfigure}%
	\begin{subfigure}[b]{0.5\textwidth}
		\includegraphics[height=0.3\textheight,width=0.95\textwidth]{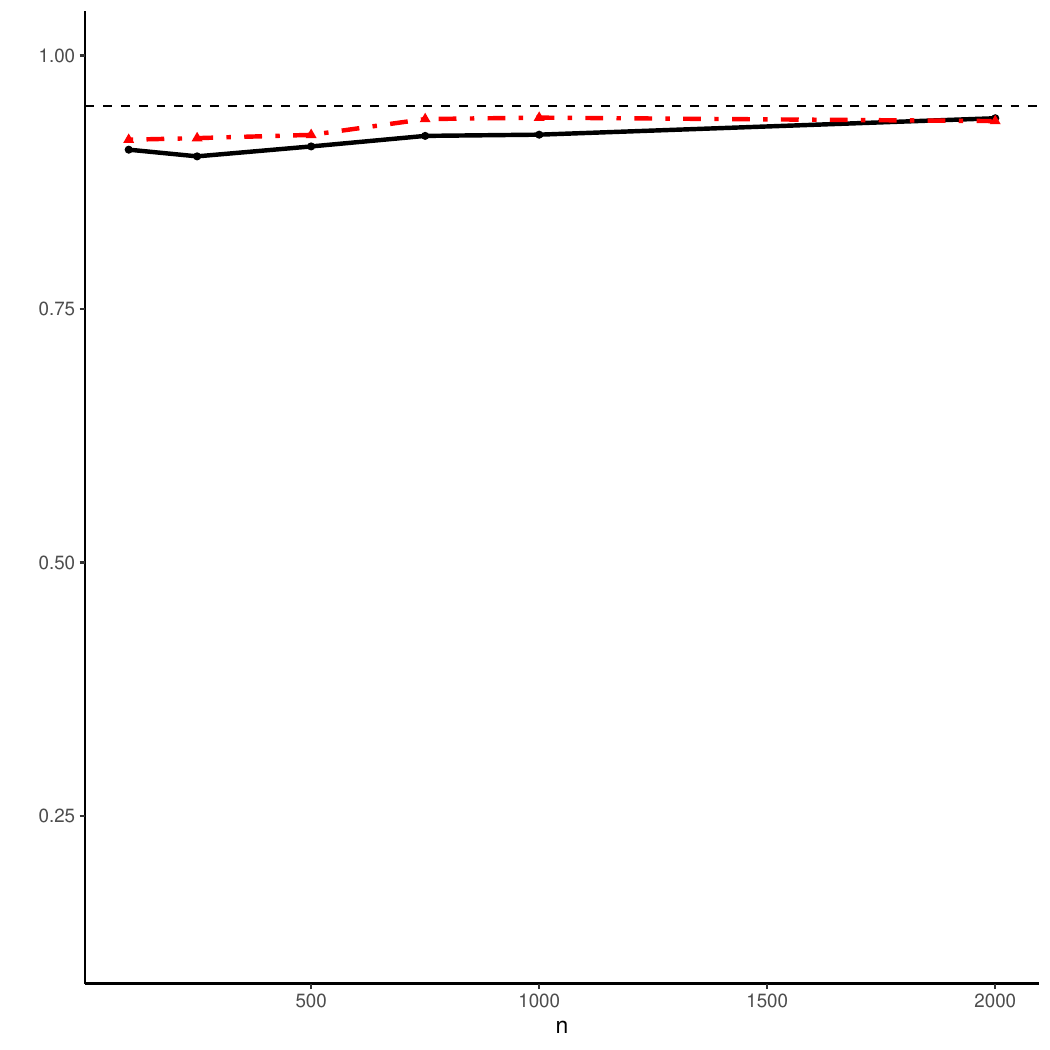}
		\subcaption{$\x=1$}
	\end{subfigure}%
	\begin{flushleft}\footnotesize Notes: \blackline Robust Bias Correction, \redline Undersmoothing
	\end{flushleft}
\end{figure}

\clearpage
\begin{figure}[!htb]
	\centering
	\caption{Empirical Coverage for 95\% Confidence Intervals\\
	Uniform Kernel, $\hat{h}_{\RBC}$, $\v=1$}
	\label{suppfig:ec_nu1_uni_hrbc}	
	\begin{subfigure}[b]{0.5\textwidth}
		\includegraphics[height=0.3\textheight,width=0.95\textwidth]{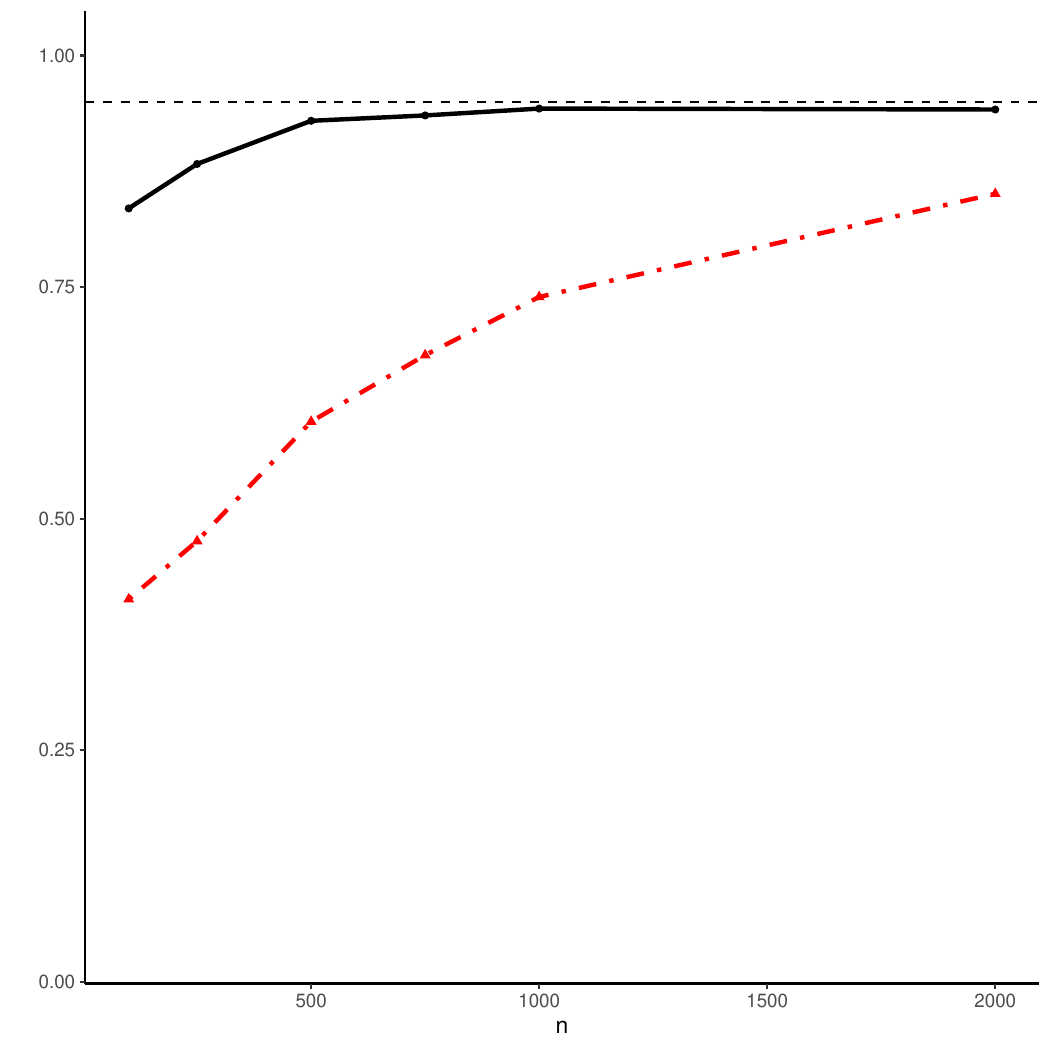}
		\subcaption{$\x=-1$}
	\end{subfigure}%
	\begin{subfigure}[b]{0.5\textwidth}
		\includegraphics[height=0.3\textheight,width=0.95\textwidth]{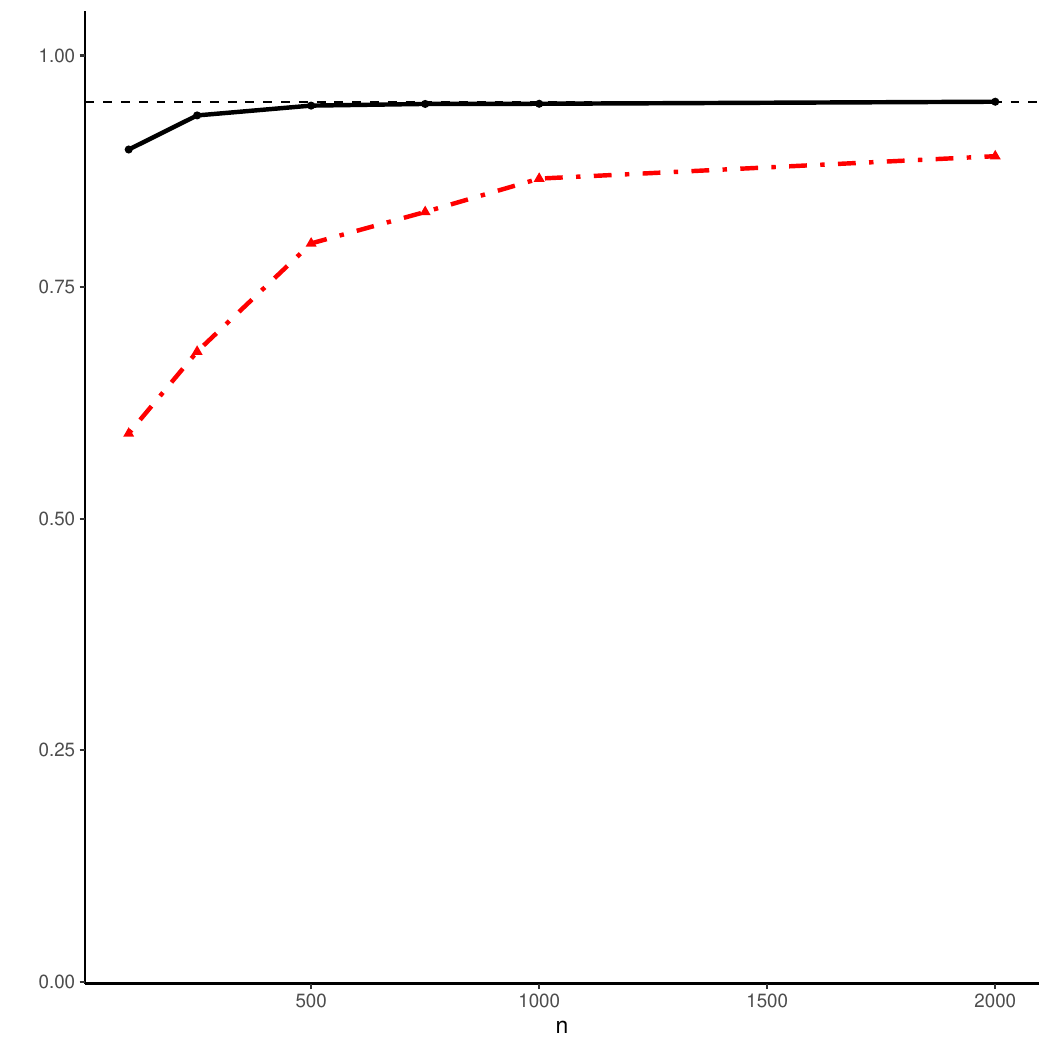}
		\subcaption{$\x=-0.6$}
	\end{subfigure}	
	\begin{subfigure}[b]{0.5\textwidth}
		\includegraphics[height=0.3\textheight,width=0.95\textwidth]{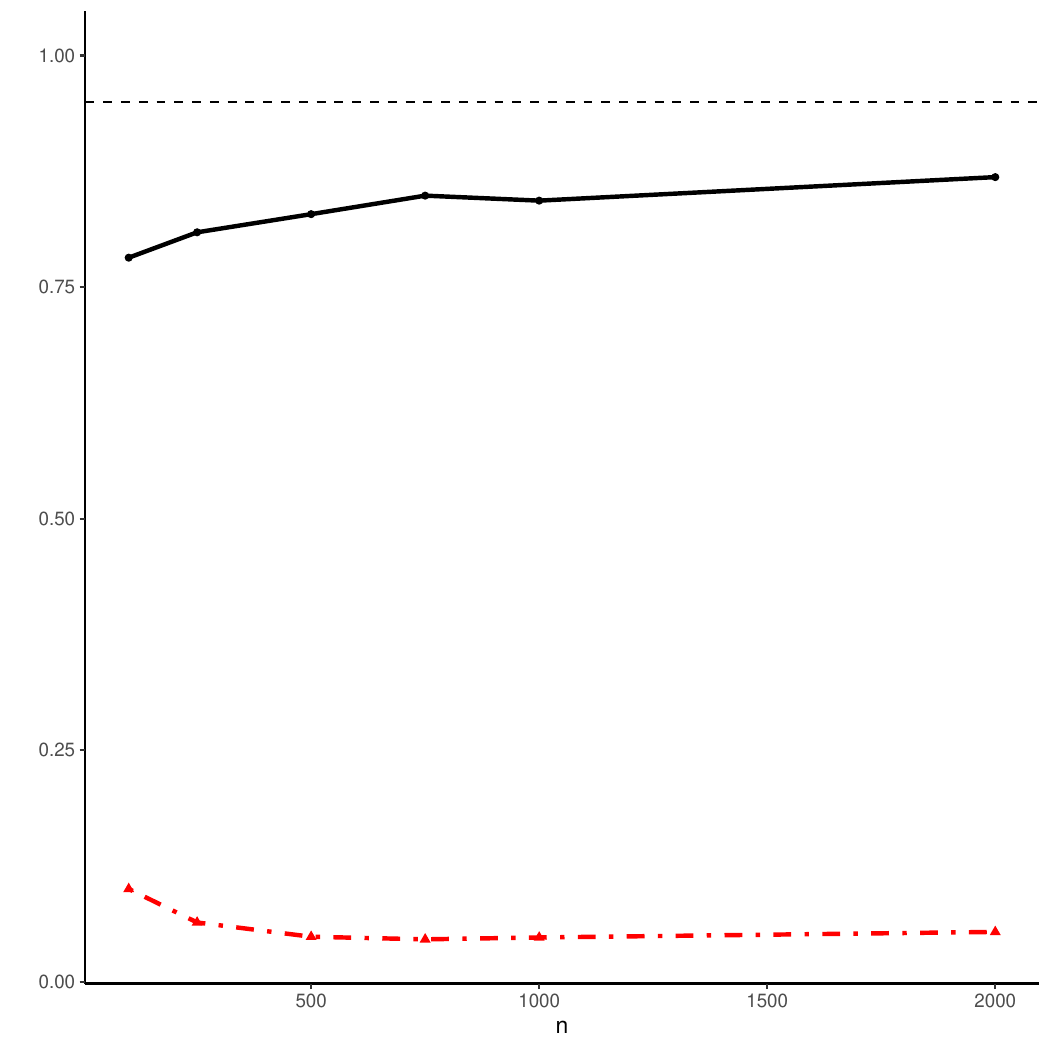}
		\subcaption{$\x=-0.2$}
	\end{subfigure}%	
	\begin{subfigure}[b]{0.5\textwidth}
		\includegraphics[height=0.3\textheight,width=0.95\textwidth]{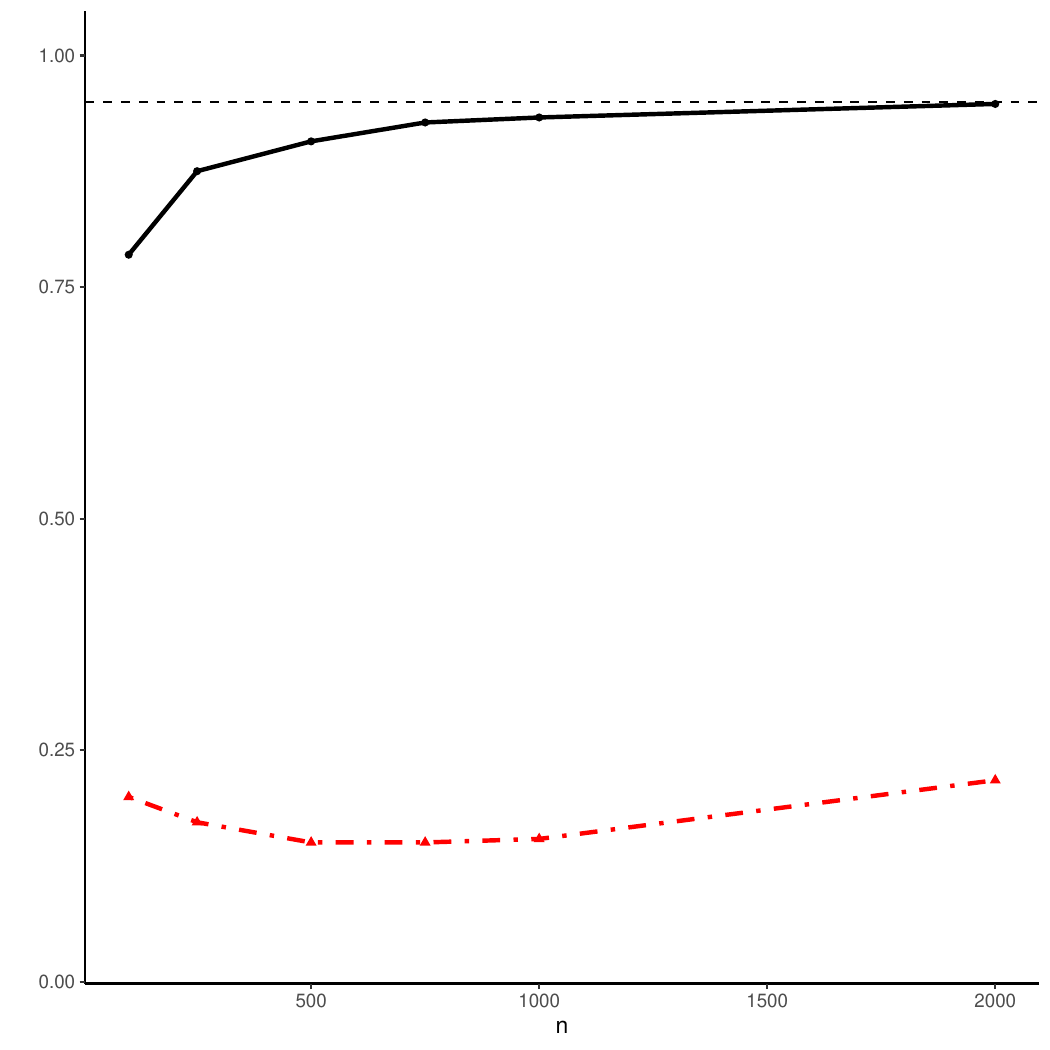}
		\subcaption{$\x=0.2$}
	\end{subfigure}	
	\begin{subfigure}[b]{0.5\textwidth}
		\includegraphics[height=0.3\textheight,width=0.95\textwidth]{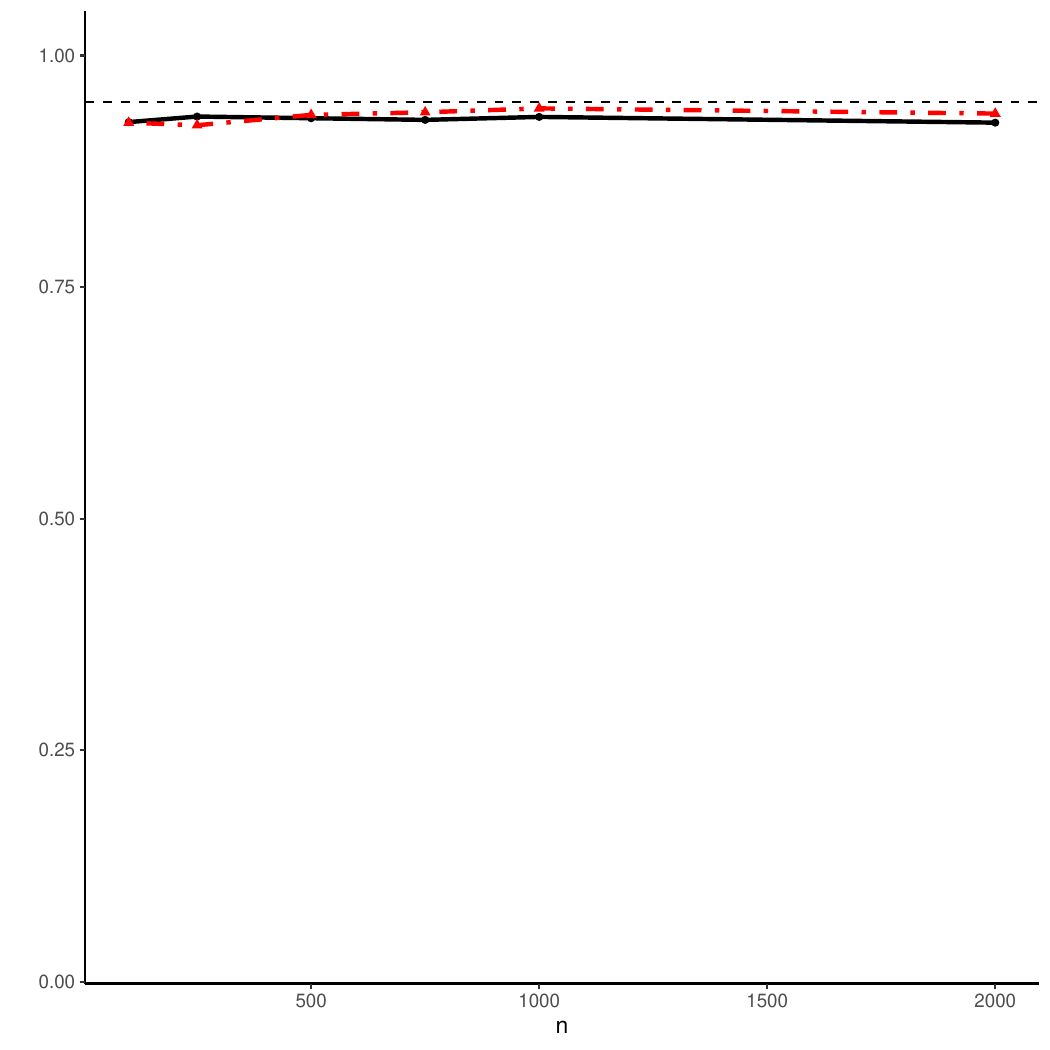}
		\subcaption{$\x=0.6$}
	\end{subfigure}%
	\begin{subfigure}[b]{0.5\textwidth}
		\includegraphics[height=0.3\textheight,width=0.95\textwidth]{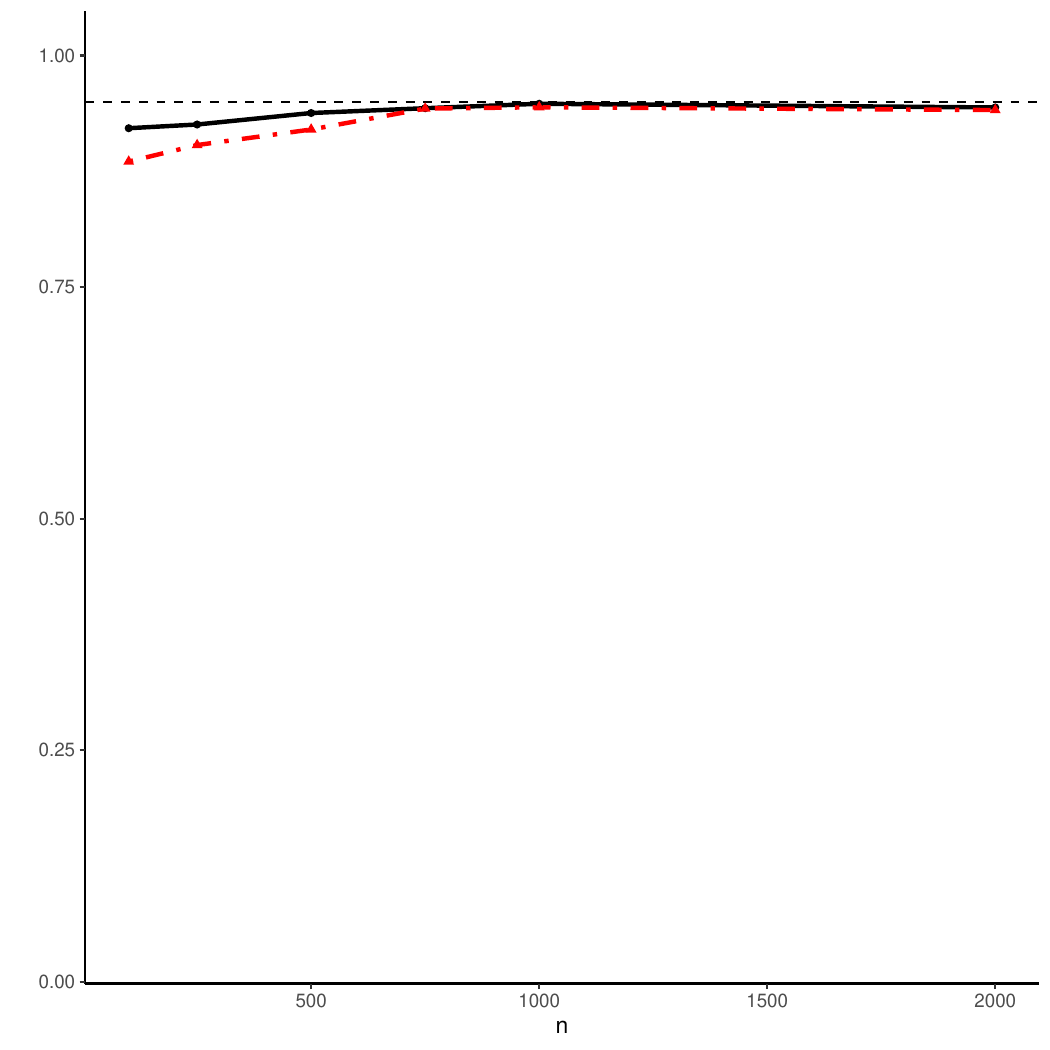}
		\subcaption{$\x=1$}
	\end{subfigure}%
	\begin{flushleft}\footnotesize Notes: \blackline Robust Bias Correction, \redline Undersmoothing
	\end{flushleft}
\end{figure}

%%% h_US
\clearpage
\begin{figure}[!htb]	
	\centering
	\caption{Empirical Coverage for 95\% Confidence Intervals\\
	Uniform Kernel, $\hat{h}_{\US}$, $\v=0$}
	\label{suppfig:ec_nu0_uni_hus}	
	\begin{subfigure}[b]{0.5\textwidth}
		\includegraphics[height=0.3\textheight,width=0.95\textwidth]{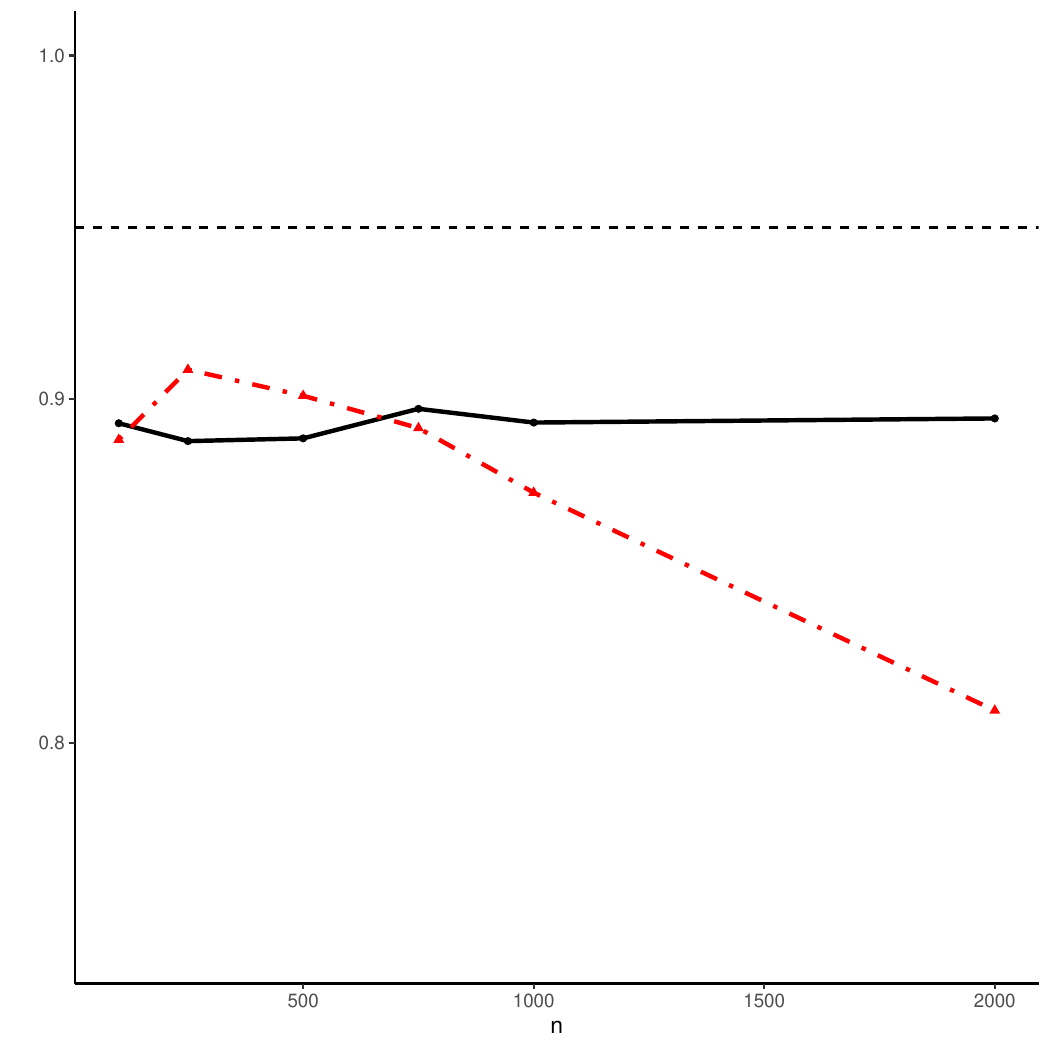}
		\subcaption{$\x=-1$}
	\end{subfigure}%
	\begin{subfigure}[b]{0.5\textwidth}
		\includegraphics[height=0.3\textheight,width=0.95\textwidth]{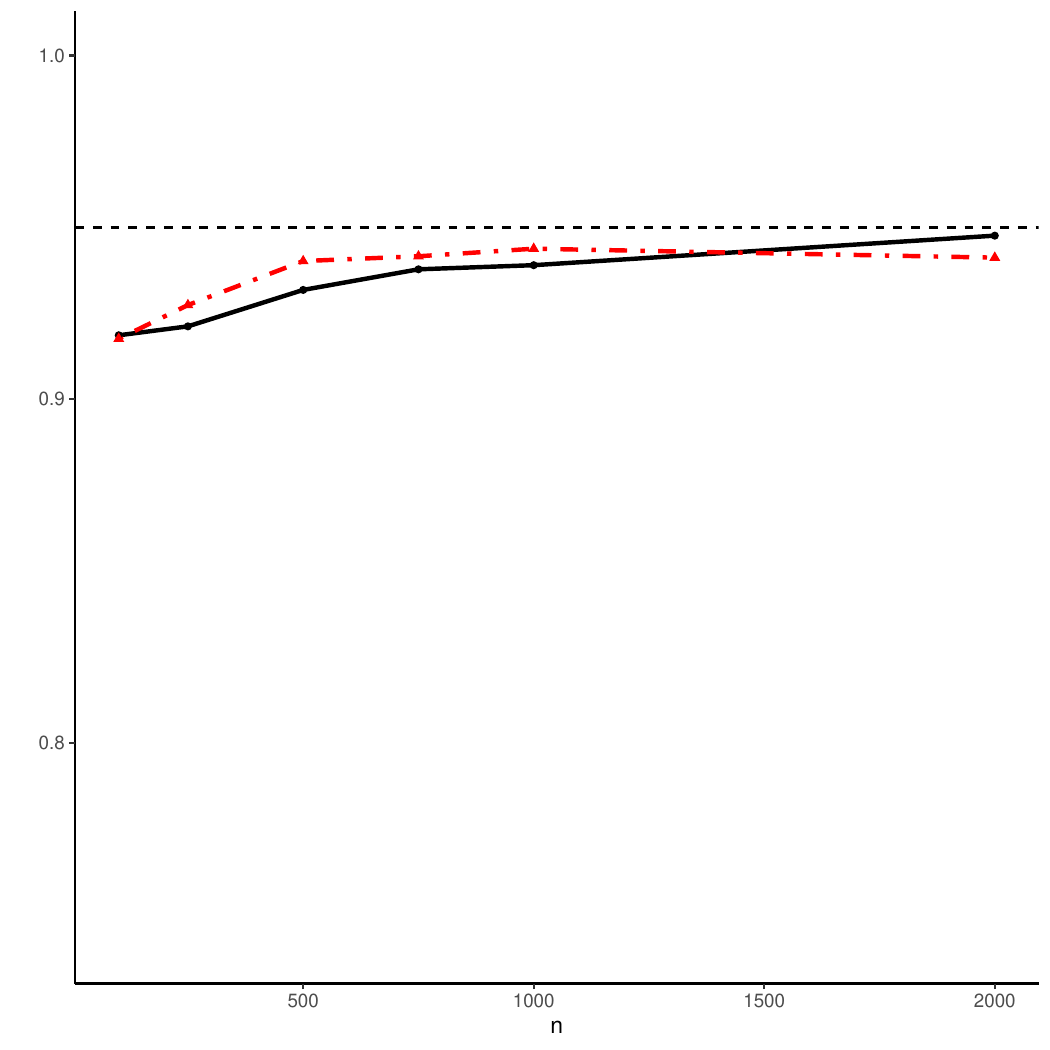}
		\subcaption{$\x=-0.6$}
	\end{subfigure}	
	\begin{subfigure}[b]{0.5\textwidth}
		\includegraphics[height=0.3\textheight,width=0.95\textwidth]{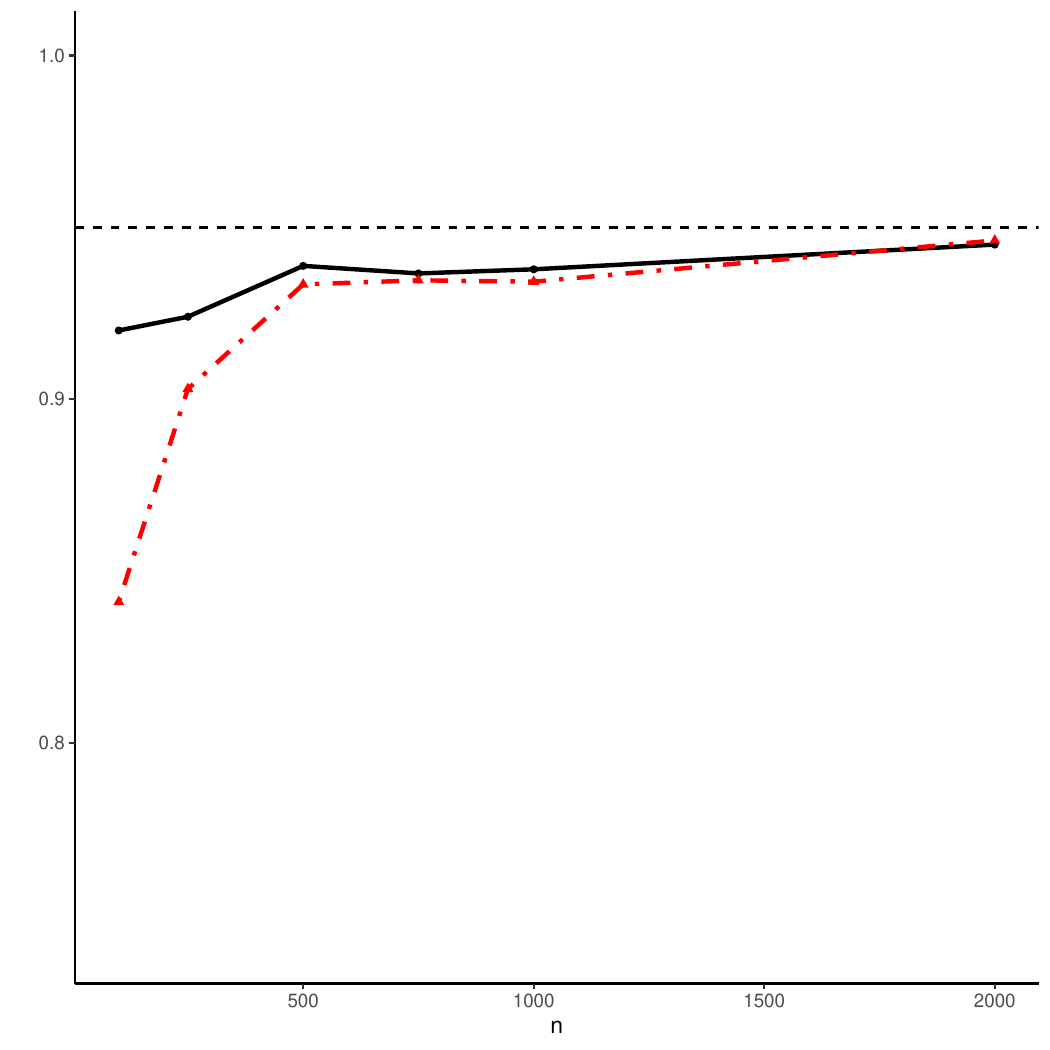}
		\subcaption{$\x=-0.2$}
	\end{subfigure}%	
	\begin{subfigure}[b]{0.5\textwidth}
		\includegraphics[height=0.3\textheight,width=0.95\textwidth]{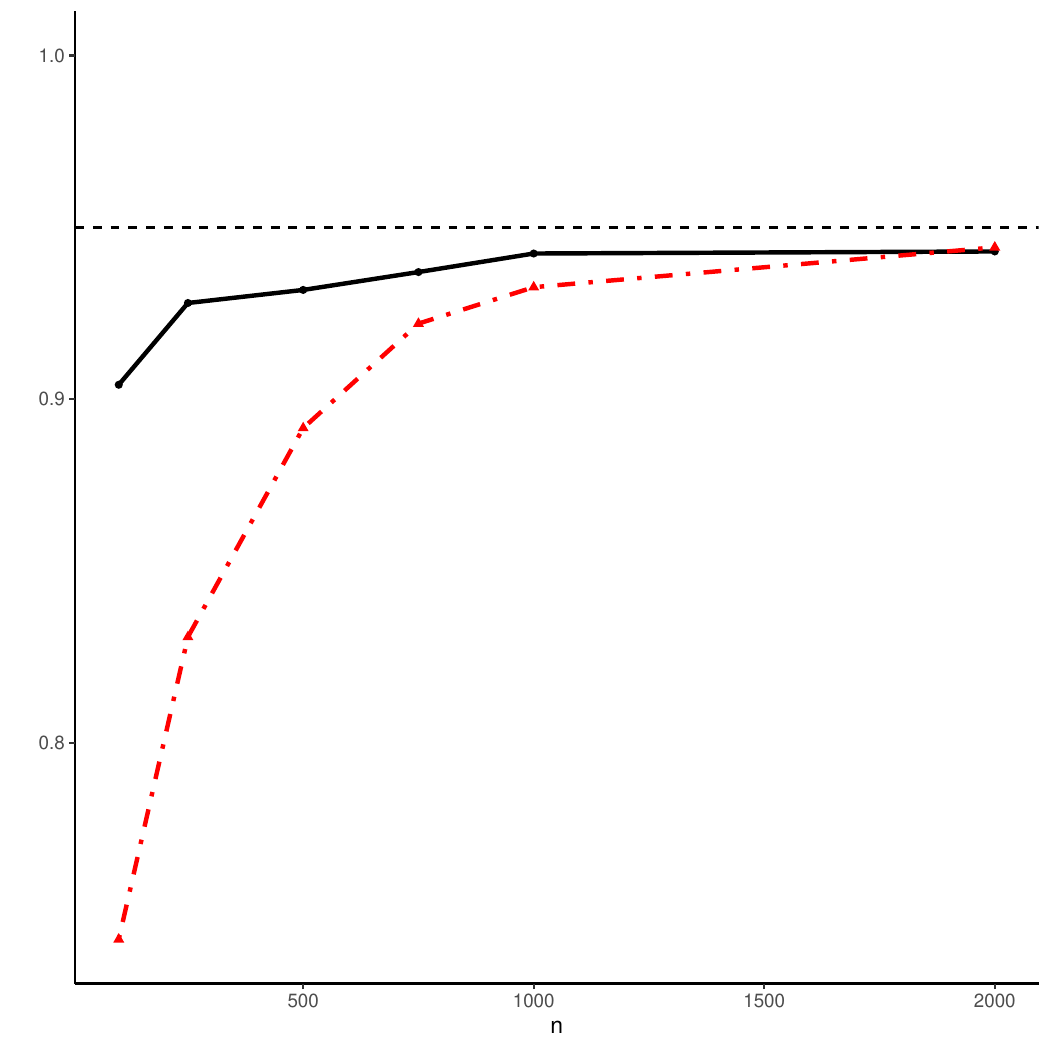}
		\subcaption{$\x=0.2$}
	\end{subfigure}	
	\begin{subfigure}[b]{0.5\textwidth}
		\includegraphics[height=0.3\textheight,width=0.95\textwidth]{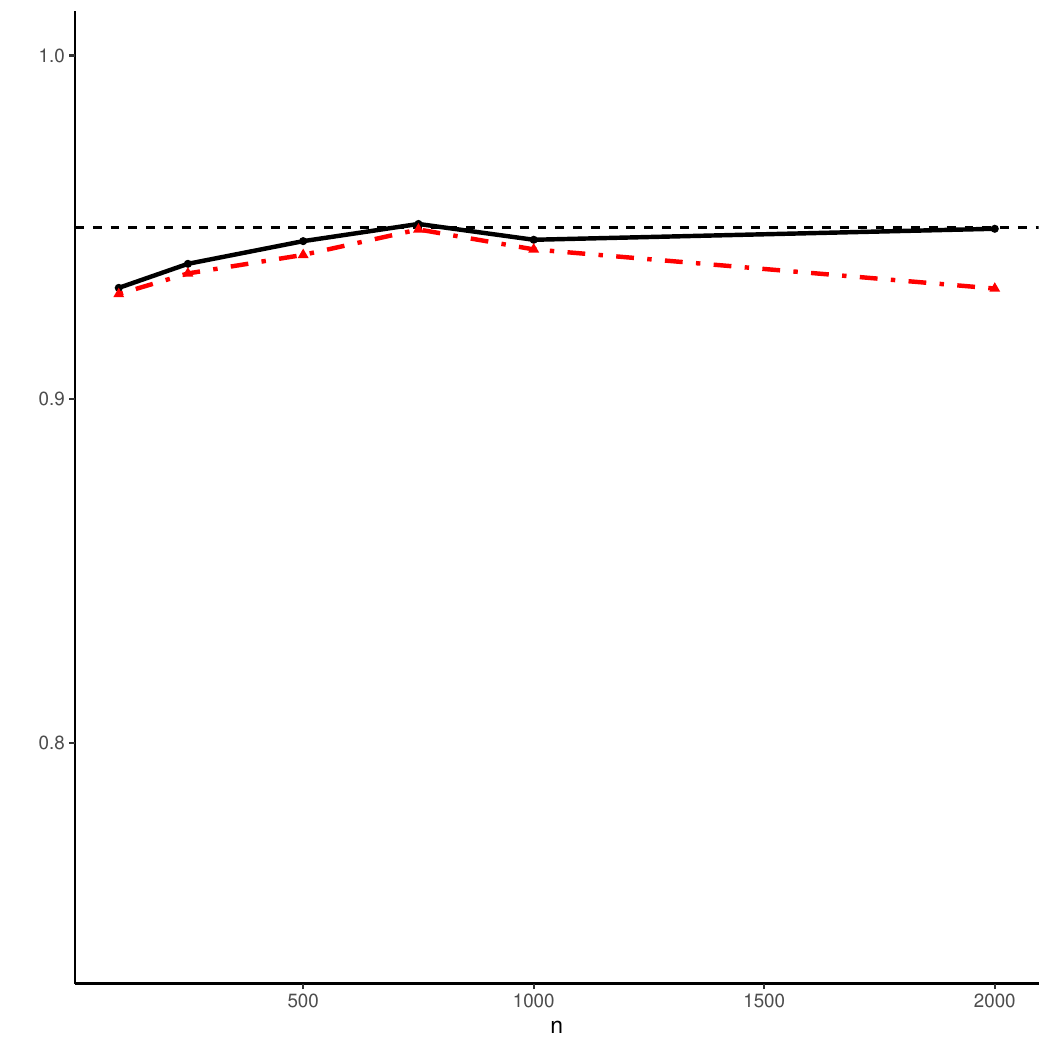}
		\subcaption{$\x=0.6$}
	\end{subfigure}%
	\begin{subfigure}[b]{0.5\textwidth}
		\includegraphics[height=0.3\textheight,width=0.95\textwidth]{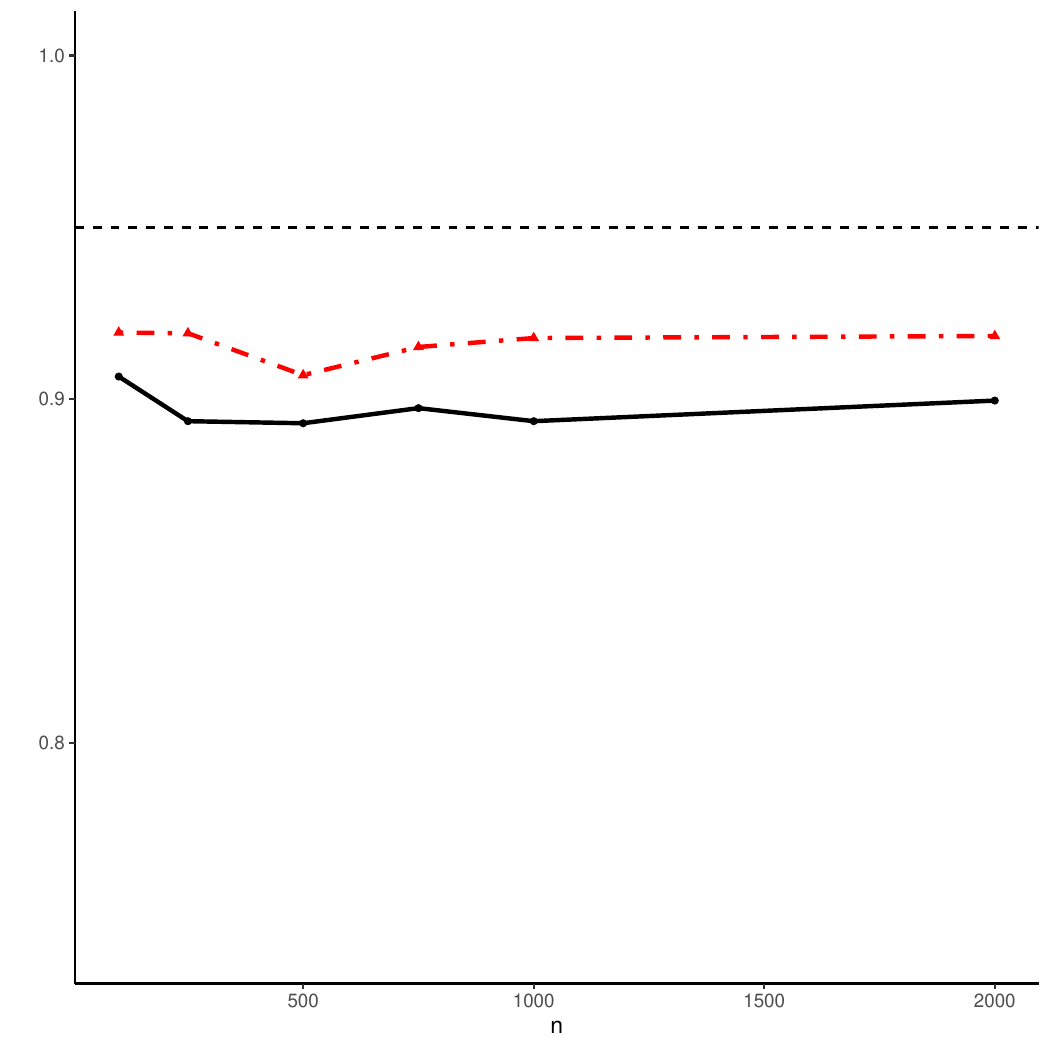}
		\subcaption{$\x=1$}
	\end{subfigure}%
	\begin{flushleft}\footnotesize Notes: \blackline Robust Bias Correction, \redline Undersmoothing
	\end{flushleft}
\end{figure}

\clearpage
\begin{figure}[!htb]
	\centering
	\caption{Empirical Coverage for 95\% Confidence Intervals\\
	Uniform Kernel, $\hat{h}_{\US}$, $\v=1$}
	\label{suppfig:ec_nu1_uni_hus}	
	\begin{subfigure}[b]{0.5\textwidth}
		\includegraphics[height=0.3\textheight,width=0.95\textwidth]{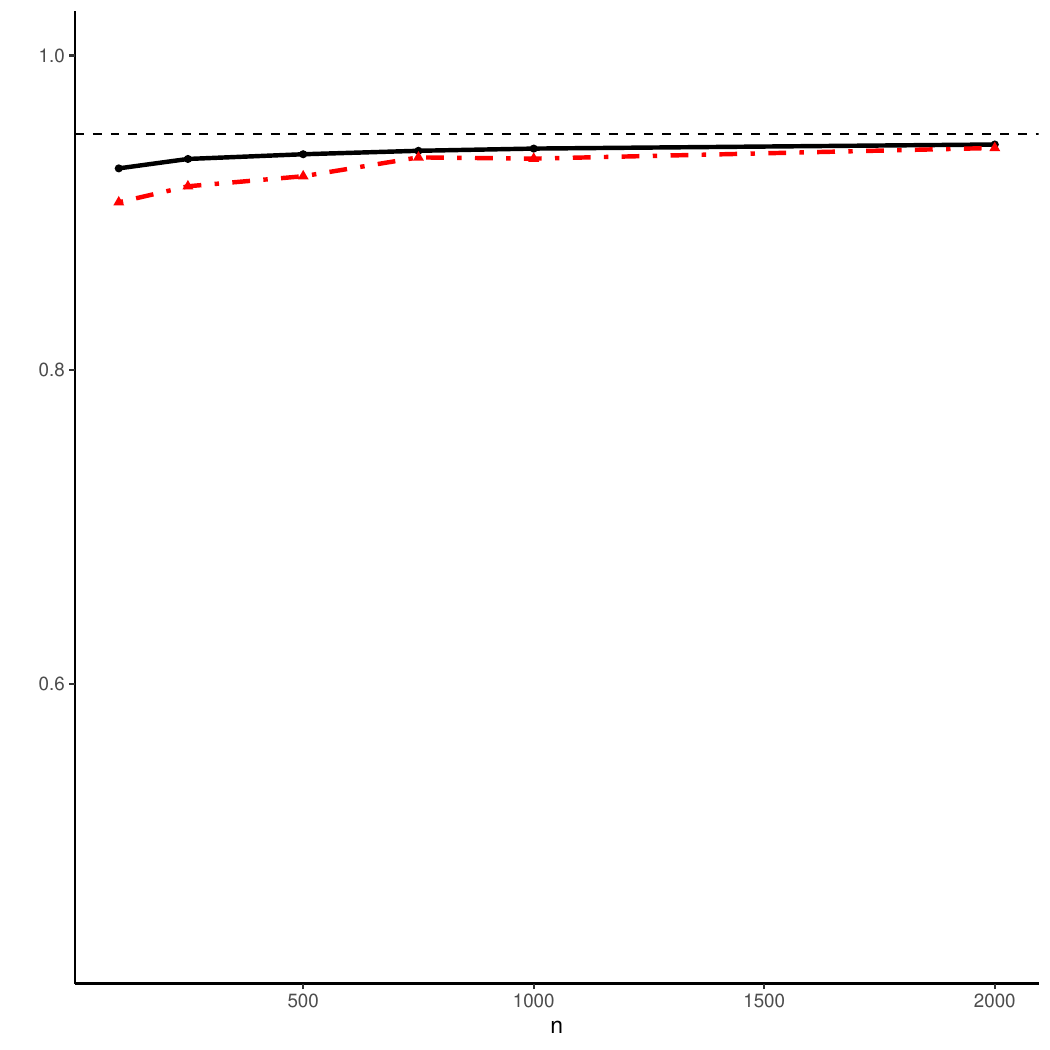}
		\subcaption{$\x=-1$}
	\end{subfigure}%
	\begin{subfigure}[b]{0.5\textwidth}
		\includegraphics[height=0.3\textheight,width=0.95\textwidth]{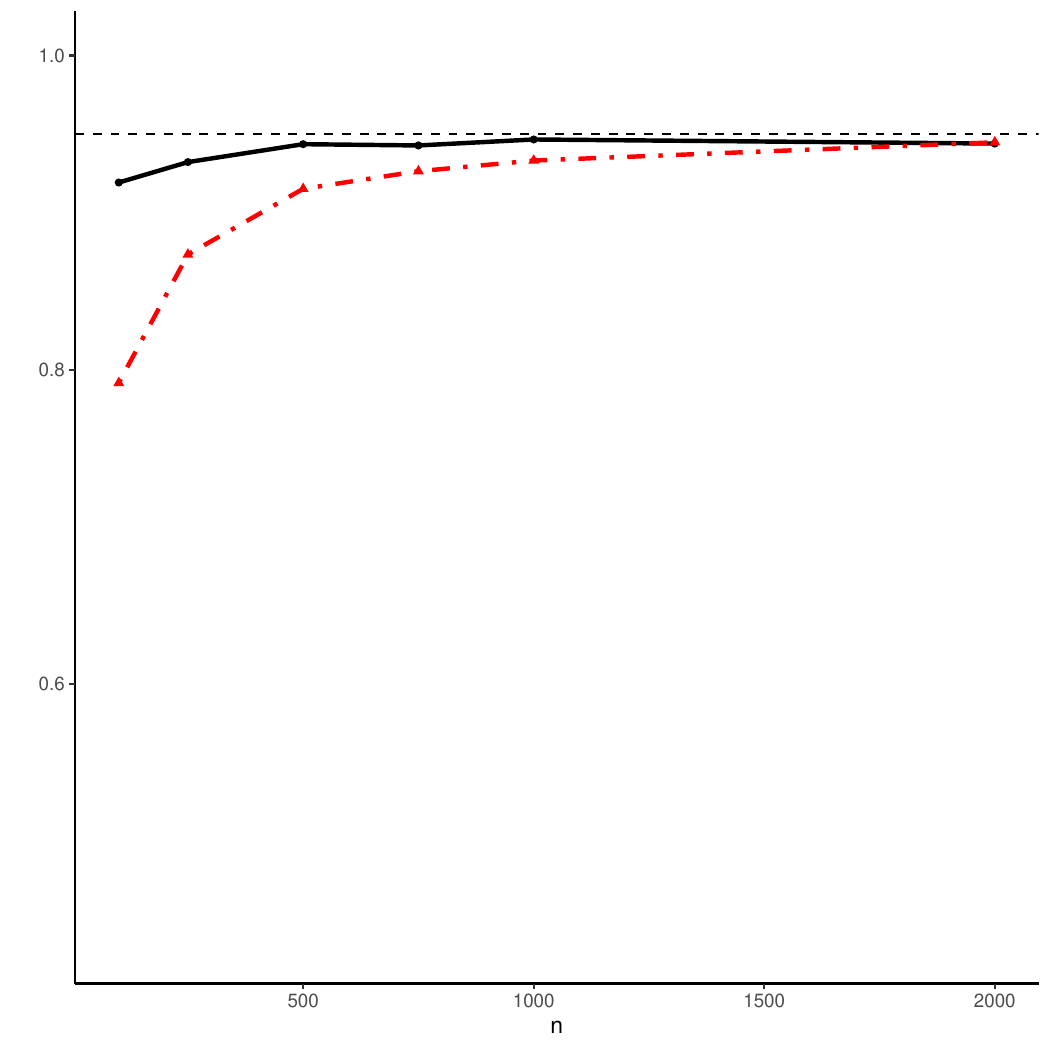}
		\subcaption{$\x=-0.6$}
	\end{subfigure}	
	\begin{subfigure}[b]{0.5\textwidth}
		\includegraphics[height=0.3\textheight,width=0.95\textwidth]{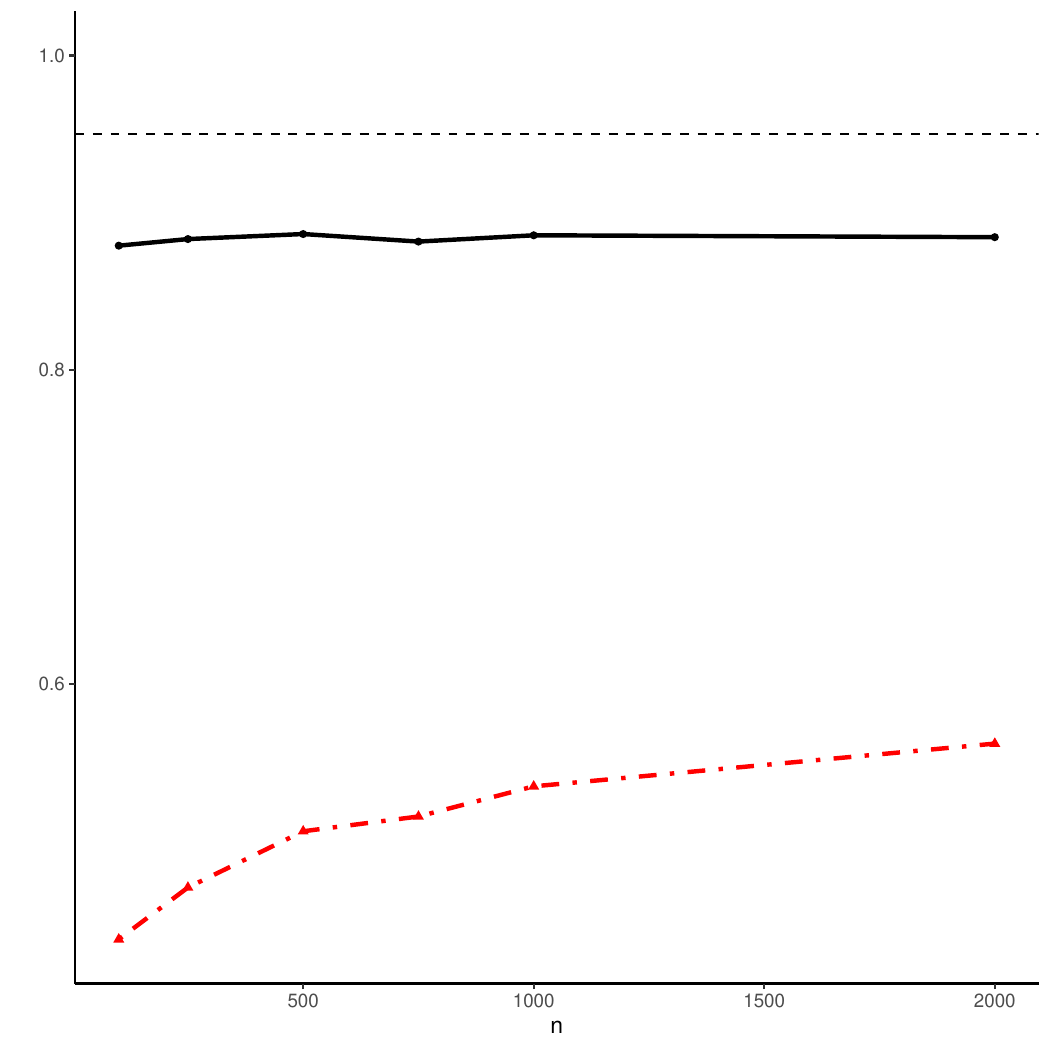}
		\subcaption{$\x=-0.2$}
	\end{subfigure}%	
	\begin{subfigure}[b]{0.5\textwidth}
		\includegraphics[height=0.3\textheight,width=0.95\textwidth]{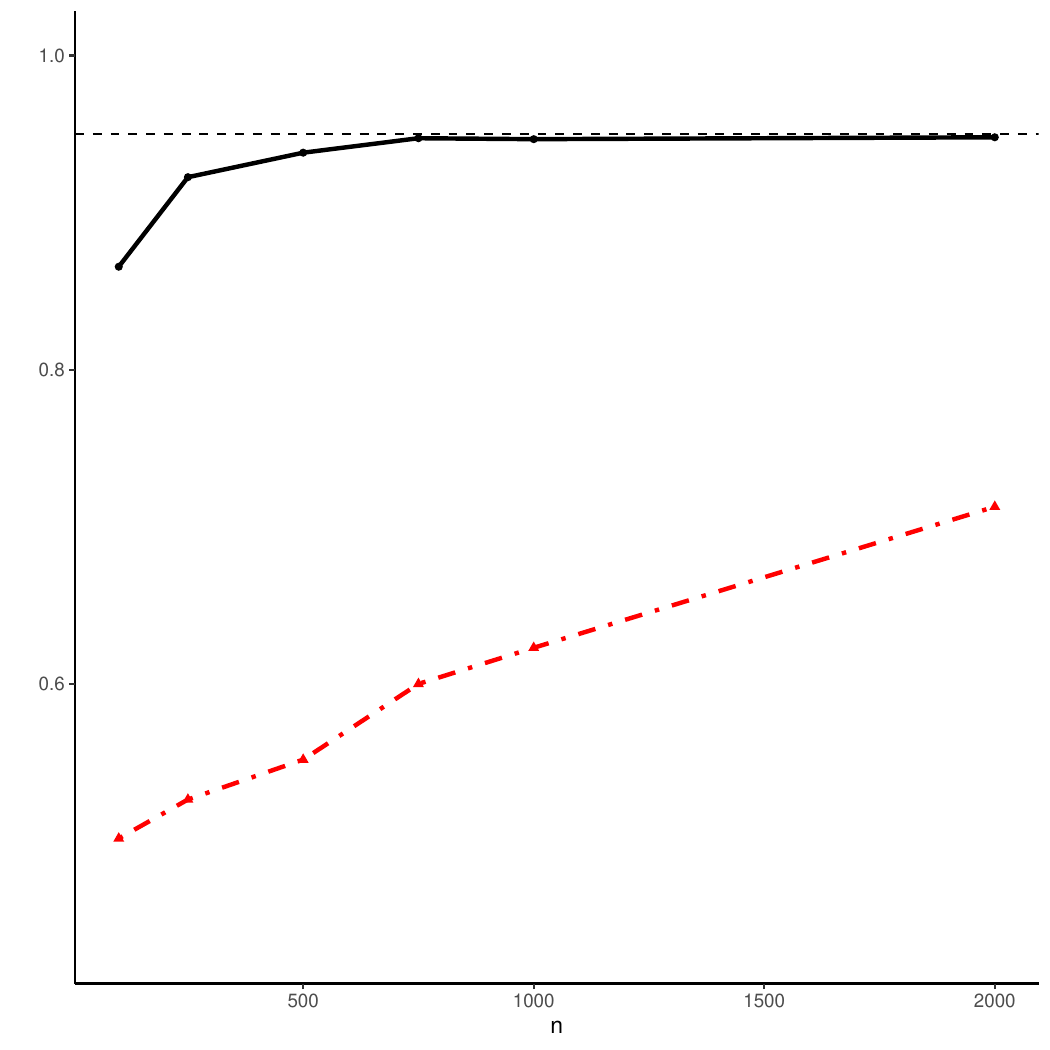}
		\subcaption{$\x=0.2$}
	\end{subfigure}	
	\begin{subfigure}[b]{0.5\textwidth}
		\includegraphics[height=0.3\textheight,width=0.95\textwidth]{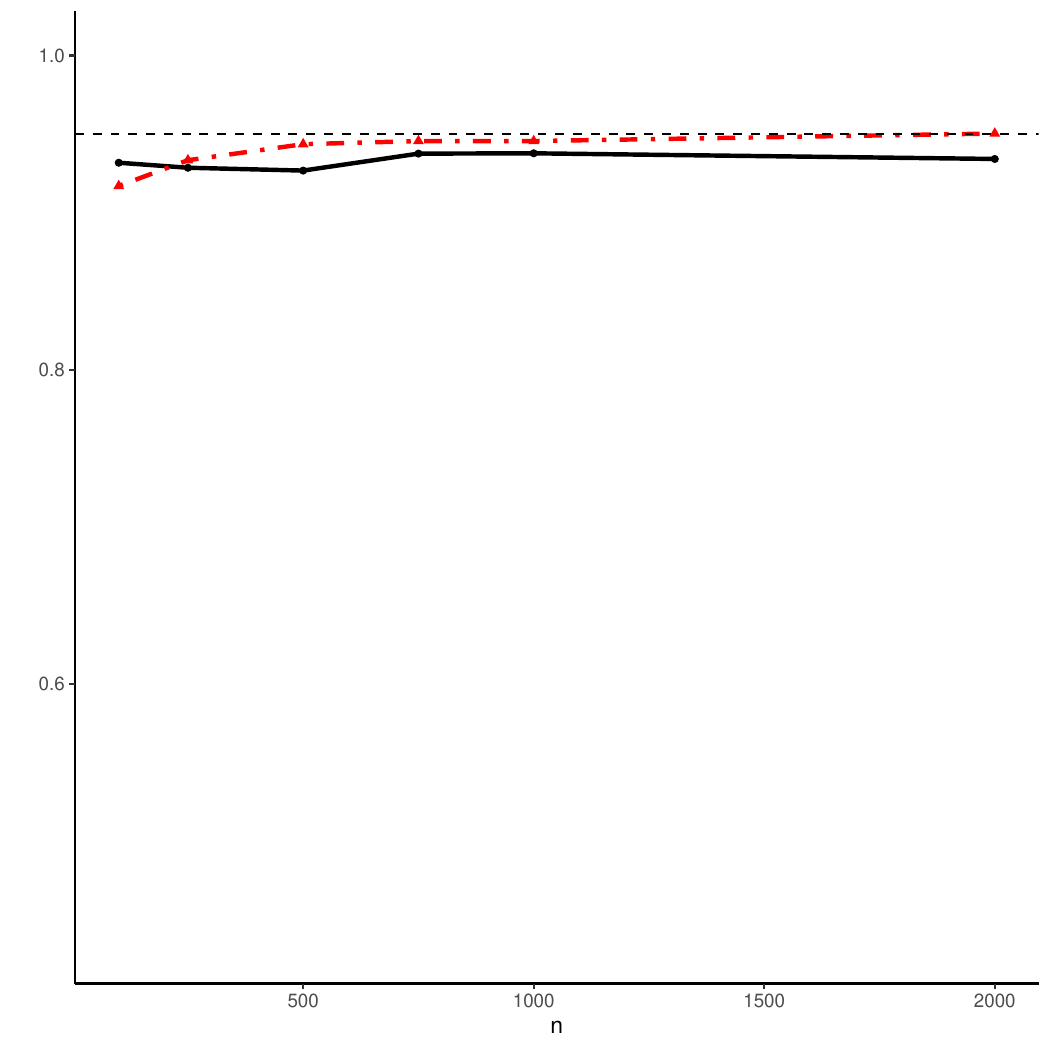}
		\subcaption{$\x=0.6$}
	\end{subfigure}%
	\begin{subfigure}[b]{0.5\textwidth}
		\includegraphics[height=0.3\textheight,width=0.95\textwidth]{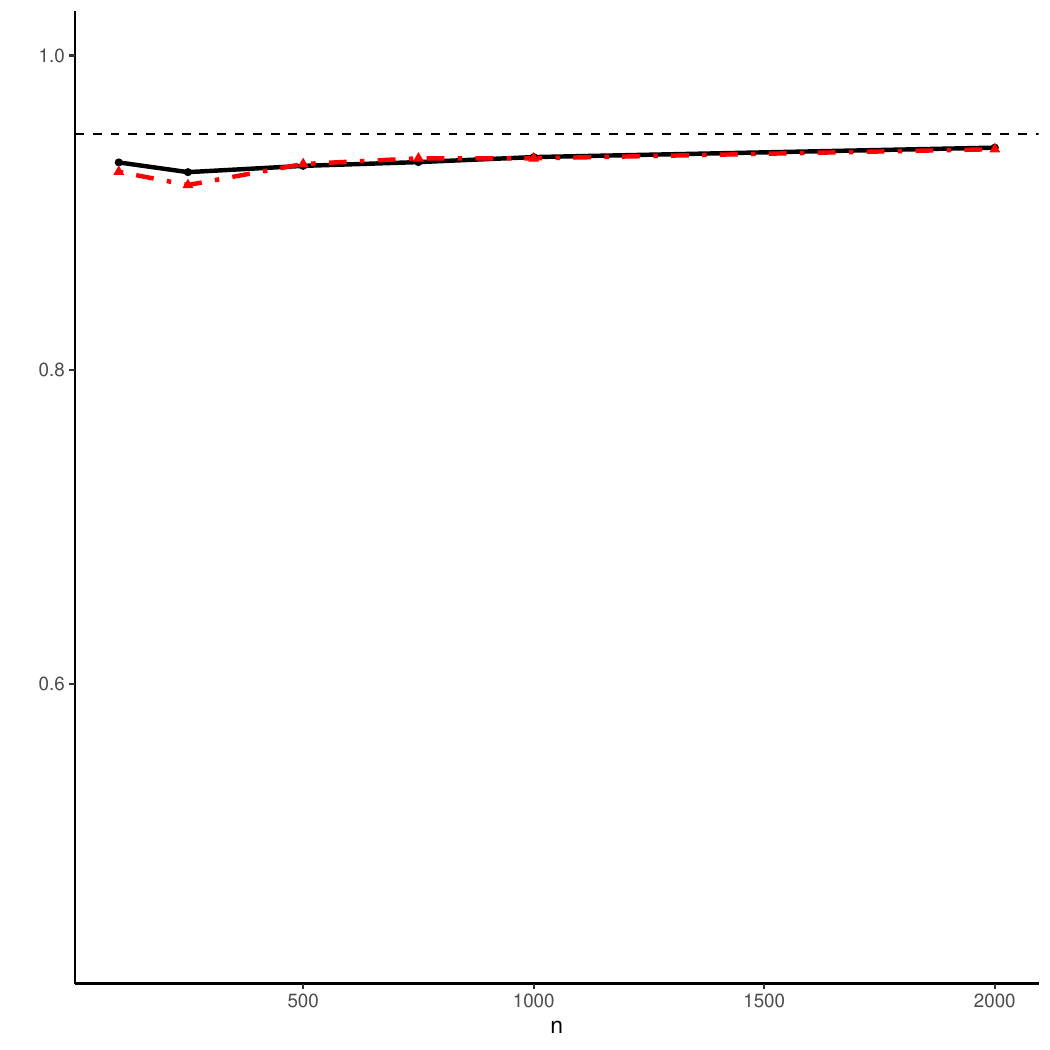}
		\subcaption{$\x=1$}
	\end{subfigure}%
	\begin{flushleft}\footnotesize Notes: \blackline Robust Bias Correction, \redline Undersmoothing
	\end{flushleft}
\end{figure}

%%% h_MSE
\clearpage
\begin{figure}[!htb]	
	\centering
	\caption{Empirical Coverage for 95\% Confidence Intervals\\
	Uniform Kernel, $\hat{h}_{\MSE}$, $\v=0$}
	\label{suppfig:ec_nu0_uni_hmse}	
	\begin{subfigure}[b]{0.5\textwidth}
		\includegraphics[height=0.3\textheight,width=0.95\textwidth]{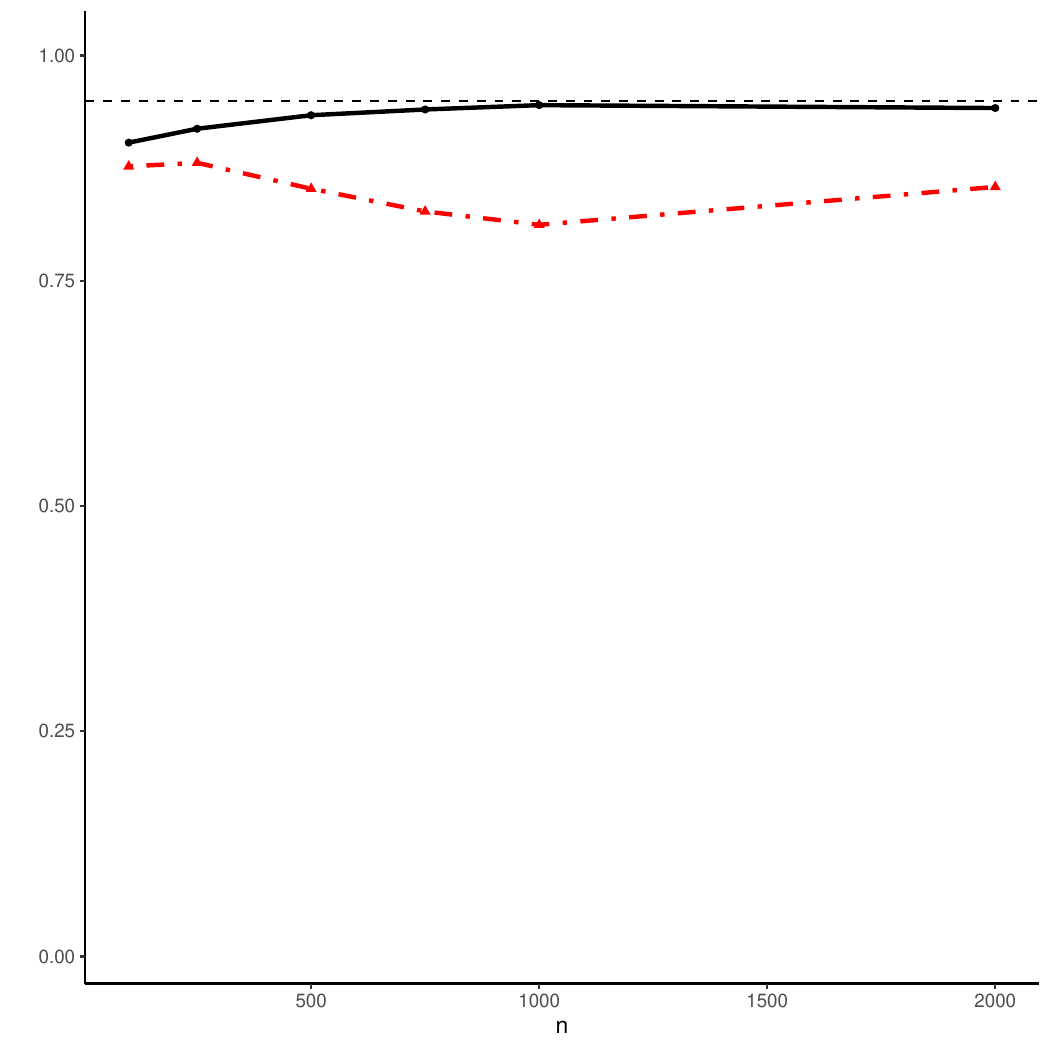}
		\subcaption{$\x=-1$}
	\end{subfigure}%
	\begin{subfigure}[b]{0.5\textwidth}
		\includegraphics[height=0.3\textheight,width=0.95\textwidth]{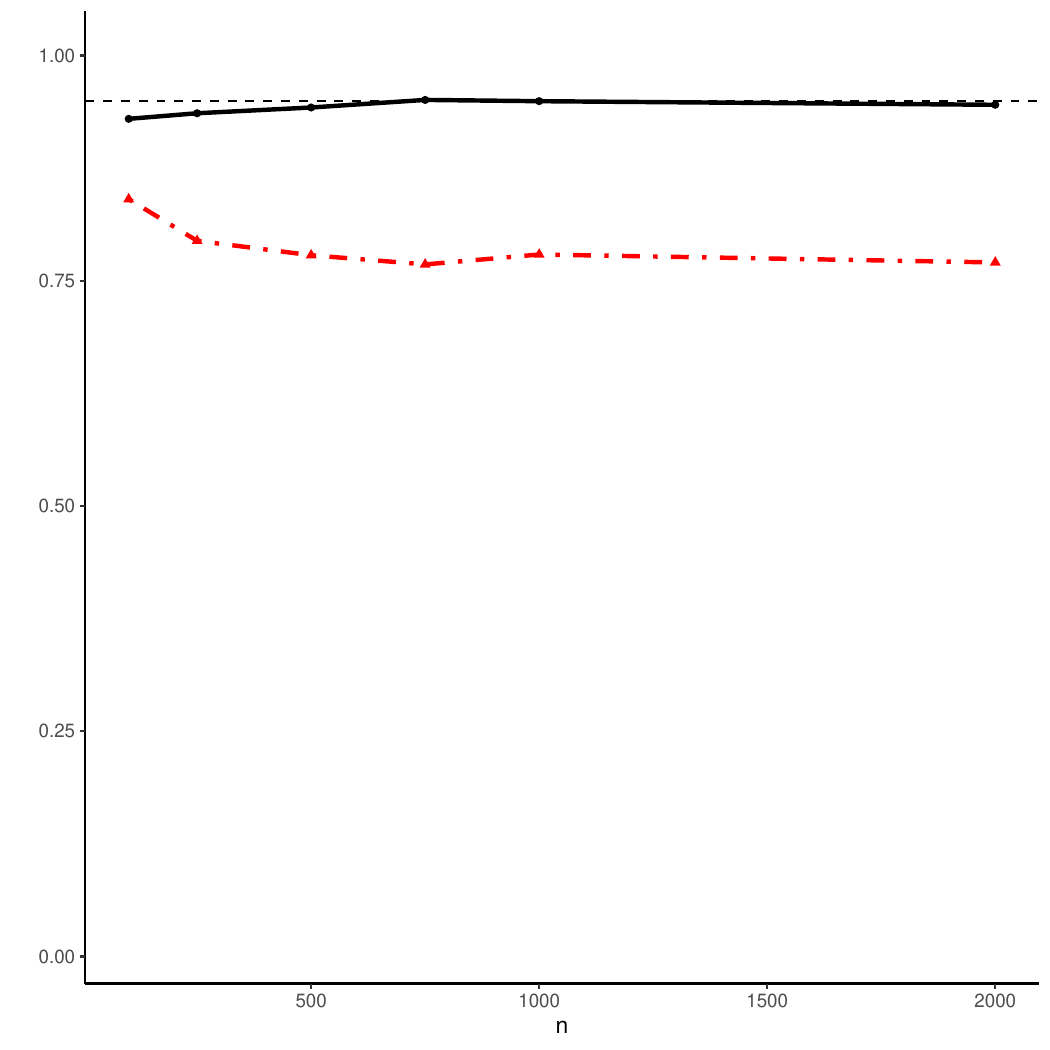}
		\subcaption{$\x=-0.6$}
	\end{subfigure}	
	\begin{subfigure}[b]{0.5\textwidth}
		\includegraphics[height=0.3\textheight,width=0.95\textwidth]{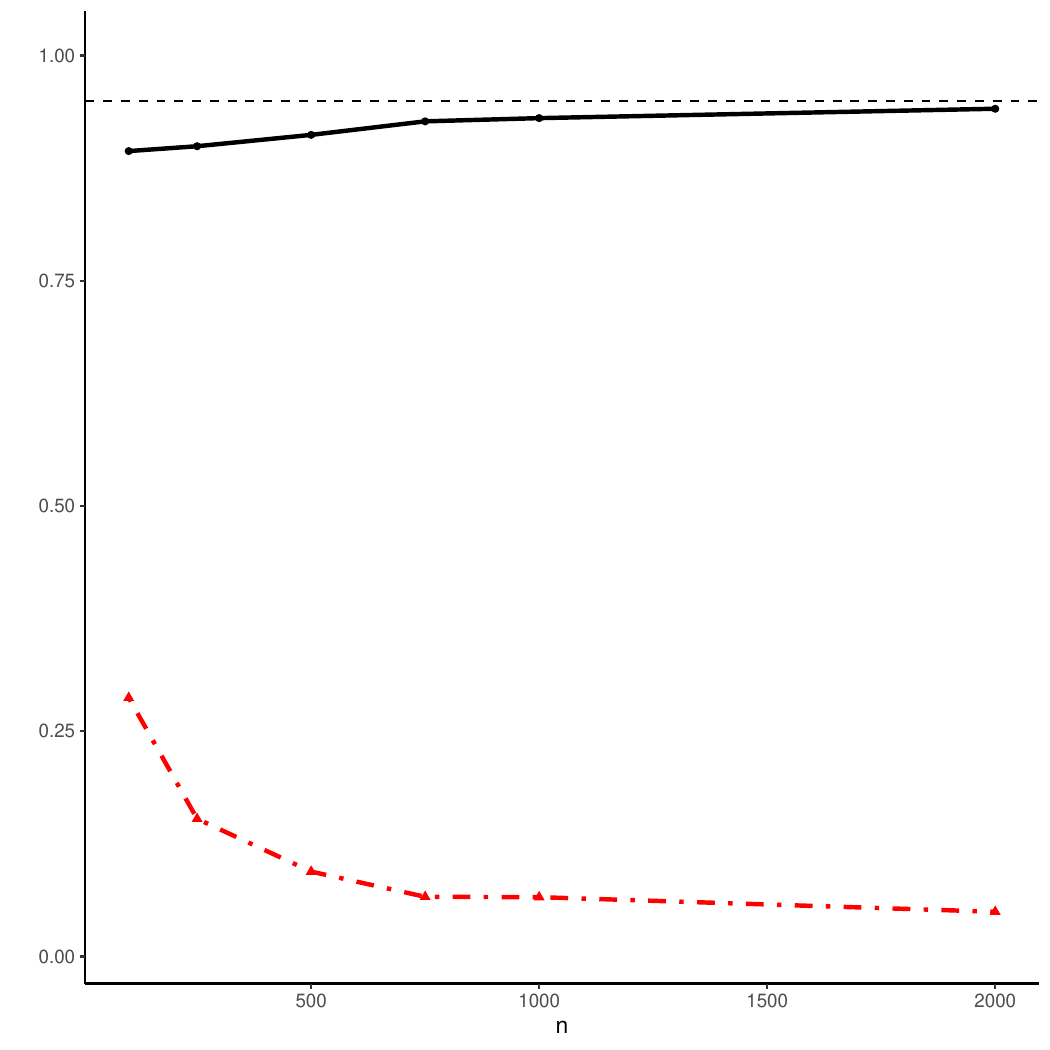}
		\subcaption{$\x=-0.2$}
	\end{subfigure}%	
	\begin{subfigure}[b]{0.5\textwidth}
		\includegraphics[height=0.3\textheight,width=0.95\textwidth]{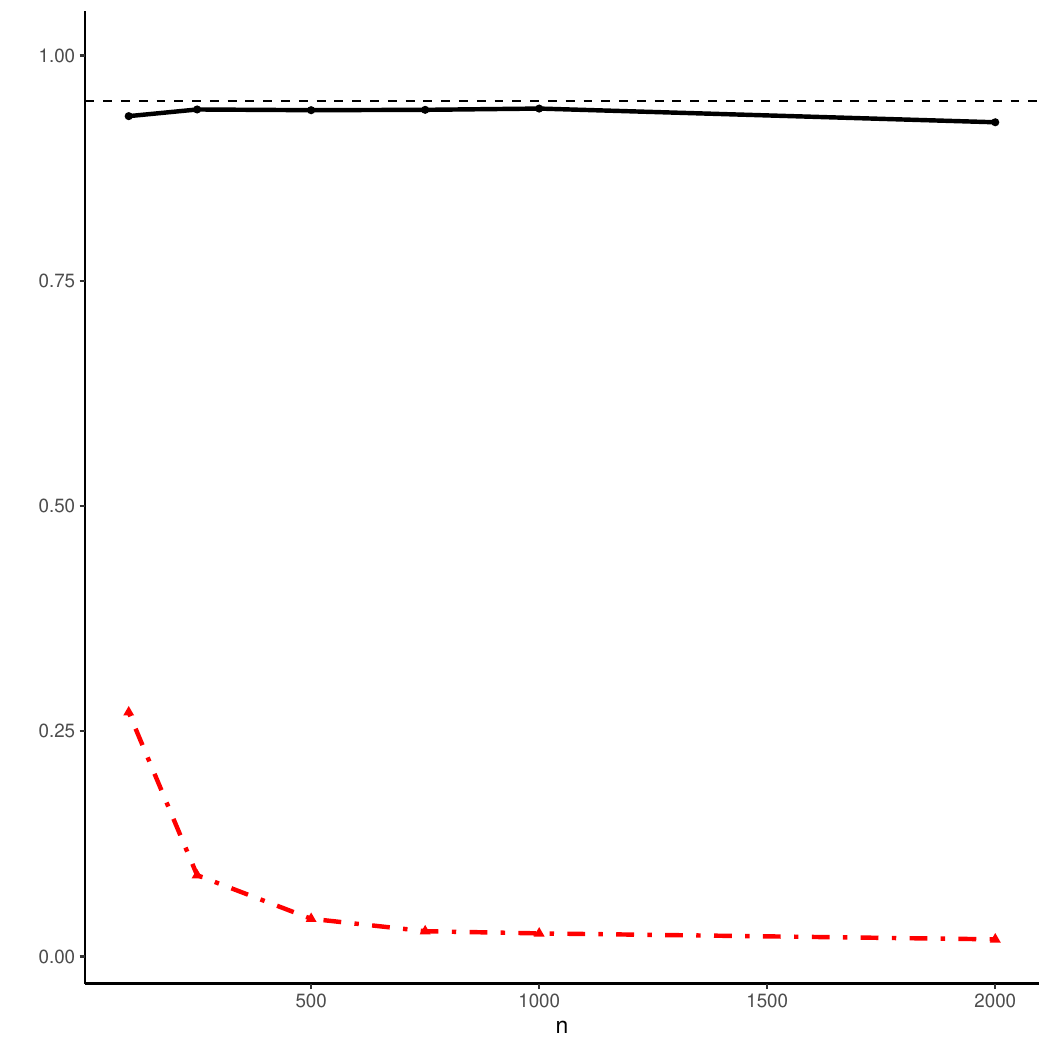}
		\subcaption{$\x=0.2$}
	\end{subfigure}	
	\begin{subfigure}[b]{0.5\textwidth}
		\includegraphics[height=0.3\textheight,width=0.95\textwidth]{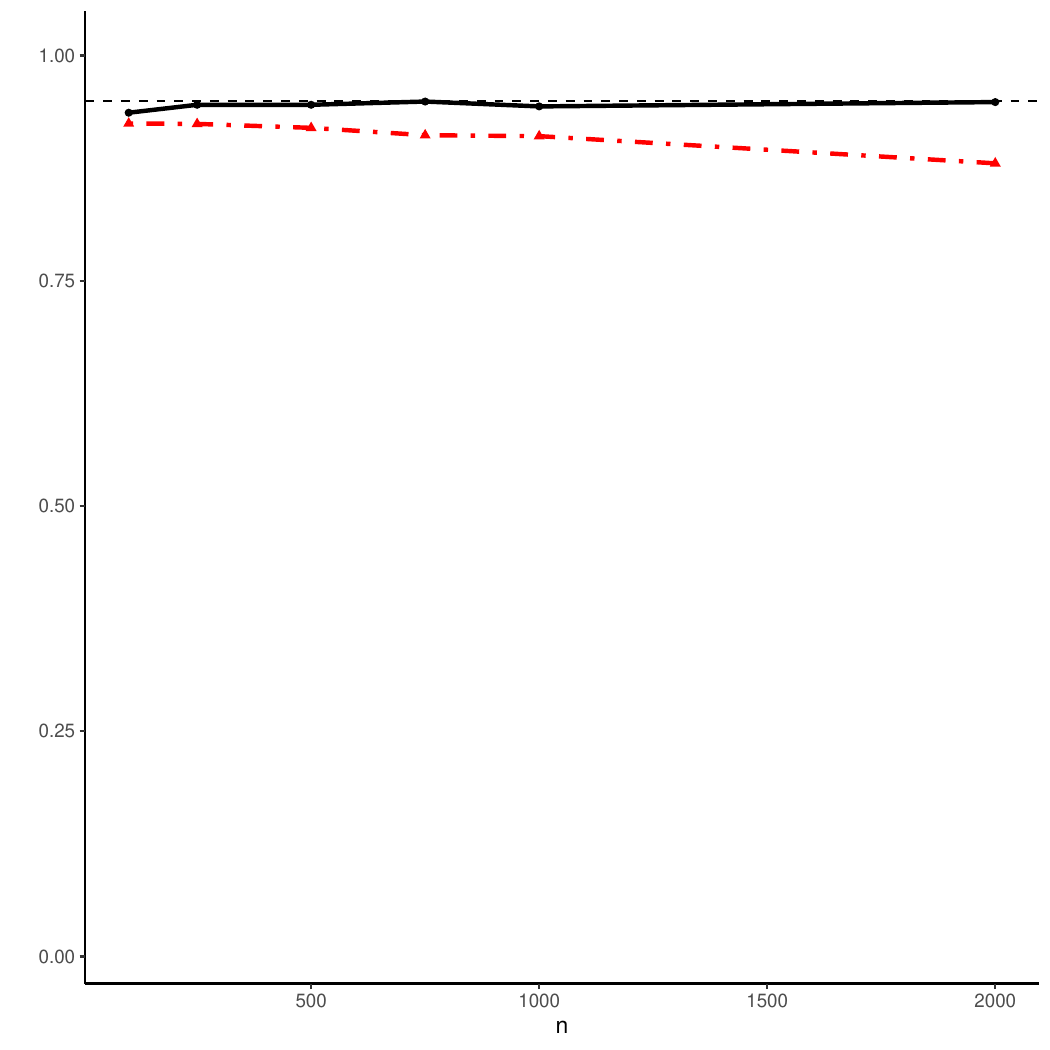}
		\subcaption{$\x=0.6$}
	\end{subfigure}%
	\begin{subfigure}[b]{0.5\textwidth}
		\includegraphics[height=0.3\textheight,width=0.95\textwidth]{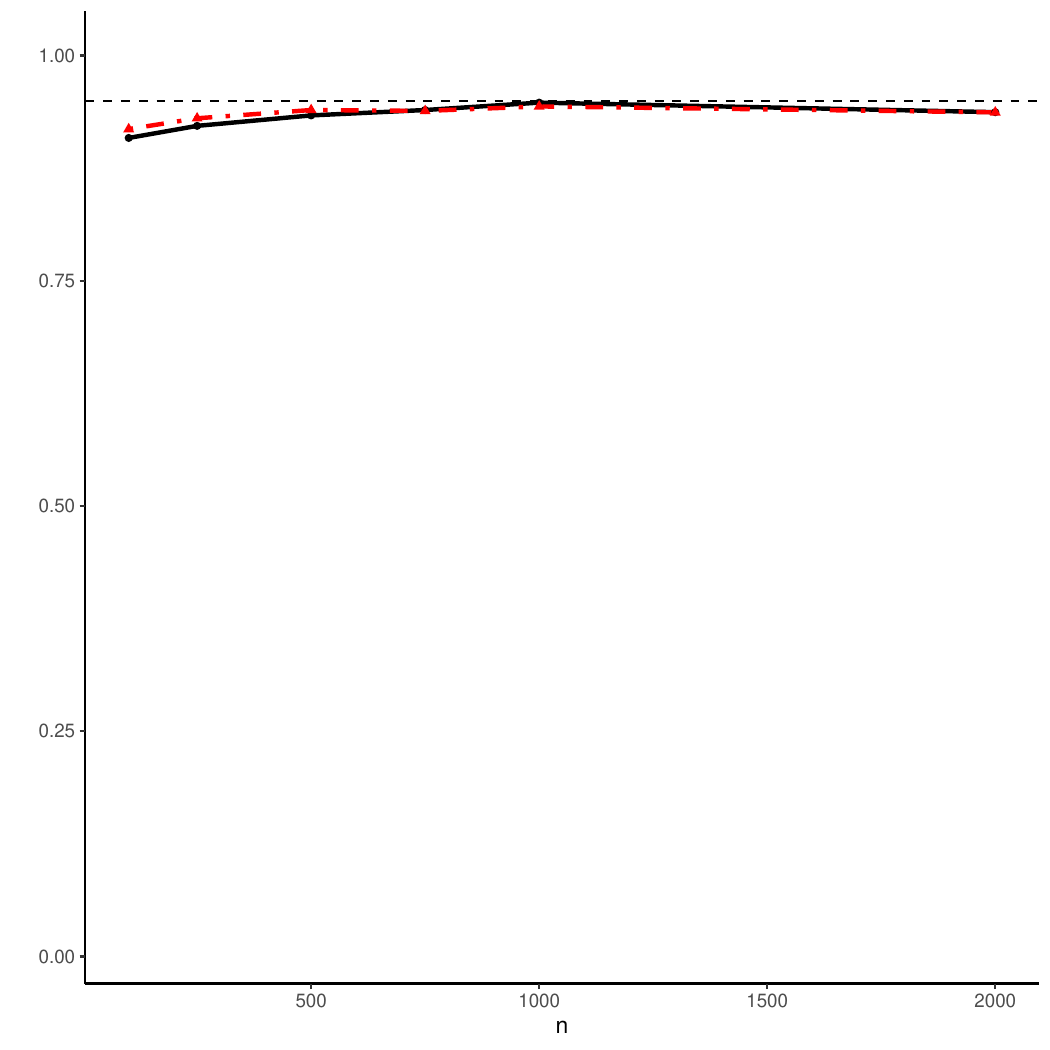}
		\subcaption{$\x=1$}
	\end{subfigure}%
	\begin{flushleft}\footnotesize Notes: \blackline Robust Bias Correction, \redline Undersmoothing
	\end{flushleft}
\end{figure}

\clearpage
\begin{figure}[!htb]
	\centering
	\caption{Empirical Coverage for 95\% Confidence Intervals\\
	Uniform Kernel, $\hat{h}_{\MSE}$, $\v=1$}
	\label{suppfig:ec_nu1_uni_hmse}	
	\begin{subfigure}[b]{0.5\textwidth}
		\includegraphics[height=0.3\textheight,width=0.95\textwidth]{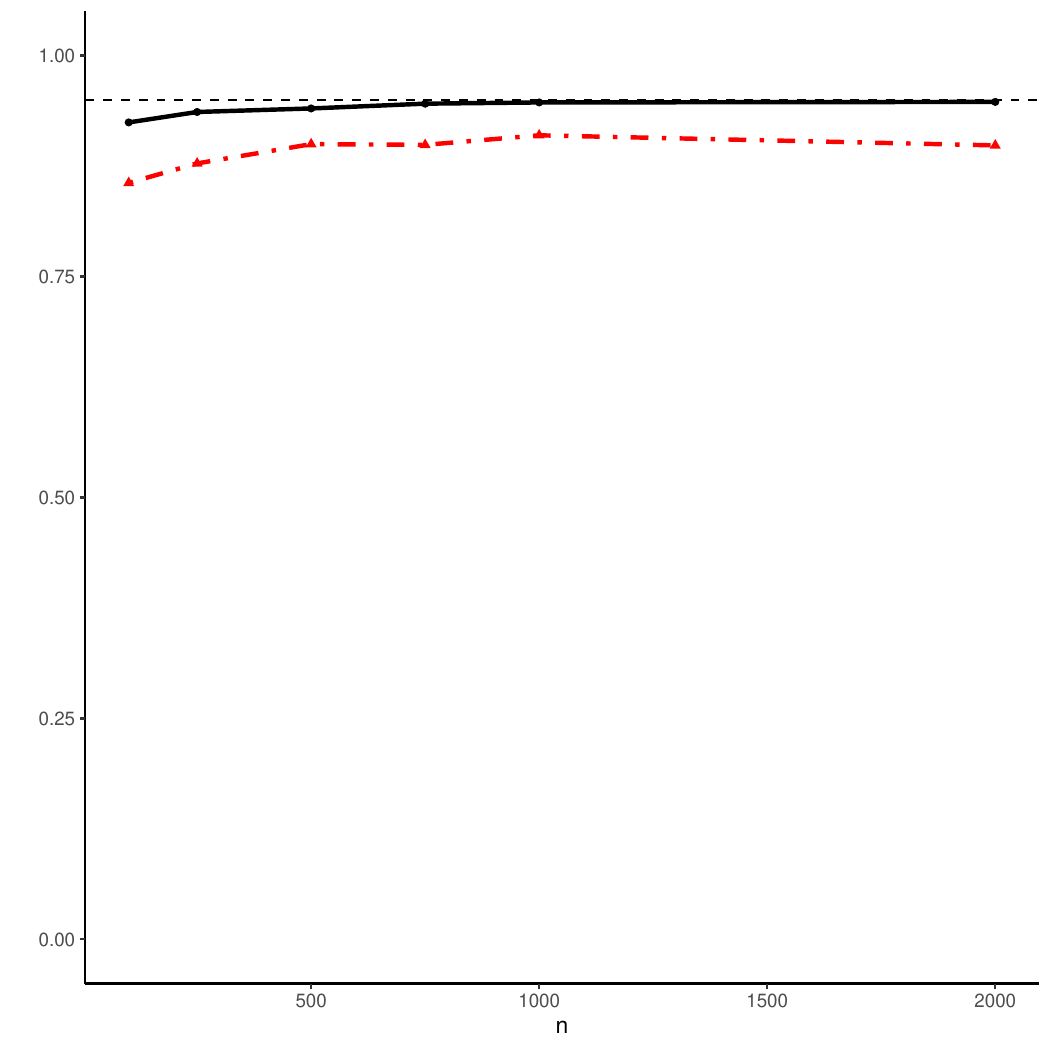}
		\subcaption{$\x=-1$}
	\end{subfigure}%
	\begin{subfigure}[b]{0.5\textwidth}
		\includegraphics[height=0.3\textheight,width=0.95\textwidth]{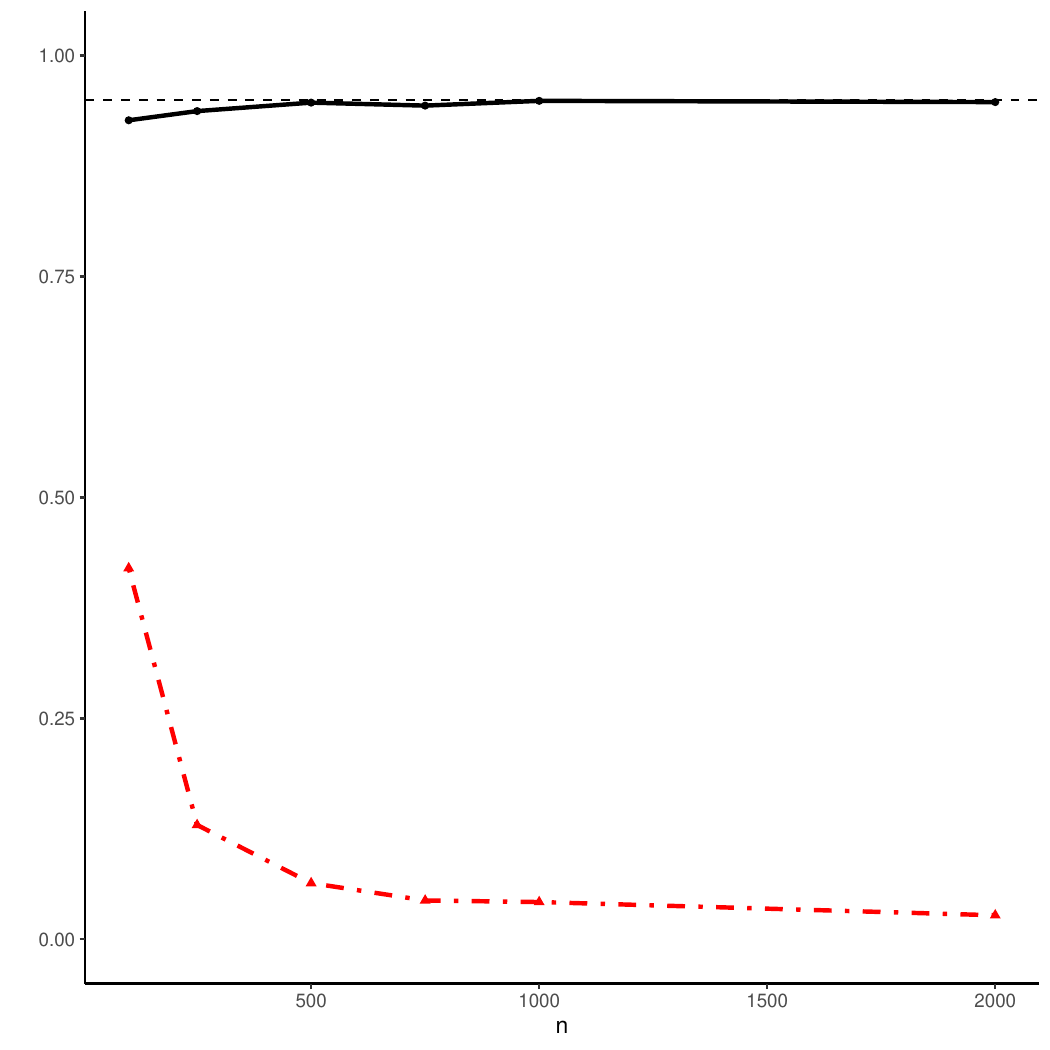}
		\subcaption{$\x=-0.6$}
	\end{subfigure}	
	\begin{subfigure}[b]{0.5\textwidth}
		\includegraphics[height=0.3\textheight,width=0.95\textwidth]{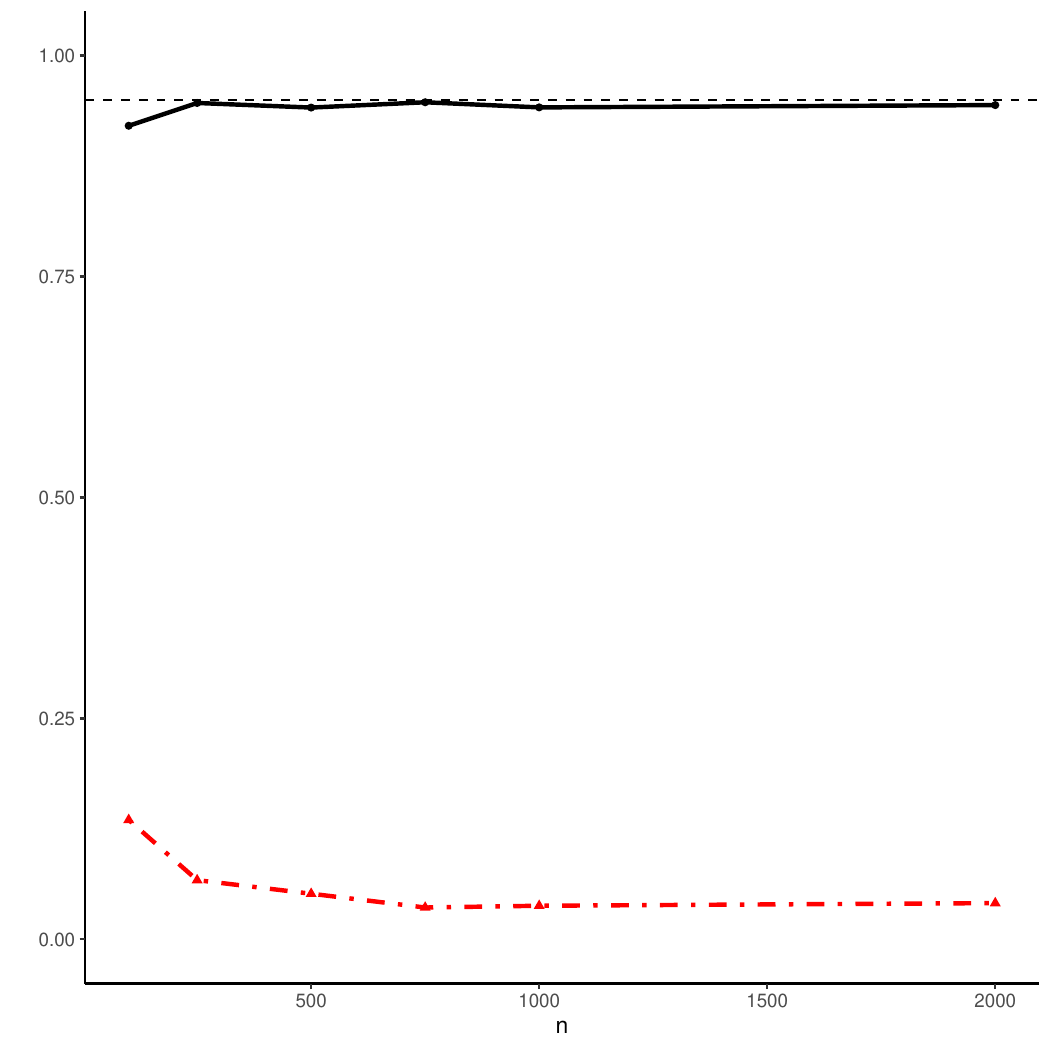}
		\subcaption{$\x=-0.2$}
	\end{subfigure}%	
	\begin{subfigure}[b]{0.5\textwidth}
		\includegraphics[height=0.3\textheight,width=0.95\textwidth]{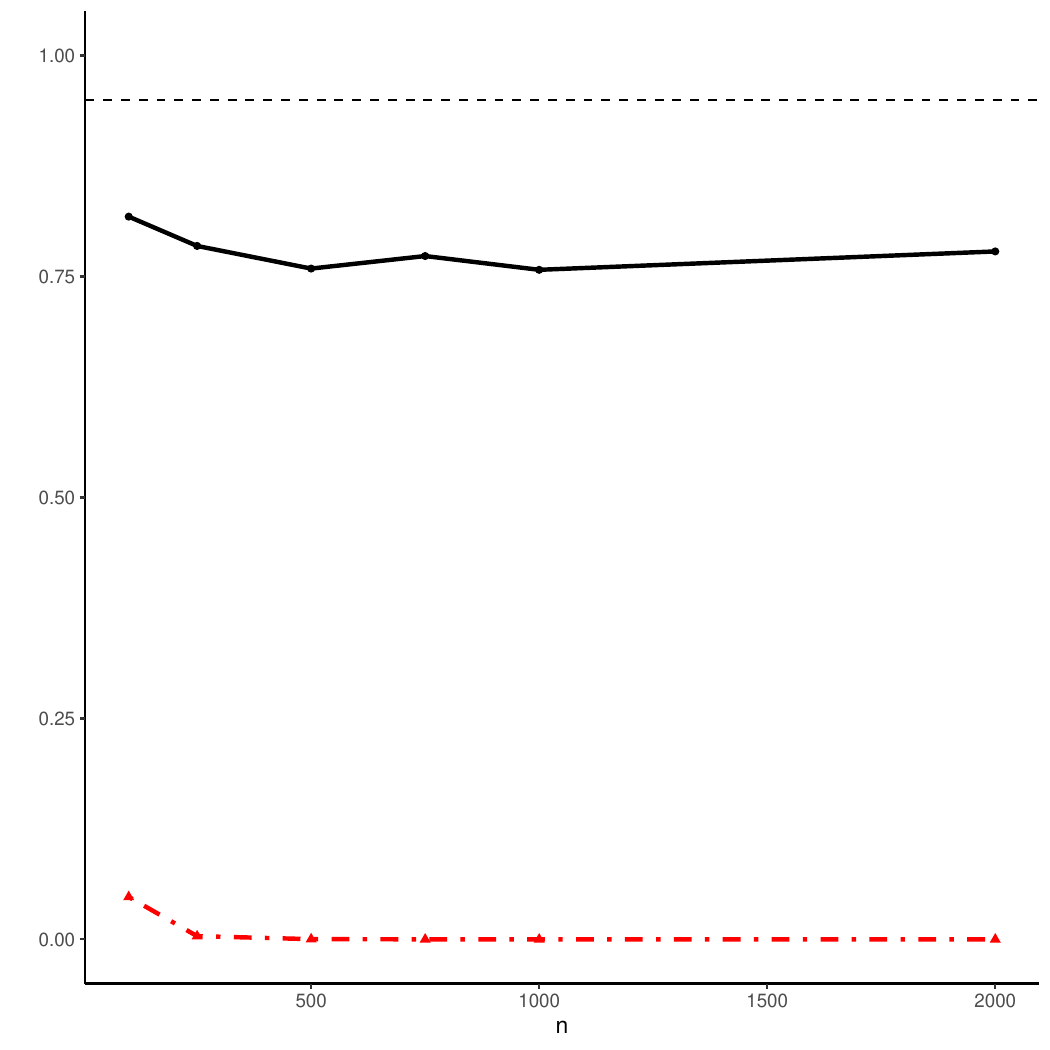}
		\subcaption{$\x=0.2$}
	\end{subfigure}	
	\begin{subfigure}[b]{0.5\textwidth}
		\includegraphics[height=0.3\textheight,width=0.95\textwidth]{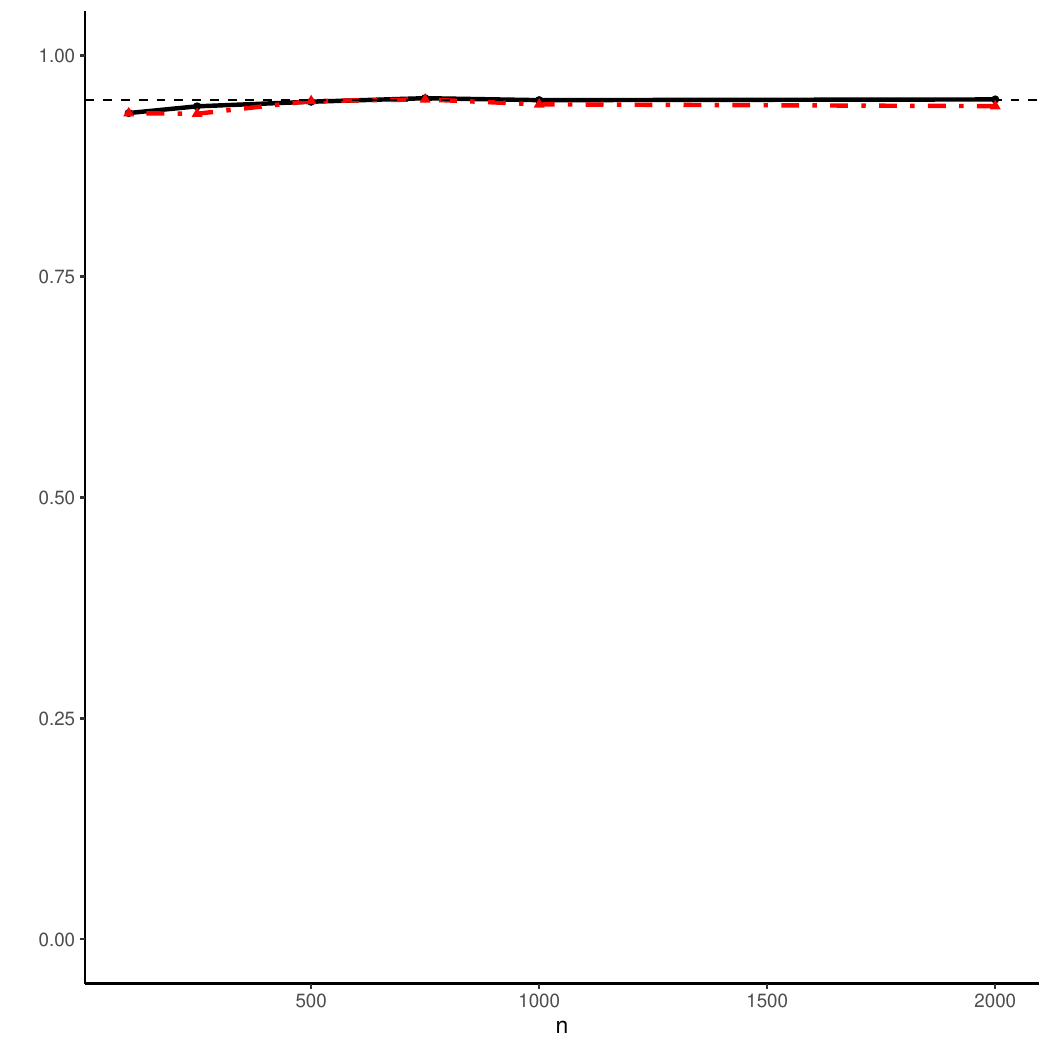}
		\subcaption{$\x=0.6$}
	\end{subfigure}%
	\begin{subfigure}[b]{0.5\textwidth}
		\includegraphics[height=0.3\textheight,width=0.95\textwidth]{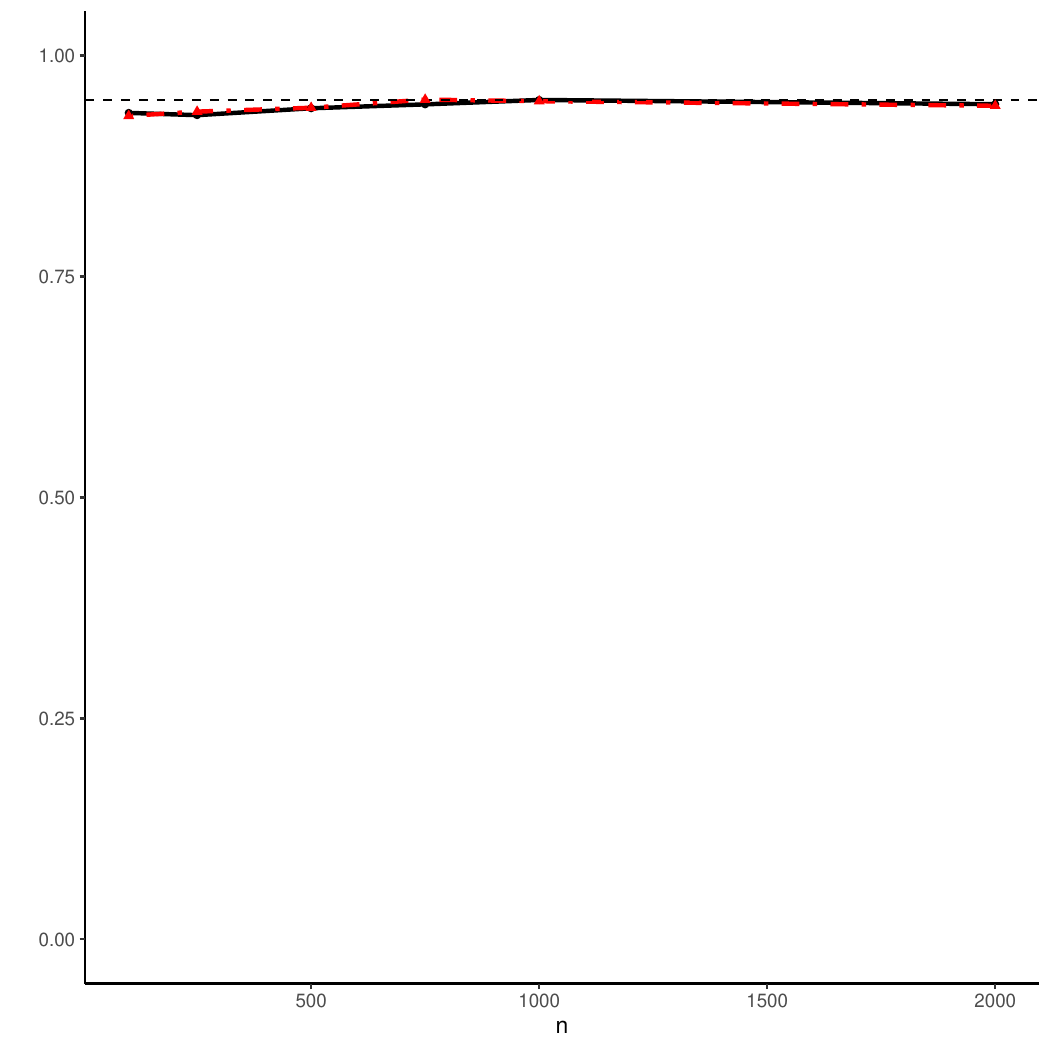}
		\subcaption{$\x=1$}
	\end{subfigure}%
	\begin{flushleft}\footnotesize Notes: \blackline Robust Bias Correction, \redline Undersmoothing
	\end{flushleft}
\end{figure}

%%%%%%%%%%%%%%%%%%%%%%%%% IL EPA
\clearpage

\begin{figure}[!htb]
	\centering
	\caption{Average Interval Length for 95\% Confidence Intervals\\
	Epanechnikov Kernel, $\v=0$}
	\label{suppfig:il_nu0_epa}
	\begin{subfigure}[b]{0.5\textwidth}
		\includegraphics[height=0.3\textheight,width=0.95\textwidth]{simuls/output/il_kepa_p1_d0_x1.pdf}
		\subcaption{$\x=-1$}
	\end{subfigure}%
	\begin{subfigure}[b]{0.5\textwidth}
		\includegraphics[height=0.3\textheight,width=0.95\textwidth]{simuls/output/il_kepa_p1_d0_x2.pdf}
		\subcaption{$\x=-0.6$}
	\end{subfigure}	
	\begin{subfigure}[b]{0.5\textwidth}
		\includegraphics[height=0.3\textheight,width=0.95\textwidth]{simuls/output/il_kepa_p1_d0_x3.pdf}
		\subcaption{$\x=-0.2$}
	\end{subfigure}%	
	\begin{subfigure}[b]{0.5\textwidth}
		\includegraphics[height=0.3\textheight,width=0.95\textwidth]{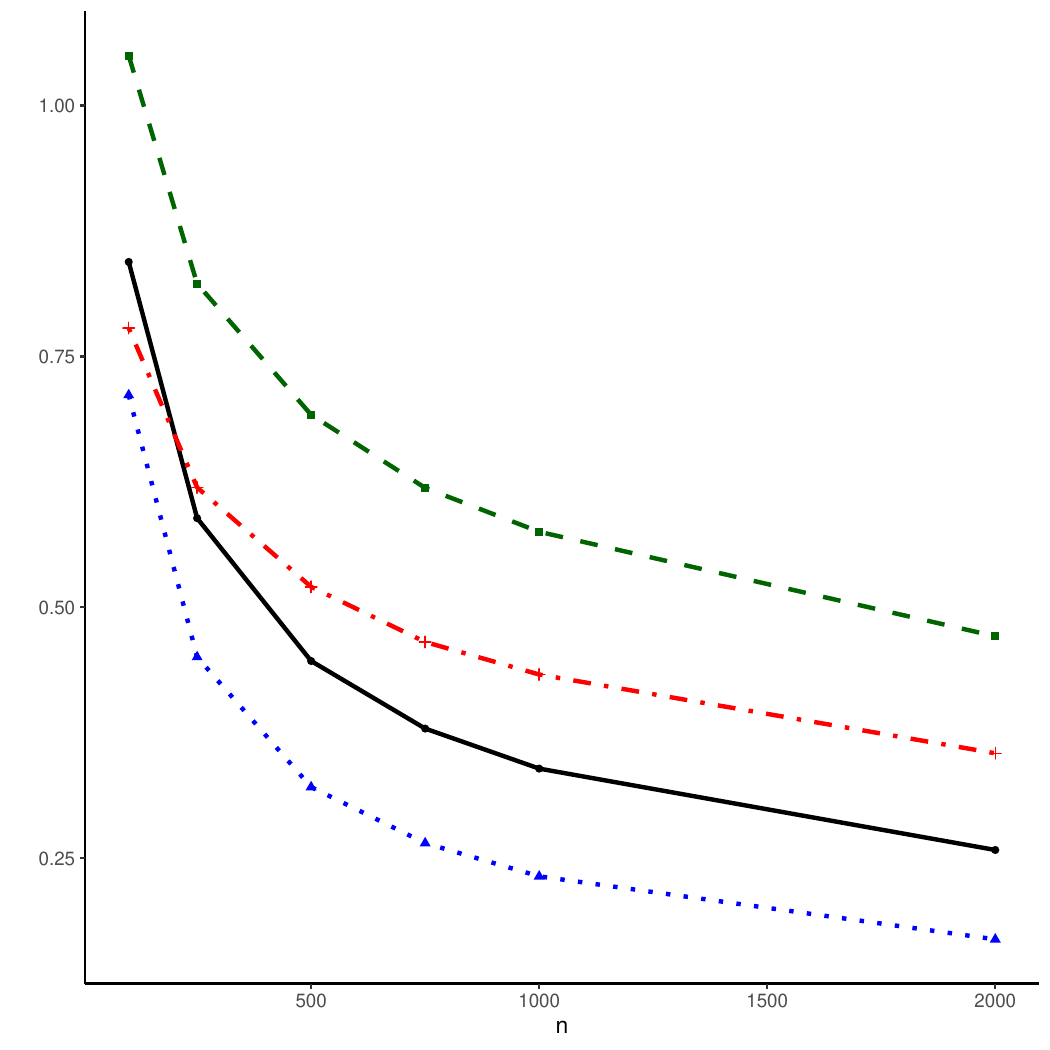}
		\subcaption{$\x=0.2$}
	\end{subfigure}	
	\begin{subfigure}[b]{0.5\textwidth}
		\includegraphics[height=0.3\textheight,width=0.95\textwidth]{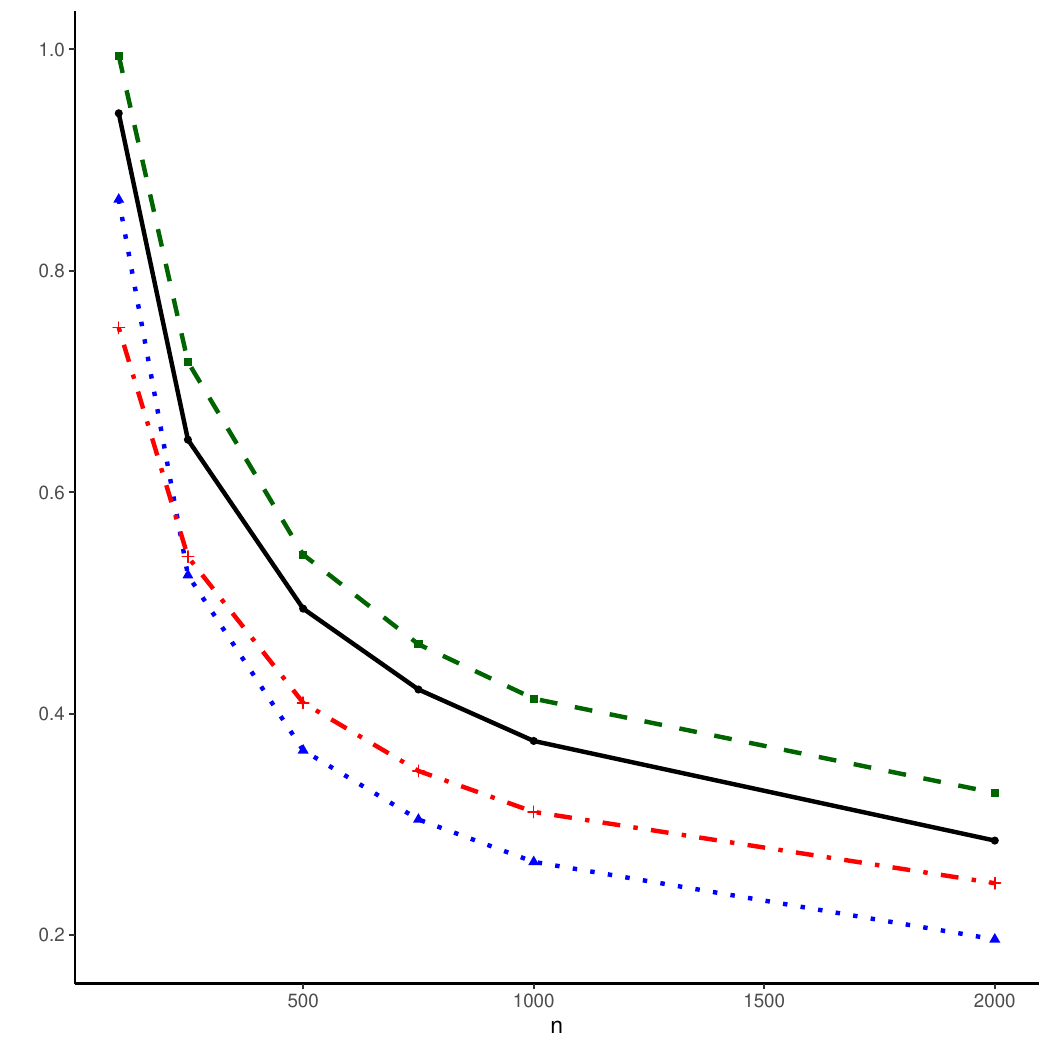}
		\subcaption{$\x=0.6$}
	\end{subfigure}%
	\begin{subfigure}[b]{0.5\textwidth}
		\includegraphics[height=0.3\textheight,width=0.95\textwidth]{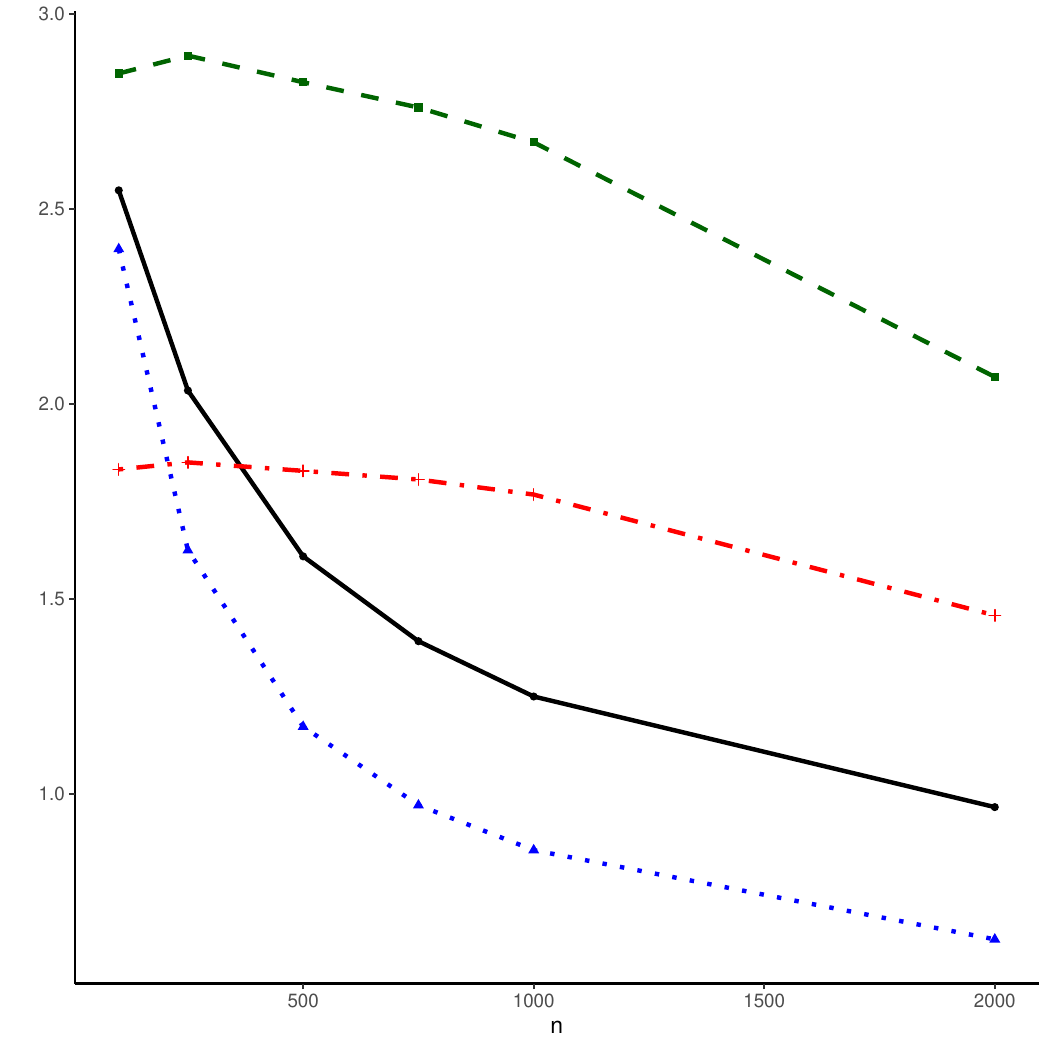}
		\subcaption{$\x=1$}
	\end{subfigure}%
	\begin{flushleft}
\footnotesize Notes: \blackline $\irbc(\hat{h}_{\RBC})$, \blueline $\irbc(\hat{h}_{\MSE})$, \greenline   $\irbc(\hat{h}_{\US})$,  \redline$\ius(\hat{h}_{\US})$
\end{flushleft}
\end{figure}

\clearpage
\begin{figure}[!htb]
	\centering
	\caption{Average Interval Length for 95\% Confidence Intervals\\
			Epanechnikov Kernel, $\v=1$}
\label{suppfig:il_nu1_epa}	
	\begin{subfigure}[b]{0.5\textwidth}
		\includegraphics[height=0.3\textheight,width=0.95\textwidth]{simuls/output/il_kepa_p2_d1_x1.pdf}
		\subcaption{$\x=-1$}
	\end{subfigure}%
	\begin{subfigure}[b]{0.5\textwidth}
		\includegraphics[height=0.3\textheight,width=0.95\textwidth]{simuls/output/il_kepa_p2_d1_x2.pdf}
		\subcaption{$\x=-0.6$}
	\end{subfigure}	
	\begin{subfigure}[b]{0.5\textwidth}
		\includegraphics[height=0.3\textheight,width=0.95\textwidth]{simuls/output/il_kepa_p2_d1_x3.pdf}
		\subcaption{$\x=-0.2$}
	\end{subfigure}%	
	\begin{subfigure}[b]{0.5\textwidth}
		\includegraphics[height=0.3\textheight,width=0.95\textwidth]{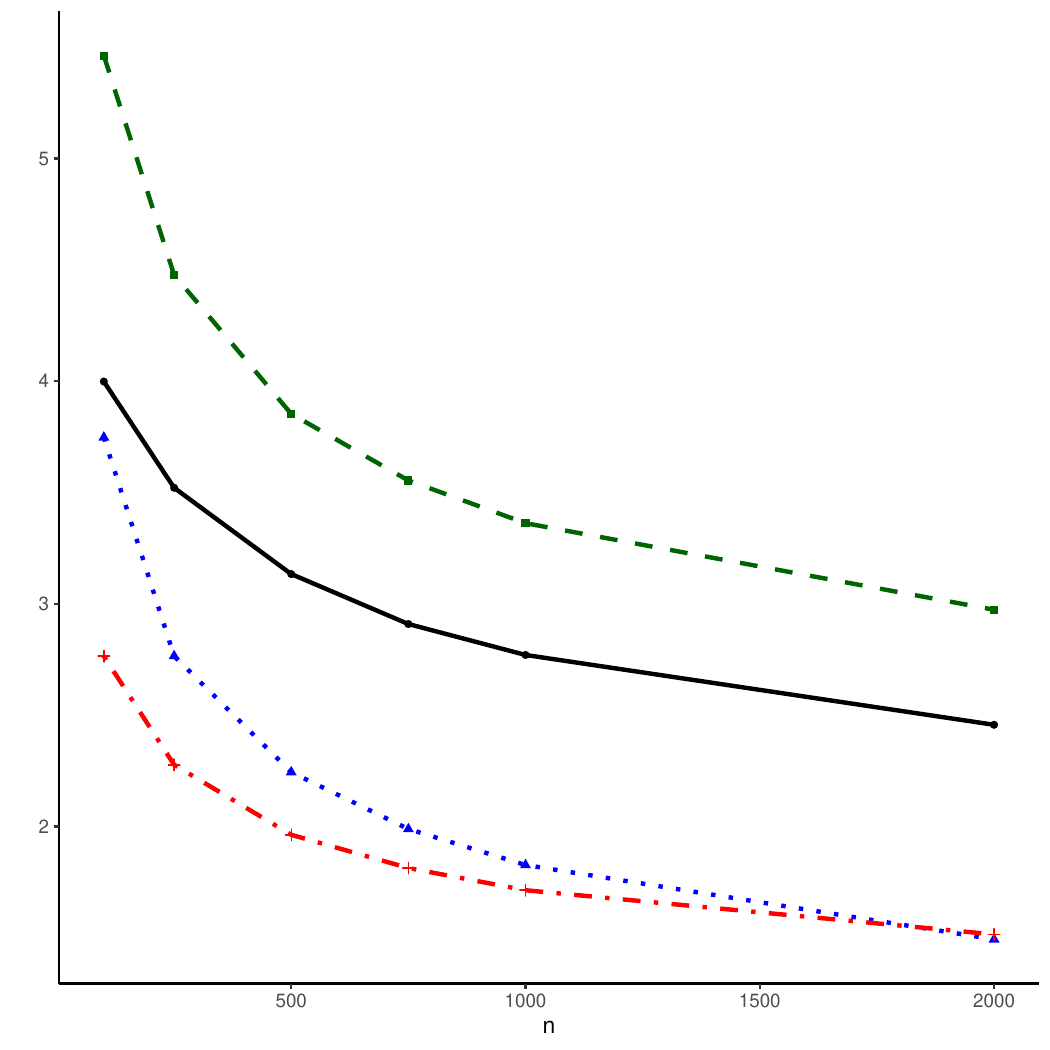}
		\subcaption{$\x=0.2$}
	\end{subfigure}	
	\begin{subfigure}[b]{0.5\textwidth}
		\includegraphics[height=0.3\textheight,width=0.95\textwidth]{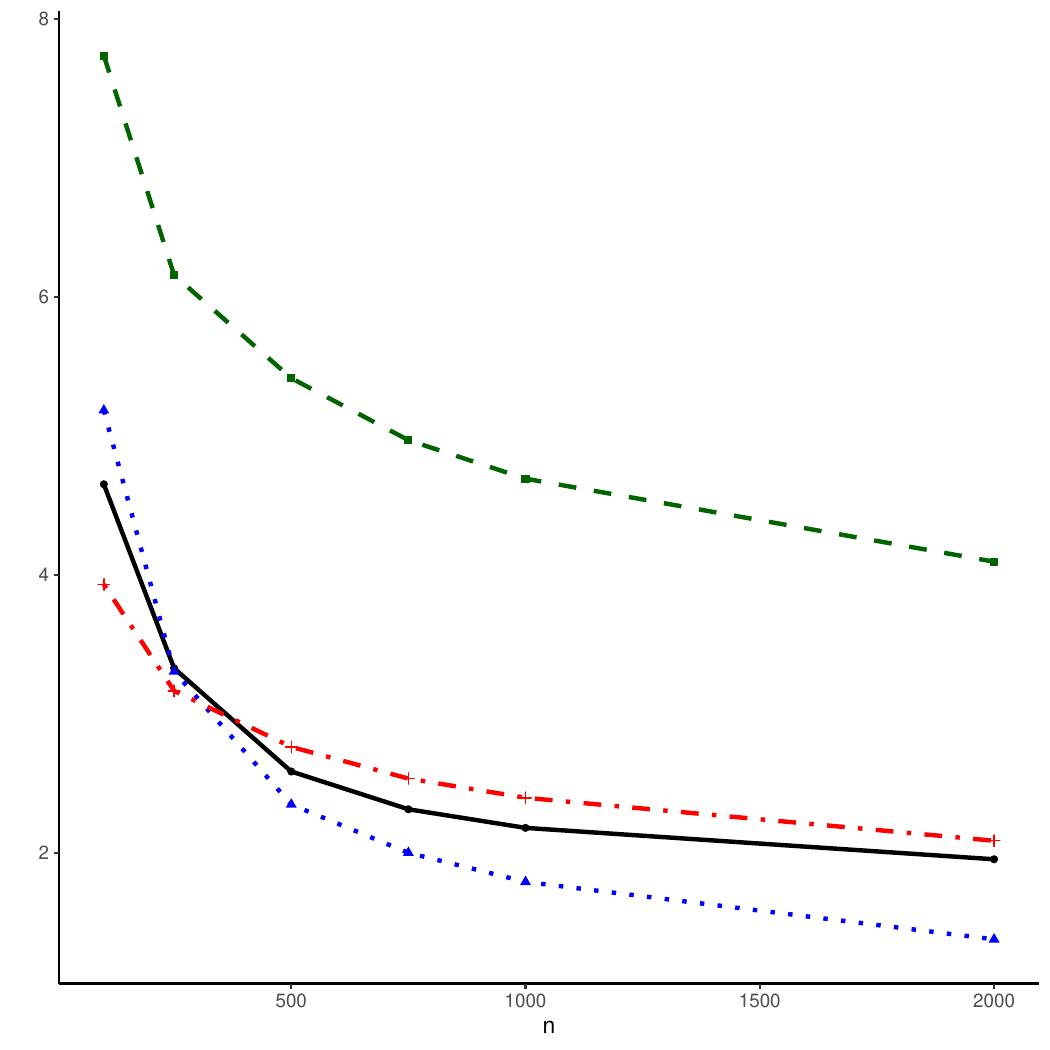}
		\subcaption{$\x=0.6$}
	\end{subfigure}%
	\begin{subfigure}[b]{0.5\textwidth}
		\includegraphics[height=0.3\textheight,width=0.95\textwidth]{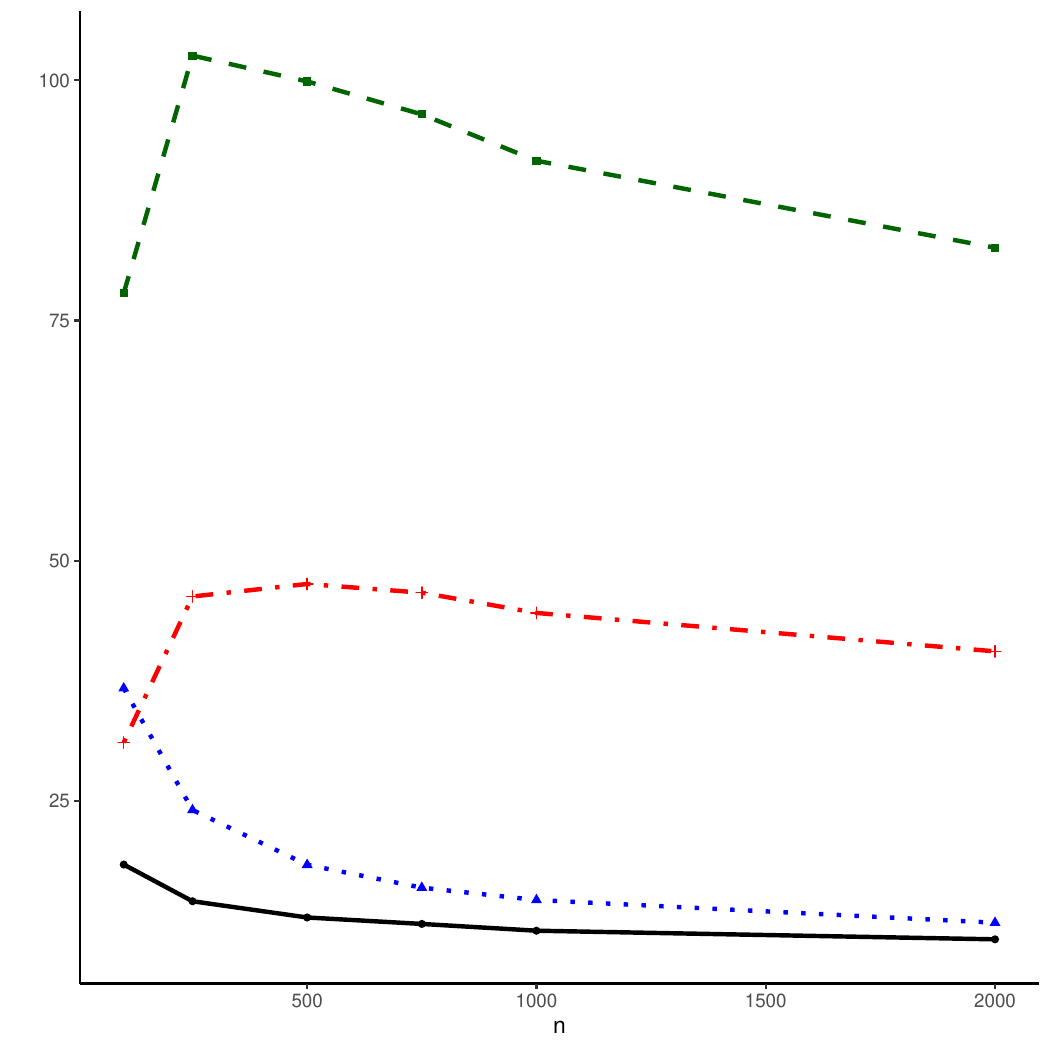}
		\subcaption{$\x=1$}
	\end{subfigure}%
	\begin{flushleft}
\footnotesize Notes: \blackline $\irbc(\hat{h}_{\RBC})$, \blueline $\irbc(\hat{h}_{\MSE})$, \greenline   $\irbc(\hat{h}_{\US})$,  \redline$\ius(\hat{h}_{\US})$
\end{flushleft}
\end{figure}

%%%%%%%%%%%%%%%%%%%%%%%%% IL UNI
\clearpage

\begin{figure}[!htb]
	\centering
	\caption{Average Interval Length for 95\% Confidence Intervals\\
	Uniform Kernel, $\v=0$}
	\label{suppfig:il_nu0_uni}
	\begin{subfigure}[b]{0.5\textwidth}
		\includegraphics[height=0.3\textheight,width=0.95\textwidth]{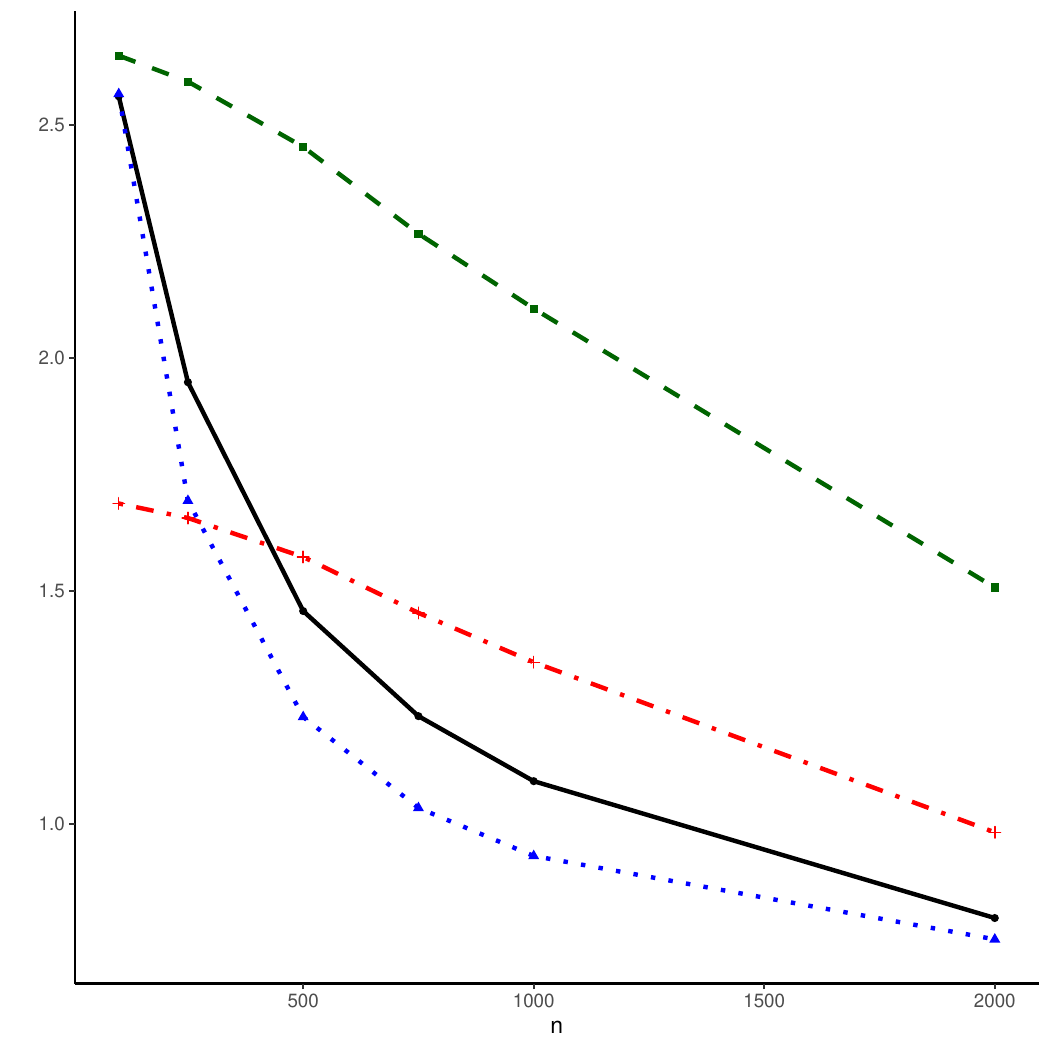}
		\subcaption{$\x=-1$}
	\end{subfigure}%
	\begin{subfigure}[b]{0.5\textwidth}
		\includegraphics[height=0.3\textheight,width=0.95\textwidth]{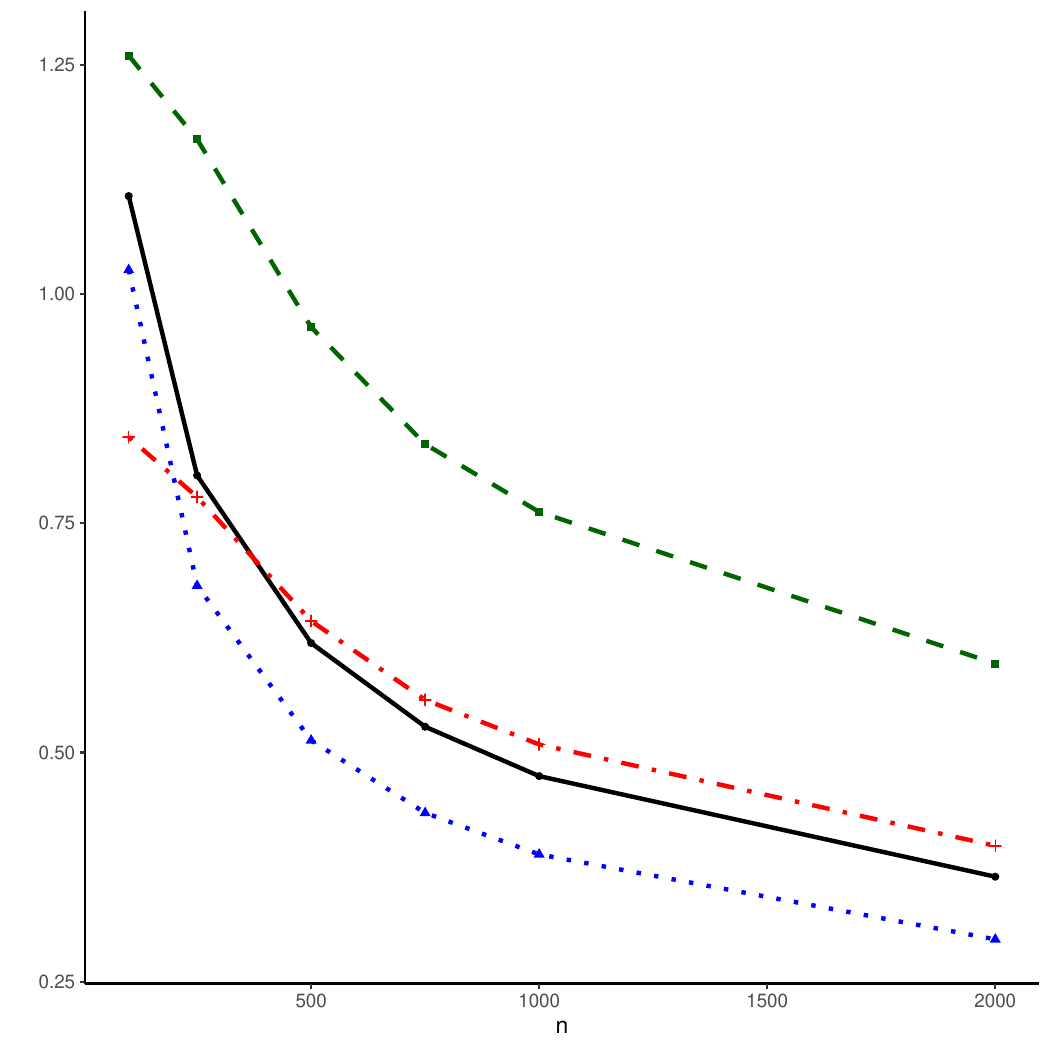}
		\subcaption{$\x=-0.6$}
	\end{subfigure}	
	\begin{subfigure}[b]{0.5\textwidth}
		\includegraphics[height=0.3\textheight,width=0.95\textwidth]{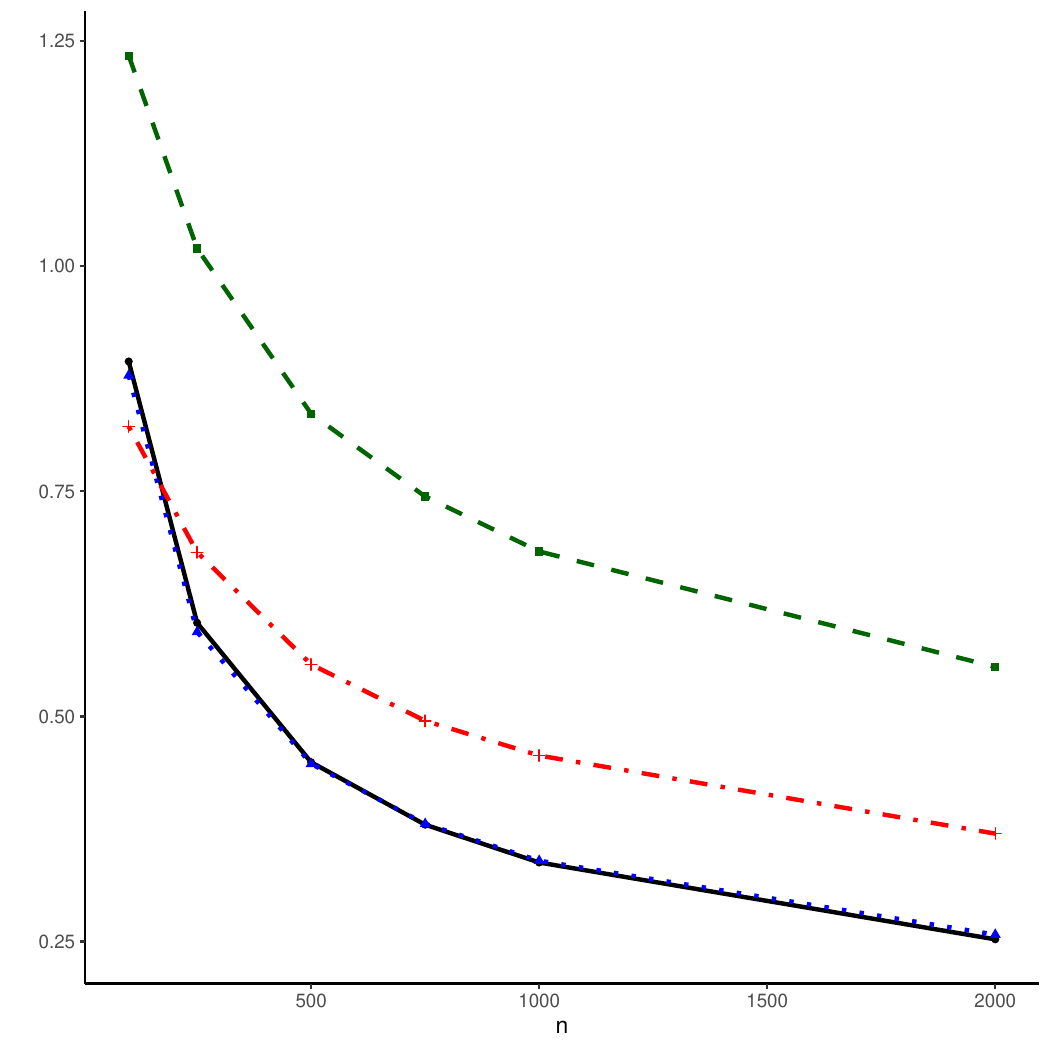}
		\subcaption{$\x=-0.2$}
	\end{subfigure}%	
	\begin{subfigure}[b]{0.5\textwidth}
		\includegraphics[height=0.3\textheight,width=0.95\textwidth]{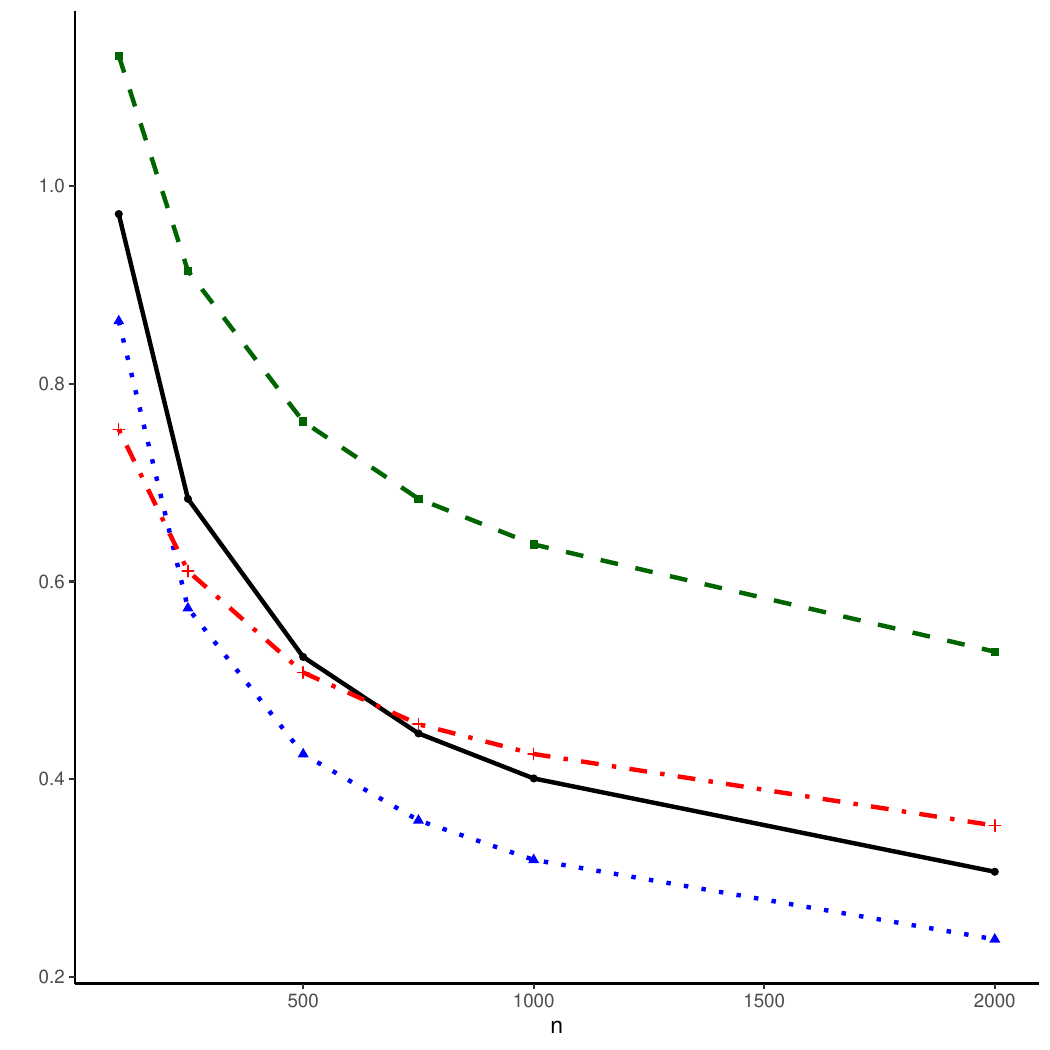}
		\subcaption{$\x=0.2$}
	\end{subfigure}	
	\begin{subfigure}[b]{0.5\textwidth}
		\includegraphics[height=0.3\textheight,width=0.95\textwidth]{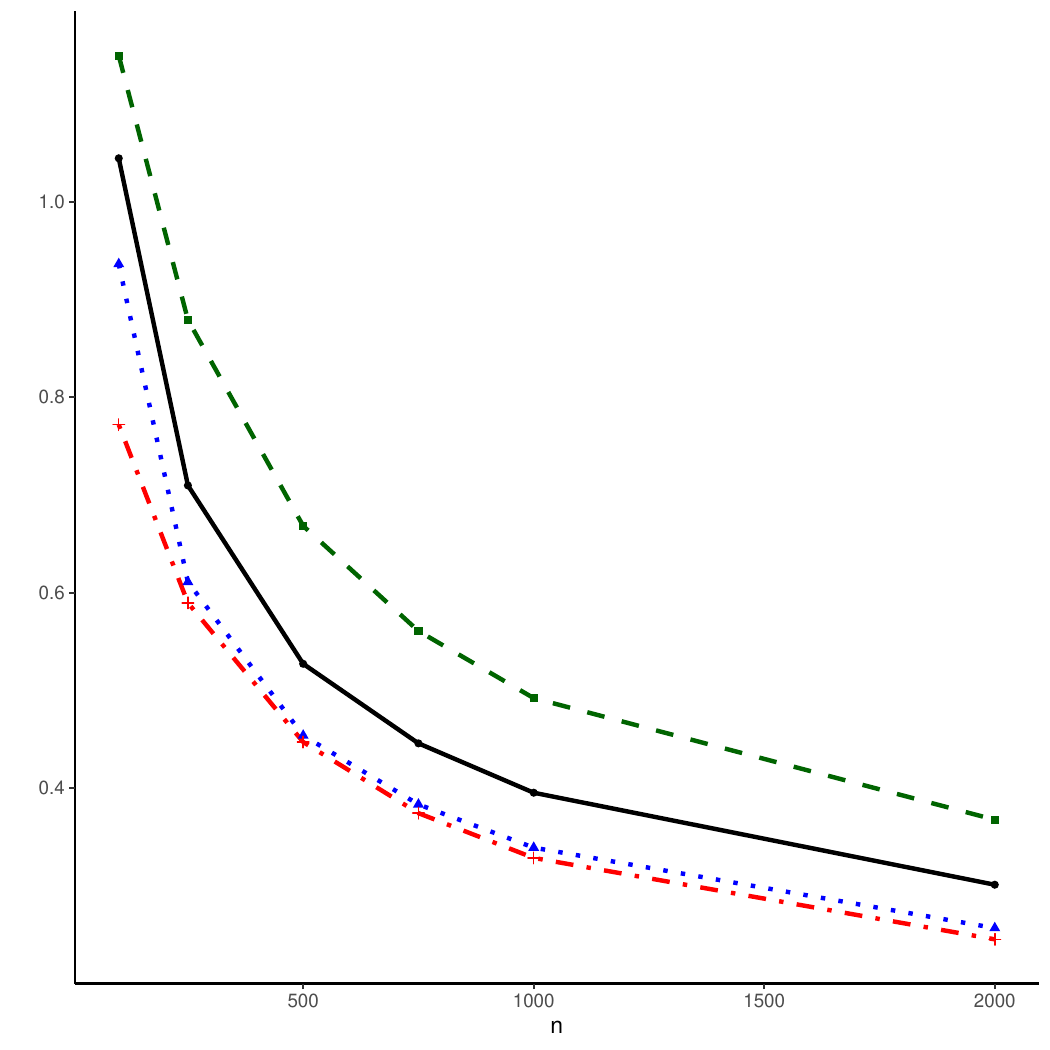}
		\subcaption{$\x=0.6$}
	\end{subfigure}%
	\begin{subfigure}[b]{0.5\textwidth}
		\includegraphics[height=0.3\textheight,width=0.95\textwidth]{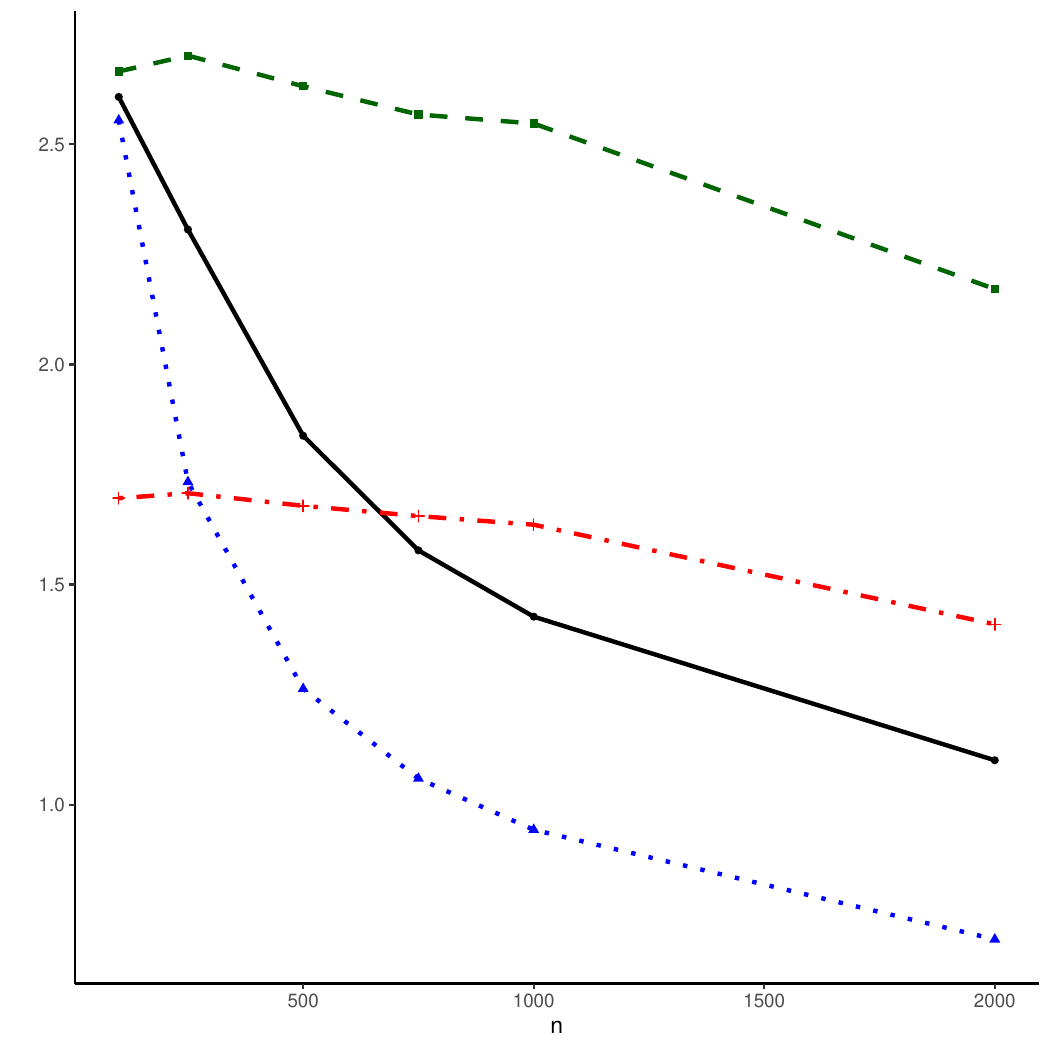}
		\subcaption{$\x=1$}
	\end{subfigure}%
	\begin{flushleft}
\footnotesize Notes: \blackline $\irbc(\hat{h}_{\RBC})$, \blueline $\irbc(\hat{h}_{\MSE})$, \greenline   $\irbc(\hat{h}_{\US})$,  \redline$\ius(\hat{h}_{\US})$
\end{flushleft}
\end{figure}

\clearpage
\begin{figure}[!htb]
	\centering
	\caption{Average Interval Length for 95\% Confidence Intervals\\
		Uniform Kernel, $\v=1$}
	\label{suppfig:il_nu1_uni}	
	\begin{subfigure}[b]{0.5\textwidth}
		\includegraphics[height=0.3\textheight,width=0.95\textwidth]{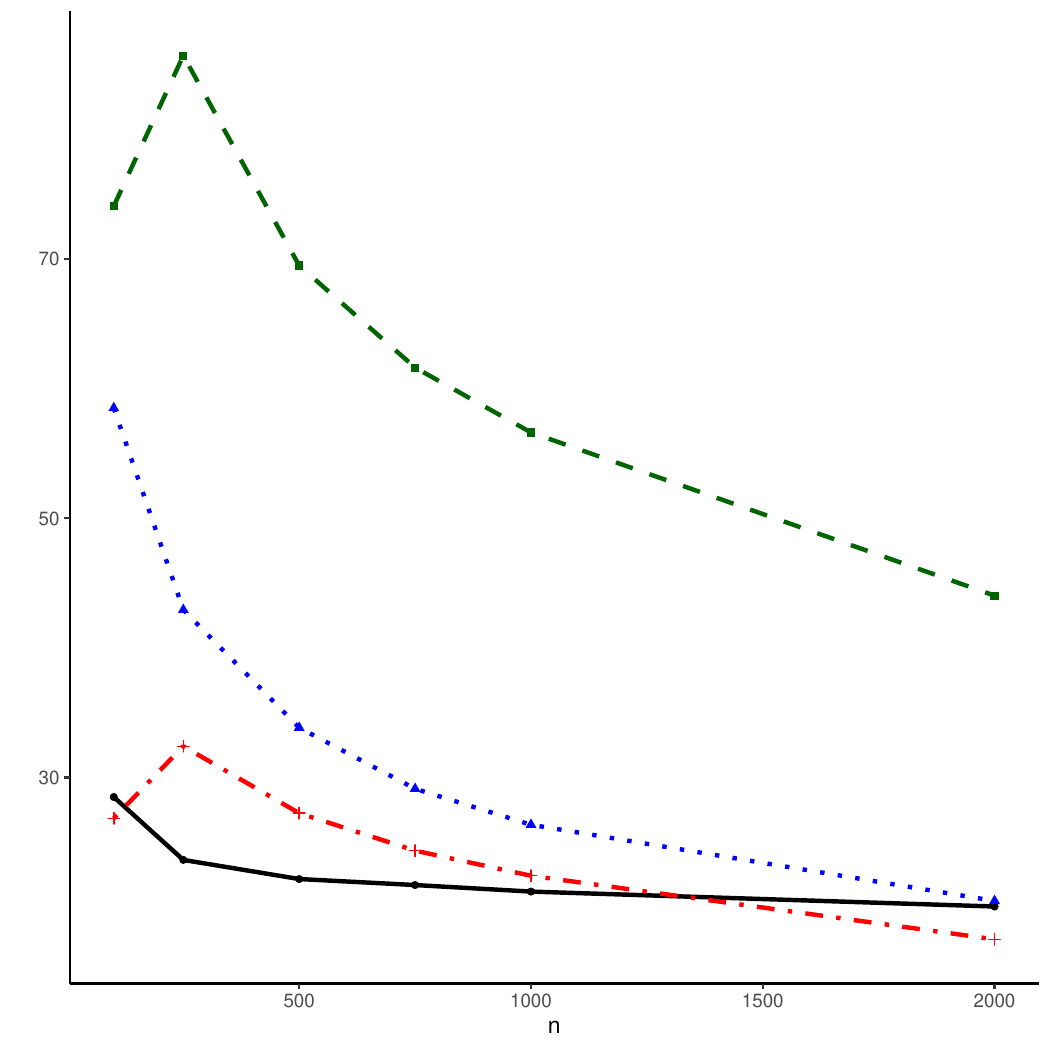}
		\subcaption{$\x=-1$}
	\end{subfigure}%
	\begin{subfigure}[b]{0.5\textwidth}
		\includegraphics[height=0.3\textheight,width=0.95\textwidth]{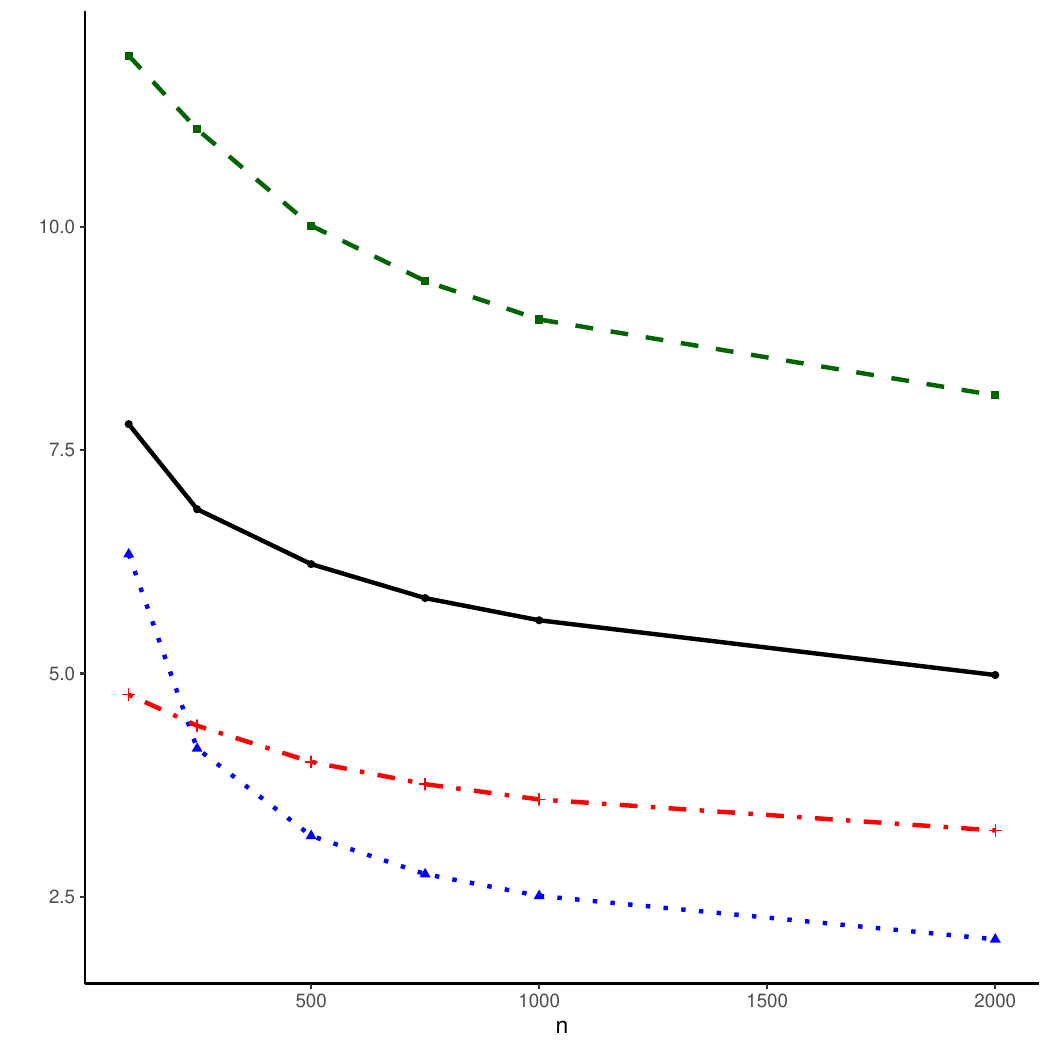}
		\subcaption{$\x=-0.6$}
	\end{subfigure}	
	\begin{subfigure}[b]{0.5\textwidth}
		\includegraphics[height=0.3\textheight,width=0.95\textwidth]{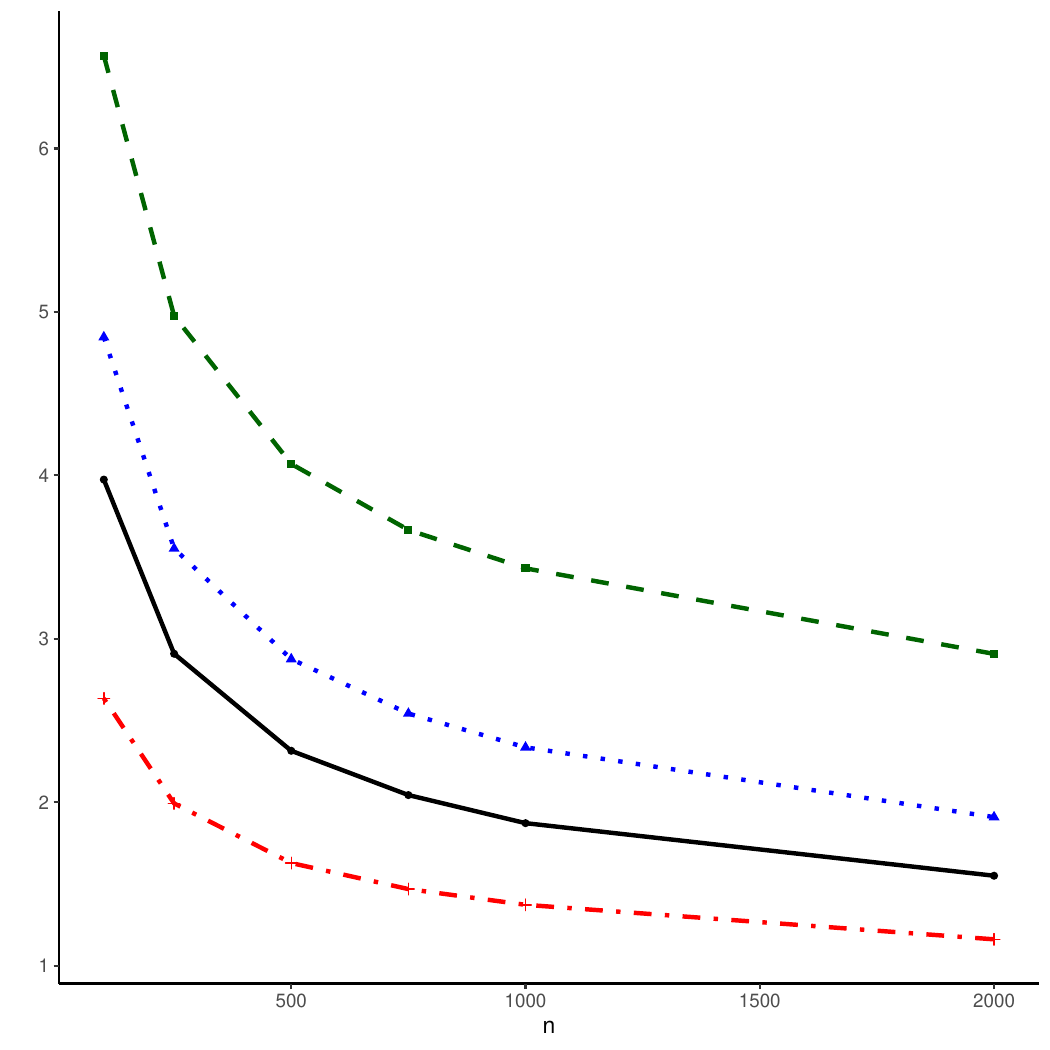}
		\subcaption{$\x=-0.2$}
	\end{subfigure}%	
	\begin{subfigure}[b]{0.5\textwidth}
		\includegraphics[height=0.3\textheight,width=0.95\textwidth]{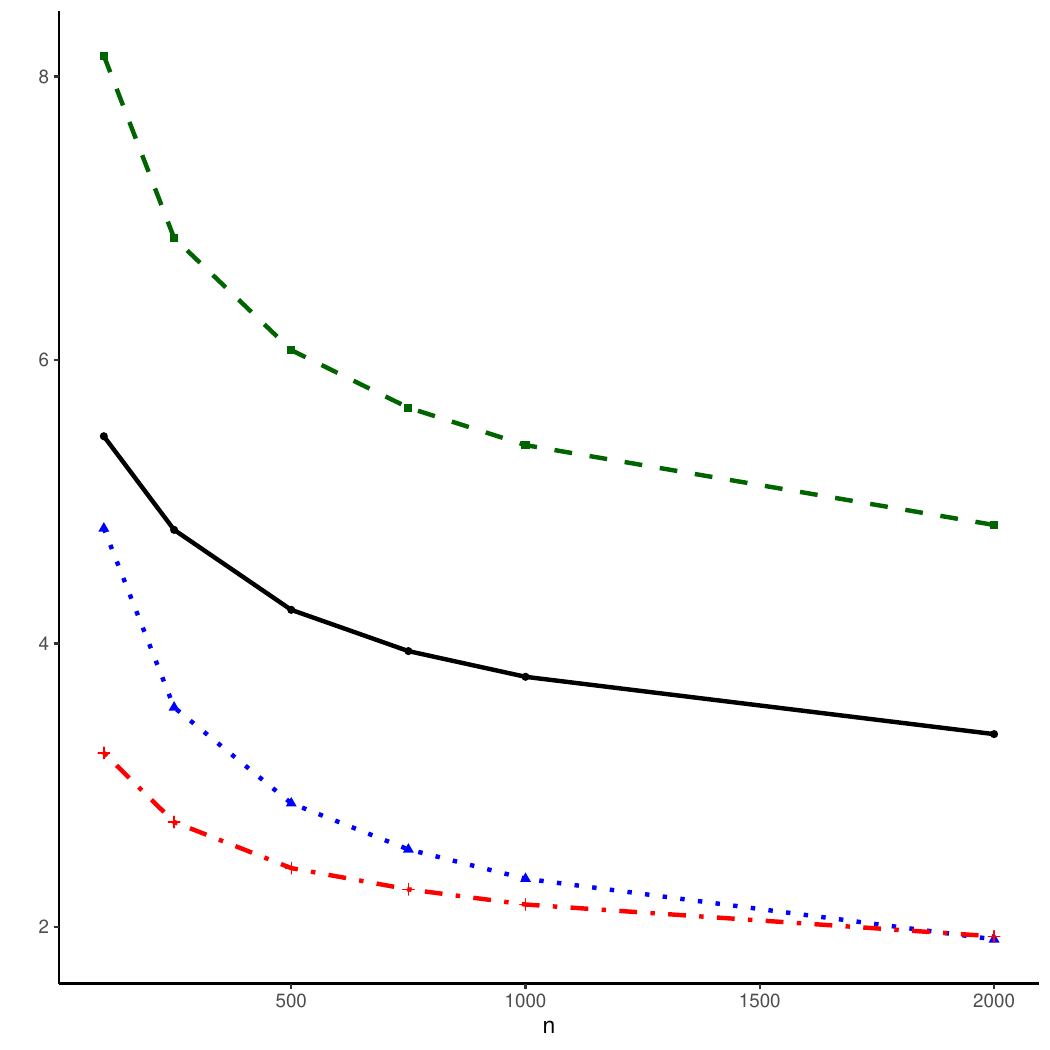}
		\subcaption{$\x=0.2$}
	\end{subfigure}	
	\begin{subfigure}[b]{0.5\textwidth}
		\includegraphics[height=0.3\textheight,width=0.95\textwidth]{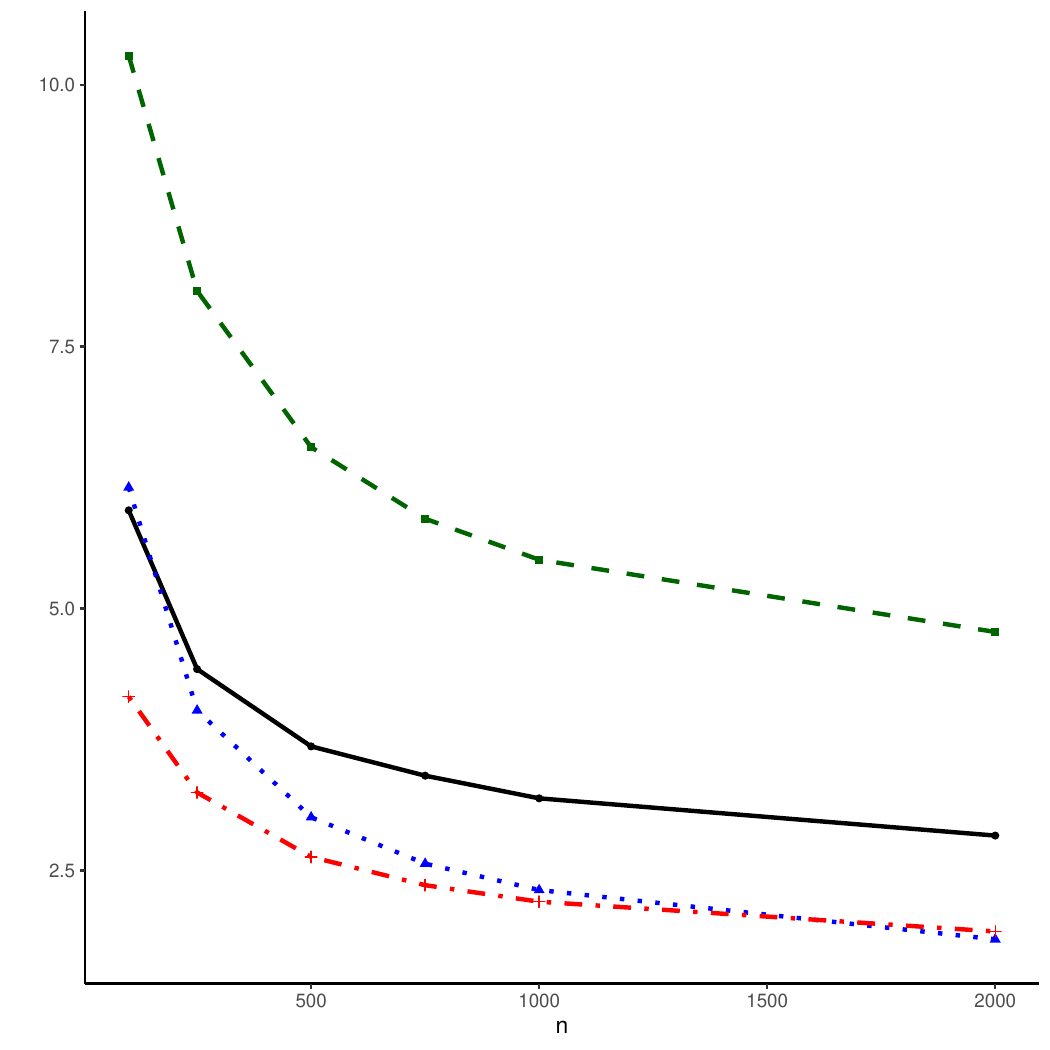}
		\subcaption{$\x=0.6$}
	\end{subfigure}%
	\begin{subfigure}[b]{0.5\textwidth}
		\includegraphics[height=0.3\textheight,width=0.95\textwidth]{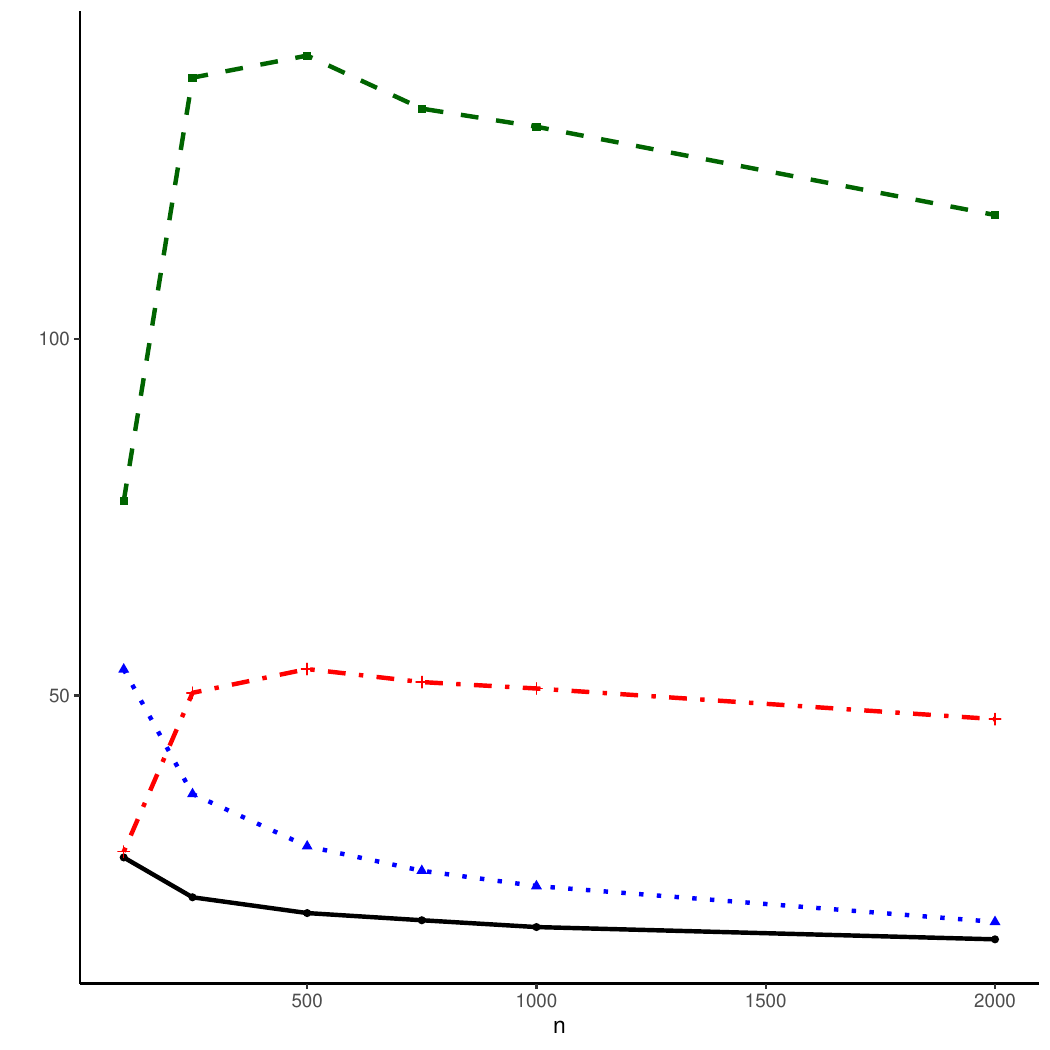}
		\subcaption{$\x=1$}
	\end{subfigure}%
	\begin{flushleft}
\footnotesize Notes: \blackline $\irbc(\hat{h}_{\RBC})$, \blueline $\irbc(\hat{h}_{\MSE})$, \greenline   $\irbc(\hat{h}_{\US})$,  \redline$\ius(\hat{h}_{\US})$
\end{flushleft}
\end{figure}

%%%%%%%%%%%%%%%%%%%%%% BWS EPA

\clearpage
\begin{figure}[!htb]
	\centering
	\caption{Average Estimated Bandwidths, Epanechnikov Kernel,  $\v=0$}
	\label{suppfig:h_nu0_epa}	
	\begin{subfigure}[b]{0.5\textwidth}
		\includegraphics[height=0.3\textheight,width=0.95\textwidth]{simuls/output/h_kepa_p1_d0_x1.pdf}
		\subcaption{$\x=-1$}
	\end{subfigure}%
	\begin{subfigure}[b]{0.5\textwidth}
		\includegraphics[height=0.3\textheight,width=0.95\textwidth]{simuls/output/h_kepa_p1_d0_x2.pdf}
		\subcaption{$\x=-0.6$}
	\end{subfigure}	
	\begin{subfigure}[b]{0.5\textwidth}
		\includegraphics[height=0.3\textheight,width=0.95\textwidth]{simuls/output/h_kepa_p1_d0_x3.pdf}
		\subcaption{$\x=-0.2$}
	\end{subfigure}%	
	\begin{subfigure}[b]{0.5\textwidth}
		\includegraphics[height=0.3\textheight,width=0.95\textwidth]{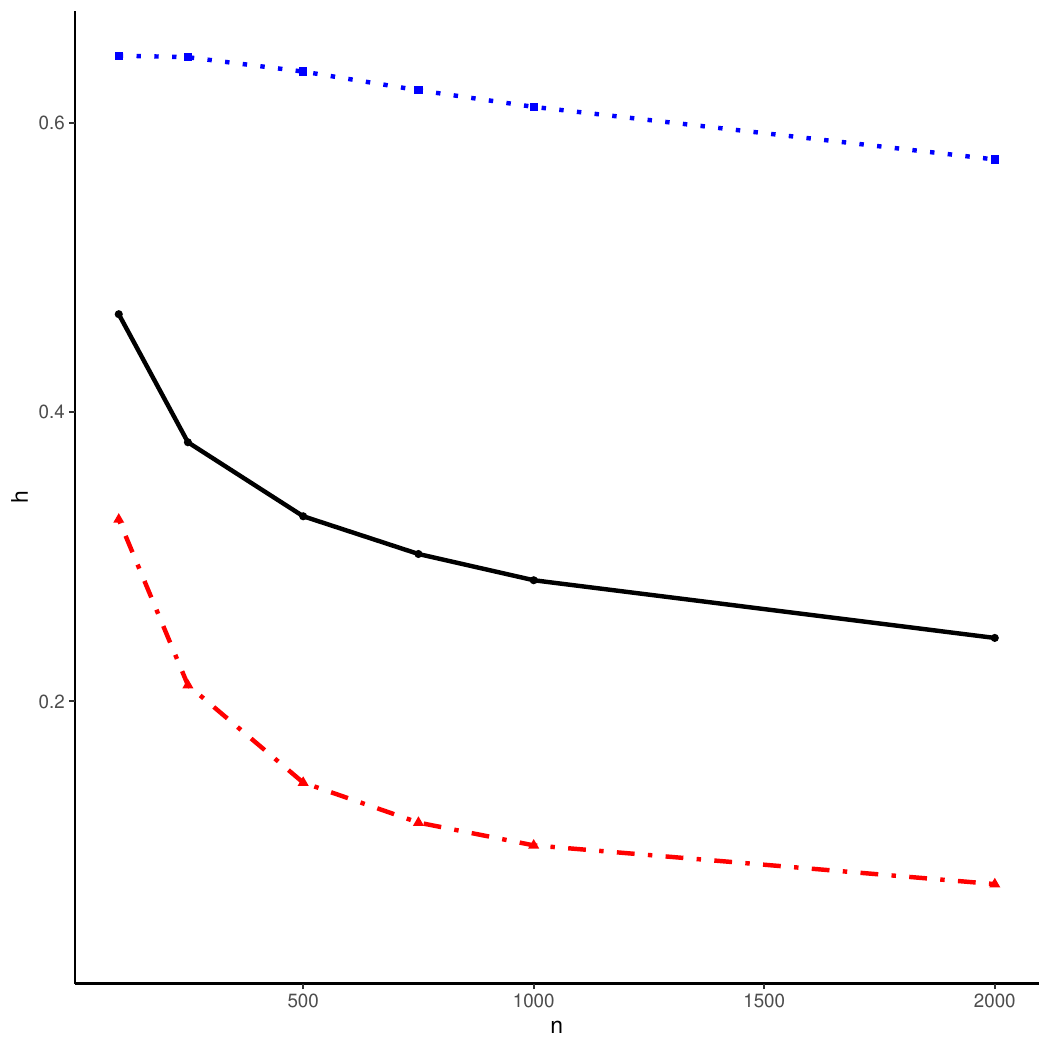}
		\subcaption{$\x=0.2$}
	\end{subfigure}	
	\begin{subfigure}[b]{0.5\textwidth}
		\includegraphics[height=0.3\textheight,width=0.95\textwidth]{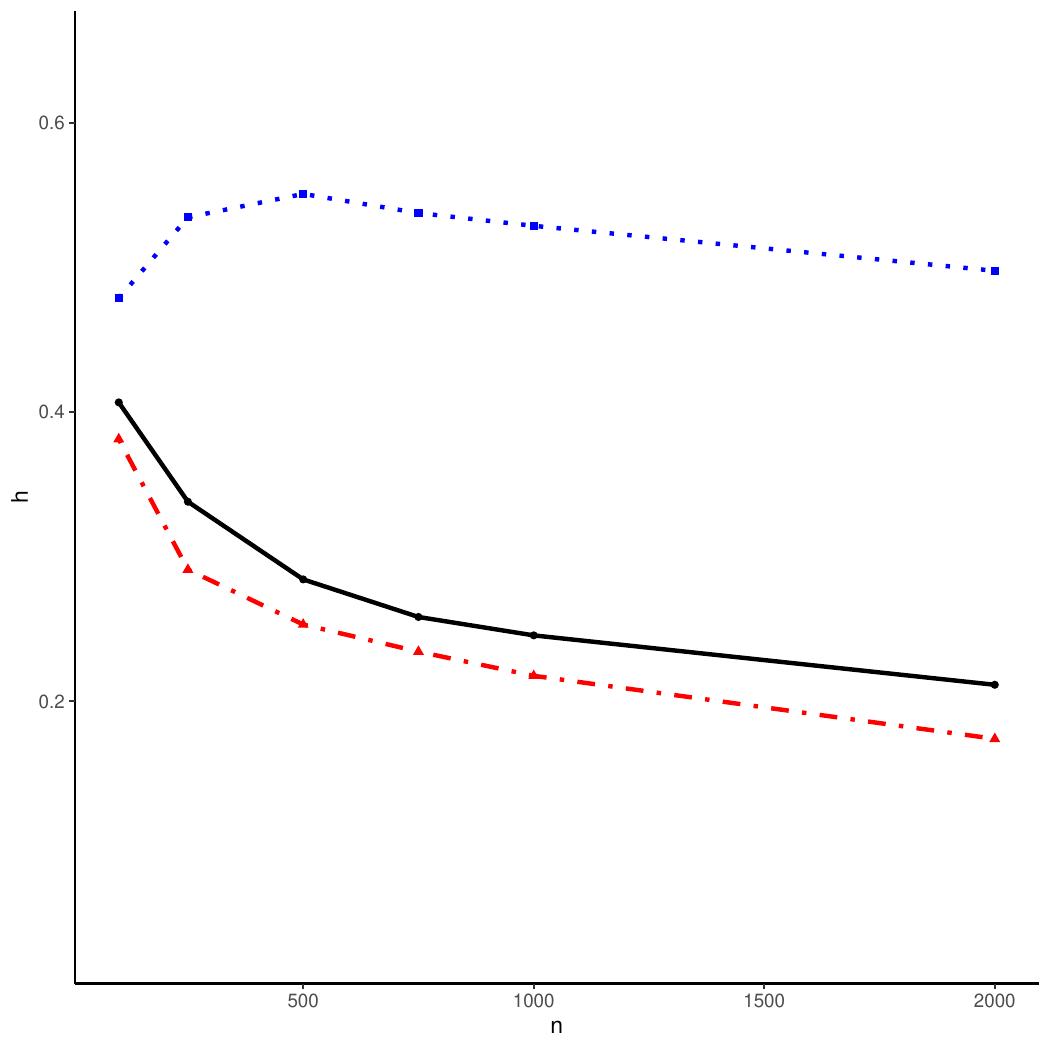}
		\subcaption{$\x=0.6$}
	\end{subfigure}%
	\begin{subfigure}[b]{0.5\textwidth}
		\includegraphics[height=0.3\textheight,width=0.95\textwidth]{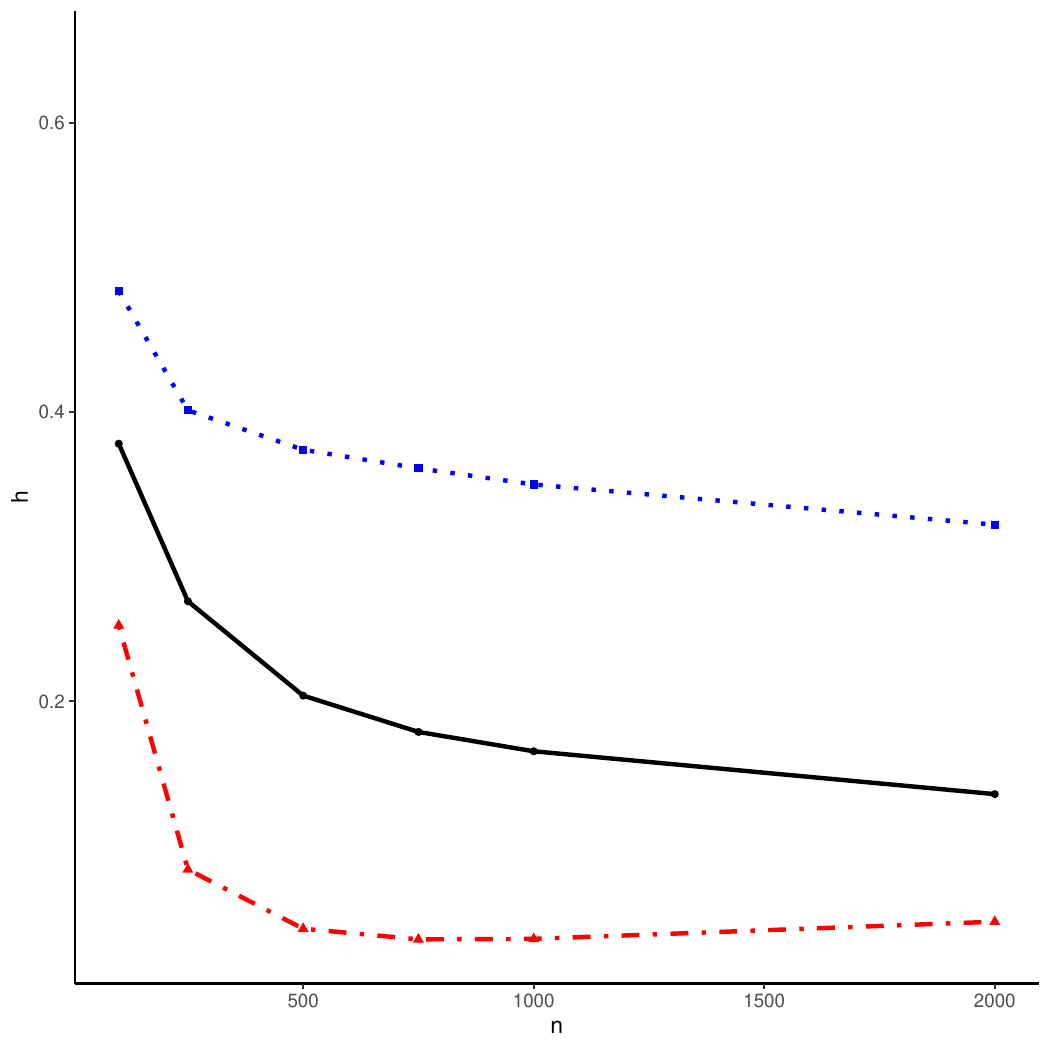}
		\subcaption{$\x=1$}
	\end{subfigure}%
	\begin{flushleft}\footnotesize Notes: \blackline $\hat{h}_{\RBC}$, \redline $\hat{h}_{\US}$, \blueline $\hat{h}_{\MSE}$
\end{flushleft}
\end{figure}

\clearpage
\begin{figure}[!htb]
	\centering
	\caption{Average Estimated Bandwidths, Epanechnikov Kernel,  $\v=1$}
\label{suppfig:h_nu1_epa}	
	\begin{subfigure}[b]{0.5\textwidth}
		\includegraphics[height=0.3\textheight,width=0.95\textwidth]{simuls/output/h_kepa_p2_d1_x1.pdf}
		\subcaption{$\x=-1$}
	\end{subfigure}%
	\begin{subfigure}[b]{0.5\textwidth}
		\includegraphics[height=0.3\textheight,width=0.95\textwidth]{simuls/output/h_kepa_p2_d1_x2.pdf}
		\subcaption{$\x=-0.6$}
	\end{subfigure}	
	\begin{subfigure}[b]{0.5\textwidth}
		\includegraphics[height=0.3\textheight,width=0.95\textwidth]{simuls/output/h_kepa_p2_d1_x3.pdf}
		\subcaption{$\x=-0.2$}
	\end{subfigure}%	
	\begin{subfigure}[b]{0.5\textwidth}
		\includegraphics[height=0.3\textheight,width=0.95\textwidth]{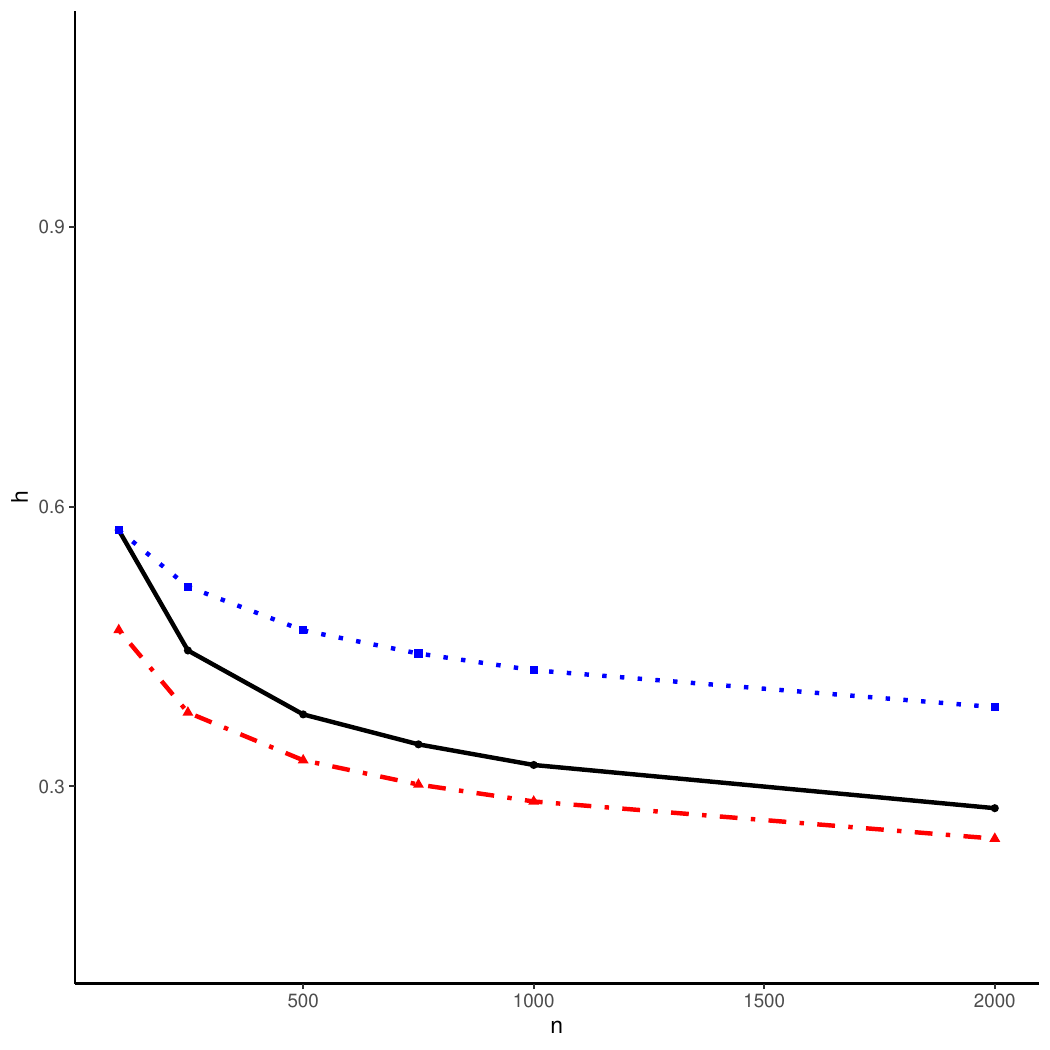}
		\subcaption{$\x=0.2$}
	\end{subfigure}	
	\begin{subfigure}[b]{0.5\textwidth}
		\includegraphics[height=0.3\textheight,width=0.95\textwidth]{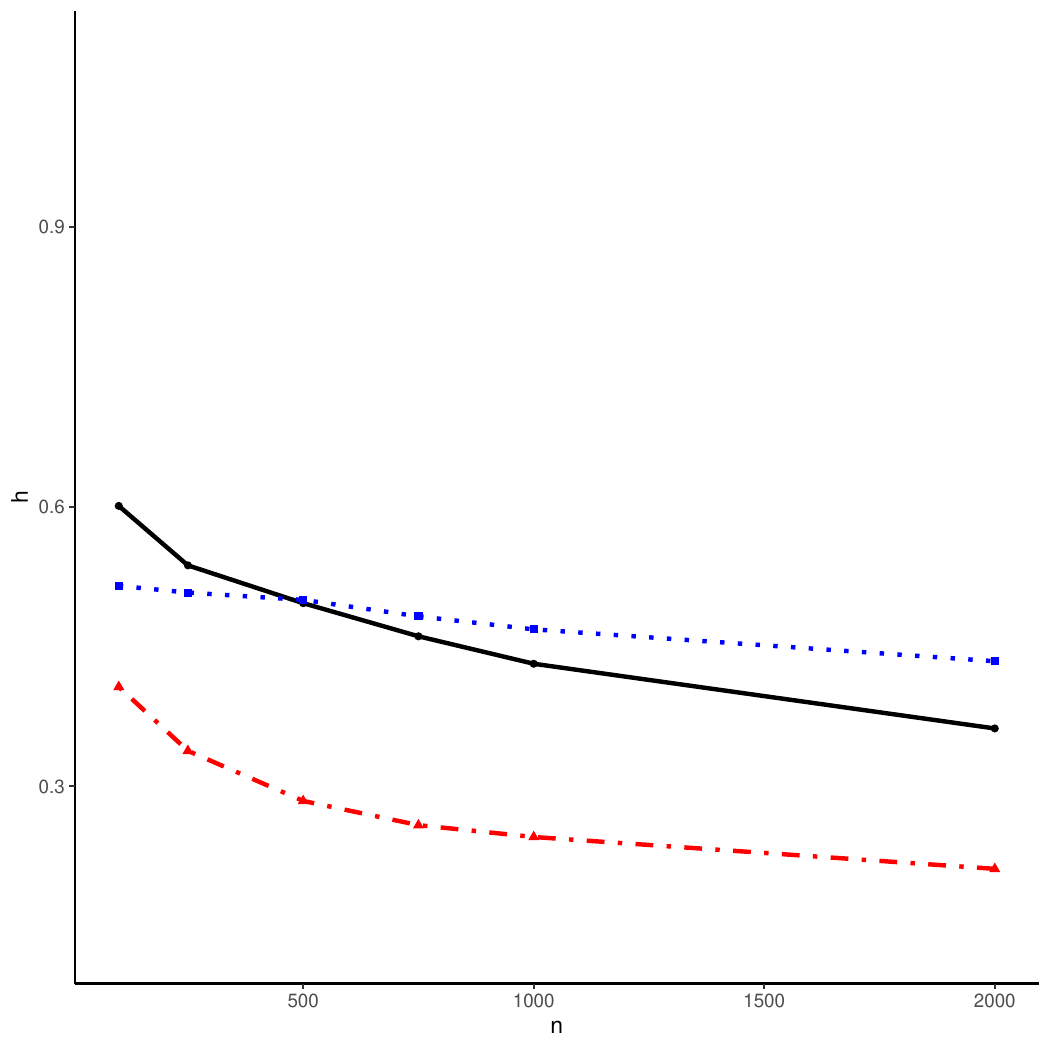}
		\subcaption{$\x=0.6$}
	\end{subfigure}%
	\begin{subfigure}[b]{0.5\textwidth}
		\includegraphics[height=0.3\textheight,width=0.95\textwidth]{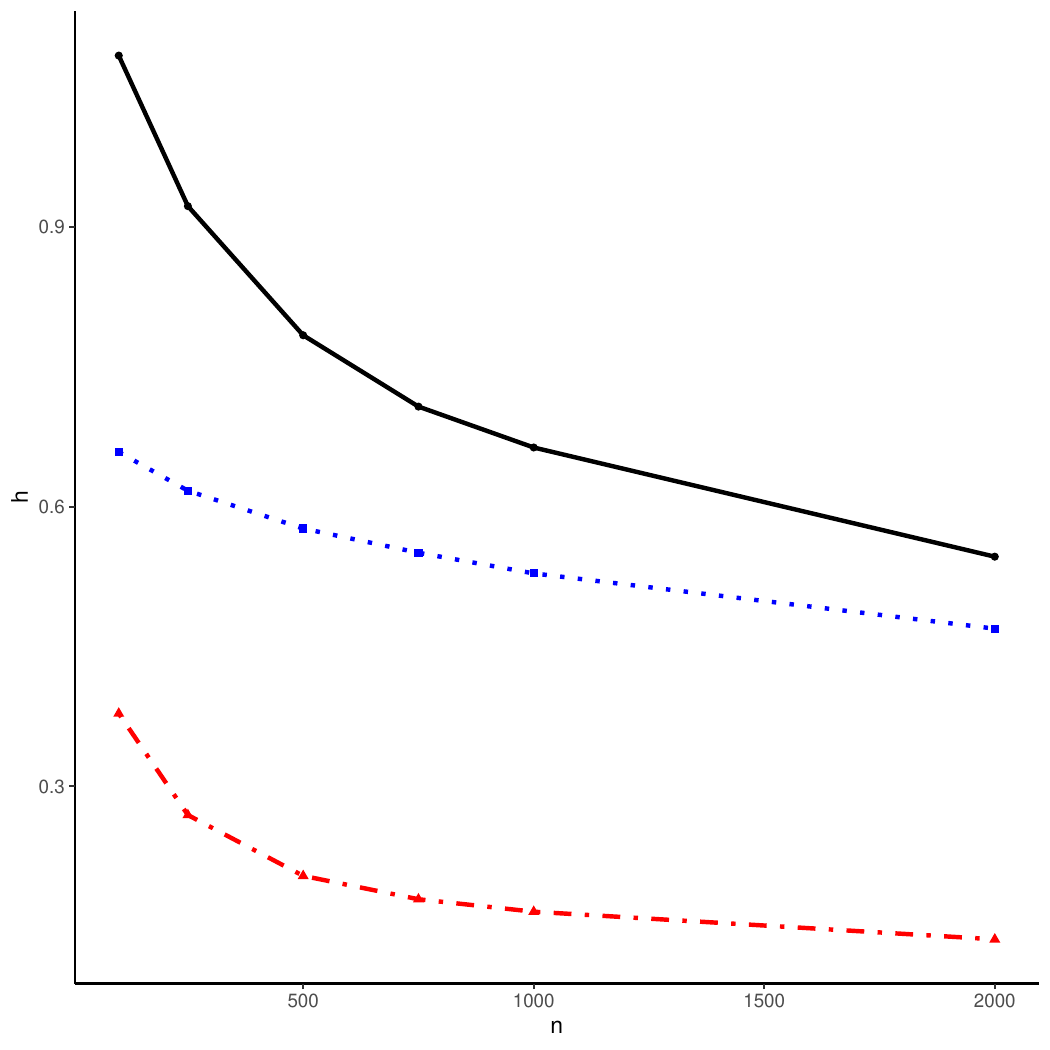}
		\subcaption{$\x=1$}
	\end{subfigure}%
	\begin{flushleft}\footnotesize Notes: \blackline $\hat{h}_{\RBC}$, \redline $\hat{h}_{\US}$, \blueline $\hat{h}_{\MSE}$
		\end{flushleft}
\end{figure}

%%%%%%%%%%%%%%%%%%%%%% BWS UNI

\clearpage
\begin{figure}[!htb]
	\centering
	\caption{Average Estimated Bandwidths, Uniform Kernel,  $\v=0$}
	\label{suppfig:h_nu0_uni}	
	\begin{subfigure}[b]{0.5\textwidth}
		\includegraphics[height=0.3\textheight,width=0.95\textwidth]{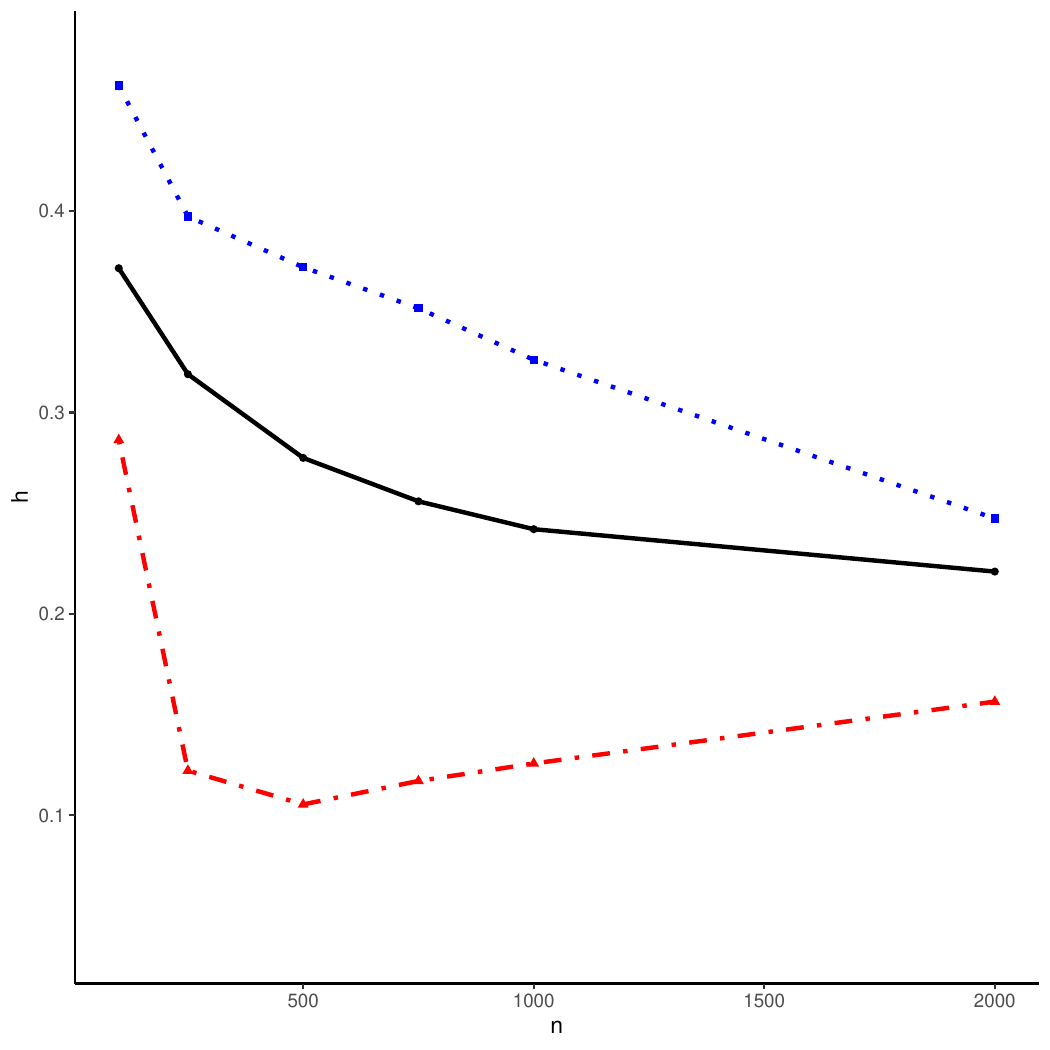}
		\subcaption{$\x=-1$}
	\end{subfigure}%
	\begin{subfigure}[b]{0.5\textwidth}
		\includegraphics[height=0.3\textheight,width=0.95\textwidth]{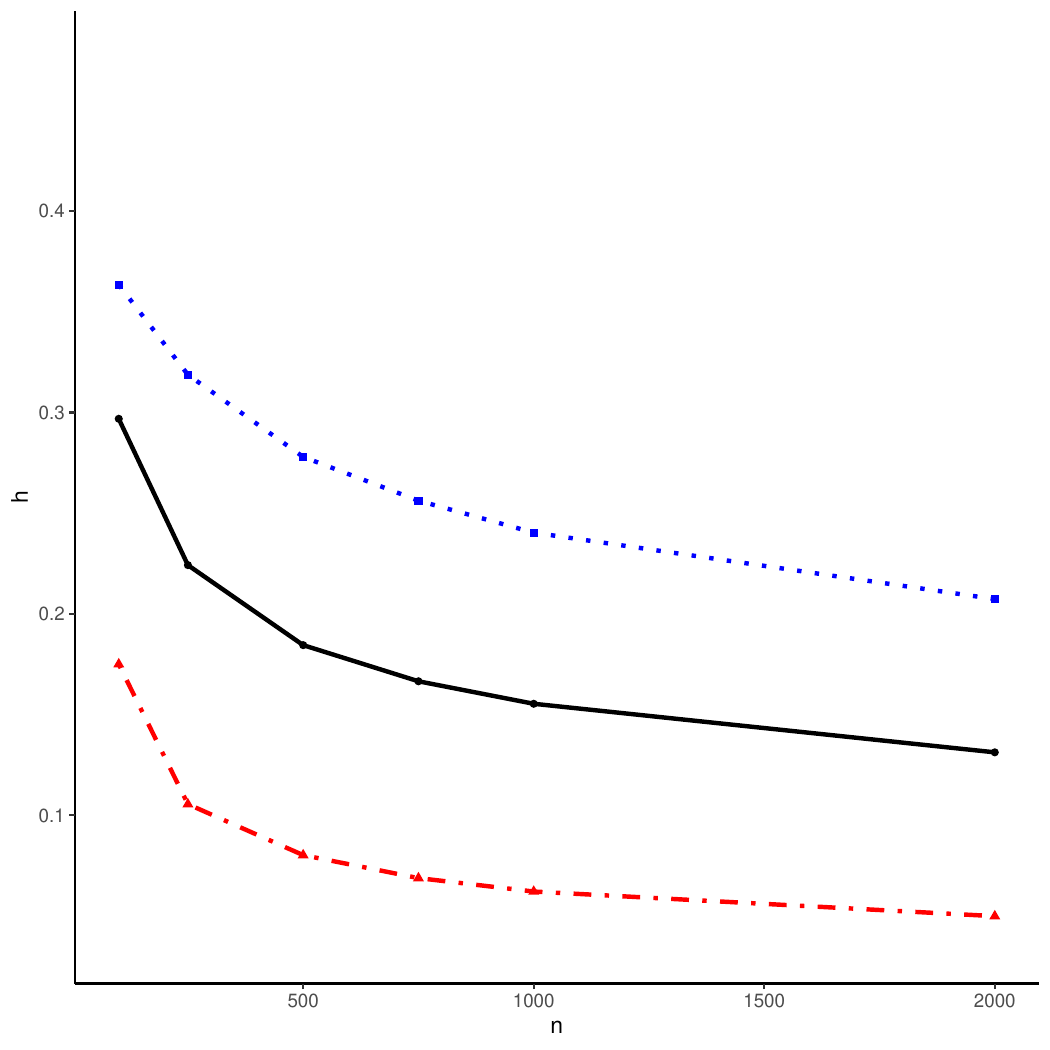}
		\subcaption{$\x=-0.6$}
	\end{subfigure}	
	\begin{subfigure}[b]{0.5\textwidth}
		\includegraphics[height=0.3\textheight,width=0.95\textwidth]{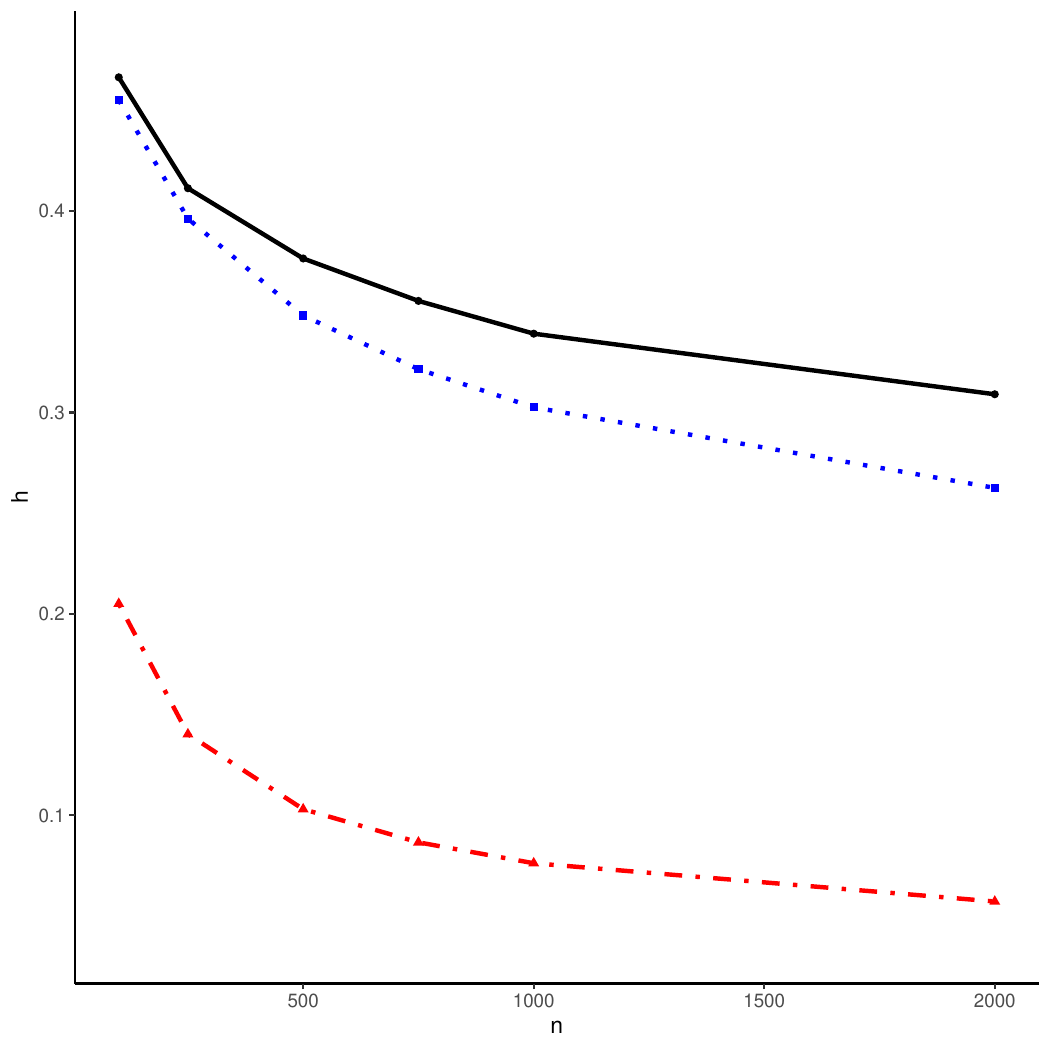}
		\subcaption{$\x=-0.2$}
	\end{subfigure}%	
	\begin{subfigure}[b]{0.5\textwidth}
		\includegraphics[height=0.3\textheight,width=0.95\textwidth]{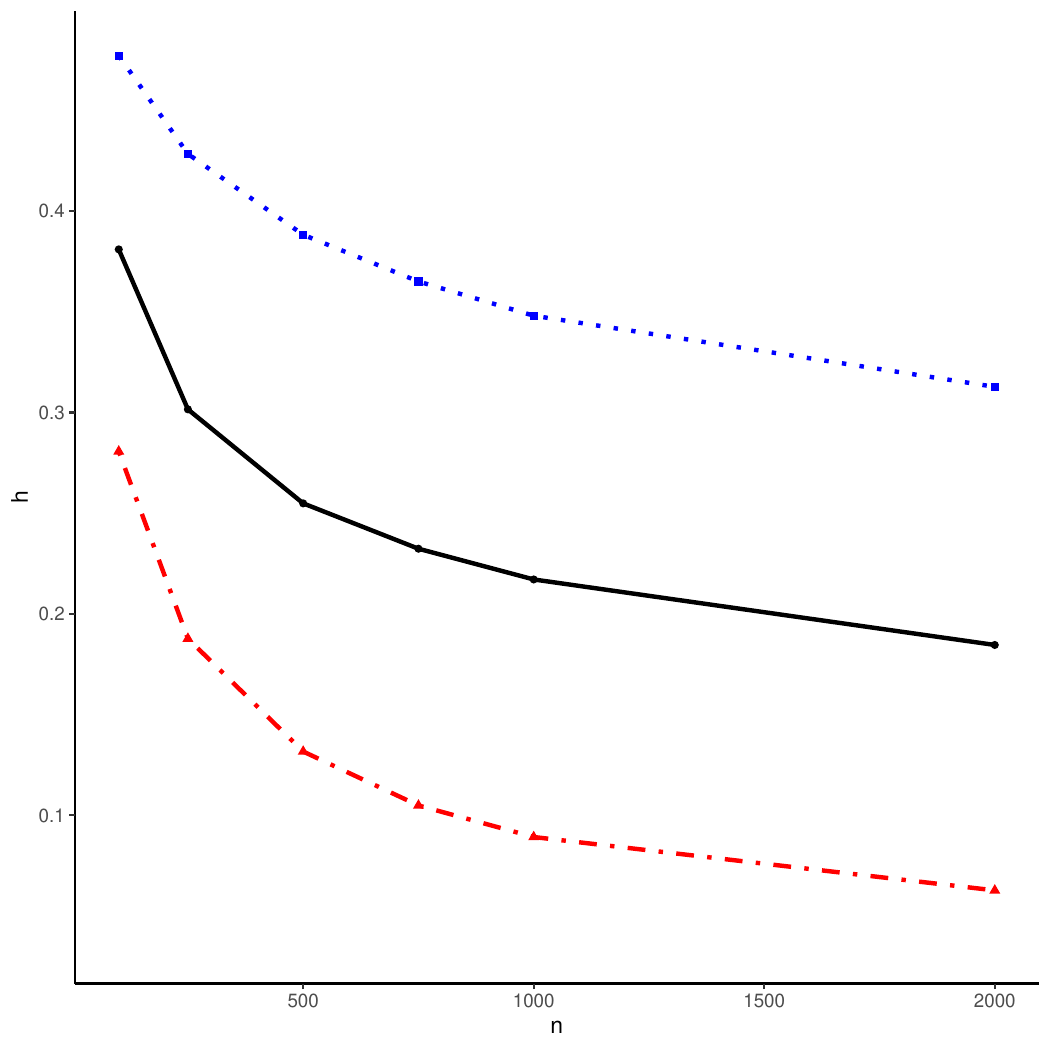}
		\subcaption{$\x=0.2$}
	\end{subfigure}	
	\begin{subfigure}[b]{0.5\textwidth}
		\includegraphics[height=0.3\textheight,width=0.95\textwidth]{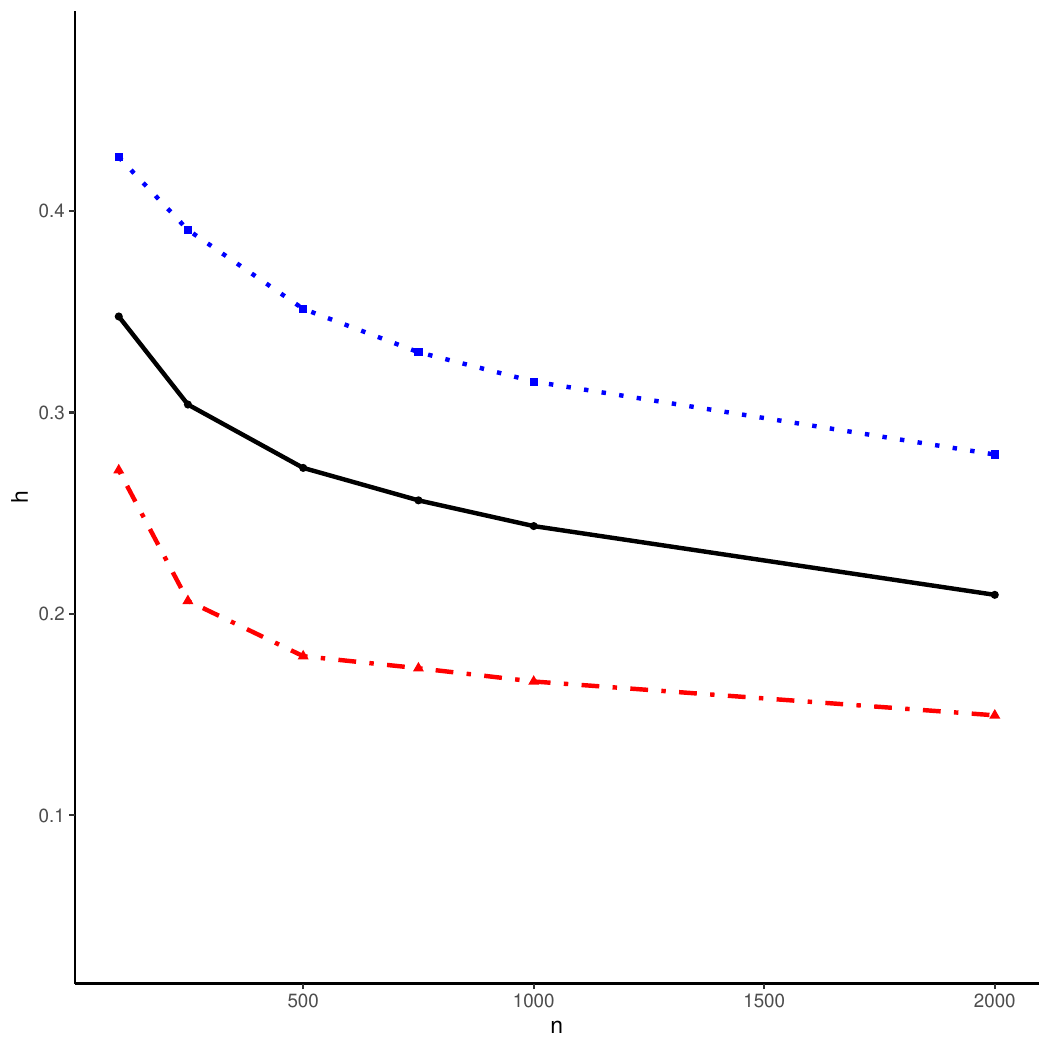}
		\subcaption{$\x=0.6$}
	\end{subfigure}%
	\begin{subfigure}[b]{0.5\textwidth}
		\includegraphics[height=0.3\textheight,width=0.95\textwidth]{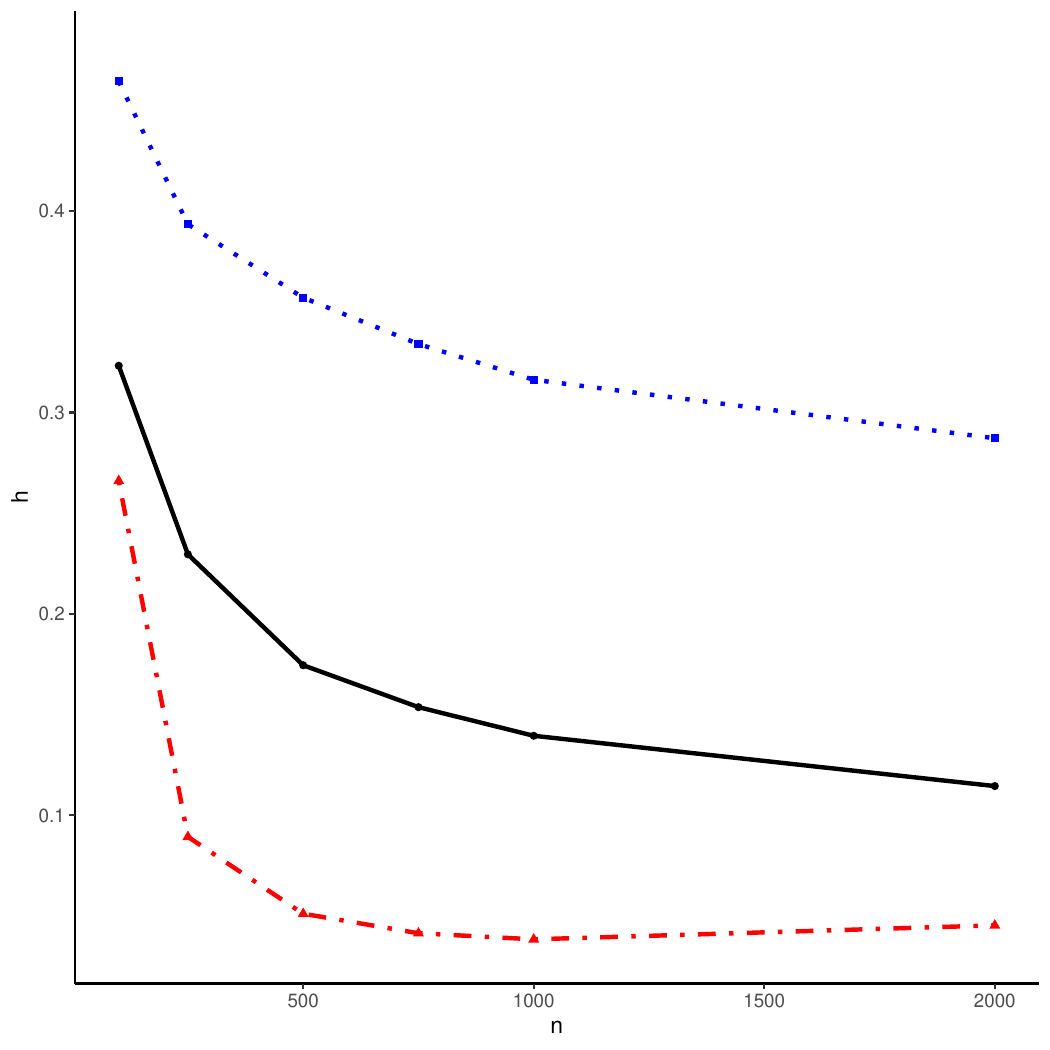}
		\subcaption{$\x=1$}
	\end{subfigure}%
	\begin{flushleft}\footnotesize Notes: \blackline $\hat{h}_{\RBC}$, \redline $\hat{h}_{\US}$, \blueline $\hat{h}_{\MSE}$
	\end{flushleft}
\end{figure}

\clearpage
\begin{figure}[!htb]
	\centering
	\caption{Average Estimated Bandwidths, Uniform Kernel,  $\v=1$}
	\label{suppfig:h_nu1_uni}	
	\begin{subfigure}[b]{0.5\textwidth}
		\includegraphics[height=0.3\textheight,width=0.95\textwidth]{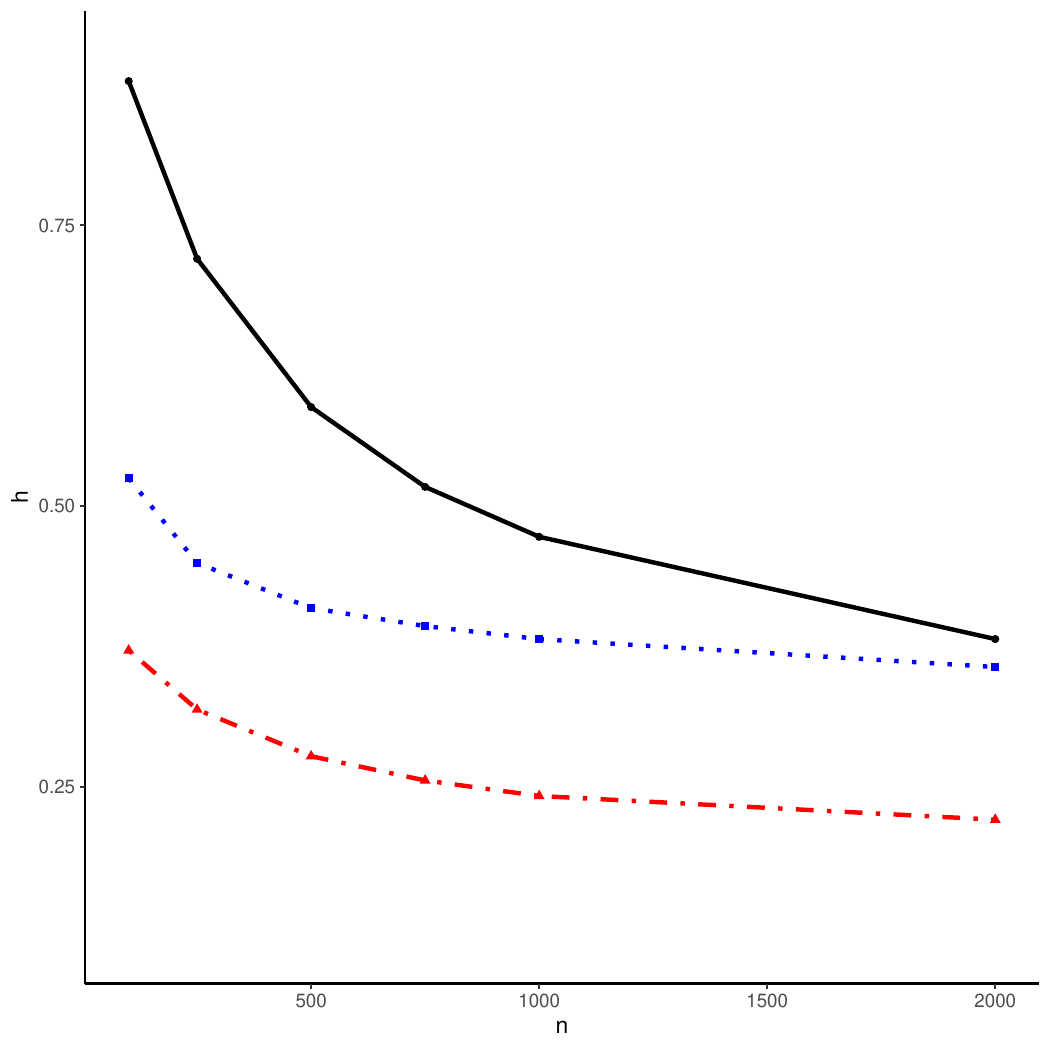}
		\subcaption{$\x=-1$}
	\end{subfigure}%
	\begin{subfigure}[b]{0.5\textwidth}
		\includegraphics[height=0.3\textheight,width=0.95\textwidth]{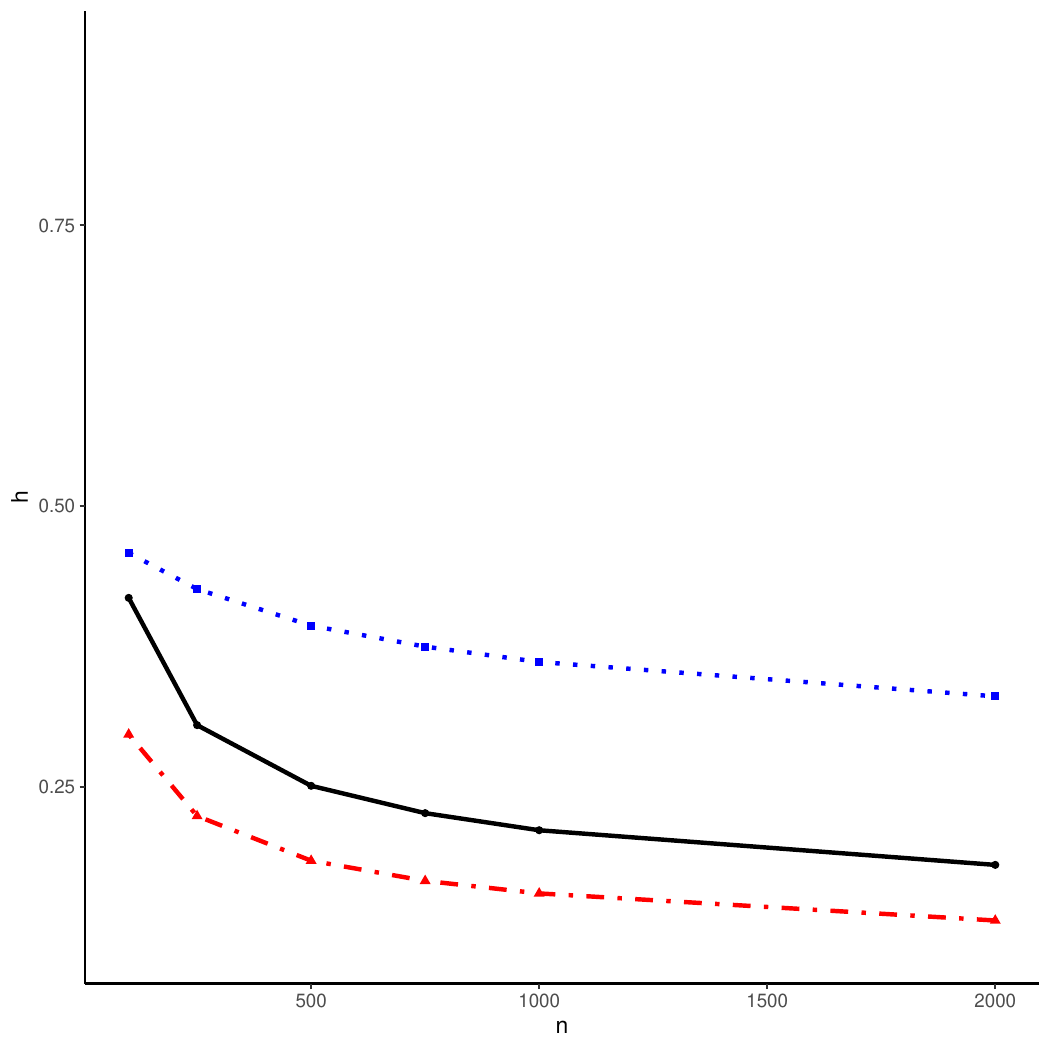}
		\subcaption{$\x=-0.6$}
	\end{subfigure}	
	\begin{subfigure}[b]{0.5\textwidth}
		\includegraphics[height=0.3\textheight,width=0.95\textwidth]{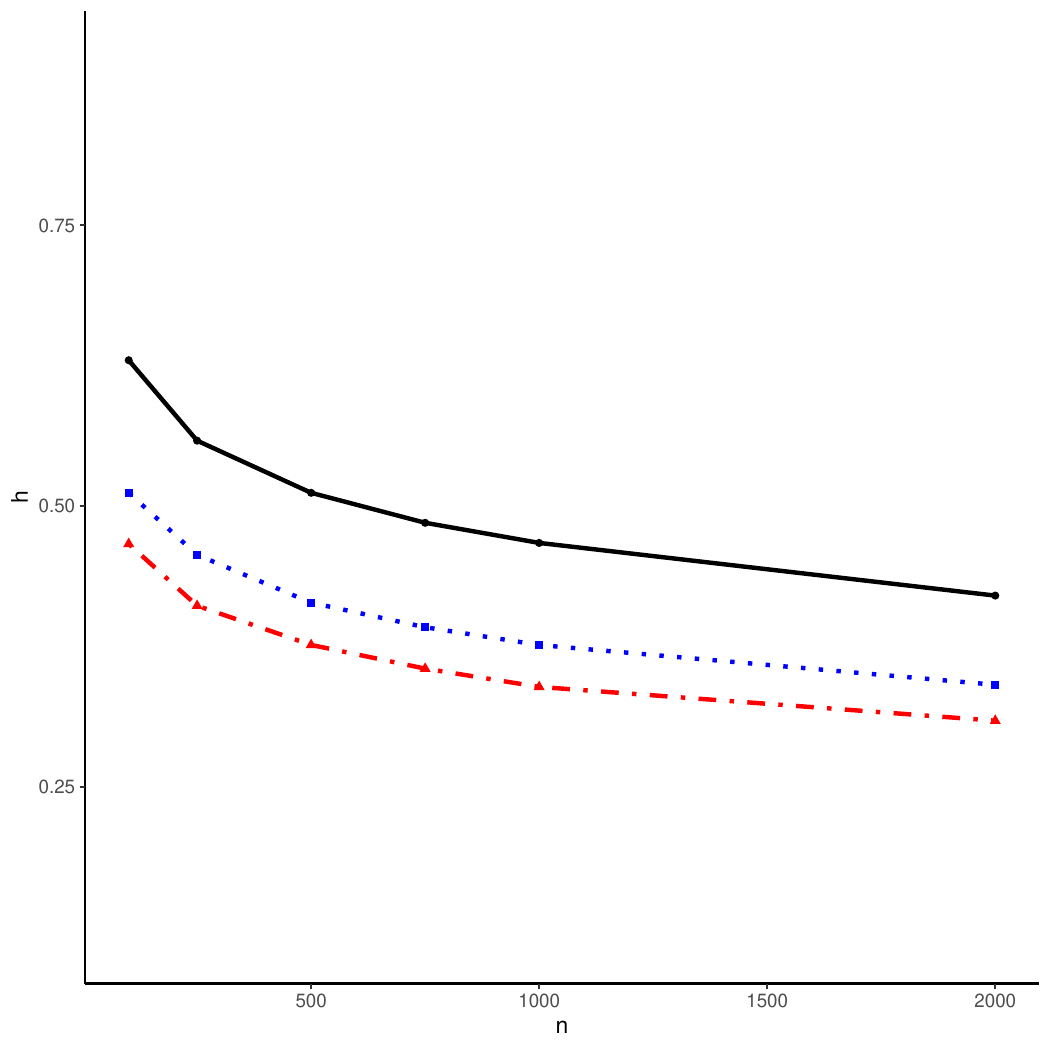}
		\subcaption{$\x=-0.2$}
	\end{subfigure}%	
	\begin{subfigure}[b]{0.5\textwidth}
		\includegraphics[height=0.3\textheight,width=0.95\textwidth]{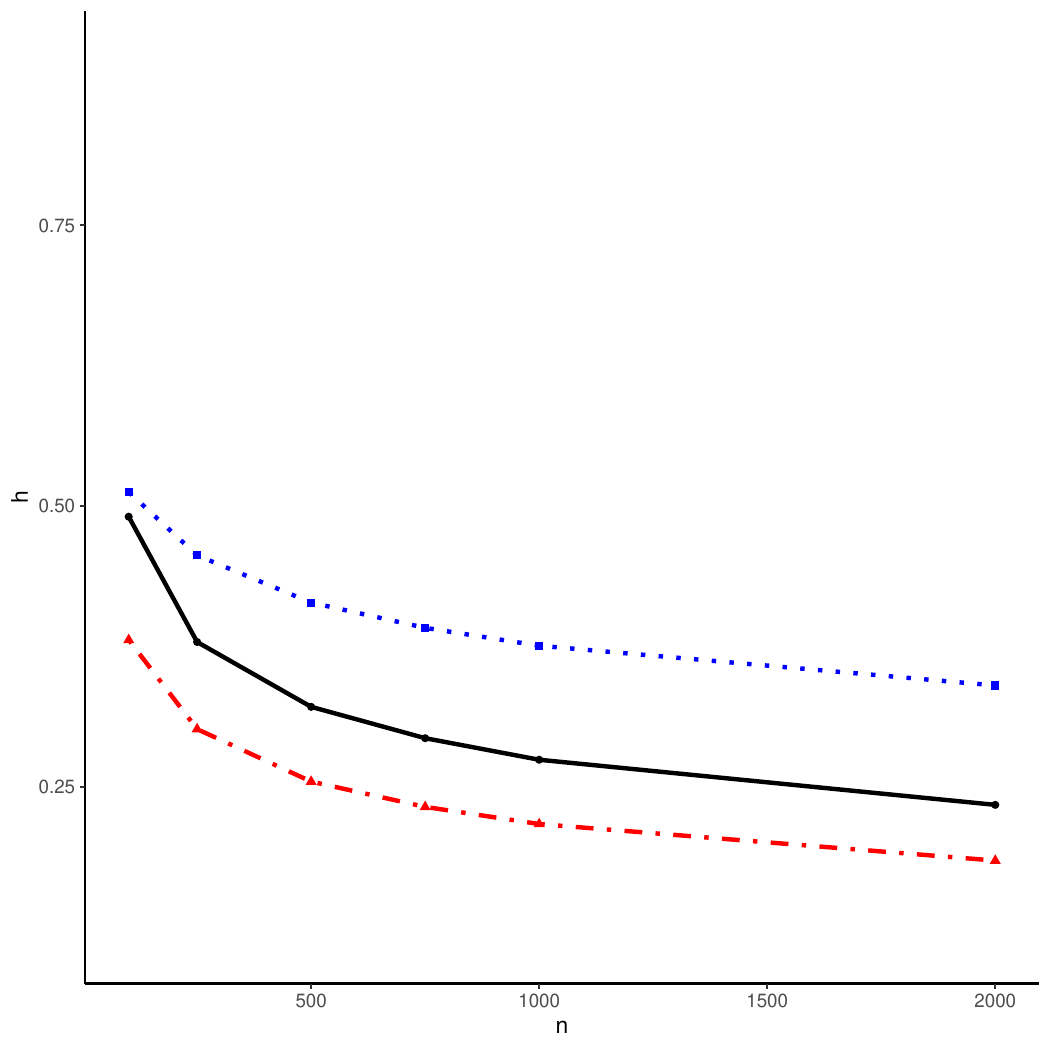}
		\subcaption{$\x=0.2$}
	\end{subfigure}	
	\begin{subfigure}[b]{0.5\textwidth}
		\includegraphics[height=0.3\textheight,width=0.95\textwidth]{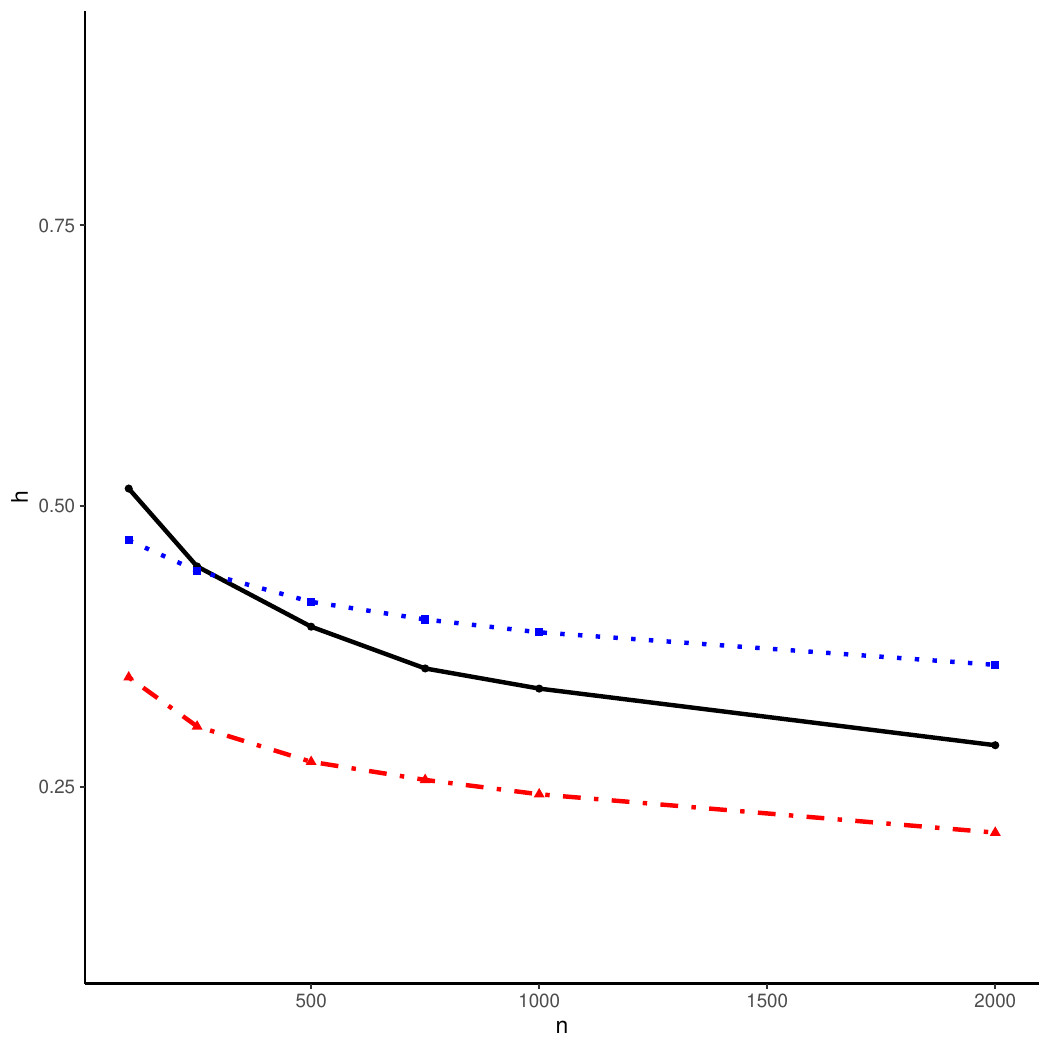}
		\subcaption{$\x=0.6$}
	\end{subfigure}%
	\begin{subfigure}[b]{0.5\textwidth}
		\includegraphics[height=0.3\textheight,width=0.95\textwidth]{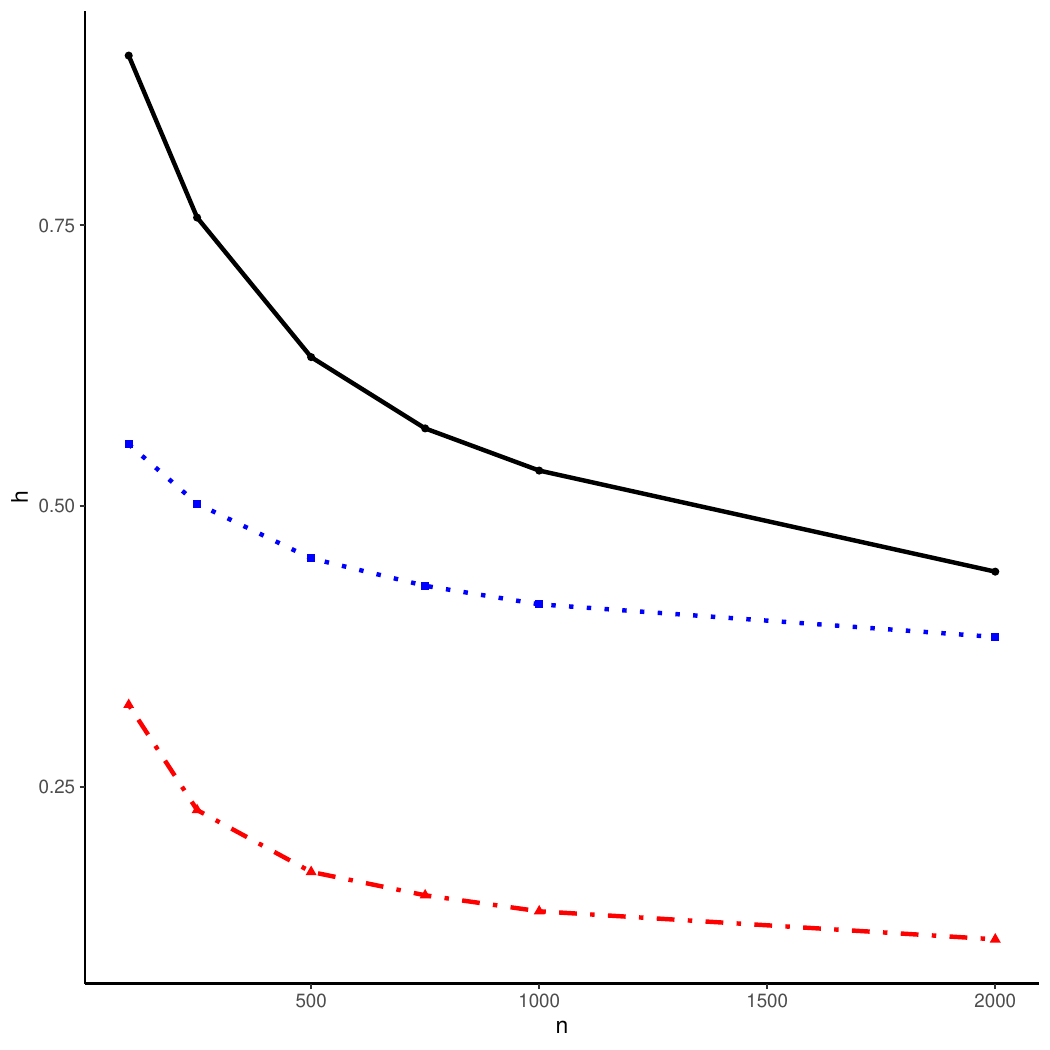}
		\subcaption{$\x=1$}
	\end{subfigure}%
	\begin{flushleft}\footnotesize Notes: \blackline $\hat{h}_{\RBC}$, \redline $\hat{h}_{\US}$, \blueline $\hat{h}_{\MSE}$
	\end{flushleft}
\end{figure}

%%%%%%%%%%%%%%%%%%%%%% TABLES EPA

\clearpage
\begin{table}
	\begin{center}\renewcommand{\arraystretch}{1}
		\caption{Empirical Coverage Probabilities, 95\% Confidence Intervals, $\v=0$, Epanechnikov Kernel}
		\label{supptable:ec_nu0_epa}
		\vspace{-.1in}
		\resizebox{\columnwidth}{!}{%latex.default(table1, file = paste("output/table_ec_k", kernel,     "_p", p, "_d", deriv, ".txt", sep = ""), landscape = FALSE,     outer.size = "scriptsize", col.just = rep("c", ncol(table1)),     center = "none", title = "", table.env = FALSE, n.cgroup = c(3,         3, 3), cgroup = c("$h_{\\texttt{RBC}}$", "$h_{\\texttt{US}}$",         "$h_{\\texttt{MSE}}$"), n.rgroup = c(6, 6, 6, 6, 6, 6),     rgroup = c(paste("$x=$", eval, sep = "")))%
\begin{tabular}{lccccccccccc}
\hline\hline
\multicolumn{1}{l}{\bfseries }&\multicolumn{3}{c}{\bfseries $h_{\texttt{RBC}}$}&\multicolumn{1}{c}{\bfseries }&\multicolumn{3}{c}{\bfseries $h_{\texttt{US}}$}&\multicolumn{1}{c}{\bfseries }&\multicolumn{3}{c}{\bfseries $h_{\texttt{MSE}}$}\tabularnewline
\cline{2-4} \cline{6-8} \cline{10-12}
\multicolumn{1}{l}{}&\multicolumn{1}{c}{$h$}&\multicolumn{1}{c}{RBC}&\multicolumn{1}{c}{US}&\multicolumn{1}{c}{}&\multicolumn{1}{c}{$h$}&\multicolumn{1}{c}{RBC}&\multicolumn{1}{c}{US}&\multicolumn{1}{c}{}&\multicolumn{1}{c}{$h$}&\multicolumn{1}{c}{RBC}&\multicolumn{1}{c}{US}\tabularnewline
\hline
{\bfseries $x=$-1}&&&&&&&&&&&\tabularnewline
~~100&0.436&0.881&0.877&&0.320&0.873&0.889&&0.507&0.899&0.875\tabularnewline
~~250&0.368&0.906&0.892&&0.115&0.879&0.893&&0.462&0.912&0.862\tabularnewline
~~500&0.321&0.925&0.902&&0.116&0.879&0.893&&0.438&0.930&0.828\tabularnewline
~~750&0.295&0.935&0.915&&0.168&0.881&0.880&&0.420&0.934&0.797\tabularnewline
~~1000&0.280&0.941&0.908&&0.205&0.887&0.860&&0.404&0.930&0.769\tabularnewline
~~2000&0.255&0.941&0.902&&0.143&0.920&0.902&&0.356&0.924&0.696\tabularnewline
\hline
{\bfseries $x=$-0.6}&&&&&&&&&&&\tabularnewline
~~100&0.335&0.922&0.898&&0.255&0.919&0.909&&0.356&0.929&0.897\tabularnewline
~~250&0.262&0.935&0.922&&0.145&0.927&0.931&&0.342&0.940&0.874\tabularnewline
~~500&0.221&0.941&0.927&&0.104&0.942&0.941&&0.316&0.944&0.869\tabularnewline
~~750&0.200&0.948&0.941&&0.090&0.938&0.942&&0.291&0.947&0.867\tabularnewline
~~1000&0.186&0.949&0.942&&0.081&0.946&0.950&&0.274&0.950&0.870\tabularnewline
~~2000&0.158&0.947&0.936&&0.063&0.946&0.941&&0.235&0.944&0.868\tabularnewline
\hline
{\bfseries $x=$-0.2}&&&&&&&&&&&\tabularnewline
~~100&0.564&0.800&0.388&&0.242&0.910&0.873&&0.512&0.858&0.446\tabularnewline
~~250&0.490&0.794&0.286&&0.169&0.924&0.911&&0.441&0.874&0.316\tabularnewline
~~500&0.446&0.791&0.220&&0.127&0.941&0.935&&0.386&0.890&0.234\tabularnewline
~~750&0.423&0.786&0.182&&0.107&0.936&0.932&&0.357&0.905&0.207\tabularnewline
~~1000&0.402&0.785&0.164&&0.095&0.936&0.934&&0.337&0.908&0.189\tabularnewline
~~2000&0.368&0.785&0.139&&0.071&0.948&0.945&&0.293&0.933&0.153\tabularnewline
\hline
{\bfseries $x=$0.2}&&&&&&&&&&&\tabularnewline
~~100&0.468&0.890&0.645&&0.326&0.888&0.760&&0.647&0.821&0.231\tabularnewline
~~250&0.379&0.928&0.647&&0.211&0.917&0.843&&0.645&0.642&0.026\tabularnewline
~~500&0.328&0.935&0.666&&0.144&0.930&0.903&&0.635&0.403&0.009\tabularnewline
~~750&0.302&0.941&0.658&&0.116&0.944&0.932&&0.623&0.259&0.005\tabularnewline
~~1000&0.284&0.949&0.672&&0.100&0.943&0.941&&0.611&0.212&0.004\tabularnewline
~~2000&0.244&0.945&0.708&&0.074&0.943&0.945&&0.575&0.150&0.003\tabularnewline
\hline
{\bfseries $x=$0.6}&&&&&&&&&&&\tabularnewline
~~100&0.407&0.922&0.926&&0.381&0.928&0.926&&0.479&0.932&0.929\tabularnewline
~~250&0.338&0.934&0.936&&0.291&0.938&0.937&&0.535&0.931&0.927\tabularnewline
~~500&0.284&0.937&0.936&&0.253&0.944&0.940&&0.551&0.900&0.909\tabularnewline
~~750&0.258&0.943&0.944&&0.234&0.948&0.939&&0.538&0.881&0.903\tabularnewline
~~1000&0.246&0.940&0.937&&0.218&0.945&0.933&&0.529&0.853&0.888\tabularnewline
~~2000&0.211&0.943&0.940&&0.174&0.944&0.931&&0.498&0.760&0.832\tabularnewline
\hline
{\bfseries $x=$1}&&&&&&&&&&&\tabularnewline
~~100&0.378&0.897&0.902&&0.253&0.887&0.906&&0.484&0.905&0.901\tabularnewline
~~250&0.269&0.898&0.911&&0.084&0.877&0.900&&0.401&0.926&0.922\tabularnewline
~~500&0.204&0.906&0.917&&0.043&0.879&0.895&&0.374&0.929&0.928\tabularnewline
~~750&0.179&0.928&0.930&&0.035&0.881&0.898&&0.361&0.944&0.931\tabularnewline
~~1000&0.165&0.925&0.938&&0.036&0.880&0.892&&0.350&0.948&0.942\tabularnewline
~~2000&0.136&0.939&0.939&&0.048&0.894&0.907&&0.322&0.942&0.935\tabularnewline
\hline
\end{tabular}
}
	\end{center}
\end{table}

\clearpage
\begin{table}
	\begin{center}\renewcommand{\arraystretch}{1}
	\caption{Empirical Coverage Probabilities, 95\% Confidence Intervals, $\v=1$, Epanechnikov Kernel}
	\label{supptable:ec_nu1_epa}
	\vspace{-.1in}
	\resizebox{\columnwidth}{!}{%latex.default(table1, file = paste("output/table_ec_k", kernel,     "_p", p, "_d", deriv, ".txt", sep = ""), landscape = FALSE,     outer.size = "scriptsize", col.just = rep("c", ncol(table1)),     center = "none", title = "", table.env = FALSE, n.cgroup = c(3,         3, 3), cgroup = c("$h_{\\texttt{RBC}}$", "$h_{\\texttt{US}}$",         "$h_{\\texttt{MSE}}$"), n.rgroup = c(6, 6, 6, 6, 6, 6),     rgroup = c(paste("$x=$", eval, sep = "")))%
\begin{tabular}{lccccccccccc}
\hline\hline
\multicolumn{1}{l}{\bfseries }&\multicolumn{3}{c}{\bfseries $h_{\texttt{RBC}}$}&\multicolumn{1}{c}{\bfseries }&\multicolumn{3}{c}{\bfseries $h_{\texttt{US}}$}&\multicolumn{1}{c}{\bfseries }&\multicolumn{3}{c}{\bfseries $h_{\texttt{MSE}}$}\tabularnewline
\cline{2-4} \cline{6-8} \cline{10-12}
\multicolumn{1}{l}{}&\multicolumn{1}{c}{$h$}&\multicolumn{1}{c}{RBC}&\multicolumn{1}{c}{US}&\multicolumn{1}{c}{}&\multicolumn{1}{c}{$h$}&\multicolumn{1}{c}{RBC}&\multicolumn{1}{c}{US}&\multicolumn{1}{c}{}&\multicolumn{1}{c}{$h$}&\multicolumn{1}{c}{RBC}&\multicolumn{1}{c}{US}\tabularnewline
\hline
{\bfseries $x=$-1}&&&&&&&&&&&\tabularnewline
~~100&1.041&0.689&0.337&&0.436&0.926&0.891&&0.556&0.923&0.864\tabularnewline
~~250&0.861&0.774&0.399&&0.368&0.924&0.910&&0.484&0.937&0.890\tabularnewline
~~500&0.718&0.872&0.488&&0.321&0.930&0.920&&0.451&0.949&0.901\tabularnewline
~~750&0.635&0.915&0.561&&0.295&0.935&0.929&&0.434&0.950&0.900\tabularnewline
~~1000&0.582&0.931&0.625&&0.280&0.937&0.932&&0.420&0.949&0.908\tabularnewline
~~2000&0.475&0.942&0.760&&0.255&0.938&0.941&&0.392&0.955&0.904\tabularnewline
\hline
{\bfseries $x=$-0.6}&&&&&&&&&&&\tabularnewline
~~100&0.482&0.874&0.621&&0.335&0.911&0.809&&0.486&0.918&0.571\tabularnewline
~~250&0.360&0.936&0.654&&0.262&0.940&0.876&&0.475&0.933&0.229\tabularnewline
~~500&0.298&0.943&0.774&&0.221&0.942&0.910&&0.446&0.943&0.072\tabularnewline
~~750&0.268&0.948&0.815&&0.200&0.943&0.925&&0.424&0.942&0.043\tabularnewline
~~1000&0.250&0.944&0.850&&0.186&0.949&0.933&&0.408&0.942&0.041\tabularnewline
~~2000&0.214&0.948&0.884&&0.158&0.947&0.937&&0.373&0.943&0.024\tabularnewline
\hline
{\bfseries $x=$-0.2}&&&&&&&&&&&\tabularnewline
~~100&0.732&0.660&0.097&&0.564&0.819&0.388&&0.548&0.906&0.277\tabularnewline
~~250&0.642&0.677&0.072&&0.490&0.850&0.400&&0.488&0.937&0.204\tabularnewline
~~500&0.590&0.713&0.056&&0.446&0.858&0.431&&0.443&0.939&0.177\tabularnewline
~~750&0.563&0.727&0.048&&0.423&0.851&0.441&&0.420&0.942&0.163\tabularnewline
~~1000&0.539&0.743&0.049&&0.402&0.864&0.459&&0.403&0.942&0.167\tabularnewline
~~2000&0.488&0.783&0.050&&0.368&0.858&0.493&&0.365&0.934&0.175\tabularnewline
\hline
{\bfseries $x=$0.2}&&&&&&&&&&&\tabularnewline
~~100&0.575&0.626&0.193&&0.468&0.782&0.422&&0.575&0.691&0.035\tabularnewline
~~250&0.446&0.743&0.161&&0.379&0.862&0.400&&0.514&0.598&0.002\tabularnewline
~~500&0.377&0.823&0.143&&0.328&0.894&0.373&&0.467&0.524&0.000\tabularnewline
~~750&0.345&0.861&0.135&&0.302&0.916&0.389&&0.442&0.514&0.000\tabularnewline
~~1000&0.323&0.875&0.141&&0.284&0.921&0.402&&0.425&0.483&0.000\tabularnewline
~~2000&0.277&0.917&0.190&&0.244&0.941&0.483&&0.385&0.490&0.000\tabularnewline
\hline
{\bfseries $x=$0.6}&&&&&&&&&&&\tabularnewline
~~100&0.601&0.915&0.925&&0.407&0.926&0.926&&0.515&0.939&0.934\tabularnewline
~~250&0.537&0.920&0.930&&0.338&0.933&0.937&&0.508&0.945&0.936\tabularnewline
~~500&0.496&0.904&0.941&&0.284&0.938&0.942&&0.500&0.945&0.952\tabularnewline
~~750&0.461&0.907&0.941&&0.258&0.938&0.946&&0.482&0.948&0.950\tabularnewline
~~1000&0.431&0.904&0.944&&0.246&0.945&0.949&&0.468&0.942&0.945\tabularnewline
~~2000&0.362&0.911&0.946&&0.211&0.942&0.946&&0.434&0.941&0.945\tabularnewline
\hline
{\bfseries $x=$1}&&&&&&&&&&&\tabularnewline
~~100&1.084&0.882&0.878&&0.378&0.927&0.919&&0.659&0.945&0.930\tabularnewline
~~250&0.922&0.895&0.890&&0.269&0.921&0.918&&0.617&0.946&0.940\tabularnewline
~~500&0.784&0.922&0.915&&0.204&0.932&0.929&&0.577&0.947&0.941\tabularnewline
~~750&0.707&0.933&0.933&&0.179&0.926&0.929&&0.551&0.946&0.949\tabularnewline
~~1000&0.663&0.939&0.942&&0.165&0.932&0.929&&0.528&0.950&0.947\tabularnewline
~~2000&0.546&0.943&0.942&&0.136&0.940&0.946&&0.469&0.949&0.941\tabularnewline
\hline
\end{tabular}
}
	\end{center}
\end{table}

\clearpage
\begin{table}
	\begin{center}\renewcommand{\arraystretch}{1}
		\caption{Average Interval Length, 95\% Confidence Intervals, $\v=0$, Epanechnikov Kernel}
		\label{supptable:il_nu0_epa}
		\vspace{-.1in}
		\resizebox{\columnwidth}{!}{%latex.default(table2, file = paste("output/table_il_k", kernel,     "_p", p, "_d", deriv, ".txt", sep = ""), landscape = FALSE,     outer.size = "scriptsize", col.just = rep("c", ncol(table1)),     center = "none", title = "", table.env = FALSE, n.cgroup = c(3,         3, 3), cgroup = c("$h_{\\texttt{RBC}}$", "$h_{\\texttt{US}}$",         "$h_{\\texttt{MSE}}$"), n.rgroup = c(6, 6, 6, 6, 6, 6),     rgroup = c(paste("$x=$", eval, sep = "")))%
\begin{tabular}{lccccccccccc}
\hline\hline
\multicolumn{1}{l}{\bfseries }&\multicolumn{3}{c}{\bfseries $h_{\texttt{RBC}}$}&\multicolumn{1}{c}{\bfseries }&\multicolumn{3}{c}{\bfseries $h_{\texttt{US}}$}&\multicolumn{1}{c}{\bfseries }&\multicolumn{3}{c}{\bfseries $h_{\texttt{MSE}}$}\tabularnewline
\cline{2-4} \cline{6-8} \cline{10-12}
\multicolumn{1}{l}{}&\multicolumn{1}{c}{$h$}&\multicolumn{1}{c}{RBC}&\multicolumn{1}{c}{US}&\multicolumn{1}{c}{}&\multicolumn{1}{c}{$h$}&\multicolumn{1}{c}{RBC}&\multicolumn{1}{c}{US}&\multicolumn{1}{c}{}&\multicolumn{1}{c}{$h$}&\multicolumn{1}{c}{RBC}&\multicolumn{1}{c}{US}\tabularnewline
\hline
{\bfseries $x=$-1}&&&&&&&&&&&\tabularnewline
~~100&0.436&2.442&1.674&&0.320&2.761&1.793&&0.507&2.330&1.656\tabularnewline
~~250&0.368&1.713&1.239&&0.115&2.781&1.795&&0.462&1.492&1.089\tabularnewline
~~500&0.321&1.282&0.940&&0.116&2.585&1.697&&0.438&1.084&0.796\tabularnewline
~~750&0.295&1.090&0.801&&0.168&2.210&1.474&&0.420&0.907&0.667\tabularnewline
~~1000&0.280&0.966&0.710&&0.205&1.799&1.240&&0.404&0.801&0.589\tabularnewline
~~2000&0.255&0.711&0.524&&0.143&1.222&0.891&&0.356&0.605&0.446\tabularnewline
\hline
{\bfseries $x=$-0.6}&&&&&&&&&&&\tabularnewline
~~100&0.335&1.020&0.762&&0.255&1.173&0.863&&0.356&0.983&0.734\tabularnewline
~~250&0.262&0.715&0.537&&0.145&0.985&0.739&&0.342&0.633&0.473\tabularnewline
~~500&0.221&0.547&0.411&&0.104&0.812&0.611&&0.316&0.461&0.346\tabularnewline
~~750&0.200&0.467&0.352&&0.090&0.706&0.531&&0.291&0.390&0.293\tabularnewline
~~1000&0.186&0.419&0.315&&0.081&0.646&0.487&&0.274&0.349&0.262\tabularnewline
~~2000&0.158&0.322&0.242&&0.063&0.513&0.386&&0.235&0.266&0.200\tabularnewline
\hline
{\bfseries $x=$-0.2}&&&&&&&&&&&\tabularnewline
~~100&0.564&0.786&0.592&&0.242&1.171&0.864&&0.512&0.799&0.601\tabularnewline
~~250&0.490&0.533&0.401&&0.169&0.890&0.671&&0.441&0.543&0.409\tabularnewline
~~500&0.446&0.396&0.298&&0.127&0.724&0.545&&0.386&0.411&0.309\tabularnewline
~~750&0.423&0.334&0.251&&0.107&0.644&0.485&&0.357&0.349&0.263\tabularnewline
~~1000&0.402&0.297&0.223&&0.095&0.592&0.446&&0.337&0.312&0.234\tabularnewline
~~2000&0.368&0.221&0.166&&0.071&0.481&0.362&&0.293&0.236&0.178\tabularnewline
\hline
{\bfseries $x=$0.2}&&&&&&&&&&&\tabularnewline
~~100&0.468&0.844&0.632&&0.326&1.050&0.778&&0.647&0.711&0.532\tabularnewline
~~250&0.379&0.589&0.443&&0.211&0.822&0.619&&0.645&0.451&0.338\tabularnewline
~~500&0.328&0.447&0.336&&0.144&0.692&0.520&&0.635&0.321&0.241\tabularnewline
~~750&0.302&0.379&0.286&&0.116&0.619&0.466&&0.623&0.265&0.200\tabularnewline
~~1000&0.284&0.339&0.255&&0.100&0.575&0.433&&0.611&0.232&0.174\tabularnewline
~~2000&0.244&0.258&0.194&&0.074&0.471&0.355&&0.575&0.169&0.127\tabularnewline
\hline
{\bfseries $x=$0.6}&&&&&&&&&&&\tabularnewline
~~100&0.407&0.942&0.706&&0.381&0.994&0.749&&0.479&0.864&0.643\tabularnewline
~~250&0.338&0.647&0.489&&0.291&0.718&0.542&&0.535&0.525&0.399\tabularnewline
~~500&0.284&0.495&0.373&&0.253&0.544&0.410&&0.551&0.367&0.282\tabularnewline
~~750&0.258&0.422&0.318&&0.234&0.463&0.348&&0.538&0.304&0.234\tabularnewline
~~1000&0.246&0.375&0.283&&0.218&0.414&0.311&&0.529&0.266&0.205\tabularnewline
~~2000&0.211&0.285&0.215&&0.174&0.328&0.247&&0.498&0.196&0.150\tabularnewline
\hline
{\bfseries $x=$1}&&&&&&&&&&&\tabularnewline
~~100&0.378&2.547&1.725&&0.253&2.847&1.832&&0.484&2.397&1.704\tabularnewline
~~250&0.269&2.034&1.440&&0.084&2.893&1.850&&0.401&1.625&1.182\tabularnewline
~~500&0.204&1.609&1.169&&0.043&2.825&1.828&&0.374&1.172&0.861\tabularnewline
~~750&0.179&1.391&1.018&&0.035&2.760&1.806&&0.361&0.971&0.715\tabularnewline
~~1000&0.165&1.250&0.917&&0.036&2.670&1.767&&0.350&0.856&0.630\tabularnewline
~~2000&0.136&0.966&0.710&&0.048&2.069&1.458&&0.322&0.627&0.462\tabularnewline
\hline
\end{tabular}
}
	\end{center}
\end{table}

\clearpage
\begin{table}
	\begin{center}\renewcommand{\arraystretch}{1}
		\caption{Average Interval Length, 95\% Confidence Intervals, $\v=1$, Epanechnikov Kernel}
		\label{supptable:il_nu1_epa}
		\vspace{-.1in}
		\resizebox{\columnwidth}{!}{%latex.default(table2, file = paste("output/table_il_k", kernel,     "_p", p, "_d", deriv, ".txt", sep = ""), landscape = FALSE,     outer.size = "scriptsize", col.just = rep("c", ncol(table1)),     center = "none", title = "", table.env = FALSE, n.cgroup = c(3,         3, 3), cgroup = c("$h_{\\texttt{RBC}}$", "$h_{\\texttt{US}}$",         "$h_{\\texttt{MSE}}$"), n.rgroup = c(6, 6, 6, 6, 6, 6),     rgroup = c(paste("$x=$", eval, sep = "")))%
\begin{tabular}{lccccccccccc}
\hline\hline
\multicolumn{1}{l}{\bfseries }&\multicolumn{3}{c}{\bfseries $h_{\texttt{RBC}}$}&\multicolumn{1}{c}{\bfseries }&\multicolumn{3}{c}{\bfseries $h_{\texttt{US}}$}&\multicolumn{1}{c}{\bfseries }&\multicolumn{3}{c}{\bfseries $h_{\texttt{MSE}}$}\tabularnewline
\cline{2-4} \cline{6-8} \cline{10-12}
\multicolumn{1}{l}{}&\multicolumn{1}{c}{$h$}&\multicolumn{1}{c}{RBC}&\multicolumn{1}{c}{US}&\multicolumn{1}{c}{}&\multicolumn{1}{c}{$h$}&\multicolumn{1}{c}{RBC}&\multicolumn{1}{c}{US}&\multicolumn{1}{c}{}&\multicolumn{1}{c}{$h$}&\multicolumn{1}{c}{RBC}&\multicolumn{1}{c}{US}\tabularnewline
\hline
{\bfseries $x=$-1}&&&&&&&&&&&\tabularnewline
~~100&1.041&19.701&9.445&&0.436&69.744&28.528&&0.556&48.016&22.617\tabularnewline
~~250&0.861&16.338&8.028&&0.368&61.142&28.904&&0.484&34.436&16.763\tabularnewline
~~500&0.718&14.947&7.391&&0.321&49.275&24.059&&0.451&26.558&13.049\tabularnewline
~~750&0.635&14.553&7.190&&0.295&44.292&21.675&&0.434&22.961&11.300\tabularnewline
~~1000&0.582&14.242&7.042&&0.280&40.630&19.944&&0.420&20.849&10.271\tabularnewline
~~2000&0.475&13.336&6.587&&0.255&32.206&15.851&&0.392&16.238&8.016\tabularnewline
\hline
{\bfseries $x=$-0.6}&&&&&&&&&&&\tabularnewline
~~100&0.482&6.098&3.158&&0.335&9.635&4.835&&0.486&5.571&2.846\tabularnewline
~~250&0.360&5.109&2.557&&0.262&8.041&4.096&&0.475&3.562&1.750\tabularnewline
~~500&0.298&4.512&2.289&&0.221&7.088&3.612&&0.446&2.666&1.273\tabularnewline
~~750&0.268&4.255&2.165&&0.200&6.681&3.405&&0.424&2.287&1.096\tabularnewline
~~1000&0.250&4.084&2.079&&0.186&6.358&3.243&&0.408&2.055&0.997\tabularnewline
~~2000&0.214&3.627&1.846&&0.158&5.737&2.921&&0.373&1.597&0.804\tabularnewline
\hline
{\bfseries $x=$-0.2}&&&&&&&&&&&\tabularnewline
~~100&0.732&3.023&1.531&&0.564&4.600&2.343&&0.548&4.033&2.033\tabularnewline
~~250&0.642&2.216&1.127&&0.490&3.476&1.775&&0.488&2.994&1.513\tabularnewline
~~500&0.590&1.757&0.896&&0.446&2.832&1.443&&0.443&2.428&1.231\tabularnewline
~~750&0.563&1.535&0.781&&0.423&2.541&1.294&&0.420&2.153&1.091\tabularnewline
~~1000&0.539&1.407&0.716&&0.402&2.364&1.204&&0.403&1.977&1.004\tabularnewline
~~2000&0.488&1.156&0.588&&0.368&1.971&1.004&&0.365&1.617&0.822\tabularnewline
\hline
{\bfseries $x=$0.2}&&&&&&&&&&&\tabularnewline
~~100&0.575&3.998&2.031&&0.468&5.462&2.765&&0.575&3.747&1.885\tabularnewline
~~250&0.446&3.521&1.783&&0.379&4.477&2.277&&0.514&2.765&1.396\tabularnewline
~~500&0.377&3.133&1.591&&0.328&3.853&1.962&&0.467&2.244&1.137\tabularnewline
~~750&0.345&2.909&1.478&&0.302&3.554&1.813&&0.442&1.988&1.009\tabularnewline
~~1000&0.323&2.770&1.408&&0.284&3.362&1.713&&0.425&1.827&0.929\tabularnewline
~~2000&0.277&2.456&1.250&&0.244&2.973&1.515&&0.385&1.493&0.760\tabularnewline
\hline
{\bfseries $x=$0.6}&&&&&&&&&&&\tabularnewline
~~100&0.601&4.652&2.707&&0.407&7.736&3.930&&0.515&5.184&2.714\tabularnewline
~~250&0.537&3.328&1.792&&0.338&6.157&3.165&&0.508&3.305&1.709\tabularnewline
~~500&0.496&2.585&1.354&&0.284&5.415&2.762&&0.500&2.348&1.187\tabularnewline
~~750&0.461&2.313&1.196&&0.258&4.969&2.534&&0.482&2.000&0.994\tabularnewline
~~1000&0.431&2.179&1.124&&0.246&4.692&2.396&&0.468&1.791&0.883\tabularnewline
~~2000&0.362&1.953&1.001&&0.211&4.095&2.087&&0.434&1.377&0.680\tabularnewline
\hline
{\bfseries $x=$1}&&&&&&&&&&&\tabularnewline
~~100&1.084&18.377&8.838&&0.378&77.836&31.082&&0.659&36.723&17.371\tabularnewline
~~250&0.922&14.556&7.134&&0.269&102.575&46.277&&0.617&24.053&11.744\tabularnewline
~~500&0.784&12.866&6.341&&0.204&99.887&47.579&&0.577&18.347&9.047\tabularnewline
~~750&0.707&12.195&6.022&&0.179&96.458&46.670&&0.551&15.971&7.885\tabularnewline
~~1000&0.663&11.490&5.680&&0.165&91.635&44.564&&0.528&14.687&7.261\tabularnewline
~~2000&0.546&10.592&5.232&&0.136&82.575&40.567&&0.469&12.323&6.085\tabularnewline
\hline
\end{tabular}
}
	\end{center}
\end{table}

%%%%%%%%%%%%%%%%%%%%%% TABLES UNI

\clearpage
\begin{table}
	\begin{center}\renewcommand{\arraystretch}{1}
		\caption{Empirical Coverage Probabilities, 95\% Confidence Intervals, $\v=0$, Uniform Kernel}
		\label{supptable:ec_nu0_uni}
		\vspace{-.1in}
		\resizebox{\columnwidth}{!}{%latex.default(table1, file = paste("output/table_ec_k", kernel,     "_p", p, "_d", deriv, ".txt", sep = ""), landscape = FALSE,     outer.size = "scriptsize", col.just = rep("c", ncol(table1)),     center = "none", title = "", table.env = FALSE, n.cgroup = c(3,         3, 3), cgroup = c("$h_{\\texttt{RBC}}$", "$h_{\\texttt{US}}$",         "$h_{\\texttt{MSE}}$"), n.rgroup = c(6, 6, 6, 6, 6, 6),     rgroup = c(paste("$x=$", eval, sep = "")))%
\begin{tabular}{lccccccccccc}
\hline\hline
\multicolumn{1}{l}{\bfseries }&\multicolumn{3}{c}{\bfseries $h_{\texttt{RBC}}$}&\multicolumn{1}{c}{\bfseries }&\multicolumn{3}{c}{\bfseries $h_{\texttt{US}}$}&\multicolumn{1}{c}{\bfseries }&\multicolumn{3}{c}{\bfseries $h_{\texttt{MSE}}$}\tabularnewline
\cline{2-4} \cline{6-8} \cline{10-12}
\multicolumn{1}{l}{}&\multicolumn{1}{c}{$h$}&\multicolumn{1}{c}{RBC}&\multicolumn{1}{c}{US}&\multicolumn{1}{c}{}&\multicolumn{1}{c}{$h$}&\multicolumn{1}{c}{RBC}&\multicolumn{1}{c}{US}&\multicolumn{1}{c}{}&\multicolumn{1}{c}{$h$}&\multicolumn{1}{c}{RBC}&\multicolumn{1}{c}{US}\tabularnewline
\hline
{\bfseries $x=$-1}&&&&&&&&&&&\tabularnewline
~~100&0.372&0.901&0.876&&0.286&0.893&0.888&&0.462&0.903&0.877\tabularnewline
~~250&0.319&0.910&0.899&&0.122&0.888&0.909&&0.397&0.919&0.881\tabularnewline
~~500&0.277&0.925&0.905&&0.105&0.889&0.901&&0.372&0.934&0.852\tabularnewline
~~750&0.256&0.930&0.917&&0.117&0.897&0.892&&0.352&0.940&0.827\tabularnewline
~~1000&0.242&0.938&0.906&&0.126&0.893&0.873&&0.326&0.945&0.812\tabularnewline
~~2000&0.221&0.941&0.900&&0.156&0.894&0.809&&0.247&0.942&0.854\tabularnewline
\hline
{\bfseries $x=$-0.6}&&&&&&&&&&&\tabularnewline
~~100&0.297&0.925&0.890&&0.175&0.919&0.918&&0.364&0.930&0.841\tabularnewline
~~250&0.224&0.937&0.910&&0.106&0.921&0.927&&0.319&0.936&0.794\tabularnewline
~~500&0.184&0.947&0.924&&0.080&0.932&0.940&&0.278&0.942&0.778\tabularnewline
~~750&0.167&0.949&0.932&&0.069&0.938&0.942&&0.256&0.951&0.768\tabularnewline
~~1000&0.155&0.947&0.936&&0.062&0.939&0.944&&0.240&0.949&0.779\tabularnewline
~~2000&0.131&0.946&0.934&&0.050&0.948&0.941&&0.207&0.945&0.770\tabularnewline
\hline
{\bfseries $x=$-0.2}&&&&&&&&&&&\tabularnewline
~~100&0.466&0.863&0.330&&0.205&0.920&0.841&&0.455&0.894&0.287\tabularnewline
~~250&0.411&0.852&0.237&&0.140&0.924&0.903&&0.396&0.899&0.153\tabularnewline
~~500&0.377&0.846&0.180&&0.103&0.939&0.933&&0.348&0.912&0.094\tabularnewline
~~750&0.355&0.841&0.155&&0.087&0.937&0.935&&0.322&0.927&0.066\tabularnewline
~~1000&0.339&0.843&0.144&&0.076&0.938&0.934&&0.303&0.931&0.066\tabularnewline
~~2000&0.309&0.836&0.128&&0.057&0.945&0.946&&0.263&0.941&0.050\tabularnewline
\hline
{\bfseries $x=$0.2}&&&&&&&&&&&\tabularnewline
~~100&0.381&0.910&0.606&&0.281&0.904&0.743&&0.477&0.933&0.271\tabularnewline
~~250&0.302&0.938&0.623&&0.188&0.928&0.831&&0.428&0.940&0.090\tabularnewline
~~500&0.255&0.939&0.671&&0.132&0.932&0.892&&0.388&0.939&0.042\tabularnewline
~~750&0.232&0.942&0.669&&0.105&0.937&0.922&&0.365&0.940&0.028\tabularnewline
~~1000&0.217&0.948&0.703&&0.089&0.942&0.933&&0.348&0.941&0.026\tabularnewline
~~2000&0.185&0.945&0.748&&0.063&0.943&0.944&&0.313&0.926&0.019\tabularnewline
\hline
{\bfseries $x=$0.6}&&&&&&&&&&&\tabularnewline
~~100&0.348&0.928&0.923&&0.271&0.932&0.931&&0.427&0.937&0.925\tabularnewline
~~250&0.304&0.934&0.935&&0.206&0.939&0.937&&0.391&0.945&0.924\tabularnewline
~~500&0.273&0.935&0.935&&0.179&0.946&0.942&&0.352&0.945&0.920\tabularnewline
~~750&0.256&0.938&0.933&&0.173&0.951&0.949&&0.330&0.949&0.912\tabularnewline
~~1000&0.244&0.935&0.934&&0.166&0.946&0.944&&0.315&0.944&0.911\tabularnewline
~~2000&0.209&0.935&0.927&&0.150&0.950&0.932&&0.279&0.948&0.880\tabularnewline
\hline
{\bfseries $x=$1}&&&&&&&&&&&\tabularnewline
~~100&0.323&0.907&0.917&&0.266&0.907&0.919&&0.464&0.909&0.918\tabularnewline
~~250&0.230&0.901&0.919&&0.089&0.894&0.919&&0.393&0.922&0.930\tabularnewline
~~500&0.175&0.910&0.922&&0.051&0.893&0.907&&0.357&0.934&0.940\tabularnewline
~~750&0.154&0.921&0.937&&0.041&0.897&0.915&&0.334&0.940&0.939\tabularnewline
~~1000&0.139&0.922&0.939&&0.038&0.894&0.918&&0.316&0.948&0.944\tabularnewline
~~2000&0.115&0.938&0.935&&0.045&0.900&0.918&&0.287&0.937&0.937\tabularnewline
\hline
\end{tabular}
}
	\end{center}
\end{table}

\clearpage
\begin{table}
	\begin{center}\renewcommand{\arraystretch}{1}
		\caption{Empirical Coverage Probabilities, 95\% Confidence Intervals, $\v=1$, Uniform Kernel}
		\label{supptable:ec_nu1_uni}
		\vspace{-.1in}
		\resizebox{\columnwidth}{!}{%latex.default(table1, file = paste("output/table_ec_k", kernel,     "_p", p, "_d", deriv, ".txt", sep = ""), landscape = FALSE,     outer.size = "scriptsize", col.just = rep("c", ncol(table1)),     center = "none", title = "", table.env = FALSE, n.cgroup = c(3,         3, 3), cgroup = c("$h_{\\texttt{RBC}}$", "$h_{\\texttt{US}}$",         "$h_{\\texttt{MSE}}$"), n.rgroup = c(6, 6, 6, 6, 6, 6),     rgroup = c(paste("$x=$", eval, sep = "")))%
\begin{tabular}{lccccccccccc}
\hline\hline
\multicolumn{1}{l}{\bfseries }&\multicolumn{3}{c}{\bfseries $h_{\texttt{RBC}}$}&\multicolumn{1}{c}{\bfseries }&\multicolumn{3}{c}{\bfseries $h_{\texttt{US}}$}&\multicolumn{1}{c}{\bfseries }&\multicolumn{3}{c}{\bfseries $h_{\texttt{MSE}}$}\tabularnewline
\cline{2-4} \cline{6-8} \cline{10-12}
\multicolumn{1}{l}{}&\multicolumn{1}{c}{$h$}&\multicolumn{1}{c}{RBC}&\multicolumn{1}{c}{US}&\multicolumn{1}{c}{}&\multicolumn{1}{c}{$h$}&\multicolumn{1}{c}{RBC}&\multicolumn{1}{c}{US}&\multicolumn{1}{c}{}&\multicolumn{1}{c}{$h$}&\multicolumn{1}{c}{RBC}&\multicolumn{1}{c}{US}\tabularnewline
\hline
{\bfseries $x=$-1}&&&&&&&&&&&\tabularnewline
~~100&0.878&0.835&0.413&&0.372&0.928&0.907&&0.525&0.924&0.856\tabularnewline
~~250&0.720&0.883&0.476&&0.319&0.934&0.917&&0.449&0.936&0.878\tabularnewline
~~500&0.588&0.930&0.605&&0.277&0.937&0.923&&0.409&0.940&0.900\tabularnewline
~~750&0.517&0.935&0.677&&0.256&0.939&0.935&&0.393&0.946&0.899\tabularnewline
~~1000&0.473&0.943&0.739&&0.242&0.941&0.934&&0.382&0.947&0.910\tabularnewline
~~2000&0.382&0.942&0.851&&0.221&0.943&0.941&&0.357&0.948&0.898\tabularnewline
\hline
{\bfseries $x=$-0.6}&&&&&&&&&&&\tabularnewline
~~100&0.418&0.899&0.592&&0.297&0.919&0.792&&0.458&0.927&0.420\tabularnewline
~~250&0.305&0.935&0.680&&0.224&0.932&0.873&&0.426&0.937&0.129\tabularnewline
~~500&0.251&0.946&0.797&&0.184&0.944&0.915&&0.394&0.947&0.064\tabularnewline
~~750&0.227&0.948&0.831&&0.167&0.943&0.926&&0.375&0.943&0.044\tabularnewline
~~1000&0.212&0.948&0.867&&0.155&0.947&0.933&&0.361&0.949&0.042\tabularnewline
~~2000&0.181&0.950&0.891&&0.131&0.944&0.945&&0.331&0.947&0.027\tabularnewline
\hline
{\bfseries $x=$-0.2}&&&&&&&&&&&\tabularnewline
~~100&0.630&0.782&0.100&&0.466&0.879&0.437&&0.512&0.921&0.135\tabularnewline
~~250&0.558&0.809&0.064&&0.411&0.883&0.470&&0.457&0.946&0.067\tabularnewline
~~500&0.512&0.829&0.049&&0.377&0.886&0.506&&0.414&0.941&0.052\tabularnewline
~~750&0.485&0.849&0.046&&0.355&0.882&0.516&&0.392&0.947&0.036\tabularnewline
~~1000&0.467&0.843&0.048&&0.339&0.886&0.535&&0.376&0.941&0.038\tabularnewline
~~2000&0.420&0.869&0.054&&0.309&0.884&0.562&&0.341&0.944&0.041\tabularnewline
\hline
{\bfseries $x=$0.2}&&&&&&&&&&&\tabularnewline
~~100&0.491&0.785&0.200&&0.381&0.866&0.502&&0.513&0.818&0.048\tabularnewline
~~250&0.379&0.875&0.172&&0.302&0.923&0.526&&0.456&0.785&0.004\tabularnewline
~~500&0.321&0.907&0.150&&0.255&0.938&0.552&&0.414&0.759&0.000\tabularnewline
~~750&0.294&0.928&0.150&&0.232&0.947&0.600&&0.392&0.773&0.000\tabularnewline
~~1000&0.274&0.933&0.154&&0.217&0.947&0.623&&0.376&0.758&0.000\tabularnewline
~~2000&0.234&0.948&0.218&&0.185&0.948&0.713&&0.340&0.778&0.000\tabularnewline
\hline
{\bfseries $x=$0.6}&&&&&&&&&&&\tabularnewline
~~100&0.516&0.928&0.928&&0.348&0.932&0.917&&0.470&0.935&0.935\tabularnewline
~~250&0.446&0.934&0.925&&0.304&0.929&0.933&&0.442&0.943&0.934\tabularnewline
~~500&0.393&0.932&0.936&&0.273&0.927&0.944&&0.415&0.948&0.949\tabularnewline
~~750&0.356&0.931&0.939&&0.256&0.938&0.946&&0.399&0.952&0.950\tabularnewline
~~1000&0.338&0.934&0.943&&0.244&0.938&0.946&&0.388&0.950&0.945\tabularnewline
~~2000&0.287&0.928&0.937&&0.209&0.934&0.950&&0.359&0.950&0.943\tabularnewline
\hline
{\bfseries $x=$1}&&&&&&&&&&&\tabularnewline
~~100&0.901&0.922&0.886&&0.323&0.932&0.926&&0.555&0.935&0.932\tabularnewline
~~250&0.757&0.926&0.903&&0.230&0.926&0.918&&0.502&0.933&0.937\tabularnewline
~~500&0.633&0.938&0.920&&0.175&0.930&0.931&&0.454&0.940&0.941\tabularnewline
~~750&0.569&0.943&0.943&&0.154&0.932&0.935&&0.429&0.945&0.950\tabularnewline
~~1000&0.532&0.948&0.944&&0.139&0.935&0.935&&0.413&0.950&0.948\tabularnewline
~~2000&0.442&0.944&0.941&&0.115&0.941&0.940&&0.384&0.945&0.943\tabularnewline
\hline
\end{tabular}
}
	\end{center}
\end{table}

\clearpage
\begin{table}
	\begin{center}\renewcommand{\arraystretch}{1}
		\caption{Average Interval Length, 95\% Confidence Intervals, $\v=0$, Uniform Kernel}
		\label{supptable:il_nu0_uni}
		\vspace{-.1in}
		\resizebox{\columnwidth}{!}{%latex.default(table2, file = paste("output/table_il_k", kernel,     "_p", p, "_d", deriv, ".txt", sep = ""), landscape = FALSE,     outer.size = "scriptsize", col.just = rep("c", ncol(table1)),     center = "none", title = "", table.env = FALSE, n.cgroup = c(3,         3, 3), cgroup = c("$h_{\\texttt{RBC}}$", "$h_{\\texttt{US}}$",         "$h_{\\texttt{MSE}}$"), n.rgroup = c(6, 6, 6, 6, 6, 6),     rgroup = c(paste("$x=$", eval, sep = "")))%
\begin{tabular}{lccccccccccc}
\hline\hline
\multicolumn{1}{l}{\bfseries }&\multicolumn{3}{c}{\bfseries $h_{\texttt{RBC}}$}&\multicolumn{1}{c}{\bfseries }&\multicolumn{3}{c}{\bfseries $h_{\texttt{US}}$}&\multicolumn{1}{c}{\bfseries }&\multicolumn{3}{c}{\bfseries $h_{\texttt{MSE}}$}\tabularnewline
\cline{2-4} \cline{6-8} \cline{10-12}
\multicolumn{1}{l}{}&\multicolumn{1}{c}{$h$}&\multicolumn{1}{c}{RBC}&\multicolumn{1}{c}{US}&\multicolumn{1}{c}{}&\multicolumn{1}{c}{$h$}&\multicolumn{1}{c}{RBC}&\multicolumn{1}{c}{US}&\multicolumn{1}{c}{}&\multicolumn{1}{c}{$h$}&\multicolumn{1}{c}{RBC}&\multicolumn{1}{c}{US}\tabularnewline
\hline
{\bfseries $x=$-1}&&&&&&&&&&&\tabularnewline
~~100&0.372&2.561&1.638&&0.286&2.649&1.688&&0.462&2.566&1.636\tabularnewline
~~250&0.319&1.948&1.274&&0.122&2.593&1.656&&0.397&1.694&1.116\tabularnewline
~~500&0.277&1.457&0.965&&0.105&2.453&1.573&&0.372&1.230&0.816\tabularnewline
~~750&0.256&1.231&0.818&&0.117&2.266&1.453&&0.352&1.034&0.688\tabularnewline
~~1000&0.242&1.092&0.725&&0.126&2.106&1.347&&0.326&0.931&0.620\tabularnewline
~~2000&0.221&0.798&0.532&&0.156&1.507&0.982&&0.247&0.753&0.502\tabularnewline
\hline
{\bfseries $x=$-0.6}&&&&&&&&&&&\tabularnewline
~~100&0.297&1.107&0.741&&0.175&1.260&0.844&&0.364&1.026&0.690\tabularnewline
~~250&0.224&0.802&0.535&&0.106&1.169&0.779&&0.319&0.682&0.455\tabularnewline
~~500&0.184&0.619&0.413&&0.080&0.964&0.643&&0.278&0.513&0.342\tabularnewline
~~750&0.167&0.528&0.352&&0.069&0.836&0.557&&0.256&0.434&0.290\tabularnewline
~~1000&0.155&0.474&0.316&&0.062&0.762&0.509&&0.240&0.389&0.259\tabularnewline
~~2000&0.131&0.364&0.243&&0.050&0.597&0.398&&0.207&0.296&0.197\tabularnewline
\hline
{\bfseries $x=$-0.2}&&&&&&&&&&&\tabularnewline
~~100&0.466&0.894&0.598&&0.205&1.234&0.822&&0.455&0.879&0.585\tabularnewline
~~250&0.411&0.604&0.404&&0.140&1.020&0.682&&0.396&0.594&0.395\tabularnewline
~~500&0.377&0.449&0.300&&0.103&0.836&0.558&&0.348&0.447&0.298\tabularnewline
~~750&0.355&0.380&0.254&&0.087&0.744&0.495&&0.322&0.381&0.254\tabularnewline
~~1000&0.339&0.338&0.225&&0.076&0.683&0.457&&0.303&0.340&0.226\tabularnewline
~~2000&0.309&0.252&0.168&&0.057&0.555&0.370&&0.263&0.258&0.172\tabularnewline
\hline
{\bfseries $x=$0.2}&&&&&&&&&&&\tabularnewline
~~100&0.381&0.972&0.646&&0.281&1.132&0.754&&0.477&0.863&0.572\tabularnewline
~~250&0.302&0.684&0.457&&0.188&0.914&0.611&&0.428&0.573&0.381\tabularnewline
~~500&0.255&0.524&0.349&&0.132&0.762&0.508&&0.388&0.425&0.283\tabularnewline
~~750&0.232&0.446&0.298&&0.105&0.684&0.456&&0.365&0.358&0.239\tabularnewline
~~1000&0.217&0.401&0.267&&0.089&0.637&0.426&&0.348&0.318&0.212\tabularnewline
~~2000&0.185&0.306&0.204&&0.063&0.529&0.353&&0.313&0.238&0.159\tabularnewline
\hline
{\bfseries $x=$0.6}&&&&&&&&&&&\tabularnewline
~~100&0.348&1.045&0.702&&0.271&1.150&0.772&&0.427&0.936&0.635\tabularnewline
~~250&0.304&0.710&0.477&&0.206&0.879&0.590&&0.391&0.611&0.411\tabularnewline
~~500&0.273&0.527&0.354&&0.179&0.668&0.447&&0.352&0.454&0.304\tabularnewline
~~750&0.256&0.446&0.299&&0.173&0.561&0.374&&0.330&0.383&0.256\tabularnewline
~~1000&0.244&0.395&0.265&&0.166&0.492&0.329&&0.315&0.339&0.227\tabularnewline
~~2000&0.209&0.301&0.201&&0.150&0.368&0.245&&0.279&0.257&0.171\tabularnewline
\hline
{\bfseries $x=$1}&&&&&&&&&&&\tabularnewline
~~100&0.323&2.608&1.664&&0.266&2.666&1.696&&0.464&2.555&1.631\tabularnewline
~~250&0.230&2.307&1.477&&0.089&2.702&1.707&&0.393&1.733&1.136\tabularnewline
~~500&0.175&1.838&1.206&&0.051&2.632&1.679&&0.357&1.263&0.838\tabularnewline
~~750&0.154&1.578&1.044&&0.041&2.568&1.656&&0.334&1.059&0.705\tabularnewline
~~1000&0.139&1.427&0.946&&0.038&2.547&1.636&&0.316&0.943&0.628\tabularnewline
~~2000&0.115&1.101&0.733&&0.045&2.171&1.410&&0.287&0.694&0.463\tabularnewline
\hline
\end{tabular}
}
	\end{center}
\end{table}

\clearpage
\begin{table}
	\begin{center}\renewcommand{\arraystretch}{1}
		\caption{Average Interval Length, 95\% Confidence Intervals, $\v=1$, Uniform Kernel}
		\label{supptable:il_nu1_uni}
		\vspace{-.1in}
		\resizebox{\columnwidth}{!}{%latex.default(table2, file = paste("output/table_il_k", kernel,     "_p", p, "_d", deriv, ".txt", sep = ""), landscape = FALSE,     outer.size = "scriptsize", col.just = rep("c", ncol(table1)),     center = "none", title = "", table.env = FALSE, n.cgroup = c(3,         3, 3), cgroup = c("$h_{\\texttt{RBC}}$", "$h_{\\texttt{US}}$",         "$h_{\\texttt{MSE}}$"), n.rgroup = c(6, 6, 6, 6, 6, 6),     rgroup = c(paste("$x=$", eval, sep = "")))%
\begin{tabular}{lccccccccccc}
\hline\hline
\multicolumn{1}{l}{\bfseries }&\multicolumn{3}{c}{\bfseries $h_{\texttt{RBC}}$}&\multicolumn{1}{c}{\bfseries }&\multicolumn{3}{c}{\bfseries $h_{\texttt{US}}$}&\multicolumn{1}{c}{\bfseries }&\multicolumn{3}{c}{\bfseries $h_{\texttt{MSE}}$}\tabularnewline
\cline{2-4} \cline{6-8} \cline{10-12}
\multicolumn{1}{l}{}&\multicolumn{1}{c}{$h$}&\multicolumn{1}{c}{RBC}&\multicolumn{1}{c}{US}&\multicolumn{1}{c}{}&\multicolumn{1}{c}{$h$}&\multicolumn{1}{c}{RBC}&\multicolumn{1}{c}{US}&\multicolumn{1}{c}{}&\multicolumn{1}{c}{$h$}&\multicolumn{1}{c}{RBC}&\multicolumn{1}{c}{US}\tabularnewline
\hline
{\bfseries $x=$-1}&&&&&&&&&&&\tabularnewline
~~100&0.878&28.499&11.007&&0.372&74.089&26.860&&0.525&58.507&21.792\tabularnewline
~~250&0.720&23.646&9.391&&0.319&85.698&32.415&&0.449&42.941&16.854\tabularnewline
~~500&0.588&22.170&8.862&&0.277&69.500&27.253&&0.409&33.847&13.454\tabularnewline
~~750&0.517&21.707&8.686&&0.256&61.629&24.361&&0.393&29.132&11.609\tabularnewline
~~1000&0.473&21.202&8.489&&0.242&56.624&22.424&&0.382&26.357&10.515\tabularnewline
~~2000&0.382&20.050&8.033&&0.221&44.037&17.519&&0.357&20.483&8.194\tabularnewline
\hline
{\bfseries $x=$-0.6}&&&&&&&&&&&\tabularnewline
~~100&0.418&7.793&3.196&&0.297&11.921&4.762&&0.458&6.335&2.673\tabularnewline
~~250&0.305&6.840&2.707&&0.224&11.095&4.415&&0.426&4.159&1.683\tabularnewline
~~500&0.251&6.225&2.472&&0.184&10.016&4.010&&0.394&3.181&1.265\tabularnewline
~~750&0.227&5.844&2.334&&0.167&9.395&3.760&&0.375&2.752&1.095\tabularnewline
~~1000&0.212&5.596&2.236&&0.155&8.965&3.589&&0.361&2.509&0.998\tabularnewline
~~2000&0.181&4.983&1.991&&0.131&8.115&3.242&&0.331&2.022&0.807\tabularnewline
\hline
{\bfseries $x=$-0.2}&&&&&&&&&&&\tabularnewline
~~100&0.630&3.974&1.598&&0.466&6.568&2.636&&0.512&4.846&1.878\tabularnewline
~~250&0.558&2.909&1.166&&0.411&4.975&1.994&&0.457&3.552&1.399\tabularnewline
~~500&0.512&2.315&0.930&&0.377&4.070&1.629&&0.414&2.876&1.144\tabularnewline
~~750&0.485&2.045&0.818&&0.355&3.667&1.469&&0.392&2.543&1.011\tabularnewline
~~1000&0.467&1.873&0.750&&0.339&3.431&1.373&&0.376&2.336&0.931\tabularnewline
~~2000&0.420&1.551&0.621&&0.309&2.908&1.162&&0.341&1.910&0.763\tabularnewline
\hline
{\bfseries $x=$0.2}&&&&&&&&&&&\tabularnewline
~~100&0.491&5.462&2.151&&0.381&8.147&3.229&&0.513&4.813&1.859\tabularnewline
~~250&0.379&4.802&1.897&&0.302&6.861&2.742&&0.456&3.550&1.395\tabularnewline
~~500&0.321&4.238&1.686&&0.255&6.070&2.417&&0.414&2.874&1.141\tabularnewline
~~750&0.294&3.947&1.572&&0.232&5.664&2.267&&0.392&2.548&1.014\tabularnewline
~~1000&0.274&3.766&1.502&&0.217&5.400&2.160&&0.376&2.342&0.934\tabularnewline
~~2000&0.234&3.362&1.342&&0.185&4.837&1.935&&0.340&1.914&0.766\tabularnewline
\hline
{\bfseries $x=$0.6}&&&&&&&&&&&\tabularnewline
~~100&0.516&5.937&2.723&&0.348&10.278&4.159&&0.470&6.156&2.648\tabularnewline
~~250&0.446&4.422&1.876&&0.304&8.031&3.244&&0.442&4.028&1.669\tabularnewline
~~500&0.393&3.684&1.516&&0.273&6.540&2.629&&0.415&3.008&1.218\tabularnewline
~~750&0.356&3.405&1.382&&0.256&5.857&2.359&&0.399&2.565&1.034\tabularnewline
~~1000&0.338&3.188&1.292&&0.244&5.466&2.202&&0.388&2.313&0.933\tabularnewline
~~2000&0.287&2.833&1.137&&0.209&4.776&1.918&&0.359&1.842&0.740\tabularnewline
\hline
{\bfseries $x=$1}&&&&&&&&&&&\tabularnewline
~~100&0.901&27.285&10.546&&0.323&77.275&28.086&&0.555&53.679&20.014\tabularnewline
~~250&0.757&21.694&8.600&&0.230&136.702&50.383&&0.502&36.186&14.246\tabularnewline
~~500&0.633&19.479&7.765&&0.175&139.797&53.718&&0.454&28.845&11.503\tabularnewline
~~750&0.569&18.470&7.391&&0.154&132.358&51.889&&0.429&25.410&10.147\tabularnewline
~~1000&0.532&17.516&7.020&&0.139&129.817&50.990&&0.413&23.265&9.305\tabularnewline
~~2000&0.442&15.794&6.319&&0.115&117.431&46.691&&0.384&18.243&7.290\tabularnewline
\hline
\end{tabular}
}
	\end{center}
\end{table}

\clearpage
%%%%%%%%%%%%%%%%%%%%%%%%%%%%%%%%%%%
\subsection{Numerical Computations}

In the main text we discussed the optimization of $\rho$ by minimizing the $L_2$ distance to the known optimal kernel shape in various contexts. These optimal kernel shapes are shown in the figures below for both the Triangular and Epanechnikov kernels, at interior and boundary points, for levels and derivatives. In each case the black line shows $\mathcal{K}^*_{p+1}(u)$ while the dash-dotted red line is $\mathcal{K}_\RBC(u; K, \rho^*, \v)$.

\begin{figure}[ht]
	\centering
	\caption{$\mathcal{K}^*_{p+1}(u)$ vs. $\mathcal{K}_\RBC(u; K, \rho^*, \v)$, $\v=0$}\label{suppfig:k1}
	\subcaption{Triangular Kernel, Boundary Point}
	\begin{subfigure}[b]{0.5\textwidth}
		\includegraphics[height=0.25\textheight,width=0.85\textwidth]{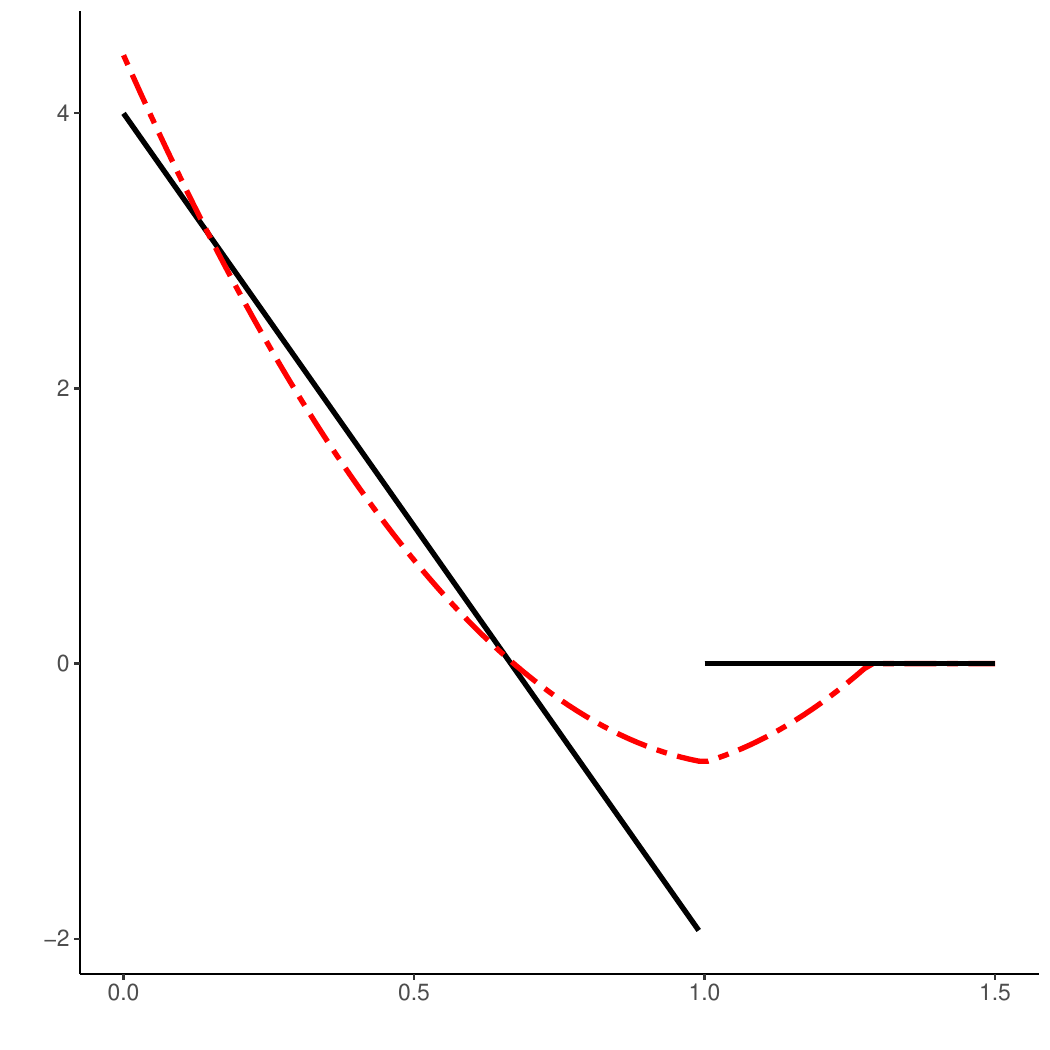}
		\caption{$p=0$}
	\end{subfigure}%
	\begin{subfigure}[b]{0.5\textwidth}
		\includegraphics[height=0.25\textheight,width=0.85\textwidth]{simuls/output/ktria_p1_v0_bnd_L2.pdf}
		\caption{$p=1$}
	\end{subfigure}
	
	\begin{subfigure}[b]{0.5\textwidth}
		\includegraphics[height=0.25\textheight,width=0.85\textwidth]{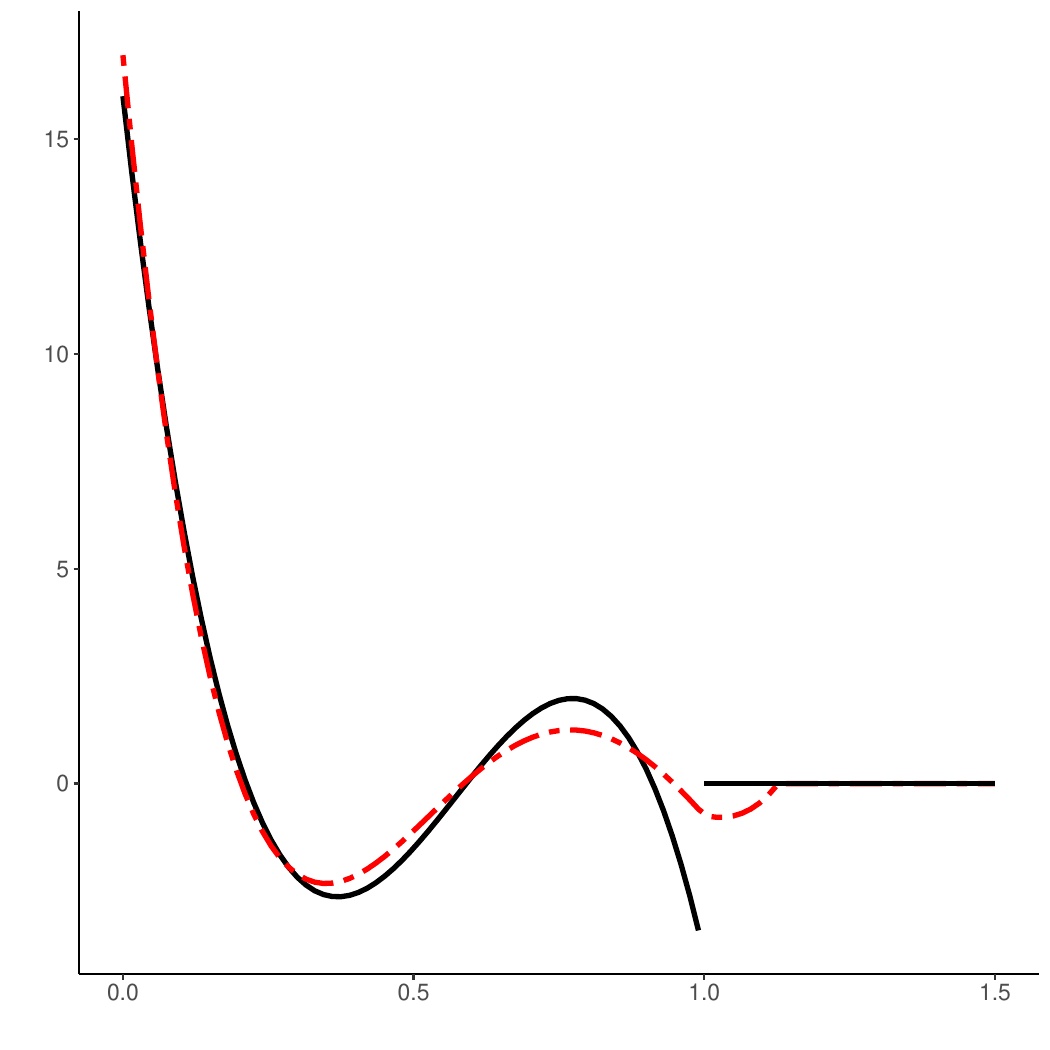}
		\caption{$p=2$}
	\end{subfigure}%
	\begin{subfigure}[b]{0.5\textwidth}
		\includegraphics[height=0.25\textheight,width=0.85\textwidth]{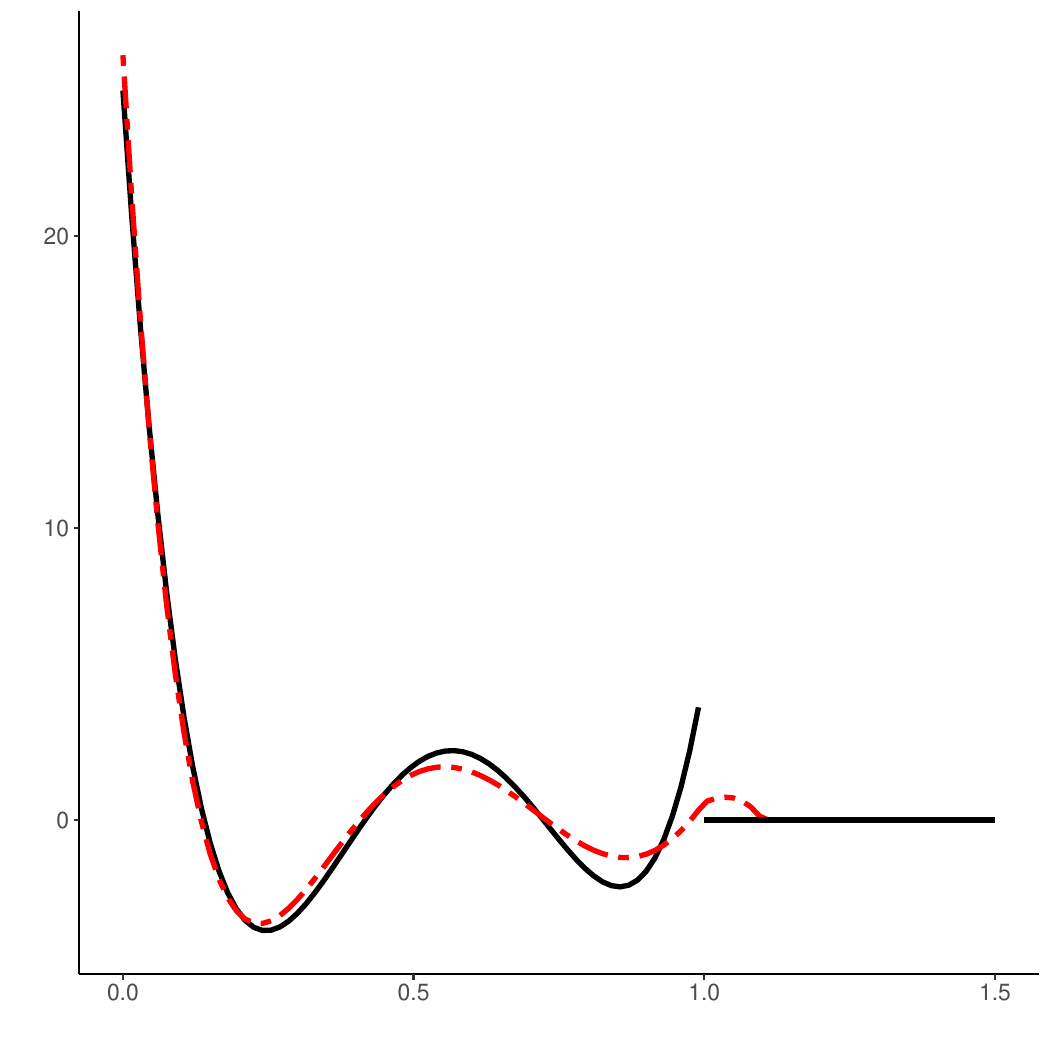}
		\caption{$p=3$}
	\end{subfigure}	
	
	\subcaption{Epanechnikov Kernel, Interior Point}
	
	\begin{subfigure}[b]{0.5\textwidth}
		\includegraphics[height=0.25\textheight,width=0.85\textwidth]{simuls/output/kepan_p1_v0_int_L2.pdf}
		\caption{$p=1$}
	\end{subfigure}%
	\begin{subfigure}[b]{0.5\textwidth}
		\includegraphics[height=0.25\textheight,width=0.85\textwidth]{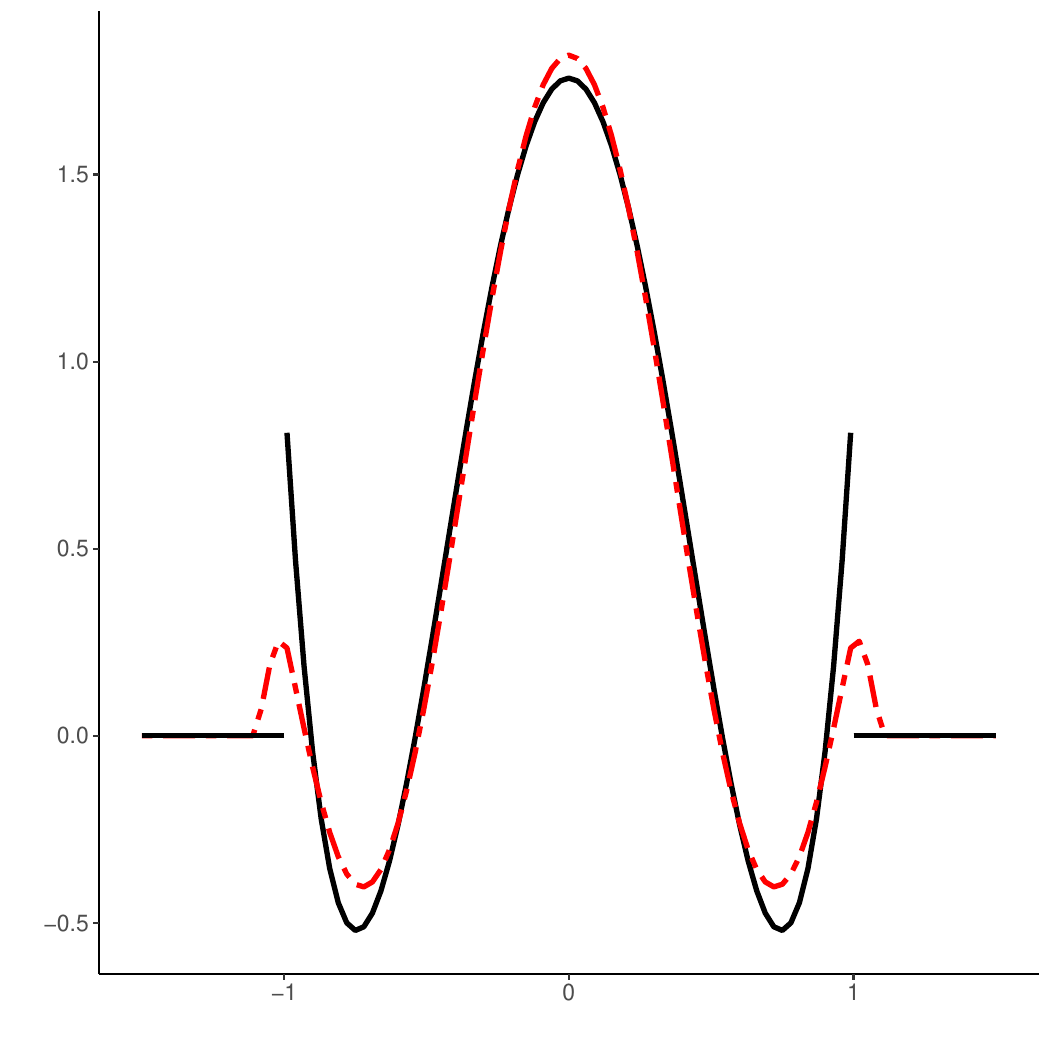}
		\caption{$p=3$}
	\end{subfigure}
	
	%	\begin{flushleft}\footnotesize Notes: \blackline $\mathcal{K}^*_{p+1}(u)$, \redline $\mathcal{K}_\RBC(u; K, \rho^*, \v)$\end{flushleft}
\end{figure}

\clearpage
\begin{figure}[ht]
	\centering
	\caption{$\mathcal{K}^*_{p+1}(u)$ vs. $\mathcal{K}_\RBC(u; K, \rho^*, \v)$, $\v=1$}\label{suppfig:k2}
	\subcaption{Triangular Kernel, Boundary Point}
	\begin{subfigure}[b]{0.5\textwidth}
		\includegraphics[height=0.25\textheight,width=0.85\textwidth]{simuls/output/ktria_p1_v1_bnd_L2.pdf}
		\caption{$p=1$}
	\end{subfigure}%
	\begin{subfigure}[b]{0.5\textwidth}
		\includegraphics[height=0.25\textheight,width=0.85\textwidth]{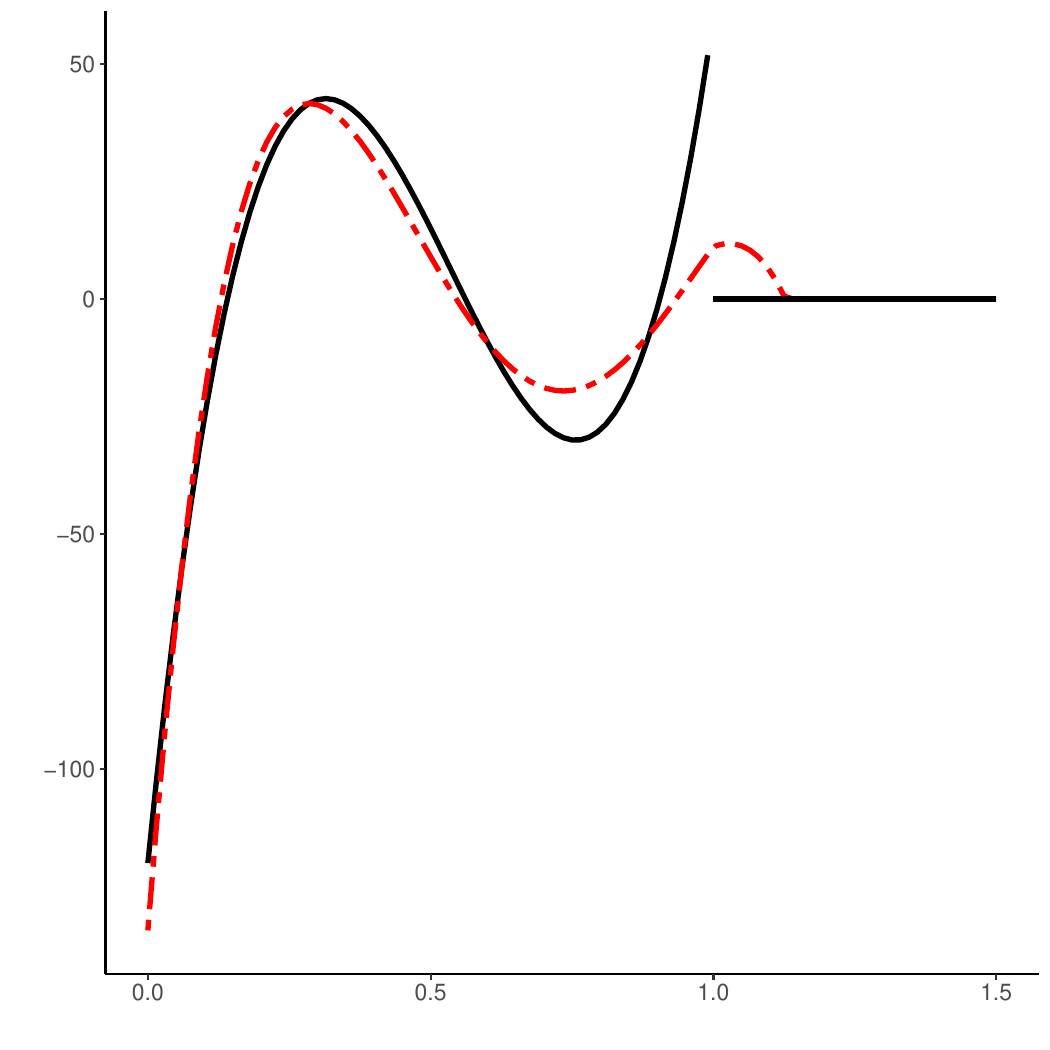}
		\caption{$p=2$}
	\end{subfigure}
	
	\begin{subfigure}[b]{0.5\textwidth}
		\includegraphics[height=0.25\textheight,width=0.85\textwidth]{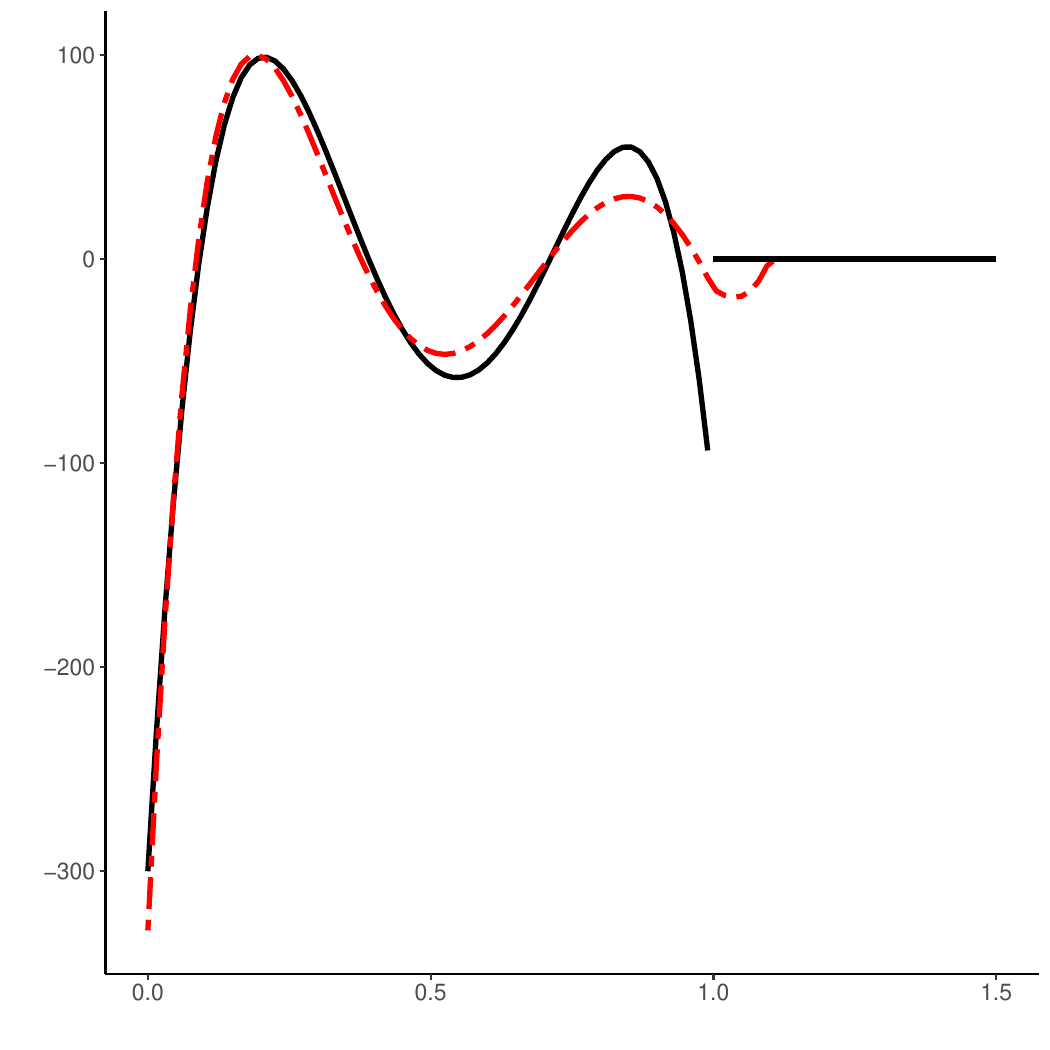}
		\caption{$p=3$}
	\end{subfigure}%
	\begin{subfigure}[b]{0.5\textwidth}
		\includegraphics[height=0.25\textheight,width=0.85\textwidth]{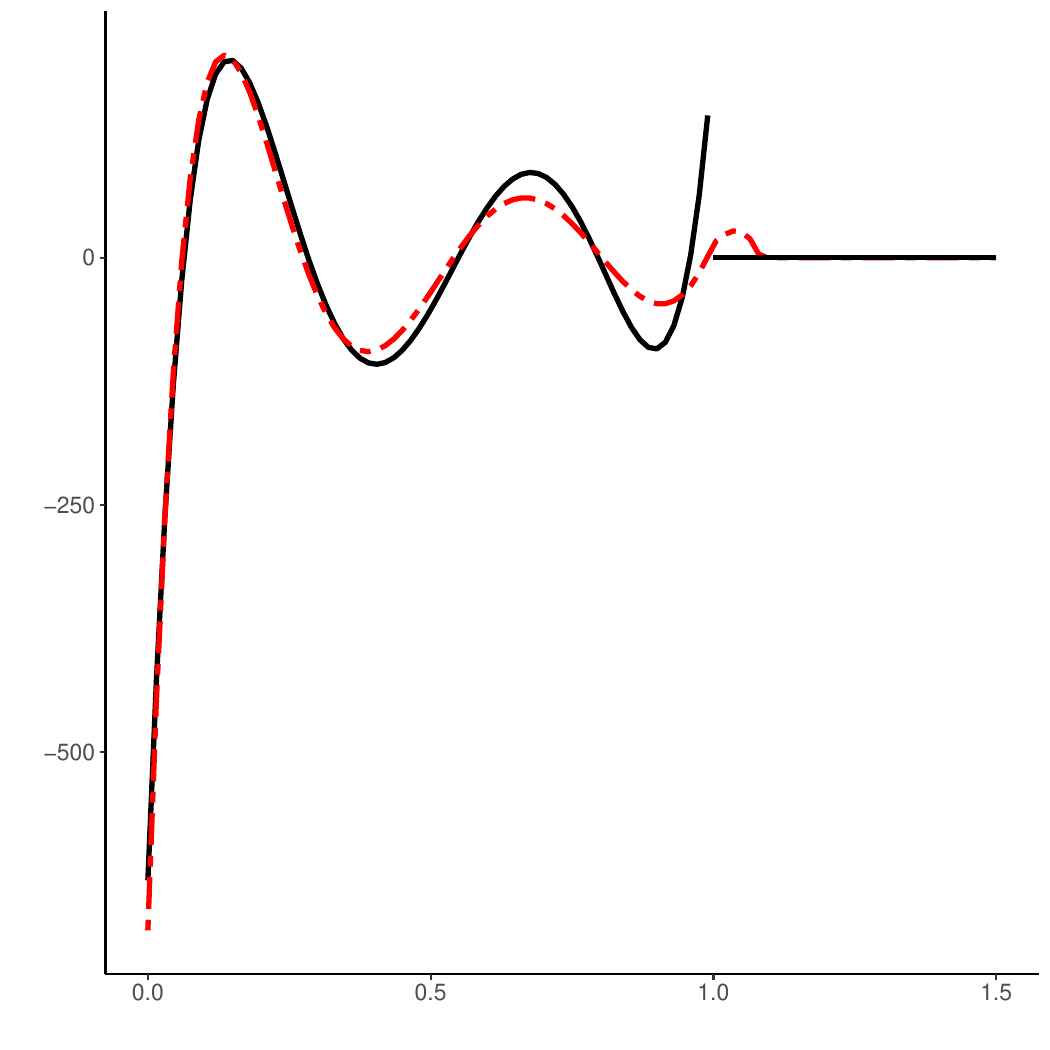}
		\caption{$p=4$}
	\end{subfigure}
	
	\subcaption{Epanechnikov Kernel, Interior Point}
	
	\begin{subfigure}[b]{0.5\textwidth}
		\includegraphics[height=0.25\textheight,width=0.85\textwidth]{simuls/output/kepan_p2_v1_int_L2.pdf}
		\caption{$p=2$}
	\end{subfigure}%
	\begin{subfigure}[b]{0.5\textwidth}
		\includegraphics[height=0.25\textheight,width=0.85\textwidth]{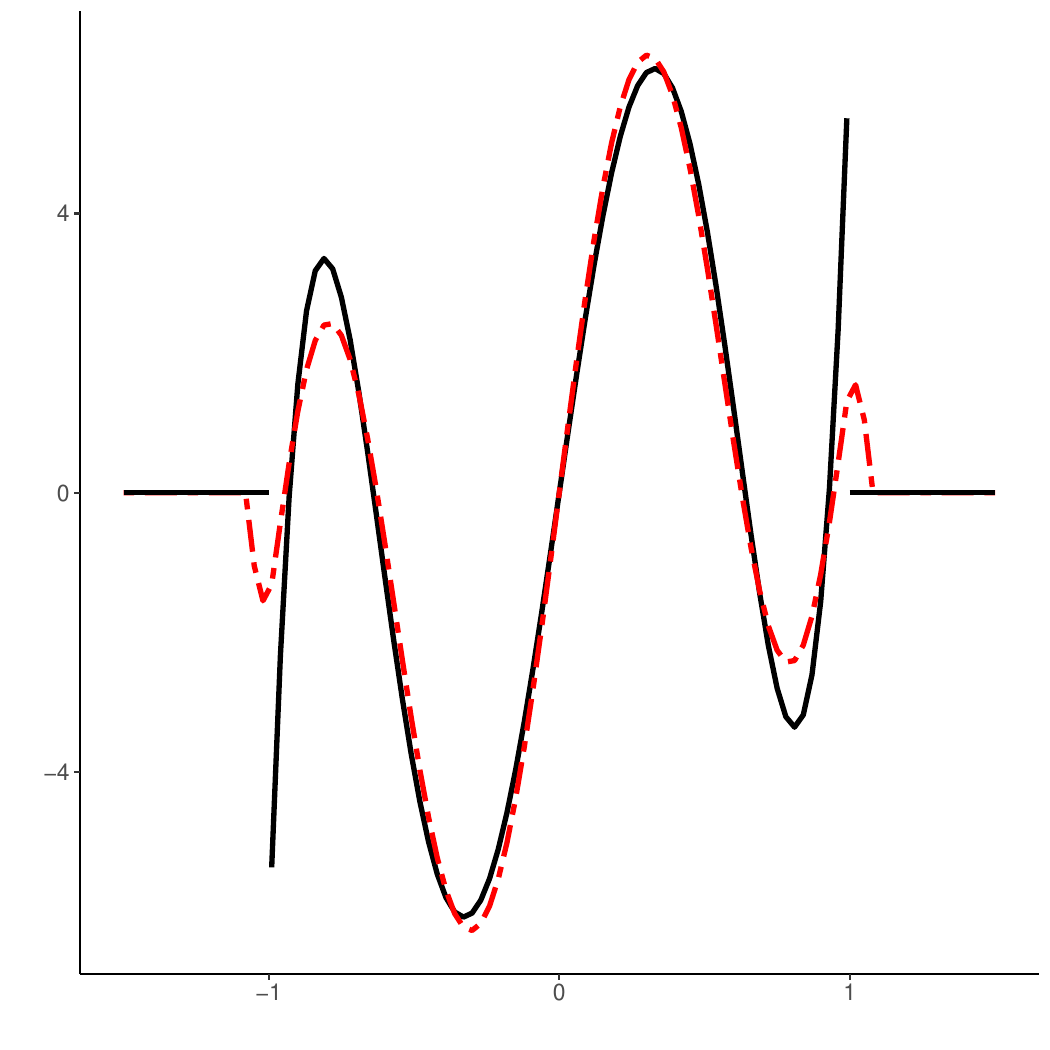}
		\caption{$p=4$}
	\end{subfigure}

	%	\begin{flushleft}\footnotesize Notes: \blackline $\mathcal{K}^*_{p+1}(u)$, \redline $\mathcal{K}_\RBC(u; K, \rho^*, \v)$\end{flushleft}
\end{figure}

\clearpage

%%%%%%%%%%%%%%%%%%%%%%%%%%%%%%%%%%%%%%%%%%%%%%%%%%%%%%%%%%%%%%%%%%%%%%
%%%%%%%%%%%%%%%%%%%%%%%%%%%%%%%%%%%%%%%%%%%%%%%%%%%%%%%%%%%%%%%%%%%%%%
\section{List of Notation}
	\label{supp:notation lp}

Below is a (hopefully) complete list of the notation used in this Part, group by Section, roughly in order of introduction. This is intended only as a reference. Each object is redefined below when it is needed.

Asymptotic orders and their in-probability versions hold uniformly in $\F_\S$, as required by our framework; e.g., $A_n = o_\P(a_n)$ means $\sup_{\f \in \F_\S} \P_\f [\vert A_n/a_n \vert > \epsilon] = o(1)$ for every $\epsilon > 0$.

%%%%%%%%%%%%%%%%%%%%%%%%%%%%%%%%%%%%%%%%%%%%%%%%%%%%%%%%%%%%%%%%%%%%%%
\subsection*{Local Polynomial Regression, $t$-Statistics, and Confidence Intervals}

\begin{itemize}
	
	\item $\{(Y_1, X_1), \ldots, (Y_n, X_n)\}$ is a random sample distributed according to $\f$, the data-generating process. $\f$ is assumed to belong to a class $\F_\S$
	
	\item $\tf = \mu^{(\v)}_\f(\x) := \left.  \frac{\partial^\v}{\partial x^\v} \E_\f \left[ Y \mid X \!=\! x \right] \right\vert_{x=\x}$, where $\v \leq \S$, where $\mu(\cdot)$ possess at least $S$ derivatives.
	
	\item $\mu_\f(\x) = \mu^{(0)}_\f(\x) = \E_\f [ Y \mid X \!=\! \x ] $
	
	\item Where it causes no confusion the point of evaluation $\x$ will be omitted as an argument, so that for a function $g(\cdot)$ we will write $g\defsym g(\x)$
	
	\item $\mhat^{(\v)} = \v! \be_\v' \bhat_p = \frac{1}{n \h^\v} \v! \be_\v'\Gp^{-1} \Op  \bY$
	
	\item $\bhat_p = \argmin_{\bbeta \in \mathbb{R}^{p+1}} \sumi ( Y_i - \br_p(X_i - \x)'\bbeta)^2  K \left( \Xhi \right)$
	
	\item $\bhat_{p+1} = \argmin_{\bbeta \in \mathbb{R}^{(p+1)+1}} \sumi ( Y_i - \br_{p+1}(X_i - \x)'\bbeta)^2  K \left( \Xbi \right)$
		
	\item $\be_k$ is a conformable zero vector with a one in the $(k+1)$ position, for example $\be_\v$ is the $(p+1)$-vector with a one in the $\v^{\text{th}}$ position and zeros in the rest
	
	\item $\h$ is a bandwidth sequence that vanishes as $n$ diverges
	
	\item $p$ is an integer greater than $\v$, with $p - \v$ odd
	
	\item $\br_p(u) = (1, u, u^2, \ldots, u^p)'$
	
	\item $\Xhi = (X_i - \x)/\h$, for a bandwidth $\h$ and point of interest $\x$
	
	\item to save space, products of functions will often be written together, with only one argument, for example
	\[ (K \br_p \br_p')(\Xhi) \defsym K(\Xhi) r_p(\Xhi) r_p(\Xhi)' = K\left(\frac{X_i - \x}{\h}\right) \br_p \left(\frac{X_i - \x}{\h}\right) \br_p \left(\frac{X_i - \x}{\h}\right)' ,  \]
	
	\item $\Gp = \frac{1}{n\h} \sumi (K \br_p \br_p')(\Xhi) = (\bRc' \bW \bRc)/n$
	
	\item $\Op = [ (K \br_p)(X_{\h,1}),  (K \br_p)(X_{\h,2}), \ldots,  (K \br_p)(X_{\h,n})] = \bRc' \bW$
	
	\item $\bY = (Y_1, \ldots, Y_n)'$
	
	\item $\bR = \left[ \br_p( X_1 - \x), \cdots, \br_p( X_n - \x ) \right]'$
	
	\item $\bW = \diag\left(\h^{-1} K(\Xhi): i = 1, \ldots, n\right)$
	
	\item $\H = \diag\left(1, \h, \h^2, \ldots, \h^p \right)$
	
	\item $\bRc = \bR \H^{-1} = \left[ \br_p( X_{\h,1}), \cdots, \br_p( X_{\h,n} ) \right]'$
	
	\item $\diag(a_i:i = 1, \ldots, k)$ denote the $k \times k$ diagonal matrix constructed using the elements $a_1, a_2, \cdots, a_k$
	
	\item $\Lp_k = \Op \left[ X_{\h,1}^{p+k}, \ldots, X_{\h,n}^{p+k} \right]'/n$, where, in particular $\Lp_1$ was denoted $\Lp$ in the main text
	
	\item $\b$ is a bandwidth sequence that vanishes as $n$ diverges
	
	\item $\Xbi = (X_i - \x)/\b$, for a bandwidth $\b$ and point of interest $\x$, exactly like $\Xhi$ but with $\b$ in place of $\h$
	
	\item $\Oq = [ (K \br_{p+1})(X_{\b,1}),  (K \br_{p+1})(X_{\b,2}), \ldots,  (K \br_{p+1})(X_{\b,n})]$, exactly like $\Op$ but with $\b$ in place of $\h$ and $p+1$ in place of $p$
	
	\item $\Gq = \frac{1}{n\b} \sumi (K \br_{p+1} \br_{p+1}')(\Xbi)$, exactly like $\Gp$ but with $\b$ in place of $\h$ and $p+1$ in place of $p$, and 
	
	\item $\Lq_k = \Oq \left[ X_{\b,1}^{p+1+k}, \ldots, X_{\b,n}^{p+1+k} \right]'/n$, exactly like $\Lp_k$  but with $\b$ in place of $\h$ and $p+1$ in place of $p$ (implying $\Oq$ in place of $\Op$)
	
	\item $\mhat^{(\v)} = \frac{1}{n \h^\v} \v! \be_\v'\Gp^{-1} \Op  \bY$ 
	
	$\thatrbc = \mhat^{(\v)}  - \h^{p+1 - \v}  \v! \be_\v'\Gp^{-1} \Lp_1 \frac{\mhat^{(p+1)}}{(p+1)!} = \frac{1}{n \h^\v} \v! \be_\v'\Gp^{-1} \Orbc  \bY$
	
	\item $\Orbc = \Op - \rho^{p+1}  \Lp_1 \be_{p+1}' \Gq^{-1} \Oq $ 
	
	\item $\rho = \h / \b$, the ratio of the two bandwidth sequences
	
	\item $\Sigma = \diag(v(X_i): i = 1,\ldots, n)$, with $v(x) = \V[Y \vert X = x]$
	
	\item $\sp^2  =  \v!^2 \be_\v'\Gp^{-1} (\h \Op \bS \Op' /n) \Gp^{-1} \be_\v$
	
	$\srbc^2  =  \v!^2 \be_\v'\Gp^{-1} (\h \O_\RBC \bS \O_\RBC' /n) \Gp^{-1} \be_\v$
	
	\item $\shatp^2 =  \v!^2 \be_\v'\Gp^{-1} (\h \Op \Shatp \Op' /n) \Gp^{-1} \be_\v$
	
	$\shatrbc^2 =  \v!^2 \be_\v'\Gp^{-1} (\h \O_\RBC \Shatrbc \O_\RBC' /n) \Gp^{-1} \be_\v$
	
	\item $\Shatp = \diag(\hat{v}(X_i): i = 1,\ldots, n)$, with $\hat{v}(X_i) = ( Y_i - \br_p(X_i - \x)'\bhat_p )^2$ for $\bhat_p$ defined in  Equation \eqref{suppeqn:lp}, and 
	
	\item $\Shatrbc = \diag(\hat{v}(X_i): i = 1,\ldots, n)$, with $\hat{v}(X_i) = ( Y_i - \br_{p+1}(X_i - \x)'\bhat_{p+1} )^2$ for $\bhat_{p+1}$ defined exactly as in Equation \eqref{suppeqn:lp} but with $p+1$ in place of $p$ and $\b$ in place of $\h$.
	
	\item $\displaystyle \tp = \frac{\sqrt{n\h^{1 + 2\v}}( \mhat_p^{(\v)} - \tf)}{\shatp}$ 
	
	$\displaystyle \trbc = \frac{( \thatrbc - \tf)}{\vhatrbc} = \frac{\sqrt{n\h^{1 + 2\v}}( \thatrbc - \tf)}{\shatrbc}$

	\item $\ip  = \left[ \mhat_p^{(\v)} - z_u \shatp \big/ \sqrt{n\h^{1 + 2\v}} \; , \;  \mhat_p^{(\v)} - z_l \shatp \big/ \sqrt{n\h^{1 + 2\v}} \right]$
	
	$\irbc = \left[ \thatrbc - z_u \vhatrbc \; , \;  \thatrbc - z_l \vhatrbc \right] = \left[ \thatrbc - z_u \shat_\RBC \big/ \sqrt{n\h^{1 + 2\v}} \; , \;  \thatrbc - z_l \shat_\RBC \big/ \sqrt{n\h^{1 + 2\v}} \right]$
	
\end{itemize}

%%%%%%%%%%%%%%%%%%%%%%%%%%%%%%%%%%%%%%%%%%%%%%%%%%%%%%%%%%%%%%%%%%%%%%
\subsection*{Main Results and Proofs}

\begin{itemize}
	
	\item See Section \ref{supp:terms lp} for definitions of all terms in the Edgeworth expansion.
	
	\item $\Phi(z)$ is the Normal distribution function.
	
	\item $C$ shall be a generic conformable constant that may take different values in different places. Note that $C$ may be a vector or matrix but will generally not be denoted by a bold symbol. If more than one constant is needed, $C_1$, $C_2$, \ldots, will be used.
	
	\item {\bf Norms.} Unless explicitly noted otherwise, $|\cdot|$ will be the Euclidean/Frobenius norm: for a scalar $c\in \R^1$, $|c|$ is the absolute value; for a vector $\bm{c}$, $|\bm{c}| = \sqrt{\bm{c}'\bm{c}}$; for a matrix $\bm{C}$, $|\bm{C}| = \sqrt{\trace(\bm{C}'\bm{C})}$.
	
	\item $\tO = \sqrt{n\h}$. 
	
	\item $r_{\ti,\f} = \max\{\tO^{-2}, \bti^2, \tO^{-1} \bti \}$, i.e.\ the slowest vanishing of the rates, and 
	
	\item $r_n$ as a generic sequence that obeys $r_n = o(r_{\ti,\f})$.
\end{itemize}

%%%%%%%%%%%%%%%%%%%%%%%%%%%%%%%%%%%%%%%%%%%%%%%%%%%%%%%%%%%%%%%%%%%%%%
\subsection*{Bias and the Role of Smoothness}

\begin{itemize}
	
	\item $\bbeta_k$ (usually $k=p$ or $k=p+1$) as the $k+1$ vector with $(j+1)$ element equal to $\mu^{(j)}(\x)/j!$ for $j = 0, 1, \ldots, k$ as long as $j \leq S$, and zero otherwise
	
	\item $\preTaylor_k$ as the $n$-vector with $i^{\text{th}}$ entry $[\mu(X_i) - \br_{k}(X_i - \x)'\bbeta_{k}]$
	
	\item $\bM = [\mu(X_1), \ldots, \mu(X_n)]'$
	
	\item $\rho = \h / \b$, the ratio of the two bandwidth sequences
	
	\item $\Gpt = \E[\Gp]$, $\Gqt = \E[\Gq]$, $\Lpt_k = \E[\Lp_k]$, $\Lqt_k = \E[\Lq_k]$, and so forth. A tilde always denotes a fixed-$n$ expectation, and all expectations are fixed-$n$ calculations unless explicitly denoted otherwise. The dependence on $\F_\S$ is suppressed notationally.
	
	\item $\bti = \bi$, the fixed-$n$ bias for interval $\i$ or $t$-statistic $\ti$. They are identical for all $\i$ and $\f$, e.g., $\btrbc = \birbc = \biasletter_{\trbc,\f}$. See Equation \eqref{suppeqn:eta lp}
	
	\item $\bct = \bci$, the constant portion of the fixed-$n$ bias for interval $\i$ or $t$-statistic $\ti$. They are identical for all $\i$ and $\f$, e.g., $\bctrbc = \bcirbc = \biasConstant_{\trbc,\f}$. See Tables \ref{supptable:us bias list} and \ref{supptable:rbc bias list}

\end{itemize}

\end{document}